\begin{document}
 
\newtheorem{theorem}{Theorem}
\newtheorem{corollary}[theorem]{Corollary}
\newtheorem{prop}[theorem]{Proposition} 
\newtheorem{problem}[theorem]{Problem}
\newtheorem{lemma}[theorem]{Lemma} 
\newtheorem{remark}[theorem]{Remark}
\newtheorem{observation}[theorem]{Observation}
\newtheorem{defin}{Definition} 
\newtheorem{example}{Example}
\newtheorem{conj}{Conjecture} 
\newenvironment{proof}{{\bf Proof:}}{\hfill$\Box$} 
\newcommand{\PR}{\noindent {\bf Proof:\ }} 
\def\EPR{\hfill $\Box$\linebreak\vskip.5mm} 
 
\def\Pol{{\sf Pol}} 
\def\mPol{{\sf MPol}} 
\def\Polo{{\sf Pol}_1} 
\def\PPol{{\sf pPol\;}} 
\def\Inv{{\sf Inv}}
\def\mInv{{\sf MInv}} 
\def\Clo{{\sf Clo}\;} 
\def\Con{{\sf Con}} 
\def\concom{{\sf Concom}\;} 
\def\End{{\sf End}\;}
\def\Sub{{\sf Sub}\;} 
\def\Im{{\sf Im}} 
\def\Ker{{\sf Ker}\;} 
\def\H{{\sf H}}
\def\S{{\sf S}} 
\def\D{{\sf P}} 
\def\I{{\sf I}} 
\def\Var{{\sf var}} 
\def\PVar{{\sf pvar}} 
\def\fin#1{{#1}_{\rm fin}}
\def\P{{\sf P}} 
\def\Pfin{{\sf P_{\rm fin}}} 
\def\Id{{\sf Id}}
\def\R{{\rm R}} 
\def\F{{\rm F}} 
\def\Term{{\sf Term}}
\def\var#1{{\sf var}(#1)} 
\def\Sg#1{{\sf Sg}(#1)} 
\def\Sgg#1#2{{\sf Sg}_{#1}(#2)} 
\def\Cg#1{{\sf Cg}(#1)}
\def\Cen{{\sf Cen}}
\def\tol{{\sf tol}} 
\def\lnk{{\sf lk}} 
\def\rbcomp#1{{\sf rbcomp}(#1)}
  
\let\cd=\cdot 
\let\eq=\equiv 
\let\op=\oplus 
\let\omn=\ominus
\let\meet=\wedge 
\let\join=\vee 
\let\tm=\times
\def\ldiv{\mathbin{\backslash}} 
\def\rdiv{\mathbin/}
  
\def\typ{{\sf typ}} 
\def\zz{{\un 0}} 
\def\zo{{\un 1}}
\def\one{{\bf1}} 
\def\two{{\bf2}} 
\def\three{{\bf3}}
\def\four{{\bf4}} 
\def\five{{\bf5}}
\def\pq#1{(\zz_{#1},\mu_{#1})}
  
\let\wh=\widehat 
\def\ox{\ov x} 
\def\oy{\ov y} 
\def\oz{\ov z}
\def\of{\ov f} 
\def\oa{\ov a} 
\def\ob{\ov b} 
\def\oc{\ov c}
\def\od{\ov d} 
\def\oob{\ov{\ov b}} 
\def\rx{{\rm x}}
\def\rf{{\rm f}} 
\def\rrm{{\rm m}} 
\let\un=\underline
\let\ov=\overline 
\let\cc=\circ 
\let\rb=\diamond 
\def\ta{{\tilde a}} 
\def\tz{{\tilde z}}
\let\td=\tilde
  
  
\def\zZ{{\mathbb Z}} 
\def\B{{\mathcal B}} 
\def\P{{\mathcal P}}
\def\zL{{\mathbb L}} 
\def\zD{{\mathbb D}}
 \def\zE{{\mathbb E}}
\def\zG{{\mathbb G}} 
\def\zA{{\mathbb A}} 
\def\zB{{\mathbb B}}
\def\zC{{\mathbb C}} 
\def\zM{{\mathbb M}} 
\def\zR{{\mathbb R}}
\def\zS{{\mathbb S}} 
\def\zT{{\mathbb T}} 
\def\zN{{\mathbb N}}
\def\zQ{{\mathbb Q}} 
\def\zW{{\mathbb W}} 
\def\bK{{\bf K}}
\def\C{{\bf C}} 
\def\M{{\bf M}} 
\def\E{{\bf E}} 
\def\N{{\bf N}}
\def\O{{\bf O}} 
\def\bN{{\bf N}} 
\def\bX{{\bf X}} 
\def\GF{{\rm GF}} 
\def\cC{{\mathcal C}} 
\def\cA{{\mathcal A}}
\def\cB{{\mathcal B}} 
\def\cD{{\mathcal D}} 
\def\cE{{\mathcal E}} 
\def\cF{{\mathcal F}} 
\def\cG{{\mathcal G}} 
\def\cH{{\mathcal H}}
\def\cI{{\mathcal I}} 
\def\cL{{\mathcal L}} 
\def\cP{{\mathcal P}} 
\def\cR{{\mathcal R}} 
\def\cRY{{\mathcal RY}}
\def\cS{{\mathcal S}} 
\def\cT{{\mathcal T}} 
\def\cU{{\mathcal U}} 
\def\cV{{\mathcal V}} 
\def\cW{{\mathcal W}} 
\def\oB{{\ov B}}
\def\oC{{\ov C}} 
\def\ooB{{\ov{\ov B}}} 
\def\ozB{{\ov{\zB}}}
\def\ozD{{\ov{\zD}}} 
\def\ozG{{\ov{\zG}}}
\def\tcA{{\widetilde\cA}} 
\def\tcC{{\widetilde\cC}}
\def\tcF{{\widetilde\cF}} 
\def\tcI{{\widetilde\cI}}
\def\tB{{\widetilde B}} 
\def\tC{{\widetilde C}}
\def\tD{{\widetilde D}} 
\def\ttB{{\widetilde{\widetilde B}}}
\def\ttC{{\widetilde{\widetilde C}}}
\def\tba{{\tilde\ba}} 
\def\ttba{{\tilde{\tilde\ba}}}
\def\tbb{{\tilde\bb}} 
\def\ttbb{{\tilde{\tilde\bb}}}
\def\tbc{{\tilde\bc}} 
\def\tbd{{\tilde\bd}}
\def\tbe{{\tilde\be}} 
\def\tbt{{\tilde\bt}}
\def\tbu{{\tilde\bu}} 
\def\tbv{{\tilde\bv}}
\def\tbw{{\tilde\bw}} 
\def\tdl{{\tilde\dl}} 
\def\ocP{{\ov\cP}}
\def\tzA{{\widetilde\zA}} 
\def\tzC{{\widetilde\zC}}
\def\new{{\mbox{\footnotesize new}}}
\def\old{{\mbox{\footnotesize old}}}
\def\prev{{\mbox{\footnotesize prev}}}
\def\oo{{\mbox{\sf\footnotesize o}}}
\def\pp{{\mbox{\sf\footnotesize p}}}
\def\nn{{\mbox{\sf\footnotesize n}}} 
\def\oR{{\ov R}}
  
  
\def\gA{{\mathfrak A}} 
\def\gV{{\mathfrak V}} 
\def\gS{{\mathfrak S}} 
\def\gK{{\mathfrak K}} 
\def\gH{{\mathfrak H}}
  
\def\ba{{\bf a}} 
\def\bb{{\bf b}} 
\def\bc{{\bf c}} 
\def\bd{{\bf d}} 
\def\be{{\bf e}} 
\def\bbf{{\bf f}} 
\def\bg{{\bf g}}
\def\bh{{\bf h}}
\def\bi{{\bf i}} 
\def\bm{{\bf m}} 
\def\bo{{\bf o}} 
\def\bp{{\bf p}} 
\def\bs{{\bf s}} 
\def\bu{{\bf u}} 
\def\bt{{\bf t}} 
\def\bv{{\bf v}} 
\def\bx{{\bf x}}
\def\by{{\bf y}} 
\def\bw{{\bf w}} 
\def\bz{{\bf z}}
\def\ga{{\mathfrak a}} 
\def\oal{{\ov\al}} 
\def\obeta{{\ov\beta}}
\def\ogm{{\ov\gm}} 
\def\oep{{\ov\varepsilon}}
\def\oeta{{\ov\eta}} 
\def\oth{{\ov\th}} 
\def\ovm{{\ov\mu}}
\def\ozero{{\ov0}}
\def\bB{{\bf B}} 
\def\bA{{\bf A}}

  
\def\CCSP{\hbox{\rm c-CSP}} 
\def\CSP{{\rm CSP}} 
\def\NCSP{{\rm \#CSP}} 
\def\mCSP{{\rm MCSP}} 
\def\FP{{\rm FP}} 
\def\PTIME{{\bf PTIME}} 
\def\GS{\hbox{($*$)}} 
\def\ry{\hbox{\rm r+y}}
\def\rb{\hbox{\rm r+b}} 
\def\Gr#1{{\mathrm{Gr}(#1)}}
\def\Grp#1{{\mathrm{Gr'}(#1)}} 
\def\Grpr#1{{\mathrm{Gr''}(#1)}}
\def\Scc#1{{\mathrm{Scc}(#1)}} 
\def\rel{R} 
\def\relo{Q}
\def\rela{S} 
\def\reli{T} 
\def\relp{P} 
\def\dep{\mathsf{dep}}
\def\Filt{\mathrm{Ft}}
\def\Filts{\mathrm{Fts}} 
\def\Agr{$\mathbb{A}$}
\def\Al{\mathrm{Alg}}
\def\Sig{\mathrm{Sig}}
\def\strat{\mathsf{strat}}
\def\relmax{\mathsf{relmax}}
\def\srelmax{\mathsf{srelmax}}
\def\Meet{\mathsf{Meet}}
\def\amax{\mathsf{amax}}
\def\umax{\mathsf{umax}}
\def\emin{\mathsf{Z}}
\def\as{\mathsf{as}}
\def\star{\hbox{$(*)$}}
\def\bmal{{\mathbf m}}
\def\Af{\mathsf{Af}}
\let\sqq=\sqsubseteq
\def\maj{\mathsf{maj}}
\def\razm{\mathsf{size}}
\def\Razm{\mathsf{MAX}}
\def\Centr{\mathsf{Center}}
\def\centr{\mathsf{center}}

  
\let\sse=\subseteq 
\def\ang#1{\langle #1 \rangle}
\def\angg#1{\left\langle #1 \right\rangle}
\def\dang#1{\ang{\ang{#1}}} 
\def\vc#1#2{#1 _1\zd #1 _{#2}}
\def\tms{\tm\dots\tm}
\def\zd{,\ldots,} 
\let\bks=\backslash 
\def\red#1{\vrule height7pt depth3pt width.4pt
\lower3pt\hbox{$\scriptstyle #1$}}
\def\fac#1{/\lower2pt\hbox{$\scriptstyle #1$}}
\def\me{\stackrel{\mu}{\eq}} 
\def\nme{\stackrel{\mu}{\not\eq}}
\def\eqc#1{\stackrel{#1}{\eq}} 
\def\cl#1#2{\arraycolsep0pt
\left(\begin{array}{c} #1\\ #2 \end{array}\right)}
\def\cll#1#2#3{\arraycolsep0pt \left(\begin{array}{c} #1\\ #2\\
#3 \end{array}\right)} 
\def\clll#1#2#3#4{\arraycolsep0pt
\left(\begin{array}{c} #1\\ #2\\ #3\\ #4 \end{array}\right)}
\def\cllll#1#2#3#4#5#6{ \left(\begin{array}{c} #1\\ #2\\ #3\\
#4\\ #5\\ #6 \end{array}\right)} 
\def\pr{{\rm pr}}
\let\upr=\uparrow 
\def\ua#1{\hskip-1.7mm\uparrow^{#1}}
\def\sua#1{\hskip-0.2mm\scriptsize\uparrow^{#1}} 
\def\lcm{{\rm lcm}} 
\def\perm#1#2#3{\left(\begin{array}{ccc} 1&2&3\\ #1&#2&#3
\end{array}\right)} 
\def\w{$\wedge$} 
\let\ex=\exists
\def\NS{{\sc (No-G-Set)}} 
\def\lev{{\sf lev}}
\let\rle=\sqsubseteq 
\def\ryle{\le_{ry}} 
\def\ryprec{\le_{ry}}
\def\os{\mbox{[}} 
\def\zs{\mbox{]}}
\def\link{{\sf link}}
\def\solv{\stackrel{s}{\sim}} 
\def\mal{\mathbf{m}}
\def\precs{\prec_{as}}

  
\def\lb{$\linebreak$}  
  
\def\ar{\hbox{ar}} 
\def\Im{{\sf Im}\;} 
\def\deg{{\sf deg}}
\def\id{{\rm id}}
  
\let\al=\alpha 
\let\gm=\gamma 
\let\dl=\delta 
\let\ve=\varepsilon
\let\ld=\lambda 
\let\om=\omega 
\let\vf=\varphi 
\let\vr=\varrho
\let\th=\theta 
\let\sg=\sigma 
\let\Gm=\Gamma 
\let\Dl=\Delta
\let\kp=\kappa
  
  
\font\tengoth=eufm10 scaled 1200 
\font\sixgoth=eufm6
\def\goth{\fam12} 
\textfont12=\tengoth 
\scriptfont12=\sixgoth
\scriptscriptfont12=\sixgoth 
\font\tenbur=msbm10
\font\eightbur=msbm8 
\def\bur{\fam13} 
\textfont11=\tenbur
\scriptfont11=\eightbur 
\scriptscriptfont11=\eightbur
\font\twelvebur=msbm10 scaled 1200 
\textfont13=\twelvebur
\scriptfont13=\tenbur 
\scriptscriptfont13=\eightbur
\mathchardef\nat="0B4E 
\mathchardef\eps="0D3F

\title{A dichotomy theorem for nonuniform CSPs}
\author{Andrei A.\ Bulatov\\ 
} 
\date{} 
\maketitle

\begin{abstract}
In this paper we prove the Dichotomy Conjecture on the complexity of nonuniform
constraint satisfaction problems posed by Feder and Vardi\footnote{Apart from
correcting numerous typos and inaccuracies, this version 
of the paper is different from the first \emph{Arxiv} version in the following 
ways:\\
-- A self-contained high level presentation of the main results is added.\\
-- Preliminaries section is extended.\\
-- An inconsistency between the definition and the use of the chaining condition 
is fixed.\\
-- The proof of Lemma~\ref{lem:relative-symmetry} is expanded for improved 
readability. Also a gap in this proof is patched.\\
-- Section ``Auxiliary Lemmas'' is split into two sections, 
Sections~\ref{sec:congruence} and~\ref{sec:chaining}, and moved forward.\\
-- The proof of Theorem~\ref{the:non-central} is reorganized and expanded 
for improved readability.
}
\end{abstract}
\tableofcontents

\section{Introduction}

In a Constraint Satisfaction Problem (CSP) the question is to decide whether 
or not it is possible to satisfy a given set of 
constraints. Constraints are often represented by specifying a relation, which is a 
set of allowed combinations of values some variables can take simultaneously. 
If the constraints allowed in a problem have to come from some set $\Gm$ of 
relations, such a restricted problem is referred 
to as a \emph{nonuniform CSP} and denoted $\CSP(\Gm)$. The set $\Gm$ is 
then called a \emph{constraint language}. Nonuniform CSPs not only provide 
a powerful
framework ubiquitous across a wide range of disciplines from theoretical 
computer science to computer vision, but also admit natural and elegant 
reformulations such as the homomorphism problem and a characterization 
as the class of problems equivalent to a logic class MMSNP.
Many different versions of the CSP have been studied across various fields. 
These include CSPs over infinite sets, counting CSPs (and related Holant 
problem and the problem of computing partition functions), several variants
of optimization CSPs, valued CSPs, quantified CSPs, and numerous related 
problems. The 
reader is referred to the recent book \cite{Krokhin17:constraint} for a survey 
of the state-of-the art in some of these areas. In this paper we, however, focus 
on the decision nonuniform CSP and its complexity.

A systematic study 
of the complexity of nonuniform CSPs was started by Schaefer in 1978
\cite{Schaefer78:complexity} who showed that for every constraint language
$\Gm$ over a 2-element set the problem $\CSP(\Gm)$ is either solvable in 
polynomial time or is NP-complete. Schaefer also asked about the complexity of 
$\CSP(\Gm)$ for languages over larger sets. The next step in the study of 
nonuniform CSPs was made in the 
seminal paper by Feder and Vardi \cite{Feder93:monotone,Feder98:monotone}, 
who apart from considering numerous aspects of the problem, posed the 
\emph{Dichotomy Conjecture} that states that for every finite constraint language 
$\Gm$ over a finite set the problem $\CSP(\Gm)$ is either solvable in polynomial 
time or is NP-complete. This conjecture has become a focal point of the CSP 
research and most of the effort in this area revolves to some extent around the 
Dichotomy Conjecture.

The complexity of the CSP in general and the Dichotomy Conjecture in particular
has been studied by several research communities using a variety of methods, 
each contributing an important aspect of the problem.
The CSP has been an established area in artificial intelligence for decades, and 
apart from developing efficient general methods of solving CSPs researchers 
tried to identify tractable fragments of the problem \cite{Dechter03:processing}.
The very important special case of the CSP, the (Di)Graph Homomorphism 
problem and the $H$-Coloring problem have been actively studied in the 
graph theory community, see, e.g.\ \cite{Hell90:h-coloring,Hell04:homomorphism}
and subsequent works by Hell, Feder, Bang-Jensen, Rafiey and others. 
Homomorphism duality introduced in these works has been very useful in
understanding the structure of constraint problems. The CSP
plays a major role and has been successfully studied in database theory, logic and 
model theory \cite{Kolaitis03:csp,Kolaitis00:game,Gottlob14:treewidth}, although 
the version of the problem mostly used there is not necessarily nonuniform. 
Logic games and strategies are a standard tool in most of CSP algorithms. An 
interesting approach to the Dichotomy Conjecture through long codes was
suggested by Kun and Szegedy \cite{Kun16:new}. Brown-Cohen and Raghavendra 
proposed to study the conjecture using techniques based on decay of 
correlations \cite{Brown-Cohen16:correlation}. In this paper we use the 
algebraic structure of the CSP, which is briefly discussed next.

The most effective approach to the study of the CSP turned out to be the 
\emph{algebraic approach} that associates 
every constraint language with its (universal) algebra of polymorphisms. The 
method was first developed in a series of papers by Jeavons and coauthors 
\cite{Jeavons97:closure,Jeavons98:algebraic,Jeavons98:consist} and then refined
by Bulatov, Krokhin, Barto, Kozik, Maroti, Zhuk and others 
\cite{Barto12:absorbing,Barto12:near,Barto14:local,Barto15:constraint,%
Bulatov05:classifying,Bulatov04:graph,Bulatov08:recent,Maroti11:Malcev,%
Maroti10:tree,Zhuk14:key,Zhuk16:5-element,Zhuk16:7-element}. While the 
complexity of $\CSP(\Gm)$ has been already solved for some interesting 
classes of structures such as graphs \cite{Hell90:h-coloring}, the algebraic approach
allowed the researchers to confirm the Dichotomy Conjecture in a number of more 
general  cases: for languages over a set of size up to 7 
\cite{Bulatov02:3-element,%
Bulatov06:3-element,Markovic11:4-element,Zhuk16:5-element,Zhuk16:7-element}, 
so called conservative languages 
\cite{Bulatov03:conservative,Bulatov11:conservative,%
Bulatov16:conservative,Barto11:conservative}, and some classes of digraphs
\cite{Barto09:sources}. It also allowed to design the main classes of CSP 
algorithms \cite{Barto14:local,Bulatov06:simple,Bulatov16:restricted,%
Berman10:varieties,Idziak10:few}, and refine the exact complexity of the CSP 
\cite{Allender05:refining,Barto12:near,Dalmau08:majority,Larose07:first-order}.

In this paper we confirm the Dichotomy Conjecture for arbitrary languages 
over finite sets. More precisely we prove the following

\begin{theorem}\label{the:main}
For any finite constraint language $\Gm$ over a finite set the problem 
$\CSP(\Gm)$ is either solvable in polynomial time or is NP-complete.
\end{theorem}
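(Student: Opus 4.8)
The plan is to follow the \emph{algebraic approach}. The first step is a chain of standard reductions (the Galois correspondence between relational clones and clones of polymorphisms \cite{Jeavons98:algebraic}, passing to a core, and adding all singleton unary relations): up to polynomial-time equivalence it suffices to treat the case in which the polymorphism algebra $\mathbb{A}=(A;\Pol)$ of $\Gm$ is \emph{idempotent} and $\Gm$ is closed under pp-definitions, so that subalgebras and quotients of powers of $\mathbb{A}$ again give problems of the same kind. The complexity is then governed by tame congruence theory: if the variety generated by $\mathbb{A}$ \emph{admits} the unary type $\one$ — equivalently, $\mathbb{A}$ has no weak near-unanimity polymorphism — then $\CSP(\Gm)$ is NP-complete by \cite{Bulatov05:classifying}, since one can pp-interpret $3$-SAT or not-all-equal-$3$-SAT. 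Hence everything reduces to the \textbf{tractability claim}: if $\mathbb{A}$ omits type $\one$, then $\CSP(\Gm)$ is solvable in polynomial time.

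To prove tractability I would equip each finite idempotent algebra omitting type $\one$ with its \textbf{coloured graph} $\Gr{\mathbb{A}}$: the vertices are the elements of $A$, and each pair lying in a common two-element minimal set is coloured \emph{yellow}, \emph{blue}, or \emph{red} according to whether the induced local structure is of semilattice, majority (Boolean), or affine type. The first block of technical work — the sections on congruences and on the \emph{chaining condition} — is to understand how these colours and the congruence lattice interact across subdirect products: semilattice (yellow) edges yield a well-behaved preorder $\precs$ on the components of an instance together with an absorption-type propagation along it; blue edges support a bounded-width/majority argument; and the commutator together with the \emph{centre} $\Centr(\mathbb{A})$ control precisely where red (affine) edges may occur. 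The relative-symmetry lemma (Lemma~\ref{lem:relative-symmetry}) is the tool that transports these structural features between an instance and its derived instances.

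The algorithm is then a recursion on an instance $\cP$ of $\CSP(\Gm)$, with respect to the measure (sum of domain sizes, number of constraints). First enforce a fixed bounded level of local consistency, in particular $(2,3)$-consistency. Next try to contract: if some constraint relation of $\cP$ is invariant under a nontrivial congruence, factor that congruence out and recurse on a strictly smaller algebra; if $\cP$ splits along the preorder $\precs$, pass to a $\precs$-maximal (``as-maximal'') component, solve there, and propagate the partial solution back along yellow edges. When neither reduction applies, $\cP$ is \emph{block-minimal}, and the crux is to show that a block-minimal, locally consistent instance always has a solution. This is proved by analysing the minimal sets involved according to whether they are \emph{central} or \emph{non-central}: in the central case the solution set is, in essence, a coset of an abelian group and is produced by Gaussian elimination; the non-central case is Theorem~\ref{the:non-central}.

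The hard part, as this outline makes plain, is the non-central case and, more broadly, the interaction of \textbf{affine} structure with everything else. One has to show that once semilattice propagation and congruence contraction have been exhausted, the residual ``linear'' dependencies among the coordinates of a block-minimal instance are globally consistent — which requires a careful study of centralizers, of how the commutator restricts to a subdirect product, and of \emph{chains} of minimal sets linking the coordinates. Making all of this close up, so that the recursion strictly decreases the measure while provably preserving solvability, is the delicate heart of the argument; by contrast the consistency preprocessing, the as-maximal propagation along yellow edges, and the Gaussian-elimination step are comparatively routine once these structural lemmas are in place.
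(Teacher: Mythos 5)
Your overall framing is on target: reduce to a finite idempotent algebra via the Galois connection and cores, cite \cite{Bulatov05:classifying} for the hardness half, and prove tractability by combining coloured-graph structure, local consistency, separating congruences, centralizers, and a block-minimality condition. Those are indeed the paper's ingredients. But the way you put them together in the central case misstates what actually happens, and that misstatement is not a cosmetic issue — it is the crux of the argument.

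You write that once no reduction applies, the instance is block-minimal and ``the crux is to show that a block-minimal, locally consistent instance always has a solution,'' splitting the proof of that claim into central and non-central cases, with the central case handled by ``Gaussian elimination on a coset of an abelian group.'' The paper does \emph{not} prove that block-minimality implies solvability in general. Theorem~\ref{the:non-central-p1} asserts this only under the extra hypothesis $\Razm(\cP)\cap\Centr(\cP)=\eps$, and the discussion of $\ov\beta$-strategies makes explicit that when $\beta_v=\mu^*_v$ with a full quasi-centralizer, the strategy-tightening argument \emph{cannot be pushed past that barrier}: conditions (S1)--(S3) only give information about solutions of $\cP_W\fac{\ov\mu^*}$. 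So a direct solvability proof in the central case is exactly what fails. The paper's actual mechanism is different: it first solves the strictly smaller quotient $\cP\fac{\ov\mu^*}$ (to which Theorem~\ref{the:non-central-p1} or recursion applies), then uses Corollary~\ref{cor:centralizer-multiplication} and Lemma~\ref{lem:central-mapping} — the injective map between monolith blocks that a full $\zeta(\al,\beta)$ provides — together with Maroti's retraction $\cP\cdot\vf$ to shrink $\razm(\cP)$ and recurse (Theorem~\ref{the:central-p1}). This is reduction, not direct solution, and it is not Gaussian elimination: the coset intuition captures why $\typ(\al,\beta)=\two$ when the quasi-centralizer is full (Proposition~\ref{pro:full-centralizer}), but it does not by itself give you a way to combine the affine slices across the instance. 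Relatedly, your recursion measure (sum of domain sizes, number of constraints) is not the one that makes the recursion close: the paper uses $\razm(\cP)$, the largest non-semilattice-free domain size, precisely so that the Maroti step strictly decreases it and so that block-minimality can be established by recursion to ``strictly smaller'' instances (Lemma~\ref{lem:to-block-minimality-p1}). Finally, your proposal never identifies the base case: domains with no semilattice edge are handled by the few-subpowers algorithm (Proposition~\ref{pro:semilattice-edges-p1}), which subsumes both your ``majority/bounded-width'' and ``affine/Gaussian-elimination'' primitives in one algebraic package; without naming it the recursion has nowhere to bottom out.
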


The proved criterion matches the algebraic form of the Dichotomy Conjecture 
suggested in \cite{Bulatov05:classifying}. The hardness part of the conjecture 
has been known for long time. Therefore the main achievement of this paper 
is a polynomial time algorithm for problems satisfying the tractability 
condition. More specifically, we suggest such an algorithm for languages that 
contain all the constant relations of the form $\{(a)\}$, and this implies a general 
dichotomy due to the results of \cite{Bulatov05:classifying}.

Using the algebraic language we can state the result in a stronger form.
Let $\zA$ be a finite idempotent algebra and let $\CSP(\zA)$ denote the
union of problems $\CSP(\Gm)$ such that every term operation of $\zA$
is a polymorphism of $\Gm$. Problem $\CSP(\zA)$ is no longer a nonuniform
CSP, and Theorem~\ref{the:main} allows for problems $\CSP(\Gm)\sse\CSP(\zA)$
to have different solution algorithms even when $\zA$ meets the tractability 
condition. We show that the solution algorithm only depends on the algebra
$\zA$.

\begin{theorem}\label{the:main2}
For a finite idempotent algebra that satisfies the conditions of the 
Dichotomy Conjecture there is a uniform solution algorithm for $\CSP(\zA)$.
\end{theorem}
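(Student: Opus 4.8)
The plan is to construct a single procedure $\mathsf{Solve}$ whose only inputs are the algebra $\zA$ — equivalently, its clone of term operations together with the coloured graph associated with $\zA$ and the tame-congruence-theory data of $\zA$ and of its subalgebras — and an instance $\cP$ of $\CSP(\Gm)$ for \emph{some} language $\Gm$ with $\Pol(\Gm)$ containing every term operation of $\zA$. By the standard machinery of the algebraic approach we may assume each constraint of $\cP$ is a pair (scope, relation) in which the relation is a subalgebra of a power of $\zA$, so that the only data $\mathsf{Solve}$ ever reads are this combinatorial instance and the clone of $\zA$; $\Gm$ itself is irrelevant, which is precisely the uniformity claimed by the theorem. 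It then remains to describe $\mathsf{Solve}$ and to argue that it is correct and runs in polynomial time.

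First I would run $\cP$ through a bounded-level local-consistency and minimality propagation, obtaining an equivalent instance in which every relation is nonempty and subdirect on its scope, no constraint can be tightened further by propagation, and a strong chaining/Prague-type consistency condition holds that is robust enough to survive the reductions below; if a relation is emptied, report ``no solution''. On a nonempty such instance the algorithm inducts on the multiset of domain sizes and relation sizes, at each stage detecting one of three structural features and performing the corresponding reduction. (i) If there is an affine (\emph{type-$\two$}) prime quotient not collapsed by semilattice behaviour, then modulo that quotient the solution set is an affine subspace over the associated module; compute it by Gaussian elimination and substitute back. (ii) If some domain has a semilattice edge reachable, in the consistency closure, from each of its elements, propagate the resulting semilattice-maximal elements and shrink that domain. (iii) Otherwise there is a proper congruence $\theta$ on some domain with suitable centralizer behaviour; solve the quotient instance recursively, and for each of its solutions solve the induced instance on the $\theta$-classes recursively. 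In each case one verifies soundness and completeness — solutions of the reduced instance correspond to solutions of $\cP$ — and that the reduced instance is still consistent at the level needed to continue the induction.

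The hard part, which is the bulk of the work, is proving that a nonempty minimal consistent instance over $\zA$ to which none of (i)--(iii) applies is \emph{trivial}: every relation is a full product, so any tuple is a solution. This requires the structural analysis — examining the coloured graph of $\zA$ and of the subalgebras that occur, locating \emph{centers} and \emph{centralizers} of maximal congruences, decomposing each relation along such a congruence into a ``central'' part and a ``non-central'' part, and applying the relative-symmetry lemma (Lemma~\ref{lem:relative-symmetry}) and Theorem~\ref{the:non-central} to force the non-central part to be a product once the affine and semilattice reductions are unavailable. The most delicate issue is controlling how the strong consistency condition interacts with quotients and with the central/non-central decomposition, so that the reductions neither destroy consistency nor loop; this is exactly where the corrected chaining condition and the patched proof of Lemma~\ref{lem:relative-symmetry} mentioned in the footnote are used.

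Finally, since each reduction strictly decreases the induction measure, there are polynomially many reduction steps, and each step (consistency propagation, Gaussian elimination, semilattice propagation, forming and recursing on a quotient) runs in polynomial time with the clone of $\zA$ treated as a constant, $\mathsf{Solve}$ is a polynomial-time algorithm whose behaviour depends only on $\zA$ and $\cP$. This proves Theorem~\ref{the:main2}; applying it to the idempotent algebra of polymorphisms of a finite language enriched with all constant relations, together with the reduction of \cite{Bulatov05:classifying}, then yields Theorem~\ref{the:main}.
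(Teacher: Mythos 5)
Your opening observation — that the procedure only ever reads the instance together with the clone of $\zA$, so the algorithm is automatically uniform once it exists — is correct and is indeed the (brief) content of Theorem~\ref{the:main2} given the algorithm of the paper. Your closing paragraph, deriving Theorem~\ref{the:main} from Theorem~\ref{the:main2} via idempotence and \cite{Bulatov05:classifying}, is also fine.

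The difficulty is that your middle paragraphs do not describe the algorithm of the paper, and the deviations are not cosmetic. First, the paper's consistency notion is \emph{block-minimality}, which is not a bounded-level local propagation: establishing it requires solving a quadratic number of strictly smaller CSP instances (Lemma~\ref{lem:to-block-minimality}), so it is itself part of the recursion, not a preprocessing step you can run before the recursion begins. Treating it as a ``bounded-level local-consistency\dots that is robust enough to survive the reductions'' loses exactly the content that makes the approach work. Second, your case~(i) proposes Gaussian elimination ``modulo a type-$\two$ prime quotient,'' but there is no global module here to eliminate over: type-$\two$ prime intervals appear only locally, inside the congruence lattice of a single domain, and they coexist with type-$\three$/$\four$/$\five$ intervals elsewhere. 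The paper never performs Gaussian elimination on an affine quotient; the type-$\two$ case is handled inside the proof of Theorem~\ref{the:non-central} by tightening a $\ov\beta$-strategy (Section~\ref{sec:type-2}), which is the technically hardest part of the whole argument and cannot be replaced by linear algebra. The only direct algebraic solver the paper uses is the few subpowers algorithm, and it is invoked exactly when \emph{all} domains are semilattice free (Proposition~\ref{pro:semilattice-edges-p1}), a case missing from your (i)--(iii). Third, your terminal case is mischaracterized: what Theorem~\ref{the:non-central} gives you when $\Razm(\cP)\cap\Centr(\cP)=\eps$ is that a block-minimal, subdirectly irreducible, (2,3)-minimal instance \emph{has a solution}, not that every constraint relation is a full product; the latter is a much stronger and generally false statement. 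Finally, the actual size reduction when centralizers are nontrivial (Theorem~\ref{the:central}) is a Maroti-style retraction $\cP\cdot\vf$ after solving $\cP\fac{\ov\mu^*}$ — a semilattice multiplication by a fixed solution — rather than the naive ``recurse on quotient and classes'' of your case~(iii), and it is driven precisely by Lemma~\ref{lem:centralizer-multiplication-p1}, which you do not invoke. So while you have correctly located the relevant objects (centralizers, the relative-symmetry lemma, Theorem~\ref{the:non-central}), the algorithmic skeleton you propose would not reproduce the paper's correctness argument.
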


An interesting question arising from Theorems~\ref{the:main},\ref{the:main2}
is known as the \emph{Meta-problem}: Given a constraint language or a 
finite algebra, decide whether or not it satisfies the conditions of the theorems.
The answer to this question is not quite trivial, for a thorough study of
the Meta-problem see \cite{Chen16:asking,Freese09:complexity}.

The paper consists of two parts. The first part of the paper aims at a 
self-contained introduction into the main ideas of the solution algorithm. 
The second part mostly concerns with further development of the algebraic 
approach and technical proofs of the results, and is significantly more involved.
We start with introducing the terminology and notation for CSPs that is used 
throughout the paper and reminding the basics of the algebraic approach. 
Then in Section~\ref{sec:separation-central} we introduce the key ingredients 
used in the algorithm: separating congruences and quasi-centralizer. Then in 
Section~\ref{sec:algorithm1} we apply these concepts to CSPs, first, 
to demonstrate how quasi-centralizers help to decompose an instance into smaller
subinstances, and, second, to introduce a new kind of minimality condition for 
CSPs, \emph{block minimality}. After that we state the main results used by
the algorithm and describe the algorithm itself. We complete the first part by
introducing the main technical construction to give an idea of why the algorithm 
works.

\part{Outline of the algorithm}\label{part:outline}

\section{Introduction to CSP}\label{sec:csp-intro}

For a detailed introduction to CSP and the algebraic approach to its structure
the reader is referred to a very recent and very nice survey by Barto et al.\
\cite{Barto17:polymorphisms}. Basics of universal algebra can be learned from
the textbook \cite{Burris81:universal}. In preliminaries to this paper we 
therefore focus on what is needed for our result.

\subsection{CSP, universal algebra and the Dichotomy conjecture}\label{sec:csp-p1}

The `combinatorial' formulation of the CSP best fits our purpose.
Fix a finite set $A$ and let $\Gm$ be a \emph{constraint language} over $A$, 
that is, a set --- not necessarily finite --- of relations over $A$. The 
(\emph{nonuniform}) \emph{Constraint Satisfaction 
Problem}\index{Constraint Satisfaction Problem}\index{nonuniform Constraint 
Satisfaction Problem} (\emph{CSP}) associated with language $\Gm$ 
is the problem $\CSP(\Gm)$, in which, an \emph{instance}\index{instance}  
is a pair $(V,\cC)$, where $V$ is a set of variables; and $\cC$ is a set of 
\emph{constraints}\index{constraints}, 
i.e.\ pairs $\ang{\bs,\rel}$, where $\bs=(\vc vk)$ is a tuple of 
variables from $V$, the \emph{constraint scope}\index{constraint scope}, 
and $\rel\in\Gm$, the $k$-ary 
\emph{constraint relation}\index{constraint relation}. We always assume that 
relations are given explicitly by a list of tuples. The way constraints are 
represented does not matter if $\Gm$ is finite, of course, but it may change the
complexity of the problems for infinite languages.
The goal is to find a \emph{solution}, that is a mapping 
$\vf:V\to A$ such that for every constraint $\ang{\bs,\rel}$,
$\vf(\bs)\in\rel$. 

We will often use the set of solutions of a CSP instance $\cP=(V,\cC)$ or its 
subproblems (to be defined later), viewed either as a $|V|$-ary relation or as
a set of mappings $\vf:V\to A$. It will be denoted by 
$\cS_\cP$, or just $\cS$ if $\cP$ is clear from the context.

Jeavons et al.\ in \cite{Jeavons97:closure,Jeavons98:algebraic} were the first
to observe that higher order symmetries of constraint languages called 
polymorphisms play a significant role in  the study of the complexity of the CSP.
A \emph{polymorphism} of a relation $\rel$ over $A$ is an operation 
$f(\vc xk)$ on $A$ such that for any choice of $\vc\ba k\in\rel$ we have
$f(\vc\ba k)\in\rel$. If this is the case we also say that $f$ 
\emph{preserves}\index{operation preserving a relation} $\rel$, or that $\rel$
is \emph{invariant}\index{invariant relation} with respect to $f$. A polymorphism 
of a constraint language $\Gm$ is an 
operation that is a polymorphism of every $\rel\in\Gm$. 

\begin{theorem}[\cite{Jeavons97:closure,Jeavons98:algebraic}]%
\label{the:algebra-csp}
For constraint languages $\Gm,\Dl$, where $\Gm$ is finite, if every 
polymorphism of $\Dl$ is also a polymorphism of $\Gm$, then 
$\CSP(\Gm)$ is polynomial time reducible to $\CSP(\Dl)$. (In fact, this
can be improved to a log-space reduction.)
\end{theorem}

Listed below are the several types of polymorphisms that occur frequently 
throughout the paper. The presence of each of these polymorphisms
imposes restrictions on the structure of invariant relations that can be used 
in designing a solution algorithm. Some of such results we will mention 
later.\\[1mm]
-- \emph{Semilattice} operation is a binary operation $f(x,y)$ such that $f(x,x)=x$,
$f(x,y)=f(y,x)$, and $f(x,f(y,z))=f(f(x,y),z)$ for all $x,y,z\in A$;\\[1mm]
-- $k$-ary \emph{near-unanimity} operation is a $k$-ary operation $u(\vc xk)$
such that $u(y,x\zd x)=u(x,y,x\zd x)=\dots =u(x\zd x,y)=x$ for all 
$x,y\in A$; a ternary near-unanimity operation $m$ is said to be a \emph{majority} 
operation, it satisfies the equations $m(y,x,x)=m(x,y,x)=m(x,x,y)=x$;\\[1mm]
-- \emph{Mal'tsev} operation is a ternary operation $h(x,y,z)$ satisfying the 
equations
$h(x,y,y)=h(y,y,x)=x$ for all $x,y\in A$; the \emph{affine} operation $x-y+z$ of
an Abelian group is a special case of Mal'tsev operations;\\[1mm]
-- $k$-ary \emph{weak near-unanimity} operation is a $k$-ary operation $w$ that 
satisfies the same equations as a near-unanimity operations
$w(y,x\zd x)=w(x,y,x\zd x)=\dots =w(x\zd x,y)$, except for the last one.

\vspace{2mm}

The next step in discovering more structure behind nonuniform CSPs has
been made in \cite{Bulatov05:classifying}, in which universal algebras 
were brought into the picture.
A \emph{(universal) algebra}\index{universal algebra} is a pair $\zA=(A,F)$ 
consisting of a set $A$, the \emph{universe}\index{universe} of $\zA$, and a 
set $F$ of operations on $A$. Operations from $F$ together with operations that 
can be obtained from them by means of  composition are called the 
\emph{term}\index{term operation} operations of $\zA$.

Algebras allow for a more general definition of CSPs that is used here. Let 
$\CSP(\zA)$ denote the
class of nonuniform CSPs $\{\CSP(\Gm)\mid \Gm\sse\Inv(F)\}$, where $\Inv(F)$
denotes the set of all relations invariant with respect to all operations from $F$. 
Note that 
the tractability of $\CSP(\zA)$ can be understood in two ways: as the existence of 
a polynomial-time algorithm for every $\CSP(\Gm)$ from this class, or as the 
existence of a uniform polynomial-time algorithm for all such problems. One of the
implications of our results is that these two types of tractability are equivalent.
From the formal standpoint we will use the stronger one.

The main structural elements are subalgebras, 
congruences, and quotient algebras.
For $B\sse A$ and an operation $f$ on $A$ by $f\red B$ we denote the 
restriction of $f$ on $B$. Algebra $\zB=(B,\{f\red B\mid f\in F\})$ is called
a \emph{subalgebra} of $\zA$ if $f(\vc bk)\in B$ for any $\vc bk\in B$ and
any $f\in F$.

Congruences play a very significant role in our algorithm, and we discuss them 
in more details. A \emph{congruence} is an equivalence relation $\th\in\Inv(F)$. 
This means that
for any operation $f\in F$ and any $(a_1,b_1)\zd (a_k,b_k)\in\th$ it 
holds $(f(\vc ak),f(\vc bk))\in\th$. Therefore it is possible to define an 
algebra on $A\fac\th$, the set of $\th$-blocks, by setting 
$f\fac\th(a^\th_1\zd a^\th_k)=(f(\vc ak))\fac\th$ for $\vc ak\in A$, where
$a^\th$ denotes the $\th$-block containing $a$. The resulting algebra $\zA\fac\th$ 
is called the \emph{quotient algebra modulo} $\th$. 

The following are examples of congruences and quotient algebras.\\[1mm]
-- Let $\zA$ be any algebra. Then the equality relation $\zz_\zA$ and the full
binary relation $\zo_\zA$ on $\zA$ are congruences of $\zA$. The quotient
algebra $\zA\fac{\zz_\zA}$ is $\zA$ itself, while $\zA\fac{\zo_\zA}$ is a 
1-element algebra.\\[1mm]
-- Let $\zS_n$ be the permutation group on an $n$-element set and binary
relation $\th$ is given by: $(a,b)\in\th$ for $a,b\in\zS_n$ if and only if $a$
and $b$ have the same parity as permutations. Then $\th$ is a congruence of 
$\zS_n$ and $\zS_n\fac\th$ is the 2-element group.\\[1mm]
-- Let $\zL_n$ be an $n$-dimensional vector space and $\zL'$ its $k$-dimensional
subspace. The binary relation $\pi$ given by: $(\ov a,\ov b)\in\pi$ if and 
only if $\ov a,\ov b$ have the same orthogonal projection on $\zL'$, is a
congruence of $\zL_n$ and $\zL_n\fac\pi$ is $\zL'$.

\vspace{1mm}

The (ordered) set of all congruences of $\zA$ is denoted by $\Con(\zA)$. 
This set is actually a lattice. By $\H\S(\zA)$ we denote 
the set of all quotient algebras of all subalgebras of $\zA$. 

The results of \cite{Bulatov05:classifying} reduce the dichotomy conjecture  
to idempotent algebras. An algebra $\zA=(A,F)$ is said to be 
\emph{idempotent}\index{idempotent algebra} if every operation $f\in F$
satisfies the equation $f(x\zd x)=x$. If $\zA$ is idempotent, then all the 
constant relations $\{(a)\}$ are invariant under $F$. Therefore studying
CSPs over idempotent algebras is the same as studying the CSPs that allow all 
constant relations. Another useful property of idempotent algebras is that
every block of every its congruence is a subalgebra. We now can state
the algebraic version of the dichotomy theorem.

\begin{theorem}\label{the:algebra-dichot-p1}
For a finite idempotent algebra $\zA$ the following are equivalent:\\[1mm]
(1) $\CSP(\zA)$ is solvable in polynomial time;\\[1mm]
(2) $\zA$ has a weak near-unanimity term operation;\\[1mm]
(3) every algebra from $\H\S(\zA)$ has a nontrivial term operation 
(that is not a \emph{projection}, an operation of the form 
$f(\vc xk)=x_i$);\\[1mm]
Otherwise $\CSP(\zA)$ is NP-complete.
\end{theorem}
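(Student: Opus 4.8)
The plan is to separate the statement into three tasks: the equivalence (2)$\Leftrightarrow$(3), the hardness claim (if (2) fails then some problem in $\CSP(\zA)$ is NP-complete, which also yields (1)$\Rightarrow$(2)), and the tractability implication (2)$\Rightarrow$(1). The first two are essentially known; the third is the whole point of the paper. So the theorem reduces to producing a uniform polynomial-time algorithm for the class $\CSP(\zA)$ whenever $\zA$ has a weak near-unanimity term operation.

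For (2)$\Leftrightarrow$(3) I would use classical universal algebra. A weak near-unanimity operation is in particular a Taylor operation, and, by the theorem of Mar\'oti and McKenzie, a finite idempotent algebra with \emph{some} Taylor term operation already has a weak near-unanimity term operation; hence (2) is equivalent to $\zA$ having a Taylor term. Next, Taylor operations are inherited by subalgebras and by quotients while a projection algebra --- one all of whose term operations are projections --- admits none, and the converse of this is exactly Taylor's characterization in the idempotent case: a finite idempotent algebra has a Taylor term if and only if $\H\S(\zA)$ contains no projection algebra with at least two elements. Finally, if some $\zB\in\H\S(\zA)$ has only projections as term operations then every subset of $B$ is a subalgebra of $\zB$, so $\H\S(\zA)$ contains a two-element projection algebra, which visibly has no term operation that is not a projection. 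Thus ``every algebra in $\H\S(\zA)$ has a nontrivial term operation'' is equivalent to ``$\H\S(\zA)$ contains no two-element projection algebra'', which is equivalent to (2).

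For the hardness direction, assume (2) fails. By the previous paragraph $\H\S(\zA)$ contains a two-element algebra $\zB$ all of whose term operations are projections; then $\Inv(\zB)$ consists of \emph{all} relations over the two-element universe of $\zB$, and in particular it contains the ternary not-all-equal relation $N$. By Schaefer's theorem $\CSP(\{N\})$ is NP-complete, and since $\zB\in\H\S(\zA)$ the relational counterpart of the operators $\H$ and $\S$ --- the extension of Theorem~\ref{the:algebra-csp} to subalgebras and quotients, as in \cite{Bulatov05:classifying} --- reduces $\CSP(\{N\})$ in polynomial time to a problem in $\CSP(\zA)$. Hence some problem in $\CSP(\zA)$ is NP-complete, so the class is not solvable in polynomial time unless $\mathbf{P}=\mathbf{NP}$; this gives (1)$\Rightarrow$(2). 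Everything therefore reduces to (2)$\Rightarrow$(1).

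It remains to prove (2)$\Rightarrow$(1), the core of the paper. Assuming $\zA$ has a weak near-unanimity term, I would design a polynomial-time algorithm for $\CSP(\zA)$; since $\zA$ is idempotent we may add all constant relations. The algorithm recurses on a well-founded measure of the algebras that arise --- roughly, their sizes together with the structure of their congruence lattices. Given an instance it first enforces a strong form of local consistency; if the instance is consistent but no solution is yet manifest, it decomposes the instance into strictly smaller subinstances using the algebraic structure: \emph{separating congruences} confine the potential obstruction to a suitable quotient, and the \emph{quasi-centralizer} controls how the ``affine-like'' coordinates couple to the rest, so that the solution set of the instance can be reconstructed from those of the pieces. \emph{Block minimality} is the consistency notion tailored to make this reconstruction work, and the pieces are handled either by recursion on smaller algebras or, in the base cases, by the known algorithms for algebras with a Mal'tsev term, with few subpowers, or of bounded width \cite{Bulatov06:simple,Idziak10:few,Barto14:local}. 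The main obstacle --- and the reason this implication fills the second, technical part of the paper --- is establishing that the decomposition is \emph{exact}: that an instance satisfying block minimality together with the quasi-centralizer condition really does have its solution set determined by those of the subinstances, and that the recursion terminates. This demands a delicate analysis of the interaction between the semilattice / bounded-width behaviour and the affine / Mal'tsev behaviour present inside $\zA$, and is where the bulk of the machinery of the second part is spent.
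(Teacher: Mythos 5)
Your proposal follows the same structure as the paper: the paper itself disposes of (2)$\Leftrightarrow$(3) by citing \cite{Bulatov01:varieties} and \cite{Maroti08:existence}, disposes of the hardness direction (1)$\Rightarrow$(2) by citing \cite{Bulatov05:classifying}, and treats (2)$\Rightarrow$(1) as the main content of the rest of the paper. Your unpacking of the citations --- the Taylor/WNU equivalence, the characterization via two-element projection algebras in $\H\S(\zA)$, and the Schaefer-based NP-hardness reduction --- is the standard argument those references supply, and your sketch of (2)$\Rightarrow$(1) accurately summarizes the algorithm the paper then develops.
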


The hardness part of this theorem is proved in \cite{Bulatov05:classifying}; the 
equivalence of (2) and (3) was proved in \cite{Bulatov01:varieties} and 
\cite{Maroti08:existence}. The equivalence of (1) to (2) and (3) is the main result 
of this paper. In the rest of the paper we assume all algebras to satisfy  
conditions (2),(3).

\subsection{Bounded width and the few subpowers algorithm}%
\label{seg:bounded-width-p1}

Leaving aside occasional combinations thereof, there are only two standard 
types of algorithms solving the CSP. In this section we give a brief introduction 
into them.

\paragraph{CSPs of bounded width.}
Algorithms of the first kind are based on the idea of local propagation, that is
formally described below.
%
By $[n]$ we denote the set $\{1\zd n\}$. For sets $\vc An$ tuples 
from $A_1\tms A_n$ are denoted in boldface, say, $\ba$; the $i$th component of 
$\ba$ is referred to as $\ba[i]$. 
An $n$-ary relation $\rel$\index{relation} over sets $\vc An$ is any subset of 
$A_1\tms A_n$. 
For $I=\{\vc ik\}\sse[n]$ by $\pr_I\ba,\pr_I\rel$ we denote the 
\emph{projections}\index{projections} $\pr_I\ba=(\ba[i_1]\zd\ba[i_k])$, 
$\pr_I\rel=\{\pr_I\ba\mid\ba\in\rel\}$ of tuple
$\ba$ and relation $\rel$. If $\pr_i\rel=A_i$ for each $i\in[n]$, relation $\rel$ is 
said to be a \emph{subdirect product}\index{subdirect product} of 
$A_1\tms A_n$. 

Let $\cP=(V,\cC)$ be a CSP instance. For $W\sse V$ by $\cP_W$ 
we denote the \emph{restriction}\index{restriction of CSP} of 
$\cP$ onto $W$, that is, the instance 
$(W,\cC_W)$, where for each $C=\ang{\bs,\rel}\in\cC$, the set $\cC_W$
includes the constraint $C_W=\ang{\bs\cap W,\pr_{\bs\cap W}\rel}$.
The set of solutions of $\cP_W$ will be denoted by $\cS_W$. 

Unary solutions, that is, when $|W|=1$ play a special role. As is easily seen, 
for $v\in V$ the set $\cS_v$ is just the intersections of unary projections 
$\pr_v\rel$ of constraints whose scope contains $v$. One may always 
assume that allowed values for a variable $v\in V$ is the set $\cS_v$. We call 
this set the \emph{domain}\index{domain} of $v$ and assume that CSP
instances may have different domains, which nevertheless are always 
subalgebras or quotient algebras of the original algebra $\zA$. It will be 
convenient to denote the domain of $v$ by $\zA_v$. The domain $\zA_v$
may change as a result of transformations of the instance.
Instance $\cP$ is said to be \emph{1-minimal}\index{1-minimal instance} if 
for every $v\in V$ and every constraint
$C=\ang{\bs,\rel}\in\cC$ such that $v\in\bs$, it holds $\pr_v\rel=\zA_v$.

Instance $\cP$ is said to be \emph{(2,3)-consistent}\index{(2,3)-consistent 
instance} if it has a \emph{(2,3)-strategy}\index{(2,3)-strategy}, that is, a 
collection of relations $\rel^X$, 
$X\sse V$, $|X|=2$ satisfying the following conditions:\\
-- for every $X\sse V$ with $|X|\le2$ and every $C=\ang{\bs,\rel}$,
$\pr_{\bs\cap X}\rel^X\sse\pr_{\bs\cap X}\rel$;\\ 
-- for every $X=\{u,v\}\sse V$, any $w\in V-X$, and any 
$(a,b)\in\rel^X$, there is $c\in\zA_w$ such that $(a,c)\in\rel^{\{u,w\}}$
and $(b,c)\in\rel^{\{v,w\}}$.\\[1mm]
\indent
We will always assume that a (2,3)-consistent instance has a constraint 
$C^X=\ang{X,\cS_X}$ for every $X\sse V$, $|X|=2$. Then clearly
$\rel^X\sse\cS_X$. Let the collection of relations $\rel^X$ be denoted 
by $\cR$. 
%
A tuple $\ba$ whose entries are indexed with elements 
of $W\sse V$ and such that $\pr_X\ba\in\rel^X$ for any $X\sse W$, $|X|=2$,
will be called \emph{$\cR$-compatible}\index{$\cR$-compatible tuple}. If a 
(2,3)-consistent instance $\cP$
with a (2,3)-strategy $\cR$ satisfies the additional condition\\
-- for every constraint $C=\ang{\bs,\rel}$ of $\cP$ every tuple $\ba\in\rel$
is $\cR$-compatible,\\[1mm]
it is called \emph{(2,3)-minimal}\index{(2,3)-minimal instance}. For 
$k\in\nat$,  $(k,k+1)$-strategies, $(k,k+1)$-consistency, and 
$(k,k+1)$-minimality are defined in a similar way replacing 2,3 with $k,k+1$.

Instance $\cP$ is said to be \emph{minimal}\index{minimal instance} (or 
\emph{globally minimal}\index{globally minimal instance})  if for every 
$C=\ang{\bs,\rel}\in\cC$ and every $\ba\in\rel$ there is a solution 
$\vf\in\cS$ such that $\vf(\bs)=\ba$. 

Any instance can be transformed to
a 1-minimal, (2,3)-consistent, or (2,3)-minimal instance in polynomial
time using the standard constraint propagation algorithms (see, e.g.\ 
\cite{Dechter03:processing}). These algorithms work by changing the constraint
relations and the domains of the variables eliminating some tuples and elements 
from them. We call such a process \emph{tightening}\index{tightening} the 
instance. It is important to notice that if the original instance belongs to 
$\CSP(\zA)$ for some algebra $\zA$, that is, all its constraint relations are invariant
under the term operations of $\zA$, the constraint relations obtained by 
propagation algorithms are also invariant under term operations of $\zA$, and 
so the resulting instance also belongs to $\CSP(\zA)$. Establishing minimality
amounts to solving the problem and therefore not always can be easily 
done.

If a constraint propagation algorithm solves a CSP, the problem is said to be of 
bounded width. More precisely, $\CSP(\Gm)$ (or $\CSP(\zA)$) is said to have 
\emph{bounded width}\index{bounded width} if for some $k$
every $(k,k+1)$-minimal instance from $\CSP(\Gm)$ (or $\CSP(\zA)$) has a 
solution. Problems 
of bounded width are very well studied, see the older survey 
\cite{Bulatov08:dualities} and a more recent paper \cite{Barto16:collapse}.

\begin{theorem}%
[\cite{Barto16:collapse,Bulatov16:restricted,Kozik16:characterization}]%
\label{the:bounded-width-p1}
For an idempotent algebra $\zA$ the following are equivalent:\\[1mm]
(1) $\CSP(\zA)$ has bounded width;\\[1mm]
(2) every (2,3)-minimal instance from $\CSP(\zA)$ has a solution;\\[1mm]
(3) $\zA$ has a weak near-unanimity term of arity
$k$ for every $k\ge3$;\\[1mm]
(4) every algebra $\H\S(\zA)$ has a nontrivial operation, and none of them 
is equivalent to a module.
\end{theorem}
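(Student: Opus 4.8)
We only sketch the argument, which reassembles the results of \cite{Barto16:collapse,Bulatov16:restricted,Kozik16:characterization}. The plan is to prove the cycle $(2)\Rightarrow(1)\Rightarrow(4)\Leftrightarrow(3)\Rightarrow(2)$. Of these, $(2)\Rightarrow(1)$ is immediate, since $(2)$ is literally the case $k=2$ of the definition of bounded width, and $(4)\Leftrightarrow(3)$ is a Mal'tsev-condition fact; the two substantial arrows are the hardness statement $(1)\Rightarrow(4)$ and the algorithmic statement $(3)\Rightarrow(2)$. Note that $(1)\Rightarrow(2)$ --- the collapse of the bounded width hierarchy --- then falls out as a corollary along the route $(1)\Rightarrow(4)\Rightarrow(3)\Rightarrow(2)$.

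For $(1)\Rightarrow(4)$ I would argue by contraposition. Suppose $(4)$ fails, so some $\zB\in\H\S(\zA)$ with $|B|\ge2$ is either a \emph{projection algebra} (every term operation is a projection) or is equivalent to a module over a finite ring. In the first case every relation over $B$ is preserved by the term operations of $\zB$, so $\CSP(\zB)$ contains an NP-complete problem and does not have bounded width. In the second case $\CSP(\zB)$ contains the solvability problem for systems of linear equations over the module, for which, for every $k$, there is a $(k,k+1)$-minimal \emph{unsatisfiable} instance: take an inconsistent system such as $x_1+\dots+x_n=0$ together with $x_1+\dots+x_n=1$ over a prime field, expand it by all of its consequences on $\le k$ variables so as to make it $(k,k+1)$-minimal, and observe that no global solution is created. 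Either way $\CSP(\zB)$ has no bounded width. Since $\CSP$ over any subalgebra or quotient of $\zA$ reduces to $\CSP(\zA)$ by the standard substitution reductions of \cite{Bulatov05:classifying}, which turn a $(k,k+1)$-minimal instance into a $(k',k'+1)$-minimal one with $k'$ bounded in terms of the reduction, $\CSP(\zA)$ inherits this failure, so $(1)$ fails.

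For $(4)\Leftrightarrow(3)$ I would go through tame congruence theory. For a finite idempotent algebra $\zA$, the clause ``every member of $\H\S(\zA)$ has a non-projection term operation'' is exactly the assertion that $\var{\zA}$ omits the unary type $\mathbf1$ (equivalently, $\zA$ has a Taylor term), and ``no member of $\H\S(\zA)$ is equivalent to a module'' rules out the affine type $\mathbf2$; so $(4)$ says precisely that $\var{\zA}$ omits the types $\mathbf1$ and $\mathbf2$. By the equivalences cited around Theorem~\ref{the:algebra-dichot-p1} (\cite{Bulatov01:varieties,Maroti08:existence}), omitting $\mathbf1,\mathbf2$ already produces a weak near-unanimity term of some arity, and closing it up --- using idempotence, as in the proofs of those results --- yields weak near-unanimity terms of every arity $k\ge3$, i.e.\ $(3)$; conversely a $k$-ary weak near-unanimity term descends to every member of $\H\S(\zA)$, where it witnesses omission of $\mathbf1$ and $\mathbf2$.

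It remains to prove $(3)\Rightarrow(2)$, which I would do by the absorption method. Call a subuniverse $\zB$ of a domain $\zA_v$ \emph{absorbing} if for some term $t$ of $\zA$ one has $t(\zA_v\zd\zA_v,\zB,\zA_v\zd\zA_v)\subseteq\zB$ for $\zB$ occurring in each coordinate in turn --- the weak near-unanimity terms from $(3)$ provide many such $t$. The two pillars, both established using $(3)$ and the omission of $\mathbf1,\mathbf2$ (see \cite{Barto12:absorbing,Barto16:collapse}), are: (i) an absorption theorem, namely that a subdirect \emph{linked} (connected as a bipartite graph) subalgebra of a product $\zA_1\times\zA_2$ is either the full product or some $\zA_i$ has a \emph{proper} absorbing subuniverse; and (ii) a lifting/structure theorem for \emph{absorption-free} algebras: there, linked subdirect products are already full, so a proper subdirect relation is a disjoint union of boxes over matching congruence blocks, the instance reduces to one over the quotients, and --- because type $\mathbf2$ is omitted --- every solution of the quotient instance lifts. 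Granting these, take a nonempty $(2,3)$-minimal instance $\cP$; strengthen it --- possible under $(3)$ by \cite{Barto16:collapse,Kozik16:characterization} --- to an instance satisfying a consistency notion preserved by the basic operations of $\zA$ (a weak Prague instance); and induct on $\sum_{v\in V}|\zA_v|$. If some $\zA_v$ has a proper absorbing subuniverse $\zB$, restrict $v$ to $\zB$, re-propagate to restore the consistency notion --- which stays inside $\CSP(\zA)$ and, by absorption, keeps the instance nonempty --- and recurse. Otherwise every domain is absorption-free; if some domain carries a nontrivial congruence, factor all domains by the congruences generated along the constraints, solve the smaller quotient instance inductively, and lift the partial solution block by block using (ii); and if every domain is absorption-free \emph{and} simple, the linked components of the constraint graph are full products, so $\cP$ splits into ``rectangular'' pieces, each of which plainly has a solution. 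The step I expect to be the main obstacle is the lifting in the absorption-free, type-$\mathbf2$-free regime --- proving that, after contracting every nontrivial congruence while still having no proper absorbing subuniverse anywhere, a solution of the quotient instance genuinely extends; this is where omission of the affine type is used essentially (otherwise the lift is governed by an unsatisfiable linear system, the very obstruction of $(1)\Rightarrow(4)$), and it requires the delicate analysis of centralisers and of ``two-congruence'' configurations that forms the technical heart of \cite{Barto16:collapse} and, in different language, of \cite{Bulatov16:restricted}. The remaining ingredients --- the propagation bookkeeping, the passage from general $(k,k+1)$ to $(2,3)$, and the linear-equations gadget --- are routine by comparison.
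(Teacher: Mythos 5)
The paper does not prove Theorem~\ref{the:bounded-width-p1}: it states it as a known, cited result (\cite{Barto16:collapse,Bulatov16:restricted,Kozik16:characterization}) and uses it as a black box. So there is no ``paper's own proof'' to compare against; I will instead assess your sketch on its own merits.

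Your sketch is a reasonable reassembly of the published argument. The cycle $(2)\Rightarrow(1)\Rightarrow(4)\Leftrightarrow(3)\Rightarrow(2)$ is sound, $(2)\Rightarrow(1)$ and $(4)\Leftrightarrow(3)$ are exactly as you say, and the contrapositive of $(1)\Rightarrow(4)$ via projection algebras and inconsistent linear systems is standard. Two places are worth flagging as thinner than they read. First, in $(1)\Rightarrow(4)$ the claim that the substitution reductions from \cite{Bulatov05:classifying} turn a $(k,k+1)$-minimal instance into a $(k',k'+1)$-minimal one is not quite the usual formulation; the honest route is that bounded width is preserved under pp-interpretations (hence under $\H\S\D_{\mathrm{fin}}$), which is a separate theorem, not an obvious property of the reduction. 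Second, and more substantially, you characterise ``the passage from general $(k,k+1)$ to $(2,3)$'' as routine; in fact that collapse is precisely the main theorem of \cite{Barto16:collapse} and is nontrivial --- the original Barto--Kozik proof of $(3)\Rightarrow(1)$ worked at a larger $(k,k+1)$, and the reduction to $(2,3)$ is not a bookkeeping step but a genuine strengthening (via Prague/pq-instances), just as \cite{Kozik16:characterization} further relaxes the consistency notion. So the emphasis of your last paragraph is slightly inverted: the lifting in the absorption-free, type-$\mathbf2$-free case and the collapse to width $(2,3)$ are both hard, and the latter is not a corollary of the former. None of this makes the sketch wrong; it is a fair outline of how the cited works combine to give the theorem, which is all the paper itself relies on.
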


\paragraph{Omitting semilattice edges and the few subpowers property.}
%
The second type of CSP algorithms can be viewed as a generalization of 
Gaussian elimination, although, it utilizes just one property also used by 
Gaussian elimination: the set of solutions of a system of linear equations
or a CSP has a set of generators of size linear in the number of variables. 
The property 
that for every instance $\cP$ of $\CSP(\zA)$ its solution space $\cS_\cP$ has 
a set of generators of polynomial size is nontrivial, because there are only 
exponentially many such sets, while,
as is easily seen CSPs with $n$ variables may have up to double exponentially 
many different sets of solutions. Formally, an algebra $\zA=(A,F)$ has 
\emph{few subpowers} if for every $n$ there are only exponentially many 
$n$-ary relations in $\Inv(F)$. 

Algebras with few subpowers are well studied, completely characterized, and
the CSP over such an algebra has a polynomial-time solution algorithm, see, 
\cite{Berman10:varieties,Idziak10:few}. In particular, such algebras admit
a characterization in terms of the existence of a term operation with special
properties, an \emph{edge} term. We however need only a subclass of
algebras with few subpowers that appeared in \cite{Bulatov16:restricted}
and is defined as follows.

A pair of elements $a,b\in\zA$ is said to be a \emph{semilattice edge}
if there is a binary term operation $f$ of $\zA$ such that $f(a,a)=a$ and
$f(a,b)=f(b,a)=f(b,b)=b$, that is, $f$ is a semilattice operation on $\{a,b\}$.

\begin{prop}[\cite{Bulatov16:restricted}]\label{pro:semilattice-edges-p1}
If an idempotent algebra $\zA$ has no semilattice edges, it has few subpowers,
and therefore $\CSP(\zA)$ is solvable in polynomial time.
\end{prop}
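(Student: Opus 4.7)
The plan is to derive the existence of an edge term for $\zA$, because algebras with an edge term are exactly those having few subpowers by \cite{Berman10:varieties,Idziak10:few}, and for such algebras $\CSP(\zA)$ is solvable in polynomial time by the few subpowers algorithm from the same references. So the proposition reduces to manufacturing an edge term under the no-semilattice-edges hypothesis.

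The first step I would take is to exploit the standing assumption that $\zA$ satisfies conditions (2),(3) of Theorem~\ref{the:algebra-dichot-p1}, so $\zA$ carries a weak near-unanimity term and every algebra in $\H\S(\zA)$ is nontrivial in the same sense. Combined with idempotency, this triggers the local trichotomy for pairs $\{a,b\}\sse B$, $\zB\in\H\S(\zA)$: some binary term of $\zA$ restricted to $\{a,b\}$ acts as a semilattice, a majority, or an affine (Mal'tsev) operation. The no-semilattice-edges hypothesis rules out the first case in $\zA$, and by a routine pull-back argument (restrictions and quotients of terms are terms, and a semilattice edge in a subalgebra or quotient would pull back to one in $\zA$) it is also ruled out uniformly across all of $\H\S(\zA)$.

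The second and main step is to assemble a single $(k+1)$-ary term $t$ satisfying the edge identities from these local majority and affine witnesses. This is the technically demanding part, and where I expect the main obstacle to sit: one has to glue local majority terms and local Mal'tsev terms together into one uniform term whose edge identities hold on every pair $\{a,b\}$ simultaneously, and one has to control the arity. The natural recipe is to iterate compositions of a fixed weak near-unanimity with the binary witnesses, using the absence of semilattice edges at each pair to exclude the would-be ``bad'' slot in each identity, and to continue until the composed term stabilizes on every two-element subalgebra. The delicate point is that one cannot simply treat each pair in isolation; one needs a single term that works everywhere at once, which is why having the local trichotomy globally available in $\H\S(\zA)$, not just in $\zA$, is essential.

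Once the edge term has been produced, the conclusion is immediate. By the few subpowers theorem of \cite{Berman10:varieties,Idziak10:few}, the relational clone $\Inv(F)$ has only exponentially many $n$-ary members for each $n$, i.e., $\zA$ has few subpowers; and the generator-based algorithm from the same papers solves $\CSP(\zA)$ in polynomial time by maintaining, as it processes the constraints one by one, a polynomial-size set of generators for the subalgebra of partial solutions.
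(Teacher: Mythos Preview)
The paper does not give a proof of this proposition at all: it is stated with the attribution \cite{Bulatov16:restricted} and used as a black box, so there is no in-paper argument to compare against. Your outline is a reasonable sketch of how the cited result is actually established (local majority/affine behavior on every pair, assembled into a global edge term, then invoke \cite{Berman10:varieties,Idziak10:few}), and you correctly flag the uniformization step as the nontrivial one.

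One point to tighten: your ``routine pull-back argument'' that a semilattice edge in a quotient $\zB/\th$ would yield a semilattice edge in $\zA$ is not correct as stated for the definition used here (a \emph{thin} semilattice edge, i.e.\ $f(a,b)=f(b,a)=f(b,b)=b$ on the nose). A term acting as a semilattice on $\{a^\th,b^\th\}$ only gives $f(a,b)\equiv_\th b$, not $f(a,b)=b$, so it need not produce a thin semilattice edge upstairs. The actual argument in \cite{Bulatov16:restricted} works directly with the colored-graph classification of edges in $\zA$ itself (every edge is semilattice, majority, or affine; ruling out the first leaves only majority and affine edges in $\zA$), and does not need the stronger claim about all of $\H\S(\zA)$. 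So your Step~1 is slightly overreaching, though the core strategy survives once you localize the trichotomy to edges of $\zA$ rather than arbitrary pairs in arbitrary quotients.
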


Semilattice edges have other useful properties including the following one 
that we use for reducing a CSP to smaller problems.

\begin{lemma}[\cite{Bulatov16:connectivity}]\label{lem:multiplication-p1}
For any idempotent algebra $\zA$ there is a binary term operation $xy$ 
of $\zA$ (think multiplication) such that $xy$ is a semilattice operation 
on any semilattice edge and for any $a,b\in\zA$ either $ab=a$ or 
$\{a,ab\}$ is a semilattice edge.
\end{lemma}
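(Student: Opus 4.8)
The plan is to build $xy$ out of the binary semilattice-type operations of $\zA$ by a composition (a ``product'' in the sense of combining many binary terms), exploiting idempotence throughout so that the resulting term stays idempotent and its action on each pair $a,b$ is forced into one of the two desired shapes.

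First I would set up the relevant finite data. For each ordered pair $(a,b)\in A^2$ with $a\ne b$, consider the subalgebra $\Sg{\{a,b\}}$ and the binary idempotent term operations it carries. Call $(a,b)$ \emph{absorbing-ready} if there is a binary term $f$ with $f(a,a)=a$ and $f(a,b)=b$; note that if $\{a,b\}$ is a semilattice edge this is witnessed by the semilattice operation, so being a semilattice edge is a special case. The first claim to nail down is a dichotomy at the level of a single pair: for every $(a,b)$, either $ab$ can be arranged to equal $a$, or $\{a,b\}$ must be a semilattice edge; more precisely, I want a single binary term $f_{a,b}$ with $f_{a,b}(a,a)=a$ and such that $f_{a,b}(a,b)\in\{a,b\}$, and moreover if $f_{a,b}(a,b)=b$ then $\{a,b\}$ is a semilattice edge. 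The natural way to produce such an $f_{a,b}$ is to iterate: start from any binary term $g$ (e.g.\ a projection) and replace $g(x,y)$ by $g(x,g(x,y))$, or more generally by the ``right-iterate'' $g^{(n)}(x,y)=g(x,g(x,\dots g(x,y)))$. On the two-element set $\{a,b\}$, the value $g^{(n)}(a,b)$ eventually enters a cycle, and since $A$ is finite one can choose $n$ (uniformly, say $n=|A|!$ or an LCM of relevant period lengths) so that $g^{(n)}$ is idempotent as a ``power'' in the appropriate sense on all pairs simultaneously; this is the standard trick of raising an operation to a suitable power to make its behavior on finitely many pairs stabilize. Here Lemma~\ref{lem:multiplication-p1}'s hypotheses ($\zA$ idempotent) and the surrounding theory (Theorem~\ref{the:algebra-dichot-p1}(2): existence of a weak near-unanimity term) guarantee there are enough nontrivial binary terms to start from; in particular one can start from the binary term $w(x,x,\dots,x,y)$ obtained from a weak near-unanimity $w$ by identifying all but the last variable — wait, that gives a unary projection, so instead one starts from $w$ with variables identified in a way that keeps it binary and nontrivial.

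The heart of the construction is then to \emph{amalgamate} the pairwise terms $f_{a,b}$ into one global binary term $xy$. Here I would use the classical ``stacking/composition'' argument: enumerate all pairs $(a_1,b_1),\dots,(a_N,b_N)$ and compose the $f_{a_i,b_i}$ in sequence, $F(x,y)=f_{a_N,b_N}(x,f_{a_{N-1},b_{N-1}}(x,\dots f_{a_1,b_1}(x,y)\dots))$, and then take a suitable iterate of $F$ (again a fixed power) so that on every pair $\{a,b\}$ the operation behaves as a ``power'' and hence is forced to agree with the stable behavior of $f_{a,b}$ there. The point is: if for some pair $ab=a$ is achievable, then plugging $a,b$ through the composite eventually sends the second coordinate to $a$ and keeps it there (idempotence of the first argument: $F(a,a)=a$), so $ab=a$; if for some pair $\{a,b\}$ is a semilattice edge, then on that pair the semilattice operation is the unique idempotent binary operation of its kind and the composite's stable power must coincide with it, giving the semilattice behavior $ab=b$, $ba=b$, $(ab)b=b$ etc. One must check the two requirements don't conflict across pairs — they can't, because ``$\{a,b\}$ is a semilattice edge'' is an absolute property of the pair, not a choice, so on semilattice-edge pairs we are obliged to land on the semilattice operation, and on all other pairs the only available stable behavior is $ab=a$.

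The main obstacle, and where I'd expect to spend the real effort, is the \emph{simultaneity}: making one fixed iterate $F^{(m)}$ behave correctly on \emph{all} pairs at once, and in particular verifying that on a semilattice edge $\{a,b\}$ the composite really does reduce to \emph{the} semilattice operation (both $xy$ and $yx$ equal the right thing, and $x(xy)=xy$), rather than to some other idempotent binary operation on $\{a,b\}$ — one has to rule out, e.g., that the composite acts as a projection on that pair. This is handled by arranging, at the $f_{a,b}$-construction stage, that $f_{a,b}$ restricted to a semilattice edge already \emph{is} the semilattice operation (which is possible because the binary term witnessing the semilattice edge is itself a term of $\zA$, so one can fold it into the composition), and then noting that composing/iterating a family of binary terms each of which restricts to the semilattice operation on $\{a,b\}$ again restricts to the semilattice operation on $\{a,b\}$ (semilattice operations are idempotent under such composition). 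So the structure of the proof is: (i) show each pair individually admits a term with the right dichotomous behavior, taking care that semilattice edges get the semilattice operation exactly; (ii) compose over all pairs; (iii) raise to a uniform power to stabilize; (iv) verify the two clauses of the lemma hold, pair by pair, using idempotence of $\zA$ for $xy$ being well-defined as a term and for $F(a,a)=a$.
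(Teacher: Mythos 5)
The paper does not actually prove Lemma~\ref{lem:multiplication-p1}: it is imported verbatim from \cite{Bulatov16:connectivity}, and the paper's closest internal analog is Proposition~\ref{pro:good-operation}, which is likewise cited. So I cannot compare you against an in-paper proof; I can only evaluate the proposal on its merits, and I see two genuine gaps.

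First, you misread the second clause. The lemma asserts that either $ab=a$ or $\{a,\,ab\}$ is a semilattice edge; the element $ab$ may well be a \emph{third} element of $\zA$, different from both $a$ and $b$, and the edge in question is from $a$ to this new element, not from $a$ to $b$. Your pairwise ``dichotomy'' --- that one can find $f_{a,b}$ with $f_{a,b}(a,b)\in\{a,b\}$, and if it equals $b$ then $\{a,b\}$ is already a semilattice edge --- is a strictly stronger statement that is simply not what is being claimed and is not what the construction delivers. Taken literally, your pairwise claim is achievable only trivially (by a projection giving $f_{a,b}(a,b)=a$), and a projection composed into the global term destroys precisely the behavior you need on semilattice edges. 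The whole point of allowing $\{a,ab\}$ rather than $\{a,b\}$ is that iterating a term $g(x,y)\mapsto g(x,g(x,y))$ drives the second coordinate to a fixed point $c$ that need not lie in $\{a,b\}$, and the content of the lemma is that this fixed point, if not $a$ itself, is joined to $a$ by a semilattice edge.

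Second, the stacking argument does not preserve what it needs to preserve. If you form $F(x,y)=f_{a_N,b_N}(x,f_{a_{N-1},b_{N-1}}(x,\dots f_{a_1,b_1}(x,y)\dots))$, then on the pair $(a_1,b_1)$ you compute $f_{a_1,b_1}(a_1,b_1)=c$ and then feed $(a_1,c)$ into $f_{a_2,b_2}$, which was engineered with no regard for the pair $(a_1,c)$ at all; the carefully arranged behavior of $f_{a_1,b_1}$ on $(a_1,b_1)$ is then free to be destroyed by the outer layers. Raising $F$ to a power does not repair this, because the intermediate values leave the pair. Your fallback (``arrange that every $f_{a_i,b_i}$ already restricts to \emph{the} semilattice operation on every semilattice edge'') is not something you can arrange pair-by-pair: it is a single \emph{global} uniformity condition, and securing it is the actual hard theorem. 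That is Theorem~\ref{the:uniform} in this paper: there exists one term $f$ that acts as a semilattice operation on $\{a^\th,b^\th\}$ simultaneously for every semilattice edge of every algebra in the class. Once you have that single uniform $f$, the right move is not to compose many ad hoc terms but to iterate the one term, $f_1=f$, $f_{n+1}(x,y)=f(x,f_n(x,y))$, and take a high enough fixed iterate; then $a\cdot(a\cdot b)=a\cdot b$ holds, $\cdot$ remains the semilattice operation on edges, and the remaining (nontrivial) step is to show that $c=a\cdot b$ either equals $a$ or forms a thin semilattice edge $\{a,c\}$ --- which uses the structural theory behind Theorem~\ref{the:connectedness}, not just an iteration trick. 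Also, your side remark that ``the semilattice operation is the unique idempotent binary operation of its kind'' on a two-element set is false: a two-element set carries two projections and two semilattice operations, all idempotent, so uniqueness cannot be invoked to force the composite to act correctly on an edge.

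In short: the pairwise dichotomy you propose to build from is wrong as stated, the amalgamation-by-composition strategy does not control the intermediate values, and the genuine content of the lemma --- first a globally uniform semilattice term, then an iteration that stabilizes to a semilattice edge from $a$ --- is not reached.
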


\section{Solving CSPs}\label{sec:solving-p1}
\subsection{Congruence separation and centralizers}\label{sec:separation-central}

In this section we introduce two of the key ingredients of the algorithm.

\paragraph{Separating congruences.}
%
Unlike the vast majority of the literature on the algebraic approach to the CSP
we use not only term operations, but also polynomial operations of an algebra.
It should be noted however that the first to use polynomials for CSP algorithms
was Maroti in \cite{Maroti10:tree}. We make use of some ideas from that paper
in the next section.

Let $f(\vc xk,\vc y\ell)$ be a $k+\ell$-ary term operation of an algebra $\zA$ and
$\vc b\ell\in\zA$. The operation $g(\vc xk)=f(\vc xk,\vc b\ell)$ is called a
\emph{polynomial} of $\zA$. The name `polynomial' refers to usual polynomials. 
Indeed, if $\zA$ is a ring, its polynomials as just defined are the same as 
polynomials in the regular sense. A polynomial that depends on only one variable
is said to be a \emph{unary} polynomial.

While polynomials of $\zA$ do not have to be polymorphisms of relations from
$\Inv(F)$, congruences and unary polynomials are in a special relationship.
More precisely, an equivalence relation over $\zA$ is a congruence if and only 
if it is preserved by all the unary polynomials of $\zA$.

Let $\zA$ be an algebra. For $\al,\beta\in\Con(\zA)$ 
we write $\al\prec\beta$ if $\al<\beta$ and $\al\le\gm\le\beta$ in 
$\Con(\zA)$ implies $\gm=\al$ or $\gm=\beta$; if this is the case we 
call $(\al,\beta)$ a \emph{prime interval}\index{prime interval} in $\Con(\zA)$. 
Let $\al\prec\beta$ and $\gm\prec\dl$ be prime
intervals in $\Con(\zA)$. We say that $\al\prec\beta$ can be 
\emph{separated}\index{separated intervals} from 
$\gm\prec\dl$  if there is a unary polynomial $f$ of $\zA$ such that 
$f(\beta)\not\sse\al$, but $f(\dl)\sse\gm$. The polynomial $f$ in this case is 
said to \emph{separate}\index{separating polynomial}  
$\al\prec\beta$ from $\gm\prec\dl$. 

In a similar way separation can be defined for prime intervals in different coordinate 
positions of a relation. Let $\rel$ be a subdirect product of 
$\zA_1\tm\dots\tm\zA_n$. Then $\rel$ is also an algebra and its polynomials 
can be defined in the same way. Let $i,j\in[n]$ and let
$\al\prec\beta$, $\gm\prec\dl$ be prime intervals in $\Con(\zA_i)$ and 
$\Con(\zA_j)$, respectively. Interval $\al\prec\beta$ can be separated from 
$\gm\prec\dl$ if there is a unary polynomial $f$ of $\rel$ such that 
$f(\beta)\not\sse\al$ but $f(\dl)\sse\gm$ (note that the actions of $f$ on 
$\zA_i,\zA_j$ are polynomials of those algebras). 

The binary relation `cannot be separated' on the set of prime intervals of an 
algebra or factors of a relation is easily seen to be reflexive and transitive.
Under certain mild conditions it can also be shown to be symmetric in a
certain sense (Lemma~\ref{lem:relative-symmetry}), and so for the 
purpose of our algorithm it can be treated as an equivalence relation.

\paragraph{Quasi-Centralizers.}
The second ingredient introduced here is the notion of quasi-centralizer of a prime 
interval of congruences. It is similar to the centralizer as it is defined in 
commutator theory, albeit the exact relationship between the two concepts 
is not quite clear, and so we name differently for safety.

For an algebra $\zA$, a term operation $f(x,\vc yk)$, and $\ba\in\zA^k$, let
$f^\ba(x)=f(x,\ba)$. Let $\al,\beta\in\Con(\zA)$, $\al\le\beta$, 
and let $\zeta(\al,\beta)\sse\zA^2$ denote the following binary 
relation: 
$(a,b)\in\zeta(\al,\beta)$ if an only if, for any term operation $f(x,\vc yk)$, any 
$i\in[k]$, and any $\ba,\bb\in\zA^k$
such that $\ba[i]=a$, $\bb[i]=b$, and $\ba[j]=\bb[j]$ for $j\ne i$, it holds
$f^\ba(\beta)\sse\al$ if and only if $f^\bb(\beta)\sse\al$. 
The relation $\zeta(\al,\beta)$ is always a congruence of $\zA$ 
(Lemma~\ref{lem:delta-properties}) and its effect on the structure of algebra 
$\zA$ is illustrated by the following statement.

\begin{lemma}\label{lem:centralizer-multiplication-p1}
Let $\zeta(\al,\beta)=\zo_\zA$, $a,b,c\in\zA$ and 
$(b,c)\in\beta$. Then $(ab,ac)\in\al$.
\end{lemma}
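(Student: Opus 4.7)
The plan is to reduce the conclusion $(ab, ac)\in\al$ to producing a single convenient witness for which left‐multiplication trivially collapses $\beta$, and then to construct such a witness using the structure given by Lemma~\ref{lem:multiplication-p1}.

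First, I would recast the claim as the inclusion $f^{(a)}(\beta)\sse\al$, where $f(x,y) = yx$ is the binary term whose induced unary polynomial is $f^{(y_0)}(x) = y_0 x$. The hypothesis $\zeta(\al,\beta)=\zo_\zA$ applied to this particular $f$ — taking $\bb=(a)$ and $\ba=(y_0)$, which differ only in their single coordinate — yields the equivalence ``$f^{(a)}(\beta)\sse\al$ iff $f^{(y_0)}(\beta)\sse\al$'' for every $y_0\in\zA$. Hence the proof reduces to exhibiting a single element $y^*\in\zA$ for which $f^{(y^*)}(\beta)\sse\al$; specializing then to $a$ gives the conclusion.

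Next I would produce $y^*$ by a termination argument using Lemma~\ref{lem:multiplication-p1}. Starting from an arbitrary $a_0\in\zA$, iteratively set $a_{i+1} := a_i c_i$ whenever some $c_i\in\zA$ satisfies $a_i c_i \neq a_i$, and stop when no such $c_i$ remains. By Lemma~\ref{lem:multiplication-p1}, at each productive step $\{a_i,a_{i+1}\}$ is a semilattice edge on which $xy$ itself is the semilattice operation, so $a_{i+1}$ sits strictly above $a_i$ in the induced semilattice preorder. Since $\zA$ is finite, this strictly ascending sequence terminates at some $y^*$ with $y^* z = y^*$ for every $z\in\zA$. For this $y^*$ the polynomial $f^{(y^*)}$ is the constant $y^*$, so $f^{(y^*)}(\beta) = \{y^*\} \sse \al$ trivially. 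Combining with the first paragraph closes the argument.

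The main obstacle is the strict‐ascent claim in the second paragraph: that when $ab\neq a$, the element $ab$ is the absorbing (rather than the absorbed) element of the semilattice edge $\{a,ab\}$, so that the iteration cannot cycle. This is a directional feature of the specific multiplication from Lemma~\ref{lem:multiplication-p1} rather than a formal consequence of its statement; if it required additional justification, one could try instead to produce $y^*$ via a multi‐parameter term such as $f(x,y_1,\ldots,y_k) = y_1(y_2(\cdots(y_k x)))$, choosing the parameters to force the induced polynomial to be degenerate on $\beta$ and then invoking the same $\zeta$‐equivalence to transfer the property to $\ba = (a,a,\ldots,a)$.
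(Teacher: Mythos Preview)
Your termination argument has a genuine gap, and the obstacle is not the one you flag. That $ab$ is the top of the edge $\{a,ab\}$ is indeed true (it follows from $x\cdot(x\cdot y)=x\cdot y$), so each step does go ``up''. But going up along thin semilattice edges does not preclude returning: the directed graph of thin semilattice edges can have nontrivial strongly connected components, and hence your preorder need not be a partial order. Concretely, take the $3$-element algebra with the idempotent commutative operation determined by $a\cdot b=b$, $b\cdot c=c$, $c\cdot a=a$. This operation satisfies all the requirements of Lemma~\ref{lem:multiplication-p1} (each two-element subset is a semilattice under~$\cdot$), yet the iteration $a\to b\to c\to a\to\cdots$ cycles forever and no absorbing $y^*$ exists. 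Since your search for $y^*$ makes no use of the hypothesis $\zeta(\al,\beta)=\zo_\zA$, it would have to succeed in every algebra of the class, and it does not. Your fallback via the longer term $y_1(y_2(\cdots(y_k x)))$ is pointed in a reasonable direction, but it is only a suggestion: you still owe a proof that some choice of parameters collapses $\beta$ into $\al$, and that is exactly where the content of the lemma lies.

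The paper's argument is different and does use the hypothesis essentially. It first proves (Proposition~\ref{pro:full-centralizer}) that $\zeta(\al,\beta)\ge\beta$ forces $\typ(\al,\beta)=\two$ and that between $\beta$-blocks $B\le C$ in the same $\zeta$-block, each $a\in B/\al$ has a \emph{unique} target $\sg(a)\in C/\al$ with $a\le\sg(a)$. The uniqueness comes from tame congruence theory: if $a\le b$ and $a\le c$ with $b,c\in C$ and $b\not\eqc\al c$, one finds an $(\al,\beta)$-minimal-set retraction $f$ with $b,c\in f(\zA)$ and compares $g_1(x)=a\cdot f(x)$ with $g_2(x)=b\cdot f(x)$; the first separates $b,c$ while the second does not, contradicting $(a,b)\in\zeta(\al,\beta)$. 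The lemma then drops out in one line: $ab\eqc\beta ac$, $a\le ab$, $a\le ac$, hence $ab\eqc\al ac$ by uniqueness.
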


Fig.~\ref{fig:central-p1}(a),(b) shows the effect of large quasi-centralizers on the 
structure
of algebra $\zA$. Dots there represent $\al$-blocks (assume $\al$ is the 
equality relation), ovals represent $\beta$-blocks, let they be $B$ and $C$,
and such that there is at least one semilattice edge between $B$ and $C$.
If $\zeta(\al,\beta)$ is the full relation, Lemmas~\ref{lem:multiplication-p1} 
and~\ref{lem:centralizer-multiplication-p1} imply that for any $a\in B$ and
any $b,c\in C$ we have $ab=ac$, and so $ab$ is the only element of $C$
such that $\{a,ab\}$ is a semilattice edge (represented by arrows). In other 
words, we have a
mapping that can also be shown injective from $B$ to $C$. We will use
this mapping to lift any solution with a value from $B$ to a solution with 
a value from $C$.

\begin{figure}[ht]
\centerline{\includegraphics[totalheight=3cm,keepaspectratio]{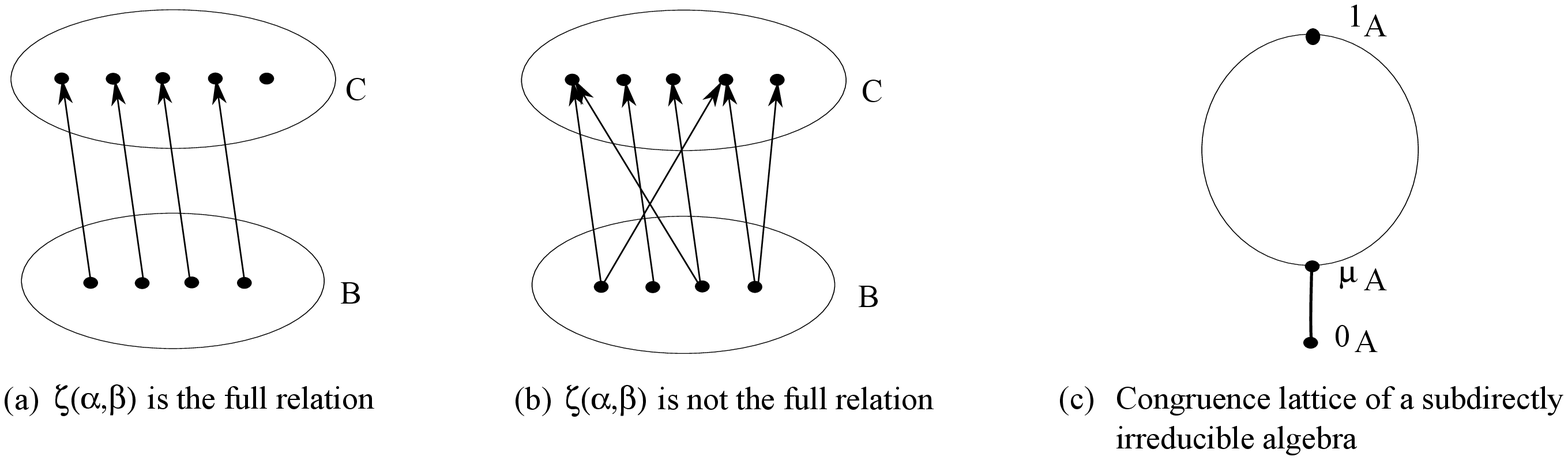}}
\caption{}\label{fig:central-p1}
\end{figure}

\subsection{The algorithm}\label{sec:algorithm1}

We have seen in the previous section that big centralizers impose strong 
restrictions on the structure of an algebra. We start this section showing that
small centralizers restrict the structure of CSPs.

\paragraph{Decomposition of CSPs.}
Let $\rel$ be a binary relation, a subdirect product of $\zA\tm\zB$, and 
$\al\in\Con(\zA)$, $\gm\in\Con(\zB)$. Relation $\rel$ is said to be 
\emph{$\al\gm$-aligned}\index{$\al_i\al_j$-aligned coordinates} 
if, for any $(a,c),(b,d)\in\rel$, $(a,b)\in\al$ if and only if $(c,d)\in\gm$. 
This means that if $\vc Ak$ are the $\al$-blocks of $\zA$, then there are
also $k$ $\gm$-blocks of $\zB$ and they can be labeled $\vc Bk$ in such 
a way that 
$$
\rel=(\rel\cap(A_1\tm B_1))\cup\dots\cup(\rel\cap(A_k\tm B_k)).
$$

\begin{lemma}\label{lem:delta-alignment-p1}
Let $\rel,\zA,\zB$ be as above and $\al,\beta\in\Con(\zA)$, $\gm,\dl\in\Con(\zB)$, 
with $\al\prec\beta$, $\gm\prec\dl$. If $(\al,\beta)$ and $(\gm,\dl)$ cannot be 
separated, then $\rel$ is $\zeta(\al,\beta)\zeta(\gm,\dl)$-aligned.
\end{lemma}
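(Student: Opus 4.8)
The plan is to unwind the definition of $\zeta(\al,\beta)\zeta(\gm,\dl)$-alignment: we must show that for all $(a,c),(b,d)\in\rel$ one has $(a,b)\in\zeta(\al,\beta)$ if and only if $(c,d)\in\zeta(\gm,\dl)$. Since $\rel$ is a subdirect product and here ``cannot be separated'' is read symmetrically (neither of the two intervals can be separated from the other), the two implications are mirror images of each other under exchanging the coordinates of $\rel$, so I would prove only one of them, say $(a,b)\in\zeta(\al,\beta)\Rightarrow(c,d)\in\zeta(\gm,\dl)$, and I would argue by contradiction, producing a unary polynomial of $\rel$ that separates one interval from the other.

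Assume $(a,b)\in\zeta(\al,\beta)$ but $(c,d)\notin\zeta(\gm,\dl)$. Unpacking the definition of $\zeta$, the latter supplies a term operation $h(x,\vc yk)$ of $\zB$, an index $i\in[k]$, and tuples $\bc,\bd\in\zB^k$ that agree off position $i$ with $\bc[i]=c$, $\bd[i]=d$, for which exactly one of $h^\bc(\dl)\sse\gm$, $h^\bd(\dl)\sse\gm$ holds; say the first holds and the second fails. Since $\rel$ is subdirect over $\zB$, for each $j\ne i$ pick $e_j\in\zA$ with $(e_j,\bc[j])\in\rel$; together with $(a,c),(b,d)\in\rel$ this lets me form two unary polynomials $g,g'$ of $\rel$ by plugging into the term operation of $\rel$ induced by $h$ (which acts on the first coordinate by the term operation $h^\zA$ of $\zA$ corresponding to $h$, and on the second by $h$) the parameters $(e_j,\bc[j])$ for $j\ne i$, and in position $i$ either $(a,c)$, giving $g$, or $(b,d)$, giving $g'$. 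Reading these coordinatewise: on the first coordinate $g$ acts as $p(x)=h^\zA(x,e_1\zd e_{i-1},a,e_{i+1}\zd e_k)$ and $g'$ as $p'(x)=h^\zA(x,e_1\zd e_{i-1},b,e_{i+1}\zd e_k)$, while on the second coordinate $g$ acts as $h^\bc$ and $g'$ as $h^\bd$ (this uses $\bc[j]=\bd[j]$ for $j\ne i$). The point is that $p$ and $p'$ are polynomials of $\zA$ differing only by replacing the parameter $a$ in position $i$ by $b$, so $(a,b)\in\zeta(\al,\beta)$ yields $p(\beta)\sse\al$ if and only if $p'(\beta)\sse\al$.

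Now chain the non-separability assumptions with this equivalence. Since $(\al,\beta)$ cannot be separated from $(\gm,\dl)$, applying the definition of separation to the unary polynomial $g$ gives $h^\bc(\dl)\sse\gm\Rightarrow p(\beta)\sse\al$; hence $p(\beta)\sse\al$, and so $p'(\beta)\sse\al$. Since $(\gm,\dl)$ cannot be separated from $(\al,\beta)$, applying it to $g'$ gives $p'(\beta)\sse\al\Rightarrow h^\bd(\dl)\sse\gm$; hence $h^\bd(\dl)\sse\gm$, contradicting the choice of $h,\bc,\bd$. The opposite case (first inclusion fails, second holds) is handled identically with $g$ and $g'$ interchanged, and the converse implication of the lemma follows verbatim after exchanging the two coordinates of $\rel$, using subdirectness in the first coordinate to pick auxiliary preimages in $\zB$.

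The step I expect to be the real work is the construction in the middle paragraph: one must build a single pair of unary polynomials of $\rel$ whose first-coordinate actions differ only in one parameter value (so that $\zeta(\al,\beta)$ becomes applicable) while their second-coordinate actions are exactly the polynomials $h^\bc,h^\bd$ coming from the failure of $(c,d)\in\zeta(\gm,\dl)$. The remaining bookkeeping is short, but it genuinely relies on non-separability in both directions and on the symmetry of $\zeta$ in its arguments.
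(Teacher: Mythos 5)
Your argument is correct and is essentially the paper's proof: assume $(a,b)\in\zeta(\al,\beta)$ but $(c,d)\notin\zeta(\gm,\dl)$, extract the witnessing term operation and parameter tuples from the failure of $\zeta(\gm,\dl)$, lift the parameters to $\rel$ using subdirectness, and observe that the resulting pair of unary polynomials of $\rel$ would, via the equivalence forced by $\zeta(\al,\beta)$ on the first coordinate, give a polynomial separating one interval from the other. The paper states this a bit more tersely (it notes directly that one of $g^\ba,g^\bb$ must separate) while you unwind the chain of implications explicitly, but the construction and the use of both directions of non-separability are identical.
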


Lemma~\ref{lem:delta-alignment-p1} provides a way to decompose 
CSP instances.
Let $\cP=(V,\cC)$ be a (2,3)-minimal instance from $\CSP(\zA)$, in particular, 
$\cC$ contains a constraint $C^{\{v,w\}}=\ang{(v,w),\rel^{\{v,w\}}}$ for 
every $v,w\in V$, and these relations form a (2,3)-strategy for $\cP$. Due to
(2,3)-minimality the domain of variables from $V$ do not have to be $\zA$ 
itself, but can be subalgebras of $\zA$. Recall that $\zA_v$ denotes the domain of 
$v\in V$. Also, let $W\sse V$ and congruences $\al_v,\beta_v\in\Con(\zA_v)$ 
for $v\in W$ be such that $\al_v\prec\beta_v$, and for any $v,w\in W$
the intervals $(\al_v,\beta_v)$ and $(\al_w,\beta_w)$ cannot be separated
in $\rel^{\{v,w\}}$. 

Denoting $\zeta_v=\zeta(\al_v,\beta_v)$ we see that there is a one-to-one
correspondence between $\zeta_v$ and $\zeta_w$ blocks of $\zA_v$ and 
$\zA_w$. Moreover, by (2,3)-minimality these correspondences are consistent,
that is, if $u,v,w\in W$ and $B_u,B_v,B_w$ are $\zeta_u$-, $\zeta_v$-
and $\zeta_w$-blocks, respectively, such that 
$\rel^{\{u,v\}}\cap(B_u\tm B_v)\ne\eps$ and 
$\rel^{\{v,w\}}\cap(B_v\tm B_w)\ne\eps$, then 
$\rel^{\{u,w\}}\cap(B_u\tm B_w)\ne\eps$. This means that $\cP_W$
can be split into several instances, whose domains are $\zeta_v$-blocks.

\begin{lemma}\label{lem:central-decomposition-p1}
Let $\cP,W,\al_v,\beta_v$ be as above. Then $\cP_W$ can be decomposed
into a collection of instances $\cP_1\zd\cP_k$, $\cP_i=(W,\cC_i)$ such that 
every solution of $\cP_W$ is a solution of one of the $\cP_i$ and for every 
$v\in V$ its domain in $\cP_i$ is a $\zeta_v$-block.
\end{lemma}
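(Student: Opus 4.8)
The plan is to exploit the one-to-one correspondences between the $\zeta_v$-blocks of the domains $\zA_v$, $v\in W$, that are furnished by Lemma~\ref{lem:delta-alignment-p1}, and to show that these correspondences are consistent in the sense made precise in the paragraph preceding the statement, so that they partition the solution space of $\cP_W$ into the pieces corresponding to ``globally compatible'' choices of blocks. First I would fix the notation $\zeta_v=\zeta(\al_v,\beta_v)$ for $v\in W$, and note that since $(\al_v,\beta_v)$ and $(\al_w,\beta_w)$ cannot be separated in $\rel^{\{v,w\}}$, Lemma~\ref{lem:delta-alignment-p1} tells us that $\rel^{\{v,w\}}$ is $\zeta_v\zeta_w$-aligned; hence it induces, for each pair $v,w\in W$, a bijection $\pi_{vw}$ between those $\zeta_v$-blocks $B$ of $\zA_v$ and those $\zeta_w$-blocks $B'$ of $\zA_w$ for which $\rel^{\{v,w\}}\cap(B\tm B')\ne\eps$ (using (2,3)-minimality, in particular $\rel^{\{v,w\}}\sse\cS_{\{v,w\}}$ being subdirect, every $\zeta_v$-block of $\zA_v$ occurs).

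Next I would verify the \emph{triangle consistency} of these bijections: for $u,v,w\in W$ and blocks $B_u,B_v,B_w$ with $\rel^{\{u,v\}}\cap(B_u\tm B_v)\ne\eps$ and $\rel^{\{v,w\}}\cap(B_v\tm B_w)\ne\eps$, we must have $\rel^{\{u,w\}}\cap(B_u\tm B_w)\ne\eps$, i.e.\ $\pi_{uw}=\pi_{vw}\circ\pi_{uv}$ on the relevant blocks. This follows from the (2,3)-consistency of the strategy $\cR$: pick $(a,b)\in\rel^{\{u,v\}}$ with $a\in B_u$, $b\in B_v$; pick $(b',c)\in\rel^{\{v,w\}}$ with $b'\in B_v$, $c\in B_w$; by (2,3)-consistency applied to $\{u,v\}$ with the extra variable $w$ and the pair $(a,b)$ there is $c'\in\zA_w$ with $(a,c')\in\rel^{\{u,w\}}$ and $(b,c')\in\rel^{\{v,w\}}$; since $\rel^{\{v,w\}}$ is $\zeta_v\zeta_w$-aligned and $b\in B_v$, the element $c'$ lies in the same $\zeta_w$-block as $c$, namely $B_w$; hence $(a,c')\in\rel^{\{u,w\}}\cap(B_u\tm B_w)$, as required. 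Consequently, on each connected component of $W$ (here $W$ is automatically ``connected'' because every pair of variables carries a binary constraint from the (2,3)-strategy), a choice of one $\zeta_v$-block $B_v$ for a single $v$ determines, via the $\pi_{vw}$, a $\zeta_w$-block $B_w=\pi_{vw}(B_v)$ for every other $w$, and the resulting family $(B_w)_{w\in W}$ satisfies $\rel^{\{u,w\}}\cap(B_u\tm B_w)\ne\eps$ for all $u,w\in W$.

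Then I would define the decomposition. Let the distinct consistent families obtained this way be $(B^1_w)_{w\in W}\zd(B^k_w)_{w\in W}$, where $k$ is the common number of $\zeta_v$-blocks of $\zA_v$ (independent of $v$ by the bijections $\pi_{vw}$). For each $i\in[k]$ let $\cP_i=(W,\cC_i)$ be obtained from $\cP_W$ by restricting the domain of each $w\in W$ to $B^i_w$ and replacing each constraint $C=\ang{\bs,\rel}\in\cC_W$ by $\ang{\bs,\rel\cap\prod_{w\in\bs}B^i_w}$. The domain of $v$ in $\cP_i$ is the $\zeta_v$-block $B^i_v$, as demanded. It remains to check that every solution $\vf$ of $\cP_W$ is a solution of one of the $\cP_i$: the value $\vf(w)$ lies in some $\zeta_w$-block for each $w$, and applying the binary constraints $C^{\{v,w\}}=\ang{(v,w),\cS_{\{v,w\}}}$ together with $\rel^{\{v,w\}}\sse\cS_{\{v,w\}}$ — more precisely, using that $\vf$ restricted to any pair $\{v,w\}$ is $\cR$-compatible, so $(\vf(v),\vf(w))\in\rel^{\{v,w\}}$ — we get that the blocks containing the $\vf(w)$ are pairwise $\pi_{vw}$-related, hence form one of the consistent families $(B^i_w)_{w\in W}$; thus $\vf(w)\in B^i_w$ for all $w$ and $\vf$ satisfies every restricted constraint of $\cP_i$.

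The main obstacle I anticipate is not any single step but getting the consistency bookkeeping exactly right: one must be careful that the bijections $\pi_{vw}$ are well defined (every $\zeta_v$-block actually meets $\rel^{\{v,w\}}$ in a nonempty set, which needs subdirectness of $\rel^{\{v,w\}}$, guaranteed by (2,3)-minimality) and that the triangle identity uses (2,3)-consistency in precisely the form stated — in particular that the witness $c'$ produced by (2,3)-consistency lands in the \emph{intended} $\zeta_w$-block, which is where $\zeta_v\zeta_w$-alignment of $\rel^{\{v,w\}}$ does the essential work. A secondary subtlety is making sure that a solution of $\cP_W$ is $\cR$-compatible so that it respects all the binary relations $\rel^{\{v,w\}}$ of the strategy and not merely the original constraints; this is where we invoke that in a (2,3)-minimal instance $\cC$ contains $C^{\{v,w\}}$ for every pair and $\rel^{\{v,w\}}\sse\cS_{\{v,w\}}$, so that any solution projects into $\rel^{\{v,w\}}$ on every pair.
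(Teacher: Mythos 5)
Your proposal is correct and follows essentially the same route as the paper: the paper does not spell out a full proof of this lemma in Part~I, but sketches it in the preceding paragraph and in the discussion surrounding the definition of $\ov\al$-aligned instances and Theorem~\ref{the:centralizer-alignement} in Part~II, which is precisely the argument you expand—use Lemma~\ref{lem:delta-alignment-p1} to get the pairwise alignments and bijections, invoke (2,3)-minimality to get triangle consistency of these bijections, and then observe that the resulting coherent block choices partition the solution space. Your handling of the two subtleties you flag (subdirectness of $\rel^{\{v,w\}}$, and the fact that any solution of $\cP_W$ projects into each $\rel^{\{v,w\}}$) correctly invokes the standing assumptions that the instance is (2,3)-minimal with constraints $C^X=\ang{X,\cS_X}$, so $\rel^X=\cS_X$ for $|X|=2$ and the strategy relations are subdirect onto the domains.
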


\paragraph{Irreducibility.}
In order to formulate the algorithm properly we need one more transformation 
of algebras. An algebra $\zA$ is said to be 
\emph{subdirectly irreducible}\index{subdirectly irreducible algebra}
if the intersection of all its nontrivial (different from the equality relation)
congruences is nontrivial. This smallest nontrivial congruence $\mu_\zA$ is called 
the \emph{monolith}\index{monolith} of $\zA$, see Fig.~\ref{fig:central-p1}(c).
It is a folklore observation that any CSP instance can be transformed in 
polynomial time to an instance, in which the domain of every variable 
is a subdirectly irreducible algebra. We will assume this property of all the
instances we consider.

\paragraph{Block-minimality.}
Lemma~\ref{lem:central-decomposition-p1} allows one to establish a much stronger
version of local consistency, block-minimality; in fact, it is not local anymore. 
The definitions below are designed in such a way that to allow for an efficient 
procedure to establish block-minimality. This is achieved either by allowing for 
decomposing a subinstance into instances over smaller domains as in 
Lemma~\ref{lem:central-decomposition-p1}, or by replacing large domains with 
their quotient algebras.

Let $\al_v$ be a congruence of $\zA_v$ for each $v\in V$. By 
$\cP\fac{\ov\al}$ we denote
the instance $(V,\cC^{\ov\al})$ constructed as follows: the domain of 
$v\in V$ is $\zA_v\fac{\al_v}$; for every constraint $C=\ang{\bs,\rel}\in\cC$,
the set $\cC^{\ov\al}$ includes the constraint $\ang{\bs,\rel\fac{\ov\al_\bs}}$,
where $\bs=(\vc vk)$ and 
$\rel\fac{\ov\al}=\{(\ba[v_1]^{\al_{v_1}}\zd\ba[v_k]^{\al_{v_k}})
\mid \ba\in\rel\}$.

Let $\cP=(V,\cC)$ be a (2,3)-minimal instance and $\{\rel^X\mid X\sse V, 
|X|=2\}$ is its (2,3)-strategy. Let $\ov\beta=(\beta_v)_{v\in V}$, 
$\beta_v\in\Con(\zA_v)$, $v\in V$, be a collection of congruences. Let 
$\cW^\cP(\ov\beta)$ denote the set of triples $(v,\al,\beta)$ 
such that $v\in V$, $\al,\beta\in\Con(\zA_v)$, and $\al\prec\beta\le\beta_v$. 
Also, $\cW^\cP$ denotes $\cW^\cP(\ov\beta)$ when $\beta_v$ is the full 
relation for all $v\in V$. We will omit the superscript $\cP$ whenever it is clear 
from the context. 

For every $(v,\al,\beta)\in\cW(\ov\beta)$, let $W_{v,\al\beta,\ov\beta}$ denote 
the set of variables $w\in V$ such that $(\al,\beta)$ and $(\gm,\dl)$
cannot be separated in $\rel^{vw}$ for some $\gm,\dl\in\Con(\zA_w)$ with 
$(w,\gm,\dl)\in\cW^\cP(\ov\beta)$. 
Let $\cW'(\ov\beta)$ (and respectively $\cW'$) denote the 
set of triples $(v,\al,\beta)\in\cW(\ov\beta)$ (respectively, from $\cW$), for
which $\zeta(\al,\beta)$ is the full relation. 

We say that algebra $\zA_v$ is 
\emph{semilattice free}\index{semilattice free algebra} if it does not contain 
semilattice edges. Let $\razm(\cP)$ denote the 
maximal size of domains of $\cP$ that are not semilattice free and 
$\Razm(\cP)$ be the set of variables $v\in V$ such that
$|\zA_v|=\razm(\cP)$ and $\zA_v$ is not semilattice free. For an instance $\cP$ 
we say that an instance $\cP'$ is \emph{strictly smaller} than instance $\cP$ if 
$\razm(\cP')<\razm(\cP)$. For $Y\sse V$ 
let $\mu^Y_v=\mu_v$ if $v\in Y$ and $\mu^Y_v=\zz_v$ otherwise. 

Instance $\cP$ is said to be \emph{block-minimal}\index{block-minimal instance} 
if for every $(v,\al,\beta)\in\cW$ (here $\beta_v=\zo_v$, $v\in V$) the following
conditions hold:
\begin{itemize}
\item[(BM1)]
for every $C=\ang{\bs,\rel}\in\cC$ the problem $\cP_{W_{v,\al\beta,\ov\beta}}$
if $(v,\al,\beta)\not\in\cW'$, and the problem 
$\cP_{W_{v,\al\beta,\ov\beta}}\fac{\ov\mu^Y}$ otherwise, where 
$Y=\Razm(\cP)-\bs$, is minimal;
\item[(BM2)]
if $(v,\al,\beta)\in\cW'$, then for every $(w,\gm,\dl)\in\cW-\cW'$
the problem $\cP_{W_{v,\al\beta,\ov\beta}}\fac{\ov\mu^Y}$, where
$Y=\Razm(\cP)-(W_{v,\al\beta,\ov\beta}\cap W_{w,\gm\dl,\ov\beta})$
is minimal.
\end{itemize}
Observe that $W_{v,\al\beta,\ov\beta}$ can be large, even
equal to $V$. However if $(v,\al,\beta)\not\in\cW'$ by 
Lemma~\ref{lem:central-decomposition-p1} the problem 
$\cP_{W_{v,\al\beta,\ov\beta}}$ splits into a union of disjoint problems over 
smaller domains, and so its minimality can be established by recursing
to strictly smaller problems. On the other hand, if $(v,\al,\beta)\in\cW'$ then 
$\cP_{W_{v,\al\beta,\ov\beta}}$ may not be decomposable. Since we need 
an efficient procedure of establishing block-minimality, this explains the 
complications introduced in (BM1),(BM2). In the first case 
$\cP_{W_{v,\al\beta,\ov\beta}}\fac{\ov\mu^Y}$ can be solved for each tuple
$\ba\in\rel$. Taking the quotient algebras of the domains guarantees that
we recurse to strictly smaller instances. In the second case 
$\cP_{W_{v,\al\beta,\ov\beta}\cap W_{w,\gm\dl,\ov\beta}}\fac{\ov\mu^Y}$
is decomposable, and we branch on those strictly smaller subproblems.

\begin{lemma}\label{lem:to-block-minimality-p1}
Let $\cP=(V,\cC)$ be a (2,3)-minimal instance. 
Then $\cP$ can be transformed to an equivalent block-minimal 
instance $\cP'$ by solving a quadratic number of strictly smaller CSPs.
\end{lemma}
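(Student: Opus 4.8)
The plan is to establish block-minimality by an iterative tightening procedure that repeatedly enforces the conditions (BM1) and (BM2) one triple $(v,\al,\beta)\in\cW$ at a time, while controlling the recursion so that every auxiliary CSP we solve is strictly smaller than $\cP$. First I would fix a (2,3)-minimal instance $\cP=(V,\cC)$ and observe that each condition we need to enforce is of the form ``a certain restricted/quotiented subinstance is minimal'', and minimality of an instance $\cQ$ with constraint $C=\ang{\bs,\rel}$ just means every tuple of $\rel$ extends to a solution of $\cQ$; so enforcing it amounts to deleting from $\rel$ those tuples that do not so extend (and propagating). The key point is that checking whether a tuple extends, for the subinstances appearing in (BM1) and (BM2), is itself a CSP that is \emph{strictly smaller} than $\cP$ in the $\razm$-ordering: when $(v,\al,\beta)\notin\cW'$, Lemma~\ref{lem:central-decomposition-p1} decomposes $\cP_{W_{v,\al\beta,\ov\beta}}$ into a disjoint union of instances whose variable domains are $\zeta_v$-blocks (proper subalgebras, since $\zeta_v=\zeta(\al,\beta)$ is not full), hence with smaller non-semilattice-free domains; when $(v,\al,\beta)\in\cW'$ we instead pass to $\cP_{W_{v,\al\beta,\ov\beta}}\fac{\ov\mu^Y}$ (or the analogous intersected version in (BM2)), and taking the quotient by the monolith on every largest non-semilattice-free domain strictly decreases $\razm$. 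So every ``is minimal'' test is answered by recursing on strictly smaller CSPs, which is exactly what the statement permits.

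Next I would organize the tightening as an outer loop over $\cW$ (and over the pairs $(w,\gm,\dl)$ in (BM2)), each pass attempting to delete offending tuples; since $\cW$ and the constraint list have polynomially bounded size and each deletion strictly shrinks a constraint relation, at most $O(|\cC|\cdot\max_C|\rel|)=O(|\cP|)$ deletions can occur, and re-establishing (2,3)-minimality after a batch of deletions takes polynomial time via the standard propagation algorithm (which, as noted in the excerpt, keeps all relations in $\Inv(F)$, so we stay inside $\CSP(\zA)$). The invariance observations are important: because the polynomial-time propagation of Section~\ref{seg:bounded-width-p1} preserves membership in $\CSP(\zA)$, the decompositions of Lemma~\ref{lem:central-decomposition-p1} and the quotients $\cP\fac{\ov\al}$ remain legitimate instances of $\CSP$ over algebras in $\H\S(\zA)$ throughout, so the recursion is well-defined at each stage. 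Counting: there are $O(|V|^2\cdot|\Con|)$ triples, each re-examined $O(|\cP|)$ times, and each examination invokes one strictly-smaller CSP per constraint tuple — giving a quadratic (in the size of $\cP$) number of strictly smaller CSP calls, as claimed. That the resulting instance $\cP'$ is equivalent to $\cP$ is immediate: we only ever delete a tuple $\ba$ from a constraint relation $\rel$ when $\ba$ provably cannot participate in a solution of the corresponding subinstance, hence cannot participate in a solution of $\cP$ either, so $\cS_{\cP'}=\cS_\cP$; and the loop terminates precisely when no more deletions are possible, which by definition means (BM1) and (BM2) hold, i.e.\ $\cP'$ is block-minimal.

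I expect the main obstacle to be the bookkeeping needed to guarantee \emph{simultaneous} satisfaction of all the conditions — deleting a tuple to fix one triple $(v,\al,\beta)$ may break minimality of a subinstance for another triple $(w,\gm,\dl)$, so one must argue the outer loop still converges and that the conditions are ``monotone'' under tightening. Concretely, the subtlety is that shrinking the constraint relations can change which congruences $\al_v$ exist on the (now possibly smaller) domains $\zA_v$, which $(\al,\beta)$-pairs can be separated in $\rel^{vw}$, and hence the sets $W_{v,\al\beta,\ov\beta}$ and the partition into $\cW'$ vs.\ $\cW-\cW'$; one needs that these only ever ``improve'' (domains shrink, separations can only be created not destroyed in a controlled way, so the relevant $W$-sets do not grow in a way that reintroduces unbounded recursion), or else one restarts the analysis on the tightened instance — which is fine because the tightened instance is no larger. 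Making this convergence-and-consistency argument precise, while keeping the total count of strictly-smaller CSP solves quadratic, is the technical heart of the proof; the rest is an assembly of Lemmas~\ref{lem:multiplication-p1}, \ref{lem:centralizer-multiplication-p1}, \ref{lem:delta-alignment-p1}, and \ref{lem:central-decomposition-p1} together with the standard polynomial-time propagation machinery.
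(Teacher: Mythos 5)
Your proposal follows essentially the same route as the paper's proof: iteratively enforce (BM1)/(BM2) by tightening constraints, reduce each minimality check to strictly smaller CSPs via the $\ov\zeta$-aligned decomposition (the paper cites Theorem~\ref{the:centralizer-alignement}, which is the Part~II version of Lemma~\ref{lem:central-decomposition-p1}) in the $(v,\al,\beta)\notin\cW'$ case and via the monolith quotient together with fixing values on $\bs\cap W$ (resp.\ $\bs\cap W\cap W_{w,\gm\dl}$) in the $\cW'$ case, and restart whenever a tightening occurs. The one concrete step the paper makes explicit that you leave implicit is that the minimality test for a fixed tuple $\ba\in\rel$ is carried out by adding unary constraints pinning the variables in $\bs\cap W$ (or the relevant intersection) to the entries of $\ba$, and it is this pinning, combined with the quotient and the decomposition, that makes \emph{every} remaining domain either singleton, semilattice free, or of size below $\razm(\cP)$; worth stating, since without pinning the decomposition alone does not shrink the domains of variables inside $\bs$ when $Y$ omits them. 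Your worry about the conditions not being monotone under tightening is legitimate and the paper handles it exactly as you propose, by restarting on the tightened instance; your observation that $\razm$ only decreases keeps the recursion well-founded. One caveat: neither your accounting nor the paper's proof actually establishes a \emph{quadratic} bound in any careful sense — your own tally $O(|V|^2\cdot|\Con|)$ triples times $O(|\cP|)$ restarts times one recursive call per tuple is already superquadratic in $|\cP|$ — and the paper's proof contains no count at all beyond reducing each check to strictly smaller CSPs, so this discrepancy is inherited from the paper rather than a new gap in your argument.
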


\paragraph{The algorithm.}
%
In the algorithm we distinguish three cases depending on semilattice edges and 
quasi-centralizers of the domains of variables. In each case we employ different 
methods of solving or reducing the instance to a strictly smaller one.

Let $\cP=(V,\cC)$ be a subdirectly irreducible (2,3)-minimal instance. Let 
$\Centr(\cP)$ denote the set of variables $v\in V$ such 
that $\zeta(\zz_v,\mu_v)=\zo_v$. Let $\mu^*_v=\mu_v$ if 
$v\in\Razm(\cP)\cap\Centr(\cP)$ and $\mu^*_v=\zz_v$ otherwise. 

\vspace{2mm}

\emph{Semilattice free domains.}
If no domain of $\cP$ contains a semilattice edge
then by Proposition~\ref{pro:semilattice-edges-p1} $\cP$ can be solved in polynomial 
time, using the few subpowers algorithm, as shown in 
\cite{Idziak10:few,Bulatov16:restricted}.

\emph{Trivial centralizers.}
If $\mu^*_v=\zz_v$ for all $v\in V$, block-minimality guarantees the existence 
of a solution, and we can use Lemma~\ref{lem:to-block-minimality-p1} to solve 
the instance. 

\begin{theorem}\label{the:non-central-p1}
If $\cP$ is subdirectly irreducible, (2,3)-minimal, block-minimal, and \lb
$\Razm(\cP)\cap\Centr(\cP)=\eps$, then $\cP$ has a solution.
\end{theorem}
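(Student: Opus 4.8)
The plan is to prove Theorem~\ref{the:non-central-p1} by induction on $\razm(\cP)$, the maximal size of a non-semilattice-free domain, establishing that a subdirectly irreducible, $(2,3)$-minimal, block-minimal instance with $\Razm(\cP)\cap\Centr(\cP)=\eps$ has a solution. The base case is when $\razm(\cP)$ is so small (or all domains are semilattice free) that Proposition~\ref{pro:semilattice-edges-p1} applies directly. For the inductive step I would try to build a solution greedily, variable by variable, using block-minimality together with Lemma~\ref{lem:multiplication-p1} to ``push'' partial assignments toward consistent values. The key conceptual point is that the hypothesis $\Razm(\cP)\cap\Centr(\cP)=\eps$ means every maximal-size domain that is not semilattice free has a \emph{trivial} quasi-centralizer $\zeta(\zz_v,\mu_v)\ne\zo_v$ on its monolith; this rules out exactly the ``affine-like'' obstruction to local-to-global, so the behavior on those domains should resemble the bounded-width / semilattice case rather than the linear-equations case.

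The main steps, in order, would be: (1) Using $(2,3)$-minimality and Lemma~\ref{lem:multiplication-p1}, show that for each variable $v$ whose domain is a largest non-semilattice-free domain, there is a semilattice edge $\{a, ab\}$ we can exploit; the goal is to find a ``linked'' structure so that choosing a value in one such domain propagates to consistent choices everywhere. (2) Apply Lemma~\ref{lem:central-decomposition-p1} to decompose, for triples $(v,\al,\beta)\notin\cW'$, the subinstance $\cP_{W_{v,\al\beta,\ov\beta}}$ into pieces over strictly smaller domains, to which induction applies; block-minimality (BM1) guarantees these pieces are minimal, hence solvable, and their solutions glue. (3) For triples in $\cW'$ (full quasi-centralizer, which by hypothesis cannot occur on a largest non-semilattice-free domain with nontrivial monolith, hence only occurs in strictly smaller domains after quotienting), use (BM1)/(BM2) with the $\ov\mu^Y$ quotients to again recurse to strictly smaller instances. (4) Combine: fix a value for one ``coordinate of maximal type'', propagate through the $(2,3)$-strategy and the block-minimal subproblems to obtain a globally consistent assignment, checking that the quotient-level solutions lift because on the relevant domains the quasi-centralizer is small (so Lemma~\ref{lem:centralizer-multiplication-p1} does \emph{not} collapse the monolith and the few-subpowers/semilattice machinery rather than linear algebra governs the lift).

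I expect the main obstacle to be step~(4): reconciling the solutions obtained on the various quotient instances $\cP_{W}\fac{\ov\mu^Y}$ and on the decomposed pieces into a single coherent global solution. The quotients are taken with respect to different sets $Y$ depending on the constraint or on pairs of triples, so a solution modulo $\ov\mu^Y$ for one subproblem must be shown compatible with a solution modulo $\ov\mu^{Y'}$ for another, and ultimately lifted past all the monoliths $\mu_v$ on the largest domains. This is exactly where the condition $\Razm(\cP)\cap\Centr(\cP)=\eps$ must be used decisively: it should guarantee that the obstruction to such lifting (which would be a nontrivial $\mu_v$-coset structure behaving like an abelian group, forcing a parity/linear constraint) is absent, so that a ``local'' choice in each $\mu_v$-block can be made independently and consistently. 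Making this precise presumably requires a careful analysis of how the separation relation interacts across the $W_{v,\al\beta,\ov\beta}$ sets — i.e., showing the ``cannot be separated'' relation is transitive/symmetric enough (via Lemma~\ref{lem:relative-symmetry}, referenced in the excerpt) to organize all largest-type coordinates into coherent blocks — and I would budget most of the proof's length for that gluing argument. Secondary difficulties will be bookkeeping that every recursive call is genuinely strictly smaller in the $\razm$ measure, and handling the subdirect-irreducibility assumption (which may need to be re-established, or worked around, after passing to subinstances or quotients).
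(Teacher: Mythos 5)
Your proposal misses the central mechanism the paper actually uses, and the greedy/inductive scheme you sketch would not close. The paper does not prove this theorem by induction on $\razm(\cP)$; that measure governs the recursion in the overall algorithm (Theorem~\ref{the:algorithm}), not the proof that block-minimality forces a solution. Instead, the paper introduces $\ov\beta$-\emph{strategies} (collections $\cR=\{\rel_{C,v,\al\beta}\}$ satisfying conditions (S1)--(S7), ensuring simultaneous consistency across all the subproblems $\cP_{W_{v,\al\beta,\ov\beta}}$) and proves the theorem by a descent in the \emph{congruence lattices} of the domains: block-minimality with $\beta_v=\zo_v$ furnishes the trivial $\ov\beta$-strategy, Sections~\ref{sec:type-2} and~\ref{sec:type-not-2} show that any $\ov\beta$-strategy tightens to a $\ov\beta'$-strategy with $\ov\beta'<\ov\beta$, and by condition (S3) a $\ov\gm$-strategy with $\gm_v=\zz_v$ for all $v$ exhibits a solution. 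This is a structured descent through $\prod_v\Con(\zA_v)$, not a variable-by-variable assignment, and it is not a recursion on $\razm$.

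You correctly identify the hard point --- reconciling quotient solutions across the different $\ov\mu^Y$ --- but the ``fix a value and propagate'' step you place at (4) has no teeth: $(2,3)$-minimality and block-minimality give extendability of a \emph{single} partial tuple through a \emph{single} subproblem, and nothing in your sketch prevents cascading inconsistencies when you try to accumulate choices. The strategy invariants (S1)--(S7), and in particular the ``strongly chained'' and ``polynomially closed'' conditions, are precisely what replaces that hand-waving; without them the gluing argument you budget ``most of the proof's length'' for cannot be carried out. Your intuition that $\Razm(\cP)\cap\Centr(\cP)=\eps$ blocks an affine obstruction is partly right, but the actual role is sharper: the tightening would stall once some $\beta_v$ descends to $\mu^*_v$, because the strategy conditions then only speak to $\cP_W\fac{\ov\mu^*}$; the hypothesis forces $\mu^*_v=\zz_v$ so the descent reaches the bottom. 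Finally, you miss the key case split that drives the tightening: whether some $\al\prec\beta_v$ nontrivial on $B_v$ has type~$\two$. The type-$\two$ case (Section~\ref{sec:type-2}) relies on the Congruence Lemma~\ref{lem:affine-link} and rectangularity of affine quotients, while the non-type-$\two$ case (Section~\ref{sec:type-not-2}) works with as-components and the $2$-Decomposition Theorem~\ref{the:quasi-2-decomp}. Neither of these, nor the strategy framework they depend on, appears in your plan, so what you propose is a genuinely incomplete route.
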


\emph{Nontrivial centralizers.}
Suppose that $\Razm(\cP)\cap\Centr(\cP)\ne\eps$. In this case we consider the 
problem $\cP\fac{\ov\mu^*}$. For this problem either 
$\razm(\cP\fac{\ov\mu^*})<\razm(\cP)$, or 
$\Razm(\cP\fac{\ov\mu^*})\cap\Centr(\cP\fac{\ov\mu^*})=\eps$; in either
case it can be solved by the previous case or by recursion to a strictly smaller 
problem.
We find a solution $\vf$ of $\cP\fac{\ov\mu^*}$ satisfying the following 
conditions. For every $v\in V$ such that $\zA_v$ is not semilattice free 
there is $a\in\zA_v$ such that 
$\{a,\vf(v)\}$ is a semilattice edge if $\mu^*_v=\zz_v$, or, if $\mu^*_v=\mu_v$,
there is $b\in\vf(v)$ such that $\{a,b\}$ is a semilattice edge. Then we apply the
transformation of $\cP$ suggested by Maroti in \cite{Maroti10:tree}. By $\cP\cdot\vf$
we denote the instance $(V,\cC_\vf)$ given by the rule: for every 
$\C=\ang{\bs,\rel}\in\cC$ the set $\cC_\vf$ contains a constraint 
$\ang{\bs,\rel\cdot\vf}$. To construct $\rel\cdot\vf$ choose a tuple $\bb\in\rel$
such that $\bb[v]^{\mu^*_v}=\vf(v)$ for all $v\in\bs$; this is possible 
because $\vf$ is a solution of $\cP\fac{\ov\mu^*}$. Then set 
$\rel\cdot\vf=\{\ba\cdot\bb\mid \ba\in\rel\}$. By the results of 
\cite{Maroti10:tree} and Lemma~\ref{lem:centralizer-multiplication-p1} the instance 
$\cP\cdot\vf$ has a solution if and only if $\cP$ does and 
$\razm(\cP\cdot\vf)<\razm(\cP)$.

\begin{theorem}\label{the:central-p1}
If $\cP\fac{\ov\mu^*}$ is 1-minimal, 
then $\cP$ can be reduced in polynomial time to a strictly smaller instance.
\end{theorem}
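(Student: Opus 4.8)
The plan is to produce the promised strictly smaller instance explicitly as $\cP\cdot\vf$ for a carefully chosen solution $\vf$ of $\cP\fac{\ov\mu^*}$, and to verify that, once $\cP\fac{\ov\mu^*}$ is 1-minimal, every ingredient of the construction sketched just before the statement is available in polynomial time. By the remark preceding the statement, $\cP\fac{\ov\mu^*}$ is itself solvable in polynomial time: either $\razm(\cP\fac{\ov\mu^*})<\razm(\cP)$ and we recurse to this strictly smaller instance, or $\Razm(\cP\fac{\ov\mu^*})\cap\Centr(\cP\fac{\ov\mu^*})=\eps$, in which case I would make $\cP\fac{\ov\mu^*}$ $(2,3)$-minimal and then block-minimal via Lemma~\ref{lem:to-block-minimality-p1} (whose recursive calls go to strictly smaller instances) and invoke Theorem~\ref{the:non-central-p1}.

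The substantive step is to upgrade an arbitrary solution of $\cP\fac{\ov\mu^*}$ to one that is \emph{semilattice-maximal} at every variable $v$ with $\zA_v$ not semilattice free, namely such that there is $a\in\zA_v$ with $\{a,\vf(v)\}$ a semilattice edge when $\mu^*_v=\zz_v$, and $a,b\in\zA_v$ with $b\in\vf(v)$ and $\{a,b\}$ a semilattice edge when $\mu^*_v=\mu_v$; this is precisely where the hypothesis that $\cP\fac{\ov\mu^*}$ is 1-minimal is used. Fixing the binary term $xy$ of Lemma~\ref{lem:multiplication-p1}, which as a term operation is a polymorphism of every constraint relation of $\cP\fac{\ov\mu^*}$, the map $(\psi,\sigma)\mapsto\psi\cdot\sigma$ taken coordinatewise sends pairs of solutions to solutions. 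Starting from any solution $\psi$, I would repeatedly replace $\psi$ by $\psi\cdot\sigma$, choosing $\sigma$ so as to move some coordinate that is not yet maximal; by Lemma~\ref{lem:multiplication-p1} each such multiplication either fixes a coordinate or advances it along a semilattice edge, and 1-minimality of $\cP\fac{\ov\mu^*}$ is what guarantees that the values needed to perform the advance actually occur at the relevant variables, so that a suitable $\sigma$ exists at each stage. A monovariant on the multiset of non-maximal coordinates then shows the process stabilizes after polynomially many steps at the desired $\vf$, the regimes $\mu^*_v=\zz_v$ and $\mu^*_v=\mu_v$ being treated in parallel, passing to $\mu_v$-blocks in the latter.

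Finally I would form $\cP\cdot\vf$ as in the discussion preceding the statement: for each $\ang{\bs,\rel}\in\cC$ pick $\bb\in\rel$ with $\bb[v]^{\mu^*_v}=\vf(v)$ for all $v\in\bs$ (possible since $\vf$ solves $\cP\fac{\ov\mu^*}$) and replace $\rel$ by $\rel\cdot\vf=\{\ba\cdot\bb\mid\ba\in\rel\}$. Since every $v\in\Razm(\cP)\cap\Centr(\cP)$ has $\zeta(\zz_v,\mu_v)=\zo_v$, Lemma~\ref{lem:centralizer-multiplication-p1} forces multiplication by $\bb[v]$ to collapse $\zA_v$ modulo $\mu_v$, and semilattice-maximality of $\vf$ shrinks the other large domains; by the results of \cite{Maroti10:tree} together with Lemma~\ref{lem:centralizer-multiplication-p1} one gets $\razm(\cP\cdot\vf)<\razm(\cP)$ and that $\cP\cdot\vf$ is solvable if and only if $\cP$ is. Hence $\cP\cdot\vf$ is the required strictly smaller instance, and the whole reduction runs in polynomial time.

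The main obstacle I anticipate is the middle step — showing that a \emph{single} solution $\vf$ can be made semilattice-maximal at all the relevant variables simultaneously and efficiently. The delicacy is that multiplying by $\sigma$ to improve one coordinate may disturb the others, so one must exhibit the right monovariant together with the guarantee — furnished by 1-minimality of $\cP\fac{\ov\mu^*}$, since full minimality is unavailable here — that solutions witnessing the needed coordinate values persist along the iteration; verifying the bookkeeping for the $\mu^*_v=\mu_v$ variables, and that Lemma~\ref{lem:centralizer-multiplication-p1} genuinely forces $\razm$ to decrease at those coordinates, is the other point requiring care.
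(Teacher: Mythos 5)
Your proposal follows the informal sketch from Part~\ref{part:outline}, but the paper's actual proof (Section~\ref{sec:theorem-47}, Lemmas~\ref{lem:consistent-mapping} and~\ref{lem:good-retraction}) takes a genuinely different route that sidesteps precisely the obstacle you flag as the main difficulty. Rather than manufacturing a \emph{single} solution $\vf$ of $\cP\fac{\ov\mu^*}$ that is simultaneously ``good'' at every $v\in V^*$, the paper constructs, for each $v\in V^*$, a \emph{separate} solution $\vf_v$ with $\vf_v(v)=a^{\mu^*_v}$ where $a$ is chosen in advance so that there is $b\le a$ with $b\not\eqc{\mu^*_v}a$ (such $a$ exists because $\zA_v\fac{\mu^*_v}$ is not semilattice free, and $\vf_v$ exists by 1-minimality). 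Each retraction $p^{\vf_v}$ strictly shrinks $\zA_v$ (it identifies $b$ and $a$), and the decisive observation is that $p^{\vf_w}_u$ for any $u,w$ \emph{never enlarges} a domain: it is either the identity or strictly shrinking. Hence the composition $\prod_{v\in V^*}p^{\vf_v}$ is consistent and shrinks every $\zA_v$, $v\in V^*$, at once, with no bookkeeping or monovariant needed. Consistency is exactly Lemma~\ref{lem:consistent-mapping}, which rests on Corollary~\ref{cor:centralizer-multiplication} making $a\cdot b_v$ independent of the choice of $b_v\in\vf(v)$ when $\mu^*_v=\mu_v$. What the paper's approach buys you is freedom from ever having to aggregate maximality across variables into one solution.

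Your iteration, by contrast, has a concrete gap. When the process of replacing $\psi$ by $\psi\cdot\sigma$ terminates, 1-minimality and Lemma~\ref{lem:multiplication-p1} give you only that $\psi(v)\cdot a=\psi(v)$ for all $a\in\zA_v\fac{\mu^*_v}$, i.e.\ $\psi(v)$ has no outgoing thin semilattice edges. But what you actually need for the retraction $q_v(x)=x\cdot b_v$, $b_v\in\psi(v)$, to be non-identity is that there exist $x$ with $x<x\cdot b_v$ --- that is, that $\psi(v)$ \emph{receives} a thin semilattice edge. These two conditions do not coincide: $\psi(v)$ can be an isolated vertex of the s-graph of $\zA_v\fac{\mu^*_v}$ (s-maximal yet not at the top of any edge) even when $\zA_v\fac{\mu^*_v}$ is not semilattice free. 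In that situation $q_v$ fixes every $\mu^*_v$-block pointwise, and, since for $v\in\Razm(\cP)\cap\Centr(\cP)$ the type of $(\zz_v,\mu_v)$ is \two\ by Proposition~\ref{pro:full-centralizer}, there are no semilattice edges inside a $\mu_v$-block either, so $q_v$ is the identity on $\zA_v$ and you fail to shrink that domain. Moreover, once $\psi(v)$ is stuck at such a value, multiplying by any further $\sigma$ (on the right) leaves it fixed; so ``choosing $\sigma$ so as to move some coordinate that is not yet maximal'' is not always possible, contrary to what your sketch requires. The paper's targeted choice $\vf_v(v)=a^{\mu^*_v}$ with $b\le a$ across $\mu^*_v$-blocks, together with the composition trick, is exactly what closes this gap.
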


\begin{algorithm}
\caption{Procedure {\sf SolveCSP}}
\label{alg:csp-p1}
\begin{algorithmic}[1] 
\REQUIRE A CSP instance $\cP=(V,\cC)$ from $\CSP(\zA)$
\ENSURE A solution of $\cP$ if one exists, `NO' otherwise
\IF{all the domains are semilattice free}
\STATE Solve $\cP$ using the few subpowers algorithm and 
RETURN the answer
\ENDIF
\STATE Transform $\cP$ to a subdirectly irreducible, block-minimal and 
(2,3)-minimal instance
\STATE $\mu^*_v=\mu_v$ for $v\in\Razm(\cP)\cap\Centr(\cP)$ and 
$\mu^*_v=\zz_v$ otherwise
\STATE $\cP^*=\cP\fac{\ov\mu^*}$
\STATE  \textit{/*  Check the 1-minimality of $\cP^*$}
\FOR{every $v\in V$ and $a\in\zA_v\fac{\mu^*_v}$} 
\STATE $\cP'=\cP^*_{(v,a)}$ \ \ \ \textit{/* Add a constraint $\ang{(v),\{a\}}$ 
fixing the value of $v$ to $a$}
\STATE Transform $\cP'$ to a subdirectly irreducible, (2,3)-minimal instance
$\cP''$
\STATE If $\razm(\cP'')<\razm(\cP)$ call {\sf SolveCSP} on $\cP''$ and flag $a$ if $\cP''$ has no solution
\STATE Establish block-minimality of $\cP''$; if the problem changes, return to Step~10
\STATE If the resulting instance is empty, flag the element $a$
\ENDFOR
\STATE If there are flagged values, tighten the instance by removing 
the flagged elements and start over
\STATE Use Theorem~\ref{the:central-p1} to reduce $\cP$ to an
instance $\cP'$ with $\razm(\cP')<\razm(\cP)$
\STATE Call {\sf SolveCSP} on $\cP'$ and RETURN the answer
\end{algorithmic}
\end{algorithm}

\paragraph{Comments on the algorithm.}
Using Lemma~\ref{lem:to-block-minimality-p1} and 
Theorems~\ref{the:non-central-p1},\ref{the:central-p1} it is not difficult to see 
that the algorithm runs in polynomial time. Indeed, every time it makes 
a recursive call it calls on a problem whose non-semilattice free domains
have strictly smaller size, and therefore the depth of recursion is
bounded by $|\zA|$ if we are dealing with $\CSP(\zA)$.

In order to prove Theorem~\ref{the:non-central-p1} we introduce 
$\ov\beta$-strategies that are somewhat similar to (2,3)-strategies
in the sense that they are also collections of relations defined through 
some sort of minimality condition and are consistent. We show how such 
constructions can be used to prove Theorem~\ref{the:non-central-p1}.

Let $\cP=(V,\cC)$ be a subdirectly irreducible, (2,3)-minimal and block-minimal
instance. Let $\zA_v$ denote the domain of $v\in V$. Also, let 
$\beta_v\in\Con(\zA_v)$ and $B_v$ a $\beta_v$-block. Let 
$\cR$ be a collection of relations $\rel_{C,v,\al\beta}$ for every 
$C=\ang{\bs,\rel}\in\cC$, $(v,\al,\beta)\in\cW(\ov\beta)$ and such that 
$S(C,v,\al\beta)=\bs\cap W_{v,\al\beta,\ov\beta}$ is its set of coordinate 
positions. Similar to (2,3)-minimality a tuple $\ba\in\prod_{w\in X}\zA_x$ for
some $X\sse V$, is called \emph{$\cR$-compatible} if for any $C\in\cC$ and 
$(v,\al,\beta)\in\cW(\ov\beta)$ it holds $\pr_T\ba\in\pr_T\rel_{C,v,\al\beta}$,
where $T=X\cap S(C,v,\al\beta)$. Collection $\cR$ is said to be a 
\emph{$\ov\beta$-strategy} with respect to $(B_v)_{v\in V}$ if the following 
conditions hold for every $C=\ang{\bs,\rel}\in\cC$ and 
$(v,\al,\beta)\in\cW(\ov\beta)$ (let 
$W=W_{v,\al\beta,\ov\beta}$)\footnote{These are
the most important conditions that a $\ov\beta$-strategy has to satisfy. A 
complete and precise list of conditions can be found in 
Section~\ref{sec:strategies} or \cite{Bulatov17:cspd}.}:
\begin{itemize}
\item[(S1)]
the relations $\rel^{X,\cR}$, where $\rel^{X,\cR}$ consists 
of $\cR$-compatible tuples from $\rel^X$ 
for $X\sse V$, $|X|\le 2$, form a nonempty $(2,3)$-strategy for $\cP$; 
\item[(S2)]
for every $(w,\gm,\dl)\in\cW(\ov\beta)$ (let $U=W_{w,\gm\dl}$)
and every $\ba\in\pr_{\bs\cap W\cap U}\rel_{C,v,\al\beta}$ it holds:
if $(w,\gm,\dl)\not\in\cW'$ then $\ba$ extends
to an $\cR$-compatible solution of $\cP_U$; otherwise
if $(v,\al,\beta)\not\in\cW'$ then $\ba$ extends to an $\cR$-compatible 
solution of $\cP_U\fac{\ov\mu^{Y_1}}$ with $Y_1=\Razm(\cP)-(W\cap U)$; 
and if $(v,\al,\beta)\in\cW'$ then $\ba$ extends to an $\cR$-compatible 
solution of $\cP_U\fac{\ov\mu^{Y_2}}$, where $Y_2=\Razm(\cP)-\bs$;
\item[(S3)]
$\rel\cap\prod_{w\in\bs} B_w\ne\eps$ and for any $I\sse\bs$ any 
$\cR$-compatible tuple $\ba\in\pr_I\rel$ extends to an $\cR$-compatible tuple 
$\bb\in\rel$.
\end{itemize}

Let $\cP$ be a block-minimal instance, $\beta_v=\zo_v$ and $B_v=\zA_v$
for $v\in V$. Then as it is not hard to see the collection of relations 
$\cR=\{\rel_{C,v,\al\beta}\mid (v,\al,\beta)\in\cW(\ov\beta), C\in\cC\}$ given by
$\rel_{C,v,\al\beta}=\pr_{\bs\cap W_{v,\al\beta,\ov\beta}}\rel$ for 
$C=\ang{\bs,\rel}\in\cC$ is a $\ov\beta$-strategy with respect to $\ov B$.
%
Also, by (S3) a $\ov\gm$-strategy with $\gm_v=\zz_v$ gives a solution of 
$\cP$. 
%
Our goal is therefore to show that a $\ov\beta$-strategy for any
$\ov\beta$ can be `reduced', that is, transformed to a $\ov\beta'$-strategy 
for some $\ov\beta'<\ov\beta$. Note that this reduction of strategies is where 
the condition $\Razm(\cP)\cap\Centr(\cP)=\eps$ is used. Indeed, suppose that 
$\beta_v=\mu^*_v$. Then by conditions (S1)--(S3) we only have information 
about solutions to problems of the form $\cP_W\fac{\ov\mu^*}$ or something 
very close to that. Therefore this barrier cannot be penetrated.

\part{Technicalities}\label{part:technicalities}

We start this part with preliminaries, where apart from 
additional definitions and notation, we remind some of the results of 
\cite{Bulatov04:graph,Bulatov16:connectivity,Bulatov16:restricted} related to
colored graphs of algebras and relational structures and also some of their 
properties. In Sections~\ref{sec:maximality}--\ref{sec:rectangularity} we 
advance these results a little further. Then in 
Section~\ref{sec:polynomials-maximality} we introduce in a more detailed 
way the method of separating factors in congruence lattices using polynomial 
operations of the algebra. This method 
constitutes the basis for our algorithm. Some preliminary versions of this approach 
can be found in \cite{Bulatov02:maltsev-3-element,Bulatov17:semilattice}. 
In Section~\ref{sec:centralizer} we introduce again and study the
quasi-centralizer operator on congruence lattices that is similar to the well studied 
centralizer operator, although the precise relationship between the two is not 
quite clear. In particular, it allows to split certain CSPs into smaller ones. 
In Sections~\ref{sec:congruence} and~\ref{sec:chaining} we prove two important 
technical results

In Section~\ref{sec:maximal-solutions} we give a description of the algorithm, 
and prove its running time and partially soundness. In very broad strokes the 
algorithm works as follows. If none of the domains of $\cP$ contains a semilattice
edge in the sense of colored graphs of algebras, then $\cP$ can be solved 
by the few subpowers algorithm \cite{Berman10:varieties,Idziak10:few}, as
shown in \cite{Bulatov16:restricted}. Otherwise in most cases the problem can 
be solved by establishing \emph{block-minimality} similar to that in 
\cite{Bulatov17:semilattice}. 
The problematic case when block-minimality does not provide a solution,
or rather when it cannot be established is roughly speaking when the domains of 
the instance have nontrivial centers in the sense of the commutator theory.
In this case we show in Section~\ref{sec:theorem-47} that a solution of a 
problem $\cP'$ obtained from $\cP$ by replacing some of its domains with quotient 
algebras modulo their centers allows one to 
reduce the number of semilattice edges in those domains, and we can recurse 
to an instance with smaller domains.

The key ingredient of our result is presented 
in Section~\ref{sec:strategies}. There for block-minimal instances we introduce 
strategies that are in certain aspects similar to strategies 
used to solve problems of bounded width, but allow us to approach general 
CSPs. Then in Section~\ref{sec:affine-consistency} we show, 
Theorem~\ref{the:non-central}, 
that if for a CSP instance $\cP$ satisfying the block-minimality conditions such a 
strategy exists, one can improve (tighten) the strategy to obtain a solution 
of the quotient problem $\cP'$ needed to reduce semilattice edges. This 
theorem is the most difficult
and technically involved part of the proof. Tightening of a strategy works 
by (effectively) reducing domains of the CSP to a class of a maximal congruence, 
and then repeating the process as long as possible. The main cases of tightening 
considered are: when the interval formed by the maximal congruence used and the 
full congruence is Abelian, and when it is non-Abelian, Sections~\ref{sec:type-2} 
and~\ref{sec:type-not-2}, respectively. In the two cases we use quite different 
transformations of the strategy. In the Abelian case the argument 
is based on the rectangularity of relations understood in a general sense, 
while in the non-Abelian case the transformation is similar to that used for 
bounded width CSPs in \cite{Bulatov16:restricted}.

\section{Preliminaries}\label{sec:preliminaries}

We expand upon many of the definitions and notation given in 
Part~\ref{part:outline}. For the sake of convenience we also repeat some of the 
definitions given in Part~\ref{part:outline}.

\subsection{Universal algebra and CSP: notation and agreements}

We assume familiarity with the basics of universal algebra and the algebraic 
approach to the CSP. For reference on universal algebra please use 
\cite{Burris81:universal,Mckenzie87:algebras}; for the algebraic approach
see the recent survey \cite{Barto17:polymorphisms} and earlier papers  
\cite{Barto12:absorbing,Barto15:constraint,Bulatov05:classifying,%
Bulatov08:recent,Bulatov08:dualities,Bulatov16:connectivity}. 

By $[n]$\label{not:n} we denote the set $\{1\zd n\}$. For sets $\vc An$ tuples 
from $A_1\tms A_n$ are denoted in boldface, say, $\ba$; the $i$th component of 
$\ba$ is referred to as $\ba[i]$\label{not:bai}. An $n$-ary relation 
$\rel$\index{relation} over sets $\vc An$ is any subset of $A_1\tms A_n$. For 
$I=\{\vc ik\}\sse[n]$ by $\pr_I\ba,\pr_I\rel$ we denote the \emph{projections} 
\index{projections} $\pr_I\ba=(\ba[i_1]\zd\ba[i_k])$, 
$\pr_I\rel=\{\pr_I\ba\mid\ba\in\rel\}$\label{not:pr} of tuple
$\ba$ and relation $\rel$. If $\pr_i\rel=A_i$ for each $i\in[n]$, relation $\rel$ is 
said to be a \emph{subdirect product}\index{subdirect product} of 
$A_1\tms A_n$. It will be convenient to use $\ov A$\label{not:ov-A} for 
$A_1\tms A_n$, or for $\prod_{v\in V}A_v$ if the sets $V$ and $A_v$ are 
clear from the context. For $I\sse[n]$ or $I\sse V$ we will use 
$\ov A_I$\label{not:ov-A-I}, for $\prod_{i\in I}A_i$, or if $I$ is clear from the 
context just $\ov A$.

Algebras will be denoted by $\zA,\zB$ etc.; we often do not distinguish between 
subuniverses and subalgebras. For $B\sse\zA$ the subalgebra generated by
$B$ is denoted $\Sg B$\label{not:Sg}. For $C\sse\zA^2$ the congruence 
generated by $C$ is denoted $\Cg{C}$\label{not:Cg}. The equality relation and 
the full congruence of algebra $\zA$ are denoted $\zz_\zA$ and 
$\zo_\zA$\label{not:zz-zo}, respectively. Often when we need to 
use one of these trivial congruences of an algebra indexed in some way, say, 
$\zA_i$, we write $\zz_i,\zo_i$ for $\zz_{\zA_i}, \zo_{\zA_i}$.
The set of all polynomials (unary polynomials) of $\zA$ 
is denoted by $\Pol(\zA)$ and $\Polo(\zA)$\label{not:Pol}, respectively. We 
frequently use operations on subalgebras of direct products of algebras, say, 
$\rel\sse\zA_1\tms \zA_n$. If $f$ is such an operation (say, $k$-ary) then we 
denote its component-wise action 
also by $f$, e.g.\ $f(\vc ak)$ for $\vc ak\in\zA_i$. In the same way we denote the
action of $f$ on projections of $\rel$, e.g.\ $f(\vc\ba k)$ for $I\sse[n]$ and 
$\vc\ba k\in\pr_I\rel$. What we mean will always be clear from the context.
We use similar agreements for collections of congruences. If $\al_i\in\Con(\zA_i)$
then $\ov\al$\label{not:ov-alpha} denotes the congruence $\al_1\tms\al_n$ of 
$\rel$. If $I\sse[n]$ we use $\ov\al_I$\label{not:ov-alpha-I} to denote 
$\prod_{i\in I}\al_i$. If it does not lead to a confusion we write 
$\ov\al$ for $\ov\al_I$. Sometimes $\al_i$ are specified for $i$ from a certain 
set $I\sse[n]$, then by $\ov\al$ we mean the congruence $\prod_{i\in[n]}\al'_i$
where $\al'_i=\al_i$ if $i\in I$ and $\al'_i$ is the equality relation otherwise.
For example, if $\al\in\Con(\zA_1)$ then $\rel\fac\al$ means the factor of $\rel$
modulo $\al\tm\zz_2\tm\dots\tm\zz_n$. For $\al,\beta\in\Con(\zA)$ we write
$\al\prec\beta$ if $\al<\beta$ and $\al\le\gm\le\beta$ in $\Con(\zA)$ implies 
$\gm=\al$ or $\gm=\beta$.
In this paper all algebras are finite, idempotent and omit type \one.

The (\emph{nonuniform}) \emph{Constraint Satisfaction 
Problem}\index{Constraint Satisfaction Problem}\index{nonuniform Constraint 
Satisfaction Problem} (\emph{CSP}) associated with a relational 
structure $\bB$ is the problem $\CSP(\bB)$, in which, given a structure 
$\bA$ of the same signature as $\bB$, the goal is to decide whether or not 
there is a homomorphism from $\bA$ to $\bB$. For a class of similar algebras 
$\cA=\{\zA_i\mid i\in I\}$ for some set $I$ 
an \emph{instance}\index{instance} of  $\CSP(\cA)$ 
is a triple $(V,\dl,\cC)$, where $V$ is a set of variables; $\dl:V\to\cA$ is a 
\emph{type function}\index{type function}
that associates every variable with a \emph{domain}\index{domain} in $\cA$. 
Finally, $\cC$ is a set of \emph{constraints}\index{constraints}, 
i.e.\ pairs $\ang{\bs,\rel}$\label{not:ang}, where $\bs=(\vc vk)$ is a tuple of 
variables from $V$, the \emph{constraint scope}\index{constraint scope}, 
and $\rel\in\Inv(\cA)$, a subset of $\zA_{\dl(v_1)}\tm\dots\tm\zA_{\dl(v_k)}$, 
the \emph{constraint relation}\index{constraint relation}.
The goal is to find a \emph{solution}, that is a mapping 
$\vf:V\to\bigcup\cA$ such that $\vf(v)\in\zA_{\dl(v)}$ and for every constraint 
$\ang{\bs,\rel}$, $\vf(\bs)\in\rel$. It is easy to see that if $\cA$ is a class 
containing just one algebra $\zA$, then $\CSP(\cA)$ can be viewed as the union 
of $\CSP(\bA)$ for all relational structures $\bA$ invariant under the operations of 
$\zA$. To simplify the notation we always write $\zA_v$ rather than 
$\zA_{\dl(v)}$, because the mapping $\dl$ is always clear 
from the context. This also allows us to simplify the notation for instances 
to $\cP=(V,\cC)$\label{not:cP}. To allow for transformations of CSP described 
below we assume that $\cA$ is closed under taking subalgebras and quotient 
algebras.

The set of solutions of a CSP instance $\cP=(V,\cC)$ will be denoted by 
$\cS_\cP$\label{not:cS}, or just $\cS$ if $\cP$ is clear from the context. For 
$W\sse V$ by $\cP_W$\label{not:cP-W} we denote the 
\emph{restriction}\index{restriction of CSP} of $\cP$ onto $W$, that is, the 
instance $(W,\cC_W)$, where for each $C=\ang{\bs,\rel}\in\cC$, the set $\cC_W$
includes the constraint $C_W=\ang{\bs\cap W,\pr_{\bs\cap W}\rel}$.
The set of solutions of $\cP_W$ will be denoted by $\cS_W$\label{not:cS-W}. 
For $v\in V$ and a subalgebra $\zB$ of $\zA_v$ by 
$\cP_{(v,\zB)}$\label{not:P-v-B} we denote the instance
$\cP$ with an extra constraint $\ang{\{v\},\zB}$; note that this is essentially 
equivalent to reducing the domain of $v$, and this is how we usually consider 
this construction. For $C=\ang{\bs,\rel}\in\cC$ let $\rel'$ be a subalgebra of 
$\rel$ and $C'=\ang{\bs,\rel'}$. The instance obtained from $\cP$ replacing 
$C$ with $C'$ is denoted by $\cP_{C\to C'}$\label{not:C-to-C}. The 
transformation of $\cP$ by reducing the domain of a variable
$v\in V$ or reducing a constraint $C\in\cC$, that is, transforming $\cP$ into 
$\cP_{(v,\zB)}$ or $\cP_{C\to C'}$ in such a way that the new instance has a 
solution if and only if $\cP$ does, will be called 
\emph{tightening}\index{tightening of a CSP} of $\cP$. Let $\al_v$ be a
congruence of $\zA_v$ for each $v\in V$. By $\cP\fac{\ov\al}$\label{not:P-fac} 
we denote the instance $(V,\cC^{\ov\al})$ constructed as follows: the domain 
of $v\in V$ is $\zA_v\fac{\al_v}$; for every constraint $C=\ang{\bs,\rel}\in\cC$,
the set $\cC^{\ov\al}$ includes the constraint $\ang{\bs,\rel\fac{\ov\al_\bs}}$.
Note that if $\cA$ is closed under taking subalgebras and quotient algebras, then 
applying a transformation of one of these kinds to an instance from 
$\CSP(\cA)$ results again in an instance from $\CSP(\cA)$.

Instance $\cP$ is said to be \emph{minimal}\index{minimal instance} (or 
\emph{globally minimal}\index{globally minimal instance})  if for every 
$C=\ang{\bs,\rel}\in\cC$ and every $\ba\in\rel$ there is a solution 
$\vf\in\cS$ such that $\vf(\bs)=\ba$. Instance $\cP$ is said to be
\emph{1-minimal}\index{1-minimal instance} if for every $v\in V$ and every 
constraint $C=\ang{\bs,\rel}\in\cC$ such that $v\in\bs$, $\pr_v\rel=\cS_v$.
Instance $\cP$ is said to be 
\emph{(2,3)-consistent}\index{(2,3)-consistent instance} if it has a 
\emph{(2,3)-strategy}\index{(2,3)-strategy}, that is, a collection of relations 
$\rel^X$\label{not:R-X}, $X\sse V$, $|X|=2$ satisfying the following conditions:\\
-- for every $X\sse V$ with $|X|\le2$ and any constraint $C=\ang{\bs,\rel}$,
$\pr_{\bs\cap X}\rel^X\sse\pr_{\bs\cap X}\rel$;\\ 
-- for every $X=\{u,v\}\sse V$, any $w\in V-X$ and any 
$(a,b)\in\rel^X$, there is $c\in\zA_w$ such that $(a,c)\in\rel^{\{u,w\}}$
and $(b,c)\in\rel^{\{v,w\}}$.\\[1mm]
We will always assume that a (2,3)-consistent instance has a constraint 
$C^X=\ang{X,\cS_X}$ for every $X\sse V$, $|X|=2$. Then clearly
$\rel^X\sse\cS_X$. Let the collection of relations $\rel^X$ be denoted 
by $\cR$\label{not:cR}. 
A tuple $\ba$ whose entries are indexed with elements 
of $W\sse V$ such that $\pr_X\ba\in\rel^X$ for any $X\sse W$, $|X|=2$,
will be called \emph{$\cR$-compatible}\index{$\cR$-compatible tuple}. If a 
(2,3)-consistent instance $\cP$ with a (2,3)-strategy $\cR$ satisfies the 
additional condition\\[1mm]
-- for every constraint $C=\ang{\bs,\rel}$ of $\cP$ every tuple $\ba\in\rel$
is $\cR$-compatible,\\[1mm]
it is called \emph{(2,3)-minimal}\index{(2,3)-minimal instance}. Any instance 
can be transformed to a 1-minimal, (2,3)-consistent, or (2,3)-minimal instance 
in polynomial time using the standard constraint propagation algorithms 
(see, e.g.\ \cite{Dechter03:processing} or \cite{Bulatov08:dualities}). 
These algorithms tighten the instance.

\subsection{Minimal sets and polynomials}

We will use the following basic facts from the tame congruence theory
\cite{Hobby88:structure}, often without further notice. 

Let $\zA$ be a finite algebra and $\al,\beta\in\Con(\zA)$ with $\al\prec\beta$. 
An \emph{$(\al,\beta)$-minimal set}\index{minimal set} is a set minimal
with respect to inclusion among the sets of the form $f(\zA)$, where 
$f\in\Polo(\zA)$ is such that $f(\beta)\not\sse\al$. Sets $B,C$ are said
to be \emph{polynomially isomorphic}\index{polynomially isomorphic sets}
in $\zA$ if there are $f,g\in\Polo(\zA)$ such that $f(B)=C$, $g(C)=B$, and
$f\circ g, g\circ f$ are identity mappings on $C$ and $B$, respectively.

\begin{lemma}[Theorem 2.8, \cite{Hobby88:structure}]%
\label{lem:minimal-sets}
Let $\al,\beta\in\Con(\zA)$, $\al\prec\beta$. Then the following hold.\\[1mm]
(1) Any $(\al,\beta)$-minimal sets $U,V$ are polynomially isomorphic.\\[1mm]
(2) For any $(\al,\beta)$-minimal set $U$ and any $f\in\Polo(\zA)$, if
$f(\beta\red U)\not\sse\al$ then $f(U)$ is an $(\al,\beta)$-minimal set, $U$ 
and $f(U)$ are polynomially isomorphic,  and $f$ witnesses this fact.\\[1mm]
(3) For any $(\al,\beta)$-minimal set $U$ there is $f\in\Polo(\zA)$ such that
$f(\zA)=U$, $f(\beta)\not\sse\al$, and $f$ is idempotent, in particular, 
$f$ is the identity mapping on $U$.\\[1mm]
(4) For any $(a,b)\in\beta-\al$ and an $(\al,\beta)$-minimal set $U$ there is
$f\in\Polo(\zA)$ such that $f(\zA)=U$ and $(f(a),f(b))\in\beta\red U-\al\red U$.
Moreover, $f$ can be chosen to satisfy the conditions of item~(3).\\[1mm]
(5) For any $(\al,\beta)$-minimal set $U$, $\beta$ is the transitive closure of
$$
\al\cup\{(f(a),f(b))\mid (a,b)\in\beta\red U, f\in\Polo(\zA)\}.
$$
In fact, as $\al\prec\beta$ this claim can be strengthen to the following. 
For any $(a,b)\in\beta-\al$, $\beta$ is the transitive closure of 
$$
\al\cup\{(f(a),f(b))\mid f\in\Polo(\zA)\}.
$$
(6) For any $f\in\Polo(\zA)$ such that $f(\beta)\not\sse\al$ there is an
$(\al,\beta)$-minimal set $U$ such that $f$ witnesses that $U$ and $f(U)$
are polynomially isomorphic.
\end{lemma}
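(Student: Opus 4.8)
This is a reproduction of a cornerstone of tame congruence theory, so the honest answer is that one would not reprove it from scratch: the plan is to cite \cite{Hobby88:structure} and explain how the listed items are assembled from the standard machinery. Still, here is how I would organize a self-contained argument. The key object is the set $E_{\al,\beta}(\zA)$ of \emph{$(\al,\beta)$-minimal sets}: images $f(\zA)$ with $f\in\Polo(\zA)$ and $f(\beta)\not\sse\al$, minimal under inclusion. First I would record the easy closure facts: if $f(\beta)\not\sse\al$ and $g\in\Polo(\zA)$ with $g(f(\beta))\not\sse\al$, then $g\circ f$ again witnesses non-collapse; and since $\zA$ is finite, minimal sets exist and every witnessing $f$ can be post-composed down to land on one. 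Item (6) is then essentially a restatement: given $f$ with $f(\beta)\not\sse\al$, shrink $f(\zA)$ to a minimal set $U\sse f(\zA)$ by composing with a suitable polynomial, and unwind the definitions to see $f$ (restricted appropriately) witnesses the polynomial isomorphism between a minimal set and its image.

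The technical heart is item (1), the \emph{uniqueness up to polynomial isomorphism} of minimal sets, and item (3), the existence of an \emph{idempotent} polynomial with a given minimal set as image. For (3): take any $f$ with $f(\zA)=U$ and $f(\beta)\not\sse\al$; iterate $f$. Since $A$ is finite, some power $f^n$ is idempotent as a function, i.e.\ $f^n\circ f^n=f^n$; one checks $f^n(\zA)$ is still a minimal set contained in $U$, hence equals $U$ by minimality, and $f^n$ is the identity on $U=f^n(\zA)$. For (1): given minimal sets $U,V$, use item (5)-type connectivity — $\beta$ is generated over $\al$ by polynomial images of $\beta\red U$ — to produce $g\in\Polo(\zA)$ with $g(\beta\red U)\not\sse\al$ and $g(U)\sse V$ (this is where one needs $\al\prec\beta$: the prime-quotient hypothesis forces the "pseudo-meet" / covering arguments to go through, and it is what makes the relation symmetric); then $g(U)$ is a minimal set inside $V$, so $g(U)=V$, and a symmetric polynomial $h$ with $h(V)=U$ together with the idempotency from (3) shows $g,h$ are mutually inverse bijections. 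Item (2) is the "relative" version of the same argument with $U$ fixed as the base point; item (6) then follows as noted. I expect the covering-argument in (1) — extracting a polynomial that carries $U$ onto another minimal set without collapsing $\beta$ modulo $\al$ — to be the main obstacle, since it is precisely the point where the tame-congruence-theory notion of minimal set interacts essentially with the primality of the quotient.

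Items (4) and (5) are then refinements obtained by tracking a single pair. For (4): start from the connectivity statement (5) applied to the specific minimal set $U$; among the polynomials whose images are $(\al,\beta)$-minimal, one of them must move the prescribed pair $(a,b)\in\beta-\al$ off $\al$, because otherwise the transitive closure in (5) could not reach $(a,b)$; compose this polynomial with an idempotent one from (3) to arrange $f(\zA)=U$ while keeping $(f(a),f(b))\in\beta\red U-\al\red U$. For (5) itself: let $\beta'$ denote the transitive closure of $\al\cup\{(f(a),f(b))\mid (a,b)\in\beta\red U,\ f\in\Polo(\zA)\}$. Clearly $\al\le\beta'\le\beta$, and $\beta'$ is a congruence (it is $\Polo$-closed by construction and transitively closed); since $\beta'>\al$ (as $\beta\red U\not\sse\al$) and $\al\prec\beta$, we get $\beta'=\beta$. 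The strengthening to a single pair $(a,b)$ uses (4) to first realize $(a,b)$ inside a minimal set and then repeats this congruence-generation argument, again leaning on $\al\prec\beta$ to collapse the interval. Throughout, the only inputs beyond elementary finite-algebra bookkeeping are the existence of minimal sets and the primality hypothesis; all of this is standard and I would ultimately refer the reader to \cite{Hobby88:structure} for the full details.
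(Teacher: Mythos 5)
The paper does not prove this lemma: it is stated as a quotation of Theorem~2.8 of Hobby and McKenzie \cite{Hobby88:structure}, so there is no in-paper argument to compare against, and your decision to cite and defer is exactly the choice the paper itself makes. Your reconstruction sketch is consistent in spirit with the reference.

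Two remarks on the sketch, since you offer one. In item~(3), the claim that iterating $f$ to an idempotent power $f^n$ still yields a minimal set is not something ``one checks'' from finiteness alone: it is not automatic that $f^n(\beta)\not\sse\al$, because knowing $f(\beta)\not\sse\al$ does not directly give $f(\beta\red U)\not\sse\al$ --- which is precisely the hypothesis of your item~(2), so invoking it here is circular. The fact that $f$ restricted to $U$ must be a permutation is the real content of this step in \cite{Hobby88:structure}, and it rests on the connectivity lemmas they prove before Theorem~2.8, not on iteration. Relatedly, your plan derives~(1) from ``item~(5)-type connectivity'' and then proves~(5) last from~(1)--(4), which is circular as written; the usual arrangement is to prove the connectivity statement as a separate prior lemma and deduce both~(1) and~(5) from it. Since you explicitly route the reader to the reference for full details, neither issue matters for a restated citation, but both would be genuine gaps in a self-contained argument.
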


For an $(\al,\beta)$-minimal set $U$ and a $\beta$-block $B$ such that 
$\beta\red{U\cap B}\ne\al\red{U\cap B}$, the set $U\cap B$ is said
to be an \emph{$(\al,\beta)$-trace}\index{trace}. A 2-element set 
$\{a,b\}\sse U\cap B$ such that $(a,b)\in\beta-\al$, is called an 
\emph{$(\al,\beta)$-subtrace}\index{subtrace}. The union $Q$ of the 
traces from $U$ is called the \emph{body}\index{body} of $U$, and 
$U-Q$ is called the \emph{tail}\index{tail} of $U$. Depending on the 
structure of its minimal sets the interval $(\al,\beta)$ can be of one of the 
five types, \one--\five. Since we assume the tractability conditions of the
Dichotomy Conjecture, type~\one\ does not occur in algebras we deal with.

\begin{lemma}[Section~4 of \cite{Hobby88:structure}]\label{lem:traces}
Let $\al,\beta\in\Con(\zA)$ and $\al\prec\beta$. Then the following hold.\\[1mm]
(1) If $\typ(\al,\beta)=\two$ then every $(\al,\beta)$-trace is polynomially 
equivalent to a 1-dimensional vector space.\\[1mm]
(2) If $\typ(\al,\beta)\in\{\three,\four,\five\}$ then every $(\al,\beta)$-minimal 
set $U$ contains exactly one trace $T$, and if $\typ(\al,\beta)\in\{\three,\four\}$,
$T$ contains only 2 elements. Also, $T\fac\al$ is polynomially equivalent
to a Boolean algebra, 2-element lattice, or 2-element semilattice, respectively.
\end{lemma}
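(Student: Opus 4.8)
The plan is to derive the statement from the foundations of tame congruence theory as laid out in Chapter~4 of \cite{Hobby88:structure}; what follows is really an organized recollection of how the type of a prime quotient is defined, together with the classification of minimal algebras. With $\al\prec\beta$ as in the statement, I would first fix an $(\al,\beta)$-minimal set $U$ and, by Lemma~\ref{lem:minimal-sets}(3), arrange that $U=e(\zA)$ for an idempotent $e\in\Polo(\zA)$ with $e(\beta)\not\sse\al$. I would then pass to the algebra $\zA\red U$ induced by $\zA$ on $U$ and, for an $(\al,\beta)$-trace $N\sse U$, to the quotient $\zA\red N\fac{\al\red N}$. The central input to invoke is the classification of \emph{minimal algebras}: a finite algebra all of whose nonconstant unary polynomials are permutations is polynomially equivalent to exactly one of a $G$-set, a vector space over a finite field, the two-element Boolean algebra, the two-element lattice, or the two-element semilattice. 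Using Lemma~\ref{lem:minimal-sets}(1),(2),(5) one checks that all the choices above lead to polynomially equivalent algebras, that $\zA\red N\fac{\al\red N}$ is minimal in this sense, and hence that $\typ(\al,\beta)\in\{\one\zd\five\}$ is well defined as the slot of the classification into which it falls --- the five slots being exactly the five possibilities just listed.

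Granting this, several parts of the lemma are immediate. If $\typ(\al,\beta)\in\{\three,\four,\five\}$ then $\zA\red N\fac{\al\red N}$ is polynomially equivalent to a two-element algebra, so $|N\fac{\al\red N}|=2$, and the classification furthermore identifies it polynomially with the two-element Boolean algebra, lattice, or semilattice respectively --- which is the last assertion of~(2). For~(1): if $\typ(\al,\beta)=\two$ then $\zA\red N\fac{\al\red N}$ is polynomially equivalent to a vector space over some $\GF(p)$, and the one point still to check is that this space is one-dimensional. Here I would exploit the minimality of $U$: if the space had dimension at least $2$, then composing a polynomial that exhibits its linear structure with a linear projection onto a one-dimensional subspace would yield an idempotent $g\in\Polo(\zA)$ with $g(\beta)\not\sse\al$ but with $g(\zA)$ a proper subset of $U$, contradicting that $U$ is $(\al,\beta)$-minimal. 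Hence $|N\fac{\al\red N}|=p$ and $\zA\red N\fac{\al\red N}$ is a one-dimensional space, which proves~(1).

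It remains to establish the uniqueness of the trace in~(2). Suppose $\typ(\al,\beta)\in\{\three,\four,\five\}$ and that $U$ contained two distinct traces $N_1\ne N_2$. I would pick an $(\al,\beta)$-subtrace $\{a_i,b_i\}\sse N_i$ for $i=1,2$ and, via Lemma~\ref{lem:minimal-sets}(4),(5), produce a unary polynomial of $\zA$ that carries one of these subtraces onto the other. Since on the two-element Boolean algebra, lattice, or semilattice the only nonconstant unary polynomial is the identity, tracking the action of this polynomial on the two traces forces $N_1=N_2$. For types \three\ and \four\ one argues in addition that the minimal set is tight (empty tail) with body equal to this single trace --- again because the identity is the only admissible idempotent unary polynomial on the two-element Boolean algebra or lattice --- which is why $T$ has exactly $2$ elements. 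These last steps occupy the latter part of Section~4 of \cite{Hobby88:structure}.

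The genuine obstacle, were one to give a from-scratch argument instead of citing \cite{Hobby88:structure}, is the classification of minimal algebras itself, which is a substantial structural theorem; everything else above is a comparatively routine extraction from that classification together with Lemma~\ref{lem:minimal-sets}. Within that extraction, the only mildly delicate point is the bookkeeping in part~(1) needed to promote the linear projection on $N$ modulo $\al$ to an honest unary polynomial of $\zA$ that respects $\al$ --- routine, but not entirely automatic.
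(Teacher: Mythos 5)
The paper does not prove this lemma — it states it as a direct citation from Section~4 of \cite{Hobby88:structure}, so there is no in-paper argument to compare against. Your sketch is a reasonable reconstruction of the tame-congruence-theory development the citation points to, and the overall architecture (reduce to the minimal algebra $\zA\red N\fac{\al\red N}$ on a trace, invoke P\'alfy's classification of minimal algebras into G-set / vector space / Boolean algebra / lattice / semilattice, then read off the structure in each type slot) is what that source does.

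Two imprecisions are worth flagging, neither of which damages the conclusion. First, for~(1) the one-dimensionality is most cleanly a property of the minimal algebra $N\fac{\al\red N}$ itself: a vector space of dimension at least $2$ has a nonconstant, noninjective unary polynomial (a coordinate projection), hence fails to be minimal; your phrasing attributes it to the $(\al,\beta)$-minimality of $U$ directly, which requires the bookkeeping you yourself acknowledge to promote that projection to a polynomial of $\zA$. Second, your assertion that in types \three\ and \four\ the minimal set is tight (empty tail) is correct for type~\three\ but false in general for type~\four: a type-\four\ minimal set may have a nonempty tail (this is exactly why Lemma~\ref{lem:pseudo-meet}, the pseudo-meet lemma, is stated for types \three, \four, and \five\ and treats tail elements explicitly). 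What the lemma claims, and what is true, is only that the unique \emph{trace} has two elements in types \three\ and \four; that follows from $\al$ restricting trivially to the trace in those cases, not from the minimal set coinciding with its body.
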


Intervals $(\al,\beta),(\gm,\dl)$, $\al,\beta,\gm,\dl\in\Con(\zA)$ and 
$\al\prec\beta,\gm\prec\dl$ are said to be 
\emph{perspective}\index{perspective intervals} if $\beta=\al\join\dl,
\gm=\al\meet\dl$, or $\dl=\beta\join\gm,\al=\beta\meet\gm$.

\begin{lemma}[Lemma~6.2, \cite{Hobby88:structure}]%
\label{lem:perspective-intervals}
Let $\al,\beta,\gm,\dl\in\Con(\zA)$ be such that $\al\prec\beta,\gm\prec\dl$
and intervals $(\al,\beta),(\gm,\dl)$ are perspective. Then
$\typ(\al,\beta)=\typ(\gm,\dl)$ and a set $U$ is $(\al,\beta)$-minimal
if and only if it is $(\gm,\dl)$-minimal.
\end{lemma}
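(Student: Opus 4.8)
The plan is to reduce the whole statement to a single fact about unary polynomials. First, the two alternatives in the definition of ``perspective'' are interchanged when the roles of the two intervals are swapped, and the conclusion of the lemma is symmetric in $(\al,\beta)$ and $(\gm,\dl)$; so I may assume $\dl=\beta\join\gm$ and $\al=\beta\meet\gm$. Then $\al\sse\beta\sse\dl$ and $\al\sse\gm\sse\dl$, and moreover $\beta\setminus\al\sse\dl\setminus\gm$: a pair lying in $\beta\setminus\al$ cannot lie in $\gm$, for otherwise it would lie in $\beta\meet\gm=\al$. The fact I want is that \emph{for every $f\in\Polo(\zA)$ one has $f(\beta)\sse\al$ if and only if $f(\dl)\sse\gm$.}

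To prove it, suppose first that $f(\beta)\sse\al$, and take $(a,b)\in\dl$. Since $\dl=\beta\join\gm$ is the transitive closure of $\beta\cup\gm$, there is a chain $a=c_0,c_1\zd c_n=b$ with $(c_i,c_{i+1})\in\beta\cup\gm$ for every $i$. If $(c_i,c_{i+1})\in\beta$ then $(f(c_i),f(c_{i+1}))\in f(\beta)\sse\al\sse\gm$; if $(c_i,c_{i+1})\in\gm$ then $(f(c_i),f(c_{i+1}))\in\gm$, since every congruence is preserved by every unary polynomial. In either case $(f(c_i),f(c_{i+1}))\in\gm$, so $(f(a),f(b))\in\gm$ by transitivity, and thus $f(\dl)\sse\gm$. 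Conversely, suppose $f(\dl)\sse\gm$, and take $(a,b)\in\beta$. Then $(a,b)\in\dl$, so $(f(a),f(b))\in f(\dl)\sse\gm$, and also $(f(a),f(b))\in\beta$ because $f$ preserves $\beta$; hence $(f(a),f(b))\in\beta\meet\gm=\al$, i.e.\ $f(\beta)\sse\al$.

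The statement about minimal sets is now immediate: an $(\al,\beta)$-minimal set is by definition an inclusion-minimal member of $\{f(\zA)\mid f\in\Polo(\zA),\ f(\beta)\not\sse\al\}$, a $(\gm,\dl)$-minimal set is an inclusion-minimal member of $\{f(\zA)\mid f\in\Polo(\zA),\ f(\dl)\not\sse\gm\}$, and the fact just proved shows that these two families of sets coincide, hence so do their minimal members. For the equality of types I would fix a common minimal set $U$ together with an idempotent $e\in\Polo(\zA)$ with $e(\zA)=U$ and $e\red U=\id$ (Lemma~\ref{lem:minimal-sets}(3)), and pass to the induced algebra $\zA\red U$. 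Inside $\Con(\zA\red U)$ the restricted congruences still satisfy $\beta\red U\meet\gm\red U=(\beta\meet\gm)\red U=\al\red U$, since the meet of congruences is their intersection, while applying $e$ to a $(\beta\cup\gm)$-chain joining two elements of $U$ shows $\dl\red U=\beta\red U\join\gm\red U$; so the two quotients remain perspective after restriction to $U$. Since $\typ(\al,\beta)$ is read off the induced algebra on an $(\al,\beta)$-trace contained in $U$, and $\typ(\gm,\dl)$ off the one on a $(\gm,\dl)$-trace contained in $U$, and since $\beta\setminus\al\sse\dl\setminus\gm$ makes every $(\al,\beta)$-subtrace inside $U$ a $(\gm,\dl)$-subtrace, a case analysis over the five possible types of the resulting minimal algebra, using Lemmas~\ref{lem:minimal-sets} and~\ref{lem:traces}, identifies $\typ(\al,\beta)$ with $\typ(\gm,\dl)$.

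The polynomial fact and its consequence for minimal sets are short and are the heart of the argument. The step I expect to be the real obstacle is the last one: the traces of $(\al,\beta)$ and of $(\gm,\dl)$ that sit inside the common minimal set $U$ need not coincide as sets, so one has to check, type by type --- a one-dimensional vector space in type~\two, and a two-element Boolean algebra, lattice, or semilattice in types~\three, \four, and~\five --- that the operations induced on a subtrace, and therefore the type, are forced to be the same for both intervals. That verification leans on the structure theory of minimal sets rather than on the one-line lattice computation above.
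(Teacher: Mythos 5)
The paper does not prove this lemma; it imports it verbatim as Lemma~6.2 of \cite{Hobby88:structure}, so there is no in-paper argument to compare you against, and your proposal must stand on its own. Your reduction by symmetry to the case $\dl=\beta\join\gm$, $\al=\beta\meet\gm$ is legitimate, and the polynomial identity --- $f(\beta)\sse\al$ iff $f(\dl)\sse\gm$ for every $f\in\Polo(\zA)$, proved in one direction by pushing $f$ along a $(\beta\cup\gm)$-chain and in the other by intersecting $f(\dl)\sse\gm$ with the $f$-invariant $\beta$ --- is correct. The deduction that the two families of minimal sets coincide is immediate from it. That half of the lemma you have genuinely proved, and it is indeed the standard first step.

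The type half is where the proof stops being a proof. The passage to $\zA\red U$ is not put to any use: after observing (without verifying the covering relations $\al\red U\prec\beta\red U$ and $\gm\red U\prec\dl\red U$, which you would also need) that the restricted quotients are still perspective in $\Con(\zA\red U)$, you abandon that thread and fall back on "a case analysis over the five possible types," which is precisely the content that remains to be supplied. The obstruction you name is real: if $N$ is an $(\al,\beta)$-trace in $U$, it sits inside a possibly strictly larger $(\gm,\dl)$-trace $N'$, and the minimal algebras $N\fac{\al}$ and $N'\fac{\gm}$, from which the two types are read off, are different structures on different underlying sets; reconciling their types requires the detailed machinery of Chapter~4 of \cite{Hobby88:structure} and is not a routine check. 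You correctly flag this as "the real obstacle," but flagging an obstacle is not overcoming it. As submitted, the minimal-set equality is proved; the type equality is a plan, with the crux left to the reader.
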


We will also use polynomials that behave on a minimal set in a particular way.

\begin{lemma}[Lemmas 4.16, 4.17, \cite{Hobby88:structure}]%
\label{lem:pseudo-meet}
Let $\al,\beta\in\Con(\zA)$, $\al\prec\beta$, and
$\typ(\al,\beta)\in\{\three,\four,\five\}$. Let $U$ be an $(\al,\beta)$-minimal
set and $T$ its only trace. Then there is element $1\in T$ and a binary polynomial 
$g$ of $\zA$ such that\\[1mm]
(1) $(1,a)\not\in\al$ for any $a\in U-\{1\}$;\\[1mm]
(2) for all $a\in U-\{1\}$, the algebra $(\{a,1\},g)$ is a semilattice with 
neutral element~$1$, that is, $g(1,1)=1$ and $g(1,a)=g(a,1)=g(a,a)=a$.\\[1mm]
(3) for any $a\in U-\{1\}$ and any $b\in T-\{1\}$, 
$g(a,b)\eqc\al g(b,a)\eqc\al a$;\\[1mm]
(4) for all $a,b\in U$, $g(a,g(a,b))=g(a,b)$.\\[1mm]
Polynomial $g$ is said to be a 
\emph{pseudo-meet}\index{pseudo-meet operation} on $U$.
\end{lemma}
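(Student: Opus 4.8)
The plan is to follow the route of Lemmas~4.16 and~4.17 of~\cite{Hobby88:structure}, in three stages. First I would extract a semilattice operation living on the trace. Because $\typ(\al,\beta)\in\{\three,\four,\five\}$, Lemma~\ref{lem:traces}(2) says $U$ has a single trace $T$ and that $T\fac\al$ is polynomially equivalent to a $2$-element Boolean algebra, lattice, or semilattice. In each of these the meet operation (in the first two cases) or the semilattice operation (in the third) is a binary idempotent commutative associative operation with an identity element and an absorbing element, so there is a binary polynomial $h$ of $\zA$ that realizes it on $T$ modulo $\al$. I would then fix $1\in T$ in the identity $\al$-class and, using Lemma~\ref{lem:minimal-sets}(3), an idempotent $e\in\Polo(\zA)$ with range $U$; after replacing $h(x,y)$ by $e(h(x,y))$ and restricting to arguments from $U$, I may assume $h$ maps $U\tm U$ into $U$ while it still induces the chosen semilattice on $T\fac\al$ (for $x,y\in T$ the value $h(x,y)$ lies in the body, so $e$ fixes it).

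Next I would promote the ``modulo $\al$'' behaviour of $h$ to honest identities and produce the absorption law~(4). Each coordinate section of $h$ is a unary polynomial of $\zA$; by Lemma~\ref{lem:minimal-sets}(2) it either collapses $\beta\red U$ into $\al$ or is a polynomial isomorphism of $U$ onto an $(\al,\beta)$-minimal set, and in the latter case finitely many compositions stabilize it to an idempotent self-map of $U$. Forming $g$ from $h$ by finitely many nested substitutions of $h$ into its own arguments and passing to such a stable limit, I would verify (again via Lemma~\ref{lem:minimal-sets}(1),(2)) that $g$ still moves $\beta\red U$ outside $\al$ in each section, which forces the fixed-point equalities $g(1,x)=g(x,1)=g(x,x)=x$ for all $x\in U$, $g(1,1)=1$, and $g(a,g(a,b))=g(a,b)$. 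This gives~(2) and~(4), so $(\{a,1\},g)$ is a semilattice with neutral element~$1$ for every $a\in U-\{1\}$.

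For~(1) I would use that all $(\al,\beta)$-minimal sets are polynomially isomorphic (Lemma~\ref{lem:minimal-sets}(1)), so it is enough to choose $U$ so that the $\al$-class of $1$ meets $U$ only in $\{1\}$; the classification of minimal algebras of types~\three,~\four,~\five\ (where after factoring out $\al$ the body is a two-point set whose identity class is a singleton) makes this possible. Then every $b\in T-\{1\}$ lies in the absorbing $\al$-class, and~(3) follows: for $a\in U-\{1\}$ and $b\in T-\{1\}$, the fact that $g$ induces the $2$-element semilattice on $T\fac\al$ together with $a,b$ both lying outside the identity class gives $g(a,b)\eqc\al g(b,a)\eqc\al a$.

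The hard part will be the second stage: converting the trace-level, mod-$\al$ semilattice into an operation that is \emph{honestly} idempotent-with-neutral-element on all of $U$ (tail included, not just on $T$) and that obeys the self-absorption identity~(4). This is precisely where the minimal-set machinery of~\cite{Hobby88:structure} (our Lemma~\ref{lem:minimal-sets}(2)) is indispensable --- one must rule out that iterating $h$ collapses $\beta\red U$ into $\al$, so that each coordinate section remains a polynomial isomorphism of $U$ and its eventual fixed points are pinned down. The secondary difficulty is picking $1$ to satisfy~(1), which leans on the fine structure of minimal algebras of types~\three--\five\ rather than on soft arguments.
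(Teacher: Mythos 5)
The paper does not actually prove this lemma: it is quoted from Hobby--McKenzie (Lemmas~4.16 and~4.17 of \cite{Hobby88:structure}), so there is no in-paper argument to compare your sketch against. That said, your outline is a fair reconstruction of the route taken there: pick up a polynomial inducing the meet on $T\fac\al$ from the induced trace algebra, push its values into $U$ by composing with the idempotent $e$ of range $U$, then use the dichotomy for unary polynomials on $(\al,\beta)$-minimal sets (either collapse $\beta\red U$ into $\al$, or restrict to a permutation of $U$) to convert mod-$\al$ behaviour into honest identities, and invoke the structure theory of minimal algebras of types \three--\five\ to place $1$ in a singleton $\al$-class.

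Two cautions about the places you flag as hard, since they genuinely carry the weight. First, in stage two, iterating $h$ into its second argument gives the absorption law~(4) and forces $g(1,\cdot)$ to be an idempotent non-collapsing self-map of $U$, hence $\mathrm{id}_U$; but $g(\cdot,1)=\mathrm{id}_U$ and $g(a,a)=a$ on \emph{all} of $U$ (tail included) do not drop out of that same iteration, so the Hobby--McKenzie construction of the pseudo-meet has to be followed more closely than ``iterate and stabilize.'' Second, your transfer of condition~(1) between polynomially isomorphic minimal sets is sound (such isomorphisms respect the restrictions of $\al$ and $\beta$ to $U$), but it reduces the problem to the structural fact that \emph{some} $(\al,\beta)$-minimal set has its neutral element in a singleton $\al$-class; that fact is \cite{Hobby88:structure} Lemma~4.15 and is not reproved here, so in a self-contained write-up it would have to be cited or argued.
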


\subsection{Coloured graphs}


In \cite{Bulatov04:graph,Bulatov08:recent} we introduced a local approach to the
structure of finite algebras. As we use this approach throughout the paper, 
we present it here in some details, see also \cite{Bulatov16:connectivity}. 
For the sake of the definitions below we slightly abuse terminology 
and by a module\index{module} mean the full idempotent reduct of a module.

For an algebra $\zA$ graph $\cG(\zA)$\label{not:cG-A} is defined as follows. 
The vertex set is the universe $A$ of $\zA$. A pair $ab$ of vertices is an 
\emph{edge}\index{edge} if and only if there exists a congruence $\th$ of 
$\Sg{a,b}$, other than the full congruence and a term operation $f$ of $\zA$ 
such that either $\Sg{a,b}\fac\th$ is a module and $f$ is an affine 
operation on it, or $f$ is a semilattice operation on
$\{a^\th,b^\th\}$, or $f$ is a majority operation on
$\{a^\th,b^\th\}$. (Note that we use the same operation symbol in this case.)
Usually, $\th$ is chosen to be a maximal congruence of $\Sg{a,b}$.

If there are a congruence $\th$ and a term operation $f$ of $\zA$ such that $f$ 
is a semilattice operation on $\{a^\th,b^\th\}$ then $ab$ is said to have the
\emph{semilattice type}\index{semilattice type}. An edge $ab$ is of 
\emph{majority type}\index{majority type} if there are 
a congruence $\th$ and a term operation $f$ such that $f$ is a majority
operation on $\{a^\th,b^\th\}$ and there is no semilattice 
term operation on $\{a^\th,b^\th\}$. Finally, $ab$ 
has the \emph{affine type}\index{majority type} if there are $\th$ and $f$ 
such that $f$ is an affine operation on $\Sg{a,b}\fac\th$ and 
$\Sg{a,b}\fac\th$ is a module; in particular it implies that 
there is no semilattice or majority operation on $\{a^\th,b^\th\}$.  In all 
cases we say that congruence $\th$ \emph{witnesses}\index{witness of a type} 
the type of edge $ab$. Observe that a pair $ab$ can still be an edge of more 
than one type as witnessed by different congruences, although this has 
consequences in this paper.

Omitting type \one\ can be characterized as follows.

\begin{theorem}[\cite{Bulatov04:graph,Bulatov16:connectivity}]%
\label{the:connectedness}
An idempotent algebra $\zA$ omits type \one\ (that is, the variety generated
by $\zA$ omits type if and only if $\cG(\zB)$ is connected
for every subalgebra $\zB$ of $\zA$.

Moreover, a finite class $\cA$ of similar idempotent algebras closed under 
subalgebras and quotient algebras omit type~\one\ if and only if $\cG(\zA)$ 
is connected for any $\zA\in\cA$.
\end{theorem}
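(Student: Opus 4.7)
The plan is to establish the biconditional in the main statement directly and then deduce the \emph{moreover} part. Throughout I use freely that omitting type~\one\ is inherited by subalgebras and quotient algebras, and that the class $\H\S(\zA)$ is closed under these operations.

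For the forward direction (omitting type~\one\ forces $\cG(\zB)$ to be connected for every $\zB\le\zA$), I would argue by induction on $|\Sg{a,b}|$, where $a,b\in\zB$ is an arbitrary pair to be joined by a path. Set $\zC=\Sg{a,b}$ and pick a maximal congruence $\theta$ of $\zC$; then $\zC\fac\theta$ is simple and still omits type~\one, so by Lemma~\ref{lem:traces} its unique nontrivial prime interval has type in $\{\two,\three,\four,\five\}$. This produces a term operation $f$ of $\zA$ behaving as affine, majority, or semilattice on a suitable trace of $\zC\fac\theta$. If $a^\theta\ne b^\theta$, a suitable idempotent polynomial retract from Lemma~\ref{lem:minimal-sets}(3,4) pushes $a^\theta,b^\theta$ into the trace, and $f$ together with $\theta$ then witnesses $\{a,b\}$ as an edge of $\cG(\zC)\sse\cG(\zB)$. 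Otherwise $(a,b)\in\theta$; I pick a prime interval $\al\prec\beta$ in $\Con(\zC)$ with $(a,b)\in\beta-\al$, an $(\al,\beta)$-minimal set $U$, and an idempotent polynomial $g\in\Polo(\zC)$ satisfying $g(\zC)=U$ and $(g(a),g(b))\in\beta\red{U}-\al\red{U}$ as in Lemma~\ref{lem:minimal-sets}(4). The pair $\{g(a),g(b)\}$ receives an edge from the type of $(\al,\beta)$, and I complete paths from $a$ to $g(a)$ and from $b$ to $g(b)$ by invoking the induction hypothesis on the strictly smaller two-generated subalgebras $\Sg{a,g(a)}$ and $\Sg{b,g(b)}$ (strict reduction holding whenever $U\ne\zC$; the degenerate case $U=\zC$ is handled directly by the trace argument).

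For the converse I would argue by contrapositive. Suppose $\zA$ fails to omit type~\one; then there is $\zB\le\zA$ and $\theta\in\Con(\zB)$ such that some prime interval $\al\prec\beta$ in $\Con(\zB\fac\theta)$ has type~\one. By Lemma~\ref{lem:traces} the trace $T$ of an $(\al,\beta)$-minimal set satisfies that no polynomial of $\zB\fac\theta$ acts on $T\fac\al$ as a semilattice, majority, or affine operation, and hence no term operation of $\zA$ can. Lifting a pair $(a^*,b^*)\in T$ with $(a^*,b^*)\in\beta-\al$ to $a,b\in\zB$ and using the structure of congruences of $\Sg{a,b}\le\zB$, I isolate a partition of $\Sg{a,b}$ inherited from $\al$ into at least two classes such that no two of them are connected by any edge of $\cG(\Sg{a,b})$; this exhibits a disconnected subalgebra graph, as required. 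The \emph{moreover} statement then follows by applying the biconditional pointwise to each algebra in $\cA$: closure of $\cA$ under $\H$ and $\S$ matches exactly the universal quantification over subalgebras on the right-hand side.

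The main obstacle is Case~2 of the forward direction. When a maximal congruence identifies $a$ and $b$, the minimal-set reduction yields the pair $\{g(a),g(b)\}$ carrying an edge, but the back-walk from $a$ to $g(a)$ (and from $b$ to $g(b)$) must be realized by edges of $\cG$, that is, by term operations of $\zA$ witnessing a nontrivial quotient structure on two-generated subalgebras. Ensuring this requires careful exploitation of the idempotence of $\zA$ and the two-generated nature of $\zC$ to absorb the polynomial parameters of $g$ as additional coordinates of the relevant terms, together with the pseudo-meet of Lemma~\ref{lem:pseudo-meet} and the transitive-closure statement of Lemma~\ref{lem:minimal-sets}(5) for the final stitching step.
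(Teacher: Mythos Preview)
The paper does not prove Theorem~\ref{the:connectedness}; it is quoted from \cite{Bulatov04:graph,Bulatov16:connectivity} without argument, so there is no in-paper proof to compare against. That said, your proposal has genuine gaps that would need to be filled before it could stand on its own.

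In the forward direction, Case~1, you write that an idempotent polynomial retract ``pushes $a^\th,b^\th$ into the trace'' and that the resulting operation ``witnesses $\{a,b\}$ as an edge.'' This conflates two different things. The edge definition requires a \emph{term} operation of $\zA$ that is semilattice, majority, or affine on $\{a^\th,b^\th\}$ itself, not on some other pair obtained by a polynomial retraction. A polynomial of $\zC\fac\th$ is not in general a term operation of $\zA$, and even if it were, mapping $a^\th,b^\th$ into the trace replaces them by different elements. What is actually needed here is the nontrivial fact that a simple, two-generated, idempotent algebra omitting type~\one\ admits a term that is semilattice, majority, or Mal'tsev on its two generators; this is the substance of the cited references and does not fall out of Lemmas~\ref{lem:minimal-sets} and~\ref{lem:traces} as you suggest.

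In Case~2 your induction does not terminate as written. You assert that $\Sg{a,g(a)}$ and $\Sg{b,g(b)}$ are strictly smaller than $\zC$ ``whenever $U\ne\zC$,'' but $g(a)\in U$ does not force $\Sg{a,g(a)}\sse U$ or even $\Sg{a,g(a)}\subsetneq\zC$; the element $a$ lies outside $U$ and together with $g(a)$ may regenerate all of $\zC$. The same unjustified step from Case~1 reappears when you claim that ``$\{g(a),g(b)\}$ receives an edge from the type of $(\al,\beta)$.'' For the converse, your paragraph is a plan rather than an argument: the phrase ``I isolate a partition \ldots\ such that no two of them are connected by any edge'' is precisely the content of the claim and is not derived from the stated hypotheses. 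The usual route is to produce a two-generated quotient that is essentially a $G$-set and observe directly that no term operation can be semilattice, majority, or affine there; you gesture at this but do not carry it out.
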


For the sake of the dichotomy conjecture, it suffices to consider 
\emph{reducts}\index{reduct} of an algebra $\zA$ omitting type \one, 
that is, algebras with the same universe but 
reduced set of term operations, as long as reducts also omit type \one. 
In particular, we are interested in reducts of $\zA$, in which semilattice and 
majority edges are subalgebras.

\begin{theorem}[\cite{Bulatov04:graph,Bulatov16:connectivity}]%
\label{the:adding}
Let $\zA$ be an algebra such that $\cG(\zB)$ is connected for all subalgebras of 
$\zB$ of $\zA$, and let $ab$ be an edge of $\cG(\zA)$ of the semilattice or 
majority type witnessed by congruence $\th$, and $\rel_{ab}=a^\th\cup b^\th$.
Let also $F_{ab}$ denote set of term operations of $\zA$ preserving $\rel_{ab}$, 
and $\zA'=(A,F_{ab})$. Then $\cG(\B')$ is connected for all subalgebras $\zB'$ 
of $\zA'$.
\end{theorem}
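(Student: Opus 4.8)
The plan is to reduce the statement, via Theorem~\ref{the:connectedness}, to a claim about the single algebra $\zA'$, and then to prove that claim by producing a Taylor‑type term operation inside $F_{ab}$.

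Since $\zA'=(A,F_{ab})$ is idempotent (it is a reduct of the idempotent algebra $\zA$), Theorem~\ref{the:connectedness} applied to $\zA'$ shows that the conclusion ``$\cG(\zB')$ is connected for every subalgebra $\zB'$ of $\zA'$'' is \emph{equivalent} to the single assertion that $\zA'$ omits type~\one. Note also that $F_{ab}$ is already a clone --- the term operations of $\zA$ preserving $\rel_{ab}$ are closed under composition and contain the projections --- so the term operations of $\zA'$ are exactly the members of $F_{ab}$. Hence it suffices to exhibit one term operation $t$ of $\zA$ that preserves $\rel_{ab}$ and is a Taylor (equivalently, weak near‑unanimity) term: such a $t$ lies in $F_{ab}$, so it is a term of $\zA'$, and the Taylor identities witnessed by $t$ hold in $\zA'$, whence $\zA'$ omits type~\one. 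Since $\zA$ itself omits type~\one\ by Theorem~\ref{the:connectedness} and the hypothesis, it has a weak near‑unanimity term $w$, which is our starting point.

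To see what ``$t\in F_{ab}$'' demands, put $\zC=\Sg{a,b}$ and $\zS=\zC\fac\th$. As $\th$ is a maximal congruence of $\zC$, the quotient $\zS$ is simple, the classes $\bar a=a^\th$ and $\bar b=b^\th$ are distinct elements of $\zS$, and the term $f_0$ witnessing the type of the edge $ab$ restricts to a semilattice operation on $\{\bar a,\bar b\}$ (semilattice case) or to a majority operation there (majority case). A term operation $f$ of $\zA$ belongs to $F_{ab}$ if and only if $\{\bar a,\bar b\}$ is closed under the interpretation of $f$ in $\zS$; in particular $f_0\in F_{ab}$, and the two‑element algebra that $\zA'$ induces on $\{\bar a,\bar b\}$ has type in $\{\three,\four,\five\}$, so this two‑element quotient is, by itself, harmless. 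The only possible obstruction to putting a Taylor term into $F_{ab}$ is therefore how term operations behave on $\zS$ off the set $\{\bar a,\bar b\}$.

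The core of the argument, and its main obstacle, is to turn $w$ into a Taylor term lying in $F_{ab}$. The plan is to follow the coloured‑graph ``edges and connectivity'' machinery of \cite{Bulatov04:graph,Bulatov16:connectivity}: repair $w$ using the retained operation $f_0$ and verify that for every edge $cd$ of $\cG(\zA)$ that happens to lie in a subalgebra of $\zA'$, either $cd$ survives as an edge of $\cG(\zA')$ or else $c$ and $d$ remain joined in the reduct by a short path routed through the $\rel_{ab}$‑structure, so that connectedness is not lost. The difficulty is that the naive repair --- folding each argument of $w$ into $f_0$ --- need not preserve the Taylor identities of $w$ on all of $\zA$, since $f_0$ is a semilattice or majority operation only on the two‑element set $\{\bar a,\bar b\}$ while $\zS$ may be strictly larger; making the repair work requires a separate and more careful treatment of the semilattice and the majority cases, using the structure of minimal sets and traces (Lemmas~\ref{lem:minimal-sets} and~\ref{lem:traces}) and the edge‑type analysis of the cited papers rather than a purely formal manipulation of identities.
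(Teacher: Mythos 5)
The statement you are proving is cited from \cite{Bulatov04:graph,Bulatov16:connectivity} and is not re-proved in the present paper, so there is no in-paper proof to compare against; I can only judge the proposal on its own terms.

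Your reduction is sound: by Theorem~\ref{the:connectedness}, the conclusion is equivalent to $\zA'$ omitting type \one, and since $\zA'=(A,F_{ab})$ is finite and idempotent (the clone $F_{ab}$ contains all projections, and every member is idempotent because every term of $\zA$ is), this is in turn equivalent to $\zA'$ having a weak near-unanimity (Taylor) term. So if you could exhibit one term operation of $\zA$ that both preserves $\rel_{ab}$ and satisfies WNU identities, you would be done.

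The genuine gap is that you never produce such a term, nor show any other way to rule out type \one\ in $\zA'$. Your final paragraph explicitly concedes this: folding the arguments of $w$ into $f_0$ does not preserve the WNU identities on all of $\zA$ because $f_0$ is only a semilattice or majority operation on the two-element set $\{\bar a,\bar b\}$ of $\Sg{a,b}\fac\th$, and you say the repair ``requires a separate and more careful treatment'' without supplying it. The entire content of the theorem lives exactly there. Indeed, the reason this result is nontrivial is that there is no a priori reason $F_{ab}$ should contain a Taylor term at all; one must use the structure of $\cG(\zA)$ in an essential way, and nothing in the proposal does so beyond naming the relevant lemmas. Moreover, the proposal wavers between two logically independent strategies --- (i) exhibit a Taylor term in $F_{ab}$, or (ii) directly verify, edge by edge, that connectivity survives in each subalgebra of $\zA'$ --- without committing to or completing either; once (i) succeeds, (ii) is automatic via Theorem~\ref{the:connectedness}, so mixing them suggests the argument is not actually settled in your mind. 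As written, this is an outline of a plausible attack with the central step left open, not a proof.
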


An algebra $\zA$ such that $a^\th\cup b^\th$ is a subuniverse of $\zA$ 
for every semilattice or majority edge $ab$ of $\zA$ is called 
\emph{sm-smooth}\index{sm-smooth algebra}. 
If $\cA$ is the class of all quotient algebras of subalgebras of an sm-smooth 
algebra $\zA$, it is easy to see that that every $\zB\in\cA$ is sm-smooth. 
Although it is not needed in this paper, for any finite class $\cA$ omitting 
type \one\ there is a class $\cA'$ of sm-smooth algebras which are reducts of
algebras from $\cA$, and  such that $\cA'$ omits type \one, as well. In the rest 
of the paper all algebras are assumed to be sm-smooth.

The next statement uniformizes the operations witnessing the type of edges.

\begin{theorem}[\cite{Bulatov04:graph,Bulatov16:connectivity}]%
\label{the:uniform}
Let $\cA$ be a class of similar idempotent algebra closed under taking 
subalgebras and quotient algebras. There are term operations $f,g,h$ of 
$\cA$ such that for any $\zA\in\cA$ and any $a,b\in\zA$ operation $f$ is a 
semilattice operation on $\{a^\th,b^\th\}$ if $ab$ is a semilattice edge;
$g$ is a majority operation on $\{a^\th,b^\th\}$ if $ab$ is a majority edge;
$h$ is an affine operation on $\Sg{a,b}\fac\th$ if $ab$ is an affine 
edge, where $\th$ witnesses the type of the edge.
Moreover,  $f,g,h$ can be chosen such that
\begin{itemize}
\item[(1)]
$f(x,f(x,y))=f(x,y)$ for all $x,y\in\zA$, $\zA\in\cA$;
\item[(2)]
$g(x,g(x,y,y),g(x,y,y))=g(x,y,y)$ for all $x,y\in\zA$, $\zA\in\cA$;
\item[(3)]
$h(h(x,y,y),y,y)=h(x,y,y)$ for all $x,y\in\zA$, $\zA\in\cA$.
\end{itemize}
There is a term operation $t$ such that for any affine edge $ab$ and a majority, 
edge $cd$ witnessed by congruences $\eta$ and $\th$, respectively, 
$t(a,b)\eqc\eta a$ and $t(c,d)\eqc\th d$.
\end{theorem}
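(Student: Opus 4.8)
The statement of Theorem~\ref{the:uniform} is an amalgamation-of-witnesses result: starting from the fact (implicit in the definition of $\cG(\zA)$ and in Theorems~\ref{the:connectedness},~\ref{the:adding}) that \emph{each individual} edge $ab$ of \emph{each individual} algebra $\zA\in\cA$ has \emph{some} term witnessing its type, we must produce three fixed terms $f,g,h$ (and the auxiliary $t$) in the clone of $\cA$ that simultaneously witness the type of every edge of every algebra in the class. The plan is to build each of $f,g,h$ by a finite iterated substitution argument over a well-chosen finite ``seed'' set of generating algebras.

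\emph{Step 1: reduce to a finite generating family.} Since $\cA$ is a finite class of finite algebras closed under $\H$ and $\S$, the clone of term operations of $\cA$ is determined by finitely many data: list all pairs $(\zA, ab)$ with $\zA\in\cA$ and $ab$ an edge, and for each choose a witnessing congruence $\th=\th_{\zA,ab}$ (maximal, as in the definition) together with a witnessing term $s_{\zA,ab}$ of the appropriate kind (semilattice/majority/affine). There are only finitely many such triples. The target identities we want $f$ to satisfy on a semilattice edge are $f(a^\th,a^\th)=a^\th$, $f(a^\th,b^\th)=f(b^\th,a^\th)=f(b^\th,b^\th)=b^\th$ in $\Sg{a,b}\fac\th$; analogous finite lists of equations describe the majority behaviour of $g$ and the affine behaviour of $h$. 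Each such equation is a condition ``the term, evaluated at certain tuples, returns a certain value'' in a fixed finite algebra, i.e.\ membership of a tuple in a subpower; so it suffices to find \emph{one} term meeting a finite conjunction of such conditions.

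\emph{Step 2: amalgamate the semilattice witnesses into $f$.} Order the semilattice edges $(\zA_1,a_1b_1),\dots,(\zA_m,a_mb_m)$. Take $s_1=s_{\zA_1,a_1b_1}$, a term that is semilattice on $\{a_1^{\th_1},b_1^{\th_1}\}$. Compose it with $s_2$ so that the composite is still semilattice on the first edge and now also on the second; the standard trick is iterated self-substitution of the semilattice term of the second edge in a way that does not disturb the semilattice identities already achieved (a binary term $p(x,y)$ that is semilattice on a two-element quotient is preserved under the substitutions $p(p(x,y),y)$, $p(x,p(x,y))$, etc., up to the relations in Lemma~\ref{lem:pseudo-meet}-style manipulations), exploiting that semilatticeness on a fixed two-element set is a condition stable under composition with any idempotent operation. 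Iterating over all $m$ semilattice edges yields $f$. The same scheme, with the majority identities $g(y,x,x)=g(x,y,x)=g(x,x,y)=x$ on two-element quotients and with affine identities $h(x,y,y)=h(y,y,x)=x$ on module quotients, produces $g$ and $h$; finiteness of the edge list makes each iteration terminate. The extra idempotency normalizations (1)--(3) are then imposed by one more round of substitution: e.g.\ replacing $f(x,y)$ by $f(x,f(x,y))$ makes $f(x,f(x,y))=f(x,y)$ hold identically while leaving the semilattice behaviour on every edge intact, and similarly $g\mapsto g(x,g(x,y,y),g(x,y,y))$ and $h\mapsto h(h(x,y,y),y,y)$ absorb their respective identities; one checks these rewrites commute with the already-established edge conditions.

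\emph{Step 3: the term $t$.} For $t$ we need a single binary term with $t(a,b)\eqc\eta a$ for every affine edge $ab$ (witness $\eta$) and $t(c,d)\eqc\th d$ for every majority edge $cd$ (witness $\th$). On an affine (module) quotient the term $h(x,y,y)$ gives back $x$, while on a majority quotient an expression like $g(x,y,y)$ returns $y$ (since $g(x,y,y)=y$ from the majority identity $g(x,x,y)=\,$…); the idea is to compose the already-constructed $h$ and $g$ so that the module coordinates see the ``return first argument'' behaviour and the majority coordinates see ``return second argument''. One sets $t$ to be a composition such as $t(x,y)=g(\dots,h(\dots))$ chosen so that on affine edges it collapses to $h(x,y,y)\eqc\eta x$ and on majority edges to the value $d$; because affine and majority edges are, by Theorem~\ref{the:uniform}(1)--(3)'s setup, governed by the \emph{disjoint} behaviours of $h$ and $g$, a finite search over the finitely many edges again produces a single working $t$.

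\emph{Main obstacle.} The delicate point is Step 2: showing that the iterated substitutions that fix the identities for a new edge do not destroy the identities already secured for earlier edges. A priori, substituting the $k$-th edge's witness could disturb the quotient behaviour on the first edge. The resolution is that ``being a semilattice (resp.\ majority, affine) operation on a fixed two-element quotient (resp.\ module quotient)'' is a property preserved under composition with idempotent term operations of $\zA$ acting coordinatewise on that quotient — this is exactly where sm-smoothness (so that $a^\th\cup b^\th$ is a subalgebra, hence a genuine ambient structure on which the quotient operations make sense) and the omitting-type-\one\ hypothesis (guaranteeing via Theorems~\ref{the:connectedness}--\ref{the:adding} and Lemma~\ref{lem:traces} that these two-element/module quotients really do carry the claimed polynomial structure) are used. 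Making this stability precise and verifying it uniformly for all three operation types is the technical heart of the argument; everything else is bookkeeping over a finite list.
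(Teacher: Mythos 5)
The theorem you are asked to prove is in fact \emph{cited} from \cite{Bulatov04:graph,Bulatov16:connectivity} and is not proved in the present paper, so there is no internal proof to compare against; nevertheless your reconstruction has a genuine gap at its central step.

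The problem is in Step~2, where you claim that being a semilattice (majority, affine) operation on a fixed two\hbox{-}element quotient is ``a condition stable under composition with any idempotent operation,'' and that one can therefore amalgamate the individual witnesses for the edges one at a time. This is false as stated. Concretely: if $p$ is a semilattice operation on $\{a^\th,b^\th\}$ with absorbing element $b^\th$, and $q$ is another idempotent binary term with $q(a^\th,b^\th)=q(b^\th,a^\th)=a^\th$, then $q(p(x,y),p(y,x))$ is still semilattice on $\{a^\th,b^\th\}$, but the reverse composition $p(q(x,y),q(y,x))$ collapses to $a^\th$ on $(a^\th,b^\th)$ and hence is no longer a semilattice operation there. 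So when you bring the witness $s_2$ of the second edge into the picture, there is in general no guarantee that the composite preserves the semilattice identities on the first edge \emph{and} acquires them on the second; the two requirements pull in opposite directions, and the paper even points out (right after the theorem statement) that semilattice edges admit two incompatible orientations, which is exactly the obstruction. You would need either an inductive improvement argument with an explicit potential function, or a ``product'' argument showing that the tuple $(\mathbf{b},\mathbf{b})$ lies in the subalgebra of $\prod_i(\Sg{a_i,b_i}/\th_i)^2$ generated by $(\mathbf{a},\mathbf{b})$ and $(\mathbf{b},\mathbf{a})$, and establishing that last fact is precisely the nontrivial content of the theorem: it uses the omitting\hbox{-}type\hbox{-}\one\ hypothesis in an essential way (for instance via the connectivity results of Theorem~\ref{the:connectedness}), not mere finiteness and bookkeeping as your Step~2 suggests. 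Absent such a mechanism, your iteration is not shown to terminate with all edges simultaneously satisfied. The same issue infects the construction of $g$, $h$, and especially of $t$ in Step~3, where you ask for \emph{opposite} projection\hbox{-}like behaviour on affine versus majority edges but only gesture at ``a finite search.''

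By contrast, the idempotency normalizations in the latter half of Step~2 are fine: iterating $f(x,y)\mapsto f(x,f(x,y))$ on a finite algebra eventually stabilizes and visibly preserves the edge identities, and the analogous normalizations for $g,h$ go through the same way. Your Step~1 reduction to a finite list of edges is also correct. So the skeleton of your plan is reasonable, but the amalgamation lemma it rests on is not true in the generality you claim, and filling that hole is where the real work from \cite{Bulatov04:graph,Bulatov16:connectivity} lives.
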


Unlike majority and affine operations, for a semilattice edge $ab$ and a 
congruence $\th$ of $\Sg{a,b}$ witnessing that, there can be semilattice 
operations acting differently on $\{a^\th,b^\th\}$, which corresponds to 
the two possible orientations of $ab$. In every such case by fixing operation 
$f$ from Theorem~\ref{the:uniform} we effectively choose one of the two 
orientations. In this paper we do not really care about what 
orientation is preferable. 

In \cite{Bulatov16:connectivity} we introduced a stronger notion of edge. 
A pair $ab$ of elements of algebra $\zA$ is called
a \emph{thin semilattice edge}\index{thin semilattice edge} if $ab$ is a 
semilattice edge, and the congruence 
witnessing that is the equality relation. In other words, $f(a,a)=a$ and 
$f(a,b)=f(b,a)=f(b,b)=b$. We denote the fact that $ab$ is a thin semilattice edge 
by $a\le b$\label{not:le}. Thin semilattice edges allow us to introduce a directed 
graph $\cG_s(\zA)$, whose vertices are the elements of $\zA$, and the arcs are 
the thin semilattice edges. We then can define 
\emph{semilattice-connected}\index{semilattice-connected component} and 
\emph{strongly semilattice-connected}%
\index{strongly semilattice-connected component}
components of $\cG_s(\zA)$. We will also use the natural order on the
set of strongly semilattice-connected components of $\cG_s(\zA)$: for
components $A,B$, we write $A\le B$ if there is a directed path in $\cG_s(\zA)$
connecting a vertex from $A$ with a vertex
from $B$. Elements from the maximal strongly connected components (or simply
\emph{maximal components}\index{maximal components}) of $\cG_s(\zA)$
are called \emph{maximal}\index{maximal element} elements of $\zA$ and 
the set of all such elements is denoted by $\max(\zA)$\label{not:max}. A 
directed path in $\cG_s(\zA)$ is called a 
\emph{semilattice path}\index{semilattice path} or 
\emph{s-path}\index{s-path}. If there is an s-path from $a$ to $b$
we write $a\sqq b$\label{not:sqq}. 

\begin{prop}[\cite{Bulatov04:graph,Bulatov16:connectivity}]%
\label{pro:good-operation}
Let $\cA$ be a finite class of similar idempotent algebras closed under taking 
subalgebras and quotient algebras. There is a binary
term operation $f$ of $\cA$ such that $f$ is a semilattice operation on
$\{a^\th,b^\th\}$ for every semilattice edge $ab$ of any $\zA\in\cA$, 
where congruence $\th$ witnesses the type of $ab$, and, for any $a,b\in\zA$, 
either $a=f(a,b)$ or the pair $(a,f(a,b))$ is a thin semilattice edge of $\zA$. 
Operation $f$ with this property will be denoted by a dot (think multiplication).
\end{prop}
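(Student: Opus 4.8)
The plan is to start from the term operation $f_0$ supplied by Theorem~\ref{the:uniform} and refine it. Recall that $f_0$ is a semilattice operation on $\{a^\th,b^\th\}$ for every semilattice edge $ab$ of every $\zA\in\cA$ (with $\th$ witnessing the type) and satisfies $f_0(x,f_0(x,y))=f_0(x,y)$. What is still missing is the ``thin'' refinement: a term $f$ for which $(a,f(a,b))$, when different from $(a,a)$, is a \emph{thin} semilattice edge. This is the uniform, ``thin'' strengthening of Lemma~\ref{lem:multiplication-p1}, which is the non-uniform version with ``semilattice edge'' in place of ``thin semilattice edge'' (and in the degenerate case where $\cA$ has no semilattice edges the first projection already works).

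\emph{Reduction.} It is enough to construct a term $f$ that is a semilattice operation on $\{a^\th,b^\th\}$ for every semilattice edge $ab$ and that satisfies the two absorption identities $f(x,f(x,y))=f(x,y)$ and $f(f(x,y),x)=f(x,y)$ in every algebra of $\cA$. Granting such an $f$, take arbitrary $a,b\in\zA$ and put $c=f(a,b)$: idempotency gives $f(a,a)=a$ and $f(c,c)=c$, the first identity gives $f(a,c)=f(a,f(a,b))=c$, and the second gives $f(c,a)=f(f(a,b),a)=c$. Thus if $c\neq a$ then $f$ itself is a semilattice operation on the two-element set $\{a,c\}$, i.e.\ $(a,f(a,b))$ is a thin semilattice edge; otherwise $f(a,b)=a$. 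So the proposition follows.

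\emph{Construction.} Call a binary term $g$ \emph{admissible} if it is a semilattice operation on $\{a^\th,b^\th\}$ for every semilattice edge $ab$ of every $\zA\in\cA$; a one-line check on a two-element semilattice shows that admissibility is preserved under compositions of the shape $g\mapsto g(g(y,x),g(x,y))$, and more generally under the compositions used below. Starting from $f_0$, which is admissible and satisfies the first absorption identity, I would iterate a suitably chosen admissibility-preserving composition, with the aim that, since $\cA$ is a finite class of finite algebras, the induced operations on the algebras of $\cA$ stabilize after a number of steps bounded uniformly in $\cA$, so that the resulting term $f$ is admissible, still satisfies $f(x,f(x,y))=f(x,y)$, and in addition satisfies $f(f(x,y),x)=f(x,y)$. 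By the Reduction this proves the proposition, and $f$ is the desired dot.

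\emph{Main obstacle.} The crux is to design the composition so that it genuinely forces the second absorption identity --- equivalently, so that for every $a,b$ with $f_0(a,b)\neq a$ it ``snaps'' $f_0(a,b)$ onto an element joined to $a$ by a thin semilattice edge rather than merely by a semilattice edge. This is where the colored-graph analysis is needed: working inside $\Sg{a,f_0(a,b)}$ one invokes connectivity of $\cG$ on subalgebras (Theorem~\ref{the:connectedness}), sm-smoothness together with Theorem~\ref{the:adding} to keep the relevant semilattice blocks as subalgebras, Lemma~\ref{lem:traces} to pin down the type of the prime interval involved, and the pseudo-meet operation on $(\al,\beta)$-minimal sets of type $\three,\four,\five$ from Lemma~\ref{lem:pseudo-meet} to exhibit the thin semilattice edge out of $a$. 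This is in essence the argument of \cite{Bulatov16:connectivity}; the extra ingredient here is that all the tame-congruence data (minimal sets, types, pseudo-meets, the operations $f,g,h$) can be chosen uniformly across the finite class $\cA$, exactly as in Theorem~\ref{the:uniform}.
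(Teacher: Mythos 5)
The paper does not prove this proposition; it is imported verbatim from \cite{Bulatov04:graph,Bulatov16:connectivity} and used as a black box, so there is no in-paper proof to compare against. Your Reduction step is sound: once $f$ is idempotent, is a semilattice operation on every semilattice edge, and satisfies both $f(x,f(x,y))=f(x,y)$ and $f(f(x,y),x)=f(x,y)$, the four evaluations $f(a,a)=a$, $f(a,c)=c$, $f(c,a)=c$, $f(c,c)=c$ for $c=f(a,b)$ make $\{a,c\}$ a two-element semilattice under $f$, hence $(a,c)$ a thin semilattice edge whenever $a\neq c$. Since Theorem~\ref{the:uniform}(1) already supplies the first identity, you have correctly reduced everything to producing a uniform binary term that in addition satisfies the second.

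That existence, however, is exactly what you do not establish, and it is the entire substance of the proposition. The ``Construction'' paragraph says to iterate ``a suitably chosen admissibility-preserving composition'' until stabilization, but you never say which composition, why the first identity survives the iteration, or why the process converges to the second; the one composition you display, $g\mapsto g(g(y,x),g(x,y))$, is offered only to illustrate that admissibility is preserved and has no visible connection to $g(g(x,y),x)=g(x,y)$, so there is no termination or fixed-point argument in sight. The ``Main obstacle'' paragraph then lists the right tools --- connectivity of $\cG$ on subalgebras, sm-smoothness, traces and pseudo-meets on minimal sets of non-affine type --- and declares that ``this is in essence the argument of \cite{Bulatov16:connectivity}'' without carrying it out, nor explaining how that local tame-congruence analysis inside $\Sg{a,f_0(a,b)}$ is turned into a single global admissibility-preserving composition. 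The two paragraphs never connect, and the one step that actually matters --- uniformly over all of $\cA$ forcing $f(f(a,b),a)=f(a,b)$ without breaking the semilattice action on the edges that already exist --- is asserted, not proved. As written this is a correct reformulation of the claim plus a pointer to where its proof may be found, not a proof.
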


Let operations $g,h$ be as in Theorem~\ref{the:uniform}.
A pair $ab$ from $\zA\in\cA$ is called a 
\emph{thin majority edge}\index{thin majority edge} if (a) it is a majority 
edge, let congruence $\th$ witness this, (b) for any $c\in b^\th$, 
$b\in\Sg{a,c}$, (c) $g(a,b,b)=b$, and (d) there exists a ternary term operation 
$g'$ such that $g'(a,b,b)=g'(b,a,b)=g'(b,b,a)=b$.  Finally, a pair $ab$ is called a 
\emph{thin affine edge}\index{thin affine edge} if (a) it is an affine edge, let 
congruence $\th$ witness this, (b) for any 
$c\in b^\th$, $b\in\Sg{a,c}$, (c) $h(b,a,a)=b$, (d) there exists a ternary 
term operation $h'$ such that $h'(b,a,a)=h'(a,a,b)=b$, and (e) $a$ is maximal in 
$\Sg{a,b}$.  Note that the operations $h,g$ from 
Theorem~\ref{the:uniform} do not have to be majority or affine operations 
on thin edges; thin edges do not have to be even closed under $g,h$. 
Thin edges of all types are oriented. We therefore can define yet another 
directed graph, $\cG'(\zA)$, in which the arcs are the thin edges of all types. 

\begin{lemma}[\cite{Bulatov16:connectivity}]\label{lem:thin}
Let $\zA$ be an algebra.\\[1mm]
(1) Let $ab$ be a semilattice or majority edge in $\zA$, and $\th$ the 
congruence of 
$\Sg{a,b}$ witnessing that. Then there is $b'\in b^\th$ such that $ab'$ is a thin 
semilattice or majority edge, respectively.\\[1mm]
(2)  Let $ab$ be an affine edge, and $\th$ the congruence of 
$\Sg{a,b}$ witnessing that. Then there are $a'\in a^\th$ and $b'\in b^\th$ such 
that $a\sqq a'$ in $a^\th$ and $a'b'$ is a thin affine edge.
\end{lemma}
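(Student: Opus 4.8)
Given an algebra $\zA$: (1) for a semilattice or majority edge $ab$ witnessed by $\th$, there is $b'\in b^\th$ with $ab'$ a thin semilattice or majority edge respectively; (2) for an affine edge $ab$ witnessed by $\th$, there are $a'\in a^\th$, $b'\in b^\th$ with $a\sqq a'$ in $a^\th$ and $a'b'$ a thin affine edge.

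Let me think about how to prove this.

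**Part (1), semilattice case.** Start with a semilattice edge $ab$ witnessed by maximal congruence $\th$ of $\Sg{a,b}$, so $\Sg{a,b}/\th$ is a 2-element set $\{a^\th, b^\th\}$ and there's a term $f$ that's semilattice on it with $f(a,b)=f(b,a)=f(b,b)\eqc\th b$, $f(a,a)\eqc\th a$. I want to push $b$ down within its $\th$-block to get a *thin* semilattice edge, i.e., one witnessed by equality. Use Proposition~\ref{pro:good-operation}: the dot operation satisfies $a\cdot b \in b^\th$ and either $a = a\cdot b$ or $(a, a\cdot b)$ is a thin semilattice edge. Since $f$ (hence $\cdot$) is semilattice on $\{a^\th,b^\th\}$ and $a\not\eqc\th b$, we get $a\cdot b \eqc\th b$, so $a\cdot b \ne a$; hence $(a, a\cdot b)$ is a thin semilattice edge and $b' := a\cdot b \in b^\th$ works.

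**Part (1), majority case.** Now $ab$ is a majority edge witnessed by $\th$, with no semilattice term on $\{a^\th,b^\th\}$, and by Theorem~\ref{the:uniform} there's a uniform $g$ that's majority on $\{a^\th,b^\th\}$ satisfying $g(x,g(x,y,y),g(x,y,y)) = g(x,y,y)$ and also a $g'$ with $g'(a,b,b)=g'(b,a,b)=g'(b,b,a)\eqc\th b$ (these exist as term operations of the class). I need to produce $b' \in b^\th$ satisfying all four defining clauses of a thin majority edge: it's a majority edge (same $\th$ works), the generation condition (b) for all $c\in b^\th$, $b'\in\Sg{a,c}$, clause (c) $g(a,b',b')=b'$, and clause (d) existence of $g'$ with $g'(a,b',b')=g'(b',a,b')=g'(b',b',a)=b'$. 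The natural candidate is to iterate: set $b_0 = b$, $b_{i+1} = g(a, b_i, b_i)$ or some combination, and use the idempotency-type identities to show this stabilizes at an element $b'$ with $g(a,b',b')=b'$. The stabilization argument is roughly: in the finite poset of s-paths, $b\sqq g(a,b,b)$ in the block... actually I'd need to invoke the structure here carefully. Let me not over-specify: the plan is to iterate $g$ (and $g'$) starting from $b$, show by the functional equations of Theorem~\ref{the:uniform} that the sequence stabilizes, verify the limit $b'$ lies in $b^\th$, satisfies (c),(d), and that the generation condition (b) holds because the trace/minimal-set structure forces $b \in \Sg{a,c}$ for $c$ in the trace, and this transfers to $b'$.

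**Part (2), affine case.** Here $ab$ is an affine edge witnessed by $\th$; $\Sg{a,b}/\th$ is a module and $h$ (from Theorem~\ref{the:uniform}) is affine on it, with $h(h(x,y,y),y,y)=h(x,y,y)$. The thin affine edge conditions require: (a) affine edge with congruence $\th$, (b) for all $c\in b^\th$, $b'\in\Sg{a',c}$, (c) $h(b',a',a')=b'$, (d) a ternary term $h'$ with $h'(b',a',a')=h'(a',a',b')=b'$, and (e) $a'$ maximal in $\Sg{a',b'}$. Clause (e) is why we need to move $a$ up: replace $a$ by a maximal element $a'$ of its $\th$-block reachable by an s-path, using Proposition~\ref{pro:good-operation} iteratively (each application $x \mapsto x\cdot y$ for suitable $y$ either fixes $x$ or moves up along a thin semilattice edge; by finiteness we reach a maximal element, staying in $a^\th$ since $\cdot$ respects $\th$). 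Then within the module $\Sg{a',b}/\th \cong \Sg{a,b}/\th$, adjust $b$ to $b'\in b^\th$ so that $h(b',a',a')=b'$: this is an idempotency condition on the unary polynomial $z\mapsto h(z,a',a')$, and by finiteness iterating it lands in its image where it acts as identity; take $b'$ there. The identity $h(h(x,y,y),y,y)=h(x,y,y)$ is exactly what makes $b' := h(b,a',a')$ satisfy $h(b',a',a')=b'$. Then $h'$ is extracted similarly using the uniform term of Theorem~\ref{the:uniform}, and the generation condition (b) comes from the minimal-set/trace structure of the affine edge (Lemma~\ref{lem:traces}, Lemma~\ref{lem:minimal-sets}).

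**Main obstacle.** The delicate point is the *generation conditions* (clause (b) in the thin majority and thin affine definitions: $b'\in\Sg{a',c}$ for every $c$ in the $\th$-block), together with verifying the auxiliary ternary terms $g', h'$ survive the passage to $b'$. Establishing (b) requires knowing that within a $\th$-block of a majority or affine edge, the relevant two-element trace generates (via $a'$) the whole block's structure — this leans on tame congruence theory (Lemmas~\ref{lem:minimal-sets},~\ref{lem:traces}): the trace $T$ is polynomially a vector space or Boolean/lattice/semilattice trace, and one uses polynomial isomorphisms of minimal sets to transport generators. Coordinating the "push $b$ down / push $a$ up / re-adjust $b$" iterations so that *all* clauses hold simultaneously at the fixed point — rather than one clause breaking another — is where the real care is needed; this presumably mirrors arguments from \cite{Bulatov16:connectivity}, which I would cite for the detailed bookkeeping while sketching the fixed-point iteration and the trace-generation step here.
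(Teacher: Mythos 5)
The paper you are proving this in does \emph{not} contain a proof of Lemma~\ref{lem:thin}: it is stated as a citation to \cite{Bulatov16:connectivity}, so there is no in-paper argument to compare your proposal against. Judged on its own terms, your sketch captures the right shape of the argument that one would expect from that reference --- use the multiplication from Proposition~\ref{pro:good-operation} to slide $b$ into a thin position in the semilattice case, use the uniform operations $g,h$ of Theorem~\ref{the:uniform} and their idempotency-style identities $g(x,g(x,y,y),g(x,y,y))=g(x,y,y)$, $h(h(x,y,y),y,y)=h(x,y,y)$ to stabilize, and push $a$ up along $\sqq$ to obtain maximality for the affine case. (A small efficiency: the identity from Theorem~\ref{the:uniform}(2) means the stabilization in the majority case already happens after a \emph{single} application, $b'=g(a,b,b)$; no transfinite iteration is needed for clause~(c).)

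The genuine gaps are the ones you flag yourself, and they are not cosmetic. Clause~(b) of the thin-majority and thin-affine definitions (that $b'\in\Sg{a',c}$ for \emph{every} $c\in b'^\th$) and clause~(d) (the existence of the auxiliary ternary terms $g'$, $h'$ evaluating to $b'$ in all three symmetric positions) are exactly the substantive content of the lemma, and your sketch reduces them to ``this presumably mirrors arguments from \cite{Bulatov16:connectivity}.'' That is a citation, not a proof. In addition: in the semilattice case you should address the orientation issue the paper explicitly raises (the semilattice operation on $\{a^\th,b^\th\}$ may absorb toward $a^\th$ rather than $b^\th$, in which case $a\cdot b\in a^\th$ and your construction does not yield an element of $b^\th$); and in the affine case clause~(e) asks for maximality of $a'$ \emph{in $\Sg{a',b'}$}, not merely in the block $a^\th$, so one has to check that the element produced by iterating $\sqq$ inside $a^\th$ actually satisfies the maximality required inside the possibly-smaller subalgebra $\Sg{a',b'}$ --- these are not automatically the same condition, and reconciling them while keeping clauses~(b) and~(d) intact is precisely the ``coordination'' problem you name at the end.
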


The following simple properties of thin edges will be useful. Note that a 
subdirect product of algebras (a relation) is also an algebra, and so edges 
and thin edges can be defined for relations as well.

\begin{lemma}[\cite{Bulatov16:connectivity}]\label{lem:thin-properties}
(1) Let $\zA$ be an algebra and $ab$ a thin edge. Then $ab$
is a thin edge in any subalgebra of $\zA$ containing $a,b$, and $a^\th b^\th$
is a thin edge in $\zA\fac\th$ for any congruence $\th$.\\[1mm]
(2) Let $\rel$ be a subdirect product of $\vc\zA n$, $I\sse[n]$, and $\ba\bb$ a
thin edge in $\rel$. Then $\pr_I\ba\pr_I\bb$ is a thin edge in $\pr_I\rel$ of 
the same type as $\ba\bb$.
\end{lemma}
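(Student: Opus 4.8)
\medskip
\noindent\textbf{Proof strategy.}
The plan is to derive both items from a single transfer principle. For item~(2), note that the projection $\pr_I\red\rel\colon\rel\to\pr_I\rel$ is a surjective homomorphism, so $\pr_I\rel$ is isomorphic to $\rel\fac\kappa$ with $\kappa=\ker(\pr_I\red\rel)$, and under this identification every term operation of $\pr_I\rel$ is the $\pr_I$-image of the corresponding term operation of $\rel$; the pair $\pr_I\ba\,\pr_I\bb$ then corresponds to $\ba^\kappa\bb^\kappa$, which is nondegenerate precisely when $\pr_I\ba\ne\pr_I\bb$. Hence it suffices to prove two facts. \emph{(S)} If $\zB\le\zA$ contains $a,b$, then every thin edge $ab$ of $\zA$ is a thin edge of $\zB$ of the same type. \emph{(Q)} If $\pi\colon\zA\to\zA\fac\th$ is a surjective homomorphism and $ab$ is a thin edge of $\zA$ with $\pi(a)\ne\pi(b)$, then $\pi(a)\pi(b)$ is a thin edge of $\zA\fac\th$, and moreover of the same type when $\pi$ is a coordinate projection.

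\medskip
\noindent
For \emph{(S)} the point is that $\Sg{a,b}$ --- and $\Sg{a,c}$ for any $c$ --- is computed inside $\zB$ exactly as inside $\zA$, and the term operations of $\zB$ are precisely the restrictions to $\zB$ of those of $\zA$ (so the uniform operations $f,g,h$ of Theorem~\ref{the:uniform} act in $\zB$ as in $\zA$, and any auxiliary term such as the $g',h'$ in the definitions of thin majority/affine edges restricts from $\zA$). Consequently $\Con(\Sg{a,b})$, the digraph $\cG_s(\Sg{a,b})$ (hence the predicate ``$a$ is maximal in $\Sg{a,b}$''), the existence of a term operation with prescribed values on elements of $\Sg{a,b}$, and the nonexistence of a semilattice term operation on a prescribed two-element quotient of $\Sg{a,b}$, all have the same truth value over $\zB$ as over $\zA$. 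Every clause in the definitions of thin semilattice, thin majority and thin affine edge is of one of these forms, so \emph{(S)} follows clause by clause, with the type unchanged.

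\medskip
\noindent
For \emph{(Q)} the equational clauses transfer at once by applying $\pi$ to the defining equations: $f(\pi(a),\pi(a))=\pi(a)$ and $f(\pi(a),\pi(b))=f(\pi(b),\pi(a))=f(\pi(b),\pi(b))=\pi(b)$ in the semilattice case, $g(\pi(a),\pi(b),\pi(b))=\pi(b)$ together with the analogous equations for $g'$ in the majority case, and $h(\pi(b),\pi(a),\pi(a))=\pi(b)$ together with those for $h'$ in the affine case. What remains is to exhibit, inside $\Sg{\pi(a),\pi(b)}=\pi(\Sg{a,b})$, a congruence witnessing the right type along with clause~(b) (block reachability) and, in the affine case, clause~(e) (maximality of $\pi(a)$). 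Writing $\zC=\Sg{a,b}$, $\kappa=\ker(\pi\red\zC)$ and $\eta$ for a congruence of $\zC$ witnessing the type of $ab$ (which we may take maximal in $\zC$), I would split into two cases. If $\kappa\le\eta$, then $\eta\fac\kappa$ is a congruence of $\zC\fac\kappa=\Sg{\pi(a),\pi(b)}$ with $(\pi(a),\pi(b))\notin\eta\fac\kappa$; by the correspondence theorem $(\zC\fac\kappa)\fac{\eta\fac\kappa}\cong\zC\fac\eta$, carrying the two-element subuniverse $\{\pi(a)^{\eta/\kappa},\pi(b)^{\eta/\kappa}\}$ (a subuniverse by sm-smoothness) onto $\{a^\eta,b^\eta\}$, so the majority operation $g$ (resp.\ the module structure and $h$) and the nonexistence of a semilattice term on the blocks transfer through this isomorphism; clause~(b) survives because the $\eta\fac\kappa$-block of $\pi(b)$ pulls back along $\pi$ to $b^\eta$; and in the affine case ``$\pi(a)$ maximal in $\Sg{\pi(a),\pi(b)}$'' follows from ``$a$ maximal in $\zC$'' since $\cG_s$ respects the quotient. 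If $\kappa\not\le\eta$, then maximality of $\eta$ forces $\eta\join\kappa=\zo_\zC$, so $(a,b)\in\eta\join\kappa$ and the naive witness collapses; here I would pass to an $(\al,\eta)$-minimal set $U$ of $\zC$ on which $\pi$ is injective (available because $\pi(a)\ne\pi(b)$, using Lemma~\ref{lem:minimal-sets}), transport the trace of $U$ (unique for types \three--\five) into $\Sg{\pi(a),\pi(b)}$, and use clause~(b) together with $g,g'$ (resp.\ $h,h'$) and Lemmas~\ref{lem:traces} and~\ref{lem:perspective-intervals} to certify that the transported trace is nondegenerate and carries the same type. This transverse case is the main obstacle.

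\medskip
\noindent
Finally, for the ``same type'' assertion of item~(2) it remains to rule out a change of type under a coordinate projection. Here one uses once more that terms of $\pr_I\rel$ are $\pr_I$-images of terms of $\rel$: a term witnessing a given type on a two-element quotient $\{\bd^\psi,\be^\psi\}$ of $\pr_I\rel$, with $\bd=\pr_I\ba\ne\pr_I\bb=\be$, pulls back, via the congruence $\psi^{\ast}=(\pr_I\red\rel)^{-1}(\psi)\ge\kappa$, to a term of the same kind on the two-element quotient $\{\ba^{\psi^{\ast}},\bb^{\psi^{\ast}}\}$ of $\rel$; applying this to a hypothetical semilattice (or majority) term downstairs and comparing with the witness for the type of $\ba\bb$ upstairs forces the two types to agree.
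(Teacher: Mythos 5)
There is nothing to compare against here: the paper imports Lemma~\ref{lem:thin-properties} from \cite{Bulatov16:connectivity} and gives no proof of its own, so your proposal can only be judged on its internal correctness. Your principle \emph{(S)} is fine and essentially complete: every clause in the definitions of thin semilattice, thin majority, and thin affine edges refers only to the algebra $\Sg{a,b}$, its congruences, its $\cG_s$-digraph, and term operations of the ambient class $\cA$ (including the uniform $f,g,h$ of Theorem~\ref{the:uniform}), and none of these change when you pass to a subalgebra $\zB\le\zA$ containing $a,b$. That disposes of the subalgebra half of item~(1).

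The genuine gap is in \emph{(Q)}, and you flag it yourself. When $\kappa=\ker(\pi\red{\Sg{a,b}})\not\le\eta$, the witnessing congruence $\eta$ does not descend, and your "transverse case" is only sketched, not proved. This is precisely the nontrivial part: for a thin semilattice edge $\eta=\zz$ and everything is equational, so there is nothing to do; the substance of the lemma is the thin majority and thin affine cases, where one must produce in $\Sg{\pi(a),\pi(b)}$ a congruence satisfying clause~(a), the reachability clause~(b), and, for affine, the maximality clause~(e), even though $\eta\join\kappa=\zo_{\Sg{a,b}}$. Passing to an $(\al,\eta)$-minimal set is stated as a plan, not carried out. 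Two subsidiary claims also need argument rather than assertion. First, "$\pi(a)$ maximal in $\Sg{\pi(a),\pi(b)}$ follows since $\cG_s$ respects the quotient" is not automatic: a surjective homomorphism pushes $\cG_s$-paths forward, but a quotient can acquire new thin semilattice edges with no preimage upstairs (the defining equations may hold modulo $\kappa$ without holding on the nose), so forward preservation of maximality needs justification, not just the phrase "$\cG_s$ respects the quotient." Second, for the "same type" clause of item~(2), pulling back a hypothetical semilattice (or majority) term from $\Sg{\pr_I\ba,\pr_I\bb}$ only yields such a term on one particular two-element quotient of $\Sg{\ba,\bb}$; since (as the paper notes) a pair can be an edge of more than one type witnessed by different congruences, this by itself does not contradict $\ba\bb$ being thin affine — the argument has to be anchored to the specific congruence $\eta$ witnessing the thin edge, which again brings you back to the unresolved transverse case. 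Also note that item~(1) does not in fact claim type preservation for the quotient map, only item~(2) does, so part of what (Q) tries to establish exceeds what item~(1) requires while the part item~(2) genuinely needs is where the gap sits.
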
 

We will need stronger versions of Lemmas~18 and~20 of 
\cite{Bulatov16:connectivity}. Let $\cA$ be a finite class of similar idempotent 
algebras closed under taking subalgebras and quotient algebras.

\begin{lemma}\label{lem:op-s-on-affine}
(1) Let $ab$ be a thin majority edge of algebra $\zA\in\cA$. There is a term
operation $t_{ab}$\label{not:1-ab} such that $t_{ab}(a,b)=b$ and $t_{ab}
(c,d)\eqc{\th_{cd}} c$ for all affine edges $cd$ of all $\zA'\in\cA$, where the 
type of $cd$ is witnessed by congruence $\th_{cd}$.\\[1mm]
(2) Let $ab$ be a thin affine edge of algebra $\zA\in\cA$. There is a term
operation $h_{ab}$\label{not:h-ab} such that $h_{ab}(a,a,b)=b$ and 
$h_{ab}(d,c,c)\eqc{\th_{cd}} d$ for all affine edges $cd$ of all 
$\zA'\in\cA$, where the type of $cd$ is witnessed by congruence $\th_{cd}$.
Moreover, $h_{ab}(x,c',d')$ is a permutation of $\Sg{c,d}\fac{\th_{cd}}$
for any $c',d'\in\Sg{c,d}$.\\[1mm]
(3) Let $ab$ and $cd$ be thin edges in algebras $\zA,\zA'\in\cA$, respectively.
If they have different types there is a binary term operation $p$ such that 
$p(a,b)=a$, $p(c,d)=d$. If both edges are affine then there is a term operation 
$h'$ such that $h'(a,a,b)=b$ and $h'(d,c,c)=d$.
\end{lemma}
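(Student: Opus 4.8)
The plan is to obtain each of the three term operations by combining the ``uniformized'' operations $f,g,h,t$ guaranteed by Theorem~\ref{the:uniform} with the structure of thin edges recorded in Lemma~\ref{lem:thin} and Lemma~\ref{lem:thin-properties}, and then to iterate the construction enough times to absorb the tails of the relevant minimal sets using the idempotency-type identities (1)--(3) of Theorem~\ref{the:uniform}. Throughout, the key point is that a thin edge $ab$ of a given type, projected or restricted, stays a thin edge of the same type (Lemma~\ref{lem:thin-properties}), so it suffices to build the operations so that they behave correctly on a single representative thin edge of each type, uniformly across $\cA$.

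For part~(1): start from the operation $t$ of Theorem~\ref{the:uniform}, which already satisfies $t(c,d)\eqc{\th_{cd}}c$ on every affine edge $cd$ (reading off the ``affine edge vs.\ majority edge'' clause, with the roles of the two arguments as there) and $t(a,b)\eqc{\th_{ab}}b$ on the thin majority edge $ab$. Since $ab$ is \emph{thin}, $\th_{ab}$ is as small as possible and, using Lemma~\ref{lem:thin}(1) together with the identity $g(x,g(x,y,y),g(x,y,y))=g(x,y,y)$, one can post-compose $t$ with a bounded iterate of the majority-type operation $g$ (or of the multiplication dot of Proposition~\ref{pro:good-operation}) to pull $t(a,b)$ down to the exact value $b$ while leaving the congruence class $t(c,d)^{\th_{cd}}$ on affine edges unchanged (on a module, post-composition by these idempotent polynomials acts as the identity on the relevant coset, because type~\two\ minimal sets have no tail by Lemma~\ref{lem:traces}(1)). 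This yields $t_{ab}$. Part~(2) is analogous but starts from the affine operation $h$ of Theorem~\ref{the:uniform}: the map $(x,y,z)\mapsto h(x,y,z)$ restricted to a module is the Mal'tsev term $x-y+z$, hence $h(d,c,c)=d$ on $\Sg{c,d}\fac{\th_{cd}}$ already, and $h(a,a,b)\eqc{\th_{ab}}b$ on the thin affine edge; one fixes the value to $b$ exactly by composing with the operation from Lemma~\ref{lem:thin}(2) and the identity $h(h(x,y,y),y,y)=h(x,y,y)$, and the ``permutation'' clause follows because on any module $h(\cdot,c',d')$ is $x\mapsto x-c'+d'$, a bijection, and this property is inherited by $\Sg{c,d}\fac{\th_{cd}}$ since that quotient is (polynomially equivalent to) a module.

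For part~(3): when $ab$ and $cd$ have different types, one of them is non-affine, so the relevant quotient $\Sg{a,b}\fac{\th_{ab}}$ or $\Sg{c,d}\fac{\th_{cd}}$ is a $2$-element semilattice, lattice or Boolean algebra (Lemma~\ref{lem:traces}(2)); a finite case analysis on the unordered pair of types lets us pick the appropriate one of $f,g,h,t$ (or a short composition thereof, e.g.\ $f$ against a semilattice edge, $g$ against a majority edge) so that it projects to the first projection on one edge and the second projection on the other, and again a bounded iterate fixes exact values. When both edges are affine we instead combine two copies of $h$: on each module $h'(x,y,z)=x-y+z$ gives $h'(a,a,b)=b$ and $h'(d,c,c)=d$ simultaneously, and uniformity over $\cA$ is exactly what Theorem~\ref{the:uniform} provides.

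The main obstacle I expect is the bookkeeping in part~(3): unlike (1) and (2), where a single operation from Theorem~\ref{the:uniform} is almost right and only needs a cosmetic idempotent correction, here one must simultaneously control the behaviour on two \emph{different} thin edges lying in (possibly) different algebras of $\cA$, so the composition must be chosen so that the ``fixing'' iterate applied to make $p(a,b)=a$ exact does not disturb $p(c,d)=d$, and vice versa. This is where one genuinely uses that thin edges are preserved under restriction and quotients (Lemma~\ref{lem:thin-properties}) together with the no-tail property of type~\two\ minimal sets and the absorption identities of Theorem~\ref{the:uniform}(1)--(3); getting a single uniform bound on the number of iterations, valid for all of $\cA$, is the delicate point, but it follows because $\cA$ is finite.
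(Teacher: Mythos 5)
Your plan for parts (1) and (2) has a genuine gap at the key step: after applying the uniform operation $t$ (resp.\ $h$) you obtain a value $t(a,b)=b'$ with $b'\in b^\th$, and you claim a "bounded iterate" of $g$ or of the dot operation, justified by the absorption identities of Theorem~\ref{the:uniform}(1)--(3), will "pull $t(a,b)$ down to the exact value $b$." Those identities give stability (applying the operation again changes nothing), not descent; they provide no mechanism to move from an arbitrary $b'\in b^\th$ to $b$. What is actually needed, and what the paper uses, is clause (b) in the \emph{definition} of a thin majority (resp.\ thin affine) edge: for any $c\in b^\th$, $b\in\Sg{a,c}$. This hands you a binary term $s$ with $s(a,b')=b$, and one sets $t_{ab}(x,y)=s(x,t(x,y))$; correctness on affine edges then follows by idempotency of $s$, since $s(c,t(c,d))\eqc{\th_{cd}}s(c,c)=c$. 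You never invoke condition (b), and your substitute does not do the job.

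Two further issues. First, the paper's argument is carried out on a concrete auxiliary relation: it enumerates \emph{all} affine edges $c_1d_1,\ldots,c_\ell d_\ell$ of algebras in $\cA$, bundles them into $\bc,\bd$, and generates a subalgebra $\rel$ of $\zA\times\prod_i\zA_i$ by $(a,\bc),(b,\bd)$ (resp.\ $(a,\bd),(a,\bc),(b,\bc)$ for (2)); the single term $t_{ab}$ is then read off as the term witnessing $(b,\bc'')\in\rel$. Your proposal does not articulate how a \emph{single} term works uniformly across all affine edges of all $\zA'\in\cA$; the product trick is what secures this, and it is not subsumed by a general appeal to Lemma~\ref{lem:thin-properties}. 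Second, in (2) you justify the permutation clause by the shape of $h$, namely $h(x,y,z)=x-y+z$ on modules; but the statement is about the \emph{constructed} operation $h_{ab}$, which is a composition of $h$ with further terms and need not have that normal form. The paper's argument is different: since $h_{ab}(d,c,c)\eqc{\th_{cd}}d$ and $\Sg{c,d}/\th_{cd}$ is a module (hence Abelian), $h_{ab}$ must act as $x\mapsto x+\lambda(c'-d')$ there, which is a bijection in $x$; this follows from $h_{ab}(x,c^*,c^*)=x$ on the module, not from the explicit form of $h$. Finally, for part (3) the paper simply cites Lemmas~15, 18, 19, 20 of \cite{Bulatov16:connectivity} rather than giving a case analysis; your sketch there may be repairable by the same product-and-generate device plus clause (b), but as written it inherits the same gap about fixing exact values.
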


\begin{proof}
(1) Let $c_1d_1\zd c_\ell d_\ell$ be a list of all affine edges of algebras in
$\cA$, $c_i,d_i\in\zA_i$ and $\th_{c_id_i}$ the corresponding congruences. Set 
$\bc=(\vc c\ell), \bd=(\vc d\ell)$. Let $\rel$ be the subalgebra of 
$\zA\tm\prod_{i=1}^\ell\zA_i$ generated by $(a,\bc),(b,\bd)$. 
Pair $ab$ is also a majority edge, let it be witnessed by a congruence $\th$.
By Theorem~\ref{the:uniform}
$$
\cl {b'}{\bc'}=t\left(\cl a\bc,\cl b\bd\right)\in\rel,
$$
where $b'\in b^\th$ and $\bc'[i]\eqc{\th_{c_id_i}}\bc[i]$, as $t$ is the 
first projection on $\Sg{c_i,d_i}\fac{\th_{c_id_i}}$ and a second 
 projection on $\Sg{a,b}\fac\th$.
Then as $b\in\Sg{a,b'}$, we get $(b,\bc'')\in\rel$ for some $\bc''$ such
that $\bc''[i]\eqc{\th_{c_id_i}}\bc[i]$. This means there is a binary term 
operation $t_{ab}$ such that 
$$
t_{ab}\left(\cl a\bc,\cl b\bd\right)=\cl b{\bc''}.
$$
The result follows.

(2) We use the notation from item (1) except $ab$ now is a thin affine edge
and $\rel$ is generated by $(a,\bd),(a,\bc),(b,\bc)$.
By condition (a) of the definition of thin affine edges,
$$
\cl {b'}{\bd'}=h\left(\cl a\bd,\cl a\bc,\cl b\bc\right)\in\rel,
$$
where $b'\in b^\th$ and $\bd'[i]\eqc{\th_{c_id_i}}\bd[i]$, as $h$ is a 
Mal'tsev operation on $\Sg{a,b}\fac\th$ and on 
$\Sg{c_i,d_i}\fac{\th_{c_id_i}}$. Then as $b\in\Sg{a,b'}$, by condition (b)
we get $(b,\bd'')\in\rel$ for some $\bd''$ such
that $\bd''[i]\eqc{\th_{c_id_i}}\bd[i]$. The first result follows.

Let now $h_{ab}$ be the term operation we constructed
and $c',d'\in\Sg{c_i,d_i}$, $i\in[\ell]$. Since 
$\zB=\Sg{c_i,d_i}\fac{\th_{c_id_i}}$ is a module, in particular, it is an Abelian
algebra and $h_{ab}(x,c^*,c^*)=x$ for all $c^*\in\zB$, the second result
follows.

(3) Follows from \cite{Bulatov16:connectivity}, Lemmas~15,18,19,20.
\end{proof}

\subsection{Maximality}\label{sec:maximality}

A directed path in $\cG'(\zA)$ is called an \emph{asm-path}\index{asm-path},  
if there is an asm-path from $a$ to $b$ we write 
$a\sqq_{asm} b$\label{not:sqq-asm}. If all edges of this path 
are semilattice or affine, it is called an 
\emph{affine-semilattice path}\index{as-path} or 
an \emph{as-path}, if there is an as-path from $a$ to $b$ we write 
$a\sqq_{as} b$\label{not:sqq-as}. Similar to maximal components, we 
consider strongly connected components of $\cG'(\zA)$ with majority edges 
removed, and the natural partial order on such components. The maximal 
components will be called \emph{as-components}\index{as-components}, and 
the elements from as-components are called 
\emph{as-maximal}\index{as-maximal}; the set of all
as-maximal elements of $\zA$ is denoted by $\amax(\zA)$\label{not:amax}. 
If $a$ is an as-maximal element, the as-component containing $a$ is denoted 
$\as(a)$\label{not:as}. An alternative way to define as-maximal elements is as 
follows: $a$ is as-maximal if for every $b\in\zA$ such that $a\sqq_{as} b$ it also 
holds that $b\sqq_{as}a$. Finally, element $a\in\zA$ is said to be 
\emph{universally maximal} (or \emph{u-maximal}\index{u-maximal} for short) 
if for every $b\in\zA$ such that $a\sqq_{asm} b$ 
it also holds that $b\sqq_{asm}a$. The set of all u-maximal elements of 
$\zA$ is denoted $\umax(\zA)$\label{not:umax}.

\begin{prop}[\cite{Bulatov16:connectivity}]\label{pro:as-connectivity}
Let $\zA$ be an algebra. Then\\[1mm]
(1) any $a,b\in\zA$ are connected in $\cG'(\zA)$ with an undirected path;\\[1mm]
(2) any $a,b\in\max(\zA)$ (or $a,b\in\amax(\zA)$, or $a,b\in\umax(\zA)$) are 
connected in $\cG'(\zA)$ with a directed path.
\end{prop}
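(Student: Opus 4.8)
The plan is to prove both parts by induction on $|\zA|$, using Theorem~\ref{the:connectedness} to get hold of $\cG(\zA)$ and Lemma~\ref{lem:thin} to convert its edges into thin edges (arcs of $\cG'(\zA)$). The base case $|\zA|=1$ is vacuous.

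For part (1), since $\cG(\zA)$ is connected it suffices to join the two endpoints $a,b$ of an arbitrary edge of $\cG(\zA)$ by an undirected path in $\cG'(\zA)$. Pick a maximal congruence $\th$ of $\Sg{a,b}$ witnessing the type of $ab$; since $a^\th\ne b^\th$, the blocks $a^\th$ and $b^\th$ are proper (idempotent) subalgebras of $\Sg{a,b}\sse\zA$, and so have fewer than $|\zA|$ elements. If $ab$ has semilattice or majority type, Lemma~\ref{lem:thin}(1) yields $b'\in b^\th$ with $ab'$ a thin edge of $\zA$; then $ab'$ is an arc of $\cG'(\zA)$, the inductive hypothesis applied to $b^\th$ joins $b'$ to $b$ (thin edges of a subalgebra being thin edges of $\zA$, cf.\ Lemma~\ref{lem:thin-properties}), and concatenating gives the required path. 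If $ab$ has affine type, Lemma~\ref{lem:thin}(2) yields $a'\in a^\th$ and $b'\in b^\th$ with $a\sqq a'$ inside $a^\th$ --- a directed path of thin semilattice edges --- and $a'b'$ a thin affine edge; joining $b'$ to $b$ inside $b^\th$ by the inductive hypothesis and concatenating finishes (1).

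For part (2), part (1) already supplies an undirected path in $\cG'(\zA)$ between any two maximal elements, so the real content is that the relevant set $M$ of maximal elements --- one of $\max(\zA)$, $\amax(\zA)$, $\umax(\zA)$ --- induces a strongly connected subgraph of $\cG'(\zA)$. I would establish this through two steps. First, $M$ is closed under ``going up'': if $c\in M$ and $c\to d$ is an arc of the pertinent subgraph of $\cG'(\zA)$ (all of $\cG'(\zA)$ for $\umax$, majority arcs deleted for $\amax$, only thin semilattice arcs for $\max$), then $d\in M$ and $d$ reaches $c$ along a directed path; for $\umax$ this is immediate from the definition and forces every such arc inside $M$ to lie on a directed cycle, while $\amax$ and $\max$ require extra care with the affine and majority arcs of $\cG'(\zA)$ that might lead out of $M$. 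Second, one pushes the undirected path from part (1) up so that it lies inside $M$: using the multiplication of Proposition~\ref{pro:good-operation}, the uniform operations of Theorem~\ref{the:uniform}, and Lemma~\ref{lem:op-s-on-affine}, each vertex of the path can be replaced by a maximal element above it while keeping neighbours adjacent in $\cG'(\zA)$. The clean way to organize the bookkeeping is to first prove that $M$ is a subalgebra of $\zA$ and that every element of $M$ is maximal in the algebra $M$, and then recurse: if $M=\zA$, undirected connectivity from (1) plus reversibility of the arcs inside $M$ gives strong connectivity, and if $M\subsetneq\zA$ the inductive hypothesis applies to $M$ itself.

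The main obstacle is part (2). Maximality is a global property of $\cG'(\zA)$ and is not obviously preserved by the subalgebra/quotient reductions that the induction needs; the crux is to pin down the structure of $M$ --- that it is a subalgebra, that maximality relative to $\zA$ and relative to $M$ coincide where it matters, and that affine and majority arcs cannot be used to ``escape'' $M$ --- using the term operations provided by Theorems~\ref{the:adding},\ref{the:uniform} and Lemmas~\ref{lem:thin},\ref{lem:op-s-on-affine}. Once this structural picture is in place, the path-pushing and the induction are routine.
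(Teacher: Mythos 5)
Your proposal reconstructs the entire proposition from scratch by induction on $|\zA|$, but the paper does not do this: it cites \cite{Bulatov16:connectivity} for part~(1) and for part~(2) in the $\max$ and $\amax$ cases, and its own proof supplies only the $\umax$ case, in two lines. That short derivation is the thing you should have supplied. Given $a,b\in\umax(\zA)$, choose $a',b'\in\max(\zA)$ with $a\sqq a'$ and $b\sqq b'$ (possible by finiteness of $\cG_s(\zA)$). Then $a\sqq_{asm} a'\sqq_{asm} b'$ by the $\max$ case of part~(2), and since $b\sqq b'$ (hence $b\sqq_{asm} b'$) and $b$ is u-maximal we get $b'\sqq_{asm} b$; concatenating gives the required directed path. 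No induction, no structural analysis of $M$, and no re-derivation of part~(1).

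As for your from-scratch attempt: part~(1) is broadly sound but reproves a cited result. Part~(2) contains genuine gaps that your sketch does not close. You assert without argument that $M\in\{\max(\zA),\amax(\zA),\umax(\zA)\}$ is a subalgebra of $\zA$, and that ``maximality relative to $\zA$ and relative to $M$ coincide where it matters''; both claims do real work in your recursion and neither is obvious. Your base case $M=\zA$ is also shaky: $\max(\zA)=\zA$ does not make $\cG_s(\zA)$ strongly connected (take $\zZ_2$ with only the affine term --- there are no semilattice edges, every singleton is a maximal component, yet $\cG_s$ has no arcs at all), and ``reversibility of the arcs'' is not established for $\max$ or $\amax$; it only comes for free in the $\umax$ case, which is exactly the case that is new in this paper. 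You candidly flag part~(2) as the obstacle, and that diagnosis is correct --- but the fix is not to push through the induction, it is to notice that only the $\umax$ variant needs a proof here and to derive it from the cited $\max$ case as above.
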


\begin{proof}
Item (2) is only proved in \cite{Bulatov16:connectivity} for maximal and 
as-maximal elements; so we prove it here for u-maximal elements as well.
Let $a',b'\in\zA$ be maximal elements of $\zA$ such that $a\sqq a'$ and 
$b\sqq b'$. Then by Proposition~\ref{pro:as-connectivity} for maximal elements 
$a'\sqq_{asm}b'$, and, as $b$ is u-maximal, $b'\sqq_{asm}b$.
\end{proof}

Since for every $a\in\zA$ there is a maximal $a'\in\zA$ such that $a\sqq a'$, 
Proposition~\ref{pro:as-connectivity} implies that there is only one u-maximal 
component. U-maximality has an additional useful property, it is somewhat 
hereditary, as it made precise in the following

\begin{lemma}\label{lem:u-max-congruence}
Let $\zB$ be a subalgebra of $\zA$ containing a u-maximal element of $\zA$. 
Then every element u-maximal in $\zB$ is also 
u-maximal in $\zA$. In particular, if $\al$ is a congruence of $\zA$ and $\zB$ 
is a u-maximal $\al$-block, that is $\zB$ is a u-maximal element in 
$\zA\fac\al$, then $\umax(B)\sse\umax(\zA)$. 
\end{lemma}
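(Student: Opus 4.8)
The plan is to argue directly from the definition of u-maximality together with Proposition~\ref{pro:as-connectivity}(2) applied inside $\zB$ and inside $\zA$. Recall that $a\in\zA$ is u-maximal iff for every $b\in\zA$ with $a\sqq_{asm}b$ one also has $b\sqq_{asm}a$; equivalently, the u-maximal elements form the (unique) top strongly connected component of $\cG'(\zA)$ under the asm-reachability preorder. So let $a_0$ be the given element of $\zB$ that is u-maximal in $\zA$, and let $c\in\zB$ be u-maximal in $\zB$. I want to show $c\in\umax(\zA)$, i.e.\ that whenever $c\sqq_{asm}d$ in $\zA$ for some $d\in\zA$, then $d\sqq_{asm}c$.

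First I would observe that thin edges of all types in $\zB$ are thin edges of the same type in $\zA$ (Lemma~\ref{lem:thin-properties}(1), since $\zB$ is a subalgebra), so an asm-path in $\zB$ is also an asm-path in $\zA$; hence $\sqq_{asm}$ computed in $\zB$ is a refinement of $\sqq_{asm}$ computed in $\zA$. Now take $d\in\zA$ with $c\sqq_{asm}d$ in $\zA$. Since $a_0,c$ both lie in $\zB$ and $c$ is u-maximal in $\zB$, there is an asm-path in $\zB$ from $a_0$ to $c$ (apply Proposition~\ref{pro:as-connectivity}(2) in $\zB$: $c$ is u-maximal in $\zB$, and $a_0\sqq a_0'$ for some maximal $a_0'$ of $\zB$, so $a_0\sqq_{asm}a_0'\sqq_{asm}c$ by connectivity of the u-maximal component of $\zB$; more simply, any two u-maximal-reachable points are connected, and $a_0$ reaches a maximal element of $\zB$ which reaches $c$). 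That asm-path lies in $\zB\sse\zA$, so $a_0\sqq_{asm}c$ in $\zA$, and concatenating with $c\sqq_{asm}d$ gives $a_0\sqq_{asm}d$ in $\zA$. Because $a_0$ is u-maximal \emph{in} $\zA$, it follows that $d\sqq_{asm}a_0$ in $\zA$. Finally, $a_0\sqq_{asm}c$ in $\zA$ too, so $d\sqq_{asm}a_0\sqq_{asm}c$, i.e.\ $d\sqq_{asm}c$ in $\zA$. This is exactly what u-maximality of $c$ in $\zA$ requires, so $c\in\umax(\zA)$.

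For the ``in particular'' clause: if $\al\in\Con(\zA)$ and $\zB$ is a u-maximal $\al$-block, then first one should note that $\zB$ really is a subalgebra of $\zA$ (blocks of congruences of idempotent algebras are subalgebras), and that $\zB$ contains a u-maximal element of $\zA$. The latter holds because there is a directed $\cG'(\zA)$-path (in fact an asm-path) from any element of $\zB$ to some u-maximal element $a^*$ of $\zA$; projecting that path mod $\al$ via Lemma~\ref{lem:thin-properties}(1) gives an asm-path in $\zA\fac\al$ from $B=\zB^\al$ to $(a^*)^\al$, and since $B$ is u-maximal in $\zA\fac\al$ we get $(a^*)^\al\sqq_{asm}B$ in $\zA\fac\al$; lifting the last arc of this path back to $\zA$ shows the asm-path from (a representative near) $a^*$ re-enters $\zB$, and as $a^*$ is u-maximal in $\zA$ any element asm-reachable from it is itself u-maximal in $\zA$ and asm-reaches $a^*$, so in particular $\zB$ contains a u-maximal element of $\zA$. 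Then the first part applies verbatim to give $\umax(B)\sse\umax(\zA)$.

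The main obstacle I anticipate is the ``in particular'' step, specifically verifying cleanly that a u-maximal $\al$-block $\zB$ must contain an honest u-maximal element of $\zA$ — this is where one has to be careful lifting asm-paths through the quotient $\zA\fac\al$ and pinning down which vertex of the lifted path actually lands inside $\zB$, rather than merely in a neighbouring $\al$-block. The first part, by contrast, is a routine concatenation-of-paths argument once Lemma~\ref{lem:thin-properties}(1) is invoked to pass asm-paths from $\zB$ up into $\zA$, so I would keep that part short and spend the writing effort on the lifting argument.
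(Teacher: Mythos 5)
Your argument follows the paper's proof essentially verbatim: for the first part you concatenate an asm-path $a_0\sqq_{asm}c$ inside $\zB\sse\zA$ with $c\sqq_{asm}d$, invoke u-maximality of $a_0$ in $\zA$ to get $d\sqq_{asm}a_0\sqq_{asm}c$; and for the second part you lift an asm-path in $\zA\fac\al$ from $(a^*)^\al$ into the block $\zB$, which is exactly what the paper does via Lemma~\ref{lem:thin}. The one slip in the write-up is the phrase ``lifting the last arc'' --- you must lift the \emph{whole} path starting at $a^*$ (arc by arc via Lemma~\ref{lem:thin}, or in one shot via the Maximality Lemma~\ref{lem:to-max}(4) applied to the graph of the quotient map), since lifting only the final arc need not produce anything asm-reachable from $a^*$; but each lifted arc lands in the prescribed $\al$-block, so the endpoint of the fully lifted path is indeed in $\zB$ as required.
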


\begin{proof}
Let $a\in \zB$ be an element u-maximal in $\zA$, let $b\in\umax(\zB)$. For any 
$c\in\zA$ with $b\sqq_{asm} c$ we also have $c\sqq_{asm}a$. Finally, since 
$b\in\umax(\zB)$ and $a\in\zB$, we have $a\sqq_{asm}b$. For the second 
part of the lemma we need to find a u-maximal element in $\zB$. Let 
$b\in\umax(\zA)$. Then as $\zB$ is u-maximal in $\zA\fac\al$ applying 
Lemma~\ref{lem:thin} we get that there is
 $a'\in\zB$ such that $b\sqq_{asm}a'$. Clearly, $a'\in\umax(\zA)$.
\end{proof}

Let $\zA$ be an algebra and $a\in\zA$. By $\Filt_\zA(a)$\label{not:Filt} we 
denote the set of elements $a$ is connected to (in terms of semilattice paths); 
similarly, by $\Filt^{as}_\zA(a)$\label{not:Filt-as} and 
$\Filt^{asm}_\zA(a)$\label{not:Filt-asm} we denote the set of elements 
$a$ is as-connected and asm-connected to. Also, 
$\Filt_\zA(C)=\bigcup_{a\in C}\Filt_\zA(a)$ 
($\Filt^{as}_\zA(C)=\bigcup_{a\in C}\Filt^{as}_\zA(a)$, 
$\Filt^{asm}_\zA(C)=\bigcup_{a\in C}\Filt^{asm}_\zA(a)$, respectively) 
for $C\sse \zA$. Note that if $a$ is an as-maximal element then 
$\as(a)=\Filt^{as}_\zA(a)$, and $a\in\Filt^{asm}_\zA(b)$ for any 
$b\in\zA$. We will need the following statements.

\begin{lemma}[The Maximality Lemma,\cite{Bulatov16:restricted}]%
\label{lem:to-max}
Let $\rel$ be a subdirect product of $\zA_1\tms\zA_n$, $I\sse[n]$.\\[1mm]
(1) For any $\ba\in\rel$, $\bb\in\pr_I\rel$ with $\pr_I\ba\le\bb$, 
there is $\bb'\in\rel$ such that $\ba\le\bb'$ and $\pr_I\bb'=\bb$.\\[1mm]
(2) For any $\ba\in\rel$, $\bb\in\pr_I\rel$ such that $(\pr_I\ba)\bb$  is a thin  
majority edge there is $\bb'\in\rel$ such that $\ba\bb'$ is a 
thin majority edge, and $\pr_I\bb'=\bb$.\\[1mm]
(3) For any $\ba\in\rel$, $\bb\in\pr_I\rel$ such that $(\pr_I\ba)\bb$  is a 
thin affine edge there are $\ba',\bb'\in\rel$ such that $\ba\sqq\ba'$, $\ba'\bb'$ 
is a thin affine edge, and $\pr_I\ba'=\pr_I\ba$, $\pr_I\bb'=\bb$.\\[1mm]
(4) For any $\ba\in\rel$, and an s-path (as-path, asm-path) 
$\vc\bb k\in\pr_I\rel$ with $\pr_I\ba=\bb_1$, there is an
s-path (as-path,asm-path, respectively) $\vc{\bb'}\ell\in\rel$ 
such that $\pr_I\bb'_\ell=\bb_\ell$.\\[1mm]
(5) For any $\bb\in\max(\pr_I\rel)$ ($\bb\in\amax(\pr_I\rel)$, 
$\bb\in\umax(\pr_I\rel)$) there is $\bb'\in\max(\rel)$ 
($\bb'\in\amax(\rel)$, $\bb'\in\umax(\rel)$, respectively), such that 
$\pr_I\bb'=\bb$. In particular, $\pr_{[n]-I}\bb'\in\max(\pr_{[n]-I}\rel)$ 
($\pr_{[n]-I}\bb'\in\amax(\pr_{[n]-I}\rel)$, 
$\pr_{[n]-I}\bb'\in\umax(\pr_{[n]-I}\rel)$, respectively).
\end{lemma}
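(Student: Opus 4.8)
The five parts of Lemma~\ref{lem:to-max} are all of the same flavor: each says that a structure existing in a projection $\pr_I\rel$ can be lifted back to $\rel$ itself. I would prove them in the order (1), then (2) and (3) in parallel, then (4) as an iteration of (1)--(3), and finally (5) as a consequence of (4). The common engine is the following: given $\ba\in\rel$ and a thin edge (or s-path) in $\pr_I\rel$ starting at $\pr_I\ba$, form the subalgebra $\rel'\sse\rel\tm\pr_I\rel$ generated by the appropriate pairs built from $\ba$ and the endpoints of the edge, and apply one of the uniform term operations $f,g,h$ from Theorem~\ref{the:uniform} (together with the ``in-block generation'' conditions (b) in the definitions of thin majority and thin affine edges, and Lemma~\ref{lem:thin}) to pull out a lift. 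This is exactly the technique used in the proof of Lemma~\ref{lem:op-s-on-affine} above.

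\textbf{Part (1).} Suppose $\pr_I\ba\le\bb$, witnessed by the semilattice term $f$ of Proposition~\ref{pro:good-operation}: $\bb=f(\pr_I\ba,\bb)$ and $f$ is a semilattice operation on $\{\pr_I\ba,\bb\}$. Since $\bb\in\pr_I\rel$, pick any $\bc\in\rel$ with $\pr_I\bc=\bb$ and set $\bb'=f(\ba,\bc)$. Then $\pr_I\bb'=f(\pr_I\ba,\bb)=\bb$, and $\ba\le\bb'$: indeed $f(\ba,f(\ba,\bc))=f(\ba,\bc)$ by idempotency of $f$ in the first coordinate (Theorem~\ref{the:uniform}(1)), so $f$ is a semilattice operation on $\{\ba,\bb'\}$, giving $\ba\le\bb'$ unless $\ba=\bb'$ (which is allowed, as $\le$ is reflexive here in the sense of thin edge ``or equal'').

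\textbf{Parts (2), (3).} These are the cases already carried out essentially verbatim inside the proof of Lemma~\ref{lem:op-s-on-affine}: for the thin majority edge $(\pr_I\ba)\bb$ one uses the ternary majority-type term $g$ (Theorem~\ref{the:uniform}(2)) applied to $(\ba,\bc)$-type tuples, plus condition (b) in the definition of a thin majority edge ($\bb\in\Sg{(\pr_I\ba),\bb''}$ for any $\bb''$ in its block) and Lemma~\ref{lem:thin}(1) to correct the lifted element back into the right block and make the lifted edge thin; for the thin affine edge one uses the Mal'tsev-type term $h$ (Theorem~\ref{the:uniform}(3)) on three appropriately chosen tuples, together with condition (b) for thin affine edges, and Lemma~\ref{lem:thin}(2) to produce $\ba'$ with $\ba\sqq\ba'$ and $\pr_I\ba'=\pr_I\ba$ before the thin affine edge appears (this is why (3) must pass through an $s$-path $\ba\sqq\ba'$ rather than keep $\ba$ fixed). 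In both cases one verifies the lifted edge is of the same type by checking that the witnessing term still acts correctly, which is immediate from the coordinatewise action of $g$ resp.\ $h$ and Lemma~\ref{lem:thin-properties}.

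\textbf{Parts (4), (5).} Part (4) follows by induction on the length $k$ of the given path: apply the appropriate one of (1)--(3) to lift the first edge $\bb_1\bb_2$ from $\ba=\bb_1$ to a thin edge (or $s$-step) $\ba\bb_2'$ in $\rel$ with $\pr_I\bb_2'=\bb_2$ (for affine steps also replacing $\ba$ by some $\ba'$ with $\ba\sqq\ba'$, which only lengthens the resulting path), then recurse on $\bb_2,\dots,\bb_k$ starting from $\bb_2'$. Concatenating gives the desired $s$-path (as-path, asm-path) $\bb_1',\dots,\bb_\ell'$ in $\rel$ with $\pr_I\bb'_\ell=\bb_\ell$. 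For part (5): given $\bb\in\max(\pr_I\rel)$, take any $\ba\in\rel$ with $\pr_I\ba$ in the maximal component of $\bb$; by Proposition~\ref{pro:as-connectivity}(2) there is an s-path (resp.\ as-path for $\amax$, asm-path for $\umax$) in $\pr_I\rel$ from $\pr_I\ba$ to $\bb$; lift it by (4) to obtain $\bb'\in\rel$ with $\pr_I\bb'=\bb$, reachable from $\ba$ by a path of the same kind. It remains to see $\bb'$ is itself maximal (resp.\ as-, u-maximal) in $\rel$: if $\bb'\sqq\bc$ in $\rel$ then $\bb=\pr_I\bb'\sqq\pr_I\bc$ in $\pr_I\rel$ by Lemma~\ref{lem:thin-properties}(2), and since $\bb$ is maximal, $\pr_I\bc\sqq\bb$, so $\pr_I\bc$ lies in the same maximal component as $\bb$; a final application of (4) (or simply of the characterization ``$a$ is maximal iff $a\sqq b\Rightarrow b\sqq a$'' pushed up from the projection together with reachability of $\bb'$) shows $\bc\sqq\bb'$. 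The last sentence of (5) is then immediate since $\pr_{[n]-I}$ of a maximal element is maximal by the same argument applied to the complementary projection.

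\textbf{Main obstacle.} The routine parts are the coordinatewise bookkeeping; the genuinely delicate point is part (3) and, downstream of it, the bookkeeping in (4)--(5) for the \emph{as}- and \emph{asm}-cases. Lifting a thin affine edge forces one to move the source from $\ba$ to some $\ba'$ with $\ba\sqq\ba'$, $\pr_I\ba'=\pr_I\ba$; one must make sure (via condition (e) in the definition of a thin affine edge, asserting $a$ is maximal in $\Sg{a,b}$, and Lemma~\ref{lem:thin}(2)) that this adjustment can always be done within the fiber over $\pr_I\ba$ and does not destroy the path already constructed. Concatenating lifted segments in (4) then requires checking that the intermediate ``$\sqq$'' corrections at affine steps splice compatibly with the preceding steps — i.e.\ that one is genuinely building a single directed path in $\cG'(\rel)$ (resp.\ with majority edges deleted) and not merely a sequence of local lifts. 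Handling this splicing cleanly is the crux; everything else reduces to one application of $f$, $g$, or $h$ plus Lemmas~\ref{lem:thin} and~\ref{lem:thin-properties}.
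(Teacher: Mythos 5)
Your handling of item (2) has a genuine gap. You claim the majority case is ``carried out essentially verbatim inside the proof of Lemma~\ref{lem:op-s-on-affine}'' and that verifying the lifted pair is a thin majority edge ``is immediate from the coordinatewise action of $g$ resp.\ $h$ and Lemma~\ref{lem:thin-properties}.'' Neither is right. Lemma~\ref{lem:op-s-on-affine}(1) builds a \emph{term operation} $t_{ab}$; it never certifies a lifted pair as a thin majority edge, and Lemma~\ref{lem:thin-properties} only says that projections of thin edges are thin edges of the same type---it gives nothing in the lifting direction, and in particular does not supply the ternary term $g'$ demanded by condition (d) of the definition of a thin majority edge. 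That is exactly where the paper's proof spends its effort: after reducing to a binary relation with $\ba=(a_1,a_2)$, $\bb=b_1$, one picks $\bb''=(b'_1,b_2)$ with $b'_1\in b_1^\th$ so that $\Sg{\ba,\bb''}$ is \emph{minimal}---this minimality forces $b'_1=b_1$ and delivers condition (b) for the lifted pair---and then all that remains is to build the condition-(d) witness, which is done by iterating three auxiliary term operations $r_1,r_2,r_3$ that successively repair the three slots of $g'$. Your sketch omits both the minimality step and the (d)-construction, which are precisely the non-routine content of the proof. (Note also that the paper does not re-prove (1), (3), or the s-/as-versions of (4), (5); those are cited from \cite{Bulatov16:restricted}, and only item (2) plus the asm-/u-maximal extensions are established here.)

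There is also a smaller flaw in your argument for item (5): if $\bb'\sqq\bc$ in $\rel$, applying (4) to the return path from $\pr_I\bc$ to $\bb$ produces \emph{some} $\bd\in\rel$ with $\bc\sqq\bd$ and $\pr_I\bd=\bb$, but there is no reason $\bd=\bb'$, so you have not shown $\bc\sqq\bb'$. The usual route is to start from any preimage $\bc_0$ of $\bb$, walk \emph{up} inside $\rel$ to a maximal $\bb'\in\max(\rel)$, observe that $\pr_I\bb'$ lies in the maximal component of $\bb$ (since $\bb=\pr_I\bc_0\sqq\pr_I\bb'$ and $\bb$ is maximal), then lift a path from $\pr_I\bb'$ back down to $\bb$ by (4) to land at some $\bb''$ over $\bb$; the maximality of $\bb'$---not of $\bb$---is what then forces $\bb''$ to be maximal.
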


\begin{proof}
Items (1) and (3) are proved in \cite{Bulatov16:restricted}, and items
of (4) and (5) are only proved for
s- and as-paths, and, respectively, for maximal and as-maximal elements. 
Items (4) and (5) for asm-paths and u-maximal elements follow 
from~(1)--(3).

(2) Observe that it suffices to consider binary relations $\rel$. Indeed, $\rel$ 
can be viewed as a subdirect product of $\pr_I\rel\tm\pr_{[n]-I}\rel$. So, 
suppose $n=2$ and $I=\{1\}$. We have $\ba=(a_1,a_2)$ and $\bb=b_1$. 
Let $\th$ be a maximal congruence of $\Sg{a_1,b_1}$ witnessing that 
$a_1b_1$ is an majority edge. Choose $\bb''=(b'_1,b_2)\in\rel$ 
such that $b'_1\in b_1^\th$ and $\rel'=\Sg{\ba,\bb''}$ is minimal possible 
with this condition. It suffices to prove the lemma for $\rel'$, since 
$b_1\in\Sg{a_1,b'_1}$ and so $b_1\in\pr_1\rel'$, and $a_1b_1$ is still a 
thin majority edge. This means that $(b'_1,b_2)$ can be chosen such that 
$b'_1=b_1$. Also, by taking 
$\cl{b_1}{b'_2}=g\left(\cl{a_1}{a_2},\cl{b_1}{b_2},\cl{b_1}{b_2}\right)$ 
we may assume by Theorem~\ref{the:uniform} that $g(a_2,b_2,b_2)=b_2$.
As is easily seen, the pair $(a_1,a_2)(b_1,b_2)$ is a majority edge as 
witnessed by congruence $\th'=\th\tm\zo_{\zA'_2}$ where 
$\zA'_2=\Sg{a_2,b_2}$. By the choice of $\bb''$ the pair $(b_1,b_2)$ belongs to 
$\Sg{(a_1,a_2),(c_1,c_2)}$ for any $(c_1,c_2)\in(b_1,b_2)^{\th'}$, and it 
only remains to prove condition (d) of the definition of thin majority edges.

Let $g'$ be the operation from condition (d) for $a_1b_1$. Then
$$
g'\left(\cll{(a_1,a_2)}{(b_1,b_2)}{(b_1,b_2)},
\cll{(b_1,b_2)}{(a_1,a_2)}{(b_1,b_2)},
\cll{(b_1,b_2)}{(b_1,b_2)}{(a_1,a_2)}\right)=
\cll{(b_1,b'_2)}{(b_1,b''_2)}{(b_1,b'''_2)}.
$$
Since $(b_1,b_2)\in\Sg{(a_1,a_2),(b_1,b'_2)}$ by the choice of $(b_1,b_2)$, 
there is a term operation $r_1$ such that
$$
r_1\left(\cll{(a_1,a_2)}{(b_1,b_2)}{(b_1,b_2)},
\cll{(b_1,b'_2)}{(b_1,b''_2)}{(b_1,b'''_2)}\right)
=\cll{(b_1,b_2)}{(b_1,b^*_2)}{(b_1,b^{**}_2)}.
$$
Repeating this for the second and third coordinate positions by finding 
$r_2,r_3$ with
\begin{eqnarray*}
r_2\left(\cll{(b_1,b_2)}{(a_1,a_2)}{(b_1,b_2)},
\cll{(b_1,b_2)}{(b_1,b^*_2)}{(b_1,b^{**}_2)}\right)
&=&\cll{(b_1,b_2)}{(b_1,b_2)}{(b_1,b^\dagger_2)},\\  
r_3\left(\cll{(b_1,b_2)}{(b_1,b_2)}{(a_1,a_2)},
\cll{(b_1,b_2)}{(b_1,b_2)}{(b_1,b^\dagger_2)}\right)
&=& \cll{(b_1,b_2)}{(b_1,b_2)}{(b_1,b_2)}, 
\end{eqnarray*}
we obtain a ternary operation $g''$ such that
\begin{eqnarray*}
g''\left(\cl{a_1}{a_2},\cl{b_1}{b_2},\cl{b_1}{b_2}\right) &=&
g''\left(\cl{b_1}{b_2},\cl{a_1}{a_2},\cl{b_1}{b_2}\right)\\
&=& g''\left(\cl{b_1}{b_2},\cl{b_1}{b_2}\cl{a_1}{a_2}\right)=\cl{b_1}{b_2},
\end{eqnarray*}
confirming property (d).
\end{proof}

The following lemma considers a special case of as-components in subdirect 
products, and is straightforward.

\begin{lemma}\label{lem:as-product}
Let $\rel$ be a subdirect product of $\zA_1\tm\zA_2$, $B,C$ as components of 
$\zA_1,\zA_2$, respectively, and $B\tm C\sse\rel$. Then $B\tm C$ is an
as-component of $\rel$.
\end{lemma}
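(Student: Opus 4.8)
The plan is to verify the definition of as-component directly. Recall that an as-component of $\rel$ is a maximal strongly connected component of the graph $\cG'(\rel)$ with majority edges removed; equivalently, a set $D\sse\rel$ is an as-component if $D$ is closed under $\sqq_{as}$ (every $\ba\in D$, $\bb\in\rel$ with $\ba\sqq_{as}\bb$ forces $\bb\in D$) and is strongly $\sqq_{as}$-connected internally. So I must check two things about $B\tm C$: first, that any $\ba,\bb\in B\tm C$ are connected by an as-path inside $\rel$ that stays within $B\tm C$; second, that no as-path leaves $B\tm C$, i.e.\ if $\ba\in B\tm C$ and $\ba\sqq_{as}\bb$ in $\rel$ then $\bb\in B\tm C$.

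For internal connectivity, I would use Lemma~\ref{lem:thin-properties}(2): a thin semilattice or thin affine edge in $\rel$ projects to a thin edge of the same type in each coordinate, and conversely the key tool is the Maximality Lemma (Lemma~\ref{lem:to-max}, items (1),(3),(4)): an s-path or as-path in a projection $\pr_I\rel$ lifts to an as-path in $\rel$. Since $B$ is an as-component of $\zA_1$, any $\ba[1],\bb[1]\in B$ are joined by an as-path in $\zA_1$ inside $B$ (Proposition~\ref{pro:as-connectivity}(2) applied within the subalgebra $B$ — or rather its definition as a strongly connected component), and likewise for the second coordinate in $C$. Concatenating: from $(\ba[1],\ba[2])$ go via an as-path in the first coordinate to $(\bb[1],\ba[2])$ — lifting the first-coordinate as-path using Lemma~\ref{lem:to-max}(4) while holding the second coordinate fixed, which is legitimate because $B\tm C\sse\rel$ guarantees every intermediate pair lies in $\rel$ — and then similarly in the second coordinate to $(\bb[1],\bb[2])$. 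All intermediate tuples lie in $B\tm C$, so $B\tm C$ is strongly $\sqq_{as}$-connected.

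For closure, suppose $\ba\in B\tm C$ and $\ba\sqq_{as}\bb$ in $\rel$, witnessed by an as-path $\ba=\bc_0,\bc_1,\dots,\bc_m=\bb$ of thin semilattice/affine edges. By Lemma~\ref{lem:thin-properties}(2), projecting to coordinate $1$ yields an as-path (possibly with repeated vertices, where an edge collapses) from $\ba[1]$ to $\bb[1]$ in $\zA_1$; since $B$ is an as-component of $\zA_1$ and $\ba[1]\in B$, as-maximality of the elements of $B$ forces $\bb[1]\in B$ (every element as-reachable from an as-maximal element is itself as-maximal and in the same component). Symmetrically $\bb[2]\in C$, so $\bb\in B\tm C$. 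Hence $B\tm C$ is a $\sqq_{as}$-closed, strongly $\sqq_{as}$-connected subset of $\rel$, i.e.\ an as-component.

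The only mildly delicate point — and the one I would be most careful about — is the lifting step in the connectivity argument: Lemma~\ref{lem:to-max}(4) lifts an as-path in a \emph{projection} to an as-path in $\rel$, but a priori the lift need not keep the other coordinate constant. This is where the hypothesis $B\tm C\sse\rel$ does the real work: rather than invoking the lemma, I can just move one coordinate at a time through $B\tm C$ directly, since every pair formed by an element of $B$ and an element of $C$ is already in $\rel$, so a thin edge in the first coordinate of $\zA_1$ (within $B$) paired with a fixed element of $C$ is automatically a thin edge of the same type in $\rel$ by the definition of thin edges in a subalgebra (Lemma~\ref{lem:thin-properties}(1), viewing $B\tm\{\bb[2]\}\cong B$ as a subalgebra of $\rel$). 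That makes the whole argument essentially bookkeeping once the two directions are set up.
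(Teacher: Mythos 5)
Your proof is correct. The paper offers no argument here (it simply labels the lemma ``straightforward''), and yours is precisely the routine verification one would expect: strong $\sqq_{as}$-connectivity of $B\tm C$ by walking one coordinate at a time inside $B\tm C$ (legal exactly because $B\tm C\sse\rel$), and $\sqq_{as}$-closedness by projecting any as-path via Lemma~\ref{lem:thin-properties}(2) and using that $B$, $C$ are as-components of the factors.

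One citation nit: the step asserting that a thin edge of the subalgebra $B\tm\{c\}\cong B$ of $\rel$ is a thin edge of $\rel$ is the \emph{converse} of what Lemma~\ref{lem:thin-properties}(1) states (that lemma goes from $\zA$ to a subalgebra, not the other way). The converse does hold — every clause in the definitions of thin semilattice and thin affine edges is phrased in terms of the fixed term operations $f,h,h'$ and the subalgebra $\Sg{\{a,b\}}$, all of which are unchanged when passing to any containing algebra, so the property lifts — but it is a separate one-line verification rather than a direct application of the cited lemma, and is worth stating as such.
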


We complete this section with an auxiliary statement that will be needed
later.

\begin{lemma}\label{lem:as-type-2}
Let $\al\prec\beta$, $\al,\beta\in\Con(\zA)$, let $B$ be a $\beta$-block and 
$\typ(\al,\beta)=\two$. Then $B\fac\al$ is term equivalent to a module. 
In particular,
every pair of elements of $B\fac\al$ is a thin affine edge in $\zA\fac\al$.
\end{lemma}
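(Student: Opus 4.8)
The plan is to reduce to the case $\al=\zz_\zA$, show that a $\beta$-block is an \emph{abelian} algebra, invoke the structure theory of abelian algebras to conclude it is affine, and then read off the ``in particular'' clause. For the reduction, replace $\zA$ by $\zA\fac\al$ and correspondingly $\beta,B$ by $\beta\fac\al,B\fac\al$; this preserves $\typ(\cdot,\cdot)=\two$, and $B$, being a $\beta$-block of the idempotent algebra $\zA$, becomes a subalgebra. So it suffices to prove: if $\zz_\zA\prec\beta$, $\typ(\zz_\zA,\beta)=\two$, and $B$ is a $\beta$-block, then $B$ is term equivalent to a module.

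First I would show $B$ is abelian. Since $(\zz_\zA,\beta)$ is a prime quotient of type $\two$, it is abelian, i.e.\ $[\beta,\beta]=\zz_\zA$ (tame congruence theory, \cite{Hobby88:structure}). Unfolding this as the term condition $C(\beta,\beta;\zz_\zA)$: for every term operation $t$ of $\zA$ and all tuples with $\ba\eqc\beta\bb$ and $\bc\eqc\beta\bd$ one has $t(\ba,\bc)=t(\ba,\bd)$ iff $t(\bb,\bc)=t(\bb,\bd)$. Now restrict $t$ and all arguments to $B$: term operations of $B$ are restrictions of term operations of $\zA$, and since $B$ is a single $\beta$-block any two of its elements are $\beta$-related, so the hypotheses $\ba\eqc\beta\bb$, $\bc\eqc\beta\bd$ hold automatically. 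Thus $C(\zo_B,\zo_B;\zz_B)$ holds, i.e.\ $[\zo_B,\zo_B]=\zz_B$, so $B$ is abelian.

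Next, $B$ is finite, idempotent, abelian, and (being a subalgebra of a quotient of $\zA$) omits type $\one$. An abelian algebra omits types $\three,\four,\five$, so $B$ — and every algebra in the variety it generates — has only prime quotients of type $\two$; that variety therefore omits types $\one$ and $\five$ and hence is congruence modular (\cite{Hobby88:structure} and commutator theory). An abelian algebra in a congruence modular variety is affine, so $B$ is term equivalent to a module, which (translating back through the reduction) is exactly the assertion that $B\fac\al$ is term equivalent to a module. For the ``in particular'' clause, since $B\fac\al$ is a subalgebra of $\zA\fac\al$, by Lemma~\ref{lem:thin-properties}(1) it is enough to check that in the idempotent reduct of a module $M$ every pair of distinct elements $a,b$ is a thin affine edge: $\Sg{a,b}=a+\ang{b-a}$ is again the reduct of a module, so $ab$ is an affine edge witnessed by $\th=\zz_{\Sg{a,b}}$, and conditions (a)–(e) of the definition of a thin affine edge are immediate — (b) since $b^\th=\{b\}\ni b\in\Sg{a,b}$, (c) and (d) from the Mal'tsev term $x-y+z$ (take $h'(x,y,z)=x-y+z$, so $h'(b,a,a)=h'(a,a,b)=b$), and (e) because a module reduct is abelian and hence has no semilattice or majority edges at all, so $\cG_s(\Sg{a,b})$ has no arcs and every element is trivially maximal.

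The only step that is not routine bookkeeping is the passage ``finite idempotent abelian, omitting type $\one$'' $\Rightarrow$ ``term equivalent to a module''; I would also double-check that the term-condition restriction in the abelianness argument is applied correctly, which it is precisely because $B$ is a \emph{full} $\beta$-block. If one prefers to stay within the machinery already set up in the paper rather than citing congruence-modular commutator theory, the affine conclusion can alternatively be obtained from abelianness (which forces $\cG(\zC)$ to have only affine edges for every subalgebra $\zC$ of $B$) together with Lemma~\ref{lem:traces}(1) and Lemma~\ref{lem:minimal-sets}(5), which propagate the $1$-dimensional vector space structure on $(\zz_\zA,\beta)$-traces to all of $B$.
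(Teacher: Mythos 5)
Your proof takes a genuinely different route from the paper's, and it has a gap at the pivotal step. The paper simply cites Theorem~7.11 of Hobby--McKenzie to produce a term operation of $\zA$ that is Mal'tsev on $B\fac\al$ (this is where the standing hypothesis that $\zA$ generates a variety omitting type~$\one$ is used) and then observes that $B\fac\al$ is an abelian Mal'tsev algebra, hence affine. You instead try to reach affineness by establishing that the variety generated by $B$ is congruence modular.

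The reduction to $\al=\zz$, the derivation of abelianness of $B$ from $\typ(\zz,\beta)=\two$ by restricting the term condition $C(\beta,\beta;\zz)$ to the block, and the explicit (a)--(e) check of the ``in particular'' clause are all correct; the first two amount to the paper's parenthetical ``since $\beta$ is Abelian on $B\fac\al$''. The gap is in the sentence ``An abelian algebra omits types $\three,\four,\five$, so $B$ — and every algebra in the variety it generates — has only prime quotients of type $\two$; that variety therefore omits types $\one$ and $\five$ and hence is congruence modular.'' The claim that the whole variety has type set contained in $\{\two\}$ is not immediate from $B$ alone, because abelianness is closed under subalgebras and products but not in general under homomorphic images; it can be repaired via the fact that the types of prime quotients of a factor algebra are among those of the original algebra, but you do not say this. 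More seriously, for locally finite varieties ``omits types $\one$ and $\five$'' is necessary but not sufficient for congruence modularity: the Hobby--McKenzie characterization of CM also requires that all minimal sets have empty tails, a structural condition your argument never touches. Supplying exactly that extra structure for the type-$\two$ quotient at hand is what HM~Theorem~7.11 and its Mal'tsev term do, so the paper's one-line citation is not a shortcut around your argument but precisely the content your CM detour is missing. Your closing suggestion — propagating the vector-space structure from $(\zz_\zA,\beta)$-traces via Lemmas~\ref{lem:traces}(1) and~\ref{lem:minimal-sets}(5) — is much closer to how HM~7.11 is actually proved and would be a legitimate way to close the gap without citing that theorem directly.
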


\begin{proof}
As $\zA$ is an idempotent algebra that generates a variety omitting type \one, 
and $(\alpha,\beta)$ is a simple interval in $\Con(\zA)$ of type \two, by 
Theorem~7.11 of \cite{Hobby88:structure}  there is a term operation of 
$\zA$ that is Mal'tsev on $B\fac\al$. Since $\beta$ is Abelian on $B\fac\al$, 
we get the result.
\end{proof}

\subsection{Quasi-decomposition and quasi-majority}

We make use of the property of 
quasi-2-decomposability proved in\index{quasi-2-decomposability}
\cite{Bulatov16:restricted}.

\begin{theorem}[The 2-Decomposition Theorem, \cite{Bulatov16:restricted}]%
\label{the:quasi-2-decomp}
If $\rel$ is an $n$-ary relation, $X\sse[n]$, tuple $\ba$ is such that
$\pr_J\ba\in\pr_J\rel$ for any $J\sse[n]$, $|J|=2$, and $\pr_X\ba
\in\amax(\pr_X\rel)$, there is a tuple $\bb\in\rel$ with
$\pr_J\bb\in\Filt^{as}_{\pr_J\rel}(\pr_J\ba)$ for any $J\sse[n]$, $|J|=2$, and
$\pr_X\bb=\pr_X\ba$. 
\end{theorem}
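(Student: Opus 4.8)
The plan is to reduce first to the case where $\rel$ is a subdirect product of $\zA_1\tms\zA_n$, by replacing each $\zA_i$ with $\pr_i\rel$; this changes neither the hypotheses nor the conclusion, and since $\cA$ is closed under subalgebras the new domains are again in $\cA$. I would then argue by induction on $n$. For $n\le2$ the only possible $2$-element subset of $[n]$ is $[n]$ itself (when $n=2$), every $c$ lies in $\Filt^{as}(c)$, and $\ba\in\rel$ in every case (by $2$-consistency when $|X|\le1$, and because $\pr_X\ba\in\amax(\pr_X\rel)\sse\rel$ when $X=[2]$), so $\bb=\ba$ works.

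\textbf{A preliminary observation.}
Before the inductive step it is convenient to record that projections of as-maximal tuples are as-maximal: if $\pr_X\ba\in\amax(\pr_X\rel)$ and $X'\sse X$, then $\pr_{X'}\ba\in\amax(\pr_{X'}\rel)$. Indeed, any as-path in $\pr_{X'}\rel$ starting at $\pr_{X'}\ba$ lifts, by Lemma~\ref{lem:to-max}(4) applied inside the algebra $\pr_X\rel$ with coordinate set $X'$, to an as-path in $\pr_X\rel$ starting at $\pr_X\ba$; as-maximality of $\pr_X\ba$ sends the endpoint back to $\pr_X\ba$, and projecting this return path down (Lemma~\ref{lem:thin-properties}(2)) yields an as-path back to $\pr_{X'}\ba$. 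Consequently the induction hypothesis may be applied to any projection $\pr_I\rel$ with $X\cap I$ in the role of $X$.

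\textbf{Inductive step.}
Fix a coordinate, say $n$, put $\rel_1=\pr_{[n-1]}\rel$, $\ba_1=\pr_{[n-1]}\ba$, $X_1=X\cap[n-1]$, and apply the induction hypothesis to obtain $\bb_1\in\rel_1$ with $\pr_J\bb_1\in\Filt^{as}_{\pr_J\rel}(\pr_J\ba)$ for every $J\sse[n-1]$, $|J|=2$, and $\pr_{X_1}\bb_1=\pr_{X_1}\ba$. It then remains to find $d\in\zA_n$ with $(\bb_1,d)\in\rel$ such that $d=\ba[n]$ if $n\in X$ and $d\in\Filt^{as}_{\zA_n}(\ba[n])$ otherwise, and such that for each $i\in[n-1]$ the pair $(\bb_1[i],d)$ is as-above $\pr_{\{i,n\}}\ba$ in $\pr_{\{i,n\}}\rel$. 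Viewing $\rel$ as a subdirect product of $\rel_1\tm\zA_n$, this is a single-coordinate ``attachment'' problem; when $n\in X$ one additionally uses that $\pr_X\ba\in\amax(\pr_X\rel)$ lifts, by Lemma~\ref{lem:to-max}(5), to some $\bc\in\rel$ with $\pr_X\bc=\pr_X\ba$, providing a second tuple that already carries the correct $n$-th coordinate, and the $\bb_1$-extension and $\bc$ must then be merged.

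\textbf{The attachment step (the main obstacle).}
This attachment problem is where essentially all the work lies. A single term operation applied to the available tuples of $\rel$ will not do, because the ``correct'' operation --- the semilattice dot of Proposition~\ref{pro:good-operation}, the majority operation $g$, or the affine operation $h$ of Theorem~\ref{the:uniform} --- differs from pair to pair according to the types of the edges involved. The way I would push it through is: first reduce (legitimately, since off $X$ the conclusion only demands membership in $\Filt^{as}$) to the situation where $\ba[n]$ is as-maximal in $\zA_n$, so that ``as-above $\ba[n]$'' means ``in the as-component $\as(\ba[n])$''; then use connectivity of the colored graph on as-maximal elements (Theorem~\ref{the:connectedness}, Proposition~\ref{pro:good-operation}) to walk from some value attainable over $\bb_1$ toward a value compatible with all pairs $\{i,n\}$ along a path of thin edges, at each thin edge invoking Lemma~\ref{lem:to-max}(1)--(4) and the uniformized operations to propagate the improvement simultaneously to the other coordinates, the propagation splitting into the semilattice, majority, and affine cases, the affine one handled via the Mal'tsev-type operations $h_{ab}$ of Lemma~\ref{lem:op-s-on-affine}. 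A secondary induction on $\sum_i|\zA_i|$, shrinking a domain whenever some $\ba[i]$ fails to be as-maximal, keeps the whole argument terminating. I expect this affine/non-affine case split in the attachment step, together with the bookkeeping needed to keep $\pr_X\bb=\pr_X\ba$ \emph{exact} while relaxing the remaining coordinates only upward along as-paths, to be the technically heaviest part of the proof.
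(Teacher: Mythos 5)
The paper does not prove this theorem: the 2-Decomposition Theorem is stated as an imported result, cited directly from \cite{Bulatov16:restricted}, and is used as a black box throughout the rest of the development (in the proofs of Theorem~\ref{the:pseudo-majority} and Proposition~\ref{pro:3-tight}, for example). There is therefore no in-paper proof to compare your attempt against.

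On the merits of the proposal itself: the reduction to subdirect products, the base case, and the preliminary observation that projections of as-maximal tuples are as-maximal (via lifting an as-path with Lemma~\ref{lem:to-max}(4) inside $\pr_X\rel$, using as-maximality to close the loop, and projecting back down with Lemma~\ref{lem:thin-properties}(2)) are all sound. The gap is that the ``attachment step,'' which you yourself identify as the place where all the work lies, is not carried out; what remains is a strategy description, not an argument. Concretely, after the induction has produced $\bb_1\in\pr_{[n-1]}\rel$, you must exhibit a \emph{single} $d\in\zA_n$ that simultaneously extends $\bb_1$ into $\rel$, lies in $\Filt^{as}_{\zA_n}(\ba[n])$ (or equals $\ba[n]$ when $n\in X$), and puts $(\bb_1[i],d)$ above $(\ba[i],\ba[n])$ in $\pr_{\{i,n\}}\rel$ for \emph{every} $i$. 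Lifting an as-path through one pair relation $\pr_{\{i,n\}}\rel$ via the Maximality Lemma moves $d$ (and also $\bb_1[i]$!) in a direction dictated by that particular $i$; accommodating one pair can break another, and nothing in the sketch shows that the sequence of adjustments across all pairs can be reconciled. Moreover, the preliminary ``reduction to $\ba[n]$ as-maximal'' is not legitimate as stated: walking $\ba[n]$ upward in $\zA_n$ and transferring this through $\pr_{\{i,n\}}\rel$ necessarily drags $\ba[i]$ along, which may invalidate the hypothesis $\pr_{\{i,j\}}\ba\in\pr_{\{i,j\}}\rel$ for some $j\ne n$ and is forbidden outright when $i\in X$. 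This coordination across pairs is precisely the content of the theorem and must be proved, not described; until the attachment step is made rigorous, the proposal is an outline rather than a proof.
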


One useful implication of the 2-Decomposition Theorem~\ref{the:quasi-2-decomp} is the 
existence of term operation resembling a majority function. We state this 
theorem for finite classes of algebras rather than a single algebra, because
it concerns as-components that in subalgebras of products may have 
complicated structure.

\begin{theorem}\label{the:pseudo-majority}
Let $\cA$ be a finite class of finite similar sm-smooth algebras
omitting type~\one.
There is a term operation $\maj$\label{not:maj} of $\cA$ such that for any $\zA\in\cA$ and any 
$a,b\in\zA$, $\maj(a,a,b),\maj(a,b,a),\maj(b,a,a)\in\Filt_\zA^{as}(a)$.

In particular, if $a$ is as-maximal, then $\maj(a,a,b),\maj(a,b,a),\maj(b,a,a)$ 
belong to the as-component of $\zA$ containing $a$.
\end{theorem}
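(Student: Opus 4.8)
The plan is to build $\maj$ as a composition of two ingredients: a term that handles the "easy" types of edges along the natural asm-path structure, and the 2-Decomposition Theorem~\ref{the:quasi-2-decomp} to force the $\cR$-free, coordinatewise data back into a single tuple whose projections stay in the appropriate $\as$-filters. First I would recall that, by Proposition~\ref{pro:as-connectivity} and the machinery of Lemma~\ref{lem:thin} and Lemma~\ref{lem:thin-properties}, for any $a,b\in\zA$ there is a path in $\cG'(\zA)$ from $a$ to an as-maximal element, built from thin edges of the three types; what we need is one fixed term, independent of $\zA\in\cA$ and of $a,b$, that "moves $a$ up" past an arbitrary configuration of such edges. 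The uniformization results Theorem~\ref{the:uniform} and Lemma~\ref{lem:op-s-on-affine} give exactly the raw material: a single semilattice term $f$ (the dot of Proposition~\ref{pro:good-operation}), a single majority term $g$, a single affine term $h$, plus the "mixing" terms $p,h',t_{ab},h_{ab}$ that let a majority edge and an affine edge be treated simultaneously by one operation.

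The key steps, in order, would be: (i) Form the relation $\rel=\Sg{(a,a,b),(a,b,a),(b,a,a)}\le\zA^3$, viewed as a subdirect product of three copies of $\Sg{a,b}$ (using idempotence and closure of $\cA$ under subalgebras); a ternary term $\maj$ applied to $a,a,b$ etc.\ is the same as picking a designated element of $\rel$ via a term. (ii) Using $f$, $g$, $h$ and the mixing terms from Lemma~\ref{lem:op-s-on-affine}(3), construct a ternary term $m_0$ of $\cA$ such that for every $\zA\in\cA$ and every pair $a,b$, the three values $m_0(a,a,b),m_0(a,b,a),m_0(b,a,a)$ each lie in $\Filt^{as}_{\Sg{a,b}}(a)$ — this is the "one edge at a time" step and reduces to the case $\Sg{a,b}$ is generated by a single thin edge $ab$, where each of the three types is handled by $f$, $g$ (together with $t_{ab}$ to neutralize any affine coordinate), or $h$ (with $h_{ab}$, $h'$ to neutralize majority coordinates) respectively, exactly as in Lemmas~15--20 of \cite{Bulatov16:connectivity}; the point of the uniformization theorems is that a single such $m_0$ works across all of $\cA$. (iii) Now $\ba_0=(m_0(a,a,b),m_0(a,b,a),m_0(b,a,a))\in\rel$ has all three projections in $\amax(\pr_i\rel)$ after, if necessary, pushing further up with $f$ (each $\pr_i\rel$-filter of an element meets the as-component of a maximal element). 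Apply Theorem~\ref{the:quasi-2-decomp} with $X$ the full coordinate set $\{1,2,3\}$: since every pair-projection of $\ba_0$ already lies in the corresponding binary projection of $\rel$, and $\pr_X\ba_0\in\amax(\pr_X\rel)$, we get a genuine $\bb\in\rel$ with $\pr_i\bb\in\Filt^{as}_{\Sg{a,b}}(\pr_i\ba_0)\sse\Filt^{as}_\zA(a)$ for each $i$. Finally, $\bb=m'(a,a,b),m'(a,b,a),m'(b,a,a)$ for some ternary term $m'$ (the tuple $\bb$ is an element of the subalgebra generated by the generators, hence a term image), and $\maj:=m'$ is the desired operation; the "in particular" clause is immediate because $\Filt^{as}_\zA(a)=\as(a)$ when $a$ is as-maximal.

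The main obstacle I expect is step (ii): producing one term $m_0$ that simultaneously sends a semilattice edge, a majority edge, and an affine edge "upward" in the as-order, since these three behaviors pull in different directions — an affine coordinate is not moved by the semilattice dot, a majority coordinate is fixed by an affine Mal'tsev term, etc. The resolution is precisely the content of Theorem~\ref{the:uniform} and Lemma~\ref{lem:op-s-on-affine}: one chooses $f,g,h$ with the stabilization identities (1)--(3) of Theorem~\ref{the:uniform}, and then the mixing terms $p,h',t_{ab},h_{ab}$ let one "protect" each coordinate-type while acting nontrivially on another; composing them in the right order (handle affine edges with $h$ while $t_{ab},h_{ab}$ keep majority/affine coordinates put, then handle majority edges with $g$, then semilattice with $f$) yields $m_0$. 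A secondary subtlety is checking that the hypotheses of Theorem~\ref{the:quasi-2-decomp} are literally met — that every $2$-projection of $\ba_0$ lands in $\pr_J\rel$ — which holds because $\ba_0$ is by construction a term image of the three generators and so every projection of it is an element of the corresponding projection of $\rel$. Everything else is routine bookkeeping with thin edges and the filters $\Filt^{as}$.
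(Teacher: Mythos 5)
Your strategy has the right ingredient — the 2-Decomposition Theorem — but misapplies it, and the route through step~(ii) doesn't actually close the gap. Two concrete problems.

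First, in step~(iii) you apply Theorem~\ref{the:quasi-2-decomp} to the tuple $\ba_0=(m_0(a,a,b),m_0(a,b,a),m_0(b,a,a))$ with $X=\{1,2,3\}$ the whole coordinate set. But $\ba_0$ is a term image of the three generators, so it already lies in $\rel$; and with $X$ the full index set the theorem's conclusion $\pr_X\bb=\pr_X\ba_0$ forces $\bb=\ba_0$. The application is vacuous — it returns the tuple you fed it. The correct input to the theorem is the tuple $(a,a,a)$, which in general is \emph{not} in $\rel$: one checks its binary projections land in the binary projections of $\rel$ (because any two of the three coordinates of $\rel$ share a generator that is $a$ in both positions), takes $X$ small (say one as-maximal coordinate), and then the theorem produces a genuine $\bb\in\rel$ with $\bb[j]\in\Filt^{as}(a)$ for each $j$. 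That is where all the work is done; you do not need to construct anything resembling $m_0$ by hand from $f,g,h$ and the mixing terms, and doing so in step~(ii) is exactly the hard content you'd be assuming.

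Second, and more structurally: you fix a single pair $a,b\in\zA$ and form $\rel=\Sg{(a,a,b),(a,b,a),(b,a,a)}\le\zA^3$, so the term you extract at the end depends on that pair and that algebra. To obtain \emph{one} term $\maj$ that works for \emph{every} $\zA\in\cA$ and every $a,b$, the paper does something you did not: it enumerates all pairs $\{a_1,b_1\},\ldots,\{a_n,b_n\}$ across all algebras in $\cA$ and builds a single relation $\rel\le\zA_1^3\tms\zA_n^3$ generated by three long tuples, each $3$-coordinate block carrying the pattern $(a_i,a_i,b_i),(a_i,b_i,a_i),(b_i,a_i,a_i)$. The input tuple to Theorem~\ref{the:quasi-2-decomp} is the tuple with $a_i$ in every coordinate of block~$i$; the binary-projection hypothesis holds by the pigeonhole observation that any two coordinates share a generator that is $(a_{i_1},a_{i_2})$ there; and a single application of the theorem yields $\bb\in\rel$ whose defining ternary term is $\maj$ and is automatically uniform over the whole class. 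Your step~(ii) is an attempt to obtain that uniformity by composing $f$, $g$, $h$, and Lemma~\ref{lem:op-s-on-affine}, but you give no argument that such a composition actually carries $(a,a,b)$ etc.\ into $\Filt^{as}(a)$ for arbitrary (not necessarily thin-edge-generated) $\Sg{a,b}$, and it is not at all clear it can be done without re-deriving something equivalent to Theorem~\ref{the:quasi-2-decomp}.
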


\begin{proof}
Let $\{a_1,b_1\}\zd\{a_n,b_n\}$ be a list of all pairs of elements from 
algebras of $\cA$,
let $a_i,b_i\in\zA_i$. Define relation $\rel$ to be a subdirect product of 
$\zA_1^3\tms\zA_n^3$ generated by $\ba_1,\ba_2,\ba_3$, where for every 
$i\in[n]$, $\pr_{3i-2,3i-1,3i}\ba_1=(a_i,a_i,b_i)$, 
$\pr_{3i-2,3i-1,3i}\ba_2=(a_i,b_i,a_i)$, $\pr_{3i-2,3i-1,3i}\ba_3=(b_i,a_i,a_i)$.
In other words the triples $(\ba_1[3i-2],\ba_2[3i-2],\ba_3[3i-2])$,
$(\ba_1[3i-1],\ba_2[3i-1],\ba_3[3i-1])$, $(\ba_1[3i],\ba_2[3i],\ba_3[3i])$ 
have the form $(a_i,a_i,b_i),(a_i,b_i,a_i),(b_i,a_i,a_i)$, respectively. Therefore 
it suffices to show that $\rel$ contains a tuple $\bb$ such that 
$a_i\sqq_{as}\bb[j]$, where $j\in\{3i,3i-1,3i-2\}$. However, since 
$(a_{i_1},a_{i_2})\in\pr_{j_1j_2}\rel$ for any $i_1,i_2\in[n]$ and
$j_1\in\{3i_1,3i_1-1,3i_1-2\}$, $j_2\in\{3i_2,3i_2-1,3i_2-2\}$, this 
follows from the 2-Decomposition Theorem~\ref{the:quasi-2-decomp}.
\end{proof}

A function $\maj$ satisfying the properties from 
Theorem~\ref{the:pseudo-majority} will be called a 
\emph{quasi-majority function}\index{quasi-majority function}.

\subsection{Rectangularity}\label{sec:rectangularity}

Let $\rel$ be a subdirect product of $\zA_1,\zA_2$. By 
$\rel[c], \rel^{-1}[c']$\label{not:rel-of-a} for $c\in\zA_1,c'\in\zA_2$ we denote 
the sets $\{b\mid (c,b)\in\rel\}, \{a\mid (a,c')\in\rel\}$, respectively, and for 
$C\sse\zA_1,C'\sse\zA_2$ we use $\rel[C]=\bigcup_{c\in C}\rel[c]$, 
$\rel^{-1}[C']=\bigcup_{c'\in C'}\rel^{-1}[c']$, respectively. Binary relations 
$\tol_1,\tol_2$ on $\zA_1,\zA_2$ given by 
$\tol_1(\rel)=\{(a,b)\mid \rel[a]\cap\rel[b]\ne\eps\}$ and
$\tol_2(\rel)=\{(a,b)\mid \rel^{-1}[a]\cap\rel^{-1}[b]\ne\eps\}$\label{not:tol}, 
respectively, are called \emph{link tolerances}\index{link tolerances} of $\rel$. 
They are tolerances of $\zA_1$, $\zA_2$, respectively, that is invariant reflexive 
and symmetric relations. The transitive closures $\lnk_1,\lnk_2$\label{not:lnk} 
of $\tol_1(\rel),\tol_2(\rel)$ are called 
\emph{link congruences}\index{link congruences}, and they are, indeed,
congruences. Relation $\rel$ is said to be \emph{linked}\index{linked relation} 
if the link congruences are full congruences.

\begin{lemma}[\cite{Bulatov16:restricted}]\label{lem:as-rectangularity} 
Let $\rel$ be a subalgebra of $\zA_1\tm\zA_2$ and let $a\in\zA_1$ and 
$B=\rel[a]$. For any $b\in\zA_1$ such that $ab$ is thin edge, and any 
$c\in\rel[b]\cap B$, $\Filt^{as}_B(c)\sse\rel[b]$.
\end{lemma}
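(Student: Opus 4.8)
The plan is to strip the as-path one edge at a time. By definition an element $d\in\Filt^{as}_B(c)$ is joined to $c$ by a directed path $c=d_0\to d_1\to\dots\to d_m=d$ in $\cG'(B)$ whose every edge is a thin semilattice or a thin affine edge; since the whole path lies in $B$ we have $(a,d_i)\in\rel$ for all $i$. So, inducting on $m$, it is enough to prove the \emph{one-step claim}: if $z\in\rel[b]\cap B$ and $zw$ is a thin semilattice or thin affine edge of $B$ directed $z\to w$, then $(b,w)\in\rel$ (then also $w\in B$, so $w\in\rel[b]\cap B$ and we may continue). Here $(a,z),(a,w)\in\rel$ because $z,w\in B$, and $(b,z)\in\rel$ by the inductive hypothesis (base case $z=c$). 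I will also use without comment that a thin edge of $B$ is a thin edge of $\zA_2$ of the same type and with the same witnessing congruence of $\Sg{z,w}$, since $\Sg{z,w}$ and its term operations do not depend on whether we compute inside $B$ or $\zA_2$. The argument splits by the type of $ab$.

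Suppose $ab$ is a \emph{thin majority} edge. Then $\{a,b\}$ is a subalgebra of $\zA_1$ (sm-smoothness, the witnessing congruence being the equality) whose clone contains a majority operation but no semilattice operation, and on a two-element set such a clone has only the two projections among its binary operations. Since $ab$ and $zw$ have different types, Lemma~\ref{lem:op-s-on-affine}(3) provides a binary term operation $p$ with $p(a,b)=a$ and $p(z,w)=w$; as $p$ fixes $a$ on $\{a,b\}$ it is the first projection there, so $p(b,a)=b$, and $p\bigl((b,z),(a,w)\bigr)=(b,w)$, as wanted.

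Suppose $ab$ is a \emph{thin affine} edge, with witnessing congruence $\th$ of $\Sg{a,b}$; recall the uniform affine term $h$ has $h(b,a,a)=b$ and, by condition (b) of a thin affine edge, $b\in\Sg{a,b'}$ for every $b'\in b^\th$. If $zw$ is affine then $h\bigl((b,z),(a,z),(a,w)\bigr)=\bigl(h(b,a,a),h(z,z,w)\bigr)=(b,w')$ with $w':=h(z,z,w)\in w^{\th_{zw}}$ (Mal'tsev on the module $\Sg{z,w}/\th_{zw}$); choosing a term $q$ with $q(z,w')=w$ (condition (b) for $zw$ gives $w\in\Sg{z,w'}$) we get $(b,w)=q\bigl((b,z),(b,w')\bigr)\in\rel$. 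If $zw$ is semilattice, then again $ab,zw$ differ in type, Lemma~\ref{lem:op-s-on-affine}(3) gives $p$ with $p(a,b)=a$, $p(z,w)=w$; since $p$ fixes $a^\th$ in the one-dimensional module $\Sg{a,b}/\th$ it acts there as the first projection, so $b':=p(b,a)\in b^\th$, hence $p\bigl((b,z),(a,w)\bigr)=(b',w)\in\rel$, and picking $q$ with $q(a,b')=b$ gives $(b,w)=q\bigl((a,w),(b',w)\bigr)\in\rel$.

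Finally, suppose $ab$ is a \emph{thin semilattice} edge, say $a\le b$, so the fixed multiplication satisfies $a\cdot b=b$ and $\{a,b\}$ is a subalgebra. If $zw$ is semilattice, $(a,w)\cdot(b,z)=(a\cdot b,w\cdot z)=(b,w)$ (here $w\cdot z=z\cdot w=w$). The remaining combination, $ab$ semilattice and $zw$ affine, is the one real obstacle: no single term operation selects $b$ from the semilattice edge $ab$ and is simultaneously Mal'tsev on the affine edge $zw$. I would handle it by starting from $h\bigl((b,z),(a,z),(a,w)\bigr)=\bigl(h(b,a,a),w'\bigr)$ with $h(b,a,a)\in\{a,b\}$ and $w':=h(z,z,w)\in w^{\th_{zw}}$, and splitting on the value of $h(b,a,a)$. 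If $h(b,a,a)=b$ then $(b,w')\in\rel$ and we finish exactly as in the affine/affine case via $w\in\Sg{z,w'}$. If $h(b,a,a)=a$ then $(a,w')\in\rel$, and multiplying, $(a,w')\cdot(b,z)=(a\cdot b,w'\cdot z)=(b,e)$ with $e:=w'\cdot z$; because $zw$ is affine there is no semilattice operation on $\{z^{\th_{zw}},w^{\th_{zw}}\}$, so the pair $w',w'\cdot z$ cannot leave its $\th_{zw}$-class and $e\in w^{\th_{zw}}$; then $w\in\Sg{z,e}$, so taking $q$ with $q(z,e)=w$ gives $(b,w)=q\bigl((b,z),(b,e)\bigr)\in\rel$. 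Thus the crux is this last case, and more generally the bookkeeping of matching, for each pair of edge types, an operation that picks out $b$ on the first-coordinate edge with one that lands in the correct $\th_{zw}$-class on the second, the final correction always coming from condition (b) in the definition of a thin affine edge.
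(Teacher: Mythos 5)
Your overall strategy --- reducing to a one-step claim along the as-path and then splitting by the type of $ab$ --- matches the paper's. The paper in fact defers the $a\le b$ and $ab$-affine cases to \cite{Bulatov16:restricted} and only writes out the majority case; you prove all of them, including a further split by the type of $zw$, which is a welcome, more self-contained treatment. Your semilattice and affine $ab$ cases check out, and the semilattice-$ab$/affine-$zw$ crux, where you branch on $h(b,a,a)\in\{a,b\}$ and then argue that $w'\cdot z$ stays in the $\th_{zw}$-class of $w'$, is sound (via Proposition~\ref{pro:good-operation}: $(w',w'\cdot z)$ is either degenerate or a thin semilattice edge inside $\Sg{z,w}$, and the latter would pass to the quotient module $\Sg{z,w}\fac{\th_{zw}}$, which has no semilattice edges).

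The majority case, however, has a genuine gap. You assert that ``$\{a,b\}$ is a subalgebra of $\zA_1$ (sm-smoothness, the witnessing congruence being the equality)'' and conclude $p(b,a)=b$. But the witnessing congruence $\th$ of a thin \emph{majority} edge is not the equality relation by definition --- only thin \emph{semilattice} edges have that built in --- so $\{a,b\}$ need not be a subuniverse and $p(b,a)=b$ need not hold. What you actually get is weaker: since $ab$ is a majority edge there is no semilattice term operation on $\{a^\th,b^\th\}$, hence $p\fac\th$ is a projection on $\{a^\th,b^\th\}$, and $p(a,b)=a$ forces the first projection, so $b':=p(b,a)\in b^\th$. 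Then $p\bigl((b,z),(a,w)\bigr)=(b',w)\in\rel$, and condition (b) of the definition of a thin majority edge ($b\in\Sg{a,c}$ for every $c\in b^\th$) gives a term $q$ with $q(a,b')=b$, whence $(b,w)=q\bigl((a,w),(b',w)\bigr)\in\rel$. This is exactly the passage through $b'\in b^\th$ followed by condition (b) that you already carry out in the affine-$ab$/semilattice-$zw$ sub-case; the majority case needs the same repair.
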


\begin{proof}
The case when $a\le b$ or $ab$ is affine is considered in 
\cite{Bulatov16:restricted}, so suppose that $ab$ is majority.
Let $D=\Filt^{as}_B(c)\cap\rel[b]$. Set $D$
is nonempty, as $c\in D$. If $D\ne \Filt^{as}_B(c)$, there are $b_1\in D$ 
and $b_2\in\Filt^{as}_B(c)-D$
such that $b_1b_2$ is a thin edge. By Lemma~\ref{lem:op-s-on-affine}(3) 
there is a term operation $p$ such that $p(a,b)=b$ and $p(b_2,b_1)=b_2$. 
Then $\cl b{b_2}=p\left(\cl a{b_2},\cl b{b_1}\right)\in\rel$.
The result follows.
\end{proof}

\begin{prop}[\cite{Bulatov16:restricted}]\label{pro:max-gen} 
Let $\rel\le\zA_1\tm \zA_2$ be a linked subdirect product and let $B_1,B_2$ be 
as-components of $\zA_1,\zA_2$, respectively, such that 
$\rel\cap(B_1\tm B_2)\ne\eps$. Then $B_1\tm B_2\sse\rel$.
\end{prop}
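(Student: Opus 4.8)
The plan is to prove the equivalent statement that $\rel':=\rel\cap(B_1\tm B_2)$ is all of $B_1\tm B_2$. Since $\rel'$ is an intersection of two subuniverses, it is a subalgebra of $B_1\tm B_2$; I will freely use that, $B_i$ being an as-component of $\zA_i$, the digraph $\cG'(B_i)$ with the majority arcs deleted is strongly connected, $\Filt^{as}_{B_i}(x)=B_i$ for every $x\in B_i$, and a pair from $B_i$ is a thin semilattice or affine edge of $B_i$ exactly when it is such an edge of $\zA_i$ (one direction is Lemma~\ref{lem:thin-properties}(1), the other is immediate because all the data defining such an edge lives inside the subalgebra generated by its two endpoints).

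First I would check that $\rel'$ is a subdirect product of $B_1\tm B_2$. Fix $(a_0,b_0)\in\rel\cap(B_1\tm B_2)$ and let $c\in B_2$; pick a directed as-path $b_0=c_0,c_1\zd c_k=c$ inside $B_2$. By induction on $i$, each $c_i$ has a preimage $a_i\in B_1$ under $\rel$: it holds for $i=0$, and if $(a_i,c_i)\in\rel$ with $a_i\in B_1$ then applying Lemma~\ref{lem:to-max}(1) when $c_ic_{i+1}$ is a thin semilattice edge, or Lemma~\ref{lem:to-max}(3) when it is a thin affine edge, to $\rel\le\zA_1\tm\zA_2$ with $I=\{2\}$ yields $(a_{i+1},c_{i+1})\in\rel$ with $a_i\sqq_{as}a_{i+1}$; as $a_i$ is as-maximal this forces $a_{i+1}\in\as(a_i)=B_1$. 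Hence $\pr_2\rel'=B_2$, and symmetrically $\pr_1\rel'=B_1$.

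The core of the proof is the claim that all fibres $\rel'[a]$, $a\in B_1$, coincide; granting it, their common value is $\bigcup_{a\in B_1}\rel'[a]=\pr_2\rel'=B_2$, so $\rel'=B_1\tm B_2\sse\rel$. Because $\cG'(B_1)$ with majority arcs removed is strongly connected, it is enough to prove $\rel'[x]\sse\rel'[y]$ whenever $xy$ is a thin (semilattice or affine) edge of $B_1$. The instrument is Lemma~\ref{lem:as-rectangularity} applied to $\rel'\le B_1\tm B_2$: if $z\in\rel'[x]\cap\rel'[y]$ then $\Filt^{as}_{\rel'[x]}(z)\sse\rel'[y]$. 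In particular, once some fibre is full, say $\rel'[x]=B_2$, then for every thin edge $xy$ of $B_1$ any $z\in\rel'[y]\sse B_2=\rel'[x]$ gives $B_2=\Filt^{as}_{B_2}(z)\sse\rel'[y]$, so $\rel'[y]=B_2$; propagating this along directed as-paths in $B_1$ makes every fibre full, and we are done.

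What remains --- and this is the main obstacle --- is to get the induction started by producing a single full fibre. This is exactly where the hypothesis that $\rel$ is linked is used: dropping it, $\rel$ can be the graph of a nontrivial isomorphism between two copies of an affine as-component, whose fibres are singletons satisfying no inclusion, and the conclusion fails. I would argue by choosing $a^*\in B_1$ with $\rel'[a^*]$ maximal under inclusion and deriving a contradiction from $\rel'[a^*]\ne B_2$: since $\Filt^{as}_{B_2}(\rel'[a^*])=B_2$ there is a thin edge $cd$ of $B_2$ escaping $\rel'[a^*]$, i.e.\ $c\in\rel'[a^*]$ and $d\notin\rel'[a^*]$, and lifting this edge through $\rel'$ from $(a^*,c)$ by Lemma~\ref{lem:to-max}(1) or~(3) produces some $(a',d)\in\rel'$. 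One then has to show $\rel'[a^*]\sse\rel'[a']$, which contradicts the maximality of $\rel'[a^*]$ because $d\in\rel'[a']\setminus\rel'[a^*]$. Establishing this containment is the point at which the linkedness of $\rel$ (which, using that as-paths lift through $\rel$ by Lemma~\ref{lem:to-max}(4), one first transfers to $\rel'$) has to be combined with the rectangularity of Lemma~\ref{lem:as-rectangularity} and with the as-component structure of the fibres of $\rel'$; I expect this to be the most delicate part of the argument.
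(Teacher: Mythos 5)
The paper does not prove this proposition at all: it is imported verbatim from \cite{Bulatov16:restricted}, so there is no internal proof to compare yours against. Taking your argument on its own merits, the framework is sound as far as it goes: the reduction to $\rel'=\rel\cap(B_1\tm B_2)$, the verification that $\rel'$ is a subdirect product of $B_1\tm B_2$ via Lemma~\ref{lem:to-max}(1) and~(3), the preservation of thin edges and as-connectivity when passing between $B_i$ and $\zA_i$, and the propagation step via Lemma~\ref{lem:as-rectangularity} (once some fibre equals $B_2$, all fibres do) are all correct.

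The problem is that you never finish the proof, and you say so yourself. The step you label as ``the most delicate part'' --- showing $\rel'[a^*]\sse\rel'[a']$ when $a^*$ has a maximal fibre and $a'$ is obtained by lifting a thin edge escaping $\rel'[a^*]$ --- is exactly where the content of the proposition lives. Everything you have actually carried out is valid for \emph{any} subdirect product $\rel$; the linkedness hypothesis has not been invoked in any completed step. Yet, as your own isomorphism-of-modules example shows, the conclusion is false without linkedness, so the hypothesis must do real work somewhere. The difficulty with the step you leave open is not cosmetic: applying Lemma~\ref{lem:as-rectangularity} at the thin edge $a^*a'$ to a point $z\in\rel'[a^*]\cap\rel'[a']$ only yields $\Filt^{as}_{\rel'[a^*]}(z)\sse\rel'[a']$, and $\rel'[a^*]$ is merely a subalgebra of the as-component $B_2$; there is no a priori reason for it to be as-connected, so this filter can be a proper subset of $\rel'[a^*]$ and the maximality contradiction does not follow. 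The parenthetical suggestion that one should ``first transfer'' linkedness from $\rel$ to $\rel'$ is also unsubstantiated: a linking chain in $\rel$ between $a^*$ and $a'$ can pass through pairs outside $B_1\tm B_2$, and lifting via Lemma~\ref{lem:to-max}(4) produces as-paths, which are not the same thing as linking chains. Until you exhibit an actual mechanism by which the link congruence of $\rel$ being full forces a fibre of $\rel'$ to be all of $B_2$, the proof has a genuine gap, and it is not a small one --- it is the point of the statement.
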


\begin{corollary}[The Rectangularity Corollary]\label{cor:linkage-rectangularity}
Let $\rel$ be a subdirect product of $\zA_1$ and $\zA_2$, $\lnk_1,\lnk_2$ the link 
congruences, and let $B_1,B_2$ be as-components of a $\lnk_1$-block and a 
$\lnk_2$-block, respectively, such that $\rel\cap(B_1\tm B_2)\ne\eps$. Then 
$B_1\tm B_2\sse\rel$.
\end{corollary}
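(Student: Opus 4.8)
The plan is to reduce the Rectangularity Corollary to Proposition~\ref{pro:max-gen} by passing to the quotient by the link congruences. First I would fix $\lnk_1,\lnk_2$-blocks $\zB_1\sse\zA_1$ and $\zB_2\sse\zA_2$ with $B_1$ an as-component of $\zB_1$ and $B_2$ an as-component of $\zB_2$, and set $\relo=\rel\cap(\zB_1\tm\zB_2)$. The key first step is to observe that $\relo$ is a subdirect product of $\zB_1\tm\zB_2$: by definition of the link congruences, if $(a,b)\in\rel$ with $a\in\zB_1$, then for any $a'\in\zB_1$ we have $(a,a')\in\lnk_1$, hence $a'$ is joined to $a$ by a chain of link-tolerance steps, each of which produces a common neighbour; chasing such a chain through $\rel$ (and using that all the intermediate first coordinates lie in the same $\lnk_1$-block $\zB_1$, while the second coordinates may roam but stay inside $\pr_2\rel=\zA_2$) shows $a'\in\pr_1\relo$ once we also intersect with the correct second block --- more precisely one shows $\pr_1\relo=\zB_1$ and $\pr_2\relo=\zB_2$ by a symmetric argument, possibly after first checking $\rel\cap(\zB_1\tm\zB_2)\ne\eps$ forces the blocks to "match up." Here I would use that $\zB_1,\zB_2$ are subalgebras (blocks of congruences of idempotent algebras) so $\relo$ is again an algebra.

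Second, I would show $\relo$ is \emph{linked} as a subdirect product of $\zB_1\tm\zB_2$. This is essentially definitional: the link congruences of $\relo$ are generated by $\tol_1(\relo),\tol_2(\relo)$, and since $\zB_1$ is a single $\lnk_1(\rel)$-block, any two elements of $\zB_1$ are connected by a $\tol_1(\rel)$-chain; one must check that such a chain can be taken inside $\zB_1$ with the witnessing common neighbours inside $\zB_2$, which again follows by keeping track of which blocks the coordinates lie in along the chain. Thus $\lnk_1(\relo)=\zo_{\zB_1}$ and symmetrically $\lnk_2(\relo)=\zo_{\zB_2}$, i.e.\ $\relo$ is linked.

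Third, with $\relo\le\zB_1\tm\zB_2$ linked and $B_1,B_2$ as-components of $\zB_1,\zB_2$ with $\relo\cap(B_1\tm B_2)=\rel\cap(B_1\tm B_2)\ne\eps$, Proposition~\ref{pro:max-gen} applies directly and yields $B_1\tm B_2\sse\relo\sse\rel$, which is the claim.

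\textbf{Main obstacle.} The delicate point is the bookkeeping in the first two steps: verifying that a $\tol_i(\rel)$-chain (or $\lnk_i(\rel)$-witness) connecting two elements of the block $\zB_i$ can be realized \emph{within} the sub-relation $\relo=\rel\cap(\zB_1\tm\zB_2)$, i.e.\ that the intermediate tuples and their common neighbours stay in the prescribed blocks. The clean way to see this is to note that $(a,a')\in\lnk_1(\rel)$ means $a,a'$ lie in a common $\lnk_1$-block, and to run the link-congruence argument \emph{inside} the algebra $\relo$ from the start: once $\relo$ is known to be subdirect, its own link congruences are the restrictions of those of $\rel$, so the claim "$\zB_i$ is a single $\lnk_i(\relo)$-block" is immediate. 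Hence the real content is just establishing subdirectness of $\relo$, and I expect that to be a short diagram chase using the transitivity of $\lnk_i$ and idempotence; everything else is a citation of Proposition~\ref{pro:max-gen}.
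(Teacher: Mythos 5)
Your proposal is correct and is exactly the intended (implicit) proof: the paper states this as a corollary of Proposition~\ref{pro:max-gen} without a written proof, and the reduction is precisely to restrict $\rel$ to the matching pair of $\lnk_1$-, $\lnk_2$-blocks, observe that the restriction is a linked subdirect product of those blocks, and cite the proposition. The bookkeeping you flag as the main obstacle (that $\tol_i(\rel)$-chains between elements of one block stay inside the block and their witnesses land in the matching block) does go through cleanly, since $\tol_i\sse\lnk_i$ and the $\lnk_1$-/$\lnk_2$-blocks are in bijective correspondence via $\rel$.
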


\begin{prop}\label{pro:umax-rectangular}
Let $\rel$ be a subdirect product of $\zA_1$ and $\zA_2$, $\lnk_1,\lnk_2$ the link 
congruences, and let $B_1$ be an as-component of a $\lnk_1$-block and 
$B'_2=\rel[B_1]$; let $B_2=\umax(B'_2)$. Then $B_1\tm B_2\sse\rel$.
\end{prop}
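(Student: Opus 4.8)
The plan is to bootstrap from Proposition~\ref{pro:max-gen}/Corollary~\ref{cor:linkage-rectangularity}. The difficulty is that $B_2=\umax(B'_2)$ need not be contained in a single $\lnk_2$-block, so the Rectangularity Corollary does not apply directly. First I would pass to the link congruences: let $C_1$ be the $\lnk_1$-block containing $B_1$ and observe that since $B_1$ is an as-component of $C_1$ and $B_1\sse\umax$ is not required, we at least know $B_1\sse C_1$. Now set $B'_2=\rel[B_1]$. Using Lemma~\ref{lem:as-rectangularity}, for a fixed $a\in B_1$ and $B=\rel[a]$, any thin edge $ab$ with $b\in B_1$ pushes $\Filt^{as}_B(c)$ into $\rel[b]$ for $c\in\rel[b]\cap B$; iterating along a directed as-path inside the as-component $B_1$ (which exists by Proposition~\ref{pro:as-connectivity}(2)) gives that $\rel[a]$ and $\rel[b]$ have the ``same'' as-structure up in $\zA_2$, in the sense that $\Filt^{as}_{B'_2}$ of a common element lands in both. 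The upshot I would aim for: for every $a\in B_1$, $\umax(\rel[a])=B_2$ (they all have a common u-maximal / as-maximal part), so it suffices to show $B_2\sse\rel[a]$ for one fixed $a\in B_1$, and then rectangularity within $B_1$ finishes.

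The core step is therefore to show $\umax(B'_2)\sse\rel[a]$ for a single $a\in B_1$. Here I would use Lemma~\ref{lem:to-max}(5) applied to $\rel$ as a subdirect product of $\zA_1\tm\zA_2$ restricted appropriately: take $\bb\in\umax(\rel)$ whose second coordinate is a prescribed u-maximal element of $B'_2$; Lemma~\ref{lem:to-max}(5) guarantees such $\bb$ exists with $\pr_2\bb$ any chosen element of $\umax(\pr_2\rel')$ where $\rel'=\rel\cap(B_1\tm\zA_2)$ — one must first check $\rel'$ is still subdirect onto $B_1$ and onto $B'_2$ by construction. Then $\pr_1\bb\in\umax(B_1)$; but $B_1$ is an as-component (maximal among components with majority edges deleted), so every element of $B_1$ is as-maximal, and by Proposition~\ref{pro:as-connectivity}(2) there is a directed as-path in $\cG'(B_1)$ from $\pr_1\bb$ to $a$. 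Lifting this as-path via Lemma~\ref{lem:to-max}(4) inside $\rel'$, starting from $\bb$, we obtain $\bb'\in\rel'$ with $\pr_1\bb'=a$ and $\pr_2\bb'\in\Filt^{as}_{B'_2}(\pr_2\bb)$; since $\pr_2\bb$ was u-maximal in $B'_2$, $\pr_2\bb'$ lies in the (unique) as-component of $B'_2$, i.e.\ in $\umax(B'_2)=B_2$. Running this over all u-maximal elements of $B'_2$ as the chosen second coordinate shows $B_2\sse\rel[a]$.

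Once $B_2\sse\rel[a]$ for one $a\in B_1$, I would repeat the as-path-lifting argument in the first coordinate: for any other $a^*\in B_1$, an as-path from $a$ to $a^*$ inside $B_1$ lifts (Lemma~\ref{lem:to-max}(4)) through any tuple $(a,b)$ with $b\in B_2$, producing $(a^*,b^*)\in\rel$ with $b^*\in\Filt^{as}_{B'_2}(b)$; but $b\in B_2$ is already in the top as-component of $B'_2$, so $b^*\in B_2$ as well, and in fact $b\sqq_{as}b^*$ and $b^*\sqq_{as}b$ inside $B_2$. To get the full product $B_1\tm B_2\sse\rel$ rather than just ``each $\rel[a]\supseteq$ some rotation of $B_2$'', I would combine this with Lemma~\ref{lem:as-rectangularity}: having $(a,b)\in\rel$ with $a,b$ both as-maximal in their respective ambient algebras and $a a^*$ a thin edge in $B_1$, Lemma~\ref{lem:as-rectangularity} gives $\Filt^{as}_{B_2}(b)\sse\rel[a^*]$, and since $B_2$ is a single as-component this filter is all of $B_2$. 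Chaining along a directed as-path from $a$ to $a^*$ in $B_1$ then yields $B_2\sse\rel[a^*]$ for every $a^*\in B_1$, which is exactly $B_1\tm B_2\sse\rel$.

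**Main obstacle.** The delicate point is the interplay between ``u-maximal'' (asm-paths, majority allowed) and ``as-maximal'' (majority edges deleted): $B_1$ is an as-component so all its elements are as-maximal, but $B_2=\umax(B'_2)$ is defined with asm-paths, and I need $\umax(B'_2)$ to coincide with, or at least be contained in, a single as-component of $B'_2$ so that the as-path lifting lemmas apply. The excerpt records that there is only one u-maximal component and (via Proposition~\ref{pro:as-connectivity}, Lemma~\ref{lem:u-max-congruence}, Lemma~\ref{lem:thin}) that u-maximal elements of subalgebras behave hereditarily; the real work is verifying that inside $B'_2$ the u-maximal set is as-path-closed, so that Lemma~\ref{lem:to-max}(4),(5) can be invoked with as-paths rather than asm-paths. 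I expect that to be the one step requiring genuine care; everything else is assembling Proposition~\ref{pro:max-gen}, Lemma~\ref{lem:as-rectangularity}, and the Maximality Lemma.
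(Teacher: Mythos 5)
Your overall plan — use the Rectangularity Corollary once to seed a single $\rel$-pair and then propagate along thin edges via Lemma~\ref{lem:as-rectangularity} — is the right idea, and it is in fact the idea the paper uses. But there are two genuine gaps in your execution, and the paper's actual proof is organised differently in a way that sidesteps both.

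First gap: in the ``core step'' you take $\bb\in\umax(\rel')$ with $\pr_2\bb=b$ a prescribed u-maximal element of $B'_2$, then lift an as-path in the first coordinate from $\pr_1\bb$ to $a$ via Lemma~\ref{lem:to-max}(4). This produces $\bb'$ with $\pr_1\bb'=a$, but Lemma~\ref{lem:to-max}(4) only promises that the second coordinate moves along an as-path, so $\pr_2\bb'$ is some u-maximal element, not $b$. Running over all $b$ therefore shows $\rel[a]$ meets $\umax(B'_2)$, not that it contains it — the conclusion $B_2\sse\rel[a]$ simply does not follow. Second gap: you several times identify $B_2=\umax(B'_2)$ with ``the unique as-component'' of $B'_2$, e.g.\ in the final chaining step you claim $\Filt^{as}_{B_2}(b)=B_2$. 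But u-maximality is defined via asm-paths (majority edges included); $\umax(B'_2)$ is a single asm-strongly-connected component and is closed under taking $\Filt^{as}$, yet it may split into several distinct as-components. So $\Filt^{as}_{B_2}(b)$ need not exhaust $B_2$, and this invalidates the last link of your chain.

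The paper's proof avoids the first coordinate entirely at the seed step. Since $B_1$ lies in one $\lnk_1$-block, $B'_2=\rel[B_1]$ lies in a single $\lnk_2$-block $C$; Lemma~\ref{lem:to-max}(5) supplies an $a\in B'_2$ that is as-maximal in $C$, and the Rectangularity Corollary then gives $B_1\tm\{a\}\sse\rel$ directly (no lifting, no loss of the second coordinate). After that one does the induction in the second coordinate: assuming $B_1\tm D\sse\rel$, pick a thin edge $b_1b_2$ in $B'_2$ with $b_1\in D$, $b_2\notin D$; applying Lemma~\ref{lem:as-rectangularity} to $\rel$ with the roles of $\zA_1,\zA_2$ swapped, using that $B_1\sse\rel^{-1}[b_1]$ and $B_1$ is an as-component of anything containing it, yields $B_1\sse\rel^{-1}[b_2]$. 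This covers $\Filt^{asm}_{B'_2}(a)\supseteq\umax(B'_2)$ and never needs $\umax(B'_2)$ to be one as-component — thin edges of majority type are legal in the induction because Lemma~\ref{lem:as-rectangularity} admits them. Reorienting your chaining step into the second coordinate, and dropping the flawed ``core step'', turns your argument into the paper's.
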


\begin{proof}
Let $B'_2$ be a subset of a $\lnk_2$-block $C$. By the Maximality 
Lemma~\ref{lem:to-max}(5) 
$B'_2$ contains an as-maximal element $a$ of $C$. By the Rectangularity 
Corollary~\ref{cor:linkage-rectangularity} $B_1\tm\{a\}\sse\rel$. It then
suffices to show that $B_1\tm\Filt^{asm}_{B'_2}(a)\sse\rel$.

Suppose for $D\sse\Filt^{asm}_{B'_2}(a)$ it holds $B_1\tm D\sse\rel$. 
If $D\ne \Filt^{asm}_{B'_2}(a)$, there are $b_1\in D$ and 
$b_2\in\Filt^{asm}_{B'_2}(a)-D$
such that $b_1b_2$ is a thin edge. By Lemma~\ref{lem:as-rectangularity}
$B_1\tm\{b_2\}\sse\rel$; the result follows.
\end{proof}

We complete this section with a technical lemma that will be useful later.

\begin{lemma}\label{lem:as-square-congruence}
Let $\zA$ be an algebra and $C$ its as-component such that $\zA=\Sg{C}$,
let $\rel=\zA\tm\zA=\Sg{C\tm C}$, and let $\beta$ be a congruence 
of $\rel$. Then for some $a,b\in C$, $a\ne b$ 
the pair $(a,b)$ is as-maximal in a $\beta$-block.
\end{lemma}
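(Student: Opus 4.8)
The plan is to reach an off-diagonal as-maximal pair of a $\beta$-block by combining Lemma~\ref{lem:as-product}, the fact that an as-component is internally strongly connected, and a lifting argument through $\beta$. We may assume $|\zA|\ge 2$, so that $|C|\ge 2$ by idempotency and $\zA=\Sg C$, and off-diagonal pairs of $C\tm C$ exist. By Lemma~\ref{lem:as-product}, applied with $\zA_1=\zA_2=\zA$ and both as-components equal to $C$, the set $C\tm C$ is an as-component of $\rel$; hence each of its elements is as-maximal in $\rel$, and any two of them are joined by an as-path lying inside $C\tm C$. The basic observation, used throughout, is: if $B$ is a $\beta$-block meeting $C\tm C$, $(a,b)\in B\cap(C\tm C)$, and $(a',b')$ is an as-maximal element of the algebra $B$ reached from $(a,b)$ by an as-path in $B$, then $(a',b')\in C\tm C$, and in fact the whole as-path lies in $C\tm C$ --- because $(a,b)\in\amax(\rel)$ forces $(a',b')\sqq_{as}(a,b)$ in $\rel$, so $(a',b')$ lies in the as-component $C\tm C$ of $(a,b)$. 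Thus it suffices to find a $\beta$-block $B$ in which some off-diagonal pair of $C\tm C$ is as-maximal.

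Two situations give this directly. If some off-diagonal $(a,b)\in C\tm C$ has $B=(a,b)^\beta$ disjoint from the diagonal $\Delta=\{(x,x)\mid x\in\zA\}$, then the as-maximal element of $B$ reached from $(a,b)$ lies in $C\tm C$ by the observation and is off-diagonal since $B$ contains no diagonal element. If $C\tm C$ is contained in a single $\beta$-block $B$, then, $C\tm C$ being an as-component of $\rel$ with $C\tm C\sse B$, its internal as-paths stay in $B$, so $C\tm C$ lies in one as-component of $B$, which is maximal in $B$ by the observation (applied inside $B\sse\rel$); any off-diagonal pair of $C\tm C$ then works. So we are left with the case that $C\tm C$ meets several $\beta$-blocks and that every off-diagonal pair of $C\tm C$ is $\beta$-related to a diagonal pair. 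In that case, since $C\tm C$ generates $\rel$, the image of $C\tm C$ in $\rel\fac\beta$ --- hence all of $\rel\fac\beta$ --- is contained in the image of $\Delta$, so $\rel\fac\beta\cong\zA\fac\th$ with $\th=\beta\cap\Delta^2$ (identifying $\Delta$ with $\zA$), and every $\beta$-block contains a diagonal pair.

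To finish this last case I would pass as-structure back and forth between $\rel$ and $\rel\fac\beta$. Thin semilattice edges lift: for a thin semilattice edge $u^\beta v^\beta$ of $\rel\fac\beta$, the element $v'=f(u,v)$, where $f$ is the operation of Proposition~\ref{pro:good-operation}, lies in $v^\beta$, and since $v'\ne u$ (as $u^\beta\ne v^\beta$) the pair $uv'$ is a thin semilattice edge of $\rel$ by that proposition; I would prove the analogous lifting for thin affine edges from the uniform operations of Theorem~\ref{the:uniform} and Lemma~\ref{lem:thin}. Consequently every as-path of $\rel\fac\beta$ starting at a block $(a,b)^\beta$ lifts to an as-path of $\rel$ starting at $(a,b)$, and --- using as-maximality of $C\tm C$ in $\rel$ --- the as-component $\cB$ of $\rel\fac\beta$ carrying the image of $C\tm C$ is as-maximal in $\rel\fac\beta$. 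It remains to extract from $\cB$ an actual $\beta$-block $B$ in which an off-diagonal pair of $C\tm C$ is as-maximal \emph{in} $B$. I expect this to be the main obstacle: for a general block, the as-path of $C\tm C$ returning to an off-diagonal pair need not stay inside the block, so $B$ must be chosen carefully. My plan is to pick $B$ minimal --- say with $B\cap(C\tm C)$ of least cardinality --- among blocks containing an off-diagonal pair of $C\tm C$, and to show, using the lifting statements and the fact that every block contains a diagonal pair, that $B\cap(C\tm C)$ is then strongly connected in the induced as-graph of $C\tm C$; since by the observation $B\cap(C\tm C)$ is an upward-closed set of $\amax(\rel)$ inside $B$, this makes it a maximal as-component of $B$, so every off-diagonal pair of $B\cap(C\tm C)$ is as-maximal in $B$, finishing the proof.
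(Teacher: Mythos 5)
Your approach is genuinely different from the paper's, and it starts well, but it has a real gap in the final case that you yourself flag as unresolved. Let me be precise.

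Your Cases~1 and~2 are correct: the ``basic observation'' (that any as-path in a $\beta$-block $B$ starting at a point of $C\tm C$ stays inside $C\tm C$, since $C\tm C$ is an as-component of $\rel$ by Lemma~\ref{lem:as-product}) is sound, and it does settle both the case of a block disjoint from $\Delta$ and the case $C\tm C\sse B$. The problem is Case~3, which is not vacuous. For instance, take $\zA$ a module and $\beta$ one of the two projection kernels $\al\tm\zo_\zA$ or $\zo_\zA\tm\al$; then every $\beta$-block containing a pair of $C\tm C$ also contains a diagonal pair, and $C\tm C$ meets several blocks, so you land squarely in Case~3. Your plan for this case --- pick a $\beta$-block $B$ with $|B\cap(C\tm C)|$ minimal and argue that $B\cap(C\tm C)$ is a single as-component of $B$ --- is only a plan: the basic observation gives you that $B\cap(C\tm C)$ is upward-closed in the as-order of $B$, hence a union of as-components of $B$, but nothing you wrote forces it to be a \emph{single} as-component, nor does minimality of $|B\cap(C\tm C)|$ obviously help. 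The real danger in Case~3 is that every as-component of $B$ inside $C\tm C$ might consist entirely of diagonal pairs, and you have not given any argument ruling this out. (In the module example this is avoided because the block is itself a module, but your proof does not invoke any such structural fact.)

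The paper's own proof goes a different route that sidesteps this local block analysis. It first establishes a Claim: for any two congruences $\beta,\gm$ of $\rel$ with $\beta\join\gm=\zo_\rel$, one has $\beta\red{C^2}\circ\gm\red{C^2}=C^2\tm C^2$, via Proposition~\ref{pro:max-gen} applied to the subdirect product $\relo\sse\rel\fac\beta\tm\rel\fac\gm$ encoding block incidence. It then picks a maximal congruence $\al$ of $\zA$, sets $\gm_1=\al\tm\zo_\zA$ and $\gm_2=\zo_\zA\tm\al$, and splits on whether $\beta\join(\gm_1\meet\gm_2)=\zo_\rel$. In the first case the Claim directly produces an off-diagonal as-maximal pair in a $\beta$-block. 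In the second case it passes to the simple quotient $\zA\fac\al$ and invokes Kearnes' classification of simple idempotent algebras (the simple algebra is a module, or has an absorbing element --- excluded because $C$ is a nontrivial as-component --- or has only the two projection congruences), then handles the remaining subcases (skew congruence, $\beta\le\gm_1\meet\gm_2$, $\beta\join\gm_2=\zo_\rel$) explicitly. This top-down structural argument delivers, in every subcase, a concrete $\beta$-block and an off-diagonal pair as-maximal in it --- precisely the conclusion your Case~3 was unable to reach. If you want to rescue your approach you would need some structural input of comparable strength to close Case~3; as written, the proposal is an honest partial proof with an acknowledged hole.
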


\begin{proof}

We start with a general claim.

\medskip

{\sc Claim.} If $\beta,\gm\in\Con(\rel)$ are such that 
$\beta\join\gm=\zo_\rel$,  then 
$\beta\red{C^2}\circ\gm\red{C^2}=\gm\red{C^2}\circ\beta\red{C^2}
=C^2\tm C^2$. 

\smallskip

Let $\rel_1\sse\rel\fac\beta\tm\rel$, $\rel_2\sse\rel\fac\gm\tm\rel$
be given by
$$
\rel_1=\{(a^\beta,a)\mid a\in\rel\},\qquad 
\rel_2=\{(a^\gm,a)\mid a\in\rel\}.
$$ 
Consider a subdirect product of $\rel\fac\beta\tm\rel\fac\gm$ defined 
as follows
$$
\relo(x,y)=\exists z\in\rel\ \ \rel_1(x,z)\meet\rel_2(y,z).
$$
As is easily seen, for a $\beta$-block $B$ and a $\gm$-block $D$,
$(B,D)\in\relo$ if and only if $B\cap D\ne\eps$. As 
$\beta\join\gm=\zo_\rel$, relation $\relo$ is linked. Since 
$C^2\fac\beta$ is an as-component of $\rel\fac\beta$ and $C^2\fac\gm$
is an as-component of $\rel\fac\gm$, Proposition~\ref{pro:max-gen}
implies that $C^2\fac\beta\tm C^2\fac\gm\sse\relo$. Therefore 
for any $\beta$- and $\gm$-blocks $B,D$ such that $B\cap C^2\ne\eps$,
$D\cap B^2\ne\eps$ we have $B\cap D\ne\eps$. Using as-connectivity and
Lemma~\ref{lem:thin} it can also be inferred that $B\cap D\cap C^2\ne\eps$.
The result follows.

\medskip

Let $\al$ be a maximal congruence of $\zA$ and $\gm_1=\al\tm\zo_\zA$,
$\gm_2=\zo_\zA\tm\al$. As is easily seen, $\gm_1,\gm_2$ are 
maximal congruences of $\rel$. There are two cases.

\medskip

{\sc Case 1.} $\beta\join(\gm_1\meet\gm_2)=\zo_\rel$.

\smallskip

By the Claim for any
$\al$-blocks $B_1,B_2$ such that $B_1\cap C,B_2\cap C\ne\eps$ we also
have $B\cap(B_1\tm B_2)\cap C^2\ne\eps$. Also, using 
Lemma~\ref{lem:thin} $B\cap(B_1\tm B_2)\cap C^2$ contains a pair 
as-maximal in $B$. Choosing $B_1\ne B_2$ we get the result.

\medskip

{\sc Case 2.} $\beta\join(\gm_1\meet\gm_2)\ne\zo_\rel$.

\smallskip

In this case consider $\zA'=\zA\fac\al$, $\rel'=\rel\fac{\al\tm\al}$, 
$\beta'=\beta\fac\al$; note that $\zA'$ is a simple idempotent algebra, and as
$\rel=\zA\tm\zA$, we have $\rel'=\zA'\tm\zA'$.
By \cite{Kearnes96:idempotent} either $\zA'$ has an \emph{absorbing 
element} $a$, that is, $f(\vc ak)=a$ for any term operation $f$ of $\zA'$,
whenever for some essential variable $x_i$ of $f$, $a_i=a$, or $\zA'$ is a
module, or the only nontrivial congruences of $\zA'$ are 
$\gm'_1=\gm_1\fac{\al\tm\al}$, $\gm'_2=\gm_2\fac{\al\tm\al}$. 
Since $C$ is a nontrivial as-component, the first option is impossible. 
If $\zA'$ is a simple module, the only congruence that is different from
$\gm'_1,\gm'_2$ is the skew congruence with 
$\Dl=\{(a,a)\mid a\in\zA'\}$ as a congruence block. If $\beta$ is the 
skew congruence, let 
$D$ be a $\beta\join(\gm_1\meet\gm_2)$-block different from $\Dl$
and $B\sse D$ a $\beta$-block. Then for any as-maximal pair 
$(a,b)\in B$ we have $a\ne b$. 

So, suppose $\beta\le\gm_1$. If $\beta\le\gm_1\meet\gm_2$, 
choose a $\gm_1\meet\gm_2$-block $B_1\tm B_2$ such that $B_1\ne B_2$;
clearly $B_1,B_2$ are $\al$-blocks. Then for any $\beta$-block 
$D\sse B_1\tm B_2$ and as-maximal element $(a,b)\in D$ we have 
$a\ne b$ as required.

Finally, suppose $\beta\not\le\gm_2$, then $\beta\join\gm_2=\zo_\rel$.
Take a $\beta$-block $B$, $B\cap C^2\ne\eps$, we have $B\sse B_1\tm\zA$ 
for some $\al$-block $B_1$. Moreover, by the Claim for any $\al$-block 
$B_2$ with $B_2\cap C\ne\eps$ there is $(a,b)\in B\cap C^2$ such 
that $b\in B_2$.
By Lemma~\ref{lem:thin} there is an as-component $E$ of $B$ such that
$(a,b)$ above can be chosen from $E$. Choosing $B_2\ne B_1$ we obtain
a required pair.
\end{proof}

\section{Separating congruences}\label{sec:polynomials-maximality}

In this section we introduce and study the relationship between prime intervals
in the congruence lattice of an algebra, or in congruence lattices of factors in a
subdirect products. It was first introduced in \cite{Bulatov02:maltsev-3-element}
and used in the CSP research in~\cite{Bulatov17:semilattice}.

\subsection{Special polynomials, mapping pairs}\label{special-polynomials}

We start with several technical results. They demonstrate the connection between 
minimal sets of an algebra $\zA$ and the structure of its graph $\cG'(\zA)$.
Let $\zA$ be an algebra and let 
$\relo_{ab}^\zA$\label{not:Qab}, $a,b\in\zA$, denote the subdirect 
product of $\zA^2$ generated by $\{(x,x)\mid x\in\zA\}\cup\{(a,b)\}$.

\begin{lemma}\label{lem:Qab-tolerance}
(1) $\relo^\zA_{ab}=\{(f(a),f(b))\mid f\in\Polo(\zA)\}$.\\
(2) For any $f\in\Polo(\zA)$, $(f(a),f(b))\in\tol_1(\relo^\zA_{ab})$. 
In particular, $\lnk_1(\relo^\zA_{ab})=\Cg{a,b}$; denote this congruence 
by $\al$.\\
(3) $\relo^\zA_{ab}\sse\Cg{a,b}$.\\
(4) Let $B_1,B_2$ be $\al$-blocks, and $C_1,C_2$ as-components of 
$B_1,B_2$, respectively, such that $f(a)\in C_1$ and $f(b)\in C_2$ for a 
polynomial $f$. Then $C_1\tm C_2\sse\relo_{ab}^\zA$.
\end{lemma}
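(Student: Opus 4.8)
The statement to prove is Lemma~\ref{lem:Qab-tolerance}, concerning the subdirect product $\relo^\zA_{ab}$ generated by the diagonal together with $(a,b)$. Let me plan a proof.

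\textbf{Plan.}

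\textbf{Part (1).} The key observation is that $\relo^\zA_{ab}$ is generated, as a subalgebra of $\zA^2$, by the set $\{(x,x)\mid x\in\zA\}\cup\{(a,b)\}$. Any element of the generated subalgebra is obtained by applying a term operation $t$ of $\zA^2$ (equivalently, a term operation $t$ of $\zA$ applied coordinatewise) to some tuple of generators. So a typical element has the form $(t(x_1,\dots,x_{k-1},a),\, t(x_1,\dots,x_{k-1},b))$ where the first $k-1$ arguments come from diagonal generators $(x_i,x_i)$ and the last argument is $(a,b)$ — or, more generally, $(a,b)$ could be substituted in several argument positions. But since $\relo^\zA_{ab}\subseteq\Cg{a,b}^{\text{...}}$ — actually the cleanest route: if $(a,b)$ is substituted into positions $i_1,\dots,i_m$, set $g(x)=t(\dots)$ with $x$ in all those positions and the diagonal elements as constants elsewhere; then $g\in\Polo(\zA)$ and the element is $(g(a),g(b))$. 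Conversely, for any unary polynomial $g(x)=t(x,\bar c)$ with $\bar c\in\zA^\ell$, the element $(g(a),g(b))=t\big((a,b),(c_1,c_1),\dots,(c_\ell,c_\ell)\big)$ lies in $\relo^\zA_{ab}$. Hence equality. I would write this out carefully since the generation argument is the substantive content here.

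\textbf{Part (2).} For $f\in\Polo(\zA)$ I must show $(f(a),f(b))\in\tol_1(\relo^\zA_{ab})$, i.e.\ that $\relo^\zA_{ab}[f(a)]\cap\relo^\zA_{ab}[f(b)]\ne\eps$. But $(f(a),f(a))$ and $(f(b),f(b))$ both lie in $\relo^\zA_{ab}$ (they are on the diagonal, which is contained in the generating set), so $f(a)$ and $f(b)$ share the common "partner" ... wait, I need a common second coordinate $c$ with $(f(a),c),(f(b),c)\in\relo^\zA_{ab}$. Hmm — better: use the diagonal directly. Actually $\tol_1(\relo)=\{(u,v)\mid \relo[u]\cap\relo[v]\ne\eps\}$; since $(a,b)\in\relo^\zA_{ab}$ and $(a,a),(b,b)\in\relo^\zA_{ab}$, we get $\relo^\zA_{ab}[a]\ni a,b$ and... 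I want $a,b$ related. Take $f$: then $(f(a),f(b))\in\relo^\zA_{ab}$ by (1), and also $(f(a),f(a))$, $(f(b),f(b))$ on the diagonal, and — to put $(f(a),f(b))$ in $\tol_1$ I need $\relo^\zA_{ab}[f(a)]\cap\relo^\zA_{ab}[f(b)]\ne\eps$: note $f(b)\in\relo^\zA_{ab}[f(a)]$ (from $(f(a),f(b))$) and $f(b)\in\relo^\zA_{ab}[f(b)]$ (diagonal), giving the nonempty intersection. Good. Then by Lemma~\ref{lem:Qab-tolerance}(1), $\tol_1(\relo^\zA_{ab})$ contains all pairs $(f(a),f(b))$, and its transitive closure $\lnk_1$ contains the transitive closure of those pairs; by Lemma~\ref{lem:minimal-sets}(5) (applied with $\al'=\zz_\zA$... actually with the pair $(a,b)$ and $\beta=\Cg{a,b}$, since $\Cg{a,b}$ is the transitive closure of $\{(f(a),f(b))\mid f\in\Polo(\zA)\}$ together with anything below — more precisely I should invoke that $\Cg{a,b}$ is generated by $(a,b)$ and so equals the transitive-symmetric closure of $\{(g(a),g(b)):g\in\Polo(\zA)\}$, a standard Mal'cev-type description of principal congruences). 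Conversely $\lnk_1(\relo^\zA_{ab})\subseteq\Cg{a,b}$ because every pair $(f(a),f(b))\in\relo^\zA_{ab}\subseteq \zA^2$ satisfies $(f(a),f(b))\in\Cg{a,b}$ (by part (3), which I prove independently), and a transitive closure of pairs from a congruence stays in that congruence. So $\lnk_1(\relo^\zA_{ab})=\Cg{a,b}=:\al$.

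\textbf{Part (3).} $\relo^\zA_{ab}\subseteq\Cg{a,b}$ as a subset of $\zA^2$: the relation $\Cg{a,b}$ (viewed as a binary relation, i.e.\ a subalgebra of $\zA^2$) contains the diagonal (it is reflexive) and contains $(a,b)$; hence it contains the subalgebra generated by those, which is exactly $\relo^\zA_{ab}$. One line.

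\textbf{Part (4).} This is the main obstacle. Given $\al$-blocks $B_1,B_2$ and as-components $C_1\subseteq B_1$, $C_2\subseteq B_2$ with $f(a)\in C_1$, $f(b)\in C_2$ for some polynomial $f$, I must show $C_1\tm C_2\subseteq\relo^\zA_{ab}$. The plan: restrict attention to $\relo^\zA_{ab}\cap(B_1\tm B_2)$, which (because of part (2): $\lnk_1=\al$, and symmetrically $\lnk_2=\al$, since the relation is "symmetric enough" — actually I'd argue $\lnk_2(\relo^\zA_{ab})=\Cg{a,b}$ too by the analogous argument with coordinates swapped, using that $\relo^\zA_{ab}$ is also generated by the diagonal and $(b,a)$ up to... hmm, not quite, but $\tol_2$ contains $(f(a),f(b))$ by the symmetric computation) is a subdirect product of $B_1\tm B_2$ whose link congruences restricted there are full. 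Then apply Proposition~\ref{pro:max-gen}: a linked subdirect product of $B_1\tm B_2$ contains $C_1\tm C_2$ for as-components $C_1,C_2$ once it meets $C_1\tm C_2$ — and it does meet it, because $(f(a),f(b))\in\relo^\zA_{ab}\cap(C_1\tm C_2)$. I need to be careful that $\relo^\zA_{ab}\cap(B_1\tm B_2)$ is genuinely subdirect in $B_1\tm B_2$ and genuinely linked as a relation between $B_1$ and $B_2$; subdirectness follows from $\lnk_1=\lnk_2=\al$ (so every element of $B_1$ appears, via: $b'\in B_1$ means $(b',a)\in\al=\Cg{a,b}=\lnk_1(\relo^\zA_{ab})$, so $b'$ is $\tol_1$-connected to $a$ within... need to trace partners landing in $B_2$), and linkedness from the same. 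This bookkeeping — verifying the restricted relation is subdirect and linked between the two blocks so that Proposition~\ref{pro:max-gen} applies — is the technical heart and where I'd spend the most care.

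\textbf{Summary of structure.} I would prove (3) first (trivial), then (1) (generation bookkeeping), then (2) (combining (1), (3), and the Mal'cev description of principal congruences plus Lemma~\ref{lem:minimal-sets}(5)), and finally (4) using (2) to get full link congruences on the relevant blocks and then invoking Proposition~\ref{pro:max-gen}. The hard part is (4): ensuring the hypotheses of Proposition~\ref{pro:max-gen} (linked subdirect product of the two $\al$-blocks) are actually met by $\relo^\zA_{ab}\cap(B_1\tm B_2)$.
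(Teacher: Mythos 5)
Your proposal is correct and matches the paper's proof in substance; parts (1)--(3) are handled the same way you outline (generation from the diagonal and $(a,b)$, diagonal witnesses for $\tol_1$, and the observation that a congruence is a subalgebra of $\zA^2$ containing the generators). The only divergence is in part (4): the paper cites the Rectangularity Corollary~\ref{cor:linkage-rectangularity} directly, whereas you reach for Proposition~\ref{pro:max-gen} and plan to verify by hand that $\relo^\zA_{ab}\cap(B_1\tm B_2)$ is a linked subdirect product of $B_1\tm B_2$. That verification is exactly what the Corollary packages --- its hypotheses refer to the link congruences of the \emph{ambient} relation $\relo^\zA_{ab}$ on $\zA\tm\zA$, so once you know $\lnk_1=\lnk_2=\Cg{a,b}=\al$ you can take $B_1,B_2$ as $\lnk_1$- and $\lnk_2$-blocks, $C_1,C_2$ as their as-components, observe $(f(a),f(b))\in\relo^\zA_{ab}\cap(C_1\tm C_2)\ne\eps$, and conclude without restricting and re-verifying linkedness. (Your hand verification would in any case go through: by (3) $\relo^\zA_{ab}\sse\al$, hence $\tol_1\sse\al$, so a $\tol_1$-chain between two points of an $\al$-block, together with its second-coordinate witnesses, stays inside that block; note also that (3) forces $B_1=B_2$, since $f(a)\eqc\al f(b)$.) Your hesitation about whether $\lnk_2=\Cg{a,b}$ is unnecessary: the $\tol_2$ computation is exactly symmetric, with $(f(a),f(a))$ and $(f(a),f(b))$ exhibiting $f(a)$ as a common first coordinate for $f(a)$ and $f(b)$.
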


\begin{proof}
(1) follows directly from the definitions.

(2) Take $f\in\Polo(\zA)$ and let $f(x)=g(x,\vc ak)$ for a term operation $g$ of 
$\zA$. Then $\cl{f(a)}{f(b)}=
g\left(\cl ab,\cl{a_1}{a_1}\zd\cl{a_k}{a_k}\right)\in\rel$ 
and $\cl{f(b)}{f(b)}=g\left(\cl bb,\cl{a_1}{a_1}\zd\cl{a_k}{a_k}\right)\in\rel$. 
Thus $(f(a),f(b))\in\tol_1(\relo^\zA_{ab})$.

(3) follows from (1), and 
(4) follows from (2),(3), and the Rectangularity Corollary~\ref{cor:linkage-rectangularity}.
\end{proof}

We say that $a$ is \emph{$\al$-maximal}\index{$\al$-maximal element} 
for a congruence $\al\in\Con(\zA)$ if $a$ is as-ma\-xi\-mal in the 
subalgebra $a^\al$.

\begin{corollary}\label{cor:mapping-pairs}
Let $\al\in\Con(\zA)$ and $\zz\prec\al$. Then for any $a,b\in\zA$ with 
$a\eqc\al b$ and any $\al$-maximal $c,d\in\zA$, $c\ne d$, with 
$c\eqc\al d$, belonging to the same
as-component of $c^\al$, there is $f\in\Polo(\zA)$ 
such that $c=f(a)$, $d=f(b)$.
\end{corollary}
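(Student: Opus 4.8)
The plan is to combine Lemma~\ref{lem:Qab-tolerance} with the basic facts about $(\zz,\al)$-minimal sets from tame congruence theory (Lemma~\ref{lem:minimal-sets}) and the connectivity of as-components established in Section~3. Fix $a,b$ with $a\eqc\al b$; we may assume $(a,b)\notin\zz$, i.e. $a\ne b$, for otherwise $c=d$ and there is nothing to prove (the hypothesis $c\ne d$ forces us into the nontrivial case, so in fact $(a,b)\in\al-\zz$). Since $\zz\prec\al$, the congruence $\Cg{a,b}$ equals $\al$, which matches the congruence denoted $\al$ in Lemma~\ref{lem:Qab-tolerance}(2) — this identification is the starting point. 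By Lemma~\ref{lem:Qab-tolerance}(1), $\relo^\zA_{ab}=\{(f(a),f(b))\mid f\in\Polo(\zA)\}$, so it suffices to show that $(c,d)\in\relo^\zA_{ab}$.

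First I would locate a single unary polynomial $e$ with $e(\zA)=U$ a $(\zz,\al)$-minimal set and $(e(a),e(b))\in\al-\zz$ on $U$: this is Lemma~\ref{lem:minimal-sets}(4) applied to the pair $(a,b)\in\al-\zz$, and by item (3) we may take $e$ idempotent, so $e$ is the identity on $U$. Thus $(e(a),e(b))\in\relo^\zA_{ab}$ and both coordinates lie in $U$. Next, by Lemma~\ref{lem:Qab-tolerance}(4), the as-components $C_1,C_2$ (inside the respective $\al$-blocks $e(a)^\al$, $e(b)^\al$) containing $e(a)$ and $e(b)$ satisfy $C_1\tm C_2\sse\relo^\zA_{ab}$. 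Now I want to move from this particular pair $(e(a),e(b))$ to the target pair $(c,d)$. Here is where the hypotheses on $c,d$ enter: $c,d$ are $\al$-maximal (as-maximal in $c^\al=d^\al$) and lie in one common as-component, call it $C$, of that block. Using Lemma~\ref{lem:minimal-sets}(5) (the strengthened form: for any $(a,b)\in\al-\zz$, $\al$ is the transitive closure of $\zz\cup\{(g(a),g(b))\mid g\in\Polo(\zA)\}$) together with the Maximality Lemma machinery, I would produce a polynomial $g$ carrying the pair $(e(a),e(b))$ — equivalently $(a,b)$, after composing with $e$ — into the $\al$-block containing $c,d$ with $g(a)\ne g(b)$; by idempotently restricting and using item (2) of Lemma~\ref{lem:minimal-sets} we keep the image inside an $(\zz,\al)$-trace. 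Then $g(a),g(b)$ lie in some as-components of the block $c^\al$, and by Lemma~\ref{lem:Qab-tolerance}(4) again the full product of those as-components is in $\relo^\zA_{ab}$.

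The remaining step, and the one I expect to be the main obstacle, is the \emph{as-component bookkeeping}: I must ensure that the as-component(s) of $c^\al$ that I hit in the previous paragraph can be taken to be the single as-component $C$ that actually contains the prescribed maximal elements $c$ and $d$. This is where as-maximality of $c,d$ is essential — in a block there can be several as-components, but the maximal ones are forced to be the "top" component, and Proposition~\ref{pro:as-connectivity}(2) says any two as-maximal elements of $c^\al$ are joined by a directed path in $\cG'(c^\al)$. I would argue: since $\relo^\zA_{ab}$ contains a product $C'\tm C''$ of as-components of $c^\al$, and since $\relo^\zA_{ab}\sse\al$ with $\lnk_1(\relo^\zA_{ab})=\al=\zo$ on the block (so the restriction of $\relo^\zA_{ab}$ to $c^\al\tm c^\al$ is linked), Proposition~\ref{pro:max-gen} (or its Corollary~\ref{cor:linkage-rectangularity}) upgrades this to: for \emph{every} as-component $C$ of $c^\al$ meeting the relation, $C\tm C\sse\relo^\zA_{ab}$; and the maximal as-component is reachable because $\cG'(c^\al)$ restricted to that block is connected by directed paths among as-maximal elements. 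Hence $(c,d)\in C\tm C\sse\relo^\zA_{ab}$, and by Lemma~\ref{lem:Qab-tolerance}(1) there is $f\in\Polo(\zA)$ with $f(a)=c$, $f(b)=d$. The delicate point to get exactly right is that $\relo^\zA_{ab}$ restricted to a single $\al$-block is genuinely linked (so that Proposition~\ref{pro:max-gen} applies) — this follows from $\lnk_1(\relo^\zA_{ab})=\Cg{a,b}=\al$ in Lemma~\ref{lem:Qab-tolerance}(2), but one must check the restriction to one block rather than the whole square, which I would do directly from the generating set $\{(x,x)\}\cup\{(a,b)\}$.
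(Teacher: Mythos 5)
Your proof heads toward the right lemma but misses a dramatic simplification, and as written has several unclosed gaps. The paper's own proof is a one-liner: ``The result follows from Lemma~\ref{lem:Qab-tolerance}(4).'' The point is that Lemma~\ref{lem:Qab-tolerance}(4) asks only for \emph{some} polynomial $f$ with $f(a)\in C_1$ and $f(b)\in C_2$; nothing forbids $C_1=C_2$ or $f(a)=f(b)$. Take $f$ to be the constant polynomial with value $c$ (equivalently, observe that $(c,c)\in\relo^\zA_{ab}$ since the diagonal is part of the generating set). Then $f(a)=f(b)=c$ lies in $C$, the common as-component of $c,d$ in $c^\al$, which is legitimately an as-component because $c$ is $\al$-maximal by hypothesis. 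Lemma~\ref{lem:Qab-tolerance}(4) immediately gives $C\tm C\sse\relo^\zA_{ab}$, so $(c,d)\in\relo^\zA_{ab}$, and Lemma~\ref{lem:Qab-tolerance}(1) produces the required $f'$. No minimal sets, no auxiliary polynomial $g$, no ``moving between blocks.''

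Beyond being roundabout, your sketch has concrete problems. First, in the second paragraph you apply Lemma~\ref{lem:Qab-tolerance}(4) with the idempotent $e$, treating the ``as-components $C_1,C_2$ containing $e(a)$ and $e(b)$'' as given --- but there is no reason $e(a)$ or $e(b)$ should be as-maximal in their $\al$-blocks, so they need not lie in \emph{any} as-component, and the lemma does not apply. Second, the polynomial $g$ ``carrying $(e(a),e(b))$ into the $\al$-block containing $c,d$ with $g(a)\ne g(b)$'' is never constructed; the appeal to Lemma~\ref{lem:minimal-sets}(5) plus ``Maximality Lemma machinery'' does not obviously produce such a $g$, and in fact insisting on $g(a)\ne g(b)$ is exactly the constraint that makes the problem artificially hard --- dropping it is what collapses the proof to one line. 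Third, your final paragraph is essentially a re-derivation of Lemma~\ref{lem:Qab-tolerance}(4) from the Rectangularity Corollary~\ref{cor:linkage-rectangularity}, which you acknowledge you have not completed (the subdirectness of the restriction does hold, because the diagonal over $c^\al$ sits in $\relo^\zA_{ab}$, and linkage follows from $\lnk_1=\al$ in Lemma~\ref{lem:Qab-tolerance}(2), but you should cite the lemma rather than re-prove it). So the proposal does not constitute a complete proof as written, and even when repaired it would be far longer than the intended argument.
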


\begin{proof}
The result follows from Lemma~\ref{lem:Qab-tolerance}(4).
\end{proof}

Recall that for $\al,\beta\in\Con(\zA)$ with $\al\prec\beta$ a pair $\{a,b\}$
is called an \emph{$(\al,\beta)$-subtrace}\index{subtrace} if 
$(a,b)\in\beta-\al$ and $a,b\in U$ for some $(\al,\beta)$-minimal set $U$.

\begin{corollary}\label{cor:max-min-set}
Let $\al\in\Con(\zA)$ and $\zz\prec\al$, and let $c,d\in\zA$, 
$c\eqc\al d$, be $\al$-maximal. \\[2mm]
(1) If $c,d$ belong to the same as-component of $c^\al$, 
then $\{c,d\}$ is a $(\zz,\al)$-subtrace.\\[1mm]
(2) If there is a $(\zz,\al)$-subtrace $\{c',d'\}$ such that $c'\in\as(c)$
and $d'\in\as(d)$ then $\{c,d\}$ is a $(\zz,\al)$-subtrace as well.
\end{corollary}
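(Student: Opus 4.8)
The plan is to reduce both parts to Corollary~\ref{cor:mapping-pairs} together with the basic facts about $(\zz,\al)$-minimal sets from Lemma~\ref{lem:minimal-sets}. Throughout we have $\zz\prec\al$, so $\al$ is a monolith-type prime interval over the equality relation; the $(\zz,\al)$-subtraces are exactly the pairs $(x,y)\in\al-\zz$ lying inside some $(\zz,\al)$-minimal set.

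\emph{Part (1).} First I would fix a $(\zz,\al)$-subtrace: by Lemma~\ref{lem:minimal-sets}(3),(4) there is an idempotent $f\in\Polo(\zA)$ whose image $U=f(\zA)$ is a $(\zz,\al)$-minimal set with $f(\beta)\not\sse\zz$ (here $\beta=\al$), so we may pick $a,b\in U$ with $(a,b)\in\al-\zz$; this $\{a,b\}$ is a $(\zz,\al)$-subtrace and $a\eqc\al b$. Now I want to move this subtrace to the prescribed pair $\{c,d\}$. Since $c,d$ are $\al$-maximal and lie in the same as-component of $c^\al$, and $c\ne d$, Corollary~\ref{cor:mapping-pairs} (applied with the pair $a,b$ in the role of its "$a,b$" and $c,d$ in the role of its "$c,d$") yields a unary polynomial $g\in\Polo(\zA)$ with $g(a)=c$, $g(b)=d$. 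Then $g(U)$ is the image of the minimal set $U$ under a polynomial sending the $\al$-related, $\al$-inequivalent pair $(a,b)$ to $(c,d)$, which is again $\al$-related and $\al$-inequivalent; hence $g(\al\red U)\not\sse\zz$, so by Lemma~\ref{lem:minimal-sets}(2) the set $g(U)$ is a $(\zz,\al)$-minimal set containing both $c=g(a)$ and $d=g(b)$, and $(c,d)\in\al-\zz$. Therefore $\{c,d\}$ is a $(\zz,\al)$-subtrace.

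\emph{Part (2).} Here we are given a $(\zz,\al)$-subtrace $\{c',d'\}$ with $c'\in\as(c)$, $d'\in\as(d)$; note this forces $c'\eqc\al d'$ and, since $\as(c),\as(d)$ are distinct as-components (because $c\ne d$ and each is $\al$-maximal, so the as-component of an $\al$-maximal element is a maximal as-component of its $\al$-block), $c'\ne d'$. Apply Corollary~\ref{cor:mapping-pairs} with $c',d'$ in the role of the "arbitrary $\al$-related pair" and $c,d$ in the role of the "$\al$-maximal pair": this is legitimate because $c,d$ are $\al$-maximal, $c\ne d$, $c\eqc\al d$, and $c,d$ lie in the same as-component of $c^\al$ precisely when... — this is the one point requiring care. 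If $c,d$ do lie in a common as-component of $c^\al$, then Part~(1) already gives the conclusion directly. Otherwise I would instead invoke Part~(1)'s mechanism in reverse: starting from the minimal set $U'$ containing $\{c',d'\}$ and the polynomial $h\in\Polo(\zA)$ with $h(c')=c$, $h(d')=d$ obtained from as-connectedness of $c'$ to $c$ inside $c^\al$ via Lemma~\ref{lem:Qab-tolerance}(4) (the as-component of $c'$ maps onto the as-component of $c$, and likewise for $d'$, $d$), again Lemma~\ref{lem:minimal-sets}(2) makes $h(U')$ a $(\zz,\al)$-minimal set containing $c,d$ with $(c,d)\in\al-\zz$.

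\emph{Main obstacle.} The delicate step is producing the polynomial that carries a known subtrace exactly onto the prescribed pair $\{c,d\}$ (or $\{c',d'\}$ onto $\{c,d\}$): Corollary~\ref{cor:mapping-pairs} requires its \emph{target} pair to be $\al$-maximal and in a single as-component of its $\al$-block, while its \emph{source} pair is arbitrary. One must check the hypotheses match up — in particular, in Part~(2) I must use the as-maximality of $c,d$ (so \emph{they} are the valid target) and extract from $c'\in\as(c)$, $d'\in\as(d)$ the existence of a polynomial collapsing $\as(c')\to\as(c)$ and $\as(d')\to\as(d)$ simultaneously, which is exactly what Lemma~\ref{lem:Qab-tolerance}(4) (rectangularity of $\relo^\zA_{c'd'}$, whose link congruence is $\Cg{c',d'}\le\al$) provides, with $B_1=c^\al$, $B_2=d^\al$, $C_1=\as(c)$, $C_2=\as(d)$. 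Once the polynomial exists, everything else is a routine application of Lemma~\ref{lem:minimal-sets}(2).
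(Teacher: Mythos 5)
Your argument is correct and uses the same toolkit the paper does: Corollary~\ref{cor:mapping-pairs}, Lemma~\ref{lem:Qab-tolerance}(4), and Lemma~\ref{lem:minimal-sets}(2). Part~(1) is essentially identical to the paper's proof (which cites Lemma~\ref{lem:minimal-sets}(3) for the last step; you correctly cite item~(2)). For part~(2) the paper fixes a reference subtrace $\{a,b\}$, argues $(c',d')\in\relo^\zA_{ab}$, and only then invokes Lemma~\ref{lem:Qab-tolerance}(4); your version is a little more direct, applying Lemma~\ref{lem:Qab-tolerance}(4) to $\relo^\zA_{c'd'}$ with $f=\id$ to get $\as(c)\tm\as(d)\sse\relo^\zA_{c'd'}$ at once, hence a polynomial $h$ with $h(c')=c$, $h(d')=d$, after which Lemma~\ref{lem:minimal-sets}(2) finishes it. This skips the intermediate claim and is a clean way to organize the same idea. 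Two expository slips worth fixing: the parenthetical ``$\as(c),\as(d)$ are distinct \ldots because $c\ne d$'' is not a valid implication (distinct $\al$-maximal elements may lie in the same as-component) --- harmless here, since your ``otherwise'' branch already works in all cases and the case split on Corollary~\ref{cor:mapping-pairs} is unnecessary; and since $\as(c')=\as(c)$, ``a polynomial collapsing $\as(c')\to\as(c)$'' should just read ``a polynomial taking $c'$ to $c$ and $d'$ to $d$.''
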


\begin{proof}
(1) Take any $(\zz,\al)$-minimal set $U$, and 
$a,b\in U$ with $a\eqc\al b$. By Corollary~\ref{cor:mapping-pairs} there is 
$f\in\Polo(\zA)$ with $c=f(a)$, $d=f(b)$. By Lemma~\ref{lem:minimal-sets}(3)
$U'=f(U)$ is a $(\zz,\al)$-minimal set.

(2) As in (1) one can argue that $(c',d')\in\relo^\zA_{ab}$, that is, 
$\relo^\zA_{ab}\cap(\as(c)\tm\as(d))\ne\eps$. We then complete by 
Lemma~\ref{lem:Qab-tolerance}(4).
\end{proof}

\begin{lemma}\label{lem:type23}
For any $\al\in\Con(\zA)$ with $\zz\prec\al$ such that $|D|>1$ for some 
as-component $D$ of an $\al$-block, the prime interval $\zz\prec\al$ has type 
\two\ or \three.
\end{lemma}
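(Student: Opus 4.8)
The plan is to rule out types \four\ and \five\ for the interval $\zz\prec\al$, using the hypothesis that some as-component $D$ of some $\al$-block has $|D|>1$. Type \one\ is excluded globally by our standing assumptions, so it suffices to exclude types \four\ and \five. I would first reduce to working inside a single $(\zz,\al)$-minimal set. By Lemma~\ref{lem:minimal-sets}(6) (or~(3)) there is an idempotent $f\in\Polo(\zA)$ with $f(\zA)=U$ an $(\zz,\al)$-minimal set and $f(\al)\not\sse\zz$. The key point is to transport the nontrivial as-component $D$ into $U$: pick $\al$-maximal $c,d\in D$ with $c\ne d$ lying in the same as-component of $c^\al$ (these exist since by the Maximality Lemma~\ref{lem:to-max}(5) the as-component $D$ of the $\al$-block, being itself nontrivial with $|D|>1$, contains at least two as-maximal elements in the as-sense, and one can pass to genuinely as-maximal ones inside $c^\al$). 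By Corollary~\ref{cor:max-min-set}(1), $\{c,d\}$ is then a $(\zz,\al)$-subtrace, so WLOG $c,d\in U$, and by Lemma~\ref{lem:traces}(2) in types \four,\five\ the minimal set $U$ has a unique trace $T$ which must therefore contain $\{c,d\}$, and $T\fac\zz=T$ is polynomially equivalent to a 2-element lattice (type~\four) or a 2-element semilattice (type~\five).

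Next I would derive a contradiction from as-maximality. In type~\five, $T$ is polynomially equivalent to a 2-element semilattice, so there is a pseudo-meet polynomial $g$ on $U$ (Lemma~\ref{lem:pseudo-meet}) with a neutral element $1\in T$ such that $(\{a,1\},g)$ is a semilattice with neutral element $1$ for all $a\in U-\{1\}$. In particular, taking $a$ to be the element of $T-\{1\}$, the pair $\{1,a\}$ carries a semilattice operation $g$ with $g(1,1)=1$, $g(1,a)=g(a,1)=g(a,a)=a$; reading this the right way, $a\le 1$ is a thin semilattice edge (after passing to $\Sg{a,1}$, where $\th$ is the equality since the trace sits inside $U$ with $\al$ restricting to equality on a trace-sized set; more carefully, a semilattice operation on the 2-element set $\{a,1\}\sse U$ witnesses a thin or at least a genuine semilattice edge). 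Now $c,d$ were chosen as-maximal in $c^\al=d^\al$ and in the same as-component; but the existence of this semilattice edge between the two elements of the trace $\{1,a\}=\{c,d\}$ (after identifying via a polynomial isomorphism, using Lemma~\ref{lem:minimal-sets}(4) to move any subtrace to the trace) forces one of $c,d$ to lie strictly below the other in $\cG_s$, contradicting that they lie in a common as-component (in which $\sqq_{as}$, hence in particular $\sqq$, is symmetric). For type~\four, $T$ is polynomially equivalent to a 2-element lattice, which again contains a semilattice (meet, or join) operation on $\{c,d\}$, giving the same contradiction.

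Assembling: the hypothesis produces a $(\zz,\al)$-subtrace $\{c,d\}$ with $c,d$ in a common as-component of an $\al$-block; in types \four\ and \five\ the unique trace carries a semilattice operation on a 2-element subset, hence a semilattice edge, which breaks as-component symmetry; hence $\typ(\zz,\al)\in\{\two,\three\}$. I expect the main obstacle to be the bookkeeping in the transport step — getting from an arbitrary nontrivial as-component $D$ of an $\al$-block to a genuine $(\zz,\al)$-subtrace that lies inside a chosen minimal set $U$ with both elements as-maximal in a common as-component — and then correctly extracting from ``polynomially equivalent to a 2-element lattice/semilattice'' an actual semilattice \emph{edge} of $\zA$ in the colored-graph sense (using Lemma~\ref{lem:pseudo-meet} and Theorem~\ref{the:uniform}/Proposition~\ref{pro:good-operation} to promote the pseudo-meet to the uniform dot operation). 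Once that is in place, the as-maximality contradiction is immediate.
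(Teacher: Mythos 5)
Your proposal has a genuine gap at the crucial contradiction step. You argue that a semilattice operation on the two-element trace $\{c,d\}$ ``forces one of $c,d$ to lie strictly below the other in $\cG_s$, contradicting that they lie in a common as-component (in which $\sqq_{as}$, hence in particular $\sqq$, is symmetric).'' This last parenthetical is false: $\sqq$ is a stronger relation than $\sqq_{as}$ (pure semilattice paths vs.\ paths allowing affine edges too), so symmetry of $\sqq_{as}$ on the as-component says nothing about symmetry of $\sqq$. Indeed, having a thin semilattice edge $c\le d$ together with $c,d$ in the same as-component is entirely consistent --- the return path $d\sqq_{as} c$ is allowed to use affine edges. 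So your argument does not produce a contradiction, and this is the whole content of the lemma. (There is also a secondary bookkeeping issue you flagged yourself, namely that the pseudo-meet of Lemma~\ref{lem:pseudo-meet} is only a \emph{polynomial} of $\zA$, whereas semilattice edges in $\cG(\zA)$ are witnessed by \emph{term} operations on $\Sg{c,d}$; that promotion is not automatic. But even granting it, the argument would still not close.)

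The paper's proof is much shorter and avoids both problems. Take $a,b\in D$, $a\ne b$. Since $a,b$ are $\al$-maximal and lie in the same as-component of $a^\al$, Corollary~\ref{cor:mapping-pairs} applied with the roles of $a,b$ reversed yields a unary polynomial $f$ with $f(a)=b$ and $f(b)=a$, i.e.\ a polynomial that \emph{swaps} the two elements of the subtrace $\{a,b\}$ inside a $(\zz,\al)$-minimal set. In types \four\ and \five\ the unique trace is polynomially equivalent to a two-element lattice or semilattice, whose unary polynomials are only constants and the identity; no swap exists. This rules out \four\ and \five, and type \one\ is excluded by standing hypothesis, leaving \two\ and \three. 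The swapping polynomial from Corollary~\ref{cor:mapping-pairs}, not a semilattice-edge/as-maximality clash, is the key input you are missing.
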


\begin{proof}
Let $a,b\in D$ for an as-component $D$ of an $\al$-block. Then by 
Corollary~\ref{cor:mapping-pairs} there is a polynomial $f$ such that 
$f(a)=b$ and 
$f(b)=a$. Also, $a,b$ belong to some $(\zz,\al)$-minimal set. This rules out types 
\four\ and \five. Since $\zA$ omits type $\one$, this only leaves types \two\ and \three.
\end{proof}

\begin{lemma}\label{lem:type2-condition}
Let $\al\in\Con(\zA)$ with $\zz\prec\al$ be such that some $\al$-block contains 
a semilattice or majority edge. Then the prime interval $(\zz,\al)$ has type 
\three, \four\, or \five.
\end{lemma}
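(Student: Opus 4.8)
The plan is to show that a semilattice or majority edge inside an $\al$-block forces the trace of any $(\zz,\al)$-minimal set to carry a semilattice or majority polynomial operation, which rules out type $\two$ (whose traces are polynomially equivalent to a $1$-dimensional vector space by Lemma~\ref{lem:traces}(1)). Since $\zA$ omits type $\one$, the only remaining possibilities are types $\three$, $\four$, $\five$, which is exactly the claim. First I would pick an $\al$-block $B$ and a thin semilattice or thin majority edge $ab$ inside $B$ (such a thin edge exists by Lemma~\ref{lem:thin}(1), possibly after moving inside a congruence block of $\Sg{a,b}$; note the witnessing congruence of that thin edge has to be compatible with staying inside $B$, which holds because $B$ is a union of $\al$-classes and $\al$ restricted to $B$ is trivial in the relevant sense). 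Actually the cleaner route: it suffices that some $\al$-block $B$ contains two distinct elements $a,b$ that form a semilattice or majority edge of $\zA$; then $a,b$ lie in the same as-component of $B$ (for a semilattice edge this is immediate; for a majority edge one uses connectivity of $\cG'$ together with the fact that majority edges are not removed when passing to $\cG'$ — but here we must be a little careful, since as-components are defined with majority edges removed).

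The key step is then: by Lemma~\ref{lem:type23}, as soon as some as-component $D$ of an $\al$-block has $|D|>1$, the interval $(\zz,\al)$ has type $\two$ or $\three$. So the burden reduces to eliminating type $\two$ under the hypothesis. Suppose $\typ(\zz,\al)=\two$. Then by Lemma~\ref{lem:as-type-2} (with $\beta=\al$, $\al$ there being $\zz$) every $\al$-block $B$ is term equivalent to a module, and in particular every pair of distinct elements of $B$ is a thin \emph{affine} edge of $\zA$. Now I would invoke the rigidity of edge types: an affine edge admits no semilattice and no majority term operation on the relevant two-element quotient (this is built into the definition of affine type in the Coloured Graphs subsection — ``in particular it implies that there is no semilattice or majority operation on $\{a^\th,b^\th\}$''). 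But the hypothesis gives a semilattice or majority edge inside some $\al$-block, i.e. a pair $a,b$ in one block with a semilattice or majority term operation on the corresponding quotient. Since inside a block that is term equivalent to a module every such pair is affine, and the witnessing congruence for the affine type on a two-element quotient forces absence of semilattice/majority operations there, we get a contradiction — provided the congruences witnessing the two claims can be matched up. That matching is the delicate point.

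The main obstacle, therefore, is reconciling the witnessing congruences. A semilattice or majority edge $ab$ of $\zA$ comes with a congruence $\th$ of $\Sg{a,b}$ such that $f$ acts as a semilattice/majority operation on $\{a^\th,b^\th\}$; meanwhile the module structure from Lemma~\ref{lem:as-type-2} is a statement about $B\fac{\zz} = B$ itself (or about $B$ modulo $\zz$), and affine-edge-ness of $\{a,b\}$ in $\zA$ is witnessed by some congruence $\eta$ of $\Sg{a,b}$. If $\th$ and $\eta$ are the same, or comparable, the incompatibility of an affine operation with a semilattice/majority operation on a common two-element quotient closes the argument immediately. If they differ, I would argue instead via minimal sets directly: take a $(\zz,\al)$-minimal set $U$ with trace $T$; by Lemma~\ref{lem:traces}(1) under type $\two$, $T$ is polynomially equivalent to a $1$-dimensional vector space, so no polynomial of $\zA$ restricted to $T$ can be semilattice or majority on a two-element subtrace. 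On the other hand, using Corollary~\ref{cor:mapping-pairs}/Corollary~\ref{cor:max-min-set} one can transport the given semilattice or majority edge $\{a,b\}$ (after replacing $a,b$ by as-maximal representatives in their $\al$-block, which preserves being a subtrace by Corollary~\ref{cor:max-min-set}(2), and preserves the existence of a semilattice/majority operation by sm-smoothness — here the hypothesis that $a^\th\cup b^\th$ is a subuniverse is used) into the trace $T$ via a unary polynomial $f$ with $f(U)=U$; then $f$ conjugates the semilattice/majority operation into one acting nontrivially as a semilattice/majority operation on a two-element subtrace of $T$, contradicting the vector-space structure. Hence type $\two$ is impossible, and since type $\one$ is excluded by assumption, $(\zz,\al)$ has type $\three$, $\four$, or $\five$. $\Box$
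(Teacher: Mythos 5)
Your central idea --- if $\typ(\zz,\al)=\two$ then by Lemma~\ref{lem:as-type-2} the block $B$ is term equivalent to a module, and this should clash with the presence of a semilattice or majority edge --- is the right one and is exactly the contrapositive of the paper's one-line proof: $B$ contains a non-Abelian subalgebra, hence $(\zz,\al)$ is non-Abelian, hence not of type $\two$. But you never land on the clean closure, and the route you try instead has a gap that you yourself flag and do not resolve. Your first attempt (affine edges admit no semilattice/majority operation on the relevant quotient) runs aground on precisely the concern you raise: the witnessing congruence $\th$ of the semilattice/majority edge need not coincide with the one witnessing the affine structure, and the paper explicitly notes that a pair can be an edge of several types under different congruences, so there is no direct ``rigidity'' to appeal to. Your fallback via minimal sets is also unconvincing: the step ``$f$ conjugates the semilattice/majority operation into one acting as such on a subtrace of $T$'' is unsupported, because the semilattice/majority structure lives on the quotient $\{a^\th,b^\th\}$ and a unary polynomial landing in a trace does not generally transport a term operation on a quotient into a polynomial acting that way on the trace; moreover, the preliminary replacement of $a,b$ by as-maximal representatives tacitly assumes $a,b$ are in a nontrivial as-component, which is not established (for a thin semilattice edge $a\le b$ one need not have $b\sqq_{as}a$).

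The missing observation that dissolves all of this is that Abelianness passes to \emph{all} subalgebras and \emph{all} quotients. Once $B$ is a module, $B$ is Abelian, so $\Sg{a,b}\subseteq B$ is Abelian, so $\Sg{a,b}\fac\th$ is Abelian for every $\th\in\Con(\Sg{a,b})$; but a binary term acting as a semilattice (or a ternary term acting as a majority) on $\{a^\th,b^\th\}$ directly violates the term condition in $\Sg{a,b}\fac\th$. This contradiction holds for whichever $\th$ witnesses the edge, so no matching of congruences is needed and the minimal-set machinery is unnecessary. A separate minor issue: the detour through Lemma~\ref{lem:type23} is both unneeded and not justified --- its hypothesis ($|D|>1$ for some as-component) is never verified, and the only thing to rule out is type $\two$, which already leaves exactly $\three,\four,\five$ since type $\one$ is excluded by standing assumptions.
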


\begin{proof}
We need to show that $(\zz,\al)$ does not have type \two. Let $B$ the 
$\al$-block containing a semilattice or majority edge. Then $B$ contains a 
non-Abelian subalgebra, which implies $(\zz,\al)$ is also non-Abelian. 
\end{proof}

\subsection{Separation}\label{sec:separation}

Let $\zA$ be an algebra, and let $\al\prec\beta$ and $\gm\prec\dl$ be prime
intervals in $\Con(\zA)$. We say that $\al\prec\beta$ can be 
\emph{separated}\index{separated intervals} from 
$\gm\prec\dl$  if there is a unary polynomial $f\in\Polo(\zA)$ such that 
$f(\beta)\not\sse\al$, but $f(\dl)\sse\gm$. The polynomial $f$ in this case is 
said to \emph{separate}\index{separating polynomial}  
$\al\prec\beta$ from $\gm\prec\dl$. 

Since we often consider relations rather than single algebras, we also introduce 
separability in a slightly different way. Let $\rel$ be a subdirect product of 
$\zA_1\tm\dots\tm\zA_n$. 
Let $i,j\in[n]$ and let
$\al_i\prec\beta_i$, $\al_j\prec\beta_j$ be prime intervals in $\Con(\zA_i)$ and 
$\Con(\zA_j)$, respectively. Interval $\al_i\prec\beta_i$ can be separated from 
$\al_j\prec\beta_j$ 
if there is a unary polynomial $f$ of $\rel$ such that $f(\beta_i)\not\sse\al_i$ but 
$f(\beta_j)\sse\al_j$. Similarly, the polynomial $f$ in this case is said to 
\emph{separate} $\al_i\prec\beta_i$ from $\al_j\prec\beta_j$

First, we observe a connection between separation in a single algebra and in 
relations.

\begin{lemma}\label{lem:separation-separation}
Let $\rel$ be the binary equality relation on $\zA$. Let $\al_1=\al,\beta_1=\beta$
be viewed as congruences of the first factor of $\rel$, and 
$\al_2=\gm,\beta_2=\dl$ as congruences of the second factor of $\rel$. 
Prime interval $\al\prec\beta$ can be separated from $\gm\prec\dl$ as 
intervals in $\Con(\zA)$ if and only if $\al_1\prec\beta_1$ can be separated 
from $\al_2\prec\beta_2$ in $\rel$.
\end{lemma}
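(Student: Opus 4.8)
The plan is to exploit the fact that the diagonal relation $\rel=\{(x,x)\mid x\in\zA\}$ is isomorphic to $\zA$ via $a\mapsto(a,a)$, so that its term operations, and hence its polynomials, act ``diagonally''. Concretely, the action of any unary polynomial of $\rel$ on the first factor coincides with its action on the second factor, and both are unary polynomials of $\zA$. Once this is in place, the claimed equivalence is a direct translation of the two definitions of separation.

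First I would record the structure of the unary polynomials of $\rel$. A term operation $G$ of the subdirect product $\rel\le\zA\tm\zA$ acts coordinatewise, and since every tuple of $\rel$ has equal coordinates there is a term operation $g$ of $\zA$ with $G((a,a),(b,b),\ldots)=(g(a,b,\ldots),g(a,b,\ldots))$. Substituting diagonal constants $(c_1,c_1)\zd(c_k,c_k)\in\rel$, every unary polynomial $F$ of $\rel$ has the form $F((a,a))=(f(a),f(a))$ where $f(x)=g(x,c_1\zd c_k)\in\Polo(\zA)$; conversely every $f\in\Polo(\zA)$ arises this way. In the terminology preceding the lemma this says exactly that the action $F_1$ of $F$ on the first factor and its action $F_2$ on the second factor are both equal to $f$.

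Next I would translate between the two notions of separation. Suppose $f\in\Polo(\zA)$ separates $\al\prec\beta$ from $\gm\prec\dl$, i.e.\ $f(\beta)\not\sse\al$ and $f(\dl)\sse\gm$. Let $F$ be the unary polynomial of $\rel$ with $F_1=F_2=f$ constructed above. Since $\al_1=\al$, $\beta_1=\beta$, $\al_2=\gm$, $\beta_2=\dl$, we get $F_1(\beta_1)=f(\beta)\not\sse\al_1$ and $F_2(\beta_2)=f(\dl)\sse\al_2$, so $F$ separates $\al_1\prec\beta_1$ from $\al_2\prec\beta_2$ in $\rel$. Conversely, if some unary polynomial $F$ of $\rel$ separates $\al_1\prec\beta_1$ from $\al_2\prec\beta_2$, write $F_1=F_2=f$ for the associated $f\in\Polo(\zA)$; then $f(\beta)=F_1(\beta_1)\not\sse\al_1=\al$ and $f(\dl)=F_2(\beta_2)\sse\al_2=\gm$, so $f$ separates $\al\prec\beta$ from $\gm\prec\dl$ in $\zA$.

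There is no real obstacle here: the only point that requires a moment's care is the first step, namely verifying that the coordinatewise action of a polynomial of $\rel$ is a genuine polynomial of each factor and, for the diagonal, that the two coordinatewise actions coincide. Everything after that is bookkeeping with the definitions.
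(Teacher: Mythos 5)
Your proof is correct and follows the same route as the paper's: the paper's entire argument rests on the single observation "for any polynomial $f$ its action on the first and second projections of $\rel$ is the same polynomial of $\zA$," which is exactly what you establish (with a bit more detail about why diagonal constants force the two coordinatewise actions to coincide), and the rest in both cases is just unfolding the two definitions of separation.
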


\begin{proof}
Note that for any polynomial $f$ its action on the first and second projections of 
$\rel$ is the same polynomial of $\zA$. Therefore $\al\prec\beta$ can be 
separated from $\gm\prec\dl$ in $\Con(\zA)$ if and only if, there is 
$f\in\Polo(\zA)$, $f(\beta)\not\sse\al$ while $f(\dl)\sse\gm$. This condition 
can be expressed as follows: there is $f\in\Polo(\rel)$, 
$f(\beta_1)\not\sse\al_1$ while $f(\beta_2)\sse\al_2$, which precisely 
means that $\al_1\prec\beta_1$ can be separated from $\al_2\prec\beta_2$ 
in $\rel$.
\end{proof}

In what follows when proving results about separation we will always assume 
that we deal with a relation --- a subdirect product ---  and that the prime 
intervals in question are from congruence lattices of different factors of the 
subdirect product. If this is not the case, one
can duplicate the factor containing the prime intervals and apply 
Lemma~\ref{lem:separation-separation}.

Let $\rel$ be a subdirect product of $\zA_1\tm\dots\tm\zA_n$, $I\sse[n]$,
and let $f$ be a polynomial of $\pr_I\rel$, that is, there are a term operation 
$g$ of $\rel$ and $\vc\ba k\in\pr_I\rel$ such that 
$f(\vc x\ell)=g(\vc x\ell,\vc\ba k)$. The tuples $\ba_i$ can be extended
to tuples $\ba'_i\in\rel$. Then the polynomial of $\rel$ given by
$f(\vc x\ell)=g(\vc x\ell,\vc{\ba'}k)$ is said to be an 
\emph{extension}\index{extension of a polynomial} of $f$
to a polynomial of $\rel$.

\begin{lemma}\label{lem:min-set-separation}
Let $\rel$ be a subdirect product of $\zA_1\tm\dots\tm\zA_n$, $i,j\in[n]$, and 
$\al_i\prec\beta_i$, $\al_j\prec\beta_j$ for $\al_i,\beta_i\in\Con(\zA_i)$, 
$\al_j,\beta_j\in\Con(\zA_j)$. Let also a unary polynomial $f$ of $\rel$ separate 
$\al_i\prec\beta_i$ from $\al_j\prec\beta_j$. Then $f$ can be 
chosen idempotent and such that $f(\zA_i)$ is a $(\al_i,\beta_i)$-minimal set.
\end{lemma}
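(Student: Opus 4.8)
The plan is to start from an arbitrary unary polynomial $f$ of $\rel$ witnessing the separation, i.e. $f(\beta_i)\not\sse\al_i$ but $f(\beta_j)\sse\al_j$, and to massage it into an idempotent polynomial whose restriction to $\zA_i$ is an $(\al_i,\beta_i)$-minimal set, while preserving both separation properties. The natural device is tame congruence theory as packaged in Lemma~\ref{lem:minimal-sets}. First I would consider the action $f_i=f\red{\zA_i}$ of $f$ on the $i$th coordinate; this is a unary polynomial of $\zA_i$ with $f_i(\beta_i)\not\sse\al_i$. By Lemma~\ref{lem:minimal-sets}(6), there is an $(\al_i,\beta_i)$-minimal set $U$ such that $f_i$ witnesses that $U$ and $f_i(U)$ are polynomially isomorphic; in particular $f_i(U)$ is also $(\al_i,\beta_i)$-minimal, and there is $g_i\in\Polo(\zA_i)$ with $g_i(f_i(U))=U$ and $g_i\circ f_i=\id$ on $U$. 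The idea is to precompose: replace $f$ by $f'=g\circ f$ for a suitable extension $g$ of an idempotent polynomial of $\zA_i$ landing on $U$.

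More precisely, here is the sequence of steps. (1) Using Lemma~\ref{lem:minimal-sets}(3), fix an idempotent $h\in\Polo(\zA_i)$ with $h(\zA_i)=U$, $h(\beta_i)\not\sse\al_i$, and $h$ the identity on $U$. (2) Extend $h$ to a polynomial $\hat h$ of $\rel$ (possible by the paragraph preceding the lemma, since $h$ is a polynomial of $\pr_{\{i\}}\rel=\zA_i$), and consider $f_1=\hat h\circ f$ as a polynomial of $\rel$; its action on the $i$th coordinate is $h\circ f_i$, and since $f_i$ maps some $(\al_i,\beta_i)$-minimal set onto an $(\al_i,\beta_i)$-minimal set contained in (in fact equal to, after the first reduction via (6)) the body/tail structure, repeated composition of $h$ and $f_i$ stabilizes. (3) Because any unary polynomial of a finite algebra has an idempotent power, pick $m$ so that $(f_1\red{\zA_i})^m$ is idempotent; then replace $f_1$ by $f_2=f_1^m$ as a polynomial of $\rel$ (still a unary polynomial of $\rel$). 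Now $f_2\red{\zA_i}$ is idempotent with image an $(\al_i,\beta_i)$-minimal set $U'$ (here I use Lemma~\ref{lem:minimal-sets}(2): idempotent polynomials whose image meets $\beta_i$ nontrivially off $\al_i$ have a minimal set as image), and $f_2(\beta_i)\not\sse\al_i$. (4) Throughout, the $j$th coordinate is only ever hit by polynomials of $\zA_j$ composed with $f$, so $f_2(\beta_j)\sse h(\dots(f_j(\beta_j)))\sse\al_j$ because $f_j(\beta_j)\sse\al_j$ already and every polynomial of $\zA_j$ preserves the congruence $\al_j$. Hence $f_2$ still separates $\al_i\prec\beta_i$ from $\al_j\prec\beta_j$, and $f_2$ is the desired idempotent polynomial.

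The one genuinely delicate point — and the step I expect to be the main obstacle — is arranging that the \emph{same} power $m$ makes the polynomial idempotent as a polynomial of the whole relation $\rel$, not merely idempotent in its $i$th-coordinate action. In general an idempotent action on one coordinate need not extend to an idempotent polynomial on $\rel$. The cleanest fix is to first reduce $f$ so that $f\red{\zA_i}$ is already idempotent with minimal-set image (steps (1)–(3) applied only to the $i$th coordinate), and then take a further idempotent power of the resulting $\rel$-polynomial: raising to a power that is a common multiple of the orders of all the finitely many coordinate actions (or simply to $|\rel|!$) yields a polynomial $e$ of $\rel$ with $e\circ e=e$, while $e\red{\zA_i}$, being an idempotent power of an already-idempotent polynomial, equals $f\red{\zA_i}$ and so still has image the $(\al_i,\beta_i)$-minimal set and still does not collapse $\beta_i$ into $\al_i$; and $e\red{\zA_j}$, being a polynomial of $\zA_j$ applied after a polynomial that already sent $\beta_j$ into $\al_j$, still sends $\beta_j$ into $\al_j$. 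This $e$ is the idempotent separating polynomial with the required minimal-set property, completing the proof.
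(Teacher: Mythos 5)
There is a genuine gap at the crucial step where you form $f_1=\hat h\circ f$. In the opening paragraph you correctly identify, via Lemma~\ref{lem:minimal-sets}(6), a minimal set $U$ on which $f_i$ does not collapse $\beta_i$, and you even note the existence of the inverse $g_i\in\Polo(\zA_i)$ with $g_i\circ f_i=\id$ on $U$. But in step~(1) you then discard $g_i$ and instead take a generic idempotent $h$ onto $U$ from Lemma~\ref{lem:minimal-sets}(3), and it is $h$ that you compose with $f$ in step~(2). The problem is that such an $h$ is only guaranteed to act as the identity on $U$ itself; it carries no information about how it treats the image minimal set $V=f_i(U)$, and in general $h$ may collapse $\beta_i\red V$ into $\al_i$. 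For a concrete illustration, in $\zA_i=\zZ_2\tm\zZ_2$ with $\al_i=\zz$, $\beta_i=\zo$ the map $e(x,y)=(x,1)$ is an idempotent polynomial onto the minimal set $U=\{(0,1),(1,1)\}$, yet it sends the minimal set $V=\{(0,0),(0,1)\}$ to the singleton $\{(0,1)\}$. If $f_i$ has image $V$, then $h\circ f_i$ is constant and no power of $\hat h\circ f$ can separate $\al_i\prec\beta_i$ from anything. The phrase ``in fact equal to, after the first reduction via (6)'' in your step~(2) asserts $V=U$, but Lemma~\ref{lem:minimal-sets}(6) does not give that; it only gives a polynomial isomorphism between $U$ and $V$.

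The fix is exactly the object you already introduced and then abandoned: replace $h$ by (an extension to $\rel$ of) $g_i$, or more precisely by $e_U\circ g_i$ where $e_U$ is the idempotent of Lemma~\ref{lem:minimal-sets}(3) onto $U$, so that the $i$th-coordinate action of the composite is the identity on $U$; this forces every power of the composite to act identically on $U$ and hence never to collapse $\beta_i\red U$ into $\al_i$. That is precisely what the paper does: it picks $h$ with $h(V)=U$ and $h\circ g\red U=\id$, extends it to a polynomial $h'$ of $\rel$, notes that $h'\circ g$ still separates, and then takes an idempotent power. Your remaining steps --- preserving the collapse on the $j$th coordinate (since every polynomial of $\zA_j$ preserves $\al_j$), and then passing to a global idempotent power of the $\rel$-polynomial by raising to $|\rel|!$ so that the $i$th-coordinate action, already idempotent, is unaffected --- are sound and essentially identical to the paper's. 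So the overall route is the same; only the choice of the auxiliary polynomial in step~(1) needs to be corrected.
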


\begin{proof}
Let $g$ be a polynomial separating $(\al_i,\beta_i)$ from $(\al_j,\beta_j)$. 
Since $g(\beta_i)\not\sse\al_i$, by Lemma~\ref{lem:minimal-sets}(6) there 
is an $(\al_i,\beta_i)$-minimal set $U$ such that $g(\beta_i\red U)\not\sse\al_i$.
Let $V=g(U)$, by Lemma~\ref{lem:minimal-sets}(2) $V$ is a 
$(\al_i,\beta_i)$-minimal set. Let $h$ be a unary 
polynomial such that $h$ maps $V$ onto $U$ and $h\circ g\red U$ is the identity
mapping. Let also $h'$ be an extension of $h$ to a polynomial of $\rel$. Then 
$h'\circ g$ separates $i$ from $j$. Now $f$ can be chosen to be an appropriate 
power of $h'\circ g$.
\end{proof}

For a subdirect product $\rel\sse\zA_1\tms\zA_n$ the relation `cannot be 
separated' on prime intervals of the $\zA_i$s  is clearly reflexive and
transitive. If the algebras $\zA_i$ are Mal'tsev, it is also symmetric (for 
partial results see \cite{Bulatov02:maltsev-3-element,Bulatov17:semilattice}). 
Moreover, it can be shown that it remains `almost' symmetric when the 
$\zA_i$s contain no majority edges. In the general case however the situation 
is more complicated. Next we introduce conditions that make the 
`cannot be separated' relation to some extent symmetric, at least in what 
concerns our algorithm, as it will be 
demonstrated in Lemma~\ref{lem:relative-symmetry}.

Let $\beta_i\in\Con(\zA_i)$, let $B_i$ be a $\beta_i$-block for $i\in[n]$,
and let $B'_i=\pr_i(\rel\cap\ov B)$. 
Let also $\cU$ be a set of unary polynomials of $\rel$, $i$ an element 
from $[n]$, and $\al,\beta\in\Con(\zA_i)$ with $\al\prec\beta\le\beta_i$. 
Let $T_{\zA_i}(a',b';\al,\beta,\cU)\sse\beta\fac\al\sse(\zA_i\fac\al)^2$ 
for $a',b'\in B'_i\fac\al$, $(a',b')\in\beta-\al$, denote the set of pairs 
$(a,b)\in\beta\fac\al$ such that there is a polynomial $g\in\cU$ 
satisfying the following conditions: 
$g(\{a',b'\})=\{a,b\}$ and $g(\zA_i)$ is a $(\al,\beta)$-minimal set.
Note that these conditions imply that $\{a,b\}$ is a $(\al,\beta)$-subtrace or $a=b$. 
We say that $\al$ and $\beta$ are \emph{$\cU$-chained}\index{$\cU$-chained 
congruences} in $\rel$ with respect to $\ov\beta,\ov B$ if for 
any $a',b'\in B'_i\fac\al$ with $(a',b')\in\beta-\al$, the following conditions 
hold:\\[2mm]
(G1) For a $\beta\fac\al$-block $E$ such that $E\cap\umax(B'_i)\ne\eps$
let $E'=E\cap B'_i\fac\al$. Then $(a,b)\in T_{\zA_i}(a',b';\al,\beta,\cU)$ 
for any $a,b$ from the same as-component of $E'$.\\[1mm]
(G2) For any $\beta\fac\al$-block $E$ such that $E\cap\umax(B'_i)\ne\eps$, 
and any $a,b\in\umax(E')$, where $E'=E\cap B'_i\fac\al$, there is a sequence $a=a_1\zd a_k=b$ in $E'$ such that 
$\{a_i,a_{i+1}\}\in T_{\zA_i}(a',b';\al,\beta,\cU)$ for any $i\in[k-1]$.\\[2mm] 
Also,  if for elements $a,b$ and any $a',b'$ there is a sequence of elements
satisfying (G2), we say that $a$ and $b$ are 
\emph{subtrace connected}\index{subtrace connected};
congruences $\al,\beta,\ov\beta$, congruence classes $\ov B$, and 
set of polynomials $\cU$ will always be clear from the context in this case.
Observe that $\cU$-chaining amounts to saying that polynomials from $\cU$ 
do not allow any congruences of $\beta$-blocks viewed as subalgebras between 
$\al$ and $\beta$, at least where u-maximal elements are concerned. 

A unary polynomial $f$ is said to be 
\emph{$\ov B$-preserving}\index{$\ov B$-preserving} if $f(\ov B)\sse\ov B$. 
We call relation $\rel$ \emph{chained}\index{chained} with respect to 
$\ov\beta,\ov B$ if\\[2mm]
(Q1) for any $\al,\beta\in\Con(\zA_i)$, $i\in[n]$, such that 
$\al\prec\beta\le\beta_i$, congruences $\al$ and $\beta$ are $\cU_B$-chained in 
$\rel$, where $\cU_B$ is the set of all $\ov B$-preserving polynomials of 
$\rel$\\[1mm]
(Q2) for any $\al,\beta\in\Con(\zA_i)$, $\gm,\dl\in\Con(\zA_j)$, $i,j\in[n]$, such 
that $\al\prec\beta\le\beta_i$, $\gm\prec\dl\le\beta_j$, and $(\al,\beta)$ can be
separated from $(\gm,\dl)$, congruences $\al$ and $\beta$ are 
$\cU^*$-chained in $\rel$, where $\cU^*$ is the set of all $\ov B$-preserving 
polynomials of $\rel$ such that $g(\dl)\sse\gm$.\\[2mm]
Polynomials from $\cU^*$ in condition (Q2) will be called 
\emph{$(\gm,\dl,\ov B)$-good}\index{$(\gm,\dl,\ov B)$-good polynomial}.

\begin{lemma}\label{lem:good-polys}
(1) Any constant polynomial from $\ov B\cap\rel$ is 
$(\gm,\dl,\ov B)$-good.\\[1mm]
(2) If $f$ is a $k$-ary term function of $\rel$ and $\vc gk$ are 
$(\gm,\dl,\ov B)$-good polynomials, then $f(g_1(x)\zd g_k(x))$ is 
$(\gm,\dl,\ov B)$-good.\\[1mm]
(3) Let $T(a',b')$ denote $T_{\zA_i}(a',b';\al,\beta,\cU)$ 
for $\cU\in\{\cU_{\ov B},\cU^*\}$. If 
$\{a,b\}\in T(a',b')$ then $T(a,b)\sse T(a',b')$.\\[1mm]
(4) If there is a $\beta\fac\al$-block $E$ such that 
$E\cap\umax(B'_i)\ne\eps$,   
let $E'=E\cap B'_i\fac\al$, and if $E'$ contains a nontrivial as-component, 
then there is a set $T\sse\beta\fac\al$ such that $T\sse T(a',b')$ for any 
$a',b'\in B'_i\fac\al$, $a'\eqc{\beta\fac\al} b'$ and $T$ satisfies conditions 
(G1),(G2) for $T(a',b')$.\\[1mm]
(5) Let $a',b'\in B'_i\fac\al$, $a'\eqc{\beta\fac\al} b'$ be such that 
$T(a',b')$ is minimal among sets of this form. Then for any 
$(a,b)\in T(a',b')$ there is $h\in\cU$ such that $h$ is idempotent and 
$h(a)\eqc\al a,h(b)\eqc\al b$.
\end{lemma}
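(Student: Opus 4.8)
The plan is to prove the five items essentially in the given order, since later items lean on earlier ones. For item (1), a constant polynomial $c$ with value in $\ov B\cap\rel$ clearly satisfies $c(\ov B)\sse\ov B$ and $c(\dl)=\{(\pr_i c,\pr_i c)\}\sse\gm$ (idempotency of the algebra makes the single value a fixed point), so $c$ is $\ov B$-preserving and maps $\dl$ into $\gm$; hence it lies in $\cU^*$. For item (2), if each $g_\ell$ is $\ov B$-preserving and maps $\dl$ into $\gm$, then $f(g_1(x)\zd g_k(x))$ applied to $\ov B$ lands in $f(\ov B\zd\ov B)\sse\ov B$ because $\ov B$ is (componentwise) a subalgebra; and if $(x,y)\in\dl$ then $(g_\ell(x),g_\ell(y))\in\gm$ for each $\ell$, so applying $f$ and using that $\gm$ is a congruence gives $(f(g_1(x)\zd g_k(x)),f(g_1(y)\zd g_k(y)))\in\gm$. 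So the composite is again in $\cU^*$ (and the same argument with $\cU_{\ov B}$ works). This shows $\cU^*$ and $\cU_{\ov B}$ are closed under composition with term operations of $\rel$ in each coordinate — the workhorse fact for the rest.

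For item (3): suppose $\{a,b\}\in T(a',b')$, so there is $g\in\cU$ with $g(\{a',b'\})=\{a,b\}$ and $g(\zA_i)$ an $(\al,\beta)$-minimal set. Take any $(c,d)\in T(a,b)$, witnessed by $h\in\cU$ with $h(\{a,b\})=\{c,d\}$ and $h(\zA_i)$ minimal. Then $h\circ g$ (more precisely its extension to a polynomial of $\rel$, which by item (2) stays in $\cU$) maps $\{a',b'\}$ to $\{c,d\}$; and since $h(\beta\red{g(\zA_i)})\not\sse\al$ — because $\{a,b\}\sse g(\zA_i)$ is an $(\al,\beta)$-subtrace carried by $h$ onto the subtrace $\{c,d\}$ — Lemma~\ref{lem:minimal-sets}(2) gives that $h(g(\zA_i))$ is an $(\al,\beta)$-minimal set. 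Hence $(c,d)\in T(a',b')$, proving $T(a,b)\sse T(a',b')$.

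For item (4), the idea is to build a single "universal" $T$ that works for all $a',b'$ in the block $E$. Fix a nontrivial as-component $D$ of $E'$ and pick $a_0\ne b_0$ in $D$; by Corollary~\ref{cor:mapping-pairs}/\ref{cor:max-min-set} (applied inside the $\beta$-block, after passing to the quotient by $\al$, using u-maximality and that $E$ meets $\umax(B'_i)$) there is a polynomial carrying any given $a',b'$ to $a_0,b_0$ while hitting a minimal set; then set $T$ to be the $T(a_0,b_0)$ for this fixed pair. Item (3) forces $T(a_0,b_0)\sse T(a',b')$ for every admissible $a',b'$, and the Maximality-Lemma machinery (Lemma~\ref{lem:to-max}) together with Lemma~\ref{lem:minimal-sets} verifies that $T$ itself satisfies (G1) and (G2). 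For item (5), pick $(a',b')$ with $T(a',b')$ minimal; given $(a,b)\in T(a',b')$ with witness $g$, by Lemma~\ref{lem:min-set-separation} (or Lemma~\ref{lem:minimal-sets}(3)) we may replace $g$ by an appropriate power to make it idempotent on its minimal set $U=g(\zA_i)$; since $a,b\in U$ and $g$ is the identity on $U$ (modulo nothing — it is literally idempotent), we get $g(a)=a$, $g(b)=b$, so in particular $g(a)\eqc\al a$ and $g(b)\eqc\al b$; minimality of $T(a',b')$ is used to guarantee the power we take does not shrink the reachable set. The main obstacle I anticipate is item (4): getting one $T$ that is simultaneously good for all $a',b'$ requires carefully combining Corollary~\ref{cor:mapping-pairs} (to move subtraces around by polynomials) with the u-maximality hypothesis on $E$, and checking (G1)/(G2) for the constructed $T$ is where the bookkeeping with as-components and minimal sets is heaviest.
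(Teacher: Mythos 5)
Your items (1) and (2) are fine, and item (3) matches the paper's proof (compose the two witnesses; you correctly add the detail, via Lemma~\ref{lem:minimal-sets}(2), that the composite still hits a minimal set, which the paper leaves implicit). The gaps are in (4) and (5).

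For item (4) you invoke Corollary~\ref{cor:mapping-pairs} (or~\ref{cor:max-min-set}) to produce a polynomial carrying $a',b'$ onto a fixed pair $a_0,b_0$ in a nontrivial as-component. But those corollaries only produce polynomials in $\Polo(\zA_i)$, and membership in $T(a',b')$ requires a witness in $\cU$ --- i.e.\ an $\ov B$-preserving polynomial of the whole relation $\rel$, and for $\cU^*$ additionally one that collapses $\dl$ into $\gm$. Nothing in those corollaries delivers that. The paper obtains $\{a_0,b_0\}\in T(a',b')$ directly from condition (G1), which is exactly the hypothesis that $\al,\beta$ are $\cU$-chained; this hypothesis is implicit in the lemma's setup (the paper cites (G2), apparently a typo for (G1)). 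Once you have $\{a_0,b_0\}\in T(a',b')$ from (G1), item (3) gives $T(a_0,b_0)\sse T(a',b')$ for all $a',b'$, and you may set $T=T(a_0,b_0)$; there is also nothing further to verify about (G1)/(G2) for $T$, since $T$ is itself one of the sets $T(a',b')$ and so inherits the chaining hypothesis --- you should not be re-deriving those conditions by Maximality-Lemma bookkeeping.

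For item (5) your plan of taking an idempotent power of the original witness $g$ (where $g(\{a',b'\})=\{a,b\}$ and $U=g(\zA_i)$ is minimal) does not work as stated. The idempotent power $g^n$ is the identity only on its own image $g^n(\zA_i)$, which may be a proper subset of $U$; the claim ``$g$ is the identity on $U$'' would require $g\red U$ to be a bijection of $U$, which you have not established, and even if it were, $g^n(a)=a,\ g^n(b)=b$ concerns $a,b$, not $a',b'$, while the only thing $g$ is known to do is send $\{a',b'\}$ to $\{a,b\}$. The paper's route is different and needed: from item (3) and minimality of $T(a',b')$ conclude $T(a,b)=T(a',b')$, hence $(a,b)\in T(a,b)$, which supplies a fresh witness $h\in\cU$ with $h(\{a,b\})=\{a,b\}$; since $h$ permutes the two $\al$-classes $a,b$, a suitable power of $h$ is idempotent, fixes both mod $\al$, and remains in $\cU$ by item (2).
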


\begin{proof}
Items (1),(2) are straightforward.

(3) Let $\{a'',b''\}\in T(a,b)$. Then there are polynomials $f,g\in\cU$ with
$\{a,b\}=f(\{a',b'\})$ and $\{a'',b''\}=g(\{a,b\})$. Then $g\circ f\in\cU$ by 
item (2) or definition and $g\circ f(\{a',b'\})=\{a'',b''\}$.

(4) Take $a,b\in C$ where $C$ is a nontrivial as-component in $E'$. 
By (G2) $\{a,b\}\in T(a',b')$ for any appropriate $a',b'$. Therefore by (3)
$T=T(a,b)\sse T(a',b')$.

(5) Let $\{a,b\}\in T(a',b')$. Then by (3) $T(a,b)\sse T(a',b')$, and therefore 
by the minimality of $T(a',b')$ we get $T(a,b)=T(a',b')$. The result 
follows by definition of $T(a',b')$.
\end{proof}

The next lemma shows how we will use the property of being chained.

\begin{lemma}\label{lem:link-separability}
Let $\rel$ be a subdirect product of $\vc\zA n$ chained 
with respect to $\ov\beta,\ov B$, where $\beta_i\in\Con(\zA_i)$
and $B_i$ is a $\beta_i$-block, and $\rel'=\rel\cap(B_1\tms B_n)$,
$B'_i=\pr_i\rel'$. Let also $\lnk$ be the link congruence 
of $B'_i$ with respect to $\pr_{ij}\rel'$ for some $i,j\in[n]$, and 
$\dl=\Cg\lnk$ the congruence of $\zA_i$ generated by 
$\lnk$. Then for any $\gm\in\Con(\zA_i)$ with $\gm\prec\dl$ 
it holds $(\dl\fac\gm)\red{\umax(E)}=(\lnk\fac\gm)\red{\umax(E)}$
for every $\dl\red{B'_i}$-block $E$ u-maximal in $B'_i\fac\dl$.
\end{lemma}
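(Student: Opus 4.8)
The statement to prove is Lemma~\ref{lem:link-separability}: under the chaining hypotheses, on u-maximal $\dl\red{B'_i}$-blocks the congruences $\dl$ and $\lnk$ agree modulo any $\gm\prec\dl$. Since $\lnk\le\dl$ always, only the inclusion $(\dl\fac\gm)\red{\umax(E)}\sse(\lnk\fac\gm)\red{\umax(E)}$ needs proof. So fix a $\dl\red{B'_i}$-block $E$ that is u-maximal in $B'_i\fac\dl$, and fix $a,b\in\umax(E)$ with $a\eqc\dl b$; I must show $a\eqc{\lnk}b$ modulo $\gm$, i.e. that $(a^\gm,b^\gm)$ lies in $\lnk\fac\gm$ (viewing $\lnk$ as acting on $B'_i$). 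The plan is to reduce to $(\gm,\dl)$-subtraces via the chaining condition (Q2), and then show that each subtrace step stays inside a single $\lnk$-block.

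First I would set up the separation: because $\gm\prec\dl$ and $\dl$ is the congruence \emph{generated} by the link congruence $\lnk$ of $B'_i$ (with respect to $\pr_{ij}\rel'$), the interval $(\gm,\dl)$ cannot be collapsed by $\lnk$ alone unless $\lnk$ already reaches up to $\dl$; the key point is that $\lnk$ is itself a congruence of $B'_i$, while $\dl\in\Con(\zA_i)$. I would use this to argue that $(\gm,\dl)$ can be separated from the corresponding prime interval $(\gm_j,\dl_j)$ living in $\Con(\zA_j)$ coming from the $j$th link structure — or, more directly, I would invoke (Q2) with the pair $(\gm,\dl)$ in position $i$ against a suitable separable interval in position $j$, obtaining that $\gm$ and $\dl$ are $\cU^*$-chained in $\rel$, where $\cU^*$ is the set of $\ov B$-preserving, $(\gm_j,\dl_j,\ov B)$-good polynomials. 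Then (G1) and (G2) of the chaining definition tell me: on the u-maximal $\dl\fac\gm$-block $E$ (viewed inside $B'_i\fac\gm$), any two u-maximal elements $a,b$ are joined by a sequence $a=a_1,\dots,a_k=b$ with each $\{a_t,a_{t+1}\}\in T_{\zA_i}(a',b';\gm,\dl,\cU^*)$, i.e. there is a $(\gm_j,\dl_j,\ov B)$-good polynomial $g_t\in\cU^*$ with $g_t(\zA_i)$ a $(\gm,\dl)$-minimal set and $g_t(\{a',b'\})=\{a_t,a_{t+1}\}$.

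Second — and this is the heart of the argument — I would show each subtrace step $\{a_t,a_{t+1}\}$ is contained in a single $\lnk$-block, so that chaining them gives $a\eqc{\lnk}b$ modulo $\gm$. Here I use that $g_t$ is $(\gm_j,\dl_j,\ov B)$-good: $g_t(\dl_j)\sse\gm_j$ means that the image of $g_t$, applied in coordinate $j$, collapses the $j$th interval. Combined with $\ov B$-preservation, a good polynomial $g_t$ applied to a witness of linkage in $\pr_{ij}\rel'$ must send the two $j$-endpoints into the \emph{same} $\gm_j$-class, hence the corresponding $i$-images $a_t,a_{t+1}$ are linked through a common element of $\pr_j\rel'$ modulo $\gm_j$ — but because of the separation, passing to $\gm_j$ kills the $j$-side distinction while the $i$-side distinction $(a_t,a_{t+1})\notin\gm$ survives, forcing $a_t\eqc{\lnk}a_{t+1}$ at the level of $\zA_i\fac\gm$. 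I would make this precise by taking, for the subtrace endpoints $a',b'$ generating $\lnk$, tuples $(a',c),(b',c)\in\pr_{ij}\rel'$ (linkage), applying $g_t$ componentwise to these, and using goodness + Lemma~\ref{lem:Qab-tolerance}/Rectangularity to conclude $(a_t,a_{t+1})\in\tol_1$ modulo $\gm$, hence in $\lnk\fac\gm$. Iterating over the chain $a_1,\dots,a_k$ and using transitivity of $\lnk$ finishes the inclusion.

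**Main obstacle.** The delicate step is the second one: correctly matching up the prime interval $(\gm,\dl)$ in coordinate $i$ with the separable partner interval in coordinate $j$ so that (Q2) applies, and then tracking how a $(\gm_j,\dl_j,\ov B)$-good polynomial acts on a linkage witness to produce a $\lnk$-relation rather than merely a $\dl$-relation. The subtlety is that the chaining conditions (G1),(G2) are stated only for u-maximal blocks and u-maximal elements, so I must be careful that $E$ being u-maximal in $B'_i\fac\dl$ passes correctly to u-maximality at the $\gm$-level (using Lemma~\ref{lem:u-max-congruence} and Lemma~\ref{lem:thin}), and that the minimal-set polynomials supplied by chaining can be taken $\ov B$-preserving and good simultaneously — this is where Lemma~\ref{lem:good-polys}(1),(2),(5) (closure of good polynomials under composition, idempotency normalization) will be needed. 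Everything else is bookkeeping with projections and the Rectangularity Corollary~\ref{cor:linkage-rectangularity}.
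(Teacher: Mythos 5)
Your skeleton --- chain $a'$ to $b'$ via $(\gm,\dl)$-subtraces supplied by chaining, and argue that each subtrace step lies inside a single $\lnk$-block --- is the right shape, and it is what the paper's proof does. The gap is in which chaining condition you invoke and in what you then demand of the polynomials. You reach for (Q2), which presupposes that the interval $(\gm,\dl)$ in $\Con(\zA_i)$ is separable from some prime interval in $\Con(\zA_j)$. No such separation is part of the hypothesis, you do not produce one, and nothing in the argument actually needs the restricted $(\gm_j,\dl_j,\ov B)$-good polynomial set that (Q2) supplies. The correct and unconditional tool is (Q1), applied to $\al=\gm\meet\beta_i$, $\beta=\dl\meet\beta_i$, which yields a chain whose steps are images of a seed pair under arbitrary $\ov B$-preserving polynomials. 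If (Q2)'s precondition fails your argument has nothing to work with, so this is a genuine gap, not bookkeeping.

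Your second step --- that each $\{a_\ell,a_{\ell+1}\}$ lies in a $\lnk$-block --- then follows far more directly than your sketch suggests, with no appeal to goodness, to Lemma~\ref{lem:Qab-tolerance}, or to Rectangularity, and without passing ``modulo $\gm_j$.'' Observe first that since $\dl=\Cg\lnk$ and $\gm<\dl$, one cannot have $\lnk\sse\gm$ (otherwise $\Cg\lnk\le\gm<\dl$); so the seed pair $(a,b)$ can be taken in $\lnk-\gm$. Each subtrace step produced by (Q1) has the form $\{a_\ell,a_{\ell+1}\}=f_\ell(\{a,b\})$ for a $\ov B$-preserving polynomial $f_\ell$ of $\rel$. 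Such an $f_\ell$ maps $\rel'=\rel\cap\ov B$ into itself, hence maps $\pr_{ij}\rel'$ into itself componentwise, hence preserves the link tolerance $\tol_1$ and therefore its transitive closure $\lnk$. Thus $(a_\ell,a_{\ell+1})\in\lnk$ outright --- not merely modulo $\gm$ --- and transitivity of $\lnk$ gives $(a',b')\in\lnk$. Your write-up never cleanly isolates the observation that the seed pair can be chosen in $\lnk$, and the detour through goodness obscures the simple fact that $\ov B$-preservation already preserves $\lnk$.
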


\begin{proof}
If $\gm\prec\dl$ then by the choice of $\dl$
there are $a,b\in B'_i$ with $(a,b)\in\dl-\gm$. Let $E$ be the intersection
of $B'_i$ with a $\dl$-block. We apply condition (Q1) to $\al=\gm\meet\beta_i$ 
and $\beta=\dl\meet\beta_i$. By condition (Q1) for any $a',b'\in\umax(E)$
there is a sequence $a'=a_1\zd a_k=b'$ such that 
$\{a_\ell,a_{\ell+1}\}=f_\ell(\{a,b\})$ for some $\ov B$-preserving polynomial
$f_\ell$ for each $\ell\in[k-1]$. This means that $(a_\ell,a_{\ell+1})\in\lnk$,
and so $(a',b')\in\lnk$.
\end{proof}

The following lemma establishes the weak symmetricity of separability
relation mentioned before.

\begin{lemma}\label{lem:relative-symmetry}
Let $\rel$ be a subdirect product of $\zA_1\tms\zA_n$, $\beta_i\in\Con(\zA_i)$, 
$B_i$ a $\beta_i$-block such that $\rel$ is chained with respect to 
$\ov\beta,\ov B$; $\rel'=\rel\cap\ov B$, 
$B'_i=\pr_i\rel'$. Let also $\al\prec\beta\le\beta_1$, $\gm\prec\dl=\beta_2$, 
where $\al,\beta\in\Con(\zA_1)$, $\gm,\dl\in\Con(\zA_2)$. 
If $B'_2\fac\gm$  has a 
nontrivial as-component $C'$ and $(\al,\beta)$ can be separated from $(\gm,\dl)$,
then there is a $\ov B$-preserving polynomial $g$ such that 
$g(\beta\red{B'_1})\sse\al$ and $g(\dl)\not\sse\gm$. Moreover, for any
$c,d\in C'$ polynomial $f$ can be chosen such that $f(c)=c,f(d)=d$.
\end{lemma}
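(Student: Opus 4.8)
The plan is to reduce the claim to the construction of a single polynomial, to extract from the hypothesis a well-behaved separating polynomial, and then to build the desired $g$ by localising the first-coordinate condition with the chaining conditions and finally extracting a polynomial from a rectangle obtained via the nontrivial as-component $C'$. I begin with two reductions. First, because $B'_2=\pr_2(\rel\cap\ov B)$ lies in the single $\dl$-block $B_2$ (recall $\dl=\beta_2$), any two distinct $\gm$-blocks $c\ne d$ of the \emph{nontrivial} as-component $C'$ of $B'_2\fac\gm$ satisfy $(c,d)\in\dl\fac\gm$ with $c\ne d$; hence any polynomial of $\rel$ whose induced action on $\zA_2\fac\gm$ fixes both $c$ and $d$ already satisfies $g(\dl)\not\sse\gm$. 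So it suffices to build a $\ov B$-preserving polynomial $g$ of $\rel$ with $g(\beta\red{B'_1})\sse\al$ whose action on $\zA_2\fac\gm$ fixes $c$ and $d$; then the main assertion and the ``moreover'' clause come out together. Second, by Lemma~\ref{lem:min-set-separation} I fix an idempotent polynomial $h$ of $\rel$ separating $(\al,\beta)$ from $(\gm,\dl)$ with $U:=h(\zA_1)$ an $(\al,\beta)$-minimal set (so $h\red U=\id_U$), and an $(\al,\beta)$-subtrace $\{p,q\}\sse U$. Note that $h$ itself cannot enter $g$ by composition --- every congruence is preserved by all unary polynomials, so both $h\circ r$ and $r\circ h$ map $\dl$ into $\gm$ --- instead $h$ serves only to \emph{certify} that collapsing $\beta$ on the first coordinate is not obstructed by the second coordinate, being the identity on the subtrace $\{p,q\}$ of the first coordinate while collapsing $\dl$ into $\gm$ on the second.

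The construction of $g$ then proceeds in two stages. \emph{Stage one}: reduce the requirement $g(\beta\red{B'_1})\sse\al$ to a local one. Using the chaining conditions (Q1) and, where separability is invoked, (Q2), together with Lemma~\ref{lem:good-polys} and Corollary~\ref{cor:mapping-pairs}, all the $(\al,\beta)$-subtraces of u-maximal elements of $B'_1$ are linked to one another, and to the subtrace $\{p,q\}$ of the minimal set $U$, by $\ov B$-preserving polynomials with minimal-set image; combining this with Lemma~\ref{lem:minimal-sets}(2),(3),(5) and the Maximality Lemma~\ref{lem:to-max}, the whole of $\beta\red{B'_1}$ becomes, as far as u-maximal elements are concerned, the transitive closure of $\al$ together with the $\ov B$-preserving images of $\{p,q\}$, so it is enough to arrange that $g$ be constant modulo $\al$ on $U$ in the first coordinate and then close $g$ off under one further $\ov B$-preserving composition. \emph{Stage two}: produce such a $g$. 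Consider the subdirect product of $\zA_1$ and $\zA_2$ cut out of $\rel'$ by restricting the first coordinate to u-maximal elements of the $\beta$-block of $p,q$ and the second to $C'$; the polynomial $h$ shows that, after passing to the quotients modulo $\al$ and $\gm$, the first-coordinate factor $(\al,\beta)$ is not linked to the second-coordinate factor $(\gm,\dl)$ in this relation, so by the link-congruence analysis of Lemma~\ref{lem:link-separability}, the Maximality Lemma~\ref{lem:to-max}, the Rectangularity Corollary~\ref{cor:linkage-rectangularity} and Proposition~\ref{pro:max-gen} applied to the nontrivial as-component $C'$, the relation contains a rectangle whose second-coordinate projection contains $c$ and $d$ and over each of whose first-coordinate fibres the subtrace $\{p,q\}$ is collapsed modulo $\al$. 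Reading a binary term operation off a generating pair of this rectangle and substituting $h$ (or a constant from $U\cap\ov B$) into the auxiliary argument yields the required $g$.

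The step I expect to be the main obstacle is Stage two, and more generally the interaction of the two stages: one must collapse the first-coordinate subtrace while keeping $c$ and $d$ distinct modulo $\gm$ on the second, and the two coordinates are coupled through all of $\rel$, so the bookkeeping --- which has to be carried through u-maximal elements throughout --- is delicate. In particular, care is needed so that the $\ov B$-preserving polynomials used to collapse the various subtraces in Stage one do not inadvertently collapse $\dl$ into $\gm$; it is exactly here that both the chaining hypothesis (which controls which polynomials are available) and the nontrivial-as-component hypothesis (which supplies the rectangle, via rectangularity) are indispensable, and, as the author notes, where the first version of the proof had a gap.
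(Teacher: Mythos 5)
Your overall strategy matches the paper's: reduce the ``moreover'' clause to the main assertion by observing that fixing $c,d\in C'$ with $c\ne d$ modulo $\gm$ already forces $g(\dl)\not\sse\gm$; then build the required polynomial by working through the structure of a suitable subdirect product on four coordinates and invoking rectangularity on as-components. The observation that the separating polynomial $h$ from Lemma~\ref{lem:min-set-separation} can only serve as a certificate, not a factor of $g$, is exactly right and is precisely why the paper builds the auxiliary relation $\relo^*\sse\zA_1^2\tm\zA_2^2\tm\rel$ of quadruples $(f(a),f(b),f(c),f(d))$.

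However, there is a genuine gap in your Stage two, and it is not cosmetic. You say ``the relation contains a rectangle whose second-coordinate projection contains $c$ and $d$ and over each of whose first-coordinate fibres the subtrace $\{p,q\}$ is collapsed modulo $\al$,'' and you attribute this to Lemma~\ref{lem:link-separability}, the Maximality Lemma, the Rectangularity Corollary and Proposition~\ref{pro:max-gen}. But the existence of such a rectangle is exactly what needs to be established, and the paper's proof spends five separate claims doing it: first identifying which products of prime intervals lie inside the relevant link congruences of $\relo'$ (Claims~1--2), then showing that $(c,d)$ is as-maximal in a $\lnk_2$-block (Claim~3) and that a suitable $(a,a)$ is as-maximal on the other side (Claim~4), and only then passing through an intermediate relation $\relo''$ whose own link congruences are controlled well enough that the Rectangularity Corollary finally produces $(a,a,c,d)\in\relo'$ (Claim~5). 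In your outline, these steps are collapsed into a single appeal to rectangularity, and that appeal fails without the intermediate as-maximality and linkedness facts: rectangularity gives you a product of as-components of link-congruence blocks, not a product of whatever two sets you have restricted to, and establishing that the sets you care about actually are as-components of the relevant link-congruence blocks is where the work lies. Moreover, your single argument does not address the paper's case split between whether the $\beta$-block attached to the nontrivial as-component of $B'_2$ already has more than one u-maximal element (Case~1) or not (Case~2); these produce genuinely different structures in $\relo'$ and are handled differently. Finally, your Stage one handles only subtraces in the minimal set and u-maximal elements; the paper additionally has to extend from a specific subtrace in $T_1$ to arbitrary $\beta$-related pairs $(a',b')\in B'_1$, which requires another pass through $\relo^\dagger$ and Lemma~\ref{lem:good-polys}(5), a step not present in your proposal.
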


\begin{proof}
As is easily seen, we can assume that  $\al,\gm$ are equality relations. 
We need to show that there is $g$ such that $g$ collapses $\beta$
but does not collapse $\beta_2=\dl$. 

First we show that there are $c,d\in B'_2$ such that for any 
$(a,b)\in\beta\red{B'_1}$ there is a polynomial $h^{ab}$ 
of $\rel$ such that 
\begin{itemize}
\item[(1)]
$h^{ab}$ is idempotent;
\item[(2)]
$h^{ab}(a)=h^{ab}(b)$;
\item[(3)]
$h^{ab}(c)=c$, $h^{ab}(d)=d$.
\end{itemize}

We consider two cases.
 
\medskip

{\sc Case 1.}
There is an element $c$ from a nontrivial as-component of $B'_2$ such that 
$(a,c,\bb)\in\rel'$ for some $a\in B'$, a $\beta$-block such that 
$B'\cap\umax(B'_1)\ne\eps$ and $|\umax(B'\cap B'_1)|>1$.

\medskip

First, we choose $d$ to be any element other than $c$ of the nontrivial 
as-component $C'$ of $B'_2$ containing $c$. 
Let $T_1$ be a minimal set of $(\al,\beta)$-subtraces 
as in Lemma~\ref{lem:good-polys}(4) for $\cU^*$, the set of 
$(\gm,\dl,\ov B)$-good polynomials. We start with the case when $(a,b)\in T_1$.
Even more specifically, as $c$ is as-maximal in $B'_2$, by the 
Maximality Lemma~\ref{lem:to-max}(5) $a$ can be chosen from 
$\umax(B'\cap B'_1)$. Take an $(\al,\beta)$-subtrace 
$\{a,b\}\in T_1$, such a subtrace exists by condition (Q1).


By $\relo^*\sse\zA_1^2\tm\zA_2^2\tm\rel$ we denote the relation 
generated by 
$\{(a,b,c,d,\ba)\}\ \cup\{(x,x,y,y,\bz)\mid \bz\in\rel,\bz[1]=x,\bz[2]=y\}$,
where $\ba$ is an arbitrary element from $\rel'$. Let 
$\relo=\pr_{1234}\relo^*$ and 
$\relo'=\pr_{1234}(\relo^*\cap(B'_1\tm B'_1\tm B'_2\tm B'_2\tm\ov B))$.
Observe that $\relo$ is exactly the set of quadruples $(f(a),f(b),f(c),f(d))$ 
for unary polynomials $f$ of $\rel$ and $\relo'$ is exactly the set of quadruples 
$(f(a),f(b),f(c),f(d))$ for $\ov B$-preserving unary polynomials $f$ of $\rel$.
We prove that $\relo'$ contains a quadruple of the form $(a',a',c,d)$; 
the result then follows.

Let also $\relo_1=\pr_{1,2}\relo=\relo^{\zA_1}_{ab}$,  
$\relo_2=\pr_{3,4}\relo=\relo^{\zA_2}_{cd}$; set 
$\relo'_1=\pr_{1,2}\relo'$, $\relo'_2=\pr_{3,4}\relo'$. Note that 
$\pr_1\relo'=\pr_2\relo'=B'_1$ and $\pr_3\relo'=\pr_4\relo'=B'_2$,
because $\pr_{12}\rel'\sse\pr_{13}\relo',\pr_{24}\relo'$.  
Let $\lnk_1,\lnk_2$
denote the link congruences of $\relo'$ viewed as a subdirect 
product of $\relo'_1$ and $\relo'_2$. Note that 
these congruences may be different from the link congruences of $\relo$ 
restricted to $\relo_1\cap(B'_1\tm B'_1)$, $\relo_2\cap(B'_2\tm B'_2)$, 
respectively. We show that $(a',a')$ for some $a'\in B'_1$ is as-maximal in a
$\lnk_1$-block, $(c,d)$ is as-maximal in a $\lnk_2$-block, and 
$\relo'\cap(\as(a',a')\tm\as(c,d))\ne\eps$. By the Rectangularity 
Corollary~\ref{cor:linkage-rectangularity} this implies the result.

\medskip

{\sc Claim 1.}
$(\al\tm\beta)\red{\relo'_1}\sse\lnk_1$ and 
$(\gm\tm\dl)\red{\relo'_2}\sse\lnk_2$.

\medskip

Relation $\relo'$ contains tuples $(a,b,c,d)$, $(a,b,c',c')$, $(a,a,c',c')$, $(a,a,c,c)$
for some $c'\in B'_2$. Indeed, $(a,b,c,d)\in\relo'$ by definition, 
$(a,a,c,c)\in\relo$ because $(a,c,\bb)\in\rel$, and $(a,b,c',c'),(a,a,c',c')$
can be chosen to be the images of $(a,b,c,d)$ and $(a,a,c,c)$, respectively,
under a $\ov B$-preserving polynomial $g^{ab}$ such that $g^{ab}(a)=a$, 
$g^{ab}(b)=b$ and $g^{ab}(\dl)\sse\gm$. Such a polynomial exists because 
$\rel$ is chained and because $(\al,\beta)$ can be separated from 
$(\gm,\dl)$. This implies that $(c,d)\eqc{\lnk_2}(c,c)$. Let $\eta_1,\eta_2$ be
congruences of $\relo_1,\relo_2$ generated by $((a,b),(a,a))$ and 
$((c,d),(c,c))$, respectively. Then 
$$
\eta_1\red{\relo'_1}=(\al\tm\beta)\red{\relo'_1},\quad
\text{and}\quad 
\eta_2\red{\relo'_2}=(\gm\tm\dl)\red{\relo'_2}.
$$
Indeed, in the case of, say, $\al\tm\beta$, relation $\relo'_1$ consists of pairs 
$(g(a),g(b))$ for a $\ov B$-preserving unary polynomial $g$ of $\zA_1$. Since 
$(a,b)\eqc{\al\tm\beta}(a,a)$, for any $(a',b')\in\relo'_1$ it holds that 
$$
(a',b')=(g(a),g(b))\eqc{\eta_1}(g(a),g(a))=(a',a').
$$
Since $(a,b),(a,a)$ are in the same $\lnk_1$-block, 
$(\al\tm\beta)\red{\relo'_1}\sse\lnk_1$. 

For $\relo'_2$ and $\gm\tm\dl$ the argument is similar.
Observing from the same tuples as 
before that $(c,d)\eqc{\lnk_2}(c,c)$, we prove 
$(\gm\tm\dl)\red{\relo'_2}\sse\lnk_2$ by a similar argument. 
Claim~1 is proved.

\medskip

{\sc Claim 2.}
Let $E=B'\cap B'_1$, where $B'$ is the $\beta$-block containing $a,b$.
Then $(\beta\tm\beta)\red{\umax(E)\tm\umax(E)}\sse\lnk_1$.

\medskip

By the assumption for any $(\al,\beta)$-subtrace $(a',b')\in T_1\cap E^2$ there is  
a $\ov B$-preserving polynomial $g^{a'b'}$ satisfying $g^{a'b'}(a')=a'$, 
$g^{a'b'}(b')=b'$, and $g^{a'b'}(B'_2)=\{c'\}\sse B'_2$.
Applying $g^{a'b'}$ to tuples $(a,b,c,d),(a,a,c,c)$, and $(b,b,d',d')$ for
any $d'$ such that $(b,d)\in\pr_{12}\rel'$, we obtain 
$(a',b',c',c'),(a',a',c',c'),(b',b',c',c')\in\relo'$. The second two tuples 
imply that $(a',a')\eqc{\lnk_1}(b',b')$, and therefore $(a'',a'')\eqc{\lnk_1}(b'',b'')$
for any $a'',b''\in\umax(E)$. Along with Claim~1 this proves the result.

\medskip

{\sc Claim 3.}
$(c,d)$ is as-maximal in a $\lnk_2$-block.

\medskip

If for some $e,e'\in B'_2$ we have $(e,e)\eqc{\lnk_2}(e',e')$, 
then, as $(e,e')$ generates $\dl$, for any $(\gm,\dl)$-subtrace 
$\{e'',e'''\}\in T_{\zA_2}(e,e')
=T_{\zA_2}(e,e';\gm,\dl,\cU_{\ov B})$ 
there is a $\ov B$-preserving polynomial $f'$ with 
$f'(\{e,e'\})=\{e'',e'''\}$. Applying this polynomial to 
the tuples witnessing that $(e,e)\eqc{\lnk_2}(e',e')$ we get 
$(e'',e'')\eqc{\lnk_2}(e''',e''')$. Therefore by condition (Q1) all tuples of the 
form $(x,x)$, 
$x\in\umax(B'_2)$, are $\lnk_2$-related. Since by condition (G1) $\{c,d\}$ is a 
$(\gm,\dl)$-subtrace from $T_{\zA_2}(c,d)\sse T_{\zA_2}(e,e')$, 
using Claim~1 this implies that $\lnk_2\red{\relo''}=(\dl\tm\dl)\red{\relo''}$, 
where $\relo''=\relo'_2\cap(\umax(B'_2)\tm \umax(B'_2))$. In particular, 
$C'\tm C'$, where $C'$ is the as-component of $B'_2$ containing $c,d$, is 
contained in $\relo'_2$, and is contained in a $\lnk_2$-block. All elements 
of $C'\tm C'$ are as-maximal in $\relo''$. 

Otherwise, since the inclusion $(\gm\tm\dl)\red{\relo'_2}\sse\lnk_2$ implies 
that  if $(c_1,d_1)\eqc{\lnk_2}(c_2,d_2)$ then 
$(c_1,c_1)\eqc{\lnk_2}(c_2,c_2)$, by Claim~1 
we have $\lnk_2\red{\relo''}=(\gm\tm\dl)\red{\relo''}$. 
In particular, $\{c\}\tm C'$ is contained in a $\lnk_2$-block. Since $c,d$ are 
as-maximal, $(c,d)$ is as-maximal in this $\lnk_2$-block. Claim~3 is proved.

\medskip

By the Maximality Lemma~\ref{lem:to-max}(5) there is an element 
$(a',b')$ as-maximal in a $\lnk_1$-block $D$ such that $(a',b',c,d)\in\relo'$. 
If $a'=b'$ then we are done. Otherwise by condition (G1) and 
Lemma~\ref{lem:good-polys}(3)
$\{a',b'\}$ is an $(\al,\beta)$-subtrace from $T_1$, also $(a',c)\in\rel$ because 
$\pr_{1,3}\relo\sse\rel$, and we can replace $a,b$ with $a',b'$. 
Observe that if we show the existence of a
polynomial $g$ such that $g(a')=g(b')$ and $g(c)=c$, $g(d)=d$, this will
witness the existence of $g'$ with $g'(a)=g'(b)$ and $g'(c)=c$, $g'(d)=d$. 
Let $E=a'^\beta\cap B'_1$. Note that by Claim~2 
$(\beta\tm\beta)\red{\umax(E)\tm\umax(E)}\sse\lnk_1$ and by the Rectangularity
Corollary~\ref{cor:linkage-rectangularity} $\umax(E)\tm\as(c,d)\sse\relo'$. 
Therefore, again $a',b'$ can be chosen as-maximal in $E$. We use $a,b$ for 
$a',b'$ from now on.

\medskip

{\sc Claim 4.}
$(a,a)$ is as-maximal in $\relo''_1=\relo'_1\cap(E\tm E)$.

\medskip

Let $\eta_1,\eta_2$ be the link congruences of 
$B'_1,B'_2$, respectively, with respect to $\relo'_1$; as 
$\relo'_1\sse\relo^{\zA_1}_{ab}$ we have $\eta_1,\eta_2\le\beta$. On the
other hand, since $\relo'_1$ consists of pairs of the form $(x,x)$ and 
$(\al,\beta)$-subtraces, and since $\umax(E)$ belongs to a block of the 
transitive closure of $T_1$, it is easy to see that $\umax(E)$ is a subset of 
both a $\eta_1$- and $\eta_2$-blocks. Indeed, let $e,e'\in\umax(E)$ and
$e=e_1\zd e_k=e'$ are such that $\{e_i,e_{i+1}\}\in T_1$. This means that 
either $(e_i,e_{i+1})\in\relo'_1$ or $(e_{i+1},e_i)\in\relo'_1$. Since
$(e_i,e_i),(e_{i+1},e_{i+1})\in\relo'_1$ by construction, in either case 
we have $(e_i,e_{i+1})\in\eta_1,\eta_2$.

Let $E'$ be the as-component of $E$ containing $a$; such an as-component 
exists by the choice of $a,b$. As $(a,a)\in\relo'_1\cap(E'\tm E')\ne\eps$, 
by the Rectangularity Corollary~\ref{cor:linkage-rectangularity}  
$E'\tm E'\sse\relo'_1$. Since $E'$ is an as-component in $E$, by
Lemma~\ref{lem:as-product} $E'\tm E'$ is an as-component in $\relo''_1$.
In particular $(a,a)$ is as-maximal in $\relo''_1$.
Claim~4 is proved.

\medskip

{\sc Claim 5.} $(a,a,c,d)\in\relo'$. 

\medskip

To prove this claim we find a subalgebra 
$\relo''$ of $\relo'$ such that it is linked enough and both $(a,a)$ and $(c,d)$
belong to as-components of $\pr_{12}\relo'',\pr_{34}\relo''$, respectively, and
then apply the Rectangularity Corollary~\ref{cor:linkage-rectangularity}. 

Let $F$ be the as-component of the 
$\lnk_2$-block containing $(c,d)$. By Claim~3 it is either $\{c\}\tm C'$
or $C'\tm C'$. Since $\relo''_1$ belongs to a $\lnk_1$-block and 
$(a,a,c,c)\in\relo'$, by the Maximality Lemma~\ref{lem:to-max}(4)
for any $(a',b')\in E'\tm E'$ there are $(c',d')\in F$ such that 
$(a',b',c',d')\in\relo'$. Now consider $\relo''=\relo'\cap(E\tm E\tm B'_2\tm B'_2)$.
Clearly, $\relo'_1\sse\pr_{12}\relo''$. Also, since $(a,b)$ is as-maximal in 
a $\lnk_1$-block, by the Rectangularity 
Corollary~\ref{cor:linkage-rectangularity} $\{(a,b)\}\tm F\sse\relo''$,
implying $F\sse\pr_{34}\relo''$. If $\th_1,\th_2$ denote the
link congruences of $\pr_{12}\relo'',\pr_{34}\relo''$ with respect to $\relo''$,
the observation above implies that the as-components of $\pr_{12}\relo''$
containing $(a,a)$ and $(a,b)$ belong to the same $\th_1$-block, and $F$
belongs to a $\th_2$-block. Therefore again by the Rectangularity 
Corollary~\ref{cor:linkage-rectangularity} we get 
$(E'\tm E')\tm(\{c\}\tm C')\sse\relo'$, in particular $(a,a,c,d)\in\relo$. 
Thus, there is a polynomial $h$ such that $h(a)=h(b)=a$ and $h(c)=c, h(d)=d$. 

\medskip

So far we have proved that for any subtrace $(a,b)\in T_1$, where $a$ is
a fixed element such that $(a,c)\in\pr_{12}\rel'$, there is a polynomial 
$h^{ab}$ satisfying the conditions stated in the beginning of the proof. 

\medskip

{\sc Claim 6.} For every $(\al,\beta)$-subtrace $\{a',b'\}$ from 
$T_1\cap(E\tm E)$ (recall that $E=a^\beta\cap B'_1$) there is a polynomial 
$h$ such that $h(a)=h(b)$ and $h(c)=c, h(d)=d$. 

Let us consider $T_1$ 
as a graph; we can introduce the distance $r(x)$ of element $x$ from $a$. 
In particular, all elements from $\umax(E)$ belong to the connected component 
containing $a$. Let $D(i)\sse E$ denote the set of elements at distance at 
most $i$ from $a$. By what is proved above there is a composition $h^*$ of 
polynomials $h^{ab}$ for $b\in D(1)$ such that $h^*(D(1))\sse\{a\}$. 
By Lemma~\ref{lem:minimal-sets}(2) every $\ov B$-preserving polynomial 
maps every $(\al,\beta)$-subtrace either to a singleton, or to a 
$(\al,\beta)$-subtrace from $T_1$. Therefore by induction we
also get $h^*(D(i+1))\sse D(i)$. Therefore composing several copies
of $h^*$ collapses $a'$ and $b'$ and leaves $c,d$ unchanged.

\medskip

We now can prove the result in Case~1. We have proved that for 
any $(\al,\beta)$-subtrace $(a',b')\in T_1$ from a 
$\beta$-block $E$ such that $\pr_{12}\rel'\cap(E\tm\{c\})\ne\eps$, a polynomial 
$h^{a'b'}$ with the required properties exists. Suppose now that 
$a',b'\in B'_1$ be any such that $(a',b')\in\beta$.
Take any $c',d'\in B'_2$ such that $(a,c')\in\pr_{12}\rel'$.
By Lemmas~\ref{lem:Qab-tolerance} and~\ref{lem:good-polys}(5)
there is an idempotent $\ov B$-preserving 
polynomial $g$ such that $g(c')=c,g(d')=d$. If $g(a')=g(b')$, 
we are done, as $g$ may serve as $h^{a'b'}$. Otherwise, as before
consider the relation $\relo^\dagger\sse\zA_1^2\tm\zA_2^2\tm\rel$ 
generated by $\{(g(a'),g(b'),c,d,\ba)\}\ \cup\{(x,x,y,y,\bz)\mid 
\bz\in\rel,\bz[1]=x,\bz[2]=y\}$, where $\ba$ is an arbitrary element 
from $\rel'$, and let 
$\relo^\ddagger=\pr_{1234}(\relo^\dagger\cap(B'_1\tm B'_1\tm B'_2\tm 
B'_2\tm\ov B))$, $\relo''_1=\pr_{12}\relo^\ddagger$, 
$\relo''_2=\pr_{34}\relo^\ddagger$,
and the link congruences $\lnk''_1,\lnk''_2$ of $\relo''_1,\relo''_2$ with respect
to $\relo^\ddagger$. Recall that $C'$ is the as-component of $B'_2$ containing
$c,d$. Consider relation 
$\rela=\relo^\ddagger\cap(B_1\tm B_1\tm\Sg{C'}\tm\Sg{C'})$. 
Since $C'\tm C'\sse\relo''$ by (Q1), by Lemma~\ref{lem:as-square-congruence} 
there is $(c'',d'')\in C'\tm C'$
such that $(c'',d'')$ is as-maximal in a $\lnk''_2$-block and $c''\ne d''$.
By the Maximality Lemma~\ref{lem:to-max}(4) there is u-maximal 
$(a'',b'')\in\relo''_1$ such that $(a'',b'',c'',d'')\in\relo'$. This means that
for some $\ov B$-preserving polynomial $g'$ it holds $g'(g(a'))=a'',
g'(g(b'))=b'', g'(c)=c'', g'(d')=d''$. By what was proved 
there is a polynomial $h^{a''b''}$ with $h^{a''b''}(a'')=h^{a''b''}(b'')$ and 
$h^{a''b''}(c'')=c'', h^{a''b''}(d'')=d''$. Also, there is a $\ov B$-preserving
polynomial $g''$ such that $g''(c'')=c, g''(d'')=d$. Finally, this all implies that  
$(a^*,a^*,c,d)\in\relo^\ddagger$ for some $a^*\in B'_1$, that is, 
there is a polynomial $h^{a'b'}=g''\circ h^{a''b''}\circ g'\circ g$ with 
$h^{a'b'}(a')=h^{a'b'}(b')=a^*$, and $h^{a'b'}(c)=c$, $h^{a'b'}(d)=d$. 

\medskip

{\sc Case 2.}
For every element $c$ from a nontrivial as-component of $B'_2$ and any 
$a\in B'_1$ such that $(a,c)\in\rel$ element $a$ belongs to a $\beta$-block 
$B'$ such that $B'\cap\umax(B'_1)=\eps$ or $|\umax(B'\cap B'_1)|=1$.

\medskip

We use the same elements $c,d\in C'$, an as-component of $B'_2$.
For any $(a,b)\in\beta\red{B'_1}$ choose $c',d'\in B'_2$ 
such that $(a,c'),(b,d')\in\pr_{12}\rel'$. (Recall that we are assuming $\al$ 
and $\gm$ to be equality relations.) If $c'=d'$, that is, $(b,c')\in\rel$, 
choose $d'$ to be an arbitrary element from $B'_2$. 
By Lemmas~\ref{lem:Qab-tolerance},~\ref{lem:good-polys}(5) and because 
$\rel$ is chained there is an idempotent $\ov B$-preserving polynomial 
$g$ such that $g(c')=c,g(d')=d$. Let $g(a)=a'$, $g(b)=b'$. 
Then $(a',c)\in\rel$ and $b'\eqc\beta a'$. Since $g$ is $\ov B$-preserving, 
$b'\in B'_1$. We again as in Case~1 consider the relation 
$\relo^*\sse\zA_1^2\tm\zA_2^2\tm\rel$ generated by 
$\{(a',b',c,d,\ba)\}\ \cup\{(x,x,y,y,\bz)\mid \bz\in\rel,\bz[1]=x,\bz[2]=y\}$,
where $\ba$ is an arbitrary element from $\rel'$,  
$\relo^{**}=\pr_{1234}(\relo^*\cap(B'_1\tm B'_1\tm B'_2\tm B'_2\tm\ov B))$,
$\relo^*_1=\pr_{12}\relo^{**}$, $\relo^*_2=\pr_{34}\relo^{**}$, 
and the link congruences $\lnk^*_1,\lnk^*_2$ of $\relo^*_1,\relo^*_2$ 
with respect to $\relo'$. Now we complete the proof as in the end of Case~1.

\medskip

Finally, we use polynomials $h^{ab}$ to construct a single polynomial
that collapses $\beta$ on $E=B'\cap B'_1$ for
every $\beta$-block $B'$. 
Fix $c,d$ and $h^{ab}$ for every pair $a,b\in B'_1$, $a\eqc\beta b$. 
Let $\vc Vk$ be the list of all such pairs, and if $V_\ell=\{a,b\}$ is the pair 
number $\ell$, $h^\ell$ denotes $h^{ab}$. Take a sequence 
$1=\ell_1,\ell_2,\ldots$ such 
that $h^{(1)}=h_{\ell_1}$, $V_{\ell_2}$ is a subset of $h^{(1)}(\zA)$, 
and, for $s>2$, $V_{\ell_s}$ is a subset of 
the range of $h^{(s-1)}=h^{\ell_{s-1}}\circ\ldots\circ h^{\ell_1}$. Since 
$|\Im(h^{(s)})|<|\Im(h^{(s-1)})|$, there is $r$ such that 
$\Im(h^{(r)})$ contains no pair $V_\ell$ for any $\ell$. Therefore setting 
$h(x)=h_{\ell_r}\circ\ldots\circ h_{\ell_1}(x)$ we have that $h$ collapses 
all the pairs $V_\ell$, and $h$ acts identically on $\{c,d\}$. 
The result follows.
\end{proof}

\subsection{Collapsing polynomials}

We say that prime intervals $(\al,\beta)$ and $(\gm,\dl)$ \emph{cannot be 
separated}\index{cannot be separated} if $(\al,\beta)$ cannot be separated from 
$(\gm,\dl)$ and $(\gm,\dl)$ cannot be separated from $(\al,\beta)$. 
In this section we introduce and prove the existence of polynomials that 
collapse all prime intervals in congruence lattices of factors of a subproduct,
except for a set of factors that cannot be separated from each other.

\begin{lemma}\label{lem:e-related}
(1) Let $\zA$ be an algebra. If prime intervals $\al\prec\beta$ and $\gm\prec\dl$ 
in $\Con(\zA)$ are perspective, then they cannot be separated. \\
(2) If $\al\prec\beta$ and $\gm\prec\dl$ from $\Con(\zA)$ cannot be separated, 
then a set $U$ is a $(\al,\beta)$-minimal set if and only if it is a 
$(\gm,\dl)$-minimal set.\\
(3) Let $\rel$ be a subdirect product of $\zA$ and $\zB$, $\al,\beta\in\Con(\zA)$,
$\gm,\dl\in\zB$ such that $\al\prec\beta$, $\gm\prec\dl$, and let $\al\prec\beta$ 
and $\gm\prec\dl$ cannot be separated. Then for any $(\al,\beta)$-minimal 
set $U$ there is a unary idempotent polynomial $f$ such that $f(\zA)=U$ and 
$f(\zB)$ is a $(\gm,\dl)$-minimal set.
\end{lemma}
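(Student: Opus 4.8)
My plan is to prove the three parts essentially independently, with part~(3) being the real work and the earlier parts being warm-ups that also get used in~(3).

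For part~(1), I would argue directly from the definition of separation and the fact (Lemma~\ref{lem:minimal-sets}) that a unary polynomial which does not collapse $\beta$ restricts to a polynomial isomorphism on an $(\al,\beta)$-minimal set. Suppose $(\al,\beta)$ and $(\gm,\dl)$ are perspective, say $\dl=\beta\join\gm$ and $\al=\beta\meet\gm$ (the other case is symmetric). If some $f\in\Polo(\zA)$ separates $(\al,\beta)$ from $(\gm,\dl)$, i.e.\ $f(\beta)\not\sse\al$ but $f(\dl)\sse\gm$, then since $\beta\le\dl$ we get $f(\beta)\sse\gm$, hence $f(\beta)\sse\beta\meet\gm=\al$, a contradiction. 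The reverse direction (that $(\gm,\dl)$ cannot be separated from $(\al,\beta)$) is the dual: the cleanest route is to invoke Lemma~\ref{lem:perspective-intervals}, which already tells us that a set is $(\al,\beta)$-minimal iff it is $(\gm,\dl)$-minimal and the types agree, and then observe that a polynomial $f$ with $f(\dl)\not\sse\gm$ must be a polynomial isomorphism on a $(\gm,\dl)$-minimal set $U$, which is also $(\al,\beta)$-minimal, so $f(\beta\red U)\not\sse\al\red U$ because $\beta\red U\ne\al\red U$ on a trace and $f$ is injective on $U$ modulo the relevant congruences; hence $f$ does not witness separation of $(\gm,\dl)$ from $(\al,\beta)$ either.

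For part~(2), I would use the same mechanism. Let $U$ be an $(\al,\beta)$-minimal set; I must show it is $(\gm,\dl)$-minimal. First, $U=f(\zA)$ for some idempotent $f\in\Polo(\zA)$ with $f(\beta)\not\sse\al$ (Lemma~\ref{lem:minimal-sets}(3)). Because $(\gm,\dl)$ cannot be separated from $(\al,\beta)$, and $f(\beta)\not\sse\al$, we cannot have $f(\dl)\sse\gm$ — wait, that is the wrong direction, so instead I reason: since $(\al,\beta)$ cannot be separated from $(\gm,\dl)$, every $f\in\Polo(\zA)$ with $f(\beta)\not\sse\al$ also satisfies $f(\dl)\not\sse\gm$. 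Apply this to our idempotent $f$ with range $U$: then $f(\dl)\not\sse\gm$, so by Lemma~\ref{lem:minimal-sets}(6) $U=f(\zA)$ contains (in fact equals, being minimal and idempotent) a $(\gm,\dl)$-minimal set; since any $(\gm,\dl)$-minimal set inside $U=f(\zA)$ must be all of $U$ by applying the symmetric argument with roles of the two intervals swapped (using that $(\gm,\dl)$ cannot be separated from $(\al,\beta)$ to shrink a $(\gm,\dl)$-minimal subset of $U$ back up), we get $U$ is $(\gm,\dl)$-minimal. I would write this carefully as a double-inclusion / minimality sandwich rather than the slightly circular phrasing above.

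For part~(3), the key extra ingredient is that polynomials of $\zA$ need to be lifted to polynomials of $\rel$. Given an $(\al,\beta)$-minimal set $U$, pick by Lemma~\ref{lem:minimal-sets}(3) an idempotent $e\in\Polo(\zA)$ with $e(\zA)=U$ and $e(\beta)\not\sse\al$; extend it (in the sense of the ``extension of a polynomial'' notion, using that $\rel$ is subdirect) to a polynomial $\hat e$ of $\rel$. Its action on the $\zB$-coordinate is some polynomial $e'\in\Polo(\zB)$. Now I would like $e'$ to be a $(\gm,\dl)$-minimal set, or at least composable to one. The point is that since $e(\beta)\not\sse\al$ in $\zA$ and $(\al,\beta)$ cannot be separated from $(\gm,\dl)$ — here ``separated'' is in the relational sense in $\rel$, which by the discussion after Lemma~\ref{lem:separation-separation} matches the single-algebra notion after duplicating factors — we get $e'(\dl)\not\sse\gm$. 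Then, as in the proof of Lemma~\ref{lem:min-set-separation}, replace $\hat e$ by $h'\circ\hat e$ where $h'$ lifts a polynomial retracting $e'(U')$ onto a $(\gm,\dl)$-minimal set $V$ with $h'\circ e'\red V$ the identity; a suitable power $f$ of the composite is then idempotent with $f(\zB)=V$ a $(\gm,\dl)$-minimal set. The only thing to check is that this post-composition and power-taking does not destroy $f(\zA)=U$ on the $\zA$-side: since $U$ is $(\al,\beta)$-minimal and the $\zA$-action of $f$ still does not collapse $\beta$ (again because $f(\dl)\not\sse\gm$ forces, via ``cannot be separated'' in the other direction, $f(\beta)\not\sse\al$), the $\zA$-action maps $U$ polynomially isomorphically onto another $(\al,\beta)$-minimal set, and by further composing with the corresponding retraction back to $U$ (and taking another power) we can arrange $f(\zA)=U$ exactly.

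The main obstacle is bookkeeping in part~(3): one has to juggle two minimal sets on the two coordinates, keep the polynomial idempotent, and keep both ranges pinned to the prescribed sets simultaneously, all while passing between single-algebra separation and relational separation. The honest way to manage this is to phrase it as: first fix the $\zB$-side by the Lemma~\ref{lem:min-set-separation} argument, obtaining idempotent $f_0$ of $\rel$ with $f_0(\zB)$ a $(\gm,\dl)$-minimal set and $f_0(\beta)\not\sse\al$ on the $\zA$-side; then observe $f_0(\zA)$ is some $(\al,\beta)$-minimal set $U_0$, and by Lemma~\ref{lem:minimal-sets}(1) there is a polynomial isomorphism $U_0\to U$; lift it, compose, and take a power, checking that the $\zB$-range is only moved to another $(\gm,\dl)$-minimal set and then can be retracted in the same way — i.e.\ alternate the two retractions finitely many times and take a high enough power so that the composite is idempotent with the two prescribed ranges. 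Finiteness of $\zA,\zB$ guarantees this terminates.
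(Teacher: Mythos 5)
Parts (1) and (2) of your proposal are essentially sound. For (1) your one-direction direct computation ($f(\beta)\sse\beta$ and $f(\beta)\sse f(\dl)\sse\gm$ gives $f(\beta)\sse\beta\meet\gm=\al$) is fine, and the other direction via Lemma~\ref{lem:perspective-intervals} and injectivity of $f$ on a minimal set works; the paper simply cites Lemma~\ref{lem:perspective-intervals}. For (2) your ``cardinality sandwich'' ($U''\sse U'\sse U$ with $|U''|=|U|$, forcing equality) is a genuinely different and arguably cleaner argument than the paper's, which instead derives a contradiction with non-separation by producing a polynomial $g\circ f$ with $|g\circ f(\zA)|<|U|$.

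Part (3) has a real gap. You assert ``observe $f_0(\zA)$ is some $(\al,\beta)$-minimal set $U_0$'', but this does not follow: from $f_0$ idempotent, $f_0(\zB)$ a $(\gm,\dl)$-minimal set, and $f_0(\beta)\not\sse\al$, you only get that $f_0(\zA)$ \emph{contains} an $(\al,\beta)$-minimal set $U'$ (via Lemma~\ref{lem:minimal-sets}(6)); $f_0(\zA)$ itself can be strictly larger, since nothing constrains the $\zA$-action of a polynomial whose $\zB$-action was normalized. The consequence is more serious than a phrasing issue: your repair step ``lift the polynomial isomorphism $U_0\to U$, compose, take a power'' relies on the composite restricting to a bijection of $U$, but $f_0$ may collapse $\beta$ \emph{on $U$} even while $f_0(\beta)\not\sse\al$ globally (the minimal set on which $f_0$ acts isomorphically need not be $U$). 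If the composite is not injective on $U$, its iterates shrink strictly below $U$ before stabilizing, and the resulting idempotent has $\zA$-range a proper subset of $U$ — which, by minimality of $U$, then collapses $\beta$, breaking the whole argument. ``Finiteness guarantees termination'' gives you stabilization, not stabilization at $U$.

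The paper sidesteps this with a two-sided conjugation: it takes $U'\sse g(\zA)$ an $(\al,\beta)$-minimal set (not assuming $g(\zA)$ is minimal), then chooses polynomials $g'$ with $g'(U)=U'$ and $h$ with $h(U')=U$, $h(\zA)=U$, and sets $h'=h\circ g\circ g'$. Because $g$ is idempotent and $U'\sse g(\zA)$, $g$ is the identity on $U'$, so $h'\red U=h\circ g'\red U$ is a composition of polynomial isomorphisms, hence a bijection of $U$; this pins $h'(\zA)=U$, guarantees $h'(\beta)\not\sse\al$ (hence $h'(\dl)\not\sse\gm$ by non-separation), and makes iterates of $h'$ stable on $U$ while the $\zB$-range can only shrink and remains bounded below by the $(\gm,\dl)$-minimal cardinality. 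Your plan omits the pre-composition with $g'$, and that is exactly the step that keeps the $\zA$-range from collapsing; adding it (and replacing ``$f_0(\zA)$ is $(\al,\beta)$-minimal'' by ``choose an $(\al,\beta)$-minimal $U'\sse f_0(\zA)$'') repairs your argument and essentially reproduces the paper's.
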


\begin{proof}
(1) Follows from Lemma~\ref{lem:perspective-intervals}.

(2) Let $f$ be a  polynomial of $\zA$ 
such that $f(\zA)=U$ and $f(\beta)\not\sse\al$. Since $(\al,\beta)$ cannot be 
separated from $(\gm,\dl)$, we have $f(\dl)\not\sse\gm$ and therefore $U$ 
contains a $(\gm,\dl)$-minimal set $U'$. If $U'\ne U$, there is a polynomial $g$
with $g\circ f(\dl)\not\sse\gm$ and $g\circ f(\zA)=U'$. In particular, 
$|g(U)|<|U|$, and so $g\circ f(\beta)\sse\al$; a contradiction with the assumption
that $(\gm,\dl)$ cannot be separated from $(\al,\beta)$.

(3) Take an idempotent polynomial $g$ of $\rel$ such that $g(\zB)$ is a 
$(\gm,\dl)$-minimal 
set. Then, as $(\gm,\dl)$ cannot be separated from $(\al,\beta)$, 
$g(\beta)\not\sse\al$. By Lemma~\ref{lem:minimal-sets}(6) there is an 
$(\al,\beta)$-minimal set $U'\sse g(\zA)$. Let $g',h$ be polynomials 
of $\rel$ such that $g'(U)=U'$, $h(U')=U$ and $h(\zA)=U$, which exist by
Lemma~\ref{lem:minimal-sets}(1). Then $h'=h\circ g\circ g'$ is such that 
$h'(\zA)=h'(U)=U$, $h'(\beta)\not\sse\al$ and therefore $h'(\dl)\not\sse\gm$.
Then iterating $h'$ sufficiently many times we get an idempotent polynomial $f$
satisfying the same properties.
\end{proof}

Let $\rel$ be a subdirect product of $\zA_1\tms\zA_n$, and choose 
$\beta_j\in\Con(\zA_j)$, $j\in[n]$. Let also $i\in[n]$, and 
$\al,\beta\in\Con(\zA_i)$ be such that $\al\prec\beta\le\beta_i$; 
let also $B_j$ be a $\beta_j$-block. We call an idempotent unary polynomial 
$f$ of $\rel$ \emph{$\al\beta$-collapsing for 
$\ov\beta,\ov B$}\index{collapsing polynomial} if $f$ is $\ov B$-preserving,
$f(\beta)\not\sse\al$, $f(\dl\red{B_j})\sse\gm\red{B_j}$ for every 
$\gm,\dl\in\Con(\zA_j)$, $j\in[n]$, with $\gm\prec\dl\le\beta_j$, and such 
that $(\al,\beta)$ can be separated from $(\gm,\dl)$ or $(\gm,\dl)$
can be separated from $(\al,\beta)$, and $|f(\rel)|$ is minimal possible. 

\begin{lemma}\label{lem:collapsing}
Let $\rel$, $\al,\beta$, and $\beta_j$, $j\in[n]$, be as above and $\rel$ 
chained with respect to $\ov\beta,\ov B$. Let also $\rel'=\rel\cap\ov B$. Then
if $\beta=\beta_i$ and $\pr_i\rel'\fac\al$ contains a nontrivial as-component, 
then there exists an $\al\beta$-collapsing polynomial for $\ov\beta,\ov B$.
\end{lemma}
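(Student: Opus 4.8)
The plan is to build the desired $\al\beta$-collapsing polynomial by composing, one at a time, polynomials that collapse each prime interval $(\gm,\dl)$ separable from $(\al,\beta)$ (or from which $(\al,\beta)$ is separable), while never destroying the fact that $(\al,\beta)$ itself survives on a nontrivial as-component of $\pr_i\rel'\fac\al$. First I would fix a nontrivial as-component $C'$ of $\pr_i\rel'\fac\al$ and pick two distinct elements $c,d\in C'$; by Lemma~\ref{lem:type23} (applied inside the $\beta_i$-block $B_i$, using $\beta=\beta_i$) the pair $\{c,d\}$ lifts to a $(\al,\beta)$-subtrace, and this pair will be the ``anchor'' that all the collapsing polynomials we compose must fix. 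The key point is that an idempotent polynomial fixing a $(\al,\beta)$-subtrace necessarily has $f(\beta)\not\sse\al$, so as long as every factor in our composition fixes $\{c,d\}$, the composed polynomial automatically satisfies $f(\beta)\not\sse\al$.

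The engine of the construction is Lemma~\ref{lem:relative-symmetry}. Enumerate the prime intervals $(\gm_1,\dl_1),\dots,(\gm_m,\dl_m)$ in the $\Con(\zA_j)$'s (for $j$ ranging over $[n]$, with $\gm_t\prec\dl_t\le\beta_j$) that can be separated from $(\al,\beta)$ together with those such that $(\al,\beta)$ can be separated from $(\gm_t,\dl_t)$. For each such interval, separability together with the chaining hypothesis lets us apply Lemma~\ref{lem:relative-symmetry}: there is a $\ov B$-preserving polynomial $g_t$ with $g_t(\beta\red{B'_i})\sse\al$ and, crucially, $g_t$ may be chosen to fix $c$ and $d$. (For the intervals where separation goes the other direction, i.e.\ $(\al,\beta)$ separates from $(\gm_t,\dl_t)$, I would instead take $g_t$ a $\ov B$-preserving separating polynomial directly, idempotent and acting identically on an $(\al,\beta)$-minimal set containing $c,d$ via Lemma~\ref{lem:min-set-separation}; such a $g_t$ has $g_t(\dl_t\red{B_j})\sse\gm_t\red{B_j}$ while still preserving $\beta$ on $\{c,d\}$.) Now compose: set $f_0=\id$ and inductively $f_{t}=g_t\circ f_{t-1}$ where $g_t$ is chosen with respect to the polynomial-restricted structure so that the collapses already achieved are not undone — here one uses that $\ov B$-preserving polynomials map $(\al,\beta)$-subtraces from $T_1$ to singletons or back into $T_1$ (Lemma~\ref{lem:minimal-sets}(2) / Lemma~\ref{lem:good-polys}), so a subtrace once collapsed to a point stays a point. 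Iterating each $g_t$ enough times (as in the final paragraph of the proof of Lemma~\ref{lem:relative-symmetry}) stabilizes the image, and a final power of the whole composition is idempotent. The resulting $f$ is $\ov B$-preserving, idempotent, collapses every separable $(\gm,\dl)$ on $B_j$, fixes $\{c,d\}$ hence has $f(\beta)\not\sse\al$, and among all such polynomials we pick one with $|f(\rel)|$ minimal — which is exactly the definition of $\al\beta$-collapsing for $\ov\beta,\ov B$.

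The main obstacle, and the step that needs care, is ensuring the composition does not collapse $(\al,\beta)$ itself: a priori, a polynomial $g_t$ that kills $(\gm_t,\dl_t)$ might also kill $(\al,\beta)$, which is precisely why separability of $(\al,\beta)$ from $(\gm_t,\dl_t)$ alone is not enough and why we need the ``relative symmetry'' of Lemma~\ref{lem:relative-symmetry} to produce a $g_t$ that simultaneously kills the target interval and demonstrably preserves $\beta$ on the chosen pair $c,d$. This is exactly the reason the hypotheses ``$\rel$ is chained with respect to $\ov\beta,\ov B$'' and ``$\pr_i\rel'\fac\al$ has a nontrivial as-component'' appear: the former is what makes Lemma~\ref{lem:relative-symmetry} applicable, and the latter supplies the pair $c,d\in C'$ that anchors the argument and guarantees Lemma~\ref{lem:relative-symmetry}'s ``moreover'' clause (choosing $f$ with $f(c)=c$, $f(d)=d$) is available. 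A secondary technical point is the bookkeeping in the order of composition — one must process the separable intervals so that later $g_t$'s are chosen as polynomials of $\rel$ restricted through the earlier ones, and invoke the image-shrinking argument ($|\Im(h^{(s)})|<|\Im(h^{(s-1)})|$) to terminate — but this is routine given the machinery already set up for Lemma~\ref{lem:relative-symmetry}.
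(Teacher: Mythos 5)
Your plan is a genuinely different route from the paper's proof. The paper does not build the collapsing polynomial by explicit composition: it takes a $\ov B$-preserving idempotent polynomial $f$ whose image on coordinate $i$ is an $(\al,\beta)$-minimal set still containing (the $\al$-classes of) a subtrace $\{a,b\}$ drawn from the nontrivial as-component, chosen so that $|f(\rel)|$ is minimal among all such polynomials, and then argues by contradiction: if some separable $(\gm,\dl)$ were not collapsed by $f$, the polynomial $f_{j\gm\dl}\circ f$ (with $f_{j\gm\dl}$ coming from Lemma~\ref{lem:relative-symmetry} or from (Q2) and Lemma~\ref{lem:good-polys}(5), depending on the direction of separability) would still preserve the subtrace but have strictly smaller image, contradicting the minimality of $|f(\rel)|$. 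Your compositional construction buys transparency of the invariant (each $g_t$ fixes $c,d$, hence so does the composition) at the cost of some bookkeeping; the paper's minimality-and-contradiction form is shorter. One worry you raise is unnecessary: collapses are never undone by further composition, because every polynomial maps $\gm_t$ into $\gm_t$ and $\dl_t$ into $\dl_t$, so $g_s\circ\cdots\circ g_t\circ\cdots\circ g_1(\dl_t)\sse g_s\circ\cdots\circ g_{t+1}(\gm_t)\sse\gm_t$ automatically. You therefore do not need to "process the separable intervals so that later $g_t$'s are chosen as polynomials of $\rel$ restricted through the earlier ones."

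There is one genuine error in how you invoke Lemma~\ref{lem:relative-symmetry}, and it matters because the two conditions you attribute to $g_t$ are jointly impossible. You write that $g_t$ satisfies $g_t(\beta\red{B'_i})\sse\al$ and at the same time fixes $c,d\in C'$, where $c\ne d$ are distinct $\al$-classes in the same $\beta$-block; but fixing such a pair forces $g_t(\beta)\not\sse\al$. The resolution is that Lemma~\ref{lem:relative-symmetry} must be applied with the coordinate roles swapped: $\zA_j$ plays the role of $\zA_1$ there and $\zA_i$ the role of $\zA_2$, which is why the hypothesis $\gm\prec\dl=\beta_2$ of that lemma is exactly the hypothesis $\beta=\beta_i$ of the present lemma. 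After the swap, the lemma gives $g_t(\dl_t\red{B_j})\sse\gm_t$ and $g_t(\beta)\not\sse\al$, with $g_t$ fixing $c,d$ — which is what your composition actually needs. Two smaller citation slips: the fact that $c,d$ in the same as-component of $B'_i\fac\al$ give an $(\al,\beta)$-subtrace is Corollary~\ref{cor:max-min-set}(1), not Lemma~\ref{lem:type23}; and for the direction in which $(\al,\beta)$ can be separated from $(\gm_t,\dl_t)$, the $\ov B$-preserving polynomial fixing a subtrace is supplied by condition (Q2) together with Lemma~\ref{lem:good-polys}(5), not by Lemma~\ref{lem:min-set-separation} alone, which says nothing about preserving $\ov B$.
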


\begin{proof}
Suppose $i=1$, let $B'_1=\pr_1\rel'$ and $C$ be a nontrivial 
as-component of $B'_1\fac\al$. Take a $(\al,\beta)$-subtrace $\{a,b\}\sse B'_1$
such that $a^\al,b^\al\in C$. Since $\rel$ is chained with respect to
$\ov\beta,\ov B$, by (Q1) and Lemma~\ref{lem:good-polys}(5) there is a 
$\ov B$-preserving idempotent polynomial $f$
of $\rel$ such that $f(\zA_1)$ is an $(\al,\beta)$-minimal set and 
$a^\al,b^\al\in f(\zA_1)\fac\al$. Let polynomial $f$ be such that 
$f(\rel)$ is minimal possible. We show that $f$ is $\al\beta$-collapsible. 

Let $j\in[n]$ and $\gm,\dl\in\Con(\zA_j)$ be such that 
$\gm\prec\dl\le\beta_j$, and $(\al,\beta),(\gm,\dl)$ can be separated. 
Since $\rel$ is chained, by Lemma~\ref{lem:relative-symmetry} 
there is an idempotent unary $\ov B$-preserving polynomial 
$f_{j\gm\dl}$ of $\rel$ such that $f_{j\gm\dl}(\zA_1)$ is an 
$(\al,\beta)$-minimal set with 
$a^\al,b^\al\in f_{j\gm\dl}(\zA_1)\fac\al$ and 
$f_{j\gm\dl}(\dl\red{B_j})\sse\gm\red{B_j}$. 
Then if $f(\dl\red{B_j})\not\sse\gm$, then let $g=f_{j\gm\dl}\circ f$.
We have $g(\beta)\not\sse\al$, but $g(\dl\red{B_j})\sse\gm$ implying 
$|g(\rel)|<|f(\rel)|$, a contradiction with minimality of $f(\rel)$.
\end{proof}

\subsection{Separation and minimal sets}

In this section we show a connection between the fact that two prime 
intervals cannot be separated, their types, and link congruences.

\begin{lemma}\label{lem:34-links}
Let $\rel$ be a subdirect product of $\zA$ and $\zB$ and let 
$\al,\beta\in\Con(\zA)$, $\gm,\dl\in\Con(\zB)$ be such that $\al\prec\beta$, 
$\gm\prec\dl$, and $(\al,\beta),(\gm,\dl)$ cannot be separated. Let also 
$\lnk_1,\lnk_2$ be the link congruences of $\zA,\zB$, respectively. If 
$\typ(\al,\beta)\ne\two$ then $\lnk_1\meet\beta\le\al$, $\lnk_2\meet\dl\le\gm$.
\end{lemma}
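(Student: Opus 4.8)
The statement says: if $(\al,\beta)$ and $(\gm,\dl)$ cannot be separated and $\typ(\al,\beta)\ne\two$, then $\lnk_1\meet\beta\le\al$ (and symmetrically $\lnk_2\meet\dl\le\gm$). The plan is to argue by contradiction: suppose $(a,b)\in(\lnk_1\meet\beta)-\al$. Since $(a,b)\in\beta-\al$ and $\typ(\al,\beta)\in\{\three,\four,\five\}$, I would pass to a minimal set. By Lemma~\ref{lem:minimal-sets}(4) there is an idempotent polynomial $f$ of $\zA$ with $f(\zA)=U$ an $(\al,\beta)$-minimal set, $f$ the identity on $U$, and $(f(a),f(b))\in\beta\red U-\al\red U$; so WLOG $a,b$ lie in the unique trace $T$ of $U$ (Lemma~\ref{lem:traces}(2)). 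Now extend $f$ to a polynomial of $\rel$; since $(\al,\beta)$ cannot be separated from $(\gm,\dl)$, applying $f$ shows the $\zB$-action of $f$ is a polynomial with $f(\dl)\not\sse\gm$, so $f(\zB)$ contains a $(\gm,\dl)$-minimal set, and by Lemma~\ref{lem:e-related}(3) we may take $f$ so that $f(\zB)$ is itself a $(\gm,\dl)$-minimal set. Thus we have reduced to $\rel'=f(\rel)$, a subdirect product of $U\tm V$ with $U$ an $(\al,\beta)$-minimal set, $V$ a $(\gm,\dl)$-minimal set, and $(a,b)\in\lnk_1(\rel')\meet(\beta\red U)-\al\red U$ still (the link congruence only shrinks under such restrictions, but one checks $(a,b)$ remains linked using Lemma~\ref{lem:Qab-tolerance}-style tolerance arguments or directly).

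The core of the argument is then the following: $(a,b)\in\lnk_1$ means there is a chain $a=c_0, c_1,\dots,c_k=b$ in $U$ with $\rel'[c_{t}]\cap\rel'[c_{t+1}]\ne\eps$ for each $t$, i.e.\ some $e_t\in V$ with $(c_t,e_t),(c_{t+1},e_t)\in\rel'$. I want to derive from this a term/polynomial operation witnessing that $T\fac\al$ is "Abelian-like" — specifically I would use the pseudo-meet operation $g$ on $U$ from Lemma~\ref{lem:pseudo-meet}: there is a neutral element $1\in T$ with $(\{1,x\},g)$ a semilattice for every $x\in U-\{1\}$, and on the trace $g(x,y)\eqc\al g(y,x)\eqc\al x$. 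The strategy is: because $T$ supports a semilattice (type $\four,\five$) or Boolean (type $\three$) structure via $g$, but $T$ is pinned to a single $\gm$-class worth of information in $V$ through the linkage, one can build a polynomial of $\rel'$ that collapses $\beta$ on $U$ while still moving $\dl$ on $V$ — contradicting that $(\gm,\dl)$ cannot be separated from $(\al,\beta)$. Concretely: using the pairs $(c_t,e_t)$ and the operation $g$ (or the type-appropriate operation from Lemma~\ref{lem:pseudo-meet}(2)–(4)), I would exhibit a unary polynomial $p$ of $\rel'$ — built by composing $g$-translations with the section $x\mapsto$ (some fixed second coordinate over $x$) — such that $p(\beta\red U)\sse\al$ (because $g$ applied appropriately along the linkage chain collapses $a$ and $b$ to $1$ modulo $\al$) yet $p$ does not collapse $\dl$ on $V$ (because the linkage is "tight" in the non-type-$\two$ case, the second coordinate carries a live copy of $(\gm,\dl)$). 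This yields the required separation of $(\al,\beta)$ from $(\gm,\dl)$, a contradiction.

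The main obstacle, and the step I'd expect to be delicate, is exactly the construction of that collapsing polynomial $p$: one must leverage the absence of type $\two$ (which gives the semilattice/Boolean pseudo-operation $g$ with the absorption property $g(x,g(x,y))=g(x,y)$ from Lemma~\ref{lem:pseudo-meet}(4)) in a way that is compatible with the subdirect structure of $\rel'$ — i.e.\ the operations used must be polynomials of $\rel'$, not just of $\zA$. The cleanest route is probably to argue that $\lnk_1\meet\beta$ restricted to the trace is preserved by all polynomials of $\rel'$ (it sits inside a congruence of $U$ as an algebra), that this restricted relation is nontrivial, and that in the non-type-$\two$ case the only congruence of the trace between $\al$ and $\beta$ that can be "propagated" through a linked subdirect product while keeping $(\gm,\dl)$ alive forces $\typ(\al,\beta)=\two$ — contradiction. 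I would finish by applying the whole argument to the other factor to get $\lnk_2\meet\dl\le\gm$, noting the hypothesis is symmetric in $(\al,\beta)$ and $(\gm,\dl)$ since "cannot be separated" is symmetric and, by Lemma~\ref{lem:perspective-intervals}/Lemma~\ref{lem:e-related}(2), $\typ(\al,\beta)=\typ(\gm,\dl)$, so $\typ(\gm,\dl)\ne\two$ as well.
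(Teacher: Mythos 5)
Your plan assembles the right toolbox---reduce to $\al=\zz_\zA$, $\gm=\zz_\zB$; pass to a polynomial $f$ whose images $U_1=f(\zA)$, $U_2=f(\zB)$ are an $(\al,\beta)$-minimal set and a $(\gm,\dl)$-minimal set respectively (via Lemma~\ref{lem:e-related}(3)); bring in the pseudo-meet $p$ from Lemma~\ref{lem:pseudo-meet}; and observe, via Corollary~\ref{cor:type-equal} / Lemma~\ref{lem:e-related}(2), that $\typ(\gm,\dl)=\typ(\al,\beta)\ne\two$, so the second-coordinate claim follows by symmetry. This much matches the paper's setup.

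The gap is that your central step---``exhibit a unary polynomial $p$ of $\rel'$ that collapses $\beta$ on $U$ but keeps $\dl$ alive on $V$''---is precisely what needs a concrete construction, and you do not give one; your alternative route (argue via which congruences of the trace can be ``propagated'') is likewise only gestured at. What the paper actually does instead is \emph{structural}: it first observes that the pseudo-meet $p$ must preserve the trace $N_2$ of $U_2$ (else $p(x,x)$ fails to be a permutation on $N_2$, already producing a separating polynomial), and then splits on whether $p\red{N_2}$ is a projection or a semilattice operation. In each subcase, plugging fixed elements of $\rel\cap(U_1\tm U_2)$ into one slot of $p$ either produces a polynomial of $\rel$ separating the intervals---contradiction---or forces the constraint
$\rel\cap(U_1\tm U_2)\sse\{(1,1')\}\cup\bigl((\{0\}\cup T_1)\tm(\{0'\}\cup T_2)\bigr)$,
where $N_1=\{0,1\}$, $N_2=\{0',1'\}$ are the traces and $T_1,T_2$ the tails. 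Only after this forced form is established does the linkage hypothesis enter: a nontrivial pair $(a,b)\in\lnk_2\cap\dl$ sitting in a trace would require some $c\in U_1$ related to both $a$ and $b$, which the displayed form forbids. So the ``collapsing polynomial'' is never built from the linkage chain $c_0,\dots,c_k$ as you propose; rather, the unary polynomials $p(x,\cl{\cdot}{\cdot})$ and $p(\cl{\cdot}{\cdot},x)$ obtained by freezing one argument of the pseudo-meet over an element of $\rel\cap(U_1\tm U_2)$ do the separating work, and the case analysis on $p\red{N_2}$ is where the hypothesis $\typ(\al,\beta)\ne\two$ is actually used. Without this case analysis your sketch does not close, because you have no handle on how $p$ interacts with the $\zB$-coordinate. (The paper also disposes of $\typ(\gm,\dl)=\two$ separately by writing $p\red{N_2}$ as an affine combination and powering it up to the identity; in your framing this case does not arise since you already invoked $\typ(\gm,\dl)=\typ(\al,\beta)\ne\two$, which is fine, but you should note the paper's argument in Case~1 really only uses $\typ(\al,\beta)\ne\two$.)
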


\begin{proof}
Assume as usual $\al=\zz_\zA$, $\gm=\zz_\zB$.
By Lemma~\ref{lem:e-related}(3) there is a unary polynomial $f$ 
such that $f(\zA_1)=U_1$, $f(\zA_2)=U_2$ are $(\zz_1,\al_1)$- and 
$(\zz_2,\al_2)$-minimal sets, respectively. We first study the structure
of $\rel\cap(U_1\tm U_2)$ and then show how it can be used to prove the lemma.
By $N_1=\{0,1\}$ we 
denote the only trace of $U_1$; by $T_1$ we denote the tail of $U_1$. 
By Lemma~\ref{lem:pseudo-meet} there is a 
polynomial $p(x,y)$ with $p(\zA_1,\zA_1)=U$ and such that $p$ is a semilattice 
operation on $N$, say, $p(0,1)=0$, and $p$ is a semilattice operation
on $\{0,a\},\{1,a\}$ with $p(a,0)=p(a,1)=a$ for any $a\in T_1$.
There are two cases.

\medskip

{\sc Case 1.}  $\typ(\gm,\dl)\ne \two$.

\smallskip

Let $N_2=\{0',1'\}$ be the trace of $U_2$ and $T_2$ the tail of $U_2$. 
We may assume 
$p(x,p(x,y))=p(x,y)$. Observe first that $p$ preserves $N_2$. Indeed, otherwise 
$p(x,x)$ is not a permutation, as $p(0',0'),p(0',1'),p(1',0'),p(1',1')$ belong 
to the same $\beta_2$-block, and if they do not belong to $N_2$ then they 
are all equal, a contradiction with the assumption that $(\al_1,\beta_1)$ and 
$(\al_2,\beta_2)$ cannot be separated. A binary operation on a 2-element
set is either a projection, or a semilattice operation.

Suppose first that $p$ is a projection, say, the first projection on $N_2$. If 
$(\{0\}\tm N_2)\cap\rel\ne\eps$, say, $(0,a)\in\rel$, then 
$f'(x)=p\left(x,\cl0a\right)$ satisfies the conditions: $f'(N_1)=\{0\}$, that is, 
$f'(\al_1)\sse\zz_1$, and $f'(x)=x$ on $N_2$; a contradiction that $(\gm,\dl)$ 
cannot be separated from $(\al,\beta)$. If $(\{1\}\tm N_2)\cap\rel\ne\eps$, 
say, $(1,a)\in\rel$, then $f'(x)=p\left(\cl1a,x\right)$ satisfies the conditions: 
$f'(x)=x$ on $N_1$, that is, $f'(\beta_1)\not\sse\zz_1$, and $f'(N_2)=\{a\}$ 
on $N_2$; a contradiction that $(\al,\beta)$ cannot be separated from 
$(\gm,\dl)$. Therefore, for some $a\in T_1$, $(a,1')\in\rel$. The operation 
$f'=p\left(x,\cl a{1'}\right)$ is the projection on $N_2$ and $f'(N_1)=\{a\}$; 
a contradiction again.

Suppose now that $p$ is a semilattice operation on $N_2$. Let $1'$ be the 
neutral element of $p$. If $(a,1')\in\rel$ for some $a\in T_1$, then 
$f'(x)=p\left(x,\cl a{1'}\right)$ is the projection on $N_2$, and 
$f'(N_1)=\{a\}$. If $(a,1')\in\rel$ for no $a\in T_1$, then we continue as 
follows. If $(0,1')$ or $(1,0')$ belong to $\rel$, then one of the 
operations $p\left(x,\cl 0{1'}\right)$ and $p\left(x,\cl 1{0'}\right)$ contradicts the 
assumption that $i,j$ cannot be separated. Therefore 
$\rel\cap(U_1\tm U_2)\sse\{(1,1')\}\cup((\{0\}\cup T_1)\tm(\{0'\}\cup T_2))$. 

Suppose that either $\lnk_1\cap\beta\ne\zz_\zA$ or 
$\lnk_2\cap\dl\ne\zz_\zB$, where $\lnk_1,\lnk_2$ are the link congruences
of $\zA,\zB$ with respect to $\rel$. Assume the latter. Then, as 
$\zz_\zB\le\lnk_2$, there is a $(\zz_\zB,\dl)$-trace $N$ such that $(a,b)\in\lnk_2$.
This means there are $\vc ak\in\zA$ and $\vc b{k+1}\in\zB$ with 
$a=b_1, b=b_{k+1}$, and $(a_i,b_i),(a_i,b_{i+1})\in\rel$. Take a polynomial 
$f$ of $\rel$ such that 
$U_1=f(\zA)$, $U_2=f(\zB)$ are $(\zz_\zA,\beta)$-, and $(\zz_\zB,\dl)$-minimal
sets, respectively, and such that $U_2$ is a $(\zz_\zB,\dl)$-minimal set
containing $N$ as a trace and $f(a)=a,f(b)=b$. Then, as is easily seen, 
$\rel\cap(U_1\tm U_2)$ does not have the form described above. Thus, 
$\lnk_1\meet\beta=\zz_\zA$ and $\lnk_2\meet\dl=\zz_\zB$.

\medskip

{\sc Case 2.} $\typ(\zz_\zB,\dl)=\two$.

\smallskip

As in Case~1, since $p(x,x)=x$ on $U_2$, operation $p$ preserves every trace 
of $U_2$. Let $N_2$ be a trace in $U_2$. Then $N_2$ is polynomially equivalent to 
a one-dimensional vector space over $\GF(q)$ where $q$ is a prime power. Since 
$p$ is idempotent, it can be represented in the form $\gm x+(1-\gm)y$, 
$\gm\in\GF(q)$. We may assume that $\gm=1$. Indeed, if $\gm=0$ 
then consider $p(y,x)$ instead of $p(x,y)$. Otherwise, the operation
$$
\underbrace{p(p\ldots p}_{q-1\hbox{\footnotesize\ times}}
(x,y),y\ldots, y)
$$
satisfies the required conditions. Now we can complete the proof as in Case~1.
\end{proof}

The proof of Lemma~\ref{lem:34-links} also implies

\begin{corollary}\label{cor:type-equal}
Let $\rel$ be a subdirect product of $\zA$ and $\zB$ and let 
$\al,\beta\in\Con(\zA)$, $\gm,\dl\in\Con(\zB)$ be such that $\al\prec\beta$, 
$\gm\prec\dl$, and $(\al,\beta),(\gm,\dl)$ cannot be separated. Then 
$\typ(\al,\beta)=\typ(\gm,\dl)$.
\end{corollary}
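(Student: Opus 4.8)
The plan is to re-examine the proof of Lemma~\ref{lem:34-links} and extract the symmetric conclusion about types. Recall that in that proof one fixes an idempotent polynomial $f$ of $\rel$ with $f(\zA)=U_1$ a $(\al,\beta)$-minimal set and $f(\zB)=U_2$ a $(\gm,\dl)$-minimal set, which exists by Lemma~\ref{lem:e-related}(3) since the intervals cannot be separated. Since the hypothesis ``cannot be separated'' is symmetric in the two intervals, it suffices to prove the single implication: if $\typ(\al,\beta)\ne\two$ then $\typ(\gm,\dl)\ne\two$; applying this to both orderings of the pair, together with the fact that the only remaining types are $\two,\three,\four,\five$ (type $\one$ is omitted) and Lemma~\ref{lem:traces}, will pin down equality once we also rule out the mismatches among $\three,\four,\five$.

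The main work is the implication above, and it is essentially the content of Case~1 versus Case~2 in the proof of Lemma~\ref{lem:34-links}. First I would assume for contradiction that $\typ(\al,\beta)\ne\two$ but $\typ(\gm,\dl)=\two$. Then $U_1$ has a single trace $N_1=\{0,1\}$ with a pseudo-meet polynomial $p$ as in Lemma~\ref{lem:pseudo-meet}, while $N_2$, the trace of $U_2$, is polynomially equivalent to a one-dimensional vector space over $\GF(q)$. Following Case~2 of Lemma~\ref{lem:34-links}, $p$ preserves $N_2$ and acts there as $\gm x+(1-\gm)y$; after the normalization given there (replacing $p$ by $p(y,x)$ or by the $(q-1)$-fold composition) we may take this action to be the first projection on $N_2$. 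But a pseudo-meet is a semilattice operation on $N_1$, in particular it is \emph{not} a projection on $N_1$: we have $p(0,1)=0$ while $p(1,0)=\dots$ need not equal $0$, and more to the point $p(a,0)=p(a,1)=a$ for tail elements forces $p$ to be non-trivial on $U_1$ in a way incompatible with being a projection. Now the standard separation argument of Lemma~\ref{lem:34-links} applies: partial evaluations of $p$ at tuples of $\rel\cap(U_1\tm U_2)$ of the form $(0,a),(1,a),(a,0'),(a,1')$ yield a unary polynomial that collapses one of the intervals while not collapsing the other, contradicting the non-separability hypothesis. The symmetric roles of ``projection on $N_1$'' and ``projection/vector-space operation on $N_2$'' are exactly what forces the contradiction, so the same case analysis gives both $\typ(\al,\beta)=\two\Rightarrow\typ(\gm,\dl)=\two$ and its converse.

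Having established that the two types are simultaneously $\two$ or simultaneously in $\{\three,\four,\five\}$, in the latter case I would invoke Lemma~\ref{lem:traces}(2): for each such type the trace $T_i\fac{\cdot}$ of the minimal set is polynomially equivalent to a Boolean algebra, a $2$-element lattice, or a $2$-element semilattice according as the type is $\three,\four,\five$. Since $f$ witnesses that $U_1$ and $U_2$ are handled by the same polynomial and the pseudo-meet $p$ from Lemma~\ref{lem:pseudo-meet} acts the same way on both traces (the argument of Case~1 of Lemma~\ref{lem:34-links} shows $p$ must preserve $N_2$ and act on it as a semilattice operation, never as a projection, once $\typ(\gm,\dl)\neq\two$), the induced $2$-element algebras on the two traces generate the same variety, hence have the same type label. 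The expected main obstacle is precisely this last identification: matching the fine structure of the trace on the $\zB$-side with that on the $\zA$-side, which requires pushing through the case distinction of Lemma~\ref{lem:34-links} carefully enough to see that $p$ cannot be a projection on $N_2$ — for then $p$ would be a semilattice operation on both traces with compatible neutral elements, and semilattice/lattice/Boolean are distinguished by whether a majority or a join-and-meet structure is present, each of which transfers through $f$. Everything else is bookkeeping that the proof of Lemma~\ref{lem:34-links} has already done.
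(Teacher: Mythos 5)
Your proposal correctly reproduces the paper's intended argument for the $\two$ vs.\ not-$\two$ dichotomy: take $f\in\Polo(\rel)$ from Lemma~\ref{lem:e-related}(3) so that $U_1=f(\zA)$ and $U_2=f(\zB)$ are simultaneously $(\al,\beta)$- and $(\gm,\dl)$-minimal, and observe that if $\typ(\al,\beta)\ne\two$ while $\typ(\gm,\dl)=\two$, then after the normalization of Case~2 of Lemma~\ref{lem:34-links} the pseudo-meet $p$ is a semilattice on $N_1$ and the first projection on $N_2$, so that a partial evaluation $p(x,(0,a))$ with $(0,a)\in\rel\cap(U_1\tm U_2)$ collapses $\beta$ into $\al$ while acting identically on $N_2$, contradicting non-separability. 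By symmetry of the hypothesis this pins down the two types as simultaneously $\two$ or simultaneously in $\{\three,\four,\five\}$. This is exactly how the paper reads its own Lemma~\ref{lem:34-links}. (A minor aside: your remark about $p(a,0)=p(a,1)=a$ for tail elements being ``incompatible with being a projection'' does no work; the relevant fact is simply that $p$ is not a projection on the two-element trace $N_1$.)

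The genuine gap is the final step, deciding among types $\three,\four,\five$. You assert that ``the induced $2$-element algebras on the two traces generate the same variety, hence have the same type label,'' and that the majority or lattice structure distinguishing the three types ``transfers through $f$.'' Neither claim is established by what precedes it. Having one common pseudo-meet acting as a semilattice on both $N_1$ and $N_2$ certainly does not force the two trace algebras to be polynomially equivalent. And the transfer of the distinguishing higher-arity operations is not automatic: if $h$ is, say, a pseudo-Mal'tsev polynomial of $\zA$ on $N_1$, its extension to $\rel$ followed by $f$ has range in $U_2$ but need not preserve $N_2$, and the separation argument you used for the binary pseudo-meet rules out $p$ being a projection on $N_2$ but says nothing by itself about how a ternary polynomial acts there. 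To close this you would actually have to exploit the rigid form of $\rel\cap(U_1\tm U_2)$ established in Case~1 of Lemma~\ref{lem:34-links} --- in particular that $\rel\cap(N_1\tm N_2)$ is (contained in) the graph of a bijection $1\leftrightarrow 1'$, $0\leftrightarrow 0'$ --- and then show that every polynomial of $\rel$ preserving $U_1\tm U_2$ acts compatibly on both traces along this bijection. That argument is not in your proposal, and without it the equality of types within $\{\three,\four,\five\}$ is unproved.
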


\section{Centralizers and decomposition of CSPs}\label{sec:centralizer}

In this section we introduce an operator on congruence lattices similar to 
the centralizer in commutator theory, and study its 
properties and its connection to decompositions of CSPs.

\subsection{Quasi-Centralizer}\label{sec:definition-centralizer}

For an algebra $\zA$, a term operation $f(x,\vc yk)$, and $\ba\in\zA^k$, let
$f^\ba(x)=f(x,\ba)$\label{not:f-a}. Let $\al,\beta\in\Con(\zA)$, $\al\le\beta$, 
and let $\zeta(\al,\beta)\sse\zA^2$\label{not:zeta} denote the following binary 
relation: 
$(a,b)\in\zeta(\al,\beta)$ if an only if, for any term operation $f(x,\vc yk)$, any 
$i\in[k]$, and any $\ba,\bb\in\zA^k$
such that $\ba[i]=a$, $\bb[i]=b$, and $\ba[j]=\bb[j]$ for $j\ne i$, it holds
$f^\ba(\beta)\sse\al$ if and only if $f^\bb(\beta)\sse\al$. 

\begin{lemma}\label{lem:delta-properties}
For any $\al,\beta\in\Con(\zA)$, $\al\le\beta$:\\[1mm]
(1) $\zeta(\al,\beta)$ is an equivalence relation.\\[1mm]
(2) $\zeta(\al,\beta)$ is the greatest binary relation $\dl$ satisfying the condition:
for any term operation $f(x,\vc yk)$, and any $\ba,\bb\in\zA^k$
such that $(\ba[j],\bb[j])\in\dl$ for $j\in[k]$, it holds
$f^\ba(\beta)\sse\al$ if and only if $f^\bb(\beta)\sse\al$.\\[1mm]
(3) $\zeta(\al,\beta)$ is a congruence of $\zA$.\footnote{Congruence 
$\zeta(\al,\beta)$ appeared in \cite{Hobby88:structure}, but completely 
inconsequentially, they did not study it at all. It is easy to see,
thanks to K.Kearnes, that $\zeta(\al,\beta)$ is greater than the centralizer
of $\al$ and $\beta$, but the reverse inclusion is unclear.}
\end{lemma}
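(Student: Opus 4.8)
The plan is to prove the three items essentially in order, using item~(2) as the technical workhorse for both (1) and (3). First I would establish (1). Reflexivity is immediate from the definition (take $\ba=\bb$). For symmetry, note that the defining condition ``$f^\ba(\beta)\sse\al$ iff $f^\bb(\beta)\sse\al$'' is itself symmetric in $\ba,\bb$, so $(a,b)\in\zeta(\al,\beta)$ implies $(b,a)\in\zeta(\al,\beta)$. For transitivity, suppose $(a,b),(b,c)\in\zeta(\al,\beta)$; given a term operation $f(x,\vc yk)$, an index $i$, and tuples $\ba,\bc$ agreeing off coordinate $i$ with $\ba[i]=a$, $\bc[i]=c$, introduce the intermediate tuple $\bb$ that agrees with $\ba$ and $\bc$ off coordinate $i$ and has $\bb[i]=b$; then chain the two biconditionals through $\bb$. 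The only mild subtlety is making sure the one-coordinate-at-a-time formulation in the definition genuinely suffices; this is exactly what item~(2) is designed to clean up.

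Next I would prove (2). One direction is trivial: any $\dl$ satisfying the stated ``all coordinates at once'' condition is in particular contained in $\zeta(\al,\beta)$, by restricting to tuples $\ba,\bb$ that differ in a single coordinate. For the reverse --- that $\zeta(\al,\beta)$ itself satisfies the all-coordinates condition --- I would use a standard telescoping argument. Given $f(x,\vc yk)$ and $\ba,\bb\in\zA^k$ with $(\ba[j],\bb[j])\in\zeta(\al,\beta)$ for all $j$, define the hybrid tuples $\bc^{(0)}=\ba$, and $\bc^{(j)}$ obtained from $\bc^{(j-1)}$ by switching coordinate $j$ from $\ba[j]$ to $\bb[j]$, so $\bc^{(k)}=\bb$. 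Consecutive hybrids $\bc^{(j-1)},\bc^{(j)}$ differ in exactly coordinate $j$, and the two entries there are $\zeta(\al,\beta)$-related, so by the single-coordinate definition applied to the term operation $f$ (with $x$ as the distinguished first variable and $y_j$ playing the role of the moved coordinate) we get $f^{\bc^{(j-1)}}(\beta)\sse\al \iff f^{\bc^{(j)}}(\beta)\sse\al$. Chaining these $k$ equivalences yields $f^\ba(\beta)\sse\al\iff f^\bb(\beta)\sse\al$. Then $\zeta(\al,\beta)$ is the greatest such $\dl$ because it contains every such $\dl$ (first direction) and is itself one.

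Finally I would prove (3): $\zeta(\al,\beta)$ is a congruence. Since we already know it is an equivalence relation, it remains to check compatibility with each basic (equivalently, each term) operation $g$ of $\zA$; that is, if $(a_\ell,b_\ell)\in\zeta(\al,\beta)$ for $\ell=1,\dots,m$ then $(g(a_1,\dots,a_m),g(b_1,\dots,b_m))\in\zeta(\al,\beta)$. To verify membership of this pair, take any term operation $f(x,\vc yk)$, an index $i$, and tuples $\ba,\bb$ agreeing off coordinate $i$ with $\ba[i]=g(\oa)$, $\bb[i]=g(\ob)$, where $\oa=(a_1,\dots,a_m)$, $\ob=(b_1,\dots,b_m)$. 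Form the composite term operation $f'(x,y_1,\dots,y_{i-1},z_1,\dots,z_m,y_{i+1},\dots,y_k)=f(x,y_1,\dots,y_{i-1},g(z_1,\dots,z_m),y_{i+1},\dots,y_k)$; substituting the common values of $\ba$ (=$\bb$) off coordinate $i$ for the $y$'s turns $(f')^{(\cdot)}$ into exactly $f^{(\cdot)}$ evaluated at $g$ of the $z$-block. Now apply item~(2) to $f'$ with the $z$-block set to $\oa$ versus $\ob$: since $(a_\ell,b_\ell)\in\zeta(\al,\beta)$ for every $\ell$, item~(2) gives $f^\ba(\beta)\sse\al\iff f^\bb(\beta)\sse\al$, which is what is needed. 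The main obstacle, such as it is, is not any single step but the bookkeeping of which variable of which composite term is playing the ``distinguished $x$'' role versus the ``perturbed'' role, and being careful that the defining condition of $\zeta$ (phrased for $f^\ba(x)=f(x,\ba)$ with $x$ the first variable and a single moved coordinate among the parameters) is invoked correctly after each composition; item~(2) is precisely the lemma that makes this bookkeeping painless, which is why I would prove and lean on it before attacking (3).
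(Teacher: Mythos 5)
Your proposal is correct and follows essentially the same route as the paper: transitivity via an intermediate hybrid tuple, item (2) by a one-coordinate-at-a-time telescoping chain, and item (3) by composing $f$ with $g$ into a new term and invoking item (2). The only cosmetic difference is that the paper normalizes to $i=k$ when writing the composite term $h(x,\vc y{k-1},\vc zm)=f(x,\vc y{k-1},g(\vc zm))$, whereas you keep the index $i$ general; the argument is identical.
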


\begin{proof}
(1) $\zeta(\al,\beta)$ is clearly reflexive and symmetric. Suppose 
$(a,b),(b,c)\in\zeta(\al,\beta)$. Let $f(x,\vc yk)$ be a term operation, 
$i\in[k]$, and $\ba,\bc\in\zA^k$ such that $\ba[i]=a,\bc[i]=c$ and 
$\ba[j]=\bc[j]$ for $j\ne i$. Let $\bb\in\zA^k$ be such that $\bb[i]=b$ and 
$\bb[j]=\ba[j]$ for $j\ne i$. Then $f^\ba(\beta)\sse\al$
if and only if $f^\bb(\beta)\sse\al$, which is if and only if $f^\bc(\beta)\sse\al$.

(2) As is easily seen, $\dl$ is reflexive. Choosing $\ba,\bb$ that differ in 
only one position, we show that $\dl\sse\zeta(\al,\beta)$. 

Let us show the reverse inclusion. Let $f,\ba,\bb$ be as in item (2) of the 
lemma, except $(\ba[i],\bb[i])\in\zeta(\al,\beta)$, rather than $\dl$. 
Set $\ba_i\in\zA^k$, $i\in\{0\zd k\}$, 
as follows: $\ba_i[j]=\ba[j]$ for $j\le i$ and $\ba_i[j]=\bb[j]$ for $j>i$. Then
$f^{\ba_i}(\beta)\sse\al$ if and only if $f^{\ba_{i+1}}(\beta)\sse\al$. Thus,
$f^\ba(\beta)\sse\al$ if and only if $f^\bb(\beta)\sse\al$.

(3) By (1) $\zeta(\al,\beta)$ is an equivalence relation, so, we only need to show 
it is preserved by term operations. Let $g(\vc zm)$ be a term operation and 
$\ba,\bb\in\zA^m$ such that $(\ba[i],\bb[i])\in\zeta(\al,\beta)$ for $i\in[m]$.
Let also $a=g(\ba)$ and $b=g(\bb)$. We show that $(a,b)\in\zeta(\al,\beta)$.
Take a term operation $f(x,\vc yk)$, $i\in[k]$, and $\ba',\bb'\in\zA^k$ such 
that $\ba'[i]=a,\bb'[i]=b$, and $(\ba'[j],\bb'[j])\in\zeta(\al,\beta)$ for $j\ne i$.
Without loss of generality, $i=k$. Let also 
$$
h(x,\vc y{k-1},\vc zm)=f(x,\vc y{k-1},g(\vc zm)),
$$
and $\ba''=(\ba'[1]\zd\ba'[k-1],\ba[1]\zd\ba[m])$, 
$\bb''=(\bb'[1]\zd\bb'[k-1],\bb[1]\zd\bb[m])$. Then 
$(\ba''[j],\bb''[j])\in\zeta(\al,\beta)$
for all $j\in[k+m-1]$. Therefore $f^{\ba'}(\beta)=h^{\ba''}(\beta)\sse\al$ if
and only if $f^{\bb'}(\beta)=h^{\bb''}(\beta)\sse\al$.
\end{proof}

The congruence $\zeta(\al,\beta)$ will be called the 
\emph{quasi-centralizer}\index{quasi-centralizer} of $\al,\beta$. Next we 
prove several properties of quasi-centralizer similar to some
extent to the properties of the regular centralizer. The following statement is 
one of the key ingredients of the algorithm.

\begin{prop}\label{pro:full-centralizer}
If $\zeta(\al,\beta)\ge\beta$, then $(\al,\beta)$ has type \two, and for any 
$\beta$-blocks $B,C$ such that $B\le C$ in $\zA\fac\beta$ (that is $BC$ is 
a thin semilattice edge in $\zA\fac\beta$) and they belong to the same 
$\zeta(\al,\beta)$-block,  there is an injective mapping 
$\sg\colon B\fac\al\to C\fac\al$ such that for any $a\in B\fac\al$, $a\le\sg(a)$ and
$a\not\le b$ for any other $b\in C$.
\end{prop}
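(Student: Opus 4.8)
\medskip
\noindent\textbf{Proof plan.}

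\emph{Step 1: $\typ(\al,\beta)=\two$.} Since $\zA$ omits type~\one\ and $\al\prec\beta$, the type of $(\al,\beta)$ is one of $\two,\three,\four,\five$, and the plan is to exclude the last three by contradiction; so assume $\typ(\al,\beta)\in\{\three,\four,\five\}$. Fix an $(\al,\beta)$-minimal set $U$ with its unique trace $T$ (Lemma~\ref{lem:traces}(2)), a pseudo-meet $g$ on $U$ with neutral element $1\in T$ (Lemma~\ref{lem:pseudo-meet}), and an element $0\in T$ with $(0,1)\in\beta-\al$; write $g(x,y)=\hat g(x,y,\ov e)$ for a term $\hat g$, and by Lemma~\ref{lem:minimal-sets}(3) fix an idempotent $f_0\in\Polo(\zA)$ with $f_0(\zA)=U$, so that $f_0$ is the identity on $U$. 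Now consider the unary polynomials $p(x)=g(f_0(x),1)$ and $q(x)=g(f_0(x),0)$; they come from a single term --- namely $\hat g(f_0'(x,\ov c),w,\ov e)$, where $f_0(x)=f_0'(x,\ov c)$ --- by substituting $w=1$ and $w=0$ into the $w$-slot. By Lemma~\ref{lem:pseudo-meet}(2), $p(0)=g(0,1)=0$ and $p(1)=g(1,1)=1$, so $p(\beta)\not\sse\al$. On the other hand $g(0,0)=g(1,0)=0$ and $T\fac\al=\{0^\al,1^\al\}$, so $q$ maps $T$ into a single $\al$-class; as the trace of $U$ is unique, $\beta\red U-\al\red U\sse T^2$, whence $q(\beta\red U)\sse\al$, and then $q(\beta)\sse\al$ as well, because $q(x)=g(f_0(x),0)$ and $f_0(\beta)\sse\beta\red U$. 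Finally $(1,0)\in\beta\le\zeta(\al,\beta)$, so applying the definition of $\zeta(\al,\beta)$ to the above term, with $x$ the distinguished variable and the $w$-parameter moved from $1$ to $0$, forces $p(\beta)\sse\al\iff q(\beta)\sse\al$ --- contradicting $p(\beta)\not\sse\al$ and $q(\beta)\sse\al$. Hence $\typ(\al,\beta)=\two$.

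\emph{Step 2: the map $\sg$.} For $a\in B$ and $c\in C$ the pair $(a,c)$ is a semilattice edge of $\zA$ witnessed by $\beta$ (because $B\le C$ in $\zA\fac\beta$), so by Lemma~\ref{lem:thin}(1) there is $c^{*}\in C$ with $(a,c^{*})$ a thin semilattice edge of $\zA$; by Lemma~\ref{lem:thin-properties}(1) this gives $a^\al\le(c^{*})^\al$ in $\zA\fac\al$, with $(c^{*})^\al\in C\fac\al$. The crux is a localized version of Lemma~\ref{lem:centralizer-multiplication-p1}: \emph{for a fixed $a\in B$ the class $(c^{*})^\al$ does not depend on the choices, i.e.\ any two thin semilattice edges of $\zA$ from $a$ into $C$ have $\al$-equal targets}. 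Granting this, $\sg(a^\al):=(c^{*})^\al$ is a well-defined map $B\fac\al\to C\fac\al$ with $a^\al\le\sg(a^\al)$, and it is the only $d^\al\in C\fac\al$ with $a^\al\le d^\al$ (which is then immediate, and is in any case forced by the fact that, $\typ(\al,\beta)$ being $\two$, $C\fac\al$ is term equivalent to a module by Lemma~\ref{lem:as-type-2} and so carries no semilattice edge). Injectivity of $\sg$ I would obtain in the same spirit: if $a_1^\al\ne a_2^\al$ while $\sg(a_1^\al)=\sg(a_2^\al)$, then, reasoning with the multiplication $\cdot$ of Proposition~\ref{pro:good-operation} and the polynomial $x\mapsto x\cdot c$ for a suitable $c\in C$, one sees that $x\mapsto x\cdot c$ collapses a nontrivial pair of the $\beta$-block $B$ modulo $\al$ but does not collapse $\beta$ into $\al$ --- the latter because $x\mapsto x\cdot a_1$ is the identity on the module $B\fac\al$ and the parameter may legitimately be moved from $a_1$ to $c$, since $a_1$ and $c$ lie in a common $\zeta(\al,\beta)$-block --- and the module structure of $B\fac\al,C\fac\al$ then yields a contradiction.

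\emph{The main obstacle.} The genuinely delicate point is the localized multiplication lemma. The plan for it is to imitate the proof of Lemma~\ref{lem:centralizer-multiplication-p1}, exploiting that $a$ together with the relevant elements of $C$ lie in one $\zeta(\al,\beta)$-block and that $(c,c')\in\beta\le\zeta(\al,\beta)$ for $c,c'\in C$, and transferring ``collapses into $\al$'' information among the polynomials $x\mapsto x\cdot c$, $x\mapsto x\cdot c'$ and $x\mapsto x\cdot a$ via the definition of the quasi-centralizer, all against the backdrop of the type-$\two$ module structure of $B$ and $C$ modulo $\al$. The obstruction to making this routine is that $\zeta(\al,\beta)$ only reports whether a polynomial collapses \emph{the whole} congruence $\beta$ into $\al$, whereas the conclusion is about the image of a single pair; the way around it --- available only because $\typ(\al,\beta)=\two$, and already used in Step~1 through composition with an idempotent retraction $f_0$ onto a minimal set --- is that ``collapses $\beta$ into $\al$'' and ``collapses the distinguished subtrace into $\al$'' are equivalent for the polynomials in play, so that pointwise statements can be promoted to statements about $\beta$ and then supplied to the definition of $\zeta(\al,\beta)$.
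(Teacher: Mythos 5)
Your proposal follows the paper's proof almost step for step. Step~1 is the paper's argument, and if anything yours is cleaner: the published sentence asserts that $p(x,0)$ does \emph{not} collapse $\beta$ ``as $f(0,0)=0$, $f(1,0)=1$,'' which is backwards relative to Lemma~\ref{lem:pseudo-meet} (with neutral element $1$ one has $g(1,0)=0$, so $g(\cdot,0)$ is the polynomial that collapses the trace and $g(\cdot,1)$ is the identity on it); your $p,q$ get the roles right, and you make explicit the composition with the idempotent retraction $f_0$ that the paper suppresses. Step~2 is also exactly the paper's two-case argument --- ``$a\le b,\ a\le c$'' for well-definedness of $\sg$, ``$b\le a,\ c\le a$'' for injectivity --- recast as a localized multiplication lemma.

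The one thing your plan lacks, and the reason your ``main obstacle'' stays an obstacle in the write-up, is the paper's single invocation of Corollary~\ref{cor:max-min-set}(1). Since $\typ(\al,\beta)=\two$, each $\beta$-block modulo $\al$ is a module (Lemma~\ref{lem:as-type-2}), hence as-connected with every element as-maximal, so \emph{every} $\al$-distinct $\beta$-related pair is an $(\al,\beta)$-subtrace; by Lemma~\ref{lem:minimal-sets}(3),(4) there is then an idempotent $f\in\Polo(\zA)$ whose range is an $(\al,\beta)$-minimal set and which fixes the given pair. For a polynomial of the form $e\cdot f(x)$ (resp.\ $f(x)\cdot e$) the range has size at most $|f(\zA)|$, so ``collapses the given pair modulo $\al$'' and ``collapses $\beta$ into $\al$'' coincide --- precisely the promotion from pointwise to global collapse you ask for --- and moving $e$ within a single $\zeta(\al,\beta)$-block (legitimate because $\beta\le\zeta(\al,\beta)$) then produces the contradiction as you sketch. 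With that corollary supplied, your plan is a complete proof and coincides with the paper's.
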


\begin{proof}
Clearly, we may assume $\al=\zz$. Suppose that $\zeta(\al,\beta)\ge\beta$,
and suppose first that $\typ(\zz,\beta)\ne\two$. Take any 
$(\zz,\beta)$-minimal set 
$U$, its only trace $N$, and a pseudo-meet operation $p$ on $U$. 
Then the polynomial $p(x,0)$ does not collapse $\beta$, as $f(0,0)=0$, 
$f(1,0)=1$, while the polynomial $p(x,1)$ does, a contradiction with the 
assumption $\zeta(\al,\beta)\ge\beta$.

Suppose now that $\typ(\zz,\beta)=\two$. Then by 
Corollary~\ref{cor:max-min-set}(1) for any $a,b\in\zA$ with $a\eqc\beta b$, 
there is a $(\zz,\beta)$-minimal set $U$ such that $a,b\in U$.

Let $B,C$ be $\beta$-blocks and $B\le C$ in $\zA\fac\beta$. By 
Lemma~\ref{lem:thin}(1) for any $a\in\zA$ there is $b\in C$ with $a\le b$. 
Suppose the statement of the proposition is not
true. Then there are two possibilities.

1. For some $a\in B$ and $b,c\in C$, $a\le b, a\le c$. Let $f$ be a polynomial 
such that $U=f(\zA)$ is a $(\zz,\beta)$-minimal set and $b,c\in U$. 
Consider $g_1(x)=a\cdot f(x)$ and $g_2(x)=b\cdot f(x)$. Clearly, 
$g_1(b)=b, g_1(c)=c$, so $g_1(\beta)\not\sse\zz$. 
On the other hand, $g_2(b)=b=g_2(c)$, that is, $g_2(\beta)\sse\zz$, as 
$|g_2(\zA)|\le|U|$, a contradiction with the assumption $(a,b)\in\zeta(\zz,\beta)$.

2. For some $a\in C$ and $b,c\in B$, $b\le a, c\le a$. Let $f$ be a polynomial 
such that $U=f(\zA)$ is a $(\zz,\beta)$-minimal set and $b,c\in U$. 
Consider $g_1(x)=f(x)\cdot a$ and $g_2(x)=f(x)\cdot b$. Clearly, 
$g_1(b)=g_1(c)=a$, so $g_1(\beta)\sse\zz$, as $|g_1(\zA)|\le|U|$. 
On the other hand, $g_2(b)=b, g_2(c)=c$, that is, $g_2(\beta)\not\sse\zz$ 
a contradiction again.
\end{proof}

\begin{corollary}\label{cor:centralizer-multiplication}
Let $\zeta(\al,\beta)=\zo_\zA$, $a,b,c\in\zA$ and $b\eqc\beta c$.
Then $ab\eqc\al ac$.
\end{corollary}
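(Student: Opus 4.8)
The plan is to derive this directly from Proposition~\ref{pro:full-centralizer}. First I would reduce to the case $\al=\zz_\zA$: since the dot is a term operation it commutes with the canonical homomorphism $\zA\to\zA\fac\al$, so $ab\eqc\al ac$ is equivalent to the corresponding equality $a^\al\cdot b^\al=a^\al\cdot c^\al$ in $\zA\fac\al$; moreover whether a polynomial collapses $\beta$ into $\al$ is unaffected by factoring out $\al$, so in $\zA\fac\al$ the quasi-centralizer of the images of $\al$ and $\beta$ is again the full congruence, and $b\eqc\beta c$ passes to $b^\al\eqc{\beta\fac\al}c^\al$. Hence I may assume $\al=\zz_\zA$ and must show $ab=ac$ whenever $b\eqc\beta c$. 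Since $\zeta(\zz_\zA,\beta)=\zo_\zA\ge\beta$, Proposition~\ref{pro:full-centralizer} gives that $(\zz_\zA,\beta)$ has type~$\two$ and that whenever $B\le C$ in $\zA\fac\beta$ there is an injection $\sg\colon B\to C$ with $x\le\sg(x)$ in $\zA$ and $x\not\le y$ for every $y\in C$ with $y\ne\sg(x)$; the clause of Proposition~\ref{pro:full-centralizer} asking $B$ and $C$ to lie in a common $\zeta(\zz_\zA,\beta)$-block is vacuous here because that congruence is full.

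Next I would pin down $ab$ and $ac$ relative to $a$. Because the dot is a term operation, $(ab)^\beta=a^\beta\cdot b^\beta=a^\beta\cdot c^\beta=(ac)^\beta$, so $ab$ and $ac$ lie in a common $\beta$-block $C$; write $B=a^\beta$. By Proposition~\ref{pro:good-operation}, either $ab=a$ or $a\le ab$ in $\zA$, and likewise either $ac=a$ or $a\le ac$. If $B=C$, then $a$, $ab$, $ac$ all lie in the single $\beta$-block $B$, and by Lemma~\ref{lem:as-type-2} the algebra $B$ is term equivalent to a module, which has no semilattice edge; hence the possibility $a\le ab$ is excluded (a semilattice operation on $\{a,ab\}$ would restrict to one on the subalgebra $B$), so $ab=a$, and similarly $ac=a$, giving $ab=ac$. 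If $B\ne C$, then neither $ab$ nor $ac$ can equal $a$ (otherwise its $\beta$-block would be $B$, not $C$), so $a\le ab$ and $a\le ac$ in $\zA$ with $ab,ac\in C$ and $a\in B$; applying Lemma~\ref{lem:thin-properties}(1) to the quotient $\zA\to\zA\fac\beta$ turns $a\le ab$ into $B\le C$. Proposition~\ref{pro:full-centralizer} then forces $ab=\sg(a)=ac$. This exhausts the cases and proves $ab\eqc\al ac$.

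The step I expect to be the main obstacle is the bookkeeping around the degenerate situations $ab=a$ and $B=C$: one must be sure these are actually covered, since Proposition~\ref{pro:full-centralizer} speaks only of genuine thin semilattice edges $B\le C$ in $\zA\fac\beta$, and one must check that in the case $B=C$ the type-$\two$ conclusion of Proposition~\ref{pro:full-centralizer} (via Lemma~\ref{lem:as-type-2}) really does preclude a semilattice edge on $\{a,ab\}$ lying inside the single block $B$. Everything else is a direct appeal to Propositions~\ref{pro:full-centralizer} and~\ref{pro:good-operation}, Lemma~\ref{lem:as-type-2}, and the stability of thin edges under subalgebras and quotients recorded in Lemma~\ref{lem:thin-properties}.
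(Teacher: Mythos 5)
Your proof is correct and follows essentially the same route as the paper: reduce to $\al=\zz_\zA$, observe $ab\eqc\beta ac$, and appeal to Proposition~\ref{pro:full-centralizer}. The paper's own proof is a two-line assertion ``$a\le ab$, $a\le ac$, apply Proposition~\ref{pro:full-centralizer}'' which silently ignores the possibilities $ab=a$ and $a^\beta=(ab)^\beta$; your case analysis (separating $B=C$, handled via Lemma~\ref{lem:as-type-2} and the module having no semilattice edge, from $B\ne C$, handled via the uniqueness clause in Proposition~\ref{pro:full-centralizer}) fills exactly those gaps, so what you wrote is a more careful rendering of the paper's intended argument rather than a different proof.
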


\begin{proof}
We have $ab\eqc\beta ac$ and $a\le ab,a\le ac$. 
By Proposition~\ref{pro:full-centralizer} $ab\eqc\al ac$.
\end{proof}

\begin{remark}
Recently, Payne \cite{Payne16:product}
designed a polynomial time algorithm for the following class of algebras:
Every algebra $\zA$ from this class has a congruence $\al$ such that $\zA\fac\al$ 
is a semilattice, and the interactions between $\al$-blocks satisfy a certain 
condition. It seems that Lemma~\ref{pro:full-centralizer} is
similar to what this condition can provide.
\end{remark}

An injective mapping between $\beta$-blocks $B,C$ inside a 
$\zeta(\al,\beta)$-block can also be established whenever $BC$ is any
thin edge in $\zA\fac\beta$, as the following lemma shows.

\begin{lemma}\label{lem:central-mapping}
Let $\al,\beta\in\Con(\zA)$ such that $\al\prec\beta\le\zeta=\zeta(\al,\beta)$
and let $B,C$ be $\beta$-blocks from the same $\zeta$-block such that $BC$ is 
a thin edge in $\zA\fac\beta$. For any $b\in B,c\in C$ such that $bc$ is a thin 
edge the polynomial $f(x)=x\cdot c$ if $b\le c$, $f(x)=t_{bc}(x,c)$ if $bc$ is 
majority, and $f(x)=h_{bc}(x,b,c)$ if $bc$ is affine, where $t_{ab},h_{ab}$ 
are the operations from Lemma~\ref{lem:op-s-on-affine}, is an injective 
mapping from $B\fac\al$ to $C\fac\al$.
\end{lemma}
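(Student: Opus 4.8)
The plan is to handle the three types of the thin edge $bc$ separately, and for each to check (i) that $f$ maps $B$ into $C$, so that $f$ descends to a well-defined map $B\fac\al\to C\fac\al$ (well-definedness itself being automatic, since congruences are preserved by all polynomials), and (ii) that the descended map is injective. Throughout I would first reduce to the case $\al=\zz$ by passing to $\zA\fac\al$ — the quasi-centralizer, the relation $\prec$, thin edges, and the operations $t_{bc},h_{bc}$ of Lemma~\ref{lem:op-s-on-affine} all transfer to the quotient — and note that there is nothing to prove when $|B|=1$.

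To set the stage: since $\beta\le\zeta=\zeta(\zz,\beta)$, Proposition~\ref{pro:full-centralizer} gives $\typ(\zz,\beta)=\two$, so by Lemma~\ref{lem:as-type-2} every $\beta$-block — in particular $B$ and $C$ — is term equivalent to a module, and any two of its elements form an affine edge of $\zA$ witnessed by the equality relation. In the semilattice case $b\le c$ this is precisely the conclusion of Proposition~\ref{pro:full-centralizer}: for $x\in B$ one has $x\cdot c\eqc\beta b\cdot c=c$, hence $x\cdot c\in C$, and by Proposition~\ref{pro:good-operation} $x\le x\cdot c$, so $x\cdot c$ is the unique element of $C$ above $x$; thus $x\mapsto x\cdot c$ is exactly the injection $\sg$ produced there. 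So only the majority and affine cases need work. In those cases $f(b)\eqc\beta c$ by the defining identities of Lemma~\ref{lem:op-s-on-affine} ($t_{bc}(b,c)\eqc\beta c$, respectively $h_{bc}(b,b,c)\eqc\beta c$), and then $f(x)\eqc\beta f(b)$ for $x\in B$ shows $f(B)\sse C$.

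For injectivity in the majority and affine cases I would argue by contradiction. Suppose $f(a_1)=f(a_2)$ with $a_1,a_2\in B$ and $a_1\ne a_2$. Since $B$ is term equivalent to a module, $a_1,a_2$ are as-maximal and lie in the single as-component of $a_1^\beta=B$, so by Corollary~\ref{cor:max-min-set}(1) $\{a_1,a_2\}$ is a $(\zz,\beta)$-subtrace; fix a $(\zz,\beta)$-minimal set $U$ with $a_1,a_2\in U$ and, by Lemma~\ref{lem:minimal-sets}(3), an idempotent $r\in\Polo(\zA)$ with $r(\zA)=U$ (so $r$ fixes $U$ pointwise). Put $p_2(x)=f(r(x))$, and let $p_1$ be obtained from $p_2$ by replacing the constant $c$ — and, in the affine case, also the constant $b$ — in the definition of $f$ by $a_1$, leaving the constants hidden inside $r$ untouched; thus $p_1(x)=t_{bc}(r(x),a_1)$ in the majority case and $p_1(x)=h_{bc}(r(x),a_1,a_1)$ in the affine case. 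Since $p_2(a_i)=f(a_i)$ and $a_1\ne a_2$ lie in $U$, $p_2$ is not injective on $U$, so Lemma~\ref{lem:minimal-sets}(2) forces $p_2(\beta\red U)\sse\zz$; as $r(\beta)\sse\beta\red U$ and $r$ fixes $U$, this upgrades to $p_2(\beta)\sse\zz$. The changed pairs of constants are $(c,a_1)$ and $(b,a_1)$, and $b,a_1$ lie in one $\beta$-block while $c,a_1$ lie in one $\zeta$-block (because $B,C$ lie in one $\zeta$-block and $\beta\le\zeta$), so both pairs are in $\zeta=\zeta(\zz,\beta)$, and Lemma~\ref{lem:delta-properties}(2) yields $p_1(\beta)\sse\zz$ as well. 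But $p_1(a_i)=t_{bc}(a_i,a_1)$ (resp. $h_{bc}(a_i,a_1,a_1)$) equals $a_i$, since $\{a_1,a_i\}$ is an affine edge inside the module $B$ witnessed by equality, on which $t_{bc}$ acts as the first projection (resp. $h_{bc}$ acts as $h(x,y,y)=x$) by Lemma~\ref{lem:op-s-on-affine}(1) (resp. (2)); hence $p_1(a_1)=a_1\ne a_2=p_1(a_2)$ with $(a_1,a_2)\in\beta$, contradicting $p_1(\beta)\sse\zz$.

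The step I expect to be the crux is the last paragraph, and within it two points: first, matching the generic operations $t_{bc},h_{bc}$ of Lemma~\ref{lem:op-s-on-affine} to the concrete affine edges living inside the module $B$, so that they genuinely fix $a_i$ there — this hinges on the witnessing congruence of such an edge being trivial, which is where $\typ(\zz,\beta)=\two$ and Lemma~\ref{lem:as-type-2} are used; and second, the passage from "$p_2$ is non-injective on the single minimal set $U$" to "$p_2$ collapses all of $\beta$", for which it is essential that $r$ have range exactly $U$ so that Lemma~\ref{lem:minimal-sets}(2) applies and $r(\beta)\sse\beta\red U$. Everything else — the reduction to $\al=\zz$, the verification $f(B)\sse C$, and the identification of the semilattice case with Proposition~\ref{pro:full-centralizer} — is routine.
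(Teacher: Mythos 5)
Your proof is correct and follows essentially the same route as the paper's: dispose of the semilattice case via Proposition~\ref{pro:full-centralizer}, then for the majority and affine cases pick an idempotent polynomial $r$ (the paper's $f'$) onto a $(\al,\beta)$-minimal set containing $a_1,a_2$, observe that the resulting composite collapses $\beta$, and then swap the constants $c$ (resp. $b,c$) for elements of $B$ to produce, via the $\zeta$-invariance in Lemma~\ref{lem:delta-properties}, a polynomial that does \emph{not} collapse $\beta$, which is a contradiction. The only cosmetic differences are that you compare against the constant $a_1$ where the paper uses $b$ in the majority case, and that you invoke Lemma~\ref{lem:delta-properties}(2) explicitly where the paper appeals to the raw definition of $\zeta$.
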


\begin{proof}
We can assume that $\al$ is the equality relation. Suppose $f(a_1)=f(a_2)$ 
for some $a_1,a_2\in B$. Since $\typ(\al,\beta)=\two$, by 
Corollary~\ref{cor:max-min-set}(1) every pair of elements
of $B$ is an $(\al,\beta)$-subtrace. Let $f'$ be an idempotent unary 
polynomial such that $f'(a_1)=a_1$, $f'(a_2)=a_2$, and $f'(\zA)$
is an $(\al,\beta)$-minimal set. 

If $b\le c$, let $g(x,y)=f'(x)\cdot y$. Then $g^c=g(x,c)=f(x)$ on $\{a_1,a_2\}$,
that is, $g^c(a_1)=g^c(a_2)$ implying $g^c(\beta)\sse\al$. On the other hand, 
$g^b(x)=f'(x)$ on $\{a_1,a_2\}$ implying $g^b(\beta)\not\sse\al$, 
a contradiction with the assumption $b\eqc\zeta c$. 

If $bc$ is a thin majority edge, set $g(x,y)=t_{bc}(f'(x),y)$. Then 
$g^c(a_1)=f(a_1)=f(a_2)=g^c(a_2)$, and so $g^c(\beta)\sse\al$. On the
other hand, since $B\fac\al$ is a module, $a_1b,a_2b$ are affine edges 
and $\al$ witnesses that. Therefore $g^b(a_1)=a_1$ and $g^b(a_2)=a_2$,
implying $g^b(\beta)\not\sse\al$, and we have a contradiction again.

Finally, if $bc$ is a thin affine edge, we consider the polynomials 
$g(x,y,z)=h_{bc}(f'(x),y,z)$ and $g^{bc}(x)=g(x,b,c)$, 
$g^{a_1a_1}(x)=g(x,a_1,a_1)$. Again, 
$g^{bc}(a_1)=f(a_1)=f(a_2)=g^{bc}(a_2)$, while 
$$
g^{a_1a_1}(a_1)=h_{bc}(f'(a_1),a_1,a_1)=a_1\ne
h_{bc}(f'(a_2),a_1,a_1)=g^{a_1a_1}(a_2),
$$
since by Lemma~\ref{lem:op-s-on-affine} $h_{bc}(x,a_1,a_1)$ is a permutation.
This implies that $g^{bc}(\beta)\sse\al$ and $g^{a_1a_1}(\beta)\not\sse\al$,
a contradiction.
\end{proof}

\begin{lemma}\label{lem:equal-centralizer}
Let $\al,\beta\in\Con(\zA)$ be such that $\al\prec\beta$ and 
$\typ(\al,\beta)=\two$, and $\zeta=\zeta(\al,\beta)$. Then for any 
$\beta$-blocks $B_1,B_2$ that belong to the same $\zeta$-block $C$ and 
such that $B_1\sqq_{asm}B_2$ and $B_2\sqq_{asm} B_1$ in $C\fac\beta$, 
$|B_1\fac\al|=|B_2\fac\al|$.
\end{lemma}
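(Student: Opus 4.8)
The plan is to produce injective maps $B_1\fac\al\to B_2\fac\al$ and $B_2\fac\al\to B_1\fac\al$ and then conclude by finiteness of $\zA$. Note first that the hypotheses implicitly require $\beta\le\zeta$ (otherwise a whole $\beta$-block could not ``belong to a single $\zeta$-block''), so $C$ is a subalgebra of $\zA$ and a union of $\beta$-blocks, and $C\fac\beta$ is meaningful; this matches the standing hypothesis of Lemma~\ref{lem:central-mapping}.

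Next I would unwind the assumption $B_1\sqq_{asm}B_2$ in $C\fac\beta$: it yields a sequence of $\beta$-blocks $B_1=D_0,D_1\zd D_k=B_2$, all contained in $C$, such that $D_i^\beta D_{i+1}^\beta$ is a thin edge of $\zA\fac\beta$ (of semilattice, majority or affine type) for every $i<k$. Since $\typ(\al,\beta)=\two$ and $D_i,D_{i+1}$ both lie in the common $\zeta$-block $C$, Lemma~\ref{lem:central-mapping} applies to each such edge and supplies an injective map $\sg_i\colon D_i\fac\al\to D_{i+1}\fac\al$ (realised by $x\cdot c$, $t_{bc}(x,c)$ or $h_{bc}(x,b,c)$ for a suitable choice of $b\in D_i$, $c\in D_{i+1}$ forming a thin edge of $\zA$ of the relevant type). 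Composing, $\sg_{k-1}\circ\dots\circ\sg_0$ is an injective map $B_1\fac\al\to B_2\fac\al$, so $|B_1\fac\al|\le|B_2\fac\al|$. Running the same argument along an asm-path witnessing $B_2\sqq_{asm}B_1$ gives the reverse inequality; since $\zA$ is finite both sets are finite, hence $|B_1\fac\al|=|B_2\fac\al|$.

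The one point that will need a line of care is matching the hypotheses of Lemma~\ref{lem:central-mapping} to each edge of the asm-path, namely checking that a thin edge occurring on a path ``in $C\fac\beta$'' is a thin edge of $\zA\fac\beta$ between $\beta$-blocks of $\zA$ sitting inside $C$, and that representatives $b\in D_i$, $c\in D_{i+1}$ with $bc$ a thin edge of $\zA$ exist; this is routine, using Lemma~\ref{lem:thin} and the fact that blocks of congruences of an idempotent algebra are subalgebras. Beyond that bookkeeping I expect no real obstacle: once the edgewise injections are available, the proof is a purely compositional counting argument.
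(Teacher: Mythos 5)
Your proposal is correct and is essentially the paper's own argument: compose the injective maps supplied by Lemma~\ref{lem:central-mapping} along an asm-path from $B_1$ to $B_2$ and back, and conclude by finiteness. The paper states this in a single sentence; your version simply makes the bookkeeping explicit.
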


\begin{proof}
Since there is an asm-path from $B_1$ to $B_2$ and back, the result 
follows from Lemma~\ref{lem:central-mapping}.
\end{proof}

Let $\zA$ be an algebra and $\al,\beta\in\Con(\zA)$, $\al\prec\beta$.
Element $a\in\zA$ is said to be 
\emph{$\al\beta$-minimal}\index{$\al\beta$-minimal element} if it belongs 
to an $(\al,\beta)$-trace. Let $\emin_\zA(\al,\beta)$\label{not:emin} denote 
the set of all $\al\beta$-minimal elements of $\zA$. By 
Lemma~\ref{lem:minimal-sets}(5) $\emin_\zA(\al,\beta)$ intersects
every $\al$-block from a nontrivial $\beta$-block. The following lemma 
shows that $\emin_\zA(\al,\beta)$ can be much larger than that. Due to
the way we will use it in the future the statement of the lemma is not
quite straightforward.

\begin{lemma}\label{lem:maximal-minimal}
Let $\al,\beta,\gm,\dl\in\Con(\zA)$ be such that $\gm\prec\dl\le\beta$, 
$\al\prec\beta$, intervals $(\al,\beta),(\gm,\dl)$ cannot be separated,  
and $\typ(\al,\beta)=\two$; let $B$ be a $\beta$-block. 
If $a\in \emin_\zA(\gm,\dl)\cap B$ then for any 
$b\in B$ such that $a\sqq_{asm}b$ in $B$, $b\in \emin_\zA(\gm,\dl)$.

In particular, $\umax(B)\sse \emin_\zA(\gm,\dl)$.

Moreover, if $a\eqc\al b$, $a\sqq_{asm}b$ in $a^\al$, and $f$ is a 
polynomial such that $f(a)=a$, $f(\zA)$ is an $(\al,\beta)$-minimal
set, and $N$ its trace with $a\in N$, then there is a polynomial $g$
such that $g(b)=b$, $g(\zA)$ is an $(\al,\beta)$-minimal set,
$N'$ is its trace containing $b$ and $N'\fac\al=N\fac\al$.
\end{lemma}

\begin{proof}
Let $f$ be an idempotent unary polynomial of $\zA$ such that $a\in N$, 
a trace in $U=f(\zA)$, a $(\gm,\dl)$-minimal set. Note that 
$f(B)\sse B$ and $f(\beta)\not\sse\al$. It suffices
to consider the case when $ab$ is a thin edge.

Depending on the type of the edge $ab$ we set $f'(x)=f(x)\cdot b$,
$f'(x)=t_{ab}(f(x),b)$, or $f'(x)=h_{ab}(f(x),a,b)$, if $ab$ is semilattice,
majority or affine, respectively. Note also that by 
Lemma~\ref{lem:op-s-on-affine} $f'(a)=b$, and therefore if 
$f'(\dl)\not\sse\gm$ we have $f'(\zA)$ is a $(\gm,\dl)$-minimal set, and
$b$ belongs to it.

Since $(\al,\beta)$ and $(\gm,\dl)$ cannot be separated, by 
Lemma~\ref{lem:e-related}(2) $U$ is an $(\al,\beta)$-minimal set.
Hence, there are $a_1,a_2\in B\fac\al$ such that $a_1\ne a_2$ and
$f(a_1)=a_1,f(a_2)=a_2$. Since $a_1a_2$ is an affine edge in $\zB\fac\al$,
depending on the type of $ab$ we have:\\
-- if $a\le b$, then $f'(a_i)=a_i\cdot b^\al=a_i$ for $i=1,2$;\\
-- if $ab$ is majority, then $f'(a_i)=t_{ab}(a_i,b^\al)=a_i$, as 
$a_i\eqc{\beta\fac\al}b^\al$ for $i=1,2$ and by 
Lemma~\ref{lem:op-s-on-affine}(1);\\
-- if $ab$ is affine, then by Lemma~\ref{lem:op-s-on-affine}(2)
$h_{ab}(x,a^\al,b^\al)$ is a permutation on
$B\fac\al$, in particular, $f'(a_1)\ne f'(a_2)$.\\
In either case we obtain $f'(\beta)\not\sse\al$, implying $f'(\dl)\not\sse\gm$.

For the last claim of the lemma it suffices to notice that if $a\eqc\al b$
then $f'(x)\eqc\al f(x)$ for $x\in B$.
\end{proof}

\subsection{Decomposition of CSPs}\label{sec:csp-decomposition}

In this section we show that if intervals in congruence lattices of domains 
in a CSP instance cannot be separated, they induce certain decomposition
of the instance or its subinstances. The components of this decomposition 
are instances over smaller domains, which are, actually, blocks of the 
corresponding quasi-centralizers.

Let $\rel$ be a subdirect product of $\zA_1\tm\dots\tm\zA_n$, $i,j\in[n]$, 
and $\al_i\in\Con(\zA_i)$, 
$\al_j\in\Con(\zA_j)$. The coordinate positions $i,j$ are said to be 
\emph{$\al_i\al_j$-aligned in $\rel$}\index{$\al_i\al_j$-aligned coordinates} 
if, for any $(a,c),(b,d)\in\pr_{ij}\rel$, 
$(a,b)\in\al_i$ if and only if $(c,d)\in\al_j$. Or in other words, the link 
congruences of $\zA_i,\zA_j$ with respect to $\pr_{ij}\rel$ are no greater
than $\al_i,\al_j$, respectively.

\begin{lemma}\label{lem:delta-alignment}
Let $\rel$ be a subdirect product of 
$\zA_1\tm\zA_2$, $\al_i,\beta_i\in\Con(\zA_i)$, $\al_i\prec\beta_i$, 
for $i=1,2$. If $(\al_1,\beta_1)$ and $(\al_2,\beta_2)$ cannot be separated 
from each other, then the coordinate positions 1,2 are 
$\zeta(\al_1,\beta_1)\zeta(\al_2,\beta_2)$-aligned in $\rel$.
\end{lemma}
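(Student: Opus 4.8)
The goal is to show that whenever $(a,c),(b,d)\in\rel$ satisfy $(a,b)\in\zeta(\al_1,\beta_1)=\zeta_1$, then $(c,d)\in\zeta(\al_2,\beta_2)=\zeta_2$, and symmetrically. By symmetry of the hypotheses it suffices to prove one direction. The strategy is to argue contrapositively: suppose $(c,d)\notin\zeta_2$; then by the characterization of the quasi-centralizer (Lemma~\ref{lem:delta-properties}(2), applied to the algebra $\rel$, not just $\zA_2$) there is a witness for non-membership --- a term operation $f$ of $\rel$ and tuples of parameters from $\rel$ differing in a single slot whose first coordinates equal $c$ and $d$ --- such that the induced polynomials of $\zA_2$ behave differently with respect to collapsing $\beta_2$ into $\al_2$. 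The plan is to use this witness, together with the fact that $(\al_1,\beta_1)$ and $(\al_2,\beta_2)$ cannot be separated, to build a corresponding polynomial of $\rel$ witnessing $(a,b)\notin\zeta_1$.

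Concretely, I would first reduce to the case $\al_1=\zz_1$, $\al_2=\zz_2$ by passing to the quotient $\rel\fac{\al_1\tm\al_2}$; the hypotheses (non-separability, and the fact that $\zeta(\al_i,\beta_i)\fac{\al_i}=\zeta(\zz_i,\beta_i\fac{\al_i})$ in the quotient) are preserved, so this is harmless. Next, because $(c,d)\notin\zeta_2$ in $\zA_2$ viewed as a factor of $\rel$, there is a unary polynomial $g$ of $\rel$ with $g(\beta_2)\sse\al_2$ but, after modifying the parameter in one slot, a polynomial $g'$ with $g'(\beta_2)\not\sse\al_2$ --- i.e.\ $g'$ does not collapse $(\al_2,\beta_2)$. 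By Lemma~\ref{lem:min-set-separation} (or Lemma~\ref{lem:minimal-sets}(6) plus composition) I may assume $g'$ is idempotent with $g'(\zA_2)$ an $(\al_2,\beta_2)$-minimal set. Now the key step: since $g'$ does not collapse $(\al_2,\beta_2)$ and $(\al_1,\beta_1),(\al_2,\beta_2)$ cannot be separated, $g'$ also does not collapse $(\al_1,\beta_1)$ --- this is exactly the defining property of ``cannot be separated.'' Meanwhile $g$ collapses $(\al_2,\beta_2)$; to get it to collapse $(\al_1,\beta_1)$ as well I would compose $g$ with an $\al_1\beta_1$-collapsing-type polynomial obtained from non-separability in the other direction, using Lemma~\ref{lem:e-related}(3) to find an idempotent polynomial sending $\zA_2$ to a $(\al_2,\beta_2)$-minimal set while $\zA_1$ goes to a $(\al_1,\beta_1)$-minimal set, and then iterating. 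Transporting the single-slot parameter change through these compositions (which only affects the ``$x$'' argument, leaving the parameter tuples of $\rel$ intact) yields a term operation of $\rel$ and two parameter tuples differing in one coordinate, whose first coordinates are $a$ and $b$, that behave differently with respect to collapsing $(\al_1,\beta_1)$. That is exactly the statement $(a,b)\notin\zeta_1$, contradicting the hypothesis.

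The main obstacle I anticipate is bookkeeping the parameters: the quasi-centralizer is defined via term operations $f(x,\vc yk)$ with a designated variable $y_i$ changed, and the polynomials $g,g'$ of $\rel$ carry their own parameter tuples from $\rel$; when I compose $g'$ (or a collapsing polynomial) in front of $g$, I must check that the combined operation is still of the form ``one term operation with all-but-one parameters fixed,'' and that the single changed slot still corresponds, in the first coordinate, to the pair $(a,b)$. The clean way to handle this is to work throughout with the binary-relation reformulation: view the witness for $(c,d)\notin\zeta_2$ as living in a suitable subpower of $\rel$ generated by the diagonal together with one extra tuple carrying $(c,d)$ in coordinates corresponding to $\zA_2$ and $(a,b)$ in those corresponding to $\zA_1$ (this extra tuple exists precisely because $(a,c),(b,d)\in\rel$), and then the whole argument becomes a statement about which unary polynomials of that subpower collapse which prime intervals --- making the non-separability hypothesis directly applicable and the parameter-tracking automatic. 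I would also double-check the reduction $\al_i\to\zz_i$ interacts correctly with minimal sets (via Lemma~\ref{lem:minimal-sets}) so that the idempotent-minimal-set normalizations go through in the quotient.
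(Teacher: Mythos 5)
Your core strategy matches the paper's: take a witness $g,\bc,\bd$ (parameter tuples differing in one slot, with $c,d$ there) for $(c,d)\notin\zeta(\al_2,\beta_2)$, extend $g$ to a term of $\rel$ and lift $\bc,\bd$ to parameter tuples in $\rel$ (still differing in one slot, now with $a,b$ on the $\zA_1$ side, using subdirectness for the remaining slots), and derive a contradiction from the cannot-be-separated hypothesis. That part is right, and the "clean subpower reformulation" you suggest at the end is essentially the same as the explicit parameter-lifting the paper does. But one of your middle steps is genuinely confused, and the bookkeeping obstacle you anticipate is self-inflicted.

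The confusion is in the sentence ``Meanwhile $g$ collapses $(\al_2,\beta_2)$; to get it to collapse $(\al_1,\beta_1)$ as well I would compose $g$ with an $\al_1\beta_1$-collapsing-type polynomial obtained from non-separability in the other direction, using Lemma~\ref{lem:e-related}(3) \ldots and then iterating.'' No composition is needed: since $g$ collapses $(\al_2,\beta_2)$, if $g$ did not also collapse $(\al_1,\beta_1)$ then $g$ would itself separate $(\al_1,\beta_1)$ from $(\al_2,\beta_2)$, contradicting the hypothesis --- the exact same observation you make correctly for $g'$. Moreover, Lemma~\ref{lem:e-related}(3) could not serve the role you assign it even if composition were needed: it produces a polynomial whose range in each factor is a minimal set, i.e.\ a \emph{non}-collapsing polynomial, and the paper's actual notion of an $\al\beta$-collapsing polynomial (Lemma~\ref{lem:collapsing}) requires chaining hypotheses that are not available here. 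The reduction to $\al_i=\zz_i$ and the idempotent/minimal-set normalization of $g'$ are likewise unnecessary.

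Once these excess steps are dropped, what remains is the paper's proof in full: after lifting, $(a,b)\in\zeta(\al_1,\beta_1)$ forces $g^\ba(\beta_1)\sse\al_1$ iff $g^\bb(\beta_1)\sse\al_1$, while the choice of witness gives $g^\bc(\beta_2)\sse\al_2$ but $g^\bd(\beta_2)\not\sse\al_2$ (or the reverse). If both lifted polynomials collapse $(\al_1,\beta_1)$, the one not collapsing $(\al_2,\beta_2)$ separates $(\al_2,\beta_2)$ from $(\al_1,\beta_1)$; if neither collapses $(\al_1,\beta_1)$, the one collapsing $(\al_2,\beta_2)$ separates $(\al_1,\beta_1)$ from $(\al_2,\beta_2)$. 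Either way the hypothesis is contradicted. No minimal sets, no compositions, and no parameter bookkeeping beyond the single designated slot built into the definition of $\zeta$.
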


\begin{proof}
Let us assume the contrary, that is, without loss of generality there are 
$a,b\in\zA_1$ and $c,d\in\zA_2$
with $(a,c),(b,d)\in\rel$, $(a,b)\in\zeta(\al_1,\beta_1)$, but 
$(c,d)\not\in\zeta(\al_2,\beta_2)$. Therefore there is $g(x,\vc yk)$, a term 
operation of $\zA_2$, $i\in[k]$, and $\bc,\bd\in\zA^k_2$ with $\bc[i]=c$,
$\bd[i]=d$ and $\bc[j]=\bd[j]$ for $j\ne i$, such that $g^\bc(\beta_2)\sse\al_2$
but $g^\bd(\beta_2)\not\sse\al_2$, or the other way round.
Extend $g$ to a term operation $g$ of $\rel$, and choose $\ba,\bb\in\zA_1^k$
such that $\ba[i]=a,\bb[i]=b$, $\ba[j]=\bb[j]$ for $j\ne i$, and $(\ba[j],\bc[j]),
(\bb[j],\bd[j])\in\rel$ for $j\in[k]$. Then $g^\ba(\beta_1)\sse\al_1$ if and only if
$g^\bb(\beta_1)\sse\al_1$. Therefore, there is a polynomial of $\rel$ that 
separates $(\al_1,\beta_1)$ from $(\al_2,\beta_2)$ or the other way round, a 
contradiction.
\end{proof}

Let $\cP=(V,\cC)$ be a (2,3)-minimal instance, in particular,
for every $X\sse V$, $|X|=2$, it contains a constraint $C^X=\ang{X,\rel^X}$.
Let $w_1,w_2,\in V$. We say that $w_1,w_2$ are
\emph{$\al_1\al_2$-aligned} in $\cP$, where $\al_1\in\Con(\zA_{w_1})$, 
$\al_2\in\Con(\zA_{w_2})$, if they are $\al_1\al_2$-aligned 
in $\rel^{w_1w_2}$. For $\al_v\in\Con(\zA_v)$, $v\in V$, instance $\cP$ is 
said to be \emph{$\ov\al$-aligned}\index{$\ov\al$-aligned instance} 
if every $w_1,w_2$ are $\al_{w_1}\al_{w_2}$-aligned. This means that 
there are one-to-one mappings 
$\vf_{w_1w_2}:\zA_{w_1}\fac{\al_{w_1}}\to \zA_{w_1}\fac{\al_{w_1}}$
such that whenever $(a,b)\in\rel^{w_1w_2}$, 
$b^{\al_{w_2}}=\vf_{w_1w_2}(a^{w_1})$. Observe that since $\cP$
is (2,3)-minimal, these mappings are consistent, that is, for any $u,v,w\in V$,
$\vf_{vw}\circ\vf_{uv}=\vf_{uw}$. Therefore $\cP$ can be represented as a 
disjoint union of instances $\cP_1\zd\cP_k$, where $k$ is the number of 
$\al_v$-blocks for any $v\in V$ and the domain of $v\in V$ of $\cP_i$ is the $i$-th
$\al_v$-block.

Let again $\cP=(V,\cC)$ be a (2,3)-minimal instance and let
$\ov\beta$, $\beta_v\in\Con(\zA_v)$, $v\in V$, be a collection of congruences. 
Let $\cW^\cP(\ov\beta)$\label{not:cW} denote the set of triples $(v,\al,\beta)$ 
such that $v\in V$, $\al,\beta\in\Con(\zA_v)$, and $\al\prec\beta\le\beta_v$. Also, 
$\cW^\cP$ denotes $\cW^\cP(\ov\beta)$ when $\beta_v=\zo_v$ for all $v\in V$. 
We will omit the superscript $\cP$ whenever it is clear from the context. 
For every $(v,\al,\beta)\in\cW(\ov\beta)$, let $Z$ denote the set of triples 
$(w,\gm,\dl)\in\cW(\ov\beta)$ such that $(\al,\beta)$ and $(\gm,\dl)$
cannot be separated in $\rel^{vw}$. Slightly abusing the terminology we 
will also say that $(\al,\beta)$ and $(\gm,\dl)$ cannot be separated in $\cP$. 
Then let $W_{v,\al\beta,\ov\beta}=\{w\in V\mid (w,\gm,\dl)\in Z
\text{ for some $\gm,\dl\in\Con(\zA_w)$}\}$\label{not:W-v-albeta-beta}.
We will omit the subscript $\ov\beta$ whenever possible.
The following statement is an easy corollary of 
Lemma~\ref{lem:delta-alignment}.

\begin{theorem}\label{the:centralizer-alignement}
Let $\cP=(V,\cC)$ be a (2,3)-minimal instance and $(v,\al,\beta)\in\cW$. 
For $w\in W_{v,\al\beta,\ov\beta}$, where $\beta_v=\zo_v$ for $v\in V$, 
let $(w,\gm,\dl)\in\cW$ be such that $(\al,\beta)$ and 
$(\gm,\dl)$ cannot be separated and $\zeta_w=\zeta(\gm,\dl)$. Then 
$\cP_{W_{v,\al\beta,\ov\beta}}$ is $\ov\zeta$-aligned.
\end{theorem}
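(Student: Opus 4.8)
\textbf{Plan for the proof of Theorem~\ref{the:centralizer-alignement}.}
The statement is essentially a direct packaging of Lemma~\ref{lem:delta-alignment} applied pairwise inside the $(2,3)$-strategy of $\cP$, so the proof is short. First I would fix $(v,\al,\beta)\in\cW$ and recall what the set $W_{v,\al\beta,\ov\beta}$ is: it consists of those $w\in V$ for which there exist $\gm,\dl\in\Con(\zA_w)$ with $(\al,\beta)$ and $(\gm,\dl)$ unable to be separated in $\rel^{vw}$; for each such $w$ we choose one witnessing pair $(\gm_w,\dl_w)$ and set $\zeta_w=\zeta(\gm_w,\dl_w)$ (for $w=v$ itself we may take $(\gm_v,\dl_v)=(\al,\beta)$ and $\zeta_v=\zeta(\al,\beta)$). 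To prove $\cP_{W_{v,\al\beta,\ov\beta}}$ is $\ov\zeta$-aligned I must show that for any two $w_1,w_2\in W_{v,\al\beta,\ov\beta}$ the coordinate positions $w_1,w_2$ are $\zeta_{w_1}\zeta_{w_2}$-aligned in $\rel^{w_1w_2}$, which by Lemma~\ref{lem:delta-alignment} will follow once I know that $(\gm_{w_1},\dl_{w_1})$ and $(\gm_{w_2},\dl_{w_2})$ cannot be separated from each other \emph{in $\rel^{w_1w_2}$}.

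The key step, then, is to propagate the "cannot be separated" relation from the two pairs $(v,w_1)$ and $(v,w_2)$ to the pair $(w_1,w_2)$. Here I would invoke $(2,3)$-minimality of $\cP$. Consider the relation $\rel^{\{v,w_1,w_2\}}=\cS_{\{v,w_1,w_2\}}$, a subdirect product of $\zA_v\tm\zA_{w_1}\tm\zA_{w_2}$; by $(2,3)$-minimality its binary projections are exactly $\rel^{vw_1},\rel^{vw_2},\rel^{w_1w_2}$. A unary polynomial of $\rel^{w_1w_2}$ extends (using the agreement about extensions of polynomials, as in Lemma~\ref{lem:min-set-separation}) to a unary polynomial of $\rel^{\{v,w_1,w_2\}}$, whose action on the $\zA_{w_i}$ coordinates is the original polynomial and whose action on $\zA_v$ is some polynomial of $\zA_v$. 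Conversely a polynomial of $\rel^{\{v,w_1,w_2\}}$ restricts to polynomials on each binary projection. Using this I argue: if $(\gm_{w_1},\dl_{w_1})$ could be separated from $(\gm_{w_2},\dl_{w_2})$ in $\rel^{w_1w_2}$ by a polynomial $f$, lift $f$ to $\rel^{\{v,w_1,w_2\}}$; its restriction to $\rel^{vw_1}$ and to $\rel^{vw_2}$ would then show that $(\gm_{w_1},\dl_{w_1})$ is separated from $(\al,\beta)$ or $(\al,\beta)$ is separated from $(\gm_{w_2},\dl_{w_2})$ (depending on whether $f$ collapses $(\al,\beta)$ on the $v$-coordinate), contradicting the choice of the pairs. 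The same chaining argument handles separation in the other direction, so $(\gm_{w_1},\dl_{w_1})$ and $(\gm_{w_2},\dl_{w_2})$ cannot be separated in $\rel^{w_1w_2}$, and Lemma~\ref{lem:delta-alignment} finishes the pair $w_1,w_2$.

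The main obstacle I anticipate is the bookkeeping around \emph{which factor} the polynomials live on: separation was defined both for a single algebra (Lemma~\ref{lem:separation-separation}) and for factors of a subdirect product, and here I am moving between the three binary projections $\rel^{vw_i}$ and the ternary relation $\rel^{\{v,w_1,w_2\}}$, so I need to be careful that a polynomial separating in $\rel^{w_1w_2}$ really does restrict to a polynomial whose $v$-component is a well-defined unary polynomial of $\zA_v$ and that the "either $f$ collapses $(\al,\beta)$ on $v$ or it does not" case split is exhaustive and each case genuinely contradicts unseparability. Once that is set up carefully the rest is routine. I would close by noting the special case $w_i=v$ (so that $\cP_{W}$ being $\ov\zeta$-aligned includes the alignment of $v$ with every other coordinate), which is literally Lemma~\ref{lem:delta-alignment} with no chaining needed, and then state that since the mappings $\vf_{w_1w_2}$ between quotient sets induced by alignment are consistent by $(2,3)$-minimality, $\cP_{W_{v,\al\beta,\ov\beta}}$ decomposes as in the paragraph preceding the theorem, giving exactly the claimed $\ov\zeta$-alignment.
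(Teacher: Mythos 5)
Your proof is correct and takes essentially the same route the paper intends: the paper states the theorem only as an ``easy corollary'' of Lemma~\ref{lem:delta-alignment}, and you have filled in precisely the missing step, namely using $(2,3)$-minimality to lift a hypothetical separating polynomial of $\rel^{w_1w_2}$ to a polynomial of $\cS_{\{v,w_1,w_2\}}$, whose projection to $\rel^{vw_1}$ or $\rel^{vw_2}$ (depending on whether $(\al,\beta)$ is collapsed on the $v$-coordinate) would contradict the unseparability of $(\al,\beta)$ from $(\gm_{w_1},\dl_{w_1})$ or from $(\gm_{w_2},\dl_{w_2})$. This gives unseparability of $(\gm_{w_1},\dl_{w_1})$ and $(\gm_{w_2},\dl_{w_2})$ in $\rel^{w_1w_2}$, after which Lemma~\ref{lem:delta-alignment} applies pairwise, exactly as you say.
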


\section{The Congruence Lemma}\label{sec:congruence}

This section contains several technical results that will be used when 
proving of the soundness of the algorithm. The main result is the Congruence
Lemma~\ref{lem:affine-link}. Lemmas~\ref{lem:max-nontrivial} 
and~\ref{lem:dirprod-propagation} are auxiliary and only used in the 
proof of the Congruence Lemma~\ref{lem:affine-link}. We start with 
introducing two closure properties of algebras and their subdirect products, 
this time under certain polynomials.

We say that a set $A$ is \emph{as-closed}\index{as-closed set} in algebra 
$\zB$, $A\sse\zB$, if $A\cap\umax(\zB)\ne\eps$ and, 
for every $a,b\in\zB$ such that $a\sqq_{as}b$ in 
$\zB$ and $a\in A\cap\umax(\zB)$, element $b$ also belongs to $A$. 

Let $\rel$ be a subdirect product of $\vc\zA n$ and $\relo$ a subalgebra of 
$\rel$. We say that $\relo$ is \emph{polynomially closed} in $\rel$ if 
for any polynomial $f$ of $\rel$ the following condition
holds: for any $\ba,\bb\in\umax(\relo)$ such that $f(\ba)=\ba$ and 
for any $\bc\in\Sg{\ba,f(\bb)}$ such that 
$\ba\sqq_{as}\bc$ in $\Sg{\ba,f(\bb)}$, the tuple $\bc$ belongs 
to $\relo$. 

\begin{remark}
Polynomially closed subalgebras of Mal'tsev algebras are congruence blocks.
In the general case the structure of polynomially closed subalgebras is more 
intricate. The intuition 
(although not entirely correct) is that if for some block $B$ of a congruence 
$\beta$ and a congruence $\al$ with $\al\prec\beta$ the set $B\fac\al$ 
contains several as-components, a polynomially closed subalgebra contains 
some of them and has empty intersection with the rest. However, since this
is true only for factor sets, and we do not even consider non-as-maximal
elements, the actual structure is more `fractal'.
\end{remark}

The following lemma follows from the definitions, Lemma~\ref{lem:thin}, 
and the fact that congruences are invariant under polynomials.

\begin{lemma}\label{lem:poly-closed}
(1) Let $\rel$ be a subdirect product of $\vc\zA n$ and $\relo_1,\relo_2$ relations 
polynomially closed in $\rel$, and $\relo_1\cap\relo_2\cap\umax(\relo)\ne\eps$. 
Then $\relo_1\cap\relo_2$ is polynomially closed in $\rel$. 

In particular, let $\beta_i\in\Con(\zA_i)$ and $B_i$ a u-maximal $\beta_i$-block. 
Then $\relo_1\cap\ov B$ is polynomially closed in $\rel$.

If $\relo_1,\relo_2$ are as-closed in $\rel$, then $\relo_1\cap\relo_2$ 
is as-closed in $\rel$.\\[2mm]
(2) Let $\relo_i$ be polynomially closed in $\rel_i$, $i\in[k]$, and let
$\rel,\relo$ be pp-defined through $\vc\rel k$ and $\vc\relo k$, respectively, 
by the same pp-formula $\Phi$; that is, $\rel=\Phi(\vc\rel k)$ and 
$\relo=\Phi(\vc\relo k)$. Let also for every atom $\rel_i(\vc x\ell)$ and any 
$\ba\in\umax(\rel_i)$ there is $\bb\in\rel$ with $\pr_{\{\vc x\ell\}}\bb=\ba$,
and also $\umax(\relo)\cap\umax(\rel)\ne\eps$. 
Then $\relo$ is polynomially closed in $\rel$.

If $\relo_i$ is as-closed in $\rel_i$ then $\relo$ is as-closed in $\rel$.\\[2mm]
(3) Let $\rel$ be a subdirect product of $\vc\zA n$, $\beta_i\in\Con(\zA_i)$, 
$i\in[n]$, and let $\relo$ be polynomially closed in $\rel$. Then 
$\relo\fac{\ov\beta}$ is polynomially closed in $\rel$.

If $\relo$ is as-closed in $\rel$ then $\relo\fac{\ov\beta}$ is as-closed
in $\rel\fac{\ov\beta}$.
\end{lemma}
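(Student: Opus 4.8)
The plan is to verify each of the three items by unwinding the definitions of \emph{polynomially closed} and \emph{as-closed} subalgebras, using Lemma~\ref{lem:thin} and Lemma~\ref{lem:thin-properties} to track thin edges and as-paths through the various constructions, together with the fact that congruences (and hence blocks of congruences) are invariant under all polynomials. None of the three items requires a genuinely new idea; the content is entirely bookkeeping about which tuples an as-path is forced to hit once we know the endpoints lie in a polynomially closed set. I would handle them in the stated order, since (1) is the pure intersection statement, (2) is the pp-definability statement, and (3) (the quotient statement) reduces in spirit to a special case of the pp-definability argument.

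For item~(1): to show $\relo_1\cap\relo_2$ is polynomially closed in $\rel$, take $\ba,\bb\in\umax(\relo_1\cap\relo_2)$ with $f(\ba)=\ba$ and $\bc\in\Sg{\ba,f(\bb)}$ with $\ba\sqq_{as}\bc$ in $\Sg{\ba,f(\bb)}$. The key point is that $\ba,\bb$ are u-maximal in each of $\relo_1,\relo_2$ as well (this is where the hypothesis $\relo_1\cap\relo_2\cap\umax(\relo)\ne\eps$, together with Lemma~\ref{lem:u-max-congruence} applied inside the relation viewed as an algebra, is used to locate a common u-maximal element and then transfer u-maximality); then $\bc\in\relo_1$ by polynomial closure of $\relo_1$ and $\bc\in\relo_2$ by polynomial closure of $\relo_2$, so $\bc\in\relo_1\cap\relo_2$. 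The statement for $\relo_1\cap\ov B$ follows because, for a u-maximal $\beta_i$-block $B_i$, the set $\ov B=\prod B_i$ is itself polynomially closed in $\rel$: any polynomial $f$ of $\rel$ preserves each $\beta_i$, and an as-path from a tuple in $\ov B$ that starts at a u-maximal tuple stays inside the u-maximal block, by Lemma~\ref{lem:u-max-congruence} and Lemma~\ref{lem:thin}. The as-closed clause is the same argument with ``$\ba\sqq_{as}\bc$'' in place of the polynomial-closure condition, and no polynomial enters.

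For item~(2): let $\Phi$ be the pp-formula, so $\rel=\Phi(\vc\rel k)$ and $\relo=\Phi(\vc\relo k)$. Given $\ba,\bb\in\umax(\relo)$ with $f(\ba)=\ba$ and $\bc\in\Sg{\ba,f(\bb)}$, $\ba\sqq_{as}\bc$, I want $\bc\in\relo$. Since $\relo\sse\rel$ and $\relo$-membership is governed atom by atom, it suffices to check that for each atom $\rel_i$ the corresponding projection of $\bc$ lies in $\relo_i$. Here the projection of an as-path is again an as-path (Lemma~\ref{lem:thin-properties}(2), applied to the relation $\Sg{\ba,f(\bb)}$ mapped onto the coordinates of the atom), and the projections of $\ba,\bb$ are u-maximal in $\rel_i$ — this is exactly where the hypothesis ``for every atom $\rel_i$ and every $\ba\in\umax(\rel_i)$ there is $\bb\in\rel$ with $\pr_{\{\vc x\ell\}}\bb=\ba$'' is needed, via the Maximality Lemma~\ref{lem:to-max}(5): it guarantees that u-maximality of $\ba,\bb$ in $\relo$ (resp.\ $\rel$) pushes down to u-maximality of their relevant projections, using also $\umax(\relo)\cap\umax(\rel)\ne\eps$. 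Applying polynomial closure of $\relo_i$ at each atom gives the projection of $\bc$ in $\relo_i$, and conjoining over all atoms gives $\bc\in\relo$. The as-closed clause is, once more, the same argument dropping the polynomial. Item~(3) then follows by taking $\Phi$ to be the (existentially quantified) pp-definition of the factor map, or alternatively directly: a polynomial of $\rel$ induces a polynomial of $\rel\fac{\ov\beta}$, as-paths and thin edges descend to $\rel\fac{\ov\beta}$ by Lemma~\ref{lem:thin-properties}(1), u-maximality descends by Lemma~\ref{lem:u-max-congruence}, and $\relo\fac{\ov\beta}$-membership of $\bc^{\ov\beta}$ is read off from a preimage in $\relo$ using polynomial closure of $\relo$ upstairs, after lifting the as-path witnessing $\ba^{\ov\beta}\sqq_{as}\bc^{\ov\beta}$ to $\rel$ via the Maximality Lemma~\ref{lem:to-max}(4).

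The main obstacle I anticipate is purely the u-maximality bookkeeping: in each item one must be careful that the endpoints of the as-path remain u-maximal (or as-maximal) in whatever subalgebra, projection, or quotient is currently in play, since the definition of polynomially/as-closed is stated only for u-maximal (resp.\ as-maximal) starting tuples. Getting this right uniformly is what Lemma~\ref{lem:u-max-congruence}, the Maximality Lemma~\ref{lem:to-max}, and the hypotheses about nonempty intersections with $\umax$ are there to supply, so the lemma really is, as the paper says, a consequence of the definitions together with Lemma~\ref{lem:thin}.
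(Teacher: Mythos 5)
Your overall plan matches the paper exactly: the paper offers no proof of this lemma, only the remark that it ``follows from the definitions, Lemma~\ref{lem:thin}, and the fact that congruences are invariant under polynomials,'' and your unwinding of the definitions, with Lemma~\ref{lem:thin} and Lemma~\ref{lem:thin-properties} tracking thin edges and the Maximality Lemma~\ref{lem:to-max} handling projections, is precisely the recipe that remark prescribes. Your treatment of the as-closed clauses is fine as stated, since the definition of \emph{as-closed} refers to $\umax$ of the ambient algebra $\rel$ rather than of the subset, so no transfer of u-maximality is required there.

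There is, however, a concrete gap in the polynomially-closed part of item~(1). You flag the key step correctly --- one must show that $\ba,\bb\in\umax(\relo_1\cap\relo_2)$ are also in $\umax(\relo_1)$ and $\umax(\relo_2)$ before the polynomial closure of $\relo_1,\relo_2$ can be invoked --- but Lemma~\ref{lem:u-max-congruence} does not deliver this in the way you describe. That lemma goes from subalgebra to superalgebra: if $\zB\sse\zA$ contains an element of $\umax(\zA)$, then $\umax(\zB)\sse\umax(\zA)$. To apply it with $\zB=\relo_1\cap\relo_2$ and $\zA=\relo_1$, you need $\relo_1\cap\relo_2$ to contain an element of $\umax(\relo_1)$; what the hypothesis $\relo_1\cap\relo_2\cap\umax(\rel)\ne\eps$ actually gives is an element lying in $\umax(\rel)$, and passing from $\umax(\rel)\cap\relo_1$ to $\umax(\relo_1)$ is the \emph{converse} implication, which the lemma does not assert: an element u-maximal in $\rel$ and lying in a subalgebra $\relo_1$ need not be u-maximal in $\relo_1$, since the asm-path witnessing its return to the top component may leave $\relo_1$. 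So the phrase ``locate a common u-maximal element and then transfer u-maximality'' hides exactly the implication that is missing. To close the gap you need either a further argument that $\umax(\rel)\cap\relo_i\sse\umax(\relo_i)$ under the hypotheses at hand (perhaps using that $\relo_i$ is itself polynomially closed), or to observe that wherever the lemma is used the subalgebras in play satisfy $\umax(\relo_i)=\umax(\rel)\cap\relo_i$ for other reasons. A similar, though milder, caveat applies to item~(2): since $\Phi$ is a pp-formula and may contain existential quantifiers, membership in $\relo$ is not literally ``governed atom by atom''; one must first lift $\bc$ (and the polynomial $f$) to the extended relation over all variables, via the stated extendability hypothesis and the Maximality Lemma, before projecting to the atoms.
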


The first two lemmas in the rest of this section study the structure of binary 
relations that have
in their domains a pair of prime intervals of type~\two\ that cannot be 
separated. They show that if we restrict ourselves to blocks of the link 
congruences then this structure is very uniform. The third lemma, 
Lemma~\ref{lem:affine-link} (the Congruence Lemma), is an important
technical result. To explain what it amounts to saying consider this: let 
$\relo\sse\zA'\tm\zB'$ be a subdirect product and the link congruence of 
$\zA'$ is the equality relation. Then, clearly, $\relo$ is the graph of 
a mapping $\sg:\zB'\to\zA'$, and the kernel of this mapping is the link
congruence $\eta$ of $\zB'$ with respect to $\relo$. Suppose now that $\relo$
is a subalgebra of $\rel$, a subdirect product of $\zA\tm\zB$ such that 
$\zA'$ is a subalgebra of $\zA$ and $\zB'$ is a subalgebra of $\zB$. Then
the restriction of the link congruence of $\zA$ with respect to $\rel$ to $\zA'$
does not have to be the equality relation, and similarly the restriction of
the link congruence of $\zB$ to $\zB'$ does not have to be $\eta$. Most 
importantly, the restriction of $\Cg\eta$, the congruence of $\zB$ generated by 
$\eta$, to $\zB'$ does not have to be $\eta$. The Congruence 
Lemma~\ref{lem:affine-link} shows, however, that this is exactly what 
happens when $\relo$ and $\zA',\zB'$ satisfy some additional conditions, 
such as being chained and polynomially closed.

In the next two lemmas
let $\rel$ be a subdirect product of $\zA_1\tm\zA_2$, $\beta_1,\beta_2$
congruences of $\zA_1,\zA_2$ and $B_1,B_2$ $\beta_1$- and 
$\beta_2$-blocks, respectively; $\rel$ is chained with respect to 
$(\beta_1,\beta_2),(B_1,B_2)$ and $\rel^*=\rel\cap(B_1\tm B_2)$,
$B^*_1=\pr_1\rel^*,B^*_2=\pr_2\rel^*$. Let
$\al,\beta\in\Con(\zA_1)$, $\gm,\dl\in\Con(\zA_2)$ be such that
$\al\prec\beta\le\beta_1$, $\gm\prec\dl\le\beta_2$, 
$\typ(\al,\beta)=\typ(\gm,\dl)=\two$, and $(\al,\beta),(\gm,\dl)$ cannot 
be separated. Let also $\zeta_1=\zeta(\al,\beta)\red{B^*_1}$,
$\zeta_2=\zeta(\gm,\dl)\red{B^*_2}$ and $\lnk^*_1,\lnk^*_2$ the link 
congruences of $B^*_1,B^*_2$, respectively, with respect to $\rel^*$. 
Let $F,G$ be $\zeta_1$-, $\zeta_2$-blocks such that $\rel^*\cap(F\tm G)\ne\eps$ 
and $F,G$ contain nontrivial $\beta$- and $\dl$-blocks $A,B$, respectively 
(that is, $|A\fac\al|,|B\fac\gm|>1$). 

By Lemma~\ref{lem:equal-centralizer} all the $\beta$-blocks $A'\in F\fac\beta$, 
$A\sqq_{asm}A'$ in $F$ (respectively, all $\dl$-blocks $B'\in G\fac\dl$, 
$B\sqq_{asm}B'$ in $G$) are also nontrivial. Note that by 
Lemma~\ref{lem:delta-alignment} $\lnk^*_1\le\zeta_1$ and
$\lnk^*_2\le\zeta_2$. Let also $D\sse F,E\sse G$ be blocks of 
$\lnk^*_1,\lnk^*_2$ such that $\rel^*\cap(D\tm E)\ne\eps$. 

The first lemma claims that for the link congruence $\lnk^*_2$ there are
only two options: either it is a subset of $\gm$ on the $\zeta_2$-block
$G$, or it contains $\dl\fac\gm$ on $\umax(G)$.  

\begin{lemma}\label{lem:max-nontrivial}
Suppose that $B\cap E\ne\eps$ and that $\lnk^*_2$ is nontrivial on the 
$\dl$-block $B$, that is, there are distinct $a,b\in (B\cap B^*_2)\fac\gm$ with 
$(a,b)\in\lnk^*_2$, or equivalently $\lnk^*_2\meet\dl$ is not a subset of 
$\gm$ on $B\cap B^*_2$. Then\\[1mm]
(1) if $G\cap\umax(B^*_2)\ne\eps$ then 
$\dl\red{\umax(G)}\le\lnk^*_2\join\gm\red{B^*_2}$; 
and\\[1mm] 
(2) any $B'\in G\fac\dl$ with 
$B\sqq_{asm}B'$ in $G\fac\dl$ is nontrivial, that is, $|B'\fac\gm|>1$. In particular, 
$\umax(D),\umax(E)$ and $\umax(F),\umax(G)$ 
do not intersect any trivial $\beta$- and $\dl$-blocks, respectively.
\end{lemma}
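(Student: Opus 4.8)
The plan is to deduce the first assertion of (2) from the injective block maps of Lemma~\ref{lem:central-mapping}, to prove (1) by spreading the single nontrivial link pair inside $B$ over all of $\umax(G)$ by means of the chaining hypothesis, and to obtain the ``in particular'' clause from (2) together with its mirror image for $F$.

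Two preliminary remarks underpin everything. Since $\typ(\al,\beta)=\typ(\gm,\dl)=\two$, both intervals are Abelian, so $\dl\le\zeta(\gm,\dl)$ and $\beta\le\zeta(\al,\beta)$ (the quasi-centralizer dominates the commutator-theoretic centralizer); consequently every $\dl$-block contained in $G$ and every $\beta$-block contained in $F$ lies in one $\zeta(\gm,\dl)$-, resp. $\zeta(\al,\beta)$-block. Secondly, any $\ov B$-preserving polynomial $g$ of $\rel$ maps $\rel^*$ into itself and acts on the second coordinate as a unary polynomial of $\zA_2$; hence its restriction to $B^*_2$ preserves $\gm$ and the link tolerance $\tol_2(\rel^*)$, so it preserves $\lnk^*_2\join\gm$ and descends to a self-map of $B^*_2\fac\gm$ preserving $(\lnk^*_2\join\gm)\fac\gm$. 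Granting these, for (2) fix $B'$ with $B\sqq_{asm}B'$ in $G\fac\dl$ and an asm-path $B=B_0,\dots,B_m=B'$ there; each step $B_jB_{j+1}$ is a thin edge and $B_j,B_{j+1}$ share a $\zeta(\gm,\dl)$-block, so Lemma~\ref{lem:central-mapping} gives an injective map $B_j\fac\gm\to B_{j+1}\fac\gm$; composing, $|B\fac\gm|\le|B'\fac\gm|$, and since $|B\fac\gm|>1$ the block $B'$ is nontrivial.

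For (1), choose (by hypothesis) $\hat a,\hat b\in B\cap B^*_2$ with $(\hat a,\hat b)\in\lnk^*_2$ and $\hat a\not\equiv_\gm\hat b$, so $(\hat a^\gm,\hat b^\gm)$ is a legitimate source pair in $\dl\fac\gm$ for the chaining condition (Q1) applied to the prime interval $\gm\prec\dl\le\beta_2$, with $\cU_{\ov B}$ the set of $\ov B$-preserving polynomials of $\rel$. Take $w,w'\in\umax(G)$ with $(w,w')\in\dl$. Because $G$ meets $\umax(B^*_2)$, Lemma~\ref{lem:u-max-congruence} gives $\umax(G)\sse\umax(B^*_2)$, and Lemma~\ref{lem:thin} then lets us push down modulo $\gm$: $w^\gm,w'^\gm$ are u-maximal in the intersection of $B^*_2\fac\gm$ with their common $\dl\fac\gm$-block. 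By condition (G2) there is a chain $w^\gm=x_1,\dots,x_k=w'^\gm$ with $\{x_j,x_{j+1}\}\in T_{\zA_2}(\hat a^\gm,\hat b^\gm;\gm,\dl,\cU_{\ov B})$ for each $j$; the corresponding $\ov B$-preserving polynomial sends $\{\hat a^\gm,\hat b^\gm\}$, a pair of $(\lnk^*_2\join\gm)\fac\gm$, to $\{x_j,x_{j+1}\}$, so $(x_j,x_{j+1})\in(\lnk^*_2\join\gm)\fac\gm$ throughout, whence $(w,w')\in\lnk^*_2\join\gm$. This is precisely $\dl\red{\umax(G)}\le\lnk^*_2\join\gm\red{B^*_2}$.

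Finally, every u-maximal element of $G$ (resp. $F$) lies over an element of $\umax(G\fac\dl)$ (resp. $\umax(F\fac\beta)$), which is a single asm-component by Proposition~\ref{pro:as-connectivity} and is asm-reachable from $B$ (resp. from $A$); so the first claim of (2), and its symmetric version using that $A$ is a nontrivial $\beta$-block and $\typ(\al,\beta)=\two$, show that no $\dl$-block meeting $\umax(G)$ and no $\beta$-block meeting $\umax(F)$ is trivial, and $\umax(E)\sse\umax(G)$, $\umax(D)\sse\umax(F)$ by Lemma~\ref{lem:u-max-congruence} once one notes that $D,E$ meet $\umax(B^*_1),\umax(B^*_2)$. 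The delicate step is the chaining argument in (1): one must check that the fixed source pair qualifies for (G2), that the $\ov B$-preserving polynomials occurring in the sets $T_{\zA_2}$ genuinely preserve $\lnk^*_2\join\gm$ (this is exactly where polynomials of $\rel$ must be reread as polynomials of $\rel^*$), and that u-maximality survives factoring by $\gm$; the hypotheses $B\cap E\ne\eps$ and $\rel^*\cap(D\tm E)\ne\eps$ enter only in the bookkeeping of the last paragraph.
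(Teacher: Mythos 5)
Your proof is correct and takes essentially the same route as the paper's, only unfolded: the paper dispatches (2) by citing Lemma~\ref{lem:equal-centralizer} and (1) by citing Lemma~\ref{lem:link-separability}, whereas you inline the engines of both (the injective block maps of Lemma~\ref{lem:central-mapping} along a thin-edge path for (2), and the chaining argument via (G2) together with the observation that $\ov B$-preserving polynomials preserve $\lnk^*_2\join\gm$ for (1), which is exactly the content of Lemma~\ref{lem:link-separability}).
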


\begin{proof}
Since $\lnk^*_1\le\zeta_1$ and $\lnk^*_2\le\zeta_2$, (2) follows by 
Lemma~\ref{lem:equal-centralizer}. Also, as $\lnk^*_2$ is nontrivial on a 
$\dl$-block, we obtain (1) by Lemma~\ref{lem:link-separability}.
\end{proof}

The second lemma amounts to saying that if $\lnk^*_2$ is nontrivial
on $G$ then it not only contains $\dl$ (modulo $\gm$), but also that
if an element of $F$ is related by $\rel$ to some element of a $\dl$-block
$B$, it is also related to the entire $B$ (again, modulo $\gm$).

\begin{lemma}\label{lem:dirprod-propagation}
Suppose $\dl\red{\umax(G)}\le\lnk^*_2\join\gm\red{B^*_2}$ and sets 
$A,B$ satisfy one of the following two conditions:\\[1mm]
(1) let $A\sse F,B\sse G$ be $\beta$- and $\dl$-blocks, respectively, 
such that $(A,B)\in\umax((\rel\cap(F\tm G))\fac{\beta\tm\dl})$, or \\[1mm]
(2) let $A,B$ be $\beta$-, and $\dl$-blocks, respectively, and such that 
$A'=A\cap D\ne\eps$ and $(A',B)\in\umax((\rel\cap(D\tm G))\fac{\beta\tm\dl})$. 
\\[1mm]
Then either $\rel\cap(A\tm B)=\eps$, or for any $c\in A$ with 
$B\cap\rel[c]\ne\eps$ we have $\{c\}\tm B\fac\gm\sse\rel\fac\gm$.
\end{lemma}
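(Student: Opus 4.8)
The plan is to reduce both cases to a single situation and then exploit the hypothesis $\dl\red{\umax(G)}\le\lnk^*_2\join\gm\red{B^*_2}$ together with the quasi-centralizer machinery. First I would fix $c\in A$ with $B\cap\rel[c]\ne\eps$, pick $b\in B$ with $(c,b)\in\rel$, and aim to show that for every $b'\in B$ there is $c'\eqc\al c$ with $(c',b')\in\rel$; modulo $\gm$ this is exactly $\{c\}\tm B\fac\gm\sse\rel\fac\gm$. I would begin with case (1), where $(A,B)$ is u-maximal in $(\rel\cap(F\tm G))\fac{\beta\tm\dl}$. By the Maximality Lemma~\ref{lem:to-max}(5) applied to this factor relation I can assume $c,b$ are chosen so that $b$ is u-maximal in $G$ (so $b^\gm\in\umax(B^*_2)\fac\gm$, using Lemma~\ref{lem:max-nontrivial}(2) to know such $b$ lies in a nontrivial $\dl$-block). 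Then by the hypothesis, for any $b'\in\umax(B)$ the pair $(b^\gm,b'^\gm)\in\lnk^*_2\join\gm$, so there is a $\lnk^*_2$-path in $B^*_2\fac\gm$ from $b^\gm$ to $b'^\gm$; each $\lnk^*_2$-step means some element of $B^*_1$ is $\rel^*$-related to both endpoints, and since $(c,b)\in\rel^*$ I can try to propagate $c$ along this path. The subtle point is that the witness on the $\zA_1$-side may move within its $\beta$-block, not stay at $c$; this is where I would use that $(\al,\beta),(\gm,\dl)$ cannot be separated and that $\typ(\al,\beta)=\typ(\gm,\dl)=\two$, so that $F\fac\al$ and $G\fac\gm$ carry module structure and the relation $\rel^*$ restricted to the relevant blocks is, after factoring, linked between $\lnk^*_1$- and $\lnk^*_2$-classes; the Rectangularity Corollary~\ref{cor:linkage-rectangularity} (or Proposition~\ref{pro:max-gen}) then forces the full product $D\fac\al\tm E\fac\gm\sse\rel^*\fac{\al\tm\gm}$ on the as-components containing $c,b$, which gives exactly $\{c^\al\}\tm B\fac\gm$ as long as $B\fac\gm$ sits inside $E\fac\gm$.

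To make that last step precise I would argue as follows. Set $\rela=\rel^*\cap(D\tm E)$ where $D,E$ are the $\lnk^*_1$-, $\lnk^*_2$-blocks through $c,b$. Passing to $\rela\fac{\al\tm\gm}$, the link congruences on both sides become trivial (that is what it means to take $\lnk$-blocks), so $\rela\fac{\al\tm\gm}$ is a linked subdirect product of its two projections. By Lemma~\ref{lem:max-nontrivial} every u-maximal $\dl$-block inside $E$ is nontrivial, and by Lemma~\ref{lem:central-mapping}/Lemma~\ref{lem:equal-centralizer} the $\al$-classes and $\gm$-classes inside these blocks biject compatibly; combined with linkedness and Proposition~\ref{pro:max-gen} applied to as-components, I get $C_1\tm C_2\sse\rela\fac{\al\tm\gm}$ for the as-components $C_1\ni c^\al$, $C_2\ni b^\gm$. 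Now if $\rel\cap(A\tm B)\ne\eps$ and $c$ is a particular element of $A$ with $B\cap\rel[c]\ne\eps$, u-maximality of $(A,B)$ lets me move $(c,b)$ by an asm-path inside $(\rel\cap(F\tm G))\fac{\beta\tm\dl}$ into the as-component where rectangularity applies, and then pull back using the Maximality Lemma~\ref{lem:to-max}(4); since $B\fac\gm$ is a single $\dl\fac\gm$-class it is covered by one as-component after we know $\dl\red{\umax(G)}\le\lnk^*_2\join\gm$, so we conclude $\{c\}\tm B\fac\gm\sse\rel\fac\gm$. Case (2) is the same argument run with $D$ in place of $F$: the hypothesis $(A',B)\in\umax((\rel\cap(D\tm G))\fac{\beta\tm\dl})$ and $A'=A\cap D$ already places us on the $\lnk^*_1$-block side, so the reduction to $\rela$ is immediate and no extra asm-path on the first coordinate is needed.

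The main obstacle I anticipate is controlling the behaviour on \emph{non-as-maximal} and \emph{non-u-maximal} elements: all the rectangularity tools (Corollary~\ref{cor:linkage-rectangularity}, Propositions~\ref{pro:max-gen} and~\ref{pro:umax-rectangular}) only give full products over as-components of $\lnk$-blocks, whereas the conclusion of the lemma is stated for an arbitrary element $c\in A$ with $B\cap\rel[c]\ne\eps$, not just u-maximal $c$. Bridging this gap is exactly the role of the u-maximality hypotheses on $(A,B)$ (resp. $(A',B)$): they guarantee that $A$ (resp. $B$) consists of elements asm-above the as-maximal part, so one can transport $(c,b)$ up to the as-maximal region, apply rectangularity there, and then descend again using the Maximality Lemma~\ref{lem:to-max}, items (4)--(5). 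I would budget most of the write-up for checking that the congruence $\gm$ is respected throughout these transports — that is, that every polynomial/path move either fixes the relevant $\gm$-class or permutes within it — which follows from $\typ(\gm,\dl)=\two$ and Lemma~\ref{lem:op-s-on-affine}(2), but needs to be spelled out carefully so the final inclusion is literally $\{c\}\tm B\fac\gm\sse\rel\fac\gm$ and not merely an inclusion up to a larger congruence.
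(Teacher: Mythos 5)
Your plan matches the paper at the base case --- the Rectangularity Corollary over as-components of the $\lnk^*$-blocks --- but then diverges in the main mechanism. You propose to pass to $\rela\fac{\al\tm\gm}$ and finish with rectangularity in that quotient, whereas the paper never factors by $\al$ on the first coordinate: after the base case it propagates inside $\rel\cap(D\tm E)$ thin edge by thin edge, using Lemma~\ref{lem:as-rectangularity} on the first coordinate and the bijective polynomials of Lemma~\ref{lem:central-mapping} (built from $x\cdot y$, $t_{cd}$, $h_{cd}$) to carry a known inclusion $\{c\}\tm B\sse\rel$ onto $\{d\}\tm B'\sse\rel$ as actual sets rather than as images in a quotient.

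This difference exposes two concrete gaps. First, your opening restatement of the goal is not the lemma's conclusion. You write ``for every $b'\in B$ there is $c'\eqc\al c$ with $(c',b')\in\rel$; modulo $\gm$ this is exactly $\{c\}\tm B\fac\gm\sse\rel\fac\gm$.'' It is not: by the paper's conventions $\rel\fac\gm$ stands for $\rel\fac{\zz_1\tm\gm}$, so the inclusion pins down the literal $c$ on the first coordinate, while your reformulation only requires some $c'$ in the $\al$-class of $c$, which is strictly weaker. Every step of your plan that runs through $\rela\fac{\al\tm\gm}$ or through Propositions~\ref{pro:max-gen} and~\ref{pro:umax-rectangular} applied after factoring by $\al$ can only deliver this weaker statement. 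Your closing paragraph senses that the final inclusion must hold ``literally,'' but blames $\gm$; $\gm$ is already factored out in the claim, and the real danger is losing $\al$-precision on the first coordinate.

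Second, the ``transport $(c,b)$ up to the as-maximal region, apply rectangularity there, and then descend again using the Maximality Lemma~\ref{lem:to-max}(4)--(5)'' step does not go through. Items (4) and (5) of that lemma lift s-/as-/asm-paths from a projection into $\rel$ and produce (as-/u-)maximal elements lying over a given one; they never bring you back to a prescribed lower element $(c,b)$ from its as-maximal companion --- at best you land on some element of the same $\beta\tm\dl$-block, not on $(c,b)$ itself. This is exactly the problem the paper's Cases~1--3 are designed to solve: given $\{c\}\tm B\sse\rel$ and a thin edge $(c,b)(d,b')$ in $\rel\cap(D\tm E)$, the polynomial from Lemma~\ref{lem:central-mapping} maps $c$ to $d$ and is a bijection from $B$ onto $B'$, so $\{d\}\tm B'\sse\rel$ is obtained as a literal image. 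If you repair your route you will in effect have to reinstate this bijective-polynomial propagation, at which point the detour through $\rela\fac{\al\tm\gm}$ no longer contributes anything.
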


Note that in condition (2) $D$ does not have to contain u-maximal 
elements of $F$, and similarly $E$ does not have to contain u-maximal 
elements of $G$. Thus, (1) is not necessarily a more general condition.

\begin{proof}
We prove the lemma for condition (2), that is, when 
$(A',B)\in\umax((\rel\cap(D\tm G))\fac{\beta\tm\dl})$.
It will be clear that case (1) follows from the same argument. 

We assume $\gm=\zz_2$. Since $B$ is a module, it is as-connected. 
Therefore if some element of $B$ belongs to an as-component of $E$, 
the whole set $B$ is contained in that as-component. By the Rectangularity 
Corollary~\ref{cor:linkage-rectangularity}, this means that if 
$\rel\cap(\{c\}\tm B)\ne\eps$ for $c\in\amax(D)$, then 
$\{c\}\tm B\sse\rel$, and the result for $B$ follows.
 
Now we show that if $\{c\}\tm B\sse\rel$ for some $c\in D$ and $B\in E\fac\dl$ 
then $\{d\}\tm b'^\dl\sse\rel$ for any $d\in D$ and $b'\in E$ such that 
$(c,b)(d,b')$ is a thin edge in $\rel\cap(D\tm E)$ for some $b\in B$. 
As is easily seen this implies the result. There are 3 possible cases. 

\medskip

{\sc Case 1.} $b'^\dl=B$, that is $cd$ is a thin edge and $(d,b')\in\rel$,
$b'\in B$. Then $\{d\}\tm B\sse\rel$.

This case follows from Lemma~\ref{lem:as-rectangularity}.

\medskip

{\sc Case 2.} $c=d$, that is, $BB'$ is a thin edge in $\zA_2\fac\dl$ where 
$B'=b'^\dl$ and $(c,b')\in\rel$. 

\medskip

Let $f(x)$ be the unary polynomial of $\rel$ constructed as in 
Lemma~\ref{lem:central-mapping}, that is, $f(x)=x\cdot \cl c{b'}$, 
$f(x)=t_{bb'}\left(x,\cl c{b'}\right)$, or 
$f(x)=h_{bb'}\left(x,\cl cb,\cl c{b'}\right)$, 
depending on the type of $bb'$. Then by Lemma~\ref{lem:central-mapping} 
$f:B\to B'$ is a bijection, and therefore 
maps $\{c\}\tm B\sse\rel$ onto $\{c\}\tm B'$, implying $\{c\}\tm B'\sse\rel$.

\medskip

{\sc Case 3.} 
$cd$ and $bb'$ are thin edges of the same type.

\medskip

Let $B'=b'^\dl$. Similar to Case~2 depending on the type of $cd$ we 
consider polynomial
$f(x)=x\cdot \cl d{b'}$, $f(x)=t_{cd}\left(x,\cl d{b'}\right)$, or 
$f(x)=h_{cd}\left(x,\cl cb,\cl d{b'}\right)$ for some $b\in B$. We have $f(c)=d$
and $f:B\to B'$ is a bijection by Lemma~\ref{lem:central-mapping}, thus 
proving that $\{d\}\tm B'\sse\rel$.

\medskip

If condition (1) holds the prove is essentially the same, except we need to 
use the same starting point as above, and consider pairs from 
$\umax(\rel\cap(F\tm G))$.
\end{proof}

We are now in a position to state and prove the main result of the section.
Let again $\rel$ be a subdirect product of $\zA_1\tm\zA_2$, 
$\beta_1,\beta_2$ congruences of $\zA_1,\zA_2$ and $B_1,B_2$ 
$\beta_1$- and $\beta_2$-blocks, respectively. Also, let $\rel$ be chained 
with respect to $(\beta_1,\beta_2),(B_1,B_2)$ and 
$\rel^*=\rel\cap(B_1\tm B_2)$,
$B^*_1=\pr_1\rel^*,B^*_2=\pr_2\rel^*$. Let $\al,\beta\in\Con(\zA_1)$ be 
such that $\al\prec\beta\le\beta_1$. This time we do not assume that 
$\typ(\al,\beta)=\two$.

\begin{lemma}[The Congruence Lemma]\label{lem:affine-link}
Suppose $\al=\zz_1$ and let $\rel'$ be a polynomially closed subalgebra of 
$\rel^*$ and such that $B'_1=\pr_1\rel'$ contains an as-component $C$ of 
$B^*_1$ and $\rel'\cap\umax(\rel^*)\ne\eps$.
Then either \\[2mm]
(1) $C\tm\umax(B''_2)\sse\rel'$, where $B''_2=\rel'[C]$, or\\[1mm] 
(2) there is $\eta\in\Con(\zA_2)$ with $\eta\prec\beta_2$ 
such that intervals $(\al,\beta_1)$ and $(\eta,\beta_2)$ cannot be separated. \\
Moreover, in case (2) $\rel'\cap(C\tm B''_2)$ is the graph of a mapping 
$\vf: B''_2\to C$ such that the kernel of $\vf$ is the restriction 
of $\eta$ on $B''_2$.
\end{lemma}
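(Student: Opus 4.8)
The plan is to analyze the structure of $\rel'\cap(C\tm B''_2)$ via link congruences and the rectangularity machinery already established. First I would set $\al=\zz_1$ as given and consider the link congruences $\lnk_1,\lnk_2$ of $B'_1,B''_2$ with respect to $\rel'$. Since $C$ is an as-component of $B^*_1$ contained in $B'_1$ and $\rel'$ meets $\umax(\rel^*)$, by the Maximality Lemma~\ref{lem:to-max}(5) and polynomial closedness we may pick a u-maximal tuple in $\rel'$ whose first coordinate lies in $C$. The dichotomy in the statement corresponds to whether $\lnk_1$ restricted to $C$ is trivial or not. If $\lnk_1\red C$ is nontrivial, I would use Lemma~\ref{lem:link-separability} and Lemma~\ref{lem:relative-symmetry} to produce, for a prime interval $\al\prec\beta\le\beta_1$ inside the nontrivial part of $\lnk_1$, a $\ov B$-preserving collapsing polynomial; then Proposition~\ref{pro:umax-rectangular} (applied to $\rel'$ with $B_1=C$ and $B''_2=\rel'[C]$, passing to $\umax$) gives $C\tm\umax(B''_2)\sse\rel'$, which is conclusion~(1). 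Actually I would rather argue directly: if $\lnk_1\red C = \zz$ is impossible to avoid, fall through to case~(2).

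For case~(2), suppose $\rel'\cap(C\tm B''_2)$ is \emph{not} all of $C\tm\umax(B''_2)$. Then by the Rectangularity Corollary~\ref{cor:linkage-rectangularity} (applied to $\rel'$ as a subdirect product of $B'_1$ and $B''_2$), the link congruence $\lnk_1$ of $B'_1$ cannot contain the whole as-component $C$ in a single block — more precisely, restricting to $C$, $\lnk_1\red C$ must be the equality relation, for otherwise Proposition~\ref{pro:max-gen} would force $C\tm (\text{a }\lnk_2\text{-as-component})\sse\rel'$ and, chasing u-maximal elements via Proposition~\ref{pro:umax-rectangular}, we would land in case~(1). So $\lnk_1\red C=\zz$, which means $\rel'\cap(C\tm B''_2)$ is the graph of a map $\vf:B''_2\to C$; its kernel is, by definition of the link congruence on the second coordinate, exactly $\lnk_2\red{B''_2}$. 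It remains to identify $\eta$: set $\eta=\Cg{\lnk_2}$, the congruence of $\zA_2$ generated by $\lnk_2$ restricted appropriately, and show $\eta\prec\beta_2$ and that $(\al,\beta_1)=(\zz_1,\beta_1)$ cannot be separated from $(\eta,\beta_2)$. Non-separability I would get by contradiction: a separating polynomial $g$ with $g(\beta_1)\not\sse\zz_1$ but $g(\beta_2)\sse\eta$ would, via the chained hypothesis and Lemma~\ref{lem:link-separability}, collapse $\lnk_2$ on u-maximal elements while keeping a nontrivial pair in $C$ alive, contradicting that $\rel'$ is the graph of $\vf$ with kernel $\lnk_2$ — the graph structure forces any polynomial nontrivial on $C$ to be nontrivial on $B''_2\fac{\lnk_2}$. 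Maximality of $\eta$ below $\beta_2$ (i.e.\ $\eta\prec\beta_2$) follows because $\beta_2\fac\eta$ acting on $B''_2$ would give a proper refinement of the kernel, contradicting that $\vf$ identifies $B''_2\fac{\lnk_2}$ with an as-component of the simple-ish quotient — here I would invoke Lemma~\ref{lem:max-nontrivial} and Lemma~\ref{lem:34-links} to control the type and pin down that there is no congruence strictly between $\eta$ and $\beta_2$ once we restrict to u-maximal elements.

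The main obstacle, I expect, is the passage between "the link congruence on $C$ is trivial" and "$\rel'\cap(C\tm B''_2)$ is a graph whose kernel is *the restriction of a congruence $\eta$ of the big algebra $\zA_2$*" — the subtlety flagged in the remark before the lemma is precisely that $\Cg{\eta}\red{B''_2}$ need not equal $\lnk_2\red{B''_2}$ in general, and it is only the combination of chainedness (conditions (Q1),(Q2)) and polynomial closedness of $\rel'$ that rescues this. So the technical heart is showing that chainedness propagates the link relation $\lnk_2$ to a genuine congruence $\eta$ on all of $\zA_2$ whose restriction back to $B''_2$ does not grow; this is where Lemmas~\ref{lem:link-separability}, \ref{lem:good-polys}, and the $\cU$-chained formalism do the real work, and where I would spend most of the effort, handling the type~\two\ and non-type-\two\ subcases of $(\zz_1,\beta_1)$ separately (using Corollary~\ref{cor:type-equal} to transfer the type to $(\eta,\beta_2)$).
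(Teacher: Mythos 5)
Your overall toolkit is the right one --- link congruences, the rectangularity propositions, and the chaining hypothesis --- and you have correctly identified that the heart of the matter is propagating the link congruence of $\rel'$ to a genuine congruence $\eta\in\Con(\zA_2)$ whose restriction to $B''_2$ does not grow. However, the dichotomy you organize the proof around, namely ``$\lnk_1\red C$ is trivial vs.\ nontrivial,'' has a gap. When $\lnk_1\red C$ is nontrivial but not the full relation on $C$, neither of your two branches applies: Proposition~\ref{pro:umax-rectangular} and Proposition~\ref{pro:max-gen} both require that $C$ be an as-component \emph{of a $\lnk_1$-block}, which presupposes $C$ sits inside a single $\lnk_1$-block, i.e.\ that $\lnk_1\red C$ is already full. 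Showing that ``nontrivial implies full'' is precisely where the work is: it does not follow from the general structure of as-components, and you have not argued it. The paper instead defines $\eta\le\beta_2$ as maximal with $\eta\red{\umax(B'_2)}\sse\lnk'_2$ and splits on whether $(\zz_1,\beta_1)$ can be separated from $(\eta,\th)$ for some $\eta\prec\th\le\beta_2$. In the separable case, chainedness together with the collapsing polynomial machinery (Lemmas~\ref{lem:good-polys}(4) and~\ref{lem:collapsing}) is exactly what forces $C$ into a single $\lnk^*_1$-block modulo $\eta$, after which rectangularity lands option~(1); in the non-separable case one shows $\lnk'_2\red{B''_2}\sse\eta\red{B''_2}$ by contradiction and reads off the graph structure. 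That organization avoids ever having to rule out the intermediate case by hand.

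Two further points in your case~(2). First, you present non-separability of $(\zz_1,\beta_1)$ from $(\eta,\beta_2)$ as something to be \emph{derived} from the graph structure, but in fact this is cleaner to take as the \emph{case hypothesis}; the nontrivial content is then that $\lnk'_2\red{B''_2}$ actually equals $\eta\red{B''_2}$, which requires both the type dichotomy (type~$\three$ is killed by Lemma~\ref{lem:34-links}, types~$\four,\five$ trivialize $C$ by Lemma~\ref{lem:type23}, and type~$\two$ needs the full chaining apparatus, Lemma~\ref{lem:dirprod-propagation}, and crucially the polynomial closedness of $\rel'$ to pass from $\rel^*$ down to the subalgebra) --- you flag this as the technical heart but do not resolve it. Second, your argument that $\eta\prec\beta_2$ --- ``a proper refinement of the kernel would contradict\ldots'' --- is not a proof, and the citations of Lemma~\ref{lem:max-nontrivial} and Lemma~\ref{lem:34-links} for this purpose do not obviously do the job; the maximality in the paper's definition of $\eta$ together with the propagation argument is what gives $\eta\prec\beta_2$, and your $\eta=\Cg{\lnk_2}$ would require a separate argument that $\Cg{\lnk_2}\red{B''_2}=\lnk_2\red{B''_2}$ and that $\Cg{\lnk_2}$ has a prime gap to $\beta_2$. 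Finally, be careful about which relation's link congruence you mean at each step: the chaining hypothesis and Lemma~\ref{lem:link-separability} concern the link congruence with respect to $\rel\cap\ov B$, not with respect to the polynomially closed subalgebra $\rel'$; passing between the two is where polynomial closedness enters, and conflating them will give false statements.
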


\begin{proof}
Note that if $|C|=1$, the lemma is trivially true. Let $B'_2=\pr_2\rel'$. 
We assume $\beta_2\red{B'_2}\ne\ld\red{B'_2}$ for any 
congruence $\ld\le\beta_2$; otherwise replace $\beta_2$ with $\ld$.
Let  $\lnk'_1,\lnk'_2$ be the link congruences of $B'_1,B'_2$ with 
respect to $\rel'$. 
Let $\eta\le\beta_2$ be such that $\eta\red{\umax(B'_2)}\sse\lnk'_2$ and 
$\eta$ is maximal among congruences of $\zA_2$ with this property.
We show that either $\eta=\beta_2$ or it is one of the congruences in item (2) 
of the lemma. If $\eta$ is the full relation on $\umax(B'_2)$, we are done
by Proposition~\ref{pro:umax-rectangular}; otherwise there are two cases.

\medskip

{\sc Case 1.} For some $\th\in\Con(\zA_2)$ with $\eta\prec\th\le\beta_2$ 
the intervals $(\zz_1,\beta_1),(\eta,\th)$ can be separated. 

\medskip

In this case we prove that $\eta$ has to be $\beta_2$ and we have 
option~(1) of the lemma.
Since $\rel$ is chained, by Lemmas~\ref{lem:good-polys}(4) 
and~\ref{lem:collapsing} there is a set 
$T\sse B^*_1\tm B^*_1$ of $(\zz_1,\beta_1)$-subtraces such that any pair 
of elements from $\umax(B^*_1)$ belongs to the transitive closure of $T$, 
and for any $(a,b)\in T$ there is a $(B_1,B_2)$-preserving polynomial $f$ 
such that $f(a)=a, f(b)=b$, and $f(\th\red{B^*_2})\sse\eta$.
This means that $C$ belongs to the $\lnk^*_1$-block of 
$B^*_1$, where $\lnk^*_1$ is the link congruence with respect to 
$\rel^*\fac\eta$. 
Therefore $C\tm\umax(\rel^*[C])\fac\eta\sse\rel^*\fac\eta$. Observe
that as $\rel'\sse\rel^*$, the link congruence of $B^*_1$ with respect to
$\rel^*$ restricted to $C$ contains $\lnk'_1\red C$. Therefore, we also 
have $C\tm\umax(\rel^*[C])\sse\rel^*$. Note that by the assumption
$\rel'\cap\umax(\rel^*)\ne\eps$ of the lemma 
both $\rel^*[C]$ and $B'_2$ contain a u-maximal element from $B^*_2$.
Since $B''_2\sse\rel^*[C]$, by Lemma~\ref{lem:u-max-congruence} we have 
$\umax(B''_2)\sse\umax(\rel^*[C])$. Therefore $C\tm\umax(B''_2)\sse\rel^*$.

We are going to argue that the same inclusion holds for $\rel'$. But first 
we show that for any thin semilattice or affine edge $ab$ of $C$ 
and any $c\in\umax(\rel^*[C])$ there is a polynomial $g$ such that 
$g(a)=a, g(b)=b$, $f(\th\red{B^*_2})\sse\eta$, and $g(c)=c$. 
Note that since $\rel$ is chained, all such pairs $\{a,b\}$ belong to $T$.
Since every pair of elements of $C$
is a $(\zz_1,\beta_1)$-subtrace, again, as $\rel$ is chained, and 
by Lemma~\ref{lem:good-polys}(5) this is true for some 
$c\in\rel^*[C]$. Suppose $cc'$ is a thin edge in $\rel^*[C]$; 
by Lemma~\ref{lem:as-rectangularity} this implies that 
$(a,c),(b,c),(a,c'),(b,c')\in\rel$. Then as in 
Lemma~\ref{lem:maximal-minimal} we find a polynomial satisfying 
the required properties for $c'$. Specifically, $g'(x)=g(x)\cdot\cl a{c'}$, 
$g'(x)=t\left(g(x),\cl a{c'}\right)$, and 
$g'(x)=h'\left(g(x),\cl ac,\cl a{c'}\right)$, where $t$ and 
$h$ are the operations from Lemma~\ref{lem:op-s-on-affine}(3), 
depending on the type of $cc'$ and $ab$. 

Now we are back to proving that $C\tm\umax(B''_2)\sse\rel'$.
Observe that $\rel'\fac{\lnk'_1\tm\lnk'_2}$ is the graph of a bijective mapping 
$\vf: B'_1\fac{\lnk'_1}\to B'_2\fac{\lnk'_2}$. 
Take $a,b\in C$ and $c\in\umax(B''_2)$ 
such that $(a,c)\in\rel'$, $ab$ is a thin semilattice or affine edge and 
$(a,b)\not\in\lnk'_1$. Let also $(b,d)\in\rel'$.
By what is proved there is a polynomial $f$ of 
$\rel$ such that $f(a)=a,f(b)=b$, $f(c)=c\eqc\eta d'=f(d)$,
and $f(\th\red{B^*_2})\sse\eta$. In particular, $(a,d'),(b,d')\in\rel$.
Since $(a,c)\in\rel'$ and $(a,c)\le(b,c')$ or $(a,c)(b,c')$ is an affine edge for 
some $c'\in\Sg{c,d'}$, we obtain $(b,c')\in\rel'$, as $\rel'$ is polynomially 
closed. Since $c'\eqc\eta c$ and $\eta\le\lnk'_2$, we get a contradiction with 
$(a,b)\not\in\lnk'_1$.

\medskip

{\sc Case 2.}
For all $\th\in\Con(\zA_2)$ with $\eta\prec\th\le\beta_2$ the intervals 
$(\zz_1,\beta_1),(\eta,\th)$ cannot be separated.

\medskip

Suppose $\lnk'_2\red{B''_2}\not\sse\eta\red{B''_2}$. Without 
loss of generality let $\eta=\zz_2$. Then there are $a,b\in B''_2$ 
and $c\in C$ such that $(c,a),(c,b)\in\rel'$. Let $\th$ any congruence
with $\eta\prec\th\le\Cg{\eta\cup\{(a,b)\}}\le\beta_2$.
If $\typ(\zz_1,\beta)=\three$ then by Lemma~\ref{lem:34-links} 
such $a,b$ do not exists, as long as $(\zz_1,\beta_1),(\eta,\th)$ cannot be 
separated. Finally, if $\typ(\zz_1,\beta)\in\{\four,\five\}$, $C$ is a singleton
by Lemma~\ref{lem:type23}, and the 
result is trivial.

Suppose now that $\typ(\zz_1,\beta_1)=\typ(\zz_2,\th)=\two$. In this case $B^*_1$ is a module, $C=B^*_1$ implying $B'_2=B''_2$. Since 
$\rel$ is chained $a,b$ can be assumed to be from $\umax(B''_2)$, and 
so $\eta\red{\umax(B''_2)}<\lnk'_2\red{(B''_2)}$. Also, this implies by
Proposition~\ref{pro:umax-rectangular} that for any $\lnk'_1$-block $E$ 
and the corresponding $\lnk'_2$-block $E'$ it holds $E\tm\umax(E')\sse\rel'$. 
Since by the choice of $\eta$, $\lnk'_2\meet\th\not\le\eta$, pairs $(c,a),(c,b)$
can be chosen such that $a\eqc\th b$. We prove that 
$\th\red{\umax(B''_2)}\sse\lnk'_2$ producing a contradiction with the 
choice of $\eta$.

By Lemma~\ref{lem:delta-alignment} $\rel[B^*_1]$ is a subset of 
a $\zeta(\zz_2,\th)$-block. Then $\th\le\lnk''_2\join\eta$, where $\lnk''_2$ 
is the link congruence of $\zA_2$ with respect to $\rel$, 
and as $\rel$ is chained, by Lemma~\ref{lem:link-separability} 
$\th\red{\umax(E)}\le\lnk^*_2\red{B^*_2}$ for any 
$\lnk^*_2$-block $E\sse B^*_2$, since $B^*_2\fac{\lnk^*_2}$ is a 
module, and therefore $E$ is u-maximal in this set. Thus we are in
the conditions of Lemma~\ref{lem:dirprod-propagation}. Therefore if 
$(c,d)\in\rel^*$ then $(c,e)\in\rel^*$ for any $e\eqc\th d$ for any $c\in C$
and $d,e\in\umax(\rel[C])$.

Again, we now extend this property to $\rel'$ using the assumption that $\rel$ 
is polynomially closed.  
Since any pair $\{a',b'\}\sse B^*_2$ with $a'\eqc\th b'$ is a 
$(\eta,\th)$-subtrace, as $\rel$ is chained, there is a 
$(B_1,B_2)$-preserving polynomial $f$ such that
$f(a)=a'$ and $f(b)=b'$. Now, use the pairs $(c,a),(c,b)\in\rel'$. For any
$b'\in\umax(B''_2)$ with $b\eqc\th b'$, let $b''\in b^\th$ be such that
$h(b,b'',a)=b'$, where $h$ is the function from Theorem~\ref{the:uniform}; 
such $b''$ exists because $h(b,x,a)$ is a permutation on every $\th$-block
(recall that a $\th$-block is a module in this case). 
Since $\rel$ is chained, there is a polynomial $f$ such that
$f(a)=a, f(b)=b''$ and $f(c)=d$ for some $d\in B^*_1=C$. The mapping 
$g(x)=h\left(x,f(x),\cl da\right)$ is such that $g\cl ca=\cl ca$ and 
$g\cl cb=\cl c{b'}$, because, again, $B^*_1$ is a module. Since $\rel'$ is 
polynomially closed and 
$(c,b)\sqq_{as}(c,b')$ we have $(c,b')\in\rel'$; and as $b'$ is 
arbitrary from $a^\th$, we have $\{c\}\tm a^\th\sse\rel'$.
Thus, we have proved the property for a specific $\th$-block; next 
we extend it to other $\th$-blocks.

Suppose $\{c\}\tm E\sse\rel'$ for some $\th$-block $E$ and a 
$\th$-block $E'$ is such that for some $a\in E, b\in E'\cap B''_2$, 
$ab$ is a thin edge and $(d,b)\in\rel'$ for some $d\in C$. Then
by Lemma~\ref{lem:central-mapping} mapping $g(x)$ that is 
defined as $x\cdot\cl db$, $t_{ab}\left(x,\cl db\right)$, 
$h_{ab}\left(x,\cl ca,\cl db\right)$ depending on the type 
of $ab$ is injective on $E$. In particular, if $ab$ is semilattice or majority 
then $g$ maps $\{c\}\tm E$ to $\{c\}\tm E'$, $g(c,a)=(c,b), g(c,a')=(c,b')$ 
and $b\ne b'$ whenever $a\ne a'$; and
since $t_{ab},h_{ab}$ are term operations and all the tuples involved 
belong to $\rel'$, $(c,b),(c,b')\in\rel'$.
If $ab$ is affine then $g$ maps $\{c\}\tm E$ to $\{d\}\tm E'$, and 
$g(c,a)=(d,b), g(c,a')=(d,b')$ and 
$b\ne b'$ whenever $a\ne a'$, and $(d,b),(d,b')\in\rel'$. In either case,
$\lnk'_2$ is nontrivial on $E'$, and applying the argument from the
previous paragraph we obtain $\{c\}\tm E'\sse\rel'$ or 
$\{d\}\tm E'\sse\rel'$. Therefore 
$\th\red{\umax(B'_2)}\sse\lnk'_2\red{\umax(B'_2)}$,
a contradiction with the choice of $\eta$.
\end{proof}

\section{Chaining}\label{sec:chaining}

In this section we first introduce a property of relations which is slightly 
stronger than chaining; this is the property that will be used in further proofs.
Then we show in Lemma~\ref{lem:S7} that this 
property is preserved under certain transformations of the relation.

We call relation $\rel$ \emph{strongly chained}\index{strongly chained relation} 
with respect to $\ov\beta,\ov B$, where $\beta_i\in\Con(\zA_i)$ and $B_i$
is a $\beta_i$-block for $i\in[n]$, if\\[2mm]
(Q1s) for any $I\sse[n]$ and $\al,\beta\in\Con(\pr_I\rel)$ such that 
$\al\prec\beta\le\ov\beta_I$, $\al$ and $\beta$ are $\cU_B$-chained in 
$\rel$, where $\cU_B$ is the set of all $\ov B$-preserving polynomials of 
$\rel$\label{not:Q1s}\\[2mm]
(Q2s) for any $\al,\beta\in\Con(\pr_I\rel)$, $\gm,\dl\in\Con(\zA_j)$, 
$j\in[n]$, such that $\al\prec\beta\le\ov\beta_I$, $\gm\prec\dl\le\beta_j$, 
and $(\al,\beta)$ can be separated from $(\gm,\dl)$, the congruences 
$\al$ and $\beta$ are $\cU^*$-chained in $\rel$, where $\cU^*$ is the set 
of all $\ov B$-preserving polynomials $g$ of $\rel$ such that 
$g(\dl)\sse\gm$\label{not:Q2s}\\[2mm]
As in the definition of chained relations a polynomial from $\cU^*$ in 
condition (Q2s) will be called \emph{$(\gm,\dl,\ov B)$-good}.

We now can state and prove Lemma~\ref{lem:S7} that the property to be 
strongly chained is preserved 
under certain transformations of $\ov\beta$ and $\ov B$.
We will use it prove that one of the conditions, (S7), of a $\beta$-strategy 
(see Section~\ref{sec:strategies}) remains true when the $\beta$-strategy 
is being transformed.

\begin{lemma}\label{lem:S7}
Let $\rel$ be a subdirect product of $\vc\zA n$, $\beta_i\in\Con(\zA_i)$
and $B_i$ a $\beta_i$-block, $i\in[n]$, such that $\rel$ is strongly chained
with respect to $\ov\beta,\ov B$. Let $\rel'=\rel\cap(B_1\tms B_n)$
and $B'_i=\pr_i\rel'$. Fix $i\in[n]$, $\beta'_i\prec\beta_i$,
and let $D_i$ be a $\beta'_i$-block that is a member of a nontrivial 
as-component of $B'_i\fac{\beta'_i}$. Let also $\beta'_j=\beta_j$ 
and $D_j=B_j$ for $j\ne i$. Then
$\rel$ is strongly chained with respect to $\ov\beta',\ov D$.
\end{lemma}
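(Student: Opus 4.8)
The statement asserts that the property of being strongly chained survives the transformation that shrinks the $i$-th constraining block $\beta_i$ to a maximal subcongruence $\beta'_i$, provided the new block $D_i$ lives in a nontrivial as-component of $B'_i\fac{\beta'_i}$. The plan is to verify conditions (Q1s) and (Q2s) directly for the pair $\ov\beta',\ov D$, reusing the hypothesis that $\rel$ is strongly chained with respect to $\ov\beta,\ov B$. The key observation is that $\ov D$-preserving polynomials are in particular $\ov B$-preserving (since $D_j\sse B_j$ for all $j$, and $D_i\sse B_i$ because a $\beta'_i$-block is contained in a $\beta_i$-block), so the set $\cU_D$ of $\ov D$-preserving polynomials is a subset of $\cU_B$; similarly the good polynomials for $\ov D$ form a subset of the good polynomials for $\ov B$. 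Thus the subtrace sets $T_{\zA_i}(a',b';\al,\beta,\cU_D)$ are contained in their $\cU_B$-counterparts, and the content of the lemma is that (G1) and (G2) still hold with the smaller polynomial set and the smaller blocks.

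First I would fix $I\sse[n]$ and $\al\prec\beta\le\ov\beta'_I$ and analyze the two possibilities for the index $i$. If $i\notin I$, then $\ov\beta'_I=\ov\beta_I$ and $\ov D_I=\ov B_I$ on the relevant coordinates, so nothing changes and (Q1s), (Q2s) for $\ov\beta',\ov D$ are literally the same statements as for $\ov\beta,\ov B$ — here one only needs that a $\ov D$-preserving polynomial is $\ov B$-preserving to transfer chaining, and the subtrace sets only shrink, which is harmless since (G1),(G2) quantify over elements of the \emph{new, smaller} $D$-block structure. The substantive case is $i\in I$. Here one must work inside $\Con(\pr_I\rel)$; let $\rel'' = \rel\cap(D_1\tms D_n)$ and $B''_k = \pr_k\rel''$. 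The crucial point is that $D_i$ was chosen to sit in a nontrivial as-component of $B'_i\fac{\beta'_i}$, which via Lemma~\ref{lem:to-max} (the Maximality Lemma) and Lemma~\ref{lem:u-max-congruence} guarantees that $\umax(B''_i)$ is nonempty and, more importantly, that u-maximal elements of $B''_i$ are u-maximal in $B'_i$. This is what lets us import the chaining data (the polynomials witnessing (G1),(G2) for $\ov\beta,\ov B$) and argue that after restricting these polynomials — via composition with an idempotent $\ov D$-preserving polynomial whose range is a $(\al,\beta)$-minimal set, as in Lemma~\ref{lem:good-polys}(5) and Lemma~\ref{lem:min-set-separation} — one obtains $\ov D$-preserving witnesses.

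The main obstacle, and where the argument requires care, is that shrinking $\beta_i$ to $\beta'_i$ can split a single as-component of $B'_i\fac{\al}$ (for $\al$ below $\beta'_i$) into several as-components inside $D_i\fac\al$, so condition (G1) — which demands the subtrace property for \emph{all} pairs in a common as-component — becomes a genuinely stronger local requirement. To handle this I would argue that since $D_i$ lies in a nontrivial as-component at the $\beta'_i$ level, the relevant $\beta\fac\al$-block $E$ meeting $\umax(B''_i)$ is actually a $\beta\fac\al$-block already meeting $\umax(B'_i)$ (using u-maximality heredity), so the old (G1) for $\ov\beta,\ov B$ supplies a polynomial $f$ with $f(\{a',b'\})=\{a,b\}$ and $f$ landing on a minimal set; then one post-composes with a $\ov D$-preserving idempotent polynomial fixing $a,b$ (available because every pair in a nontrivial as-component is a subtrace, by Corollary~\ref{cor:max-min-set}(1), combined with Lemma~\ref{lem:good-polys}(5)) to make the composite $\ov D$-preserving while retaining the minimal-set range property by Lemma~\ref{lem:minimal-sets}(2). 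The analogous surgery handles (G2): the sequence $a=a_1\zd a_k=b$ connecting u-maximal elements of $E'$ that exists at the $\ov B$ level can be re-routed through $\umax(D_i)$-elements and re-proved with $\ov D$-preserving polynomials using the same post-composition trick. Condition (Q2s) is then essentially (Q1s) with the extra demand $g(\dl)\sse\gm$; one runs the identical argument but starts from a $(\gm,\dl,\ov B)$-good polynomial and checks that post-composition with a $\ov D$-preserving polynomial keeps it good, which is immediate from Lemma~\ref{lem:good-polys}(2) since $\ov D$-preserving polynomials are $\ov B$-preserving and the condition $g(\dl)\sse\gm$ is preserved under composition on the left by any polynomial fixing the relevant $\dl/\gm$-structure. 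I expect the bookkeeping around "u-maximal in $D_i$ implies u-maximal in $B'_i$" to be the one place where the nontriviality hypothesis on the as-component is indispensable, and that is the heart of the proof.
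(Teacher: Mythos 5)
There is a genuine gap, and it sits exactly where you claim the argument is easy. Your reduction to the case $i\notin I$ is correct and in fact mirrors the paper's setup (the paper also places $i$ outside $I\cup\{j\}$ by duplicating a coordinate), but your conclusion that in this case ``nothing changes and (Q1s), (Q2s) for $\ov\beta',\ov D$ are literally the same statements'' is false. Even when $i\notin I\cup\{j\}$, the congruences $\al,\beta,\gm,\dl$ do not change, but the set of admissible witnessing polynomials shrinks: $\cU_D\subsetneq\cU_B$ and $\cU^*_D\subsetneq\cU^*_B$, because $\ov D$-preservation is strictly stronger than $\ov B$-preservation. You acknowledge this (``the subtrace sets only shrink'') but then call it ``harmless,'' which inverts the logic: conditions (G1) and (G2) become \emph{harder} to satisfy with a smaller subtrace set, since one must still connect the same pairs of u-maximal elements in the $E'$-block. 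This is precisely the content of the lemma, not a triviality. In the paper, the entire body of the proof (Claims~1 and~2 and the two-step repair argument) addresses exactly this case.

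The ``post-composition trick'' you propose for the case $i\in I$ also does not work as stated. If $f$ is only $\ov B$-preserving and $h$ is a $\ov D$-preserving idempotent polynomial fixing $a,b$, then $h\circ f$ applied to $\ov D$ produces $h(f(\ov D))\sse h(\ov B)$; since $h$ preserves $\ov D$ but need not preserve $\ov B$, there is no control on where elements of $\ov B\setminus\ov D$ land, so $h\circ f$ need not be $\ov D$-preserving. The paper's argument avoids this by exploiting the quasi-majority operation $\maj$ (Theorem~\ref{the:pseudo-majority}) and tuples $\ba,\bb\in\rel''$: one replaces $f_i$ with $g_i(x)=\maj(\ba,f_i(x),\ba)$ or $h_i(x)=\maj(\ba,\bb,f_i(x))$, which lands inside the correct as-component $C$ because $\maj(c,\cdot,c)\in\Filt^{as}(c)$; then a second pass uses a ternary term $p$ with $p(a,a,b)\sqq_{as}b$ and $p(d_{t+1},d_t,d_t)=d_{t+1}$ to force the value at $c$ to be exactly $c$ rather than merely lying in $C$. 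The nontrivial as-component hypothesis is what drives Claim~2 of the paper's proof (via the 2-Decomposition Theorem and Rectangularity), not just the u-maximality heredity statement you cite. Your plan omits the $\maj$-based surgery entirely, so even granting the reduction steps, you have no mechanism to produce $\ov D$-preserving witnesses from $\ov B$-preserving ones. You also do not address the non-linked case for $\pr_{\{1,i\}}\rel'$, which the paper handles separately via the Congruence Lemma (Claim~1).
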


\begin{proof}
Let $\rel''=\rel\cap(D_1\tms D_n)$ and $D'_i=\pr_i\rel''$. Take $I$, $j$ 
from the definition of being strongly chained. Let $I=[\ell]$; if $|I|>1$ we 
may consider $\rel$ as a subdirect product of $\pr_I\rel$ and 
$\zA_{\ell+1}\zd\zA_n$, so we assume $|I|=1$ and $j=n$ in (Q2s).
Let $\al,\beta\in\Con(\zA_1)$, $\gm,\dl\in\Con(\zA_n)$ be such that 
$\al\prec\beta\le\beta_1$, $\gm\prec\dl\le\beta_n$. Clearly, we may 
assume $\al=\zz_1$, $\gm=\zz_n$, and $\beta'_i=\zz_i$. Note 
that replacing $\rel$
with the $n+1$-ary relation $\{(\ba,\ba[i])\mid \ba\in\rel\}$ we 
may assume that $i\not\in I\cup\{j\}$. Without loss of generality assume 
$i=2$. By the assumption $\beta'_2=\zz_2$, the classes of $\beta'_2$ are
just elements of $\zA_2$, so let $B'_2$ be denoted by $c$. Let $C$ be the 
as-component of $B'_2$ containing $c$. 

To prove the lemma for every $a,b\in D'_1$ with $a\eqc\beta b$ we have
to identify a set $T(a,b,\gm,\dl,\cU^*)$ as in conditions (G1),(G2), and 
for every $\{a',b'\}\in T(a,b,\gm,\dl,\cU^*)$ we need to find a 
$(\gm,\dl,\ov D)$-good polynomial $f$ such that $f(a)=a',f(b)=b'$. In fact,
we rather find all the sets $T$ minimal among the sets of the form 
$T(a,b,\gm,\dl,\cU^*)$ and that satisfy the conditions of 
Lemma~\ref{lem:good-polys}(4). 
Note that such minimal sets exist for $\ov\beta,\ov B$, as well as, 
$(\gm,\dl,\ov B)$-good polynomials by the assumption that $\rel$ is
strongly chained with respect to $\ov\beta,\ov B$. We need to change such a set 
$T$ and change the polynomials so that they fit the new requirements. 
We divide the proof into two cases,
depending on whether or not $\relo=\pr_{12}\rel'$ is linked. First, we consider 
the case when $\relo$ is not linked, this case is relatively easy.

\medskip

{\sc Claim 1.}
Let $\relo'=\relo\cap(\umax(\pr_1\relo)\tm C)$ be not 
linked and $\lnk_1,\lnk_2$ link congruences of $\relo$.
Then $\lnk_2=\zz_2$ and either $\beta\le\lnk_1$ or $\beta$ is 
trivial on $D_1$.

\medskip

Relation $\relo$ is a subalgebra of $\rel\cap(B_1\tm B_2)$ and is 
polynomially closed in $\pr_{12}\rel$ by Lemma~\ref{lem:poly-closed}. By 
the Congruence Lemma~\ref{lem:affine-link} if $\relo'$ is not linked then 
$\relo$ is the graph of a mapping $\vf:\pr_1\relo\to C$. 
This means $\lnk_2=\zz_2$ and $\lnk_1$ is the restriction of a congruence 
$\eta$ of $\zA_1$ onto $\pr_1\relo$. If $\beta\le\eta$ then obtain the 
first option of the conclusion of the claim, otherwise $\lnk_1\cap\beta=\zz_1$ 
and we have the second option.

\medskip

Note that if $\beta\le\lnk_1$ then any $\ov B$-preserving polynomial
that maps a pair of $\beta$-related elements from $D'_1$ on a 
$(\al,\beta)$-subtrace from $D'_1$ is also $\ov D$-preserving, because 
$\lnk_2=\zz_2$; the result follows. If $\beta$ is trivial on $D'_1$, 
there is nothing to prove. Therefore we may assume $\relo'$ is linked.

We start with choosing a $\beta$-block required in the chaining conditions, 
and studying some of its properties. Observe that since $c$ is as-maximal 
in $B'_2$, the set $D'_1$ also contains as-maximal elements of $B'_1$. 
Therefore by Lemma~\ref{lem:u-max-congruence} 
$\umax(D'_1)\sse\umax(B'_1)$. Let $E$ be a $\beta$-block such that 
$E''=E\cap D'_1\ne\eps$, $E\cap\umax(D'_1)\ne\eps$ (and so $E''$ satisfies
the requirements of the chaining conditions), and let $E'=E\cap B'_1$. Consider 
$\rel^*=\rel'\cap(B_1\tm C\tm B_3\tms B_n)$. Note that 
$\rel^*$ is not necessarily a subalgebra. Let $B^*_i=\pr_i\rel^*$, 
$i\in[n]$, and $E^*=E\cap B^*_1$. By the Maximality 
Lemma~\ref{lem:to-max}(4) $\amax(E^*)$ is a union of as-components 
of $E'$. Indeed, let $a\in E^*$ and let $\ba\in\rel^*$ be such that 
$\ba[1]=a$ and $\ba[2]\in C$; let also $b\in E'$ with $a\sqq_{as}b$ in
$E'$. Then by the Maximality Lemma~\ref{lem:to-max}(4) there is 
$\bb\in\rel'$ such that $\bb[1]=b$ and $\ba\sqq_{as}\bb$ in $\rel'$.
In particular, $\ba[2]\sqq_{as}\bb[2]$ implying $\bb[2]\in C$.
Also, by Proposition~\ref{pro:umax-rectangular}, since $\relo$ is linked
and $\umax(E^*)\sse\umax(B'_1)$, we have 
$\umax(E^*)\tm C\sse\relo$, and therefore 
$\umax(E^*)=\umax(E'')\sse\umax(E')$. In particular, $\amax(E'')$ is a union 
of as-components of $E'$. The last inclusion here is because $E^*$ contains 
some as-maximal elements of $E'$.

First we prove condition (Q1s) for $\ov\beta'$ and $\ov D$.

\medskip

{\sc Claim 2.}
For any $a,b,a',b'\in E''$ such that $a,b$ belong to the same as-component of $E''$
there is a $(\gm,\dl,\ov D)$-good polynomial $f$ with 
$f(\{a',b'\})=\{a,b\}$.

\medskip

Consider relation $\rela$, a subdirect product of 
$\zA_1\tm\zA_1\tm\zA_2\tms\zA_n$, produced from by $(a',b',\ba)$, where 
$\ba\in\pr_{\{2\zd n\}}\rel''$, as follows: 
$$
\rela=\{f(a),f(b),f(\ba))\mid \text{$f$ is a unary polynomial of $\rel$
with $f(\dl)\sse\gm$}\}.
$$ 
It is not difficult to see that $\rela$ is a subalgebra, and, in particular it contains 
all the tuples of the form $(\bb[1],\bb[1],\bb[2]\zd\bb[n])$ for 
$\bb\in\rel$. Let $\rela'=\rela\cap\ov B$, and $\rela''=\rela\cap\ov D$. 
Every tuple from $\rela'$ or from $\rela''$ corresponds to a $\ov B$- or 
$\ov D$-preserving polynomial. Therefore it suffices to prove that 
$(a,b)\in\pr_{12}\rela''$. Let $F$ be the as-component of $E''$ containing 
$a,b$; as observed above $F$ is also an as-component of $E'$. 
By the assumption of (Q2s) $F^2\sse\pr_{12}\rela'$ and
$(e,e)\in\pr_{12}\rela''$ for any $e\in F$, since $F\tm C\sse\relo'$.
We consider relation $\relp=\pr_{123}\rela'$. As $F^2\sse\relp'=\pr_{12}\relp$,
$(a,b)$ is as-maximal in $\relp'$. Therefore it suffices to show
that $\amax(\relp)$ is linked when considered as subdirect product of $\relp'$
and $B'_2$. Since $(e,e)\in\pr_{12}\rela''$ for any $e\in F$, all pairs
of this form are linked in $\relp$. 
Then $(e,d,a'')\in\relp$ for any $e,d\in F$ and some $a''\in B'_2$,
and $(e,e,c'')\in\relp$ for some $c''\in C$. Since $F^2\sse\relp'$, 
$(e,e)\sqq_{as}(e,d)$, and by the Maximality Lemma~\ref{lem:to-max}(4) 
$a''$ can be chosen from $C$, and so this implies that $(e,d)$ and $(e,e)$ 
are also linked. Claim~2 is proved.

\medskip

Now we extend the result above to pairs from $\umax(E^*)$. We prove the result 
in two steps. First, we show that for any $a',b'\in E^*$ and
any $a,b\in\umax(E^*)$ there is a sequence of $\ov B$-preserving polynomials 
$\vc fk$ such that $f_1(\{a',b'\})\zd f_k(\{a',b'\})\sse E^*$ form a chain 
connecting $a$ and $b$, $f_i(\zA_1)$ is an $(\al,\beta)$-minimal set, and 
$f_i(c)\in C$ for $i\in[k]$. Then we prove 
that $\vc fk$ can be chosen in such a way that 
$f_1(\{a',b'\})\zd f_k(\{a',b'\})\sse E''$ and $f_1(c)=\dots=f_k(c)=c$.
Clearly, it suffices to prove in the case when $b$ is as-maximal in $E^*$.

By the assumption there are $a=a_1,a_2\zd a_k=b$, $\vc ak\in E'$ and\lb 
$(\gm,\dl,\ov B)$-good polynomials $\vc f{k-1}$ such that
$f_i(\zA_1)$ is a $(\al,\beta)$-minimal set and $f_i(\{a',b'\})=\{a_i,a_{i+1}\}$,
and also $f_i(c)\in B'_2$. We need to show that $\vc a{k-1}$ and $\vc f{k-1}$
can be chosen such that $f_i(c)\in C$. Choose $\ba,\bb\in\rel''$
such that $\ba[1]=a,\bb[1]=b$ and $\ba[2]=\bb[2]=c$. This is possible 
because $\umax(E^*)=\umax(E'')$. Now let 
$g_i(x)=\maj(\ba,f_i(x),\ba)$ and $h_i(x)=\maj(\ba,\bb,f_i(x))$. 
By Lemma~\ref{lem:good-polys} $g_i,h_i$ are 
$(\gm,\dl,\ov B)$-good polynomials, and for each of them
either $\{b_i,b_{i+1}\}=g_i(\{a',b'\})$ ($\{c_i,c_{i+1}\}=h_i(\{a',b'\})$) 
is an $(\al,\beta)$-subtrace, or $g_i(\beta)\sse\al$ ($h_i(\beta)\sse\al$), 
that is $g_i(a')=g_i(b')$ (respectively, $h_i(a')=h_i(b')$). The polynomials
$g_i,h_i$ satisfying the first option form a sequence of 
$(\al,\beta)$-subtraces connecting $a$ with $\maj(a,b,a)$ --- by subtraces
of the form $\{b_i,b_{i+1}\}$, --- and $\maj(a,b,a)$ with $\maj(a,b,b)$
--- by subtraces of the form $\{c_i,c_{i+1}\}$. Also, by 
Theorem~\ref{the:pseudo-majority} $\maj(a,b,b)$ belongs to the 
as-component of $E^*$ (and therefore of $E'$ and $E''$) containing $b$.
Therefore by Claim~2 this sequence of polynomials and subtraces can be 
continued to connect $\maj(a,b,b)$ to $b$. Finally, by the same theorem
$g_i(c)=\maj(c,f_i(c),c)\in C$ and $h_i(c)=\maj(c,c,f_i(c))\in C$.

For the second step we assume that $a$ and $b$ are connected with
$(\al,\beta)$-subtraces $\{a_i,a_{i+1}\}$, $i\in[k-1]$ witnessed by 
$(\gm,\dl,\ov B)$-good polynomials 
$f_i$ such that $c_i=f_i(c)\in C$. We need to show that $f_i$ can be chosen 
such that $f_i(c)=c$. Suppose that $c_i\ne c$ for some $i\in[k-1]$.
Since $c_i$ and $c$ belong to the same as-component, there is an
as-path $c_i=d_1\zd d_\ell=c$ in $C$. We show that if there is 
a sequence of $(\al,\beta)$-subtraces $\{b_j,b_{j+1}\}$ witnessed
by polynomials $g_j$ such that $g_j(c)=c$ whenever $f_j(c)=c$, 
and $f_i(c)=d_t$, there are also $(\al,\beta)$-subtraces $\{b'_j,b'_{j+1}\}$
such that $b'_1=a$ and $b'_k$ is in the as-component containing $b$,
witnessed by polynomials  $\vc{g'}k$ such that $g'_i(c)=d_{t+1}$
and $g'_j(c)=c$ whenever $g_j(c)=c$.

As is easily seen, it suffices to find a ternary term operation $p$
such that $p(a,a,b)$ belongs to the as-component containing $b$,
and $p(d_{t+1},d_t,d_t)=d_{t+1}$. Indeed, if such a term operation exists, 
then we set $g'_j(x)=p(\ba,\ba,g_j(x))$, where $\ba$ is as in the first step 
above, for $j\in[k-1]-\{i\}$, and 
$\{b'_j,b'_{j+1}\}=g'_j(\{a',b'\})$. We have $g'_1(a')=p(a,a,g_1(a'))=a$
and $g'_j(c)=p(c,c,g_j(c))=c$ whenever $g_j(c)=c$. Finally, since
$g'_\ell(b)=p(a,a,b)$ belongs to the as-component containing $b$,
we can use Claim~2 as before to connect $p(a,a,b)$ to $b$. For $g'_i$ we set
$g'_i(x)=p(\ba',\ba'',g_i(x))$ where $\ba',\ba''\in\rel''$ are such that
$\ba'[1]=\ba''[1]=a$ and $\ba'[2]=d_{t+1},\ba''[2]=d_t$. Note that 
such $\ba',\ba''$ exist, because $\umax(E^*)\tm C\sse\relo$. It follows 
from the assumption about $p$ that $g'_i$ is as required.

If $d_t\le d_{t+1}$, then $p(x,y,z)=z\cdot x$ fits the requirements. 
If $d_td_{t+1}$ is an affine edge, consider the relation 
$\rela\sse\zA_1\tm\zA_2$ generated by $\{(a,d_t),(a,d_{t+1}),(b,d_t)\}$.
Let $\zB=\Sg{a,b}$ and $\zC=\Sg{d_t,d_{t+1}}$; then 
$\zB\tm\{d_t\},\{a\}\tm\zC\sse\rela$. By Lemma~\ref{lem:as-rectangularity},
as $d_td_{t+1}$ is a thin affine edge, $\umax(\zB)\tm\{d_{t+1}\}\sse\rela$.
There is $b'$ with $b\sqq_{as}b'$ in $\zB$ such that $b'\in\umax(\zB)$.
Therefore there is a term operation $p$ with $p(a,a,b)=b'$  
and $p(d_{t+1},d_t,d_t)=d_{t+1}$, as required.
\end{proof}

\section{Strategies and solutions}\label{sec:maximal-solutions}

\subsection{The grand scheme}

In this section we describe the `grand scheme' of solving CSPs. We
start with introducing two preprocessing steps for our algorithm.

We call a CSP instance $\cP=(V,\cC)$ 
\emph{subdirectly irreducible}\index{subdirectly irreducible instance} if it is
1-minimal and $\zA_v$ is subdirectly irreducible for every $v\in V$.

\begin{lemma}[Folklore]\label{lem:sub-irreducible}
Every CSP instance can be reduced in polynomial time to an equivalent 
subdirectly irreducible one.
\end{lemma}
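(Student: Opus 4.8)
The plan is to describe the standard reduction to subdirectly irreducible domains and explain why it preserves solvability and runs in polynomial time. Let $\cP = (V,\cC)$ be an instance of $\CSP(\cA)$. First I would make $\cP$ $1$-minimal by the usual constraint propagation: for each $v \in V$ replace the domain $\zA_v$ by $\cS_v$, the intersection of all unary projections $\pr_v\rel$ over constraints $C = \ang{\bs,\rel}$ with $v\in\bs$, and then correspondingly shrink each constraint relation $\rel$ to $\rel \cap \prod_{w\in\bs}\zA_w$; iterate until no domain changes. This terminates in polynomially many steps since each step strictly decreases the total size $\sum_{v}|\zA_v|$, and the resulting instance has the same set of solutions. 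Crucially, since $\cA$ is closed under subalgebras (and the new domains and constraint relations are subalgebras of the old ones by preservation under term operations), the tightened instance is again in $\CSP(\cA)$.

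Second, for each variable $v$ whose domain $\zA_v$ is \emph{not} subdirectly irreducible, I would split $v$. Recall that $\zA_v$ fails to be subdirectly irreducible exactly when the intersection of its nontrivial congruences is $\zz_{\zA_v}$, equivalently when there exist congruences $\th_1,\ldots,\th_m \in \Con(\zA_v)$, each different from $\zz_{\zA_v}$, with $\th_1\meet\dots\meet\th_m = \zz_{\zA_v}$; one may take these to be (below) the coatoms whose meet is the equality relation, or more simply: there is an embedding $\zA_v \hookrightarrow \prod_{i=1}^m \zA_v\fac{\th_i}$ where each quotient has strictly smaller size. Replace $v$ by new variables $v_1,\ldots,v_m$ with domains $\zA_v\fac{\th_i}$, add a constraint whose relation is the graph of this embedding (an $m$-ary subdirect product of $\prod_i \zA_v\fac{\th_i}$, which lies in $\Inv(\cA)$ because $\cA$ is closed under quotient algebras and the diagonal-type subpower is invariant), and in every original constraint $C = \ang{\bs,\rel}$ containing $v$ replace the coordinate $v$ by the $m$ coordinates $v_1,\ldots,v_m$, replacing $\rel$ by its image under $a \mapsto (a^{\th_1},\ldots,a^{\th_m})$ in the relevant coordinate. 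A mapping $\vf$ is a solution of the new instance if and only if the induced mapping $v \mapsto$ (the unique preimage of $(\vf(v_1),\ldots,\vf(v_m))$ under the embedding) is a solution of $\cP$; the extra constraint guarantees consistency of the components, so the transformation preserves solvability.

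For the running time: each split replaces one variable of domain size $s$ by at most $\log_2 s$ variables of domain size at most $s/2$, and one re-runs $1$-minimality after each batch of splits. Measuring progress by, say, the multiset of domain sizes ordered so that replacing a size-$s$ domain by several domains each of size $<s$ is a decrease, the process strictly descends and terminates; a crude bound gives at most $O(|V|\cdot|\bigcup\cA|)$ splits, each with polynomial overhead, so the whole reduction is polynomial. After the process stabilizes, every domain is subdirectly irreducible and the instance is still $1$-minimal (a final propagation pass can only shrink domains to subalgebras, and subalgebras of subdirectly irreducible algebras need not be subdirectly irreducible --- so one must alternate splitting and propagation until a fixed point is reached, which still happens in polynomially many rounds by the same descent argument).

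\textbf{Main obstacle.} The only genuinely delicate point is the interaction between $1$-minimality tightening and subdirect irreducibility: tightening a domain to a subalgebra can destroy subdirect irreducibility, so one cannot simply do "first propagate, then split." The fix is to interleave the two transformations and argue termination via a single well-founded measure on the collection of domains (e.g. the sorted list of domain cardinalities under the obvious multiset order, noting that both operations never increase any domain and the split strictly decreases the largest non-subdirectly-irreducible one while introducing only smaller domains). Verifying that this measure decreases at each round and bounding the number of rounds polynomially is the heart of the argument; everything else is bookkeeping about preservation of $\Inv(\cA)$ under subalgebras, quotients, and subdirect powers, which is immediate from the closure assumptions on $\cA$.
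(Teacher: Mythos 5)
Your proposal follows the standard folklore route (1-minimality, then subdirect factorization of each domain into subdirectly irreducible quotients with a linking constraint), which is indeed what the paper leaves unstated when it marks the lemma as Folklore, and the argument is essentially correct. Two points are worth flagging. First, the numerical bookkeeping is wrong: a non-subdirectly-irreducible algebra of size $s$ need not split into $\log_2 s$ factors of size at most $s/2$; in general one separates each pair $(a,b)$ of distinct elements by a congruence $\th_{a,b}$ maximal with $(a,b)\notin\th_{a,b}$, giving up to $\binom{s}{2}$ subdirectly irreducible quotients each of size at most $s-1$. This does not hurt the running time because $|\bigcup\cA|$ is a fixed constant, so the number of factors and their sizes are all bounded by a constant, but the bound as stated is false. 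Second, the interleaving you identify as the ``heart of the argument'' is actually not needed. If $\cP$ is already 1-minimal and $\zA_v\hookrightarrow\prod_i\zA_v\fac{\th_i}$ is a subdirect embedding, then every modified constraint that used to project fully onto $\zA_v$ now projects fully onto each $\zA_v\fac{\th_i}$ (since $\pr_{v_i}\rel'=\{a^{\th_i}:a\in\pr_v\rel\}=\zA_v\fac{\th_i}$), and the linking constraint is itself a subdirect product; hence the split instance is again 1-minimal and no further propagation, and thus no loop and no well-founded-measure argument, is required. A single pass --- propagate once, then split every variable --- already yields a 1-minimal instance with subdirectly irreducible domains. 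Your fixed-point iteration is not incorrect, merely vacuous after the first split, and it obscures what is otherwise a one-shot construction.
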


In this section all instances we consider are assumed subdirectly irreducible. The 
monolith of $\zA_v$ is denoted by $\mu_v$. 

Let $\cP=(V,\cC)$ be a (2,3)-minimal instance and 
for $X\sse V$, $|X|=2$, there is a constraint $C^X=\ang{X,\rel^X}$, where 
$\rel^X$ is the set of partial solutions on $X$. We use the notation from 
the end of Section~\ref{sec:csp-decomposition}. Recall that $\cW^\cP(\ov\beta)$
denotes the set of triples $(v,\al,\beta)$ 
such that $v\in V$, $\al,\beta\in\Con(\zA_v)$, and $\al\prec\beta\le\beta_v$. 
If $\beta_v=\zo_v$ for all $v\in V$, we set $\cW^\cP\cW^\cP(\ov\beta)$. 
Also, let $W_{v,\al\beta,\ov\beta}$ to be the set of $w\in V$ such that for some
$(w,\gm,\dl)\in\cW^\cP(\ov\beta)$ the prime intervals $(\al,\beta)$ and  $(\gm,\dl)$
cannot be separated in $\rel^{\{v,w\}}$.
Let $\ov\beta$, $\beta_v\in\Con(\zA_v)$, $v\in V$, be a collection of congruences.
Let $\cW'(\ov\beta)$\label{not:cWprime} (and respectively $\cW'$) denote the 
set of triples $(v,\al,\beta)\in\cW(\ov\beta)$ (respectively, from $\cW$) with 
$\zeta(\al,\beta)=\zo_v$. 

We say that algebra $\zA_v$ is 
\emph{semilattice free}\index{semilattice free algebra} if it does not contain 
semilattice edges. Let $\razm(\cP)$\label{not:size} denote the 
maximal size of domains of $\cP$ that are not semilattice free and 
$\Razm(\cP)$\label{not:MAX} be the set of variables $v\in V$ such that
$|\zA_v|=\razm(\cP)$ and $\zA_v$ is not semilattice free. Finally, for $Y\sse V$ 
let $\mu^Y_v=\mu_v$ if $v\in Y$ and $\mu^Y_v=\zz_v$ otherwise. 
Recall that by $\cP\fac{\ov\mu^Y}$ we denote
the instance $(V,\cC^{\ov\mu^Y})$ constructed as follows: the domain of 
$v\in V$ is $\zA_v\fac{\mu^Y_v}$; for every constraint $C=\ang{\bs,\rel}\in\cC$,
the set $\cC^{\ov\mu^Y}$ includes the constraint 
$\ang{\bs,\rel\fac{\ov\mu^Y_\bs}}$.

Instance $\cP$ is said to be \emph{block-minimal}\index{block-minimal instance} 
if for every $(v,\al,\beta)\in\cW$ (here $\beta_v=\zo_v$, $v\in V$)
\begin{itemize}
\item[(BM1)]
for every $C=\ang{\bs,\rel}\in\cC$ the problem $\cP_{W_{v,\al\beta,\ov\beta}}$
if $(v,\al,\beta)\not\in\cW'$, and the problem 
$\cP_{W_{v,\al\beta,\ov\beta}}\fac{\ov\mu^Y}$ otherwise, where 
$Y=\Razm(\cP)-\bs$, is minimal;
\item[(BM2)]
if $(v,\al,\beta)\in\cW'$, then for every $(w,\gm,\dl)\in\cW-\cW'$
the problem $\cP_{W_{v,\al\beta,\ov\beta}}\fac{\ov\mu^Y}$, where
$Y=\Razm(\cP)-(W_{v,\al\beta,\ov\beta}\cap W_{w,\gm\dl,\ov\beta})$
is minimal.
\end{itemize}
The definition of block-minimality is designed in such a way that 
block-minimality can 
be efficiently established. Observe that $W_{v,\al\beta,\ov\beta}$ can be
large, even equal to $V$. However if $(v,\al,\beta)\not\in\cW'$ by 
Theorem~\ref{the:centralizer-alignement} the problem 
$\cP_{W_{v,\al\beta,\ov\beta}}$ splits into a union of disjoint problems over 
smaller domains. On the other hand, if $(v,\al,\beta)\in\cW'$ then 
$\cP_{W_{v,\al\beta,\ov\beta}}$ may not be decomposable. Since we need 
an efficient procedure of establishing block-minimality, this explains the 
complications introduced in (BM1),(BM2).

For an instance $\cP$ we say that an instance $\cP'$ is \emph{strictly smaller} 
than instance $\cP$ if $\razm(\cP')<\razm(\cP)$. 

\begin{lemma}\label{lem:to-block-minimality}
Let $\cP=(V,\cC)$ be a (2,3)-minimal instance. 
Then $\cP$ can be transformed to an equivalent block-minimal 
instance $\cP'$ by solving a quadratic number of strictly smaller CSPs.
\end{lemma}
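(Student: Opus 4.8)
The proof is an iterative "fixpoint" procedure: starting from a $(2,3)$-minimal instance $\cP$, we repeatedly pick a triple $(v,\al,\beta)\in\cW$ together with the relevant constraint data, check whether the minimality condition (BM1) or (BM2) already holds for it, and if not, tighten the offending constraint relation or domain. Each such check and each tightening step must be reducible to solving CSPs that are \emph{strictly smaller} in the sense of $\razm(\cdot)$, so that we may invoke the (implicit) inductive hypothesis that strictly smaller instances are solvable in polynomial time. The bound on the number of such auxiliary solves will be quadratic in the size of $\cP$ (more precisely, polynomial in $|V|$ and $|\zA|$, which for fixed $\zA$ is what "quadratic number of strictly smaller CSPs" is shorthand for), since there are only polynomially many triples $(v,\al,\beta)$, polynomially many constraints, polynomially many tuples in each constraint relation, and each tightening strictly decreases a bounded monotone quantity.

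First I would fix a $(2,3)$-minimal $\cP$ (restoring $(2,3)$-minimality by the standard propagation algorithms whenever it is lost, which is harmless since this never increases $\razm$). For each $(v,\al,\beta)\in\cW$ and each constraint $C=\ang{\bs,\rel}\in\cC$, consider $W=W_{v,\al\beta,\ov\beta}$ and distinguish the two cases of the definition. If $(v,\al,\beta)\notin\cW'$, then by Theorem~\ref{the:centralizer-alignement} the subinstance $\cP_W$ is $\ov\zeta$-aligned, hence splits into a disjoint union of instances whose domains are $\zeta_w$-blocks for $w\in W$; each such block is a proper subalgebra precisely because $\zeta(\al,\beta)\ne\zo$, so each piece has strictly smaller $\razm$ (the domains that realize $\razm(\cP)$ and are not semilattice free shrink), and minimality of $\cP_W$ can be tested, and enforced by deleting tuples from $\rel$, via solving these strictly smaller CSPs — one per tuple $\ba\in\rel$ and per component. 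If $(v,\al,\beta)\in\cW'$, then $\cP_W$ need not decompose, but we instead work with $\cP_W\fac{\ov\mu^Y}$ (case (BM1)) or $\cP_{W_{v,\al\beta,\ov\beta}\cap W_{w,\gm\dl,\ov\beta}}\fac{\ov\mu^Y}$ (case (BM2)); here the point is that taking the quotient by $\mu_u$ on the variables $u\in Y=\Razm(\cP)\setminus(\cdots)$ strictly shrinks the largest non-semilattice-free domains (each $\zA_u$ with $|\zA_u|=\razm(\cP)$ that lies in $Y$ is replaced by $\zA_u\fac{\mu_u}$, which is strictly smaller), so again these are strictly smaller CSPs. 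In each case, whenever the relevant quotient problem is not minimal we tighten: remove from $\rel$ every tuple $\ba$ that does not extend to a solution of the appropriate (strictly smaller) problem; this removal is sound because such an $\ba$ cannot be part of any global solution of $\cP$.

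After a tightening step we re-establish $(2,3)$-minimality and loop. Termination and the count: every tightening removes at least one tuple from some constraint relation (or at least one element from some domain), so the total number of tightening rounds is bounded by $\sum_{C}|\rel_C|$, which is polynomial in the instance size; within each round we iterate over the polynomially many pairs $\big((v,\al,\beta),C\big)$ (and the polynomially many $(w,\gm,\dl)$ in case (BM2)) and over the polynomially many tuples $\ba\in\rel$, making one strictly-smaller-CSP call each. Multiplying these polynomial factors gives the claimed (quadratic-order, for fixed $\zA$) bound on the number of strictly smaller CSPs solved. When the loop halts, no $(v,\al,\beta)$ and no constraint witnesses a failure of (BM1) or (BM2), so the resulting instance $\cP'$ is block-minimal; and it is equivalent to $\cP$ because every tightening step was solution-preserving. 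Finally, $\cP'$ is still $(2,3)$-minimal, as required by the statement's standing conventions.

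The main obstacle — the step I expect to require the most care — is verifying that \emph{every} auxiliary problem that arises is genuinely strictly smaller, uniformly in the two subcases and in the interplay between $W_{v,\al\beta,\ov\beta}$, the chosen constraint $\bs$, and the set $Y$ over which the $\mu$-quotient is taken. In the $(v,\al,\beta)\notin\cW'$ branch one must confirm that the $\zeta(\al,\beta)$-blocks are proper subalgebras of the domains that attain $\razm(\cP)$ and are not semilattice-free (so the decomposition really lowers $\razm$); in the $\cW'$ branch one must confirm that $Y$ always contains enough of $\Razm(\cP)$ that the quotient $\cP_{\cdots}\fac{\ov\mu^Y}$ has strictly smaller $\razm$ — this is exactly why $Y$ is defined as $\Razm(\cP)$ minus the relevant scope/intersection rather than all of $V$, and checking that the definition is "tight enough" to both (a) keep the quotient problem strictly smaller and (b) still let the minimality of the quotient problem certify the minimality needed in (BM1)/(BM2) is the delicate bookkeeping at the heart of the lemma. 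Everything else is routine propagation and a counting argument.
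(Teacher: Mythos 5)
Your high-level plan — iterate over triples $(v,\al,\beta)$ and constraints, test the minimality condition in (BM1)/(BM2), delete non-extendible tuples, re-establish $(2,3)$-minimality, repeat — matches the paper's approach. But there is a genuine gap precisely at the step you yourself flag as the ``delicate bookkeeping,'' and as you have set things up that step would \emph{fail}, not just require care. In the $\cW'$ branches you claim that passing to $\cP_W\fac{\ov\mu^Y}$ already yields a strictly smaller instance because ``each $\zA_u$ with $|\zA_u|=\razm(\cP)$ that lies in $Y$ is replaced by $\zA_u\fac{\mu_u}$.'' But $Y=\Razm(\cP)-\bs$ in (BM1), so any $u\in\Razm(\cP)\cap\bs\cap W$ is excluded from $Y$ and retains its full domain $\zA_u$ of size $\razm(\cP)$. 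Hence $\razm(\cP_W\fac{\ov\mu^Y})$ need not drop, and you cannot recurse. The paper's missing ingredient is that, for each $\ba\in\rel$ being tested, one first \emph{fixes} the variables in $\bs\cap W$ (resp.\ $\bs\cap W\cap W_{w,\gm\dl}$ in (BM2)) to the corresponding coordinates of $\ba$, collapsing exactly those problematic domains to singletons, and only then passes to the $\ov\mu^Y$ quotient; together these two moves, plus the $\ov\zeta$-alignment decomposition on $W\cap W_{w,\gm\dl}$ from Theorem~\ref{the:centralizer-alignement}, make every auxiliary instance strictly smaller. You never mention this tuple-fixing, so your argument as stated does not close.

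A secondary issue: you describe the (BM2) check as being on the subinstance $\cP_{W_{v,\al\beta,\ov\beta}\cap W_{w,\gm\dl,\ov\beta}}\fac{\ov\mu^Y}$, but (BM2) actually requires minimality of $\cP_{W_{v,\al\beta,\ov\beta}}\fac{\ov\mu^Y}$ — on all of $W_{v,\al\beta,\ov\beta}$, with only the \emph{exponent set} $Y$ defined via the intersection with $W_{w,\gm\dl,\ov\beta}$. This matters because the reason the reduction works is that (i) variables outside $W\cap W_{w,\gm\dl}$ are either semilattice-free or made strictly smaller by the $\ov\mu^Y$ quotient, while (ii) the part on $W\cap W_{w,\gm\dl}$ decomposes via alignment since $(w,\gm,\dl)\notin\cW'$; one then solves one strictly smaller CSP per alignment component. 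Getting the subinstance wrong obscures exactly why the pieces fit together. I'd suggest redoing the $\cW'$ branches with the tuple-fixing step made explicit and the (BM2) subinstance corrected.
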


\begin{proof}
To establish block-minimality of $\cP$, for every 
$(v,\al,\beta)\in\cW$ (let $W=W_{v,\al\beta}$), 
we need to check if the problems given in conditions (BM1),(BM2) are minimal.
If they are then $\cP$ is block-minimal, otherwise some tuples can be 
removed from some constraint relation $\rel$ (the set of tuples that remain 
in $\rel$ is always a subalgebra, as is easily seen), and the instance $\cP$ 
tightened, in which case we need to repeat the procedure with the tightened 
instance. Therefore we just need to show how to reduce solving those subproblems 
to solving strictly smaller CSPs. 

For $C=\ang{\bs,\rel}\in\cC$ and $\ba\in\rel$ let $\cP'$ be the 
problem obtained as follows: fix the values of 
variables from $\bs\cap W$, or from $\bs\cap W\cap W_{w,\gm\dl}$
in the case of (BM2) to those of $\ba$. If the resulting problem is
$\cP''$ then set $\cP'=\cP''\fac{\ov\mu^Y}$, where $Y$ is either
empty, if $(v,\al,\beta)\not\in\cW'$, or $Y=\Razm(\cP)-\bs$, if 
$(v,\al,\beta)\in\cW'$ in (BM1), or $Y=\Razm(\cP)-(W\cap M_{w,\gm\dl})$  
in (BM2). In the first case, by Theorem~\ref{the:centralizer-alignement}
$\cP'$ is a disjoint union of instances $\vc\cP\ell$ and 
$\razm(\cP_i)<\razm(\cP)$. In the second case the domains of
variables from $\bs\cap W$ have cardinality 1, and the domain
of each of the remaining variables either is semilattice free, or is smaller 
than $\razm(\cP)$. Finally, in the last case the domain of each of the variables 
outside of $W\cap W_{w,\gm\dl}$ is either semilattice free or smaller than 
$\razm(\cP)$. Also, by Theorem~\ref{the:centralizer-alignement} 
$\cP_{W\cap W_{w,\gm\dl}}$ is a disjoint union of
instances with domains of smaller size. Let $\cP_1\zd\cP_k$ be these 
disjoint instances. Then $\cP'$ can be reduced to solving the instances
$\cP'_1\zd\cP'_k$ obtained from $\cP'$ by restricting 
$\cP'_{W\cap W_{w,\gm\dl}}$ to $\cP_i$. This completes the proof.
\end{proof}

Let $\cP=(V,\cC)$ be a subdirectly irreducible (2,3)-minimal instance. Let 
$\Centr(\cP)$\label{not:Centr} denote the set of variables $v\in V$ such 
that $\zeta(\zz_v,\mu_v)=\zo_v$. Let $\mu^*_v=\mu_v$ if 
$v\in\Razm(\cP)\cap\Centr(\cP)$ and $\mu^*_v=\zz_v$ otherwise. 

We consider several cases and indicate what kind of reductions or solution 
algorithms we intend to use in each case.

\paragraph{Case 1: Semilattice free domains.}
If all domains of $\cP$ are semilattice free then $\cP$ can be solved in 
polynomial time, using the few subpowers algorithm, as shown in 
\cite{Bulatov16:restricted}.

\paragraph{Case 2: Collapsing trivial centralizers.}
If $\mu^*_v=\zz_v$ for all $v\in V$, block-minimality guarantees that a 
solution exists, and we can use Lemma~\ref{lem:to-block-minimality} to solve
the instance. 

\begin{theorem}\label{the:non-central}
If $\cP$ is subdirectly irreducible, (2,3)-minimal, block-minimal, and 
$\Razm(\cP)\cap\Centr(\cP)=\eps$, then $\cP$ has a solution.
\end{theorem}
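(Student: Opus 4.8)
The plan is to build, starting from the trivial $\ov\beta$-strategy that exists for any block-minimal instance (with $\beta_v=\zo_v$, $B_v=\zA_v$), a decreasing sequence of $\ov\beta$-strategies and reach one with all $\gm_v=\zz_v$; by condition (S3) such a strategy yields a solution of $\cP$. The engine of the argument is a \emph{reduction lemma}: given a $\ov\beta$-strategy $\cR$ with respect to $(B_v)_{v\in V}$ and $\ov\beta\neq\ov\zz$, pick a coordinate $v$ where $\beta_v$ is nontrivial, pick a maximal congruence $\beta'_v\prec\beta_v$ of $\zA_v$, and use the chaining/collapsing machinery (Lemmas~\ref{lem:collapsing}, \ref{lem:relative-symmetry}, the Congruence Lemma~\ref{lem:affine-link}, and Lemma~\ref{lem:S7}) to replace $\beta_v$ by $\beta'_v$ and $B_v$ by a suitable $\beta'_v$-subblock $D_v$ lying in a nontrivial as-component, obtaining a $\ov\beta'$-strategy. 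Iterating drives every $\beta_v$ down to $\zz_v$.

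First I would set up the bookkeeping: recall that $\cP$ is subdirectly irreducible, so the only obstruction to pushing $\beta_v$ below $\mu_v$ all the way to $\zz_v$ is the prime interval $(\zz_v,\mu_v)$ itself, and the hypothesis $\Razm(\cP)\cap\Centr(\cP)=\emptyset$ guarantees that no $v$ with a largest non-semilattice-free domain has $\zeta(\zz_v,\mu_v)=\zo_v$. This is the point where the hypothesis is used, exactly as flagged in the outline: if $\beta_v=\mu^*_v=\mu_v$ for such a $v$, conditions (S1)--(S3) would only give information modulo $\ov\mu^*$, and the reduction would stall. Ruling this out, every interval that needs to be collapsed is either of a type for which the Congruence Lemma gives option~(1) (rectangular behaviour), or one where $\zeta(\al,\beta)$ is not full and Lemma~\ref{lem:relative-symmetry} makes ``cannot be separated'' symmetric enough to produce the collapsing polynomial of Lemma~\ref{lem:collapsing}.

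The core of the work, and the main obstacle, is verifying that the transformed collection of relations $\cR'$ — defined by intersecting each $\rel_{C,v,\al\beta}$ with the new blocks $\ov D$ and re-extracting $\cR'$-compatible tuples — still satisfies all the strategy axioms (S1)--(S3), and in particular the extension property (S2) and the ``hits every block'' property (S3). For (S1) one checks that restricting to u-maximal elements of the new blocks keeps the $(2,3)$-strategy nonempty, using the Maximality Lemma~\ref{lem:to-max} to lift as-maximal witnesses. For (S2) one distinguishes the cases $(v,\al,\beta)\in\cW'$ versus not, exactly mirroring the case split in block-minimality, and uses Theorem~\ref{the:centralizer-alignement} to decompose $\cP_U$ into instances over $\zeta_w$-blocks when the interval is non-central, so that the extension of $\ba$ can be found inside the relevant small instance. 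For (S3) the key is the Rectangularity Corollary~\ref{cor:linkage-rectangularity} together with Proposition~\ref{pro:umax-rectangular}: after collapsing we need $\rel\cap\prod_{w\in\bs}B_w\neq\emptyset$, which follows because the collapsing polynomial is $\ov B$-preserving and fixes a chosen pair of as-maximal elements, so it maps a known tuple of $\rel\cap\prod B_w$ into the smaller box. The bulk of the proof will be the type-by-type bookkeeping: the Abelian (type~\two) case, handled via rectangularity of linked relations and the Congruence Lemma, and the non-Abelian case, handled by the bounded-width-style transformation from \cite{Bulatov16:restricted}; these are the two subcases the outline attributes to Sections~\ref{sec:type-2} and~\ref{sec:type-not-2}.

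Finally, once the reduction lemma is in place, the theorem itself is a short induction: the trivial strategy exists by block-minimality; each application of the reduction lemma strictly decreases $\sum_{v} \log|\zA_v\fac{\beta_v}|$ (or any similar well-founded measure on $\ov\beta$), since $\beta'_v\prec\beta_v$; after finitely many steps $\ov\beta=\ov\zz$; and then (S3) with $B_v=\{a_v\}$ singletons directly exhibits a family of compatible tuples that glue to a solution $\vf\in\cS$. I expect the delicate points to be (i) checking that the $D_v$ can always be chosen inside a \emph{nontrivial} as-component so that Lemma~\ref{lem:S7} applies — this needs Lemma~\ref{lem:type23}/\ref{lem:type2-condition} and the Maximality Lemma to propagate nontriviality across coordinates — and (ii) the compatibility of the many polynomials produced for different pairs $(a,b)$, which is resolved exactly as in the last part of the proof of Lemma~\ref{lem:relative-symmetry} by composing them in decreasing-image order.
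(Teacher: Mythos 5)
Your plan matches the paper's architecture exactly: start from the trivial $\ov\beta$-strategy guaranteed by block-minimality, reduce $\ov\beta$ step by step to $\ov\zz$, read off a solution from (S3), use the hypothesis $\Razm(\cP)\cap\Centr(\cP)=\eps$ precisely to prevent the iteration from stalling at $\ov\mu^*$, and split the reduction step by $\typ(\al,\beta_v)$ into the type-$\two$ case (Section~\ref{sec:type-2}) and the types-$\three,\four,\five$ case (Section~\ref{sec:type-not-2}). That is the paper's proof.

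A few of the details you sketch are, however, not the way the paper runs and would not survive as written. First, the transformed strategy $\cR'$ in the type-$\two$ case is \emph{not} defined by intersecting $\rel_{C,u,\gm\dl}$ with the new blocks and filtering; it is defined (conditions (R1),(R2)) as the set of tuples that extend to an $\cR$-compatible solution of the appropriate subproblem with $\vf(v)\in B'_v$. That the result coincides, on u-maximal elements, with an intersection with a congruence block is a nontrivial consequence (Lemma~\ref{lem:congruence-restriction}, Corollary~\ref{cor:cR-structure}), and taking the intersection as the definition would leave you without leverage on (S2),(S3). Second, the new congruences $\beta'_w$ for $w\ne v$, $w\in W_{v,\al\beta}$, are not free choices: they are forced to be the kernels of the maps $\nu_w$ that the Congruence Lemma~\ref{lem:affine-link} produces on $\rel^{\{v,w\},\cR}\fac\al$. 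Third, your argument for (S3) ("the collapsing polynomial maps a known tuple of $\rel\cap\prod B_w$ into the smaller box") does not work as stated: an $\al\beta$-collapsing polynomial in the sense of Lemma~\ref{lem:collapsing} is $\ov B$-preserving, not $\ov B'$-preserving, and it collapses \emph{other} prime intervals while keeping $(\al,\beta)$ alive, so it gives no reason for the image to land in $\prod B'_w$; the paper instead proves (S3) by reducing it to a variant of the (S2) argument through conditions (X1)--(X5). Finally, the non-Abelian case is a genuine two-step process (Step~1: shrink $B_v$ to the subalgebra generated by an as-component, re-establish (2,3)- and block-minimality, obtaining a near-strategy that misses (S5),(S6); Step~2: further shrink to an $\al$-block and recover (S5),(S6)), which a single-pass "bounded-width-style transformation" does not capture. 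These are exactly the places where the real work is; the reduction lemma is not a corollary of the ingredients you list but a long verification (Theorems~\ref{the:restricting-strategy} and~\ref{the:step-2-strategy} together with the supporting lemmas of Sections~\ref{sec:type-2}--\ref{sec:type-not-2}).
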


\paragraph{Case 3: Nontrivial centralizers.}
If $\Razm(\cP)\cap\Centr(\cP)=\eps$, we first solve the problem 
$\cP\fac{\mu^*}$, and then use Theorem~\ref{the:central} to reduce $\cP$ to
a strictly smaller instance. An efficient way to establish 1-minimality of 
$\cP\fac{\mu^*}$ is given in Theorem~\ref{the:algorithm}.

\begin{theorem}\label{the:central}
If $\cP\fac{\ov\mu^*}$ is 1-minimal, 
then $\cP$ can be reduced in polynomial time to a strictly smaller instance.
\end{theorem}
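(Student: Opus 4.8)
The statement claims that once $\cP\fac{\ov\mu^*}$ is $1$-minimal, the instance $\cP$ can be reduced in polynomial time to a strictly smaller instance. The route I would take follows the description already given in the outline (Section~\ref{sec:algorithm1}, the paragraph on nontrivial centralizers): find a suitable solution $\vf$ of $\cP\fac{\ov\mu^*}$, apply Maroti's multiplication transformation to obtain $\cP\cdot\vf$, and argue via Lemma~\ref{lem:centralizer-multiplication-p1} (equivalently Corollary~\ref{cor:centralizer-multiplication}) together with Lemma~\ref{lem:multiplication-p1} that $\cP\cdot\vf$ has a solution iff $\cP$ does, while $\razm(\cP\cdot\vf)<\razm(\cP)$. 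Throughout, the hypothesis $\Razm(\cP)\cap\Centr(\cP)\ne\eps$ is what makes $\mu^*_v$ nontrivial on the large non-semilattice-free domains, so that the multiplication actually shrinks those domains.

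\textbf{Step 1: Produce a good solution of $\cP\fac{\ov\mu^*}$.} First I would observe that $\cP\fac{\ov\mu^*}$ can itself be solved by recursion: for every $v\in V$ with $\zA_v$ not semilattice free, either $\razm(\cP\fac{\ov\mu^*})<\razm(\cP)$ (when $v\in\Razm(\cP)\cap\Centr(\cP)$, so $\zA_v\fac{\mu_v}$ is strictly smaller), or $v\notin\Centr(\cP\fac{\ov\mu^*})$; in either case the instance is handled by Case~1, Case~2 (Theorem~\ref{the:non-central}), or a strictly smaller recursive call. I would then need $\vf$ to be chosen so that for every variable $v$ with $\zA_v$ not semilattice free there is $a\in\zA_v$ with $\{a,\vf(v)\}$ (or $\{a,b\}$ for some $b\in\vf(v)$ when $\mu^*_v=\mu_v$) a semilattice edge; the existence of such a ``semilattice-ready'' solution follows from $1$-minimality of $\cP\fac{\ov\mu^*}$ combined with the fact that every domain's graph $\cG(\zA_v)$ is connected (Theorem~\ref{the:connectedness}) and Lemma~\ref{lem:multiplication-p1}/Proposition~\ref{pro:good-operation} guaranteeing that from any element we can walk along thin semilattice edges. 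Concretely, using $1$-minimality I can pick values for the $v\in\Razm(\cP)$ that lie ``below'' a semilattice edge, then propagate the remaining choices consistently.

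\textbf{Step 2: Apply the multiplication and verify correctness.} Given such $\vf$, define $\cP\cdot\vf=(V,\cC_\vf)$ where for $C=\ang{\bs,\rel}$ the new relation is $\rel\cdot\vf=\{\ba\cdot\bb\mid\ba\in\rel\}$ with $\bb\in\rel$ chosen so that $\bb[v]^{\mu^*_v}=\vf(v)$ for all $v\in\bs$ (possible since $\vf$ is a solution of $\cP\fac{\ov\mu^*}$); here $\cdot$ is the coordinatewise action of the binary term from Lemma~\ref{lem:multiplication-p1}. For soundness I would invoke Maroti's argument from \cite{Maroti10:tree}: any solution of $\cP$ multiplied by $\bb$-values yields a solution of $\cP\cdot\vf$, and conversely a solution of $\cP\cdot\vf$ lies inside the constraint relations of $\cP$ (the product of two tuples of $\rel$ is again in $\rel$ since $\cdot$ is a term operation). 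The nontrivial point, where Lemma~\ref{lem:centralizer-multiplication-p1} enters, is that on each domain $\zA_v$ with $v\in\Razm(\cP)\cap\Centr(\cP)$ — so $\zeta(\zz_v,\mu_v)=\zo_v$ and $\mu^*_v=\mu_v$ — multiplication by an element $a$ with $\{a,b\}$ a semilattice edge collapses each $\mu_v$-block to (essentially) a single element via the injective map $\sg$ of Proposition~\ref{pro:full-centralizer}; hence $\{a\cdot x\mid x\in\vf(v)\}$ is a single element, and the effective image of $\zA_v$ under the multiplication has size at most $|\zA_v\fac{\mu_v}|<|\zA_v|=\razm(\cP)$. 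For $v\in\Razm(\cP)\setminus\Centr(\cP)$ (so $\mu^*_v=\zz_v$), multiplication by an $a$ with $\{a,\vf(v)\}$ a genuine semilattice edge drops $\vf(v)$ out of the image, again strictly shrinking a largest non-semilattice-free domain. Either way $\razm(\cP\cdot\vf)<\razm(\cP)$.

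\textbf{Main obstacle.} The delicate part is Step~1: guaranteeing the \emph{existence} of a solution $\vf$ of $\cP\fac{\ov\mu^*}$ that is simultaneously ``semilattice-ready'' at \emph{every} non-semilattice-free variable, not just one. A naive fix-and-propagate can fail because fixing $\vf(v)$ below a semilattice edge at one variable may be incompatible, under $(2,3)$-minimality, with doing so elsewhere. The resolution I expect is to use the Maximality Lemma~\ref{lem:to-max} and quasi-majority (Theorem~\ref{the:pseudo-majority}) to move, within the solution space of $\cP\fac{\ov\mu^*}$, to a solution whose restriction to the large domains is as-maximal there; then $1$-minimality forces every such domain to still contain, below the chosen value, the endpoint of a thin semilattice edge (since $\zA_v$ is not semilattice free and $\cG(\zA_v)$ is connected). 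A careful bookkeeping of which domains actually get ``small'' after multiplication — only those in $\Razm(\cP)$ need to shrink, and they do by the centralizer collapse above — then closes the argument; the running-time bound is immediate since we make one recursive call on $\cP\fac{\ov\mu^*}$ and one on $\cP\cdot\vf$, both strictly smaller.
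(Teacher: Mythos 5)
Your plan is on the right track — the core mechanism is indeed Maroti's retraction $p^\vf_v(x)=q_v^k(x)$ with $q_v(x)=x\cdot b_v$, $b_v\in\vf(v)$, together with Corollary~\ref{cor:centralizer-multiplication} (coming from Proposition~\ref{pro:full-centralizer}) to ensure $q_v$ is well-defined independently of the choice of $b_v$ inside a $\mu^*_v$-block, and Lemma~\ref{lem:consistent-mapping} to check consistency. But what you flag as the ``main obstacle'' is exactly where you and the paper part ways, and your proposed workaround is both underdeveloped and unnecessary.

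You try to find a \emph{single} solution $\vf$ of $\cP\fac{\ov\mu^*}$ that is ``semilattice-ready'' simultaneously at every non-semilattice-free variable, and you gesture at the Maximality Lemma and quasi-majority to produce it. The paper (Lemma~\ref{lem:good-retraction}) sidesteps this entirely: for each $v\in V^*$ separately, $1$-minimality of $\cP\fac{\ov\mu^*}$ guarantees a solution $\vf_v$ with $\vf_v(v)=a^{\mu^*_v}$ for a chosen $a$ that sits above some $b$ with $b\le a$ and $a\not\eqc{\mu^*_v}b$; the induced retraction $p^{\vf_v}$ then strictly shrinks $\zA_v$ (since $b\notin\Im(p^{\vf_v}_v)$), while on every other $w\in V^*$ it either shrinks $\zA_w$ or is the identity (because idempotent multiplication never moves down in the thin-semilattice order). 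Since a composition of consistent retractions is consistent, composing the $p^{\vf_v}$ over all $v\in V^*$ gives a single retraction that shrinks every domain in $V^*$ at once. This also corrects your running-time remark: what is needed is a linear number of calls (one per $v\in V^*$, each of the form $(\cP_{(v,a^{\mu^*_v})})\fac{\ov\mu^*}$), not a single one. Finally, your claim that the image of $\zA_v$ has size at most $|\zA_v\fac{\mu_v}|$ overstates what is used or needed — the argument only requires strict shrinkage of each large non-semilattice-free domain, which the displaced element $b$ already delivers.
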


With the reductions above a solution algorithm goes as shown in 
Algorithm~\ref{alg:csp}, we reproduce it here for convenience.
\begin{algorithm}
\caption{Procedure {\sf SolveCSP}}
\label{alg:csp}
\begin{algorithmic}[1] 
\REQUIRE A CSP instance $\cP=(V,\cC)$ over $\cA$
\ENSURE A solution of $\cP$ if one exists, `NO' otherwise
\IF{all the domains are semilattice free}
\STATE Solve $\cP$ using the few subpowers algorithm and RETURN the answer
\ENDIF
\STATE Transform $\cP$ to a subdirectly irreducible, block-minimal and 
(2,3)-minimal instance
\STATE $\mu^*_v=\mu_v$ for $v\in\Razm(\cP)\cap\Centr(\cP)$ and 
$\mu^*_v=\zz_v$ otherwise
\STATE $\cP^*=\cP\fac{\ov\mu^*}$
\STATE /* the global 1-minimality of $\cP^*$
\FOR{every $v\in V$ and $a\in\zA_v\fac{\mu^*_v}$}
\STATE $\cP'=\cP^*_{(v,a)}$ \ \ \ \ /* Add constraint $\ang{(v),\{a\}}$
fixing the value of $v$ to $a$
\STATE Transform $\cP'$ to a subdirectly irreducible, (2,3)-minimal instance
$\cP''$
\STATE If $\razm(\cP'')<\razm(\cP)$ call {\sf SolveCSP} on $\cP''$ and 
flag $a$ if $\cP''$ has no solution
\STATE Establish block-minimality of $\cP''$; if the problem changes,
return to Step~10
\STATE If the resulting instance is empty, flag the element $a$ 
\ENDFOR
\STATE If there are flagged values, tighten the instance by removing the 
flagged elements and start over
\STATE Use Theorem~\ref{the:central} to reduce $\cP$ to an
instance $\cP'$ with $\razm(\cP')<\razm(\cP)$
\STATE Call {\sf SolveCSP} on $\cP'$ and RETURN the answer
\end{algorithmic}
\end{algorithm}

\begin{theorem}\label{the:algorithm}
Algorithm SolveCSP (Algorithm~\ref{alg:csp}) correctly solves every instance 
from $\CSP(\cA)$ and runs in polynomial time.
\end{theorem}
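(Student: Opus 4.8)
The argument is an induction on $\razm(\cP)$, establishing correctness and the polynomial time bound simultaneously. If all domains of $\cP$ are semilattice free ($\razm(\cP)$ undefined) the instance enters Case~1 of the grand scheme and is solved correctly by the few subpowers algorithm of \cite{Bulatov16:restricted}; this is the base case. For the inductive step, the first observation is that \emph{every} recursive invocation of {\sf SolveCSP} — the explicit ones when $\razm(\cP'')<\razm(\cP)$ and in the final reduction step, together with those hidden inside the establishment of block-minimality via Lemma~\ref{lem:to-block-minimality} — is made on an instance $\cP''$ with $\razm(\cP'')<\razm(\cP)$, so the induction hypothesis applies to all of them.

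\emph{Running time.} Since $\razm(\cP)\le|\bigcup\cA|$ and each recursive call strictly decreases $\razm$, the recursion depth is bounded by the constant $|\bigcup\cA|$. Within one invocation and apart from recursive calls: the transformation of Step~4 to a subdirectly irreducible (Lemma~\ref{lem:sub-irreducible}), (2,3)-minimal (standard propagation), block-minimal (Lemma~\ref{lem:to-block-minimality}) instance is polynomial; the {\sf for} loop ranges over polynomially many pairs $(v,a)$ and performs polynomially much work per iteration besides the recursive calls; the ``start over'' step strictly decreases the total number of domain elements and hence is executed polynomially often; and the single application of Theorem~\ref{the:central} is polynomial. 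As the branching at each level of recursion is polynomial and the depth is a constant, the total running time is polynomial.

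\emph{Soundness.} By the induction hypothesis the recursive calls are correct, and in Case~1 the few subpowers routine is correct. Every other operation the algorithm uses — tightening, passage to a subdirectly irreducible (2,3)-minimal instance, establishing block-minimality (Lemma~\ref{lem:to-block-minimality} returns an \emph{equivalent} instance), and the reduction of Theorem~\ref{the:central} (the Maroti-style multiplication $\cP\mapsto\cP\cdot\vf$ of \cite{Maroti10:tree}, which lifts solutions explicitly) — produces an instance solvable exactly when $\cP$ is, with a routine way to lift a solution back to $\cP$. Hence whenever {\sf SolveCSP} outputs a mapping it is a genuine solution of $\cP$, and it reports `NO' only after some constraint relation has become empty under answer-preserving transformations.

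\emph{Completeness, and the main obstacle.} Suppose $\cP$ is solvable. If all domains are semilattice free, Step~2 returns a solution; otherwise after Step~4 we have a subdirectly irreducible, (2,3)-minimal, block-minimal, still solvable instance. If $\Razm(\cP)\cap\Centr(\cP)=\eps$ (equivalently $\mu^*_v=\zz_v$ for all $v$, so $\cP^*=\cP$), Theorem~\ref{the:non-central} already guarantees a solution, and the {\sf for} loop performs the standard self-reduction: fixing $v=a$ and transforming to a subdirectly irreducible (2,3)-minimal $\cP''$, the algorithm either recurses when $\razm(\cP'')<\razm(\cP)$ or establishes block-minimality of $\cP''$, which — since the case condition $\Razm\cap\Centr=\eps$ is preserved by this transformation — certifies solvability of $\cP''$ by Theorem~\ref{the:non-central}; only truly unextendable values get flagged, so after the tightening $\cP^*=\cP$ has become $1$-minimal and remains solvable, and iteration drives it to a fully reduced, trivially solvable, instance from which a solution is read off and returned. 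If $\Razm(\cP)\cap\Centr(\cP)\ne\eps$, then $\cP^*=\cP\fac{\ov\mu^*}$ is either strictly smaller than $\cP$ or satisfies $\Razm\cap\Centr=\eps$, so the loop solves it (by recursion or by the previous sub-case) and makes it $1$-minimal; Theorem~\ref{the:central} then reduces $\cP$ to a strictly smaller equivalent instance, solved by the final recursive call, and a solution of $\cP$ is obtained by lifting. The crux — and the step I expect to be the main obstacle — is precisely this orchestration: checking that after the loop $\cP^*$ is genuinely $1$-minimal and still equivalent to $\cP$ so that Theorem~\ref{the:central} is applicable and answer-preserving, that $\Razm\cap\Centr=\eps$ is an invariant of the variable-fixing transformation inside the loop (so Theorem~\ref{the:non-central} can certify solvability without a superfluous recursive call), and that the ``start over'' loop terminates with a solvable instance; the running-time and soundness parts are comparatively routine once these bookkeeping facts are in place.
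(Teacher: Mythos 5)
Your argument follows essentially the same path as the paper's proof: induction on $\razm$, the for-loop as a global $1$-minimality check for $\cP^*$ certified either by Theorem~\ref{the:non-central} or by recursion when $\razm(\cP'')<\razm(\cP)$, and the single reduction of Theorem~\ref{the:central} followed by one more recursive call. The invariant you flag as the potential obstacle, $\Razm(\cP'')\cap\Centr(\cP'')=\eps$ when $\razm(\cP'')=\razm(\cP)$, does hold and is worth recording — any $w\in\Razm(\cP'')$ must have domain of size $\razm(\cP)$ inside $\H\S(\zA_w\fac{\mu^*_w})$, which forces $\mu^*_w=\zz_w$ and the domain to be all of $\zA_w$, hence $w\in\Razm(\cP)\setminus\Centr(\cP)$ and $w\notin\Centr(\cP'')$ — and the one step you leave implicit is the paper's observation that the unflagged values for each $v$ form a subalgebra of $\zA_v\fac{\mu^*_v}$ (they are the projection of $\cS_{\cP^*}$), which is what keeps the tightened instance inside $\CSP(\cA)$ after Step~15.
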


\begin{proof}
By the results of \cite{Bulatov16:restricted} the algorithm correctly solves
the given instance $\cP$ in polynomial time if the conditions of Step~1 are 
true. Lemma~\ref{lem:to-block-minimality} implies
that Steps~4 and~12 can be completed by recursing to strictly smaller 
instances. 

Next we show that the for-loop in Steps 8-14 checks if 
$\cP^*=\cP\fac{\ov\mu^*}$ is globally 1-minimal. For this we need to 
verify that a value $a$ is flagged if and only if $\cP^*$ has no solution 
$\vf$ with $\vf(v)=a$, and therefore if no values are flagged then 
$\cP^*$ is globally 1-minimal.
If $\vf(v)=a$ for some solution $\vf$ of $\cP^*$, then $\vf$
is a solution $\cP'$ constructed in Step~9. In this case Steps~11,12 cannot
result in an empty instance. 
Suppose $a\in\zA_v\fac{\mu^*_v}$ is not flagged. If
$\razm(\cP'')<\razm(\cP)$ this means that $\cP''$ and therefore $\cP'$
has a solution. Otherwise this means that establishing block-minimality of 
$\cP''$ is successful. In this case $\cP''$ has a solution by 
Theorem~\ref{the:non-central}, because $\Razm(\cP'')\cap\Centr(\cP'')=\eps$. 
This in turn implies that $\cP'$ has a solution.  
Observe also that the set of unflagged values for each variable $v\in V$
is a subalgebra of $\zA\fac{\mu^*}$. Indeed, the set of solutions of $\cP^*$
is a subalgebra $\cS^*$ of $\prod_{v\in V}\zA\fac{\mu^*}$, and the set of 
unflagged values is the projection of $\cS^*$ of the coordinate position $v$.

Finally, if Steps~8--15 are completed without restarts, Steps~16,17 
can be completed by Theorem~\ref{the:central} and recursing on $\cP'$ 
with $\razm(\cP')<\razm(\cP)$.

To see that the algorithm runs in polynomial time it suffices to observe that\\[1mm] 
(1) The number of restarts in Steps~4 and~15 is at most linear, as the 
instance becomes smaller after every restart; therefore the number of
times Steps~4--15 are executed together is at most linear.\\[1mm]
(2) The number of iterations of the for-loop in Steps 8--14 is linear.\\[1mm]
(3) The number of restarts in Steps~10 and~12 is at most linear, as the
instance becomes smaller after every iteration.\\[1mm]
(4) Every call of SolveCSP when establishing block-minimality in Steps~4, 
and~12 is made on an instance strictly smaller than $\cP$, and therefore
depth of recursion is bounded by $\razm(\cP)$ in Step~4,11,12 and~17. \\[2mm]
Thus a more thorough estimation gives a bound on the running time of 
$O(n^{3k})$, where $k$ is the maximal size of an algebra in $\cA$.
\end{proof}

\subsection{Proof of Theorem~\ref{the:central}}\label{sec:theorem-47}

Following \cite{Maroti10:tree} let $\cP=(V,\cC)$ be an instance and 
$p_v\colon\zA_v\to\zA_v$, $v\in V$. 
Mappings $p_v$, $v\in V$, are said to be 
\emph{consistent}\index{consistent mappings} if for any
$\ang{\bs,\rel}\in\cC$, $\bs=(\vc vk)$, and any tuple $\ba\in\rel$ the
tuple $(p_{v_1}(\ba[1])\zd p_{v_k}(\ba[k]))$ belongs to $\rel$. 
It is easy to see that the composition of two families of consistent mappings 
is also a consistent mapping. 
For consistent idempotent mappings $p_v$ by $p(\cP)$\label{not:p-cP} 
we denote the \emph{retraction}\index{retraction} of $\cP$, that is, $\cP$ 
restricted to the images of $p_v$. In this case $\cP$ has a solution if and only if 
$p(\cP)$ has, see \cite{Maroti10:tree}. 

Let $\vf$ be a solution of $\cP\fac{\ov\mu^*}$. We define 
$p^\vf_v:\zA_v\to\zA_v$ as follows: $p^\vf_v=q_v^k$, where 
$q_v(a)=a\cdot b_v$, element $b_v$ is any 
element of $\vf(v)$, and $k$ is such that $q_v^k$ is idempotent for all $v\in V$. 
Note that by Corollary~\ref{cor:centralizer-multiplication} this mapping is 
properly defined even if $\mu^*_v\ne\zz_v$.

\begin{lemma}\label{lem:consistent-mapping}
Mappings $p^\vf_v$, $v\in V$, are consistent. 
\end{lemma}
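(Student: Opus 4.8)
The statement to prove is that the mappings $p^\vf_v$, $v\in V$, are consistent, i.e.\ that for every constraint $\ang{\bs,\rel}\in\cC$ with $\bs=(\vc vk)$ and every $\ba\in\rel$, the tuple $(p^\vf_{v_1}(\ba[1])\zd p^\vf_{v_k}(\ba[k]))$ lies in $\rel$. Since each $p^\vf_v$ is a power $q_v^k$ of the map $q_v(a)=a\cdot b_v$ with $b_v\in\vf(v)$, and compositions of consistent families are consistent, it suffices to prove that the single family $(q_v)_{v\in V}$ is consistent.

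\textbf{First step: reduce to the defining multiplication property.} The plan is to fix a constraint $C=\ang{\bs,\rel}$ and a tuple $\ba\in\rel$, and to exhibit a tuple $\bb\in\rel$ with $\bb[j]^{\mu^*_{v_j}}=\vf(v_j)$ for each $j$ — this exists because $\vf$ is a solution of $\cP\fac{\ov\mu^*}$, so the $\mu^*$-class tuple $(\vf(v_1)\zd\vf(v_k))$ belongs to $\rel\fac{\ov\mu^*_\bs}$, hence some preimage $\bb\in\rel$ works. Then $\ba\cdot\bb\in\rel$ because the dot operation of Proposition~\ref{pro:good-operation} is a term operation, hence a polymorphism of $\rel$. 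The components of $\ba\cdot\bb$ are $\ba[j]\cdot\bb[j]$. The point is to show $\ba[j]\cdot\bb[j]=\ba[j]\cdot b_{v_j}=q_{v_j}(\ba[j])$ whenever $\mu^*_{v_j}=\zz_{v_j}$, and $\ba[j]\cdot\bb[j]\eqc{\mu^*_{v_j}}\ba[j]\cdot b_{v_j}$ when $\mu^*_{v_j}=\mu_{v_j}$. The first case is immediate since then $\bb[j]=\vf(v_j)=\{b_{v_j}\}$ is a singleton, so $\bb[j]=b_{v_j}$. The second case is exactly where Corollary~\ref{cor:centralizer-multiplication} enters: when $\mu^*_{v_j}=\mu_{v_j}$ we have $v_j\in\Centr(\cP)$, i.e.\ $\zeta(\zz_{v_j},\mu_{v_j})=\zo_{v_j}$, and $\bb[j]\eqc{\mu_{v_j}}b_{v_j}$ both lie in $\vf(v_j)$, so the corollary gives $\ba[j]\cdot\bb[j]\eqc{\mu_{v_j}}\ba[j]\cdot b_{v_j}$.

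\textbf{Second step: descend to the quotient and lift back.} This only shows $\ba\cdot\bb$ agrees with $(q_{v_1}(\ba[1])\zd q_{v_k}(\ba[k]))$ modulo $\ov\mu^*_\bs$, not on the nose, so I cannot conclude $(q_{v_1}(\ba[1])\zd q_{v_k}(\ba[k]))\in\rel$ directly. The fix is to work in the image of a suitable consistent idempotent family rather than all of $\zA_v$; more precisely, I would argue that it suffices to establish consistency of the family $(q_v)$ restricted to a system of representatives, or — cleaner — to observe that the relevant statement is really about $p^\vf_v$ being idempotent and consistent \emph{as a retraction}, so one checks consistency directly for $p^\vf_v=q_v^N$ for large $N$. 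Passing to $\cP\fac{\ov\mu^*}$, the mappings $q_v$ induce well-defined maps $\bar q_v$ on $\zA_v\fac{\mu^*_v}$ (well-definedness is again Corollary~\ref{cor:centralizer-multiplication}), and there $q_v$ is literally multiplication by the element $\vf(v)$, which fixes $\vf(v)$. Consistency of $(\bar q_v)$ on $\cP\fac{\ov\mu^*}$ is then the $\mu^*=\zz$ version of the same computation, which is clean. To transfer back, I use that a sufficiently high power $p^\vf_v=q_v^N$ sends every element into its $\mu^*$-class-wise image under iteration, and that within each $\mu^*_v$-block things are controlled: the key observation is that $\ba[j]\cdot b_{v_j}$ and any other element of $(\ba[j]\cdot b_{v_j})^{\mu^*_{v_j}}$ that can actually occur in $\rel$ alongside the other coordinates give the same tuple after one more multiplication, by iterating the Corollary.

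\textbf{Main obstacle.} The real difficulty is bridging the gap between \emph{equal modulo $\ov\mu^*$} and \emph{equal in $\rel$}: the naive computation only places $\ba\cdot\bb$ in $\rel$, and $\ba\cdot\bb$ is merely $\ov\mu^*$-equivalent coordinatewise to the tuple we want. I expect the resolution to rest on the idempotence of $p^\vf_v$ together with the fact, following from Corollary~\ref{cor:centralizer-multiplication} and idempotence of the dot operation on thin semilattice edges, that $q_v(q_v(a))=q_v(a)\cdot b_v$ lands in a single fixed element of the $\mu^*_v$-class of $a\cdot b_v$ independent of which representative $\bb[j]$ was chosen — so that after taking the power $N$ the tuple $(p^\vf_{v_1}(\ba[1])\zd p^\vf_{v_k}(\ba[k]))$ coincides with $p^\vf$ applied to the tuple $\ba\cdot\bb\cdot\bb\cdots\bb\in\rel$, which is in $\rel$ because dot is a polymorphism. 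Making this argument about "the map $q_v$ stabilizes on the $\mu^*_v$-block" precise — essentially that $q_v$ composed with itself becomes constant on each $\mu^*_v$-block on the relevant subset — is the step I would need to carry out carefully, and it is exactly what Corollary~\ref{cor:centralizer-multiplication} is designed to supply.
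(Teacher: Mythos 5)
Your first step is on the right track and essentially follows the paper's argument — choose $\bb\in\rel$ lying above $\vf$ (which exists since $\vf$ solves $\cP\fac{\ov\mu^*}$), observe $\ba\cdot\bb\in\rel$ because the dot operation is a polymorphism, and apply Corollary~\ref{cor:centralizer-multiplication} coordinatewise. However, you misread the conclusion of that corollary, and this misreading creates the phantom gap that your second step and "main obstacle" paragraph then fail to close.

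Corollary~\ref{cor:centralizer-multiplication} is applied with $\al=\zz_{v_j}$ and $\beta=\mu_{v_j}$: when $\mu^*_{v_j}=\mu_{v_j}$ we have $v_j\in\Centr(\cP)$, i.e.\ $\zeta(\zz_{v_j},\mu_{v_j})=\zo_{v_j}$, and the corollary then says that $\bb[j]\eqc{\mu_{v_j}}b_{v_j}$ implies $\ba[j]\cdot\bb[j]\eqc{\zz_{v_j}}\ba[j]\cdot b_{v_j}$ — that is, \emph{equality}, not merely congruence modulo $\mu^*_{v_j}$. You wrote $\ba[j]\cdot\bb[j]\eqc{\mu^*_{v_j}}\ba[j]\cdot b_{v_j}$, which is strictly weaker and is the source of your trouble. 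With the corollary applied correctly, the tuple $\ba\cdot\bb$ is \emph{exactly} $(q_{v_1}(\ba[1])\zd q_{v_k}(\ba[k]))$, hence lies in $\rel$; the choice of $\bb$ is immaterial, and iterating the multiplication stays in $\rel$, which finishes the consistency of $q_v^N=p^\vf_v$ immediately. Your second step — descending to the quotient, lifting back, appealing to idempotence and to $q_v$ "stabilizing on $\mu^*_v$-blocks" — is not needed, and as written it does not in fact establish the exact membership it is trying to recover; it remains at the level of a plan rather than an argument and would not resolve the gap even if the gap were real.
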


\begin{proof}
Take any $C=\ang{\bs,\rel}\in\cC$. Since $\vf$ is a solution of 
$\cP\fac{\ov\mu^*}$, there is $\bb\in\rel$ such that $\bb[v]\in\vf(v)$ 
for $v\in\bs$. Then for any $\ba\in\rel$, 
$q(\ba)=\ba\cdot\bb\in\rel$, and this product does not depend on the choice
of $\bb$, as it follows from Corollary~\ref{cor:centralizer-multiplication}. 
Iterating this operation also produces a tuple from $\rel$.
\end{proof}

We would like to use the above reduction to reduce $\cP$ to a problem $\cP'$
such that $\razm(\cP')<\razm(\cP)$. If $\vf$ is such that for $v\in\Razm(\cP)$ 
there is $a\in\zA_v$ with $a^{\mu^*_v}\le\vf(v)$ and $a\not\in\vf(v)$, then 
$|p^\vf_v(\zA_v)|<|\zA_v|$. Also, observe that if $|p^\vf_v(\zA_v)|=|\zA_v|$,
then $p^\vf_v$ is the identity mapping, that is $p^\vf_v(\zA_v)=\zA_v$. 
If $\zA_v$ is semilattice free then $p^\vf_v$ is the identity mapping by
Proposition~\ref{pro:good-operation}.
Let $V^*$ be the set of variables $v\in V$ such that $\zA_v\fac{\mu^*_v}$ 
is not semilattice free. 

\begin{lemma}\label{lem:good-retraction}
There are consistent mappings $p_v$, $v\in V$, such that 
for any $v\in V^*$ we have $|p_v(\zA_v)|<|\zA_v|$. Moreover, such  
mappings can be found solving a linear number of instances of the form
$(\cP_{(v,a^{\mu^*_v})})\fac{\ov\mu^*}$.
\end{lemma}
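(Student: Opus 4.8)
\textbf{Proof plan for Lemma~\ref{lem:good-retraction}.}

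The plan is to first reduce the problem to a single ``test'' instance per variable, using the reductions already developed. Recall that for every $v\in V^*$, i.e.\ whenever $\zA_v\fac{\mu^*_v}$ is not semilattice free, we want to produce a solution $\vf$ of $\cP\fac{\ov\mu^*}$ that, for each $v\in\Razm(\cP)$, hits a $\mu^*_v$-block properly, so that $p^\vf_v$ is a proper retraction. First I would observe that by Proposition~\ref{pro:good-operation} and Lemma~\ref{lem:multiplication-p1} (the multiplication operation $\cd$), since $\zA_v\fac{\mu^*_v}$ is not semilattice free, there is a thin semilattice edge $\{a^{\mu^*_v},b^{\mu^*_v}\}$ in $\zA_v\fac{\mu^*_v}$, hence an element $a\in\zA_v$ with $a^{\mu^*_v}\sqq\vf(v)$ but $a\notin\vf(v)$ is exactly what makes $q_v(\zA_v)=\zA_v\cd b_v$ smaller than $\zA_v$ when $b_v\in\vf(v)$ realizes the ``upper'' end of that edge. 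So it suffices to find one solution $\vf$ of $\cP\fac{\ov\mu^*}$ such that for every $v\in V^*$ the value $\vf(v)$ is \emph{strictly above} some element in the thin-semilattice order of $\zA_v\fac{\mu^*_v}$, equivalently $\vf(v)$ is not minimal.

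Next I would set up the linear family of auxiliary instances. For each $v\in V^*$ pick a thin semilattice edge $c_v\le d_v$ in $\zA_v\fac{\mu^*_v}$ (so $d_v$ is a ``non-minimal'' target), and consider the instance $\cP_v:=(\cP_{(v,a^{\mu^*_v})})\fac{\ov\mu^*}$ where $a\in\zA_v$ is chosen with $a^{\mu^*_v}=d_v$ — that is, we pin variable $v$ to the value $d_v$ in the quotient instance. By Theorem~\ref{the:central}'s hypothesis $\cP\fac{\ov\mu^*}$ is $1$-minimal, and after pinning $v$ to $d_v$ the instance $\cP_v$ is strictly smaller or already resolvable by the ``trivial centralizers'' case, so by Lemma~\ref{lem:to-block-minimality} / Theorem~\ref{the:non-central} (or by recursion on $\razm$) we can decide $\cP_v$ and, if it is satisfiable, extract a solution $\vf^v$. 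The point of $1$-minimality of $\cP\fac{\ov\mu^*}$ is precisely that the value $d_v$ \emph{is} extendible: $d_v\in\cS_v$ for the quotient instance, so each $\cP_v$ is nonempty and $\vf^v$ exists. This gives, for each $v\in V^*$, a solution $\vf^v$ of $\cP\fac{\ov\mu^*}$ with $\vf^v(v)=d_v$ non-minimal, using a linear number of strictly-smaller subproblem solves.

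The final step is to glue these $|V^*|$ solutions into a single solution $\vf$ that is simultaneously non-minimal at every $v\in V^*$, and then compose the corresponding retractions. Here I would apply the multiplication operation $\cd$ (Lemma~\ref{lem:multiplication-p1}, Proposition~\ref{pro:good-operation}) coordinatewise to the tuples $\vf^v$, $v\in V^*$: taking $\vf$ to be (a suitable power of) the $\cd$-product $\vf^{v_1}\cd\vf^{v_2}\cd\cdots$ over $V^*$, which is again a solution of $\cP\fac{\ov\mu^*}$ because the solution set is a subalgebra. Since $\cd$ is a semilattice operation on each thin semilattice edge and never moves an element \emph{down} (either $x\cd y=x$ or $\{x,x\cd y\}$ is a thin semilattice edge with $x\le x\cd y$), the value $\vf(v)$ lies weakly above $d_v$ in the $\sqq$-order of $\zA_v\fac{\mu^*_v}$ for every $v\in V^*$; in particular $\vf(v)$ is not $\sqq$-minimal, so there is $a\in\zA_v$ with $a^{\mu^*_v}\sqq\vf(v)$, $a^{\mu^*_v}\ne\vf(v)$, and then $|p^\vf_v(\zA_v)|<|\zA_v|$ by the discussion preceding the lemma. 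Consistency of $\ov p^\vf=(p^\vf_v)_{v\in V}$ is Lemma~\ref{lem:consistent-mapping}. Setting $p_v:=p^\vf_v$ finishes the proof.

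\textbf{Main obstacle.} The delicate point is ensuring that the gluing step really forces \emph{strict} non-minimality at \emph{all} of $V^*$ simultaneously while still landing inside the solution subalgebra — i.e.\ that $\cd$-multiplying the $\vf^v$ does not accidentally collapse some coordinate back to a minimal element, and that the relevant $\cd$-monotonicity statement (``$a\le a\cd b$ or $a=a\cd b$'', never strictly below) transfers correctly to the quotient algebras $\zA_v\fac{\mu^*_v}$ and to the product solution tuple. This is where Corollary~\ref{cor:centralizer-multiplication} (that $\cd$ on $\zA_v$ descends to a well-defined operation on $\zA_v\fac{\mu^*_v}$ compatible with the $\mu^*_v$-blocks) is needed, and verifying that the chosen $d_v$ survives as a non-minimal value under the product is the crux of the argument.
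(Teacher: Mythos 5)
Your proposal correctly identifies all the ingredients the paper uses (global $1$-minimality of $\cP\fac{\ov\mu^*}$, the $\cd$ operation of Proposition~\ref{pro:good-operation}, Corollary~\ref{cor:centralizer-multiplication}, the per-variable solves of $(\cP_{(v,a^{\mu^*_v})})\fac{\ov\mu^*}$), and you correctly flag the sticking point in your ``Main obstacle'' paragraph. Unfortunately that obstacle is real and your gluing step does not get past it.

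The gap is in the claim ``the value $\vf(v)$ lies weakly above $d_v$ in the $\sqq$-order for every $v\in V^*$.'' Writing $\vf=\vf^{v_1}\cd\vf^{v_2}\cd\cdots\cd\vf^{v_m}$, at coordinate $v_i$ you get $\vf(v_i)=\bigl(\cdots(\vf^{v_1}(v_i)\cd\vf^{v_2}(v_i))\cdots\bigr)\cd\vf^{v_m}(v_i)$. The $\cd$-monotonicity guarantee is one-sided: $x\cd y$ is either equal to $x$ or a semilattice step \emph{up from $x$}, so the only element you know the product sits $\sqq$-above is the \emph{first} operand $\vf^{v_1}(v_i)$. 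You get $\vf^{v_1}(v_i)\sqq\vf(v_i)$, but you have no control over where $\vf^{v_1}(v_i)$ lies relative to $d_{v_i}$; there is no reason $d_{v_i}=\vf^{v_i}(v_i)\sqq\vf(v_i)$ for $i>1$. Taking powers of the product does not repair this, because the idempotence identity is $x\cd(x\cd y)=x\cd y$, i.e.\ stability from the left, not from the right. So ``glue the solutions, then take one retraction'' does not obviously produce a solution that is simultaneously $\le$-non-minimal at all of $V^*$.

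The paper sidesteps this by composing the \emph{retraction families}, not the solutions. For each $v\in V^*$ it fixes a solution $\vf_v$ with $\vf_v(v)$ landing on the top of a thin semilattice edge that crosses $\mu^*_v$, and forms the whole family $(p^{\vf_v}_w)_{w\in V}$. The key dichotomy ``$p^{\vf_v}_w$ is either a strict shrink or the identity on $\zA_w$'' (which follows because an idempotent self-map with full image is the identity) makes the argument immediate: the composition over all $v\in V^*$ is still a family of consistent mappings (composition of consistent families is consistent), and for each $v\in V^*$ the factor $p^{\vf_v}_v$ strictly shrinks while the other factors either shrink further or do nothing, so the composed image at $v$ is strictly smaller. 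This uses exactly the same $|V^*|$ subproblem solves as yours, but the ``combine'' step happens at the level of retractions where shrink-or-identity composes trivially, rather than at the level of solutions where $\cd$-monotonicity only controls the first factor. If you want to salvage your version, you would essentially have to reproduce the paper's argument inside the gluing step; it is cleaner to compose the $p^{\vf_v}$ directly.
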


\begin{proof}
Since $\cP\fac{\ov\mu^*}$ is globally 1-minimal, for any 
$a\in\zA_v\fac{\mu^*_v}$ there is a solution $\vf$ with $\vf(v)=a$, and it
can be found solving the instance $(\cP_{(v,a^{\mu^*_v})})\fac{\ov\mu^*}$. 
For every $v\in V^*$ choose $a\in\zA_v$ such that there is $b\in\zA_v$ and 
$b\le a$, $a\not\eqc{\mu^*_v}b$, and let $\vf_v$ be a solution of 
$\cP\fac{\ov\mu^*}$ with $\vf_v(v)=a^{\mu^*_v}$. Then 
$|p^{\vf_v}_v(\zA_v)|<|\zA_v|$ and $|p^{\vf_v}_w(\zA_w)|<|\zA_w|$ or
$p^{\vf_v}_w$ is the identity mapping for any $w\in V^*$. 
Therefore the composition of the $p^{\vf_w}$ for all $w\in V^*$ is as 
required.
\end{proof}  

Theorem~\ref{the:central} now follows by observing that if 
$\zA_v\fac{\mu^*_v}$ is semilattice free then $\zA_v$ itself is semilattice 
free.

In order to use Theorem~\ref{the:central} we however need to argue that
$p(\cP)$ is a problem over a class of algebras omitting type~\one. Let
$f$ be a weak near-unanimity term of the class $\cA$. Then $p\circ f$
is a weak near-unanimity term of $p(\cA)=\{p(\zA)\mid \zA\in\cA\}$.
Moreover, if $\zA$ is semilattice free then $p(\zA)=\zA$.

\subsection{Strategies}\label{sec:strategies}

In this section similar to strategies related to the concepts of consistency and
minimality we introduce strategies of some sort that will be used to prove 
Theorem~\ref{the:non-central}. We start with some necessary definitions.

Let $\cP=(V,\cC)$ be a (2,3)-minimal and block-minimal instance over $\cA$.
For $(v,\al,\beta)\in\cW$, if 
$(v,\al,\beta)\not\in\cW'$, then $\cS_{W_{v,\al\beta,\ov\beta}}$ denotes the 
set of solutions of $\cP_{W_{v,\al\beta,\ov\beta}}$, and if $(v,\al,\beta)\in\cW'$,
then $\cS_{W_{v,\al\beta,\ov\beta},Y}$ denotes the 
set of solutions of $\cP_{W_{v,\al\beta,\ov\beta}}\fac{\ov\mu^Y}$ for an
appropriate $Y$.

Let $\beta_v\in\Con(\zA_v)$ and let $B_v$ be a $\beta_v$-block, 
$\ov\beta=(\beta_v\mid v\in V)$, $\ov B=(B_v\mid v\in V)$.  
Let $\cR=\{\rel_{C,v,\al\beta}\mid 
C=\ang{\bs,\rel}\in\cC, (v,\al,\beta)\in\cW(\ov\beta)\}$ be a collection 
of relations such that $\rel_{C,v,\al\beta}$ is a subalgebra of 
$\pr_{\bs\cap W_{v,\al\beta,\ov\beta}}\rel$. Let $C=\ang{\bs,\rel}$, 
$(v,\al,\beta)\in\cW$, and $W=W_{v,\al\beta,\ov\beta}$. Let $\ba$ be a tuple 
from $\pr_X\rel$ for $X\sse\bs$, or from $\pr_X\cS_W$, $X\sse W$, if 
$(v,\al,\beta)\not\in\cW'$, or from $\pr_X\cS_{W,Y}$ if $(v,\al,\beta)\in\cW'$, 
where $X\sse W$ and $Y$ is a set specified in the condition of block-minimality. 
Tuple $\ba$ is said to be \emph{$\cR$-compatible}\index{$\cR$-compatible tuple} 
if for any $(w,\gm,\dl)\in\cW(\ov\beta)$, (let $U=W_{w,\gm\dl,\ov\beta}$)
$\pr_{X\cap U}\ba\in\pr_{X\cap U}\rel_{C,w,\gm\dl}$ or 
$\pr_{X\cap U}\ba\in\pr_{X\cap U}\rel_{C,w,\gm\dl}\fac{\ov\mu^Y}$
for an appropriate set $Y$. By 
$\rel^\cR,\cS^\cR_W,\cS^\cR_{W,Y}$\label{not:S-R-S-R-prime} we
denote the set of all $\cR$-compatible tuples from the corresponding 
relation. Also, let $\cP^\cR=(V,\cC^\cR)$\label{not:P-cR} denote the 
problem instance obtained from $\cP$ replacing every constraint 
$\ang{\bs,\rel}\in\cC$ with $\ang{\bs,\rel^\cR}$.

The collection $\cR$ 
is called a \emph{$\ov\beta$-strategy}\index{$\ov\beta$-strategy} with 
respect to $\ov B$ if it satisfies the following conditions for every 
$(v,\al,\beta)\in\cW(\ov\beta)$, 
and every $C=\ang{\bs,\rel}\in\cC$ (let $W=W_{w,\al\beta,\ov\beta}$):
\begin{itemize}
\item[(S1)]
the relations $\umax(\rel^{X,\cR})$, where $\rel^{X,\cR}$ consists 
of $\cR$-compatible tuples from $\rel^X$ 
for $X\sse V$, $|X|\le 2$, form a nonempty $(2,3)$-strategy for $\cP^\cR$; 
\item[(S2)]
for every $(w,\gm,\dl)\in\cW(\ov\beta)$ (let $U=W_{w,\gm\dl}$)
and every $\ba\in\umax(\pr_{\bs\cap W\cap U}\rel_{C,v,\al\beta})$ it holds:
if $(w,\gm,\dl)\not\in\cW'$ then $\ba$ extends
to an $\cR$-compatible solution $\vf$ of $\cP_U$; otherwise
if $(v,\al,\beta)\not\in\cW'$ then $\ba$ extends to an $\cR$-compatible 
solution of $\cP_U\fac{\ov\mu^{Y_1}}$ with $Y_1=\Razm(\cP)-(W\cap U)$; 
and if $(v,\al,\beta)\in\cW'$ then $\ba$ extends to an $\cR$-compatible 
solution of $\cP_U\fac{\ov\mu^{Y_2}}$, where $Y_2=\Razm(\cP)-\bs$;
\item[(S3)]
$\rel\cap\ov B_\bs\ne\eps$ and for any $I\sse\bs$ any $\cR$-compatible 
tuple $\ba\in\umax(\pr_I\rel)$ extends to an $\cR$-compatible tuple 
$\bb\in\rel$.
\item[(S4)]
the relation $\rel_{C,v,\al\beta}$ is a subalgebra of $\pr_{\bs\cap W}\rel$, and
$\umax(\rel_{C,v,\al\beta})\sse\umax(\pr_{\bs\cap W}\rel)$;
if $(v,\al,\beta)\not\in\cW'$ then the relation $\cS^\cR_W$
is a subalgebra of $\cS_W$, and $\umax(\cS^\cR_W)\sse\umax(\cS_W)$;
if $(v,\al,\beta)\in\cW'$ then for any $(w,\gm,\dl)\in\cW(\ov\beta)-\cW'$ the
relations $\cS^\cR_{W,Y_1},\cS^\cR_{W,Y_2}$ are subalgebras of 
$\cS_{W,Y_1},\cS_{W,Y_2}$, respectively, and 
$\umax(\cS^\cR_{W,Y_1})\sse\umax(\cS_{W,Y_1})$, 
$\umax(\cS^\cR_{W,Y_2})\sse\umax(\cS_{W,Y_2})$, where
$Y_1=\Razm(\cP)-\bs$ and $Y_2=\Razm(\cP)-(W\cap W_{w,\gm\dl})$;
\item[(S5)]
for every $w\in\bs$ and every $(w,\gm,\dl)\in\cW(\ov\beta)$ 
(let $U=W_{w,\gm\dl}$) with $w\in\bs\cap U$ it holds 
$\umax(\pr_w\rel_{C,w,\gm\dl})=\umax(\pr_w\rel_{C,v,\al\beta})$, let 
$A_{\cR,w}$ denote the subalgebra generated by this set, 
$\umax(A_{\cR,w})$ is as-closed in $\umax(\pr_w(\rel\cap\ov B))$;
\item[(S6)]
for every $(w,\gm,\dl)\in\cW(\ov\beta)$ with $\bs\cap W_{w,\gm\dl}\ne\eps$ 
the set of $\cR$-compatible tuples from $\rel_{C,w,\gm\dl}$ is polynomially 
closed in $\pr_{\bs\cap W_{w,\gm\dl}}\rel$;
\item[(S7)]
relation $\rel$ is strongly chained with respect to $\ov\beta,\ov B$; 
if $(v,\al,\beta)\not\in\cW'$, relation $\cS_W$ is strongly chained with 
respect to $\ov\beta,\ov B$; if $(v,\al,\beta)\in\cW'$, for any 
$(w,\gm,\dl)\in\cW(\ov\beta)-\cW'$ the relations $\cS_{W,Y_1},\cS_{W,Y_2}$,
$Y_1=\Razm(\cP)-\bs$, $Y_2=\Razm(\cP)-(W\cap W_{w,\gm\dl})$, are 
strongly chained with respect to $\ov\beta,\ov B$.
\end{itemize}

Conditions (S1)--(S3) are the conditions we actually want to maintain when
transforming a strategy, and these are the ones that provide the desired results.
However, to prove that (S1)--(S3) are preserved under transformations of 
a strategy we also need more technical conditions (S4)--(S7).

We now show how we plan to use $\ov\beta$-strategies.
Let $\cP$ be a subdirectly irreducible, (2,3)-minimal, and block-minimal instance, 
$\beta_v=\zo_v$ and $B_v=\zA_v$ for $v\in V$. Then as is easily seen the 
collection of relations 
$\cR=\{\rel_{C,v,\al\beta}\mid (v,\al,\beta)\in\cW(\ov\beta),C\in\cC\}$ 
given by $\rel_{C,v,\al\beta}=\pr_{\bs\cap W_{v,\al\beta,\ov\beta}}\rel$ for 
$C=\ang{\bs,\rel}\in\cC$ is a $\ov\beta$-strategy with respect to $\ov B$.
Also, by (S3) a $\ov\gm$-strategy with $\gm_v=\zz_v$ for all $v\in V$ gives a 
solution of 
$\cP$. Our goal is therefore to show that a $\ov\beta$-strategy for any
$\ov\beta$ can be `reduced', that is, transformed to a $\ov\beta'$-strategy 
for some $\ov\beta'<\ov\beta$. Note that this reduction of strategies is where 
the condition $\Razm(\cP)\cap\Centr(\cP)=\eps$ is used. Indeed, suppose that 
$\beta_v=\mu^*_v$. Then by conditions (S1)--(S7) we only have information 
about solutions to problems of the form $\cP_W\fac{\ov\mu^*}$ or something 
very close to that. Therefore this barrier cannot be penetrated. We  consider 
two cases.

\medskip
{\sc Case 1.} There are $v\in V$ and $\al\prec\beta_v$ nontrivial on $B_v$, 
$\typ(\al,\beta_v)=\two$. This case is considered in 
Section~\ref{sec:type-2}.

\smallskip
{\sc Case 2.} For all $v\in V$ and $\al\prec\beta_v$ nontrivial on $B_v$, 
$\typ(\al,\beta_v)\in\{\three,\four,\five\}$. This case is considered 
in Section~\ref{sec:type-not-2}.

\section{Proof of Theorem~\ref{the:non-central}}\label{sec:affine-consistency}

In the remaining part of the paper we prove Theorem~\ref{the:non-central}.

\subsection{Tightening affine factors}\label{sec:type-2}

In this section we consider Case~1 of tightening strategies: there is 
$\al\in\Con(\zA_v)$ for some $v\in V$ such that $\al\prec\beta_v$ 
and $\typ(\al,\beta_v)=\two$.

\subsubsection{Transformation of the strategy and induced congruences}%
\label{sec:transformation1}

Let $\cP=(V,\cC)$ be a block-minimal instance with subdirectly irreducible 
domains, $\ov\beta=(\beta_v\in\Con(\zA_v)\mid v\in V)$ and 
$\ov B=(B_v\mid B_v \text{ is a $\beta_v$-block, } v\in V)$. 
We use notation from Section~\ref{sec:maximal-solutions}.
Let also $\cR=\{\rel_{C,v,\al\beta}\}$ be a $\ov\beta$-strategy for 
$\ov B$. We select $v\in V$ and $\al,\beta\in\Con(\zA_v)$ with 
$\al\prec\beta=\beta_v$, $\typ(\al,\beta)=\two$, and 
an $\al$-block $B\in B_v\fac\al$. Note that since $\typ(\al,\beta)=\two$, 
$B_v\fac\al$ is a module, and therefore $B'_v$ is as-maximal in this
set. In this section we show how $\cR$
can be transformed to a $\ov\beta'$-strategy $\cR'$ for $\ov B'$ such
that $\beta'_w\le\beta_w$, $B'_w\sse B_w$ for $w\in V$, and 
$\beta'_v=\al$, $B'_v=B$.

First of all we identify variables $w\in V$ for which $\beta'_w$ has to be
different from $\beta_w$.
Since $\cP$ is (2,3)-minimal, for every $u,w\in V$ there is 
$C^{\{u,w\}}=\ang{(u,w),\rel^{\{u,w\}}}\in\cC$. 
For $w\in W_{v,\al\beta}$ (we omit $\ov\beta$ from 
$W_{v,\al\beta,\ov\beta}$ here)  consider 
$\rel^{*,\{v,w\}}=\rel^{\{v,w\}}\cap(B_v\tm B_w)$,
$\rel^{\{v,w\},\cR}$ the set of all $\cR$-compatible pairs from 
$\rel^{\{v,w\}}$, $\rel'^{\{v,w\}}=\rel^{\{v,w\}}\fac\al$, and
$\rel'^{\{v,w\},\cR}=\rel^{\{v,w\},\cR}\fac\al$. 
By (S5) for $\cR$ we have that $\umax(\pr_v\rel'^{\{v,w\},\cR})$ is as-closed in 
$B^*_v\fac\al$, where $B^*_v=\pr_v\rel^{*,\{v,w\}}$; since 
$\typ(\al,\beta)=\two$, this implies $\pr_v\rel'^{\{v,w\},\cR}=B^*_v\fac\al$. 
Also, $\pr_v\rel^{\{v,w\},\cR}$, and therefore $\pr_v\rel'^{\{v,w\},\cR}$
are polynomially closed in $\pr_v\rel^{\{v,w\}},\pr_v\rel'^{\{v,w\}}$, 
respectively, by (S6), and $\pr_v\rel^{\{v,w\}},\pr_v\rel'^{\{v,w\}}$ are 
strongly chained by (S7). Therefore, by the Congruence 
Lemma~\ref{lem:affine-link} either 
$B^*_v\fac\al\tm\umax(\pr_w(\rel'^{\{v,w\},\cR})\sse\rel'^{\{v,w\},\cR}$
or $\rel'^{\{v,w\},\cR}$ is the graph of a mapping 
$\nu_w:\pr_w\rel'^{\{v,w\},\cR}\to B^*_v\fac\al$. 
Let $U\sse W_{v,\al\beta}$ be the set of variables for which 
the latter holds, and let $\al_w$ be the corresponding congruence of 
$\zA_w$, extension of the kernel of $\nu_w$.
Let $\beta'_v=\al, B'_v=B$ and $\beta'_w=\al_w, B'_w=\nu^{-1}_w(B)$ 
for $w\in U$, and $\beta'_w=\beta_w, B'_w=B_w$ for $w\in V-U$. 

Now we are in a position to define the new strategy. Let $\cR'$ be the following 
collection of relations. We omit subscript $\ov\beta$.
\begin{itemize}
\item[(R1)]
$\cR'=\{\rel'_{C,w,\gm\dl}\mid C=\ang{\bs,\rel}\in\cC, 
(w,\gm,\dl)\in\cW(\ov\beta')\}$;
\item[(R2)]
for every $C=\ang{\bs,\rel}\in\cC$, $(u,\gm,\dl)\in\cW(\ov\beta')$, 
\begin{itemize}
\item[(a)] if $(v,\al,\beta)\not\in\cW'$, $\rel'_{C,u,\gm\dl}=
\{\ba\in\rel_{C,u,\gm\dl}\mid$ there is a $\cR$-compatible solution $\vf$ of 
$\cP_{W_{v,\al\beta_v}}$, $\vf(v)\in B'_v$,  and
$\vf(w)=\ba[w]$ for 
$w\in\bs\cap W_{v,\al\beta_v}\cap W_{u,\gm\dl}\}$; 
\item[(b)]
if $(v,\al,\beta)\in\cW',(u,\gm,\dl)\not\in\cW'$, $\rel'_{C,u,\gm\dl}=
\{\ba\in\rel_{C,u,\gm\dl}\mid$ there is a $\cR$-compatible solution $\vf$ of 
$\cP_{W_{v,\al\beta_v}}\fac{\ov\mu^Y}$ with 
$Y=\Razm(\cP)-W_{u,\gm\dl}$, such that $\vf(v)\in B'_v\fac{\mu^Y_v}$, and
$\vf(w)=\ba[w]$ for $w\in\bs\cap W_{v,\al\beta_v}\cap W_{u,\gm\dl}\}$
\item[(c)] 
if $(v,\al,\beta),(u,\gm,\dl)\in\cW'$, $\rel'_{C,u,\gm\dl}=
\{\ba\in\rel_{C,u,\gm\dl}\mid$ there is a $\cR$-compatible solution $\vf$ of 
$\cP_{W_{v,\al\beta_v}}\fac{\ov\mu^Y}$ with 
$Y=\Razm(\cP)-\bs$, such that $\vf(v)\in B'_v\fac{\mu^Y_v}$, and
$\vf(w)=\ba[w]$ for $w\in\bs\cap W_{v,\al\beta_v}\cap W_{u,\gm\dl}\}$; 
\end{itemize}
\end{itemize}

Similar to $\rel^\cR,\cS^\cR_W,\cS^\cR_{W,Y}$ by 
$\rel^{\cR'},\cS^{\cR'}_W,\cS^{\cR'}_{W,Y}$ we denote the corresponding 
sets of $\cR'$-compatible tuples.
As is easily seen, the sets of both types are indeed subalgebras of 
$\rel,\cS_W,\cS_{W,Y}$. 

The following three statements show how relations $\rel'_{C,w,\gm\dl}$ 
from $\cR'$ are related to $\rel_{C,w,\gm\dl}$ from $\cR$. They amount 
to saying that either $\rel'_{C,w,\gm\dl}$ is the intersection of 
$\rel_{C,w,\gm\dl}$ with a block of a congruence of the projection of $\rel$, or 
$\umax(\rel'_{C,w,\gm\dl})=\umax(\rel_{C,w,\gm\dl})$. Recall that for 
congruences $\beta_w$, $w\in V$, and $U\sse V$ by $\ov\beta_U$ we 
denote the collection $(\beta_w)_{U\sse V}$.
Set $W=W_{v,\al\beta}$ for the rest of Section~\ref{sec:type-2}.

\begin{lemma}\label{lem:congruence-restriction}
Let $C=\ang{\bs,\rel}\in\cC$, and let 
$\cS'_W$ be the set of solutions of 
$\cP_W$ if $(v,\al,\beta)\not\in\cW'$, or the set of solutions of 
$\cP_W\fac{\ov\mu^{\Razm(\cP)-\bs}}$ if $(v,\al,\beta)\in\cW'$. 
For every $U\sse \bs\cap W$ there is a congruence $\tau_U$\label{not:tau-U} 
of $\pr_U\cS'_W=\pr_U\rel$ satisfying the following conditions:\\[1mm]
(1) either $\umax(\pr_U\cS'^{\cR'}_W)=\umax(\pr_U\cS'^\cR_W)$ or for a 
$\tau_U$-block $\rela$ it holds 
$\pr_U\cS'^{\cR'}_W=\pr_U\cS'^\cR_W\cap\rela$. \\[1mm]
(2) For any $U_1\sse U_2\sse W$ the congruence $\tau_{U_1}$ is the restriction 
of $\tau_{U_2}$, that is $(\ba,\bb)\in\tau_{U_1}$ if and only if for some
$\ba',\bb'\in\cS'_{U_2}$ with $\pr_{U_1}\ba'=\ba,\pr_{U_1}\bb'=\bb$ it
holds $(\ba',\bb')\in\tau_{U_2}$.\\[1mm]
(3) For any $U\sse \bs\cap W$ either 
$\tau_U\red{\pr_U\rel\cap\ov B}=\ov\beta_U\red{\pr_U\rel\cap\ov B}$, 
or the algebra $\pr_U\cS'^{\cR}_W\fac{\tau_U}$ is isomorphic to 
$\pr_v(\cS'_W\cap\ov B)\fac\al$.\\[1mm]
(4) For any $(w,\gm,\dl)\in\cW(\ov\beta)$, $X=W_{w,\gm\dl}$, 
$X'=\bs\cap W\cap X$, let $\tau=\tau_{X'}$. Then either 
$\umax(\pr_{X'}\rel'_{C,w,\gm\dl})=\umax(\pr_{X'}\rel_{C,w,\gm\dl})$, 
or for a $\tau$-block $\rela$ it holds 
$\pr_{X'}\rel'_{C,w,\gm\dl}\sse\pr_{X'}\rel_{C,w,\gm\dl}\cap\rela$ and 
$\umax(\pr_{X'}\rel'_{C,w,\gm\dl})$ is the set of u-maximal elements of  
$\umax(\pr_{X'}\rel_{C,w,\gm\dl})\cap \rela$.
\end{lemma}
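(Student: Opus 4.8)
The proof of Lemma~\ref{lem:congruence-restriction} is a bookkeeping-heavy deduction from the Congruence Lemma~\ref{lem:affine-link} applied inside the instance $\cP_W$ (or its quotient $\cP_W\fac{\ov\mu^{\Razm(\cP)-\bs}}$). The first move is to set up, for each $U\sse\bs\cap W$, the binary relation $\relo_U$ that is the subdirect product of $\pr_U\cS'_W$ with the coordinate $v$ (more precisely with $\pr_v(\cS'_W\cap\ov B_v)\fac\al$), given by the pairs $(\pr_U\vf,\vf(v)^\al)$ for $\cR$-compatible solutions $\vf$ of $\cP_W$ with $\vf(v)\in B_v$. Since $\cP$ is block-minimal and $\cR$ is a $\ov\beta$-strategy, conditions (S4)--(S7) guarantee that $\cS'^\cR_W$ is a polynomially closed subalgebra of $\cS'_W$, that it is strongly chained with respect to $\ov\beta,\ov B$, and that $\umax(\cS'^\cR_W)\sse\umax(\cS'_W)$; these are exactly the hypotheses of the Congruence Lemma~\ref{lem:affine-link}. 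Applying that lemma to $\relo_U$ with the prime interval $(\al,\beta)=(\al,\beta_v)$ of type \two\ in the $v$-coordinate yields the dichotomy: either the rectangular option holds, in which case $\umax(\pr_U\cS'^{\cR'}_W)=\umax(\pr_U\cS'^\cR_W)$ (option (1), first alternative), or $\relo_U\cap(\text{relevant product})$ is the graph of a mapping whose kernel, extended to a congruence $\tau_U$ of $\pr_U\cS'_W=\pr_U\rel$, satisfies $\pr_U\cS'^{\cR'}_W=\pr_U\cS'^\cR_W\cap\rela$ for the $\tau_U$-block $\rela=\nu^{-1}(B)$. This gives (1) and, simultaneously, (3): in the second alternative $\pr_U\cS'^\cR_W\fac{\tau_U}$ is isomorphic to the image of $\nu$, which is $\pr_v(\cS'_W\cap\ov B)\fac\al$; in the first alternative one takes $\tau_U$ to agree with $\ov\beta_U$ on $\pr_U\rel\cap\ov B$.

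The consistency statement (2) --- that $\tau_{U_1}$ is the restriction of $\tau_{U_2}$ whenever $U_1\sse U_2$ --- follows from the functoriality of the construction: the mapping $\nu$ produced for $U_2$ projects down to the mapping for $U_1$ because $\cP_W$ is (2,3)-minimal (so partial solutions on $U_1$ extend to partial solutions on $U_2$ inside the relevant blocks), and the kernel of a composite of a projection with a map is the preimage of the kernel. I would phrase this as: $(\ba,\bb)\in\tau_{U_1}$ iff the elements of $\pr_v(\cS'_W\cap\ov B)\fac\al$ assigned to $\ba,\bb$ by the $U_1$-version of $\nu$ coincide, and then observe that the $U_2$-version of $\nu$ assigns the same elements to any lifts $\ba',\bb'$; here one must be slightly careful that both $U_1$ and $U_2$ land in the same alternative of the dichotomy, which holds because the relevant alternative is governed only by whether $(\al,\beta_v)$ can be separated from intervals in the $v$-coordinate's companion --- a property that does not depend on $U$.

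Finally, (4) is obtained by specializing (1) and (2) to $X'=\bs\cap W\cap W_{w,\gm\dl}$ and then invoking the definition (R2) of $\rel'_{C,w,\gm\dl}$ as the set of tuples of $\rel_{C,w,\gm\dl}$ that extend to an $\cR$-compatible solution of $\cP_W$ (resp.\ its $\ov\mu^Y$-quotient) taking value in $B'_v$ at $v$. Projecting this extension condition to $X'$ shows $\pr_{X'}\rel'_{C,w,\gm\dl}$ is contained in $\pr_{X'}\rel_{C,w,\gm\dl}\cap\rela$ for the $\tau_{X'}$-block $\rela$; the statement about u-maximal elements follows from the Maximality Lemma~\ref{lem:to-max}(5) together with (S4), which lets one lift a u-maximal element of the intersection to a u-maximal solution, hence witness that it lies in $\pr_{X'}\rel'_{C,w,\gm\dl}$. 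In the rectangular alternative, (S4) and the rectangularity propagation directly give $\umax$-equality.

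\textbf{Main obstacle.} The delicate point is verifying that the Congruence Lemma's hypotheses genuinely transfer from the $\ov\beta$-strategy data to the binary relation $\relo_U$ --- in particular that $\cS'^\cR_W$ (and its image under projection to $U$ paired with the $v$-coordinate) is polynomially closed and strongly chained, and that the as-component/u-maximality conditions line up. The quotient case $(v,\al,\beta)\in\cW'$, where one works in $\cP_W\fac{\ov\mu^Y}$, adds a second layer: one has to check that passing to the $\ov\mu^Y$-quotient preserves polynomial closure and chaining (this is where Lemma~\ref{lem:poly-closed}(3) and Lemma~\ref{lem:S7} are needed) and that the type-\two\ interval $(\al,\beta_v)$ survives the quotient. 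Getting the uniformity of $\tau_U$ across all $U$ right --- so that (2) holds on the nose rather than just up to the ambiguity of lifts --- is the part I expect to require the most care, since it is what makes the later strategy-transformation argument coherent.
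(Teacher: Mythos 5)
Your proposal follows the paper's own route: for $v\notin U$ you consider $\pr_{U\cup\{v\}}\cS'_W$ (factored by $\al$ on the $v$-coordinate), check the hypotheses of the Congruence Lemma~\ref{lem:affine-link} using (S6)/(S7) and Lemma~\ref{lem:poly-closed}, and read $\tau_U$ off the resulting dichotomy; items (1), (3) and (4) then fall out. The one substantive correction is in your discussion of item (2). You assert that "both $U_1$ and $U_2$ land in the same alternative of the dichotomy" and justify this by claiming that the alternative is governed by a separability condition on $(\al,\beta_v)$ "that does not depend on $U$." Both halves of this are off: in the Congruence Lemma the alternative is controlled by the link congruence of the $U$-side of $\pr_{U\cup\{v\}}\cS'_W$ restricted to the polynomially closed subalgebra, and this certainly does depend on $U$; and it is perfectly possible for $U_2\supseteq U_1$ to land in the graph-of-mapping alternative while $U_1$ is still rectangular (projecting down can merge the $\eta$-classes). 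This does not break the lemma --- if $U_1$ is rectangular, every $\tau_{U_2}$-class projects onto all of $\pr_{U_1}\rel\cap\ov B$, so the restriction of $\tau_{U_2}$ to $U_1$ is the full relation $\ov\beta_{U_1}=\tau_{U_1}$ anyway --- but your stated argument for (2), which invokes both $\nu_{U_1}$ and $\nu_{U_2}$, tacitly assumes the non-mixed case and silently fails in the mixed one. The paper dismisses (2) as obvious, but your more explicit version needs this case added. A minor point on (4): the paper obtains the $\umax$-equality by applying (S2) (extend a u-maximal tuple of $\pr_{X'}\rel_{C,w,\gm\dl}$ lying in $\rela$ to an $\cR$-compatible solution of $\cP_W$, which then automatically has $\vf(v)\in B'_v$), rather than via (S4) plus the Maximality Lemma as you suggest; the latter combination does not by itself produce the needed $\cR$-compatible \emph{solution} of $\cP_W$, only a lift within the constraint relation, so (S2) is the tool you actually want there.
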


If, according to item (3) of the lemma, 
$\tau_U\red{\pr_U\rel\cap\ov B}=\ov\beta_U\red{\pr_U\rel\cap\ov B}$,
we say that $\tau_U$ is the \emph{full congruence}\index{full congruence};
if the latter option of item (3) holds we say that $\tau_U$ is a 
\emph{maximal congruence}\index{maximal congruence}.

\begin{proof}
If $v\in U$ then set $\tau_U$ to be $\ov\beta'_U$. Otherwise consider 
$\relo=\pr_{U\cup\{v\}}\cS'_W$ as a subdirect product of $\zA_v$ and
$\pr_U\cS'_W$. This relation is chained by (S7) and 
$\pr_{U\cap\{v\}}\cS'^{\cR}_W$ is polynomially 
closed in $\relo$ by Lemma \ref{lem:poly-closed}(2); apply the Congruence 
Lemma~\ref{lem:affine-link} to it. Specifically, consider $\relo\fac\al$ as a 
subdirect product of $\pr_U\cS'_W$ and $\zA_v\fac\al$. If the first option of the 
Congruence Lemma~\ref{lem:affine-link} holds, set $\tau_U=\ov\beta_U$. If the 
second option is the case, choose $\tau_U$ to the congruence $\eta$ of 
$\pr_U\cS'_W$ identified in the Congruence Lemma~\ref{lem:affine-link}.

(1) In this case the result follows by the Congruence Lemma~\ref{lem:affine-link}.

(2) Obvious.

(3) If $v\in U$ then by item (1) 
$\pr_U\cS'^{\cR}_W\fac{\tau_U}=\pr_U\cS'^{\cR}_W\fac{\ov\beta'_U}$,
which is isomorphic to $\pr_v(\cS'_W\cap\ov B)\fac\al$. Otherwise consider 
relation $\relo$ as in the beginning of the proof. By the Congruence 
Lemma~\ref{lem:affine-link} if $\tau_U\ne\ov\beta_U$, 
$\tau_U\prec\ov\beta_U$. 
The result follows.

(4) If $\tau$ is the full congruence then by (S2) for $\cR$ we have 
$\umax(\pr_X\rel_{C,w,\gm\dl})=\umax(\pr_X\cS'^{\cR'}_W)$ and we have 
the first option. If $\tau$ is a maximal congruence then by (R2) and item (1)
there is a $\tau$-block $\rela$ such that 
$\pr_{X'}\rel'_{C,w,\gm\dl}\sse\rela\cap\pr_{X'}\rel_{C,w,\gm\dl}$. By
condition (S2) every $\ba\in\umax(\pr_{X'}\rel_{C,w,\gm\dl})$ can be extended
to a solution from $\cS'^\cR_W$. In particular, if 
$\ba\in\umax(\pr_{X'}\rel'_{C,w,\gm\dl})\sse\rela$ then such an extension 
$\vf$ has to satisfy $\vf(v)\in B'_v$. Since $S$ is as-maximal in
$\pr_{X'}(\rel_{C,w,\gm\dl})\fac\tau$ and 
$\umax(\pr_{X'}\rel'_{C,w,\gm\dl})\sse \umax(\pr_{X'}\rel_{C,w,\gm\dl})$,
we have $\umax(\pr_{X'}\rel'_{C,w,\gm\dl})=
\umax(\rela\cap\pr_{X'}\rel_{C,w,\gm\dl})$.
\end{proof}

For $C=\ang{\bs,\rel}\in\cC$ we use $\tau_C$\label{not:tau-C} to denote 
the congruence 
$\tau_{\bs\cap W}$. Also for $(w,\gm,\dl)\in\cW(\ov\beta)$ we use
$\tau_{C,w,\gm\dl}$ to denote the congruence 
$\tau_{\bs\cap W\cap W_{w,\gm\dl}}$. 

\begin{lemma}\label{lem:restricted-congruence}
In the notation above let $\gm_u,\dl_u\in\Con(\zA_u)$, $u\in W'=\bs\cap W$ 
be such that $(u,\gm_u,\dl_u)\in\cW(\ov\beta)$ and 
$(\al,\beta_v),(\gm_u,\dl_u)$ cannot be separated from each other. Then 
if $\tau_C$ is a maximal congruence, for any polynomial $f$ of $\rel$, 
$f(\ov\beta_{W'})\sse\tau_C$ if and only if $f(\dl_u)\sse\gm_u$ for any 
$u\in W'$. If $\gm_u,\dl_u$ are considered as congruences of $\pr_{W'}\rel$,
equal to $\gm_u\tm\prod_{x\in W'-\{u\}}\zo_x$,  
$\dl_u\tm\prod_{x\in W'-\{u\}}\zo_x$, respectively, this condition means
that $(\tau_C,\ov\beta_{W'})$ and $(\gm_u,\dl_u)$ cannot be separated.
\end{lemma}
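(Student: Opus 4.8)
The plan is to reduce the statement to the non-separability of $(\tau_C,\ov\beta_{W'})$ from $(\gm_u,\dl_u)$, and to obtain the latter by placing all the relevant prime intervals into one auxiliary relation and invoking transitivity of the ``cannot be separated'' relation.

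First I would fix a single $u\in W'=\bs\cap W$ together with the congruences $\gm_u\prec\dl_u\le\beta_u$ supplied by the hypothesis; the asserted equivalence for all $u\in W'$ follows at once from the equivalence for each individual $u$. For fixed $u$, the claim ``$f(\ov\beta_{W'})\sse\tau_C$ iff $f(\dl_u)\sse\gm_u$ for every polynomial $f$ of $\rel$'' is, by definition, exactly the statement that the prime interval $(\tau_C,\ov\beta_{W'})$ of $\pr_{W'}\rel$ --- a prime interval by Lemma~\ref{lem:congruence-restriction}(3) --- and the interval $(\gm_u,\dl_u)$ of $\zA_u$ cannot be separated from each other (separating polynomials ranging over polynomials of $\rel$). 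Note also $\typ(\tau_C,\ov\beta_{W'})=\typ(\al,\beta_v)=\two$ by Corollary~\ref{cor:type-equal}, though I do not expect to need this.

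Next I would collect the two non-separability facts already on hand. By the construction of $\tau_C=\tau_{\bs\cap W}$ in the proof of Lemma~\ref{lem:congruence-restriction}: when $\tau_C$ is a maximal congruence it is the congruence $\eta$ produced by the Congruence Lemma~\ref{lem:affine-link}(2) applied to $\relo:=\pr_{\{v\}\cup W'}\cS'_W$, viewed as a subdirect product of $\zA_v$ and $\pr_{W'}\cS'_W=\pr_{W'}\rel$; hence $(\al,\beta_v)$ and $(\tau_C,\ov\beta_{W'})$ cannot be separated in $\relo$. And by the hypothesis, $(\al,\beta_v)$ and $(\gm_u,\dl_u)$ cannot be separated in $\rel^{\{v,u\}}$. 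Then I would build a relation $\relp$ on the three factors $\zA_v$, $\pr_{W'}\rel$, $\zA_u$, pp-defined from $\rel$, $\rel^{\{v,u\}}$ and the solution relation $\cS'_W$, whose restriction to the first two factors reproduces $\relo$ and whose restriction to the first and third factors reproduces $\rel^{\{v,u\}}$, the two being linked only through the shared factor $\zA_v$. Since the action of a polynomial of a relation on any of its factors is a polynomial of the corresponding projection, both separation facts hold in $\relp$, and by transitivity of ``cannot be separated'' the intervals $(\tau_C,\ov\beta_{W'})$ and $(\gm_u,\dl_u)$ cannot be separated in $\relp$. Finally I would verify that each polynomial $f$ of $\rel$ lifts to a polynomial $\hat f$ of $\relp$ agreeing with $f$ on the second and third factors: one extends every constant of $f$ (a tuple of $\rel$) to a tuple of $\relp$, using $\pr_{W'}\cS'_W=\pr_{W'}\rel$ and the minimality of the instance whose solution set is $\cS'_W$ (block-minimality of $\cP$), so that the required $\zA_v$- and $\rel^{\{v,u\}}$-values exist. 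Then
\[
f(\ov\beta_{W'})\sse\tau_C\iff\hat f(\ov\beta_{W'})\sse\tau_C\iff\hat f(\dl_u)\sse\gm_u\iff f(\dl_u)\sse\gm_u,
\]
which is the assertion. The subcase $(v,\al,\beta)\in\cW'$ differs only in that $\cS'_W$ also carries the quotient $\ov\mu^{\Razm(\cP)-\bs}$; since this quotient is trivial on every coordinate of $\bs$, in particular on $W'$ and on $u$, it does not affect the part of the argument on $\pr_{W'}\rel$ and $\zA_u$, and its possible action on $\zA_v$ has already been absorbed into $\tau_C$ via Lemma~\ref{lem:congruence-restriction}.

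The main obstacle I expect is the construction of $\relp$: the congruence $\tau_C$ is visible only through the solution relation $\cS'_W$, whereas the hypothesis about $(\gm_u,\dl_u)$ is recorded through $\rel^{\{v,u\}}$, and $\pr_{\{v,u\}}\cS'_W$ can be a proper subset of $\rel^{\{v,u\}}$; so $\relp$ has to keep enough of both structures for the two separation facts to persist in it, yet remain thin enough that polynomials of $\rel$ lift to it with the prescribed action on the second and third factors. Checking that $\relp$ is pp-definable (hence a subalgebra), that its two relevant projections are $\relo$ and $\rel^{\{v,u\}}$, and that the lift $f\mapsto\hat f$ behaves as stated, is where the real work lies; once this is in place, transitivity of ``cannot be separated'' finishes the proof.
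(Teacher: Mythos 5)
Your proposal is correct and matches the paper's argument in its essentials: both rest on (i) the observation that when $\tau_C$ is maximal, the Congruence Lemma applied to $\pr_{W'\cup\{v\}}\cS'_W$ gives that $(\al,\beta_v)$ and $(\tau_C,\ov\beta_{W'})$ cannot be separated, (ii) the hypothesis that $(\al,\beta_v)$ and $(\gm_u,\dl_u)$ cannot be separated, and (iii) block-minimality to lift a polynomial of $\rel$ (more precisely of $\pr_{W'}\rel$) to a polynomial of $\cS'_W$ and hence of $\pr_{W'\cup\{v\}}\cS'_W$, so that non-separability can be chained through the hub variable $v$. The obstacle you flag (constructing $\relp$) is, however, unnecessary: since $u\in W'$, the variable $u$ already appears as a coordinate of $\pr_{W'\cup\{v\}}\cS'_W$, and a polynomial of this relation projects (on $\{v,u\}$) to a polynomial whose constants lie in $\pr_{\{v,u\}}\cS'_W\sse\rel^{\{v,u\}}$, hence is a polynomial of $\rel^{\{v,u\}}$; and ``cannot be separated'' is a universally quantified statement over polynomials, so it is inherited from the larger algebra, giving the equivalence $f''(\beta_v)\sse\al \iff f(\dl_u)\sse\gm_u$ directly without any auxiliary $\relp$.
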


\begin{proof}
Let $\cS'_W$ be defined as in Lemma~\ref{lem:congruence-restriction}
and $\tau_C$ a maximal congruence.
Take a polynomial $f$ of $\rel$. 
As $(\gm_{u_1},\dl_{u_1}),(\gm_{u_2},\dl_{u_2})$ cannot be separated
for any $u_1,u_2\in W'$, it suffices to consider just one variable $u\in W'$. 
Since $\cP$ is a block-minimal instance, the polynomial $f$ can be extended
from a polynomial on $\pr_{W'}\rel$ to a polynomial $f'$ of $\cS'_W$, and, 
in particular, to a polynomial $f''$ of $\pr_{W'\cup\{v\}}\cS'_W$. 
Since $\tau_C$ is maximal, by the Congruence Lemma~\ref{lem:affine-link}
the intervals $(\al,\beta_v)$ and $(\tau_C,\ov\beta_{W'})$ in the congruence
lattices of $\zA_v$ and $\pr_{W'}\rel$, respectively, cannot be 
separated in $\pr_{W'\cup\{v\}}\cS'_W$. 
Therefore $f''(\beta_v)\sse\al$ if and only if $f(\ov\beta_{W'})\sse\tau_C$.
Since $(\al,\beta_v)$ and $(\gm_u,\dl_u)$ cannot be separated in $\cP$, the first 
inclusion holds if and only if $f(\dl_u)\sse\gm_u$, and we infer the 
result. 
\end{proof}

\begin{corollary}\label{cor:cR-structure}
For any $(w,\gm,\dl)\in\cW(\ov\beta)$, $X=\bs\cap W_{w,\gm\dl}$, 
$X'=W\cap X$, let $\tau=\tau_{X'}$,  
$\tau'=\{(\ba,\bb)\mid \ba,\bb\in\pr_X\rel, (\pr_{X'}\ba,\pr_{X'}\bb)\in\tau\}$, 
and $\tau'=\{(\ba,\bb)\mid \ba,\bb\in\pr_X\rel, 
(\pr_{X'}\ba,\pr_{X'}\bb)\in\tau, \pr_{X-X'}\ba=\pr_{X-X'}\bb\}$. Then either 
$\umax(\rel'_{C,w,\gm\dl})=\umax(\rel_{C,w,\gm\dl})$, 
or for a $\tau'$-block $\reli$ it holds 
$\rel'_{C,w,\gm\dl}\sse\rel_{C,w,\gm\dl}\cap\reli$ and 
$\umax(\rel'_{C,w,\gm\dl})$ is the set of u-maximal elements of  
$\umax(\rel_{C,w,\gm\dl})\cap\reli$.
Moreover, $\rela\tm\umax(\pr_{X-X'}\rel_{C,w,\gm\dl})\sse
\rel_{C,w,\gm\dl}\fac{\tau''}$.
\end{corollary}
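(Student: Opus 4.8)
The plan is to derive the corollary from Lemma~\ref{lem:congruence-restriction}(4), which already pins down the projection $\pr_{X'}\rel'_{C,w,\gm\dl}$, by first observing that $\rel'_{C,w,\gm\dl}$ is a cylinder over that projection and then pushing the description of u-maximal elements along the projection map $\pr_{X'}\colon\pr_X\rel\to\pr_{X'}\rel$ with the help of the Maximality Lemma~\ref{lem:to-max} and Lemma~\ref{lem:u-max-congruence}. First I would record the cylinder property: in clause (R2) the condition for a tuple $\ba\in\rel_{C,w,\gm\dl}$ to lie in $\rel'_{C,w,\gm\dl}$ mentions $\ba$ only through its restriction to $\bs\cap W_{v,\al\beta}\cap W_{w,\gm\dl}=W\cap X=X'$ (it asks for a suitable solution $\vf$ of $\cP_W$, or of a quotient of it, with $\vf(v)$ in the prescribed block and $\vf$ agreeing with $\pr_{X'}\ba$ on $X'$), so membership of $\ba$ depends only on $\pr_{X'}\ba$ and
\[\rel'_{C,w,\gm\dl}=\rel_{C,w,\gm\dl}\cap\pr_{X'}^{-1}\bigl(\pr_{X'}\rel'_{C,w,\gm\dl}\bigr).\]
I would also note that $\tau'$ is precisely the full preimage of $\tau=\tau_{X'}$ under the projection homomorphism $\pr_X\rel\to\pr_{X'}\rel$, hence a congruence of $\pr_X\rel$ whose blocks are the sets $\pr_{X'}^{-1}(\rela)$ for $\tau$-blocks $\rela$, and that $\tau''$ is a congruence, being $\tau'$ meet the kernel of $\pr_{X-X'}$. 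By (S4) $\rel_{C,w,\gm\dl}$ is a subalgebra of $\pr_X\rel$, and $\rel'_{C,w,\gm\dl}$ is a subalgebra of it, being the intersection with a subalgebra of $\pr_X\rel$.

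Then I would split according to the two alternatives of Lemma~\ref{lem:congruence-restriction}(4) applied to $\pr_{X'}\rel'_{C,w,\gm\dl}$ with $\tau$. In the first alternative $\umax(\pr_{X'}\rel'_{C,w,\gm\dl})=\umax(\pr_{X'}\rel_{C,w,\gm\dl})$; then for $\bb\in\umax(\rel_{C,w,\gm\dl})$ the Maximality Lemma~\ref{lem:to-max}(4),(5) gives $\pr_{X'}\bb\in\umax(\pr_{X'}\rel_{C,w,\gm\dl})=\umax(\pr_{X'}\rel'_{C,w,\gm\dl})$, so $\bb\in\rel'_{C,w,\gm\dl}$ by the cylinder property; thus $\rel'_{C,w,\gm\dl}$ contains a u-maximal element of $\rel_{C,w,\gm\dl}$, Lemma~\ref{lem:u-max-congruence} gives the reverse inclusion, and $\umax(\rel'_{C,w,\gm\dl})=\umax(\rel_{C,w,\gm\dl})$. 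In the second alternative, with $\rela$ the $\tau$-block from Lemma~\ref{lem:congruence-restriction}(4), set $\reli=\pr_{X'}^{-1}(\rela)$; then $\pr_{X'}\rel'_{C,w,\gm\dl}\sse\rela$ and the cylinder property give $\rel'_{C,w,\gm\dl}\sse\rel_{C,w,\gm\dl}\cap\reli$. Lifting some $\bc\in\umax(\pr_{X'}\rel'_{C,w,\gm\dl})$ to $\bb\in\umax(\rel_{C,w,\gm\dl})$ with $\pr_{X'}\bb=\bc$ again shows $\rel'_{C,w,\gm\dl}$ meets $\umax(\rel_{C,w,\gm\dl})$, so $\umax(\rel'_{C,w,\gm\dl})\sse\umax(\rel_{C,w,\gm\dl})\cap\reli=:M$ by Lemma~\ref{lem:u-max-congruence}. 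For the opposite inclusion $\umax(M)\sse\umax(\rel'_{C,w,\gm\dl})$ I would check that $\pr_{X'}$ maps $M$ onto $\umax(\pr_{X'}\rel_{C,w,\gm\dl})\cap\rela$ and, by both halves of the Maximality Lemma, transports asm-paths in $M$ to asm-paths in $\umax(\pr_{X'}\rel_{C,w,\gm\dl})\cap\rela$ and back; hence $\pr_{X'}$ carries $\umax(M)$ into $\umax\bigl(\umax(\pr_{X'}\rel_{C,w,\gm\dl})\cap\rela\bigr)=\umax(\pr_{X'}\rel'_{C,w,\gm\dl})$, so $\umax(M)\sse\rel'_{C,w,\gm\dl}$ by the cylinder property, and since $\umax(M)\sse\umax(\rel_{C,w,\gm\dl})$ lies in the unique maximal asm-component (Proposition~\ref{pro:as-connectivity}) it lies in $\umax(\rel'_{C,w,\gm\dl})$.

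Finally, for the ``Moreover'' clause I would rerun the Congruence-Lemma bookkeeping of Lemma~\ref{lem:congruence-restriction} inside the $\tau'$-block $\reli$: $\rel$ is strongly chained by (S7) and $\tau=\tau_{X'}$ was produced by the Congruence Lemma~\ref{lem:affine-link}, so it controls exactly the interaction with the $v$-coordinate; on the u-maximal slice of $\rel_{C,w,\gm\dl}\cap\reli$ the congruence $\tau''$ therefore behaves as a full congruence over the $\pr_{X-X'}$-coordinates, which yields $\rela\tm\umax(\pr_{X-X'}\rel_{C,w,\gm\dl})\sse\rel_{C,w,\gm\dl}\fac{\tau''}$. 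I expect the main obstacle to be precisely the opposite-inclusion part of the second alternative --- moving the sharp equality for $\umax(\pr_{X'}\rel'_{C,w,\gm\dl})$ \emph{upward} through $\pr_{X'}$ --- since it needs both directions of the Maximality Lemma at once together with the fact that $\rel'_{C,w,\gm\dl}$, as a subalgebra of $\rel_{C,w,\gm\dl}$ meeting its maximal asm-component, falls under Lemma~\ref{lem:u-max-congruence}.
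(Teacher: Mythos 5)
Your proposal is essentially sound for the dichotomy claim, but it takes a genuinely different route from the paper, and it is too vague on the \textit{Moreover} clause.

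On the dichotomy: you identify a ``cylinder property'' --- that clauses (R2)(a)--(c) constrain $\ba\in\rel_{C,w,\gm\dl}$ only through $\pr_{X'}\ba$, so $\rel'_{C,w,\gm\dl}=\rel_{C,w,\gm\dl}\cap\pr_{X'}^{-1}(\pr_{X'}\rel'_{C,w,\gm\dl})$ --- and then lift the $X'$-dichotomy from Lemma~\ref{lem:congruence-restriction}(4) to $X$ by a direct Maximality-Lemma argument. The paper does \emph{not} argue this way: its proof establishes the rectangularity $\rela\tm\umax(\pr_{X-X'}\rel_{C,w,\gm\dl})\sse\rel_{C,w,\gm\dl}\fac{\tau''}$ first (via Lemma~\ref{lem:restricted-congruence} and the Congruence Lemma~\ref{lem:affine-link}) and then uses that rectangularity to lift from $X'$ to $X$, leaving the final combination implicit (``the result follows''). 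Your cylinder observation makes the lifting more elementary and self-contained, which is a real gain in clarity, at the cost of having to handle the u-maximality transfer by hand (your discussion of transporting asm-paths through $\pr_{X'}$ in the second alternative is the delicate part, and needs the fact that asm-paths within $\umax(\rel_{C,w,\gm\dl})$ stay in $\reli$ because $\reli$ is a $\tau'$-block and hence a subalgebra).

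On the \textit{Moreover} clause, however, your argument is not a proof. ``Rerun the Congruence-Lemma bookkeeping inside $\reli$'' skips the one substantive input: to get the first (rectangularity) alternative of the Congruence Lemma~\ref{lem:affine-link} rather than the second (graph-of-a-mapping) alternative when viewing $\rel_{C,w,\gm\dl}\fac{\tau''}$ as a subdirect product of $\pr_{X'}\rel_{C,w,\gm\dl}\fac\tau$ and $\pr_{X-X'}\rel_{C,w,\gm\dl}$, one has to show that the interval $(\tau,\ov\beta_{X'})$ \emph{can} be separated from every prime interval $(\eta,\th)$ in $\Con(\zA_u)$ for $u\in X-X'$. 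This separability is exactly what Lemma~\ref{lem:restricted-congruence} delivers: $\tau_C$ is controlled by the interval $(\al,\beta_v)$ at $v$, and $v$-separability propagates to $\tau$-separability via that lemma; since $u\not\in W_{v,\al\beta}$, the required separation holds. Without invoking Lemma~\ref{lem:restricted-congruence} (or reproving its content), the Congruence Lemma leaves the map-graph alternative open and the rectangularity does not follow. You should make this step explicit.
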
  

\begin{proof}
By Lemma~\ref{lem:congruence-restriction}(4) either 
$\umax(\pr_{X'}\rel'_{C,w,\gm\dl})=\umax(\pr_{X'}\rel_{C,w,\gm\dl})$, 
or for a $\tau$-block $\rela$ it holds 
$\pr_{X'}\rel'_{C,w,\gm\dl}\sse\pr_{X'}\rel_{C,w,\gm\dl}\cap\rela$ and 
$\umax(\pr_{X'}\rel'_{C,w,\gm\dl})$ is the set of u-maximal elements of  
$\umax(\pr_{X'}\rel_{C,w,\gm\dl})\cap \rela$. Then considering 
$\rel_{C,w,\gm\dl}\fac{\tau''}$ as a subdirect product of
$\pr_{X'}\rel_{C,w,\gm\dl}\fac\tau$ and $\pr_{X-X'}\rel_{C,w,\gm\dl}$,
the interval $(\tau,\ov\beta_{X'})$ in $\pr_{X'}\rel\fac\tau$ can be separated
from interval $(\eta,\th)$ in $\Con(\zA_u)$ for any $u\in X-X'$ by 
Lemma~\ref{lem:restricted-congruence}. Then 
we use the Congruence Lemma~\ref{lem:affine-link} to conclude that
$\rela\tm\umax(\pr_{X-X'}\rel_{C,w,\gm\dl})\sse\rel_{C,w,\gm\dl}\fac{\tau''}$.
The result follows.
\end{proof}

Now we are in a position to prove that $\cR'$ is a $\ov\beta'$-strategy.

\begin{theorem}\label{the:restricting-strategy}
In the notation above, $\cR'$ is a $\ov\beta'$-strategy for 
$\ov B'$.
\end{theorem}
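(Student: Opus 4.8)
The plan is to verify conditions (S1)--(S7) for $\cR'$ with respect to $\ov\beta',\ov B'$ one at a time, leaning on the structural facts already in place: Lemma~\ref{lem:congruence-restriction} and Corollary~\ref{cor:cR-structure}, which describe each $\rel'_{C,w,\gm\dl}$ on its u-maximal part as either equal to $\rel_{C,w,\gm\dl}$ or the u-maximal part of $\rel_{C,w,\gm\dl}\cap\reli$ for a block $\reli$ of the induced congruence; Lemma~\ref{lem:restricted-congruence}, which says the induced congruences $\tau_C$ behave, vis-\`a-vis separation, like genuine congruences aligned with the $(\gm_u,\dl_u)$; and Lemma~\ref{lem:S7}, which transports strong chaining. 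Two observations drive everything. First, since $\typ(\al,\beta_v)=\two$, the block $B_v\fac\al$ is term equivalent to a module by Lemma~\ref{lem:as-type-2}, hence as-connected, so $\pr_v(\rel\cap\ov B)\fac\al$ is a single (and u-maximal) as-component; the same holds at every $w\in U$ with $\al_w$ in place of $\al$, because $(\al,\beta_v)$ and $(\al_w,\beta_w)$ cannot be separated (Congruence Lemma~\ref{lem:affine-link}, case (2)), hence have type $\two$ by Corollary~\ref{cor:type-equal}, and $\al_w\prec\beta_w$. Second, (R2) changes the data only at $v$ and the coordinates in $U$, so we may apply Lemma~\ref{lem:S7} iteratively, once for $v$ and once for each $w\in U$, each time passing from $\beta_i$ to $\beta'_i\prec\beta_i$ with $B'_i$ sitting in the nontrivial as-component furnished by the module structure. (If the relevant factor is trivial there is nothing to reduce and the statement is vacuous at that coordinate.)

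First I would clear the ``bookkeeping'' conditions. For (S4): each of $\rel^{\cR'},\cS^{\cR'}_W,\cS^{\cR'}_{W,Y}$ is a subalgebra, being an intersection of a subalgebra from $\cR$ with the preimage, under an algebra homomorphism, of a congruence block; and the inclusions $\umax(\cdot)\sse\umax(\cdot)$ are exactly Lemma~\ref{lem:congruence-restriction}(1),(4) and Corollary~\ref{cor:cR-structure}, composed with (S4) for $\cR$ and Lemma~\ref{lem:u-max-congruence}. Condition (S6) follows from Lemma~\ref{lem:poly-closed}: ``$\cR'$-compatible'' means ``$\cR$-compatible and the $v$-coordinate lies in the preimage of $B'_v$ under the pp-definable relation of solutions of $\cP_W$ through $v$'', and polynomial closedness is stable under intersections, pp-definitions and quotients by parts (1)--(3) of that lemma, with base case (S6) for $\cR$. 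Condition (S7) is immediate from the iterated application of Lemma~\ref{lem:S7} described above, applied to $\rel$, to $\cS_W$ and to $\cS_{W,Y_1},\cS_{W,Y_2}$, each of which is strongly chained with respect to $\ov\beta,\ov B$ by (S7) for $\cR$. For (S5), the equality $\umax(\pr_w\rel'_{C,w,\gm\dl})=\umax(\pr_w\rel'_{C,v,\al\beta})$ comes from the corresponding equality for $\cR$ together with Lemma~\ref{lem:congruence-restriction}(2): the congruences $\tau_U$ are restrictions of one another, so the single-coordinate projections of the two relations are cut down by the same congruence block; as-closedness of $\umax(A_{\cR',w})$ in $\umax(\pr_w(\rel\cap\ov B'))$ follows from as-closedness for $\cR$, Lemma~\ref{lem:poly-closed}(1), and the fact that $B'_w$ is a u-maximal block.

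Next I would handle (S1) and (S3), whose only new content is \emph{nonemptiness}. The set of $\cR$-compatible solutions of $\cP_W$ (respectively of the appropriate quotient) is nonempty by (S2),(S3) for $\cR$ and block-minimality; among these I want one whose $v$-value lies in $B'_v$. Since the elements of $B'_v$ are u-maximal in the module $B_v\fac\al$, this follows by pushing a u-maximal element of the $v$th domain back up via the Maximality Lemma~\ref{lem:to-max}(5) --- equivalently, by Theorem~\ref{the:centralizer-alignement} the restricted problem decomposes along $\ov\zeta$-alignment and restricting to $B'_v$ just selects one part. This gives $\rel\cap\ov B'_\bs\ne\eps$ for (S3) and the nonemptiness needed by (S1). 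The $(2,3)$-consistency of the collection $\umax(\rel^{X,\cR'})$, $|X|\le2$, and the extension statement in (S3) for $\cR'$-compatible u-maximal tuples, are then proved by the same scheme as for $\cR$: take the $\cR$-witnesses provided by (S1),(S3) of $\cR$ and correct them into the relevant $\tau$-blocks, the correction being legitimate because, by Lemma~\ref{lem:restricted-congruence}, $(\tau_C,\ov\beta_{\bs\cap W})$ cannot be separated from the corresponding $(\gm_u,\dl_u)$, so we are relocating a u-maximal solution across a genuine congruence alignment, which the Rectangularity Corollary~\ref{cor:linkage-rectangularity} and Proposition~\ref{pro:umax-rectangular} allow.

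The heart of the argument, and the step I expect to be the main obstacle, is (S2): given $(w,\gm,\dl)\in\cW(\ov\beta')$ and a u-maximal $\cR'$-compatible tuple $\ba$ on $\bs\cap W\cap W_{w,\gm\dl}$, extend it to an $\cR'$-compatible solution of $\cP_U$ or of the prescribed quotient $\cP_U\fac{\ov\mu^{Y}}$. The difficulty is that $\cR'$-compatibility is a simultaneous conjunction over \emph{all} triples $(w',\gm',\dl')$, plus the new requirement that the $v$-value be $B'_v$-related; these must all be met by one solution. I would start from the $\cR$-compatible extension $\vf_0$ given by (S2) for $\cR$, then repair $\vf_0$ inside the relevant $\tau$-blocks. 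The repair succeeds because: (i) by Corollary~\ref{cor:cR-structure} each $\rel'_{C,w',\gm'\dl'}$ is, on u-maximal elements, the u-maximal part of $\rel_{C,w',\gm'\dl'}$ intersected with a single congruence block, and these blocks are mutually consistent over overlapping scopes by Lemma~\ref{lem:congruence-restriction}(2); (ii) within a type-$\two$ block the pertinent relations are linked and the factors are modules, so the $2$-Decomposition Theorem~\ref{the:quasi-2-decomp}, Lemma~\ref{lem:as-rectangularity}, Proposition~\ref{pro:umax-rectangular} and Corollary~\ref{cor:linkage-rectangularity} let us move a u-maximal solution into a prescribed congruence block coordinate by coordinate; and (iii) in the $\cW'$ cases the extra quotients $\fac{\ov\mu^{Y}}$ merely enlarge the moduli, and the same linkage machinery applies after quotienting. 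Putting this together, and using that for $w\notin U$ the new and old data coincide (Congruence Lemma~\ref{lem:affine-link}, case (1)), yields the desired extension. The delicate point throughout is matching the sets $Y$, $Y_1$, $Y_2$ appearing in (S2) for $\cR'$ with the moduli produced by these quotient constructions, which is where I anticipate spending the most effort.
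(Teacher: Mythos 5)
Your condition-by-condition scaffold matches the paper's proof, and your treatment of (S4)–(S7) is essentially correct: (S4) as intersection with a $\tau$-congruence block plus Lemma~\ref{lem:u-max-congruence}, (S5) from (S4) and (S5) for $\cR$, (S6) via polynomial closedness (the paper argues this directly from $\ba\eqc{\tau_C}\bb$ rather than invoking Lemma~\ref{lem:poly-closed}(2), but your route works if you check the hypothesis about u-maximal elements), and (S7) from Lemma~\ref{lem:S7}.

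Where the proposal has a genuine gap is in (S1) and especially (S2), which are where virtually all of the work in the paper lives. For (S1), reducing to ``correct the $\cR$-witness into the relevant $\tau$-block'' via Rectangularity/Proposition~\ref{pro:umax-rectangular} is not enough, because one must simultaneously satisfy the $\tau$-block constraints coming from three different pairs $\{x,y\},\{x,w\},\{y,w\}$, and these impose intersecting subalgebras of the module $\rel^\cR\fac\tau$ whose nonempty intersection is not a rectangularity fact. The paper isolates this precisely in Lemma~\ref{lem:module-intersection} (two polynomially closed subalgebras of a module-quotient whose images fill the quotient intersect nontrivially), and then performs a non-obvious case analysis over which of $\tau_x,\tau_y,\tau_w,\tau_{xy},\tau_{xw},\tau_{yw}$ are full versus maximal; several of those cases (Subcases 2.4, 2.5, and Case 4) require the skew-congruence construction and the Congruence Lemma again. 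Your sketch does not mention the module-intersection step at all, and it is the hinge. For (S2), ``repair $\vf_0$ coordinate by coordinate using linkage'' similarly conceals the real obstacle: $\cR'$-compatibility is the simultaneous requirement of landing in a block of $\tau=\bigwedge_{C',(u,\eta,\th)}\tau'_{C'}(u,\eta\th)$ while also fixing $\pr_{W_2-W}$ to a chosen u-maximal tail. The paper needs the auxiliary duplicated relations $\relo,\rela$ on $\cS'_{W\cap W_2}$ and $\cS'_{W_2}$, the comparison of their link congruences $\eta_1,\th_1$ (Claim~1, parts (1)–(4)), and, inside that, a careful collapsing-polynomial argument satisfying (a)–(e) together with polynomial closedness to transport a subtrace without disturbing the tail. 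None of this is ``coordinate by coordinate'': the extension must be argued globally against the congruence $\tau$, and the passage from $\eta_1$ to $\th_1$ is the crux. So while your high-level plan is the right one, the step you yourself flag as the main obstacle is left at a level where it would not go through without introducing both Lemma~\ref{lem:module-intersection} and the explicit $\tau/\relo/\rela$ machinery.
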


We give proofs of most difficult conditions from (S1)--(S7) in separate 
lemmas. 

\subsubsection{Condition (S6)}\label{sec:S6-1}

We start with condition (S6), as it will be needed for other conditions.

\begin{lemma}\label{lem:S6-1}
Condition (S6) for $\cR'$ holds. That is, for every $(w,\gm,\dl)\in\cW(\ov\beta)$ 
with $\bs\cap W_{w,\gm\dl}\ne\eps$ the set of $\cR'$-compatible tuples 
from $\rel_{C,w,\gm\dl}$ is polynomially closed in 
$\pr_{\bs\cap W_{w,\gm\dl}}\rel$.
\end{lemma}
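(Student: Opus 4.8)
\textbf{Proof plan for Lemma~\ref{lem:S6-1}.}

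The plan is to reduce the claim to the corresponding property of the old strategy $\cR$, which holds by condition (S6), and then upgrade it using the structural description of $\cR'$-compatible tuples provided by Corollary~\ref{cor:cR-structure} and the closure stability results of Lemma~\ref{lem:poly-closed}. Fix $C=\ang{\bs,\rel}\in\cC$ and $(w,\gm,\dl)\in\cW(\ov\beta)$ with $\bs\cap W_{w,\gm\dl}\ne\eps$; write $X=\bs\cap W_{w,\gm\dl}$ and $X'=W\cap X$ where $W=W_{v,\al\beta}$. Let $R_\cR$ (resp.\ $R_{\cR'}$) denote the set of $\cR$-compatible (resp.\ $\cR'$-compatible) tuples from $\rel_{C,w,\gm\dl}$. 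By (S6) for $\cR$ the set $R_\cR$ is polynomially closed in $\pr_X\rel$. The first thing I would observe is that, by construction (R2), a tuple $\ba\in\rel_{C,w,\gm\dl}$ is $\cR'$-compatible if and only if it is $\cR$-compatible and, in addition, $\pr_{X'}\ba$ lies in the restricted relation $\rel'_{C,w,\gm\dl}$ described in Corollary~\ref{cor:cR-structure}; so $R_{\cR'}=R_\cR\cap T$ where $T$ is the preimage under $\pr_{X'}$ of $\rel'_{C,w,\gm\dl}$.

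Next I would dispose of the easy case: if $\umax(\rel'_{C,w,\gm\dl})=\umax(\rel_{C,w,\gm\dl})$, then on u-maximal elements $R_{\cR'}$ and $R_\cR$ agree (up to the obvious inclusion $R_{\cR'}\sse R_\cR$), and polynomial closure — which is a statement about u-maximal tuples and their as-successors inside subalgebras generated by u-maximal tuples — transfers directly, using that $R_{\cR'}$ is a subalgebra (noted after (R2)) and $\umax(R_{\cR'})\cap\umax(\pr_X\rel)\ne\eps$. In the remaining case, Corollary~\ref{cor:cR-structure} gives a $\tau''$-block $\reli$ of $\pr_X\rel$ with $\rel'_{C,w,\gm\dl}\sse\rel_{C,w,\gm\dl}\cap\reli$ and $\umax(\rel'_{C,w,\gm\dl})$ exactly the u-maximal elements of $\umax(\rel_{C,w,\gm\dl})\cap\reli$; moreover $\tau''$ is (the relevant extension of) a congruence of $\pr_X\rel$, so $\reli$ is a congruence block. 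Then $R_{\cR'}=R_\cR\cap\reli$. Since $R_\cR$ is polynomially closed in $\pr_X\rel$ by (S6), and a congruence block $\reli$ is trivially polynomially closed in $\pr_X\rel$ (congruences are invariant under all polynomials, hence the relevant as-successors of a u-maximal tuple inside a subalgebra generated by u-maximal tuples of $\reli$ stay in $\reli$ by Lemma~\ref{lem:thin}), I would invoke Lemma~\ref{lem:poly-closed}(1) to conclude that $R_\cR\cap\reli$ is polynomially closed in $\pr_X\rel$, provided $R_{\cR'}=R_\cR\cap\reli$ meets $\umax(\pr_X\rel)$; this nonemptiness is exactly what the last sentence of Corollary~\ref{cor:cR-structure} guarantees, since $\umax(\rel'_{C,w,\gm\dl})$ is nonempty and consists of u-maximal elements of $\pr_X\rel$ lying in $\reli$.

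The step I expect to be the main obstacle is verifying carefully that the $\tau''$-block $\reli$ really is a block of a \emph{congruence of $\pr_X\rel$} (and not merely of some projection or factor), so that it is polynomially closed; this requires tracing through Lemma~\ref{lem:congruence-restriction}(2)--(4) and Corollary~\ref{cor:cR-structure} to see that $\tau''$ is obtained from the congruence $\tau=\tau_{X'}$ of $\pr_{X'}\rel$ by pulling back along $\pr_{X'}$ while freezing the coordinates in $X-X'$, which is again a congruence. A secondary technical point is that the definition of "polynomially closed" quantifies over polynomials of the \emph{ambient} relation $\pr_X\rel$ and involves subalgebras $\Sg{\ba,f(\bb)}$ generated by u-maximal tuples together with a polynomial image; one must check that the as-successors produced there land in $R_{\cR'}$, which is where Lemma~\ref{lem:poly-closed}(1) (intersection of polynomially closed relations with a congruence block) does the real work once the congruence-block status of $\reli$ is established. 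Modulo these bookkeeping checks the argument is routine, and I would write it up in the two cases above.
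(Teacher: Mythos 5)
Your approach — reduce to Lemma~\ref{lem:poly-closed}(1) by exhibiting the set of $\cR'$-compatible tuples as the intersection of the $\cR$-compatible tuples with a congruence block — is a genuinely different route from the paper, which never invokes that closure lemma. The paper argues directly from the definition of polynomial closedness: take $\ba,\bb$ in the $\cR'$-compatible set with $f(\ba)=\ba$, take $\bc\in\Sg{\ba,f(\bb)}$ with $\ba\sqq_{as}\bc$, obtain $\cR$-compatibility of $\bc$ from (S6) for $\cR$, and then observe that $\pr_{\bs\cap W}\ba\eqc{\tau_C}\pr_{\bs\cap W}\bb$ forces (since $\tau_C$ is a congruence preserved by polynomials and its blocks are subalgebras) $\pr_{\bs\cap W}\bc\eqc{\tau_C}\pr_{\bs\cap W}\ba$, which is exactly $\cR'$-compatibility of $\bc$. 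This is shorter because it only ever uses the single fact that two $\cR'$-compatible tuples are $\tau_C$-equivalent, never a structural description of the full $\cR'$-compatible subset.

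However, your write-up has a real gap in the step you flag as routine. You assert that a tuple $\ba\in\rel_{C,w,\gm\dl}$ is $\cR'$-compatible if and only if it is $\cR$-compatible and $\pr_{X'}\ba$ lies in (the projection of) $\rel'_{C,w,\gm\dl}$, and you obtain the target set as $R_\cR\cap T$ with $T$ a single preimage. This is not what the definition of $\cR'$-compatibility says: a tuple on $X=\bs\cap W_{w,\gm\dl}$ is $\cR'$-compatible precisely when for \emph{every} $(u,\eta,\th)\in\cW(\ov\beta')$ its projection onto $X\cap W_{u,\eta\th}$ lies in $\pr_{X\cap W_{u,\eta\th}}\rel'_{C,u,\eta\th}$ (or the appropriate quotient). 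Collapsing all of those checks to a single check against a block of $\tau_{X'}$ is exactly where the coherence of the congruences $\tau_U$ from Lemma~\ref{lem:congruence-restriction}(2) is needed, and that reduction is the actual content of the proof — it is not bookkeeping. Moreover, you conflate $\rel'_{C,w,\gm\dl}$ (which is what Corollary~\ref{cor:cR-structure} describes, namely the tuples of $\rel_{C,w,\gm\dl}$ extending to a solution into $B'_v$) with "the set of $\cR'$-compatible tuples from $\rel_{C,w,\gm\dl}$"; these are different objects — the latter requires that \emph{all} the projections match the various $\rel'_{C,u,\eta\th}$, not that $\ba$ itself extends to a good solution. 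Once you truly establish that the $\cR'$-compatible subset is $R_\cR\cap\reli$ for a single $\tau$-block $\reli$ (using coherence plus the corollary applied to each $(u,\eta,\th)$), the appeal to Lemma~\ref{lem:poly-closed}(1) will go through, but as written the characterization is asserted rather than proved. I would suggest either patching that step explicitly, or — more economically — dropping the detour through Lemma~\ref{lem:poly-closed} and arguing directly that the as-successor $\bc$ stays in the same $\tau_C$-block, which is what the paper does.
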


\begin{proof}
Consider $C=\ang{\bs,\rel}\in\cC$. 
Let $f$ be a polynomial of $\rel$, and let $\ba,\bb\in\rel$ be tuples 
satisfying the conditions of polynomial closeness. Let $\bc\in\Sg{\ba,f(\bb)}$
be such that $\ba\sqq_{as}\bc$ in $\Sg{\ba,f(\bb)}$. By (S6) for $\cR$, $\bc$ is 
$\cR$-compatible. It suffices to show that $\pr_{\bs\cap W}\bc$ is in
the same $\tau_C$ block as $\pr_{\bs\cap W}\ba$. However, this is
straightforward, because $\pr_{\bs\cap W}\ba\eqc{\tau_C}\pr_{\bs\cap W}\bb$,
and as $f(\ba)=\ba$, we also have 
$\pr_{\bs\cap W}\ba\eqc{\tau_C}f(\pr_{\bs\cap W}\bb)$. Since
$\pr_{\bs\cap W}\bc\in\Sg{\pr_{\bs\cap W}\ba,f(\pr_{\bs\cap W}\bb)}$,
it follows $\pr_{\bs\cap W}\bc\eqc{\tau_C}\pr_{\bs\cap W}\ba$.
\end{proof}

\subsubsection{Condition (S1)}\label{sec:S1-1}

In this section we prove 

\begin{lemma}\label{lem:S1-1}
Condition (S1) for $\cR'$ holds. That is, the relations $\umax(\rel^{X,\cR'})$, 
where $\rel^{X,\cR'}$ consists of $\cR'$-compatible tuples from $\rel^X$ 
for $X\sse V$, $|X|\le 2$, form a nonempty $(2,3)$-strategy for $\cP^{\cR'}$.
\end{lemma}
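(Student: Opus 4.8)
The plan is to prove Lemma~\ref{lem:S1-1} by reducing it to the fact that $\cR$ already satisfies (S1), and then verifying that the two extra requirements imposed by the transformation — (i) nonemptiness of the new binary relations and (ii) the $(2,3)$-pattern condition — survive when we pass from $\cR$ to $\cR'$. Recall from Section~\ref{sec:transformation1} that for $X=\{u,w\}$ the relation $\rel^{X,\cR'}$ consists of the $\cR'$-compatible pairs of $\rel^X$, and by the definitions in (R2) a pair is $\cR'$-compatible precisely when it is $\cR$-compatible \emph{and}, for every triple $(w',\gm,\dl)\in\cW(\ov\beta')$ whose separation class meets $X$, its projection lands in the appropriate block of the congruence $\tau_{C^X,w',\gm\dl}$ supplied by Lemma~\ref{lem:congruence-restriction}. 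So the first step is to record that $\rel^{X,\cR'}$ is obtained from $\rel^{X,\cR}$ either by doing nothing on the $\umax$-level (the ``full congruence'' case of Lemma~\ref{lem:congruence-restriction}(3), via Corollary~\ref{cor:cR-structure}) or by intersecting with a single block of a maximal congruence $\tau$ of $\pr_X\rel^X$.

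First I would handle \textbf{nonemptiness}. Since $\cR$ is a $\ov\beta$-strategy, $\umax(\rel^{X,\cR})\ne\eps$; I need $\umax(\rel^{X,\cR'})\ne\eps$. In the full-congruence case this is immediate because $\umax(\rel^{X,\cR'})=\umax(\rel^{X,\cR})$ by Corollary~\ref{cor:cR-structure}. In the maximal-congruence case I use condition (S2) for $\cR$: any as-maximal (indeed u-maximal) tuple of $\pr_{X'}\rel^X_{C,w,\gm\dl}$ extends to an $\cR$-compatible solution $\vf$ of the appropriate restricted instance, and the recipe in (R2)(a)--(c) defining $\rel'_{C,\cdot}$ keeps exactly those $\ba$ for which such a $\vf$ can be chosen with $\vf(v)\in B'_v$ (or its $\mu^Y$-image). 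Since $B'_v=B$ is as-maximal in $B_v\fac\al$ (noted in Section~\ref{sec:transformation1}, as $\typ(\al,\beta)=\two$ makes $B_v\fac\al$ a module), one such extension survives, and by the Maximality Lemma~\ref{lem:to-max} its projection onto $X$ is u-maximal in $\rel^{X,\cR'}$. This is where the hypothesis that $B_v$ is as-maximal in $B_v\fac\al$ is essential — without it the chosen block of $\tau$ might contain no u-maximal elements. For the special constraints $C^X=\ang{X,\cS_X}$ with $|X|\le 2$ the same argument applies with $\rel^X$ in place of $\rel$, using block-minimality of $\cP$ to feed (S2).

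Next I would verify the \textbf{$(2,3)$-pattern condition}: for $X=\{u,w\}$, any $z\in V-X$, and any $(a,b)\in\umax(\rel^{X,\cR'})$ there is $c\in\zA_z$ with $(a,c)\in\rel^{\{u,z\},\cR'}$ and $(b,c)\in\rel^{\{w,z\},\cR'}$. Since $(a,b)$ is in particular in $\umax(\rel^{X,\cR})$, condition (S1) for $\cR$ already gives such a $c$ with $\cR$-compatibility on the two legs; moreover one may take $c\in\umax$ by an application of the Maximality Lemma~\ref{lem:to-max}, since the $(2,3)$-strategy of $\cR$ is formed by the $\umax$'s. It remains to upgrade $\cR$-compatibility to $\cR'$-compatibility along $\{u,z\}$ and $\{w,z\}$, i.e.\ to check that the triples $(\cdot,\gm,\dl)$ active on those pairs place $(a,c)$ and $(b,c)$ into the correct $\tau$-blocks. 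I expect this to follow from the \emph{consistency} of the $\tau_U$'s recorded in Lemma~\ref{lem:congruence-restriction}(2) together with Lemma~\ref{lem:restricted-congruence}: the blocks cut out on $\{u,z\}$ and $\{w,z\}$ are restrictions of a common congruence on a larger projection, so a $c$ compatible with $(a,\cdot)$ is automatically compatible with $(b,\cdot)$ once $(a,b)$ lies in the right block on $X$ — and $(a,b)\in\umax(\rel^{X,\cR'})$ means exactly that. The one subtlety is that a triple $(z',\gm,\dl)$ may be active on $\{u,z\}$ but not on $X$; here I would invoke (2,3)-minimality of $\cP$ and the alignment consistency of Theorem~\ref{the:centralizer-alignement} to propagate the block constraint from the two-element set $\{u,z\}$ back to a value at $z$, then forward to $\{w,z\}$.

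\textbf{The main obstacle} will be this last propagation step: making precise that the block of $c$ forced by the leg $(a,c)$ is the same block forced by the leg $(b,c)$, uniformly over all active triples $(z',\gm,\dl)\in\cW(\ov\beta')$. The clean way to organize it is to fix an active triple, pass to the relevant projection $\pr_{\{u,w,z\}}\rel^{C'}$ for a suitable constraint (or $\cS$ on that set, via block-minimality), apply Lemma~\ref{lem:congruence-restriction}(2) to see that $\tau_{\{u,z\}}$, $\tau_{\{w,z\}}$ and $\tau_{\{u,w,z\}}$ are mutually restrictions, and then observe that $(a,b)\in\tau_{\{u,w\}}$-block forces $c$ into the $\tau_{\{u,z\}}$-block $\Leftrightarrow$ the $\tau_{\{w,z\}}$-block, because all three are traces of the single congruence $\tau_{\{u,w,z\}}$. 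This keeps the argument entirely at the level of congruences already manufactured by Lemma~\ref{lem:congruence-restriction}, so no new commutator-theoretic input is needed; the work is purely bookkeeping of projections and restrictions, which is why I would present the congruence-consistency step carefully and leave the routine projection chases to the reader.
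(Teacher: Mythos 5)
Your high-level plan — reduce to (S1) for $\cR$ and then argue that $\cR$-compatibility can be upgraded to $\cR'$-compatibility — is the right starting point, and your observation that $\rel^{X,\cR'}$ is either $\rel^{X,\cR}$ (on the $\umax$-level) or its intersection with a single $\tau$-block is the correct framing via Lemma~\ref{lem:congruence-restriction} and Corollary~\ref{cor:cR-structure}. But the concluding claim, that consistency of the $\tau$'s (``all traces of a single congruence $\tau_{\{u,w,z\}}$'') reduces the rest to ``purely bookkeeping of projections and restrictions,'' is where there is a genuine gap.

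The problem is that consistency of the $\tau$'s only tells you that \emph{if} $(a,b,c)$ lies in the correct $\tau_{\{u,w,z\}}$-block, then its two binary projections lie in the correct $\tau_{\{u,z\}}$- and $\tau_{\{w,z\}}$-blocks. It does not tell you that a $c$ producing the triple in that block actually \emph{exists} among the $c$'s furnished by (S1) for $\cR$. Fixing $a$ in the first coordinate and $b$ in the second constrains $c$ to lie in two subsets of the ternary relation, call them $\rel^*_a$ and $\rel^*_b$, and you need these to intersect inside a prescribed $\tau$-block. There is no a priori reason the $c$ you get from (S1) for $\cR$ lands in that block — different extensions of $a$ along $\{u,z\}$ can sit in different $\tau_{\{u,z\}}$-blocks when $\tau_{\{u,z\}}$ is maximal. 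The paper's proof addresses exactly this: after building the ternary relation $\rel(x,y,w)$ it performs a case analysis on which of $\tau_x,\tau_y,\tau_w,\tau_{xy},\tau_{xw},\tau_{yw}$ are full versus maximal, and the intersection claim $\rel^*_a\cap\rel^*_b\ne\eps$ is driven in the difficult subcases (2.4, 2.5, 4) by Lemma~\ref{lem:module-intersection}, which asserts that two polynomially-closed subalgebras covering a module quotient and sharing a u-maximal point must themselves cover the quotient.

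The worst case — $\tau_{xw},\tau_{yw}$ maximal while $\tau_x,\tau_y,\tau_w$ are full (Case 4 of the paper) — is nowhere covered by your plan. There one must further handle the situation where $(v,\al,\beta)\in\cW'$ by introducing auxiliary relations $\relo_x(x,w,v),\relo_y(y,w,v)$ carrying the $v$-coordinate explicitly, and then apply Lemma~\ref{lem:module-intersection} after passing to the skew congruence of the module product $B^*_v\fac\al\tm B^*_v\fac\al$. This is a substantive commutator-theoretic construction, not a projection chase; your claim that ``no new commutator-theoretic input is needed'' is where the proposal breaks down. Your nonemptiness argument is essentially fine but implicitly relies on (S3) for $\cR'$, which the paper proves separately and notes as a forward reference — worth flagging so the dependency is explicit rather than hidden.
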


We start with an auxiliary lemma.

\begin{lemma}\label{lem:module-intersection}
Let $\zA$ be an algebra and $\beta\in\Con(\zA)$ and $\zA'$ a 
subalgebra of a $\beta$-block. Let also $\al<\beta\red{\zA'}$ be a
congruence of $\zA'$ such that $\zA'\fac\al$ is a module.
Let $\zB,\zC$ be subalgebras of $\zA'$ such that $\zB\cap\zC\ne\eps$ and 
$\zB\cap\zC$ contains a u-maximal element of an $\al$-block of $\zA'$, 
$\zB\fac\al=\zA'\fac\al$, $\zC\fac\al=\zA'\fac\al$, and $\zB,\zC$ are 
polynomially closed in $\zA$. Then $(\zB\cap\zC)\fac\al=\zA'\fac\al$.
\end{lemma}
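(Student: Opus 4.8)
The plan is to reduce the statement to showing that $\zB\cap\zC$ meets every $\al$-block of $\zA'$, and then to transport the given intersection point from its own $\al$-block to an arbitrary one, using the module structure of $\zA'\fac\al$ together with the polynomial closeness of $\zB$ and $\zC$.

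First I would record what $\zA'\fac\al$ being a module gives us. By Lemma~\ref{lem:as-type-2} every pair of elements of $\zA'\fac\al$ is a thin affine edge, and $\zA'\fac\al$ is fully asm-connected; in particular every $\al$-block of $\zA'$ is a u-maximal element of $\zA'\fac\al$, so Lemma~\ref{lem:u-max-congruence} yields $\umax(D)\sse\umax(\zA')$ for each $\al$-block $D$ of $\zA'$. Since $\zB\cap\zC$ contains a u-maximal element $m$ of some $\al$-block $D_0$, it follows that $m\in\umax(\zA')$, and then Lemma~\ref{lem:u-max-congruence} applied to $\zB,\zC\le\zA'$ (each containing $m$) gives $\umax(\zB),\umax(\zC)\sse\umax(\zA')$. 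Next I would prove the auxiliary claim that for any subalgebra $\zB'$ of $\zA'$ with $\zB'\fac\al=\zA'\fac\al$, the set $\umax(\zB')$ meets every $\al$-block: given an $\al$-block $D$, pick $d\in\zB'\cap D$ and an asm-path in $\zB'$ from $d$ to some $u\in\umax(\zB')$; its $\al$-image is an asm-path $\ov d\sqq_{asm}\ov u$ in the module $\zA'\fac\al$, which is reversible there, and lifting the reverse path back into $\zB'$ starting at $u$ by the Maximality Lemma~\ref{lem:to-max}(4) produces $w\in\umax(\zB')\cap D$. Applying this to $\zB$ and to $\zC$ shows that $\umax(\zB)$ and $\umax(\zC)$ meet every $\al$-block.

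With these facts available I would carry out the transport. Because every pair of $\al$-blocks of $\zA'$ forms a thin affine edge of $\zA'\fac\al$, it suffices to show that if $\zB\cap\zC$ meets an $\al$-block $E$ then it meets every $\al$-block $E'$; taking $E=D_0$ then finishes the proof. Fix $z\in\zB\cap\zC\cap E$. Using Lemma~\ref{lem:op-s-on-affine}(2) for the thin affine edge $\ov E\,\ov{E'}$ I obtain a term operation $h_{\ov E\ov{E'}}$ which acts on $\zA'\fac\al$ as a Mal'tsev (affine) term of the module, necessarily $x-y+z$, since it is Mal'tsev on every two-generated subalgebra of $\zA'\fac\al$. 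Choosing $w\in\zB\cap E'$, the polynomial $f(x)=h_{\ov E\ov{E'}}(x,x,w)$ of $\zA$ satisfies $f(z)\in\zB$ (its arguments $z,z,w$ all lie in $\zB$) and $f(z)\in E'$, so $f(z)\in\zB\cap E'$; the remaining point is to certify that $f(z)$, or a suitable variant, also lies in $\zC$. Here I would invoke the polynomial closeness of $\zC$ in $\zA$: one exhibits a polynomial $g$ of $\zA$ fixing a u-maximal element of $\zC$ in $E'$ (which exists by the claim above) and carrying a u-maximal element of $\zC$ in $E$ to $f(z)$ up to an as-path inside a two-generated subalgebra, $g$ being again assembled from the module operation; polynomial closeness then forces $f(z)\in\zC$, and $z':=f(z)\in\zB\cap\zC\cap E'$, as required.

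The main obstacle is precisely this last certification step. Membership of the transported element in $\zB$ is automatic because the transport uses only elements of $\zB$, but no term expression built from $\zB\cap\zC$ alone ever leaves the coset $(\zB\cap\zC)\fac\al$, so membership in the ``other'' subalgebra has to be extracted from the polynomial-closeness condition. This demands careful choices of the auxiliary polynomials and of the u-maximal witnesses in each block, with no structural help available from the $\al$-blocks themselves; it may well be necessary to iterate the transport along a sequence of thin affine edges of $\zA'\fac\al$ in order to keep control of which elements are u-maximal in $\zB$, in $\zC$, and in $\zB\cap\zC$ simultaneously.
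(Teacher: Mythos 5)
The overall shape of your argument---pick a point of $\zB\cap\zC$, move it to another $\al$-block, certify membership in each of $\zB,\zC$ via polynomial closeness---is the right one, but the mechanism you chose does not close, and the gap sits exactly where you flag it. Your transported element $f(z)=h_{\ov E\ov{E'}}(z,z,w)$ is built from a witness $w\in\zB\cap E'$, so it lands in $\zB$ for free; but the polynomial-closeness condition for $\zC$ requires expressing $f(z)$ as an element $\bc\in\Sg{\ba,g(\bb)}$ with $\ba,\bb\in\umax(\zC)$, $g$ a polynomial of $\zA$ fixing $\ba$, and $\ba\sqq_{as}\bc$. Nothing in your proposal supplies this: $f(z)$ is defined relative to $w$, which need not be in $\zC$, there is no reason for $f(z)$ to fall into a two-generated subalgebra over two $\zC$-elements, and ``a polynomial assembled from the module operation'' fixing a $\zC$-element of $E'$ while producing $f(z)$ from a $\zC$-element of $E$ is precisely what has to be constructed, not assumed. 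You are also incorrect in treating the Mal'tsev operation on $\zA'\fac\al$ as literally $x-y+z$ on $\zA'$; the module structure lives only on the quotient, so the transport is controlled modulo $\al$ but not elementwise.

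The paper's own proof sidesteps this obstruction by working at the level of $(\gm,\dl)$-minimal sets rather than with an explicit affine transport. Take $a\in\zB\cap\zC$ u-maximal in its $\al$-block; since $\zA'\fac\al$ is a module, $a$ is then u-maximal in $\zA'$ and in every $\gm$-block with $\gm\ge\al$. Pick $a'\in\zB$ with $a'\not\eqc\al a$, let $\dl=\Cg{\{(a,a')\}\cup\al}$ and $\al\le\gm\prec\dl$. Because $\zA'\fac\al$ is a module, $\typ(\gm,\dl)=\two$, and Lemma~\ref{lem:maximal-minimal} gives a polynomial $g$ with $g(a)=a$ and $g(\zA)$ a $(\gm,\dl)$-minimal set; so $a$ lies in a $(\gm,\dl)$-trace $T$, and one can choose $b\in T$ with $b\not\eqc\gm a$. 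The key point you are missing is that this $b$ is a ``canonical'' target: it is reachable as $f(a')$ by a polynomial $f$ fixing $a$, and it is also reachable as $f'(a'')$ for any $a''$ in the $\dl$-block of $a$ with $a''\not\eqc\gm a$---in particular for some $a''\in\zC$, which exists because $\zC\fac\al=\zA'\fac\al$ forces $\zC$ to meet every $\gm$-class. So polynomial closeness of $\zB$ (applied with $a,a'$) and of $\zC$ (applied with $a,a''$) both put $b$, and the as-successors of $a$ in $\Sg{a,b}$, into the respective subalgebra. This shows $(\zB\cap\zC)\fac\al$ is not confined to any $\dl$-block for any $\dl>\al$, and since $\zA'\fac\al$ is a module, it must be all of $\zA'\fac\al$. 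In short: the TCT subtrace is a target that both $\zB$- and $\zC$-side polynomials can hit while fixing $a$, whereas your $h_{\ov E\ov{E'}}(z,z,w)$ is tied to the $\zB$-side choice of $w$ and has no independent $\zC$-side description.
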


\begin{proof}
Let $a\in\zB\cap\zC$ be u-maximal in an $\al$-block of $\zA'$. As $\zA'\fac\al$
is a module, by Lemma~\ref{lem:u-max-congruence} it is u-maximal in the 
$\gm$-block $a^\gm$ for any $\gm\in\Con(\zA')$, $\al\le\gm$. Let $a'\in\zB$ and 
$a'\not\eqc\al a$; let also $\dl=\Cg{\{(a,a')\}\cup\al}$ in $\zA'$ and 
$\gm\in\Con(\zA')$ any such that $\al\le\gm\prec\dl$. 
For any $(\gm,\dl)$-trace $T$ that contains $a$ and any
$b\in T$, $a^\gm\ne b^\gm$, there is a polynomial $f$ of $\zA$ such 
that $f(a)=a$ and $f(a')=b$. 
Since $\zB$ is polynomially closed, for any $c\in\Sg{a,b}$ such that 
$a\sqq_{as}c$, we have $c\in\zB$, and $c$ is u-maximal in $\zA'$. In a 
similar way $c\in\zC$. Therefore it suffices to show that for any 
$\dl=\Cg{\{(a,a')\}\cup\al}$ in $\zA'$ for some $a'\in\zA'$ and any 
$\gm\in\Con(\zA')$, $\al\le\gm\prec\dl$ there is $b\in\zA'$ such that 
$\{a,b\}$ is a $(\gm,\dl)$-subtrace. Indeed, as is proved above 
$b\in\zB\cap\zC$, and therefore $(\zB\cap\zC)\fac\al$ is not contained
in a $\dl$-block. Since $\zA'\fac\al$ is a module, this means 
$(\zB\cap\zC)\fac\al=\zA'\fac\al$.

Since $\zA'\fac\al$ is a module, there is a $(\gm,\dl)$-subtrace $\{d,e\}$ 
such that $d\eqc\gm a$. Then by Lemma~\ref{lem:maximal-minimal}
there is also a polynomial $g$ such that $g(\zA)$ is an 
$(\gm,\dl)$-minimal set and $g(a)=a$. The result follows.
\end{proof}

We now can prove Lemma~\ref{lem:S1-1}.

\begin{proof}(of Lemma~\ref{lem:S1-1})\ \ \ \ 
We consider the collection of constraints 
$C^X=\ang{X,\rel^X}$, $X\sse V$, $|X|=2$, such that 
$\umax(\rel^{X,\cR})$ constitute a $(2,3)$-strategy for $\cP^\cR$. 
This collection exists by (S1) for $\cR$.
Let $\rel^{X,\cR'}$ denote the set of $\cR'$-compatible tuples from 
$\rel^{X,\cR}$. It suffices to show that for any tuple 
$(a,b)\in\umax(\rel^{X,\cR'})$, $X=\{x,y\}$, and any $w\not\in \{x,y\}$ 
there is $c\in\zA_w$ such that  $(a,c)\in\umax(\rel^{\{x,w\},\cR'}),
(b,c)\in\umax(\rel^{\{y,w\},\cR'})$; then the proof can be completed by
(S3) for $\cR'$ (to be proved later). By (S1) for $\cR$ there is $d\in\zA_w$ 
such that $(a,d)\in\umax(\rel^{\{x,w\},\cR}),(b,d)\in\umax(\rel^{\{y,w\},\cR})$. 

Consider the relation $\rel$ given by
$$
\rel(x,y,w)=\rel^{\{x,y\}}(x,y)\meet\rel^{\{x,w\}}(x,w)\meet
\rel^{\{y,w\}}(y,w),
$$
and let $\rel^\cR,\rel^{\cR'}$ be the set of $\cR$- and $\cR'$-compatible 
tuples from $\rel$, respectively. As is easily seen, element $c\in\zA_w$ 
satisfies the required conditions if and only if $(a,b,c)\in\rel^{\cR'}$ and
$(a,c)\in\umax(\pr_{xw}\rel^{\cR'})$, $(b,c)\in\umax(\pr_{yw}\rel^{\cR'})$.

By (S4) $\rel^\cR$ is a subalgebra of $\rel$; moreover, by (S1) the binary 
projections of $\rel^\cR$ contain $\umax(\rel^{\{x,y\},\cR})$, 
$\umax(\rel^{\{x,w\},\cR})$, $\umax(\rel^{\{y,w\},\cR})$, respectively.
Also, by Lemmas~\ref{lem:poly-closed}(2) and~\ref{lem:S6-1} 
$\rel^\cR,\rel^{\cR'}$ are polynomially closed in $\rel$. 
We consider a number of cases.

\medskip

{\sc Case 1.} $w\not\in W$. 

\smallskip

If, say, $x\not\in W$ then 
$\rel^{\{x,w\},\cR}=\rel^{\{x,w\},\cR'}$. If $x\in W$, 
then by construction and the Congruence Lemma~\ref{lem:affine-link} either 
$\umax(\rel^{\{x,w\},\cR})=\umax(\rel^{\{x,w\},\cR'})$ if $\tau_{x}$
is the full congruence, or 
$\umax(\rel^{\{x,w\},\cR}\cap(B'_x\tm B_w))\sse\umax(\rel^{\{x,w\},\cR'})$
otherwise. In either case $(a,d)\in\rel^{\{x,w\},\cR'}$. Similarly, 
$(b,d)\in\rel^{\{y,w\},\cR'}$. Therefore $(a,b,d)\in\rel^{\cR'}$, and by 
the Maximality Lemma~\ref{lem:to-max}(5) there is always $c$ such that 
$(a,c)\in\umax(\rel^{\{x,w\},\cR'}), (b,c)\in\umax(\rel^{\{y,w\},\cR'})$. 

\medskip

Therefore we may assume that $w\in W$.

If $x\in W$ (or $y\in W$, or $x,y\in W$) then let $\tau_x,\tau_{xw}$
(respectively, $\tau_y,\tau_{yw}$, or $\tau_{xy}$) be as in
Lemma~\ref{lem:congruence-restriction}. If $x\not\in W$ 
(or $y\not\in W$) then let $\tau_x=\beta_x$, $\tau_{xw}=\tau_w$
(more precisely, $\tau_{xw}=\beta_x\tm\tau_w$),
and $\tau_{xy}=\tau_y$ (respectively, $\tau_y=\beta_y,\tau_{yw}=\tau_w$).
We view all these congruences interchangeably: as congruences of $\rel$ in the 
natural way and their restrictions to $\cR$-compatible tuples as congruences 
of $\rel^\cR$, or as congruences of the corresponding projections of $\rel$ and
$\rel^\cR$.  By 
Lemma~\ref{lem:congruence-restriction} if $\tau'$ is one of these 
congruences, say, $\tau'=\tau_X$ for $X\sse\{x,y,w\}$, then $\tau'$ 
viewed as a congruence of $\rel^X$ is either the full congruence on 
$\rel^{X,\cR}$ or $\tau'\prec\ov\beta_X$. Therefore $\tau'$ 
viewed as a congruence of $\rel$ is either the full congruence on $\rel^\cR$, 
or $\tau'\prec(\beta_x\tm\beta_y\tm\beta_w)$ 
in $\Con(\rel)$; in the latter case we will say that $\tau'$ is maximal. Also let 
$\tau=\tau_{xy}\meet\tau_{xw}\meet\tau_{yw}$. Again by 
Lemma~\ref{lem:congruence-restriction} $\rel^\cR\fac\tau$ is a module.

Let $\rel^*_a,\rel^*_b$ be the 
sets of tuples $\ba\in\rel^\cR$ satisfying $\ba[x]=a,\ba[y]=b$, respectively; 
note that $(a,b,d)\in\rel^*_x\cap\rel^*_y$. We consider several cases of what 
the congruences introduced earlier can be.

\medskip

{\sc Case 2.} $\tau_x$ is the full congruence and $\tau_{xw}=\tau_w$, 
or $\tau_y$ is the full congruence and $\tau_{yw}=\tau_w$.

\smallskip

Suppose  $\tau_x$ is the full congruence, the other option is similar. 

\smallskip

{\sc Subcase 2.1} $\tau_{yw}$ is the full congruence.\\[2mm]
As $\tau_{yw}\le\tau_w$, if $\tau_{yw}$ is a full congruence, 
$\tau_w=\tau_{xw}$ is also the full congruence. Therefore,
$(a,d)\in\rel^{\{x,w\},\cR'},(b,d)\in\rel^{\{y,w\},\cR'}$, and we are done.

\smallskip

{\sc Subcase 2.2} $\tau_y$ is maximal while $\tau_w$ is full.\\[2mm]
Since any tuple $(b,x)\in\rel^{\{y,w\},\cR}$ also belongs to 
$\rel^{\{y,w\},\cR'}$ in this case, again 
$(a,d)\in\rel^{\{x,w\},\cR'},(b,d)\in\rel^{\{y,w\},\cR'}$  

\smallskip

{\sc Subcase 2.3} Both $\tau_y$ and $\tau_w$ are maximal.\\[2mm]
In this case, as  
$\tau_{yw}\le\tau_y\meet\tau_w$ and all three congruences are maximal,
we have $\tau_{yw}=\tau_y=\tau_w$. Also, $\tau_w$ (viewed as a congruence 
of $\pr_w\rel^\cR$) is the link congruence
with respect to $\pr_{yw}\rel^\cR$, and so $(b,d)\in\rel^{\{y,w\},\cR'}$.
This also implies, since $(a,b)\in\pr_{xy}\rel^\cR$, that 
$(a,d)\in\rel^{\{x,w\},\cR'}$.

\smallskip

{\sc Subcase 2.4} $\tau_w$ is maximal while $\tau_y$ is full.\\[2mm]
Let $B^*_z=\pr_z\rel^\cR$ for 
$z\in\{x,y,w\}$. By the Congruence Lemma~\ref{lem:affine-link} 
$\umax(B^*_x)\tm B^*_w\fac{\tau_w}\sse\pr_{xw}\rel^\cR$ and 
$\umax(B^*_y)\tm B^*_w\fac{\tau_w}\sse\pr_{yw}\rel^\cR$. Indeed,
otherwise $\pr_{xw}\rel^\cR\fac{\tau_w}$ would be the graph of a 
mapping $\nu:B^*_x\to B^*_w\fac{\tau_w}$ contradicting the assumption
that $\tau_x$ is the full congruence; similar for $B^*_y$.
Therefore $\rel^*_a\fac{\tau_w}=\rel^*_b\fac{\tau_w}=\rel^\cR\fac{\tau_w}$
and since $(a,b)$ is u-maximal in a $\tau_{xy}$-block, $d$ can be 
chosen such that $(a,b,d)$ is u-maximal in a $\tau_w$-block. Then  by 
Lemma~\ref{lem:module-intersection} 
$(\rel^*_a\cap\rel^*_b)\fac{\tau_w}=\rel^\cR\fac{\tau_w}$ and the
result follows.

\smallskip

{\sc Subcase 2.5.} $\tau_{yw}\prec\tau_y=\tau_w$ and $\tau_y,\tau_w$ 
are full congruences.\\[2mm] 
Again by the Congruence Lemma~\ref{lem:affine-link}
$\umax(B^*_x)\tm \pr_{yw}\rel^\cR\fac{\tau_{yw}}\sse\rel^\cR$ and
therefore $\rel^*_a\fac{\tau_{yw}}=\rel^\cR\fac{\tau_{yw}}$.
Also, let $\relo$ be the union of all $\tau_{yw}$-blocks of $\rel^\cR$ whose 
intersection with $\rel^*_b$ is nonempty, this is clearly a subalgebra. Also, 
let $\rel^{**}_a=\rel^*_a\cap\relo$. 
Note that, since $\tau_y$ is the full congruence, and therefore 
$\umax(B^*_y)\sse\rel^{\{y,w\},\cR'}$, we have 
$\rel^{\{y,w\},\cR'}\cap\pr_{yw}\relo\ne\eps$. Then 
$\rel^{**}_a\fac{\tau_{yw}}=\relo\fac{\tau_{yw}}$ and 
$\rel^*_b\fac{\tau_{yw}}=\relo\fac{\tau_{yw}}$. Then as before the result 
follows by Lemma~\ref{lem:module-intersection}.


\medskip

{\sc Case 3.} $\tau_x,\tau_y$ are maximal congruences.

\smallskip

In this case $\tau_{xw}=\tau_x,\tau_{yw}=\tau_y$, and depending on
whether or not $\tau_w$ is maximal we proceed as in Cases~2.2 or~2.3.

\medskip

If $\tau_w$ is a maximal congruence or $\tau_{xw}$ or $\tau_{yw}$ is 
the full congruence, the result follows from one of the previous cases. 
Therefore the only remaining case is

{\sc Case 4.} $\tau_{xw}$ and $\tau_{yw}$ are maximal, while 
$\tau_x,\tau_y,\tau_w$ are not. 

\smallskip

Let again $B^*_z=\pr_z\rel^\cR$ for $z\in\{x,y,w\}$. 
If $(v,\al,\beta)\not\in\cW'$ or $w\not\in\Razm(\cP)$, that is, 
$\mu^Y_w=\zz_w$ for any choice of $Y$ in (R1),(R2), then the required 
$c\in B^*_w$ exists, since $(a,b)$ can be extended to a solution from 
$\cS^{\cR'}_W$ or $\cS^{\cR'}_{W,Y}$, $Y=\Razm(\cP)-\{x,y\}$ by
construction. 

Suppose that $(v,\al,\beta)\in\cW'$ and $\mu^Y_w=\mu_w$ 
for $Y=\Razm(\cP)-\{x,y\}$. Let $\relo_x$ be a subalgebra of the product 
$\zA_x\tm\zA_w\tm\zA_v\fac\al$ that consists of all triples $(a',b',c')$ 
such that there is a solution $\vf\in\cS_{W,Y}$  with $\vf(x)=a',\vf(w)=b'$, 
and $\vf(v)\in c'$. Let $B^*_v=A_{\cR,v}$. By block-minimality $\relo_x$ 
is indeed a subdirect 
product and by (S2) for $\cR$ we have $\umax(\rel^{\{x,w\},\cR})
\sse\pr_{xw}(\relo_x\cap(B^*_x\tm B^*_w\tm B^*_v\fac\al)$ and 
$\pr_v(\relo_x\cap(B^*_x\tm B^*_w\tm B^*_v\fac\al)=B^*\fac\al$. 
Also, by Lemma~\ref{lem:poly-closed}(3) $\relo_x$ is polynomially
closed. Relation $\relo_y$ is defined in a similar way. Let also
$$
\relo(x,y,w,v)=\relo_x(x,w,v)\meet\relo_y(y,w,v),
$$
and $\relo'=\relo\cap\ov B$. Let $\relo^a=\{\ba\in\relo'\mid \ba[x]=a\}$,
$\relo^b=\{\ba\in\relo'\mid \ba[y]=b\}$, and 
$\al'=\beta_x\tm\beta_y\tm\beta_w\tm\al$. By the assumption that 
$\tau_x,\tau_y$ are full congruences 
$\relo^a\fac{\al'}=\relo^b\fac{\al'}=\relo'\fac{\al'}$. Therefore,
if we prove that $(a,b)\in\pr_{xy}\relo'$, we obtain the result by 
Lemma~\ref{lem:module-intersection}. Note that if this is the case, 
since $(a,b)\in\umax(\rel^{\{x,y\},\cR}$, there is also $(a,b,c',d')\in\relo'$ 
that is u-maximal in an $\al'$-block.

To this end consider the relations 
$$
\rela(x,y,w,v,v')=\relo_x(x,w,v)\meet\relo_y(y,w,v'),
$$
and $\rela'=\rela\cap\ov B$. In a similar way we define 
$\rela^a=\{\ba\in\rela'\mid \ba[x]=a\}$, 
$\rela^b=\{\ba\in\rela'\mid \ba[y]=b\}$. By (S1) there are $d\in B^*_w$ and
$e',e''\in B^*_v\fac\al$ such that $(a,b,d,e',e'')\in\rela'$. Also by construction 
(R2) there are $a'\in B^*_x, b'\in B^*_y$, $d_1,d_2\in B^*_w$, 
$d_1\eqc{\mu_w}d_2$ and $e\in B^*_v$, such that 
$(a,b',d_1,e,e)$, $(a',b,d_2,e,e)\in\rela'$. Recall that $B^*_v\fac\al$ is a 
module. Let $\dl$ be
the skew congruence of $B^*_v\fac\al\tm B^*_v\fac\al$, that is, 
a congruence such that $\Dl=\{(c_1,c_2),\mid c_1\eqc\al c_2\}$ is a $\dl$-block.
Let $\dl'$ be the congruence of $\rela'$ given by $\bc\eqc{\dl'}\bd$ if and only
if $\pr_{vv'}\bc\eqc\dl\pr_{vv'}\bd$. Note that $\dl'<\ov\beta$ 
in $\Con(\rela)$. Let $\Dl'$ be the $\dl'$-block corresponding to $\Dl$,
that is, $\Dl=\pr_{vv'}\Dl'$.
Since $\rela^a\cap\rela^b\ne\eps$ and $(a,b,d,e',e'')$ can be chosen to 
be u-maximal in a $\dl'$-block, and $\rela^a\cap\Dl'\ne\eps$, 
$\rela^b\cap\Dl'\ne\eps$, by Lemma~\ref{lem:module-intersection} we have 
$\rela^a\cap\rela^b\cap\Dl'\ne\eps$. The result follows.
\end{proof}

\subsubsection{Conditions (S2), (S3)}\label{sec:S2-S3-1}

In this section we prove that $\cR'$ satisfies conditions (S2) and (S3).
We first prove (S2) and then show what in the proof has to be changed 
to obtain a proof of (S3).

As before, let $W=W_{v,\al\beta}$.

\begin{lemma}\label{lem:S2-1}
$\cR'$ satisfies (S2). That is, for every 
$(w_1,\gm_1,\dl_1),(w_2,\gm_2,\dl_2)\in\cW(\ov\beta')$ 
(let $W_1=W_{w_1,\gm_1\dl_1},W_2=W_{w_2,\gm_2\dl_2}$) and every 
$\ba\in\umax(\pr_{\bs\cap W_1\cap W_2}\rel'_{C,w_1,\gm_1\dl_1})$ it holds:
if $(w_2,\gm_2,\dl_2)\not\in\cW'$ then $\ba$ extends
to an $\cR'$-compatible solution $\vf$ of $\cP_{W_2}$; otherwise
if $(w_1,\gm_1,\dl_1)\not\in\cW'$ then $\ba$ extends to an $\cR'$-compatible 
solution of $\cP_{W_2}\fac{\ov\mu^{Y_1}}$, where 
$Y_1=\Razm(\cP)-(W_1\cap W_2)$; 
and if $(w_1,\gm_1,\dl_1)\in\cW'$ then $\ba$ extends to an $\cR'$-compatible 
solution of $\cP_{W_2}\fac{\ov\mu^{Y_2}}$, where $Y_2=\Razm(\cP)-\bs$.
\end{lemma}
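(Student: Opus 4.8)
The plan is to prove Lemma~\ref{lem:S2-1} by transferring the $\cR$-version of (S2), which holds by hypothesis since $\cR$ is a $\ov\beta$-strategy, to the new strategy $\cR'$. First I would unwind the definition of $\cR'$-compatibility using the structural description of the $\rel'_{C,w,\gm\dl}$ obtained in Lemma~\ref{lem:congruence-restriction} and Corollary~\ref{cor:cR-structure}: for each relevant triple, $\rel'_{C,w,\gm\dl}$ either agrees with $\rel_{C,w,\gm\dl}$ on u-maximal elements, or is cut out from $\rel_{C,w,\gm\dl}$ by a $\tau'$-block, where $\tau'$ is (the pullback of) one of the congruences $\tau_U$; and in the latter case the u-maximal elements of $\rel'_{C,w,\gm\dl}$ are exactly the u-maximal elements of $\rel_{C,w,\gm\dl}$ lying in that block. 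So a $\cR'$-compatible tuple is just a $\cR$-compatible tuple whose projections land in the appropriate $\tau$-blocks, equivalently (by (R2)) one that extends to an $\cR$-compatible solution $\vf$ of the corresponding subproblem $\cP_{W_{v,\al\beta}}$ (or its quotient) with $\vf(v)\in B'_v$ (or $B'_v\fac{\mu^Y_v}$).

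Given $\ba\in\umax(\pr_{\bs\cap W_1\cap W_2}\rel'_{C,w_1,\gm_1\dl_1})$, I would first invoke (S2) for $\cR$ to get an $\cR$-compatible solution $\vf_0$ of the relevant $\cP_{W_2}$-type problem extending $\ba$. The task is to correct $\vf_0$ to an $\cR'$-compatible one. The key tool is the Congruence Lemma~\ref{lem:affine-link}, exactly as used in Section~\ref{sec:transformation1} to define $U$ and the $\al_w$: the relation $\cS_{W_2}$ (or the appropriate quotient) is strongly chained with respect to $\ov\beta,\ov B$ by (S7), and its $\cR$-compatible sub-relation is polynomially closed by (S6) together with Lemma~\ref{lem:poly-closed}; applying Lemma~\ref{lem:affine-link} to $\cS_{W_2}$ viewed as a subdirect product with the coordinate $v$ (or rather with a fresh copy of $\zA_v\fac\al$) tells us that the set of $\cR$-compatible solutions, factored by $\al$ at $v$, is either a full product over $\umax$ — in which case $\vf_0$ can be freely adjusted at $v$ into $B'_v$ and the remaining coordinates are forced along accordingly — or is the graph of a mapping whose kernel is precisely the induced $\tau$-congruence on $\cS_{W_2}^\cR$. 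In the latter case the constraint "$\vf(v)\in B'_v$" pins down the $\tau$-block of $\vf$ on $W_2$, and I need to show a $\cR$-compatible solution in that $\tau$-block extending $\ba$ exists. Here I would use that $\ba$ itself lies in the corresponding $\tau$-block (this is where Lemma~\ref{lem:restricted-congruence} and Corollary~\ref{cor:cR-structure} enter: the $\tau$-congruences on the various projections are mutually consistent restrictions, and cannot separate from the $(\al,\beta_v)$-interval, so membership of $\ba$ in the $\tau$-block on $\bs\cap W_1\cap W_2$ forces the right block globally), combined with the Maximality Lemma~\ref{lem:to-max}(5) to upgrade an arbitrary extension to a u-maximal one. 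Throughout I must keep track of the three sub-cases according to membership of $(w_1,\gm_1,\dl_1)$ and $(w_2,\gm_2,\dl_2)$ in $\cW'$, since these dictate whether we work with $\cP_{W_2}$ or with a quotient $\cP_{W_2}\fac{\ov\mu^Y}$; the arguments are parallel but the bookkeeping of which $\mu^Y_w$ are nontrivial differs, exactly as in the proof of Lemma~\ref{lem:S1-1}.

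The main obstacle, I expect, is the coherence of the various induced congruences $\tau_U$ across different subsets $U$ and across the two "windows" $W_1$ and $W_2$: to conclude that $\ba$ being u-maximal and $\cR'$-compatible in $\pr_{\bs\cap W_1\cap W_2}\rel'_{C,w_1,\gm_1\dl_1}$ forces it into the correct $\tau$-block for the $W_2$-problem, one has to chase through Lemma~\ref{lem:congruence-restriction}(2) (restriction compatibility of the $\tau_U$), Lemma~\ref{lem:restricted-congruence} (non-separability of $(\tau_C,\ov\beta_{W'})$ from the $(\gm_u,\dl_u)$), and the fact that the $(\al,\beta_v)$-interval cannot be separated from any of these — and then combine this with the block-minimality of $\cP$ to actually produce the global extension. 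A secondary difficulty is verifying that the extension produced is genuinely $\cR'$-compatible at \emph{every} triple in $\cW(\ov\beta')$, not just at $(w_1,\gm_1,\dl_1)$ and $(w_2,\gm_2,\dl_2)$; this requires noting, via Corollary~\ref{cor:cR-structure} and (S4) for $\cR$, that forcing $\vf(v)\in B'_v$ simultaneously forces the correct $\tau$-blocks for all other $(w,\gm,\dl)$ whose window meets $\bs$, because all the $(\al,\beta_v)$-related intervals are tied together by non-separability. Once this coherence is in hand, the remaining steps — applying (S2) for $\cR$, correcting by Lemma~\ref{lem:affine-link}, and lifting to u-maximality by Lemma~\ref{lem:to-max}(5) — are routine, and the proof of (S3) then follows by the same argument applied with $W_2$ replaced by the full scope $\bs$ of $C$, exactly as the excerpt indicates.
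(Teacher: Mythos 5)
Your high-level strategy is the correct one — invoke (S2) for $\cR$, identify the congruence $\tau$ on $\cS'_{W\cap W_2}$ whose block contains the $\cR'$-compatible tuples, appeal to the Congruence Lemma~\ref{lem:affine-link}, Lemma~\ref{lem:restricted-congruence}, and Corollary~\ref{cor:cR-structure} to exploit non-separability, and finish with the Maximality Lemma~\ref{lem:to-max}(5). You also correctly note that (S3) follows by the same argument with $W_2$ replaced by $\bs$. But there is a genuine gap at the single most important step, and it is exactly the place where you write that the global extension is obtained by ``chasing through'' the non-separability relations.

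Concretely: you assert that since $\ba$ lies in the correct $\tau$-block on $\bs\cap W_1\cap W_2$ and the $\tau_U$ restrict consistently (Lemma~\ref{lem:congruence-restriction}(2)), the right $\tau$-block on the whole of $W\cap W_2$ is forced, after which an extension exists. Restriction compatibility of the $\tau_U$ does \emph{not} by itself give this: $\ba$ is a genuine tuple on the small set, not a $\tau$-block, and what must be shown is that the set of $\tau$-blocks reachable by extending $\pr_X\ba$ inside $\cS'^\cR_{W\cap W_2}$ coincides with the set of $\tau$-blocks reachable by extending the full $\ba$ (plus a chosen tail on $W_2-W$) to a solution in $\cS'^\cR_{W_2}$. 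This is the content of Claim~1 in the paper's proof, and it is proved via a dedicated construction: one forms the two auxiliary relations $\relo$ (duplicating the coordinates $X=W\cap U$ to separate $\ba$ from the $W\cap W_2$-factor) and $\rela$ (doing the same inside $\cS'^\cR_{W_2}$), passes to their factors by $\tau$, and compares the resulting link congruences $\eta_1$ and $\th_1$. Establishing $\th_1=\eta_1\red{\rela''}$ then requires producing a collapsing polynomial $f$ of $\cS'_{W_2}$, via (S7), Lemma~\ref{lem:collapsing}, and Lemma~\ref{lem:restricted-congruence}, that simultaneously preserves the relevant $(\tau^*,\ov\beta_Z)$-subtrace, collapses each $B_z$ for $z\in W_2-W$ to a point, fixes a prechosen u-maximal tuple, and is idempotent — and then a careful argument with Proposition~\ref{pro:umax-rectangular}, the Maximality Lemma, and polynomial closedness (S6) to propagate membership. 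None of this is implied by mere ``consistency of restrictions.'' Without Claim~1 you cannot rule out that the $\tau$-block containing $\cS'^{\cR'}_{W\cap W_2}$ is reachable by some extension of $\pr_X\ba$ in the small relation but by no $\cR$-compatible $W_2$-solution with the right tail; this is precisely the obstruction your sketch leaves open.

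A secondary inaccuracy: you describe the dichotomy from the Congruence Lemma as being at the coordinate $v$ inside $\cS_{W_2}$, but $v$ need not belong to $W_2$ at all, and the paper never applies Lemma~\ref{lem:affine-link} in that form here. Instead it works with the congruence $\tau$ from equation~(\ref{equ:tau}), a meet of the pulled-back congruences $\tau'_{C'}(u,\eta\th)$ over all constraints and triples, and deduces from Lemma~\ref{lem:as-type-2} that $\cS'^\cR_{W\cap W_2}\fac\tau$ is a module. The module structure is what lets the final Maximality Lemma step go through (via Lemma~\ref{lem:u-max-congruence}), so this is not an incidental point of phrasing but part of why the construction is set up the way it is.
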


Take $C=\ang{\bs,\rel}$ and 
$(w_1,\gm_1,\dl_1),(w_2,\gm_2,\dl_2)\in\cW(\ov\beta')$ and let
$W_1=W_{w_1,\gm_1\dl_1}$, $W_2=W_{w_2,\gm_2\dl_2}$, and 
$U=\bs\cap W_1\cap W_2$. Let 
$\ba\in\umax(\pr_U\rel'_{C,w_1,\gm_1\dl_1})$. 
Depending on whether or not $(w_1,\gm_1,\dl_1),(w_2,\gm_2,\dl_2)\in\cW'$
we need to show that $\ba$ can be extended to a solution of 
$\cP_{W_2}\fac{\ov\mu^Y}$, where $Y$ is either empty, if 
$(w_2,\gm_2,\dl_2)\not\in\cW'$, or $Y=\Razm(\cP)-W_1$ if 
$(w_1,\gm_1,\dl_1)\not\in\cW'$ and $(w_2,\gm_2,\dl_2)\in\cW'$, and 
$Y=\Razm(\cP)-\bs$ if $(w_1,\gm_1,\dl_1),(w_2,\gm_2,\dl_2)\in\cW'$. 
The three cases are quite similar so we will unify the proofs as much as 
possible.

Let $\cP'_{W'}$\label{not:P-prime-W-prime} denote the
problem $\cP_{W'}\fac{\ov\mu^Y}$ for a set $W'\sse V$ and 
$\cS'_{W'}$\label{not:S-prime-W-prime} denote its set of solutions.
Then we need to show that $\ba\in\pr_U\cS'^{\cR'}_{W_2}$.

First, we will show that there are $\cR'$-compatible solutions of 
$\cP'_{W\cap W_2}$. This statement would be trivial, as 
$\cS^{\cR'}_{W\cap W_2}$ contains $\pr_{W\cap W_2}\cS^{\cR'}_W$,
if $Y$ was always empty. Otherwise a simple proof is needed. Note that this 
is the only statement where we consider the three options for $Y$
separately. Next, we identify the properties of $\cP_{W'}\fac{\ov\mu^Y}$
required for proving (S2) and show that the problem satisfies them
for all choices of $Y$. The rest of the proof will be given only using these 
properties; thus we use the same argument
in all the three cases of $Y$. 

If $\tau_{C',w_2,\gm_2\dl_2}$ is the 
full congruence of $\rel_{C',w_2,\gm_2\dl_2}\fac{\ov\mu^Y}$ for all 
$C'\in\cC$, then $\umax(\cS'^\cR_{W_2})=\umax(\cS'^{\cR'}_{W_2})$ 
and there is nothing to prove. Thus we will always assume that there is
$C'\in\cC$ such that $\tau_{C',w_2,\gm_2\dl_2}$ is maximal  
on $\rel_{C',w_2,\gm_2\dl_2}\fac{\ov\mu^Y}$.

\begin{lemma}\label{lem:W-cap-W2}
The relations $\cS'^\cR_{W\cap W_2}$ and $\cS'^{\cR'}_{W\cap W_2}$ 
are nonempty.
\end{lemma}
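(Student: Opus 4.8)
The claim is that both $\cS'^\cR_{W\cap W_2}$ (the $\cR$-compatible solutions of $\cP'_{W\cap W_2}=\cP_{W\cap W_2}\fac{\ov\mu^Y}$) and $\cS'^{\cR'}_{W\cap W_2}$ (the $\cR'$-compatible ones) are nonempty. The non-emptiness of $\cS'^\cR_{W\cap W_2}$ should be essentially immediate from the hypotheses on $\cR$: either $Y$ is empty and $\cS^\cR_{W\cap W_2}\supseteq\pr_{W\cap W_2}\cS^\cR_W$ is nonempty by (S2), (S3) for $\cR$ applied at $(v,\al,\beta)$; or $Y$ is nonempty, in which case the same (S2)/(S3) facts for $\cR$ produce an $\cR$-compatible solution of $\cP_{W}$ (or $\cP_W\fac{\ov\mu^{Y'}}$), and then projecting to $W\cap W_2$ and passing to the quotient by $\ov\mu^Y$ (which can only collapse constraints and enlarge solution sets) keeps it nonempty. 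So the first step is to unwind the three cases of $Y$ and in each one exhibit an $\cR$-compatible tuple; this is the ``simple proof'' the text alludes to, and I would write it out case by case but compactly.

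The substantive content is the non-emptiness of $\cS'^{\cR'}_{W\cap W_2}$. Here the strategy is to start from the $\cR$-compatible solution $\vf_0\in\cS'^\cR_{W\cap W_2}$ just produced and \emph{modify} it so that it becomes $\cR'$-compatible, i.e.\ so that for every relevant $C=\ang{\bs,\rel}$ the projection $\pr_{\bs\cap W}\vf_0$ lands in the correct $\tau_C$-block (the one defining $\rel'_{C,\cdot}$ via (R2)). By construction of $\cR'$ and Lemma~\ref{lem:congruence-restriction}, $\cR'$-compatibility amounts to requiring that the value at $v$ (after factoring by the appropriate $\mu^Y_v$) lies in $B'_v$, and the congruences $\tau_C$ are exactly the pullbacks of $\al$ under the mapping-structure furnished by the Congruence Lemma~\ref{lem:affine-link}. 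Since $\typ(\al,\beta)=\two$ and $B_v\fac\al$ is a module, the key point is that one can always push the value at $v$ into the block $B$: take $\vf_0$, look at its value $\vf_0(v)$ which lies in $B_v$ (or $B_v\fac{\mu^Y_v}$), and use block-minimality together with condition (S2)/(S3) for $\cR$ to find a $\cR$-compatible solution $\vf$ of $\cP_{W_{v,\al\beta}}$ (resp.\ its $\ov\mu$-quotient) with $\vf(v)\in B'_v=B$; restrict this $\vf$ to $W\cap W_2$ and factor appropriately. Because $W_{v,\al\beta}\supseteq W\cap W_2$ (every $w\in W$ with an interval not separable from $(\al,\beta)$ is in $W_{v,\al\beta}$; and variables of $W_2$ that matter for $\cR'$-compatibility at $(v,\al,\beta)$ are by definition in $W$), the restriction of $\vf$ is defined on all of $W\cap W_2$, and by (R2) this restriction is $\cR'$-compatible.

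The one delicate point — and I expect this to be the main obstacle — is checking that the restriction of such a $\vf$ to $W\cap W_2$ is \emph{simultaneously} $\cR'$-compatible for \emph{all} constraints $C$, not just the one whose block we are correcting. This is where the consistency of the congruences $\tau_C$ (Lemma~\ref{lem:congruence-restriction}(2)) and the fact that they are all pullbacks of the single congruence $\al$ on $\zA_v\fac\al$ (Lemma~\ref{lem:congruence-restriction}(3), Lemma~\ref{lem:restricted-congruence}) are used: fixing $\vf(v)\in B'_v$ forces $\pr_{\bs\cap W}\vf$ into the $\tau_C$-block corresponding to $B$ for every $C$ at once, because each $\tau_C$-block is determined by the common value at $v$ through the mapping $\nu$ (when $\tau_C$ is maximal) or is the full congruence (when nothing needs to be checked). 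So the plan for $\cS'^{\cR'}_{W\cap W_2}$ is: (i) get $\vf_0\in\cS'^\cR_{W\cap W_2}$; (ii) extend/replace it using block-minimality and (S2)/(S3) for $\cR$ by a $\cR$-compatible solution $\vf$ over $W_{v,\al\beta}$ with $\vf(v)\in B'_v$ modulo the relevant $\ov\mu^Y$; (iii) restrict $\vf$ to $W\cap W_2$ and invoke Lemma~\ref{lem:congruence-restriction}(2),(3) to conclude that this restriction is $\cR'$-compatible for every constraint, hence lies in $\cS'^{\cR'}_{W\cap W_2}$. I would also note explicitly that $\umax$ of the relevant relations is used to guarantee (via the Maximality Lemma~\ref{lem:to-max}) that these partial solutions can be chosen u-maximal where needed so that the Congruence Lemma and Proposition~\ref{pro:umax-rectangular} apply; but modulo that, the argument is a routine combination of the listed properties of $\cR$ with the definition (R1)--(R2) of $\cR'$.
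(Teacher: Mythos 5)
Your proposal takes essentially the same route as the paper: $\cS'^\cR_{W\cap W_2}\ne\eps$ follows by projecting a $\cR$-compatible solution of a larger subproblem (the paper uses $\cS'^\cR_{W_2}$ via (S2); your use of $\cS^\cR_W$ works just as well), and $\cS'^{\cR'}_{W\cap W_2}\ne\eps$ is obtained by exhibiting one $\cR$-compatible solution $\vf$ over $W$ with $\vf(v)\in B'_v$ and restricting it. Two remarks. First, your detour through the $\tau_C$'s and the Congruence Lemma for the ``simultaneity'' point is unnecessary overhead: by the very definition (R2), $\rel'_{C,u,\gm\dl}$ is the set of tuples extending to some $\cR$-compatible solution with value at $v$ in $B'_v$, so once a single such $\vf$ exists, all its restrictions witness $\cR'$-compatibility for every $C$ and $(u,\gm,\dl)$ at once, with no further argument. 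Second, and more substantively, the step you summarize as ``restrict this $\vf$ to $W\cap W_2$ and factor appropriately'' is precisely where the paper does its only real work: when $(v,\al,\beta)\in\cW'$, the paper splits into three subcases according to whether $(w_1,\gm_1,\dl_1)$ and $(w_2,\gm_2,\dl_2)$ lie in $\cW'$, chooses the matching $Y$ (either $\eps$, $\Razm(\cP)-W_1$, or $\Razm(\cP)-\bs$), produces $\vf$ via (S2) for $\cR$ as a solution of $\cP_W\fac{\ov\mu^{Y_1}}$ with $Y_1=\Razm(\cP)-W_2$, and then reconciles these two quotients by noting that $Y_1$ collapses nothing inside $W_2$, so $\cS'^{\cR'}_{W\cap W_2}\fac{\ov\mu^{Y_1}}=\cS'^{\cR'}_{W\cap W_2}$. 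Your outline correctly names all the ingredients ((S2)/(S3), block-minimality, (R2)) but does not carry out this quotient bookkeeping, which is the nontrivial content of the lemma; be aware that a fully written proof has to thread three different $\ov\mu^Y$ quotients through the argument simultaneously.
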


\begin{proof}
The set $\cS'^\cR_{W\cap W_2}$ is nonempty, as it contains 
$\pr_{W\cap W_2}\cS'^\cR_{W_2}$, which is nonempty by (S2) for $\cR$.
If $(v,\al,\beta)\not\in\cW'$, then $\cS'^{\cR'}_{W\cap W_2}$ 
contains $\pr_{W\cap W_2}\cS'^{\cR'}_W$ or its factor modulo $\ov\mu^Y$
(if $Y\ne\eps$), which is nonempty. So, let $(v,\al,\beta)\in\cW(\ov\beta)$.

Suppose $(w_2,\gm_2,\dl_2)\not\in\cW'$ and so $Y=\eps$,
and $C_1=\ang{\bs_1,\rel_1}\in\cC$ is a constraint such that 
$\tau_{C_1,w_2,\gm_2\dl_2}$ is nontrivial on $\rel_{C_1,w_2,\gm_2\dl_2}$ and 
$\bb\in\umax(\rel'_{C_1,w_2,\gm_2\dl_2})$. By (S2) for $\cR$ tuple $\bb$ can be 
extended to a solution $\vf$ of $\cP_W\fac{\ov\mu^{Y_1}}$, 
$Y_1=\Razm(\cP)-W_2$. Then $\vf(v)\in B'_v\fac{\mu^{Y_1}_v}$ and 
therefore for any $C_2=\ang{\bs_2,\rel_2}\in\cC$
and any $(u,\eta,\th)\in\cW(\ov\beta)$ we have 
$\vf(\bs_2\cap W\cap W_{u,\eta\th})\in
\pr_{\bs\cap W\cap W_{u,\eta\th}}\rel'_{C_2,u,\eta\th}\fac{\ov\mu^{Y_1}}$, 
that is,
$\vf(W\cap W_2)\in\cS'^{\cR'}_{W\cap W_2}\fac{\ov\mu^{Y_1}}=
\cS'^{\cR'}_{W\cap W_2}$. 

Suppose now that $(v,\al,\beta),(w_2,\gm_2,\dl_2)\in\cW'$. 
If $(w_1,\gm_1,\dl_1)\not\in\cW'$ and $Y=\Razm(\cP)-W_1$, 
then we apply the argument above to the problem
$\cP_W\fac{\ov\mu^{\Razm(\cP)-W_1}}$. If $(w_1,\gm_1,\dl_1)\in\cW'$ 
and $Y=\Razm(\cP)-\bs$, then we consider the problem 
$\cP_W\fac{\ov\mu^{\Razm(\cP)-\bs}}$.
\end{proof}

For any possible choice of $Y$ the sets 
$\cS'_{W_2},\cS'^\cR_{W_2},\cS'^{\cR'}_{W_2}$ satisfy the following 
conditions. For the rest of the proof we only need these conditions, and therfore
the argument is valid for all choices of $Y$. First of all if $Y\ne\eps$ the set
$W$ may need to be redefined. By the definition of $W$ for every $w\in W$
there are $\gm_w,\dl_w\in\Con(\zA_w)$ such that 
$\gm_w\prec\dl_w\le\beta_w$ and $(\al,\beta_v),(\gm_w,\dl_w)$ cannot
be separated in $\cP$. If $w\in Y$ and $\gm_w=\zz_w,\dl_w=\mu_w$ is
the only choice for $\gm_w,\dl_w$, variable $w$ is removed from $W$. 
Set $W_2$ does not have to be modified. The conditions needed are:
\begin{itemize}
\item[(X1)]
$\cP_{W_2,Y}$ is minimal by block-minimality.
\item[(X2)]
For every $(u,\eta,\th)\in\cW(\ov\beta)$ and $C'=\ang{\bs',\rel'}$ any
tuple from $\umax(\pr_Z\rel_{C',u,\eta\th})\fac{\ov\mu^Y}$, where
$Z=W_{u,\eta\th}\cap W_2\cap\bs'$, can be extended to a solution 
from $\cS'^\cR_{W_2}$; by (S2). 
\item[(X3)]
For every $(u,\eta,\th)\in\cW(\ov\beta)$ and $C'=\ang{\bs',\rel'}$ 
let $W'=W_{u,\eta\th}\cap W\cap W_2\cap\bs'$ and 
$\tau'_{C'}(u,\eta\th)$\label{not:tau-prime-u-eta-th} 
denote the restriction of $\tau_{C'}$ on $\rel^*_{C',u,\eta\th}=
\pr_{W'}\rel_{C',u,\eta\th}\fac{\ov\mu^Y}$, 
that is, $(\bb,\bc)\in\tau'_{C'}(u,\eta\th)$ if
there are $\bb',\bc'\in\rel_{C',v,\al\beta_v}\fac{\ov\mu^Y}$ such that 
$\pr_{W'}\bb'=\bb$, $\pr_{W'}\bc'=\bc$, and
$(\bb',\bc')\in\tau_{C'}$. By Lemma~\ref{lem:congruence-restriction} 
$\tau'_{C'}(u,\eta\th)$ is either the full congruence
on $\rel^*_{C',u,\eta\th}$, or a maximal one. In the latter case 
$\typ(\tau'_{C'}(u,\eta\th),\ov\beta_{W'})=\two$.
\item[(X4)]
For every $w\in W\cap W_2$ there are 
$\gm_w,\dl_w\in\Con(\zA_w\fac{\mu^Y_w})$ such that 
$\mu^Y_w\le\gm_w\prec\dl_w\le\beta_w$ and 
$(\al,\beta_v),(\gm_w,\dl_w)$ cannot be separated in $\cP$. 
\item[(X5)]
For every $(u,\eta,\th)\in\cW(\ov\beta)$, $C'=\ang{\bs',\rel'}$, and 
any $w\in W'=W_{u,\eta\th}\cap W\cap W_2\cap\bs'$, if 
$\tau'_{C'}(u,\eta\th)$ is maximal, then for every polynomial $f$ of 
$\cS'_{W_2}$ we have $f(\dl_w)\sse\gm_w$ if and only if 
$f(\ov\beta_{W'})\sse\tau'_{C'}(u,\eta\th)$; by 
Lemma~\ref{lem:restricted-congruence}.
\end{itemize}

We are now in a position to prove Lemma~\ref{lem:S2-1}.

\begin{proof}\ (of Lemma~\ref{lem:S2-1}) \ \ \ 
Recall that we need to prove that any 
$\ba\in\umax(\pr_U\rel'_{C,w_1,\gm_1\dl_1})$ can be extended to a 
solution from $\cS'^{\cR'}_{W_2}$. By (S2) $\ba$ can be extended to a
tuple $\bb\in\cS'^\cR_{W_2}$. We will prove that a required extesion 
$\bc\in\cS'^{\cR'}_{W_2}$ can be found such that 
$\pr_{W_2-W}\bc=\pr_{W_2-W}\bb$ (if $\bb$ is chosen well). Therefore,
we will need to study the relationship of tuples from 
$\pr_{W_2\cap W}\cS'^\cR_{W_2}$ extending $\ba$ and the tuples
from $\pr_{W_2-W}\cS'^\cR_{W_2}$. This is done in Claim~1. Since 
tuples from $\pr_{W_2\cap W}\cS'^\cR_{W_2}$ are only considered 
modulo congruences of the form $\tau_{C'}$, while $\ba$ is considered
without factoring, the main technical hurdle is to `disentangle' $\ba$
from the quotient of $\pr_{W_2\cap W}\cS'^\cR_{W_2}$. To this end
we consider a congruence $\tau$ on $\cS'^\cR_{W_2\cap W}$ that
isolates $\cR'$-compatible tuples from there, and then introduce two
auxiliary relations $\relo$ and~$\rela$.

We would like to define a congruence similar to $\tau_C$ on 
$\cS'_{W\cap W_2}$. It cannot be done in the same straightforward way, 
since $\cS'_{W\cap W_2}\ne\pr_{W\cap W_2}\cS'_W$,
so we define it as follows. 
For $C'=\ang{\bs',\rel'}$ and $(u,\eta,\th)\in\cW(\ov\beta)$ we extend 
the congruences $\tau'_{C'}(u,\eta\th)$ to congruences of $\cS'_{W\cap W_2}$ 
using  $\tau'_{C'}(u,\eta\th)\tm\prod_{x\in(W\cap W_2)-W'}\zo_x$, where
$W'=\bs'\cap W_{u,\eta\th}\cap W\cap W_2$.
Then the set $\cS'^{\cR'}_{W\cap W_2}$ of $\cR'$-compatible tuples 
from $\cS'^\cR_{W\cap W_2}$ is a block of
\begin{equation}\label{equ:tau}
\tau=\bigwedge_{C'\in\cC,(u,\eta,\th)\in\cW(\ov\beta)}\tau'_{C'}(u,\eta\th),
\end{equation}
let it be denoted by $\cS^*$. Note that $\tau$ is a congruence of 
$\cS'_{W\cap W_2}$, but the interval $(\tau,\ov\beta_{W\cap W_2})$ 
is not necessarily prime. By the observation above and 
Lemma~\ref{lem:as-type-2} $\cS'^\cR_{W\cap W_2}\fac\tau$ is term
equivalent to a module. We need to prove that there is $\bc\in\cS'^\cR_{W_2}$
such that $\vf(U)=\ba$ and $\vf(W\cap W_2)\in\cS^*$. In fact we 
prove a stronger statement, namely, that for any u-maximal tuple $\bb$ from
$\pr_{W_2-W}\cS'^\cR_{W_2}$ there is a solution $\vf$ such that 
$\vf(U)=\ba$, $\vf(W\cap W_2)\in\cS^*$ and also $\vf(W_2-W)=\bb$. 
However, to formulate it precisely we need two additional constructions.

We need to describe the $\tau$-blocks that contain tuples extending $\ba$. 
In order to do that we separate the variables of $\ba$ from the rest of
$W_2\cap W$ by making an extra copy of them, as follows. 
Let $W\cap W_2=\{\vc xk\}$ and $X=W\cap U=\{\vc x\ell\}$,
and $X'=\{\vc y\ell\}$. Let 
$$
\relo(\vc xk,\vc y\ell)=\cS'^\cR_{W\cap W_2}(\vc xk)\meet
\pr_X\cS'^\cR_{W\cap W_2}(\vc y\ell)\meet\bigwedge_{i=1}^\ell(x_i=y_i),
$$
and its factor $\relo'=\relo\fac{\tau'}$, where 
$\tau'=\tau\tm\zz_{\pr_{X'}\cS'^\cR_{W\cap W_2}}$. 
Let $\eta_1,\eta_2$ denote the link congruences of 
$\cS'^\cR_{W\cap W_2}\fac\tau$ and $\pr_{X'}\cS'^\cR_{W\cap W_2}$
with respect to $\relo'$, and let $\eta'$ denote the congruence of 
$\cS'^\cR_{W\cap W_2}$ such that $\bc\eqc{\eta'}\bd$ if and only if 
$\bc^\tau\eqc{\eta_1}\bd^\tau$. Then, as is easily seen, since 
$\pr_{W\cap W_2}\relo\fac\tau$ is a module,  $(\bb,\bc)\in\eta'$ if and only if 
there are $\bb',\bc'\in\cS'^\cR_{W\cap W_2}$ such that 
$(\bb,\bb'),(\bc,\bc')\in\tau$ and $\pr_X\bb'=\pr_X\bc'$.

Now, we introduce a similar construction for $\cS'_{W_2}$. The goal
is to compare the two.
Let $W_2=\{\vc xm\}$ (recall that $W\cap W_2=\{\vc xk\}$) and
define a relation $\rela(\vc xm,\vc y\ell)$ as follows:
$$
\rela(\vc xm,\vc y\ell)=\cS'^\cR_{W_2}(\vc xm)\meet
\pr_X\cS'^\cR_{W_2}(\vc y\ell)\meet\bigwedge_{i=1}^\ell(x_i=y_i),
$$
and let $\rela'=\rela\fac{\tau''}$, where $\tau''=\tau\tm
\zz_{\pr_{W_2-W}\cS'^\cR_{W_2}}\tm\zz_{\pr_{X'}\cS'^\cR_{W_2}}$. 
Similar to $\relo$, let $\th_1,\th_2$ be the link congruences of 
$\pr_{W\cap W_2}\cS'^\cR_{W_2}\fac\tau$ 
and $\pr_H\rela$, where $H=X'\cup\{x_{k+1}\zd x_m\}$,
with respect to $\rela'$, and let $\th'$ denote the congruence of 
$\pr_{W\cap W_2}\cS'^\cR_{W_2}$ given by $\bc\eqc{\th'}\bd$ if and only if 
$\bc^\tau\eqc{\th_1}\bd^\tau$. 
Then immediately by the definition $(\bb,\bc)\in\th'$ if and 
only if there are $\bb',\bc'\in\pr_{W\cap W_2}\cS'^\cR_{W_2}$ and 
$\bd\in\pr_H\rela$ such that $(\bb,\bb'),(\bc,\bc')\in\tau$, 
$\pr_X\bb'=\pr_X\bc'=\pr_X\bd$, and 
$(\pr_{(W\cap W_2)-X}\bb',\bd)$, $(\pr_{(W\cap W_2)-X}\bc',\bd)
\in\cS'^\cR_{W_2}$.

We are interested in congruences $\eta_1$ and $\th_1$. The first of them 
indicates which $\tau$-blocks extensions of $\pr_X\ba$ can belong to.
The second congruence also indicates to which $\tau$-blocks extensions
of $\ba$ to a solution from $\cS'^\cR_{W_2}$ can belong to. Clearly,
$\th_1\sse\eta_1$. We prove however, that in both cases the set of 
attainable $\tau$-blocks is the same. This essentially means that if a 
$\tau$-block can be extended to a solution from $\cS'^\cR_{W_2}$, it
can be extended in an almost arbitrary way. 

\medskip

{\sc Claim 1.}
(1) $\rela''=(\pr_{W\cap W_2}\rela)\fac\tau$ is a union of $\eta_1$-blocks;\\
(2) $\th_1=\eta_1\red{\rela''}$;\\
(3) let $D$ be a $\th_1$-block and $E$ the corresponding $\th_2$-block,
then $D\tm\umax(E)\sse\rela'$;\\
(4) for any $\bb\in\cS'^\cR_{W_2}$ such that $\pr_{X\cup(W_2-W)}\bb$ 
is u-maximal in a $\th_2$-block and any $\bb'\in\cS'^\cR_{W\cap W_2}$ 
such that $\bb'\eqc{\th'}\pr_{W\cap W_2}\bb$ there is $\bb''\in\rela$
such that $\pr_{W\cap W_2}\bb''\eqc\tau\bb'$ and 
$\pr_{X\cup(W_2-W)}\bb''=\pr_{X\cup(W_2-W)}\bb$.

\medskip

(1) It follows by Proposition~\ref{pro:umax-rectangular} for any 
$\th_1$-block $D'$ and a $\th_2$-block $E'$ with $\rela'\cap(D'\tm E')\ne\eps$,
that $D'\tm\umax(E')\sse\rela'$. 

Let $\bb\in\rela$, $D'$ be the $\th_1$-block containing 
$(\pr_{W\cap W_2}\bb)^\tau$ and $E'$ the corresponding $\th_2$-block. 
Then by the Maximality Lemma~\ref{lem:to-max}(5)
there is $\bb'\in\rela$ such that 
$\pr_{W\cap W_2}\bb'\eqc\tau\pr_{W\cap W_2}\bb$ and 
$\pr_H\bb'$ is u-maximal in $E'$. We assume that $\bb$ satisfies
this condition. Let also $D$ be the $\eta_1$-block containing 
$(\pr_{W\cap W_2}\bb)^\tau$. Note that $D'\sse D$.

Suppose there is $\bc\in D$ such that $(\bc',\pr_H\bb)\in\rela$ for
no $\bc'\in\bc$. We will derive a contradiction. Take some u-maximal 
tuple $\bc'$ from $\bc$. Since $\pr_{W\cap W_2}\bb\not\eqc\tau\bc'$, there is 
$C'=\ang{\bs',\rel'}\in\cC$ and $(u,\chi,\xi)\in\cW(\ov\beta)$ such that 
$(\pr_Z\bb,\pr_Z\bc')\not\in\tau^*$, where $\tau^*=\tau'_{C'}(u,\chi\xi)$ and
$Z=W_{u,\chi\xi}\cap W\cap W_2\cap \bs'$. Choose a pair $\bb,\bc$ 
in such a way that the number of such constraints and triples is minimal. 
We will find a polynomial $f$ of $\cS'_{W_2}$ such that (roughly speaking) 
$f(\pr_{W\cap W_2}\bb),f(\bc')\in\pr_{W\cap W_2}\cS'^\cR_{W_2}$, 
$f(\pr_H\bb)=\pr_H\bb$, and $f(\pr_{W\cap W_2}\bb),f(\bc')$ differ on 
fewer constraints and triples.
Let $\bb''=(\pr_Z\bb)^{\tau^*}$, $\bc''=(\pr_Z\bc')^{\tau^*}$. Since the 
interval $(\tau^*,\ov\beta_Z)$ has type \two,
$\{\bb'',\bc''\}$ is a $(\tau^*,\ov\beta_Z)$-subtrace. 
Let $(x,\gm_x,\dl_x)\in\cW(\ov\beta)$, $x\in Z$, be such that 
$(\al,\beta_v)$ and $(\gm_x,\dl_x)$ cannot be separated in $\rel^{\{v,x\}}$.

By (S7) for $\cR$ relation $\cS'_{W_2}$ is strongly chained. Since  
$\tau^*\prec\ov\beta''$ on $\pr_Z\cS'^\cR_{W_2}$ if we consider $\cS'_{W_2}$ 
as a subdirect product of $\pr_Z\cS'_{W_2}$ and $\zA_y$, $y\in W_2-Z$, by 
Lemma~\ref{lem:collapsing} there is a $(\tau^*,\ov\beta_Z)$-collapsing 
polynomial $f$ of $\cS'_{W_2}$ for $\ov\beta,\ov B$. By 
Lemma~\ref{lem:restricted-congruence} $(\tau^*,\ov\beta_Z)$ cannot be 
separated from $(\gm_x,\dl_x)$ for $x\in W\cap W_2$. For any 
$(z,\eta,\th)\in\cW(\ov\beta)$, $z\in W_2$, consider
$\pr_{Z\cup\{z\}}\cS'_{W_2}$. If $(\eta,\th)$ can be separated from 
$(\al,\beta_v)$ (or $(\al,\beta_v)$ can be separated from $(\eta,\th)$) then 
$(\tau^*,\ov\beta_Z)$ can be separated from $(\eta,\th)$ (or the 
other way round). In particular, by Lemma~\ref{lem:collapsing} $f$ can be 
chosen to satisfy the following conditions\\[2mm]
(a) $f(\pr_Z(\cS'_{W_2}))$ is a $(\tau^*,\ov\beta_Z)$-minimal set and
$f(\zA_y)$ is a $(\gm_y,\dl_y)$-minimal set for $y\in W\cap W_2$;\\[2mm]
(b) for every $z\in W_2-W$, $|f(B_z)|=1$; and\\[2mm]
(c) $f$ is idempotent.\\[2mm]
Since $\{\bb'',\bc''\}$ is a $(\tau^*,\ov\beta_Z)$-subtrace of 
$(\pr_Z\rel_{C',u,\chi\xi})\fac{\tau^*}$, by condition (Q2s) of being 
strongly chained (S7) for $\cR$ polynomial $f$ can be chosen such that\\[2mm]
(d) $\bb''=f(\bb'')$, $\bc''=f(\bc'')$. \\[2mm]
Moreover, let $\bb^*\in\pr_{W\cap W_2}\cS'^\cR_{W_2}$ be such that
there is a tuple $\bd\in\cS'^\cR_{W_2}$ such that 
$\pr_{W_2-W}\bd=\pr_{W_2-W}\bb$, $\pr_{W\cap W_2}\bd=\bb^*$,
$\pr_Z\bb^*\in\bb''$, and $\bd$ is u-maximal in $\cS'_{W_2}\cap\ov B$.
Such a tuple exists. Indeed, for any $(z,\gm',\dl')\in\cW(\ov\beta)$,
$z\in W_2-W$, and any $x\in W\cap W_2$, the 
interval $(\gm_x,\dl_x)$ and therefore $(\tau^*,\ov\beta'')$ 
can be separated from $(\gm',\dl')$, or the other way round. 
Lemma~\ref{lem:collapsing} also implies that for any 
$\gm',\dl'\in\Con(\cS'_{W_2-W})$, $\gm'\prec\dl'\le\ov\beta_{W_2-W}$, 
the interval  $(\tau^*,\ov\beta'')$ can be separated from $(\gm',\dl')$,
or the other way round. Therefore, by the Congruence 
Lemma~\ref{lem:affine-link} $\pr_{W\cap W_2}\cS'^\cR_{W_2}\fac\tau
\tm\umax(\pr_{W_2-W}\cS'^\cR_{W_2})\sse\cS'^\cR_{W_2}$.
By Lemma~\ref{lem:maximal-minimal}
polynomial $f$ can be chosen such that\\[2mm]
(e) $f(\bd)=\bd$.\\[2mm]
\indent
Let $\bc'$ be a u-maximal tuple from the $\tau$-block 
$\bc\sse\cS'^\cR_{W\cap W_2}$, and let $\bc^*=f(\bc')$, and 
$\bc^\dagger$ a tuple from $\Sg{\bb^*,\bc^*}$ such that 
$\bb^*\sqq_{as}\bc^\dagger$ and $\bc^*\eqc\tau\bc^\dagger$;
such a tuple $\bc^\dagger$ exists, because $\cS'^\cR_{W\cap W_2}\fac\tau$
is a module. Note that it suffices to prove that 
$\vr=(\bc^\dagger,\pr_{W_2-W}\bb)\in\cS'^\cR_{W_2}$. Indeed, since 
$\pr_Z\bc^*\eqc{\tau^*}\pr_Z\bc'$ and $\bc'$ 
agrees with $\bc^\dagger$ modulo  
$\tau'_{C''}(u,\chi'\xi')$ for every $C''\in\cC$ and $(u',\chi'\xi')\in\cW(\ov\beta)$, 
for which $\bb$ and $\bc'$ agree,
we obtain a contradiction with the choice of $\bb,\bc$.

To show that $\vr$ is an $\cR$-compatibile solution take any 
$C''=\ang{\bs'',\rel''}\in\cC$ and $(x,\chi',\xi')\in\cW(\ov\beta)$ and 
let $U''=\bs''\cap W_2\cap W_{x,\chi'\xi'}$. We show that 
$\vr(U'')\in\pr_{U''}\rel_{C'',x,\chi'\xi'}$; note that this proves not only 
$\cR$-compatibility, but also that $\vr$ is a solution: $(x,\chi',\xi')$
can be chosen to be $(w_2,\gm_2,\dl_2)$. Let 
$U'=W'\cap W$. Since $\bc'\in\umax(\cS'^\cR_{W\cap W_2})$, we have 
$\pr_{U'}\bc'\in\umax(\pr_{U'}\rel_{C'',x,\chi'\xi'})$. 
By (S2) for $\cR$ tuple $\pr_{U'}\bc'$ can be extended to an 
$\cR$-compatible solution $\sg$ 
from $\cS'_{W_2}$. By the choice of $f$, property (b),
$f(\pr_{U''-U'}\sg)=\pr_{U''-U'}\bd=\pr_{U''-U'}\bb$, and 
$f(\pr_{U'}\sg)=\pr_{U'}\bc^*$ by definition of $\bc^*$.  Since 
$\pr_{U'}\bb^*\sqq_{as}\pr_{U'}\bc^\dagger$ 
in $\Sg{\pr_{U'}\bb^*,\pr_{U'}\bc^*}$, we have 
$\pr_{U''}\bb^*\sqq_{as}\pr_{U''}\vr$ in 
$\Sg{\pr_{U''}\bb^*,\pr_{U''}\bc^*}$, implying 
by (S6) for $\cR$ that $\pr_{U''}\vr\in\rel_{C'',x,\chi'\xi'}$. 
As this is true for every constraint $C''$, $\vr$ is an $\cR$-compatible solution.

(2) The proof of item (1) shows in particular that for any 
$\bb\in\rela$ and $\bc\in(\pr_{W\cap W_2}\rela)\fac\tau$ such that
$\bc\eqc{\eta_1}(\pr_{W\cap W_2}\bb)^\tau$, we also
have $(\pr_{W\cap W_2}\bb)^\tau\eqc{\th_1}\bc$, as both $\bb^\tau$ and
$(\bc,\pr_H\bb)$ belong to $\rela'$. 

(3) follows from Proposition~\ref{pro:umax-rectangular}.
 
(4) follows from (1) and (3).

\medskip

We now can complete the proof of Lemma~\ref{lem:S2-1}.
If we show that $\ba$ can be extended to $\bb\in\rela$ such that 
$\pr_U\bb=\ba$ and such that $\pr_H\bb$ is u-maximal in a $\th_2$-block, 
Claim~2 implies (S2) for $\cR'$. Indeed, suppose there is 
$\bb\in\rela$ satisfying the above conditions. Since 
$\pr_X\ba\in\pr_X\cS'^{\cR'}_W\sse\pr_X\cS'^{\cR'}_{W\cap W_2}$, 
there is $\bc'\in\cS'^{\cR'}_{W\cap W_2}=\pr_{W\cap W_2}\rela$
such that $\pr_X\ba=\pr_X\bc'$. In particular, this means 
that $\bc=\bc'^\tau\in D$, where $D$ is the $\eta_1$-, and therefore 
$\th_1$-block containing $(\pr_{W\cap W_2}\bb)^\tau$, because 
$\pr_X\bc'=\pr_X\bb=\pr_X\ba$. Then as 
$(\bc,\pr_H\bb)\in\rela'$ by Claim~2(4), this
means that there is $\bd\in\umax(\cS'^\cR_{W_2})$ such that 
$\pr_U\bd=\ba$ and $\pr_{W\cap W_2}\bd\eqc\tau\bc'$; that is, 
$\bd\in\cS'^{\cR'}_{W_2}$.

Next we show such $\bb$ exists. Recall that $\rela^*$ denotes the $\tau$-block
of $\pr_{W\cap W_2}\relo=\cS'^\cR_{W\cap W_2}$ that contains 
$\cS'^{\cR'}_{W\cap W_2}$, and in particular, 
$\pr_{W\cap W_2}\cS'^{\cR'}_W$. First, observe that 
$$
\rela^*\cap\pr_{W\cap W_2}\rela=
\rela^*\cap\pr_{W\cap W_2}\cS'^\cR_{W_2}\ne\eps.
$$
Indeed, let $\bd\in\rela$ be such that 
$\pr_U\bd=\ba$. Then $\pr_{W\cap W_2}\bd\in\cS'^\cR_{W\cap W_2}$ and 
$\bd^\tau$ belongs to the same $\eta_1$-block as $\rela^*$, because 
$(\rela^*\fac\tau,\pr_X\ba)\in\relo'$. By Claim~2(1)
$\rela^*\fac\tau\in(\pr_{W\cap W_2}\rela)\fac\tau$, and as 
$\pr_{W\cap W_2}\rela\sse\cS'^\cR_{W\cap W_2}$, this proves the  
observation. 

Let $D$ be the $\th_1$-block containing $\rela^*\fac\tau$ and $E$ the 
corresponding $\th_2$-block. By what is proved above $\pr_H\bd\in E$. 
We now only need to show that $\bd$ can be chosen such that 
$\pr_H\bd\in \umax(E)$. Let $\pi$ be the congruence on 
$\pr_H\rela$ given by $\be_1\eqc\pi\be_2$ if and only if 
$\pr_{X'}\be_1\eqc{\tau_X}\pr_{X'}\be_2$. Then $\pr_H\rela\fac\pi$ 
is isomorphic 
to $\pr_X(\rel_{C,w_1,\gm_1\dl_1})\fac{\tau_X}$, in particular, it is a module
if $\tau_X$ is maximal, and 1-element otherwise. 
Let $\tau'_X$ denote the congruence of $\pr_U\rel_{C,w_1,\gm_1\dl_1}$
given by $\bc\eqc{\tau'_X}\bd$ if and only if $\pr_X\bc\eqc{\tau_X}\pr_X\bd$.
As is easily seen, if $G$ is a $\pi$-block of $\pr_H\rela$ then $\pr_UG$
is a $\tau'_X$-block of $\pr_U\rel_{C,w_1,\gm_1\dl_1}$. Therefore,
as $\ba$ is u-maximal in $\pr_U\rel'_{C,w_1,\gm_1\dl_1}$, by 
the Maximality Lemma~\ref{lem:to-max}(5) it can be extended to a 
u-maximal tuple $\ba'$ in a $\pi$-block $G$. Since $E\fac\pi$ is a module, by 
Lemma~\ref{lem:u-max-congruence} $\ba'$ is also u-maximal in $E$.
The result follows.
\end{proof}

Condition (S3) can be proved in a similar way, so we just indicate where
the proof of Lemma~\ref{lem:S2-1} has to be changed.

\begin{lemma}\label{lem:S3-1}
$\cR'$ satisfies condition (S3). That is, for every $C=\ang{\bs,\rel}\in\cC$,  
$\rel\cap\ov B'_\bs\ne\eps$ and for 
any $I\sse\bs$ any $\cR'$-compatible tuple $\ba\in\umax(\pr_I\rel)$ 
extends to an $\cR'$-compatible tuple $\bb\in\rel$.
\end{lemma}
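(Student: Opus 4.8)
\textbf{Proof strategy for Lemma~\ref{lem:S3-1}.}
The plan is to mimic the structure of the proof of Lemma~\ref{lem:S2-1} almost verbatim, replacing the pair of triples $(w_1,\gm_1,\dl_1),(w_2,\gm_2,\dl_2)$ and their sets $W_1,W_2$ by the single constraint $C=\ang{\bs,\rel}$ and the index set $I\sse\bs$; the role played there by $\cP_{W_2}$ is now played by the constraint relation $\rel$ itself, and the role of $\cS'^{\cR}_{W_2}$ by $\rel^{\cR}$ (and of $\cS'^{\cR'}_{W_2}$ by $\rel^{\cR'}$). First I would dispose of the nonemptiness claim $\rel\cap\ov B'_\bs\ne\eps$: by (S3) for $\cR$ we have $\rel\cap\ov B_\bs\ne\eps$, and since each $B'_w$ is either $B_w$ itself or $\nu_w^{-1}(B)$ and $\cR'$ is obtained from $\cR$ by the restriction (R2) which only cuts tuples down to a block of a congruence, the witness for $\cR'$-compatibility is produced exactly as in the construction of $\cR'$; concretely one takes a $\cR$-compatible tuple in $\rel\cap\ov B_\bs$, applies the collapsing/multiplication argument to drag it into $B'_v$ on coordinate $v$, and invokes the Congruence Lemma to propagate this to the coordinates in $U$.

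For the extension statement, I would fix $I\sse\bs$ and $\ba\in\umax(\pr_I\rel)$ that is $\cR'$-compatible. By (S3) for $\cR$, $\ba$ extends to some $\cR$-compatible $\bb\in\rel$. As in Lemma~\ref{lem:S2-1}, if $\tau_{C',w,\gm\dl}$ is the full congruence for every relevant triple, then $\umax(\rel^{\cR})=\umax(\rel^{\cR'})$ and there is nothing to prove, so I assume some $\tau_{C',w,\gm\dl}$ is maximal. Then I would reproduce the congruence $\tau$ of Equation~(\ref{equ:tau}) (now on $\pr_{\bs\cap W}\rel$ rather than on $\cS'_{W\cap W_2}$, using $W=W_{v,\al\beta}$), whose relevant block is exactly $\pr_{\bs\cap W}\rel^{\cR'}$, and then set up the two auxiliary relations $\relo$ and $\rela$ as in that proof: $\relo$ disentangles the coordinates of $\ba$ that lie in $W$ from the rest of $\bs\cap W$ by making a duplicate copy, and $\rela$ does the same but inside the full constraint relation $\rel$, tracking also the coordinates $\bs-W$. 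The key intermediate fact — the analogue of Claim~1 — is that the set of $\tau$-blocks of $\pr_{\bs\cap W}\rel$ reachable by $\cR$-compatible tuples extending $\pr_I\ba$ coincides with the set reachable by tuples extending $\pr_{I\cap W}\ba$ alone (so $\th_1$ and $\eta_1$ of that proof induce the same partition on the relevant subalgebra), proved via Proposition~\ref{pro:umax-rectangular}, the Maximality Lemma~\ref{lem:to-max}(5), and the chaining conditions (S7), (Q2s) exactly as before. Here I would rely on (S7) for $\cR$ which gives that $\rel$ itself (not just $\cS_W$) is strongly chained with respect to $\ov\beta,\ov B$, so Lemmas~\ref{lem:collapsing} and~\ref{lem:restricted-congruence} apply to $\rel$ in place of $\cS'_{W_2}$.

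Once that claim is in place, the conclusion follows as in Lemma~\ref{lem:S2-1}: since $\pr_{I\cap W}\ba$ lies in $\pr_{I\cap W}\rel^{\cR'}=\pr_{I\cap W}(\pr_{\bs\cap W}\rel)^{\cR'}$, it lies in the distinguished $\tau$-block, hence (by the Claim and Proposition~\ref{pro:umax-rectangular} applied to $\rela'$) the tuple $\ba$ can be extended to some $\bb\in\umax(\rel^{\cR})$ whose $\pr_{\bs\cap W}$-part lies in $\pr_{\bs\cap W}\rel^{\cR'}$; that is, $\bb\in\rel^{\cR'}$. The main obstacle, as in the companion lemma, is bookkeeping the three cases of the set $Y$ appearing in (R2)(a)--(c), i.e.\ whether $(v,\al,\beta)\in\cW'$ and whether the other triple lies in $\cW'$: one first shows the relevant restricted solution sets are nonempty (the analogue of Lemma~\ref{lem:W-cap-W2}) and then checks that in every case the structural properties (X1)--(X5) — minimality by block-minimality, extension of u-maximal projections by (S2), the maximal-versus-full dichotomy for the induced congruence by Lemma~\ref{lem:congruence-restriction}, non-separability of $(\gm_w,\dl_w)$ from $(\al,\beta_v)$, and the separation criterion of Lemma~\ref{lem:restricted-congruence} — hold uniformly, after which the single argument above applies verbatim. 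I expect no genuinely new difficulty beyond adapting the indexing, since the constraint relation $\rel$ plays a role formally identical to that of the subinstance-solution-set $\cS'_{W_2}$ in Lemma~\ref{lem:S2-1}.
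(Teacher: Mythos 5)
Your proposal follows essentially the same route as the paper's proof: reduce to the Lemma~\ref{lem:S2-1} machinery by verifying that $\rel,\rel^\cR$ satisfy conditions (X1)--(X5) with $\bs$ playing the role of $W_2$, and note that the analogue of Claim~1 is in fact easier because one only needs membership in a relation rather than a genuine solution of a subinstance. The one place where you deviate, slightly to your disadvantage, is in carrying around the three $Y$-cases of (R2): since (S3) is a statement about the constraint relation $\rel$ itself and involves no $\mu$-quotients, the paper simply fixes $Y=\eps$; the paper also dispatches the degenerate case $\bs\cap W=\eps$ up front via Corollary~\ref{cor:cR-structure} and derives the nonemptiness of $\rel\cap\ov B'_\bs$ from the extension part rather than proving it directly as you do.
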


\begin{proof}
If $\bs\cap W=\eps$ then Corollary~\ref{cor:cR-structure} implies that every 
u-maximal $\cR$-compatible tuple is also $\cR'$-compatible, and the result
follows from (S3) for $\cR$. Suppose that $W'=\bs\cap W\ne\eps$. Then,
as there is a $\cR'$-compatible tuple in $\umax(\pr_w\rel)$ for $w\in W'$,
the first part of (S3) follows from the second part.

To prove the second part we show that $\rel,\rel^\cR$ satisfy conditions
(X1)--(X5), and therefore the argument from the proof of 
Lemma~\ref{lem:S2-1} applies in this case as well. We take $\bs$ for
$W_2$ and set $Y=\eps$. Condition (X1) is trivial; condition (X2) follows 
from (S3) for $\cR$; and congruences $\tau'_{C'}(u,\eta\th)$ can be defined 
in the same way replacing $W_2$ with $\bs$, and they possess the same
properties required in (X3). Conditions (X4),(X5) follow from the definitions.
Now we use the same argument except that the proof of Claim~1 in this case 
is simpler, because we do not need to show something is a solution; just
that the required tuple belongs to a relation.
\end{proof}

\subsubsection{The remaining conditions}\label{sec:rest-1}

We can now complete the proof of Theorem~\ref{the:restricting-strategy}.

\begin{proof} (of Theorem~\ref{the:restricting-strategy}).
Conditions (S1)--(S3) and (S6) are proved in the preceding lemmas.

For (S4) observe that every $\rel'_{C,w,\gm\dl}$ is obtained as the intersection
of $\rel_{C,w,\gm\dl}$ with a block of $\tau_{C,w,\gm\dl}$, and therefore is a 
subalgebra. Also, since $\rel_{C,w,\gm\dl}\fac{\tau_{C,w,\gm\dl}}$ is a module, 
by Lemma~\ref{lem:u-max-congruence} 
$\umax(\rel'_{C,w,\gm\dl})\sse\umax(\rel_{C,w,\gm\dl})$ proving the first
part of (S4). 

To prove the rest of (S4) let $(w,\gm,\dl)\in\cW(\ov\beta)$, 
$U=W_{w,\gm\dl}$ and $\cS'_U$ be one of the sets $\cS_U$ if 
$(w,\gm,\dl)\not\in\cW'$, or $\cS_{U,Y}$ for $Y=\Razm(\cP)-\bs$ for some
$C=\ang{\bs,\rel}$ or $Y=\Razm(\cP)-W_{u,\eta\th}$ for some 
$(u,\eta,\th)\not\in\cW'$, if $(w,\gm,\dl)\in\cW'$. As in the proof of condition
(S2) we consider the congruence $\tau$ constructed as in (\ref{equ:tau}) 
with $U$ in place of $W_2$. Let $\relo$ be the $\tau$-block of 
$\pr_{W\cap U}\cS'_U$ containing $\cR'$-compatible tuples. By (S4) for $\cR$
there is a tuple $\ba\in\cS'^{\cR}_U$ that is in $\umax(\cS'_U)$. Since 
$\pr_{W\cap U}\cS'_U\fac\tau$ is a module, by the Maximality 
Lemma~\ref{lem:to-max}(4) and Lemma~\ref{lem:thin} there is an as-path 
$\ba=\ba_1\zd\ba_k$ in $\cS'^\cR_U$
such that $\pr_{W\cap U}\ba_k\in\relo$. The tuple $\ba_k$ belongs to 
$\cS'^{\cR'}_U$ and to $\umax(\cS'_U)$.  

For (S5), the existence of $A_{\cR',w}$ follows from (S3). Also, as in the
proof of (S4) $\umax(A_{\cR',w})\sse\umax(A_{\cR,w})$. The result now 
follows from (S5) for $\cR$.

Finally, (S7) follows from Lemma~\ref{lem:S7}.
\end{proof}

\subsection{Tightening non-affine factors}\label{sec:type-not-2}

Let $\cP=(V,\cC)$ be a block-minimal instance, let $\cR$ be a 
$\ov\beta$-strategy with respect to $\ov B$. Take $v\in V$ and 
$\al\in\Con(\zA_v)$ with $\al\prec\beta_v$ such that $\typ(\al,\beta_v)\ne\two$.
We tighten $\cR$ in two steps. In the first step we restrict $B_v$ to the 
subalgebra generated by an
as-component of $A_{\cR,v}$ obtaining a collection of relations that satisfies 
all the properties of a strategy except (S5) and (S6). In the second step we 
restrict the same domain to one $\al$-block and restore (S5) and (S6). 
Let $D$ be an as-component of $A_{\cR,v}\fac\al$, by (S5) $D$ is also an 
as-component of $B_v\fac\al$. Note that if 
$\typ(\al,\beta_v)\in\{\four,\five\}$ then by Lemma~\ref{lem:type23}
any as-component of $B_v\fac\al$ is a singleton and Step~2 is not needed.
Conditions (S5),(S6) in this case are proved as in Step~2.

\subsubsection{Step 1.}\label{sec:step-1}

Let $D'=\{a\in A_{\cR,v}\mid a^\al\in D\}$ and let $\wh D=\Sg{D'}$. We 
consider the problem $\cP'$ obtained from $\cP$ by restricting the domain 
of $v$ to $\wh D$ and the domain of $w\in V-\{v\}$ to $A_{\cR,w}$. We 
first show that $\cP'$ can be converted to a nonempty (2,3)-minimal 
instance that also satisfies some additional conditions.

In order to do that we introduce a family of binary relations, and then prove 
that this family is a (2,3)-strategy of $\cP'$. For $x,y\in V$, let 
\begin{eqnarray*}
\relo^x&=&\{a\in \amax(A_{\cR,x})\mid \text{ there is 
$d\in D$ such that } (d,a)\in\rel^{\{v,x\},\cR}\fac\al\},\\
\quad \text{and}&&\\
\relo^{xy}&=&\{(a,b)\in\amax(\rel^{\{x,y\},\cR})\mid\text{ there is 
$d\in D$ such that }\\
&& \qquad (d,a)\in\rel^{\{v,x\},\cR}\fac\al, (d,b)\in\rel^{\{v,y\},\cR}\fac\al\}.
\end{eqnarray*}\label{not:Qxy}
In particular $\relo^v=\amax(D')$.
We say that a tuple $\ba$ on a set $U\sse V$ (where $U$ can be, e.g.\ a 
subset of $\bs$ for a constraint $C=\ang{\bs,\rel}$, or a subset of 
$W_{w,\gm\dl}$ for some $(w,\gm,\dl)\in\cW(\ov\beta)$) 
is \emph{Q-compatible}\index{Q-compatible} if $(\ba[x],\ba[y])\in\relo^{xy}$ 
for any $x,y\in U$.

\begin{prop}\label{pro:3-tight}
(1) For any $x,y,z\in V$ and any $(a,b)\in\relo^{xy}$ there is 
$c\in\amax(A_{\cR,z})$
such that $(a,c)\in\relo^{xz}$ and $(b,c)\in\relo^{yz}$.\\ 
(2) For any $C=\ang{\bs,\rel}$ let $\rel^\cR$ denote the set of 
$\cR$-compatible tuples from $\rel$. For any $I\sse\bs$ and any Q-compatible 
$\ba\in\amax(\pr_I\rel^\cR)$, there is $\ba'\in\amax(\rel^\cR)$ that is 
Q-compatible,  and $\pr_I\ba'=\ba$.\\
(3) For any $(w,\gm,\dl)\in\cW(\ov\beta)$, any $U\sse W_{w,\gm\dl}$, 
and any $\ba\in\amax(\pr_U\cS'^\cR_{W_{w,\gm\dl}})$, where 
$\cS'^\cR_{W_{w,\gm\dl}}$ is the set of solutions of 
$(\cP_{W_{w,\gm\dl}})\fac{\ov\mu^Y}$ for some set $Y$ from the 
definition of block-minimality, there is 
$\ba'\in\amax(\cS'^\cR_{W_{w,\gm\dl}})$ that is Q-compatible and 
$\pr_U\ba'=\ba$.
\end{prop}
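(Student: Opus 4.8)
The plan is to prove all three parts of Proposition~\ref{pro:3-tight} by reducing them to the corresponding properties of the strategy $\cR$ (conditions (S1)--(S3)) combined with the 2-Decomposition Theorem~\ref{the:quasi-2-decomp} and the Maximality Lemma~\ref{lem:to-max}. The common idea is this: a $Q$-compatible tuple is essentially a $\cR$-compatible tuple whose coordinates are all "anchored" to a single element $d\in D$ via the binary relations $\rel^{\{v,x\},\cR}\fac\al$; so the obstacle is always to produce a \emph{single} anchor $d$ that works simultaneously for all coordinates, and then to lift it using the rectangularity and maximality machinery already established.

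For part~(1): given $(a,b)\in\relo^{xy}$ we have, by definition, some $d\in D$ with $(d,a)\in\rel^{\{v,x\},\cR}\fac\al$ and $(d,b)\in\rel^{\{v,y\},\cR}\fac\al$. First I would use (S1) for $\cR$ --- the binary projections $\umax(\rel^{X,\cR})$ form a (2,3)-strategy --- to find $c_0\in\amax(A_{\cR,z})$ with $(a,c_0)\in\rel^{\{x,z\},\cR}$ and $(b,c_0)\in\rel^{\{y,z\},\cR}$. This $c_0$ need not yet be anchored to $d$. To fix that, consider the constraint-like relation $\rel(x,y,z,v)$ obtained by conjoining $\rel^{\{v,x\}},\rel^{\{v,y\}},\rel^{\{v,z\}},\rel^{\{x,y\}},\rel^{\{x,z\}},\rel^{\{y,z\}}$ restricted to $\cR$-compatible tuples (this is a subalgebra by (S4), its pairwise projections contain the relevant $\umax(\rel^{X,\cR})$), take the tuple $(a,b,c_0,d)$ which has all pairwise projections in the right relations except possibly the $zv$-projection, and apply the 2-Decomposition Theorem to push it to a genuine tuple of the relation with pairwise projections in the as-filters of the original pairs; the $z$-coordinate $c$ of the resulting tuple is then $Q$-compatible with $a,b$ via the same anchor $d$, and using (S3) plus the Maximality Lemma~\ref{lem:to-max}(5) we can arrange $c\in\amax(A_{\cR,z})$. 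Since $D$ is an as-component, $d$ stays inside $D$ along any as-path, so the anchor condition is preserved.

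For parts~(2) and~(3) the strategy is the same but applied to a larger relation. For (2), given a $Q$-compatible $\ba\in\amax(\pr_I\rel^\cR)$, each coordinate $\ba[x]$ is anchored to some $d_x\in D$; the key point is that $Q$-compatibility of the pairs $(\ba[x],\ba[y])$ forces (via (S1) and the consistency of the anchors guaranteed by (2,3)-minimality of $\cP$) that all the $d_x$ can be taken equal to a single $d\in D$. Then form the relation conjoining $\rel$ with all the binary relations $\rel^{\{v,x\},\cR}$, $x\in\bs$, restricted to $\cR$-compatible tuples, take the tuple extending $\ba$ and $d$, and apply the 2-Decomposition Theorem~\ref{the:quasi-2-decomp} with $X=I\cup\{v\}$ to obtain $\ba'\in\rel^\cR$ agreeing with $\ba$ on $I$ and with all pairwise projections in the appropriate as-filters; the anchor to $d$ survives, so $\ba'$ is $Q$-compatible, and Lemma~\ref{lem:to-max}(5) gives as-maximality. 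Part~(3) is identical with $\cP_{W_{w,\gm\dl}}\fac{\ov\mu^Y}$ and its solution set $\cS'^\cR_{W_{w,\gm\dl}}$ in place of a single constraint relation, using (S2) for $\cR$ (which guarantees the needed extension property of $\cR$-compatible tuples to solutions) in place of (S3).

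The main obstacle I expect is the "common anchor" step: showing that the individually existing anchors $d_x\in D$ for the various coordinates can be synchronized into one element $d\in D$. This is where $Q$-compatibility does real work, and it needs the combination of (S1) (so the binary relations $\rel^{\{v,x\},\cR}\fac\al$ interact consistently), the (2,3)-minimality of $\cP$ (so the pairwise agreements propagate), and the fact that $D$ is a single as-component of $B_v\fac\al$ (so we stay within $D$ when moving between anchors via as-paths, invoking Lemma~\ref{lem:to-max}(4)). Once a single anchor is secured, the rest is a routine application of the 2-Decomposition Theorem and the Maximality Lemma, exactly as in the proof of Theorem~\ref{the:pseudo-majority}.
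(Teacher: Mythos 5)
Your proposal correctly identifies the "common anchor" step as the crux, but your proposed resolution of it does not work, and this is a genuine gap. You claim that $Q$-compatibility of the pairs $(\ba[x],\ba[y])$ "forces (via (S1) and the consistency of the anchors guaranteed by (2,3)-minimality) that all the $d_x$ can be taken equal to a single $d\in D$." This is not a consequence of (S1) and (2,3)-minimality alone. $Q$-compatibility supplies, for each pair $\{x,y\}$, some anchor $d_{xy}\in D$ witnessing $(\ba[x],\ba[y])\in\relo^{xy}$, but nothing in (2,3)-minimality forces the $d_{xy}$ for different pairs to coincide: the relations $\rel^{\{v,x\},\cR}\fac\al$ may or may not relate a given $\ba[x]$ to many different elements of $D$, and without additional structural information on these binary relations you cannot synchronize the anchors. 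The paper resolves this by a structural dichotomy that is entirely absent from your outline. It applies the Congruence Lemma~\ref{lem:affine-link}, using (S7) and polynomial-closedness, to show that for each $w$ the relation $\rel^{\{v,w\},\cR}\fac\al\cap(D\tm\relo^w)$ either contains $D\tm\umax(\relo^w)$ (so the anchor can be chosen freely among as-maximal ones) or is the graph of a mapping $\kp_w:\relo^w\to D$ (so the anchor is uniquely determined by $\ba[w]$). It then defines the set $Z$ of variables of the second kind and observes that for $w_1,w_2\in Z$ and $(c,d)\in\relo^{w_1w_2}$ one has $\kp_{w_1}(c)=\kp_{w_2}(d)$, from which the single anchor $d$ is chosen. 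This is the essential idea your proof is missing.

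There is also a second, smaller, structural issue: you plan a single application of the 2-Decomposition Theorem with $X=I\cup\{v\}$. The theorem requires $\pr_X\ba$ to be as-maximal in $\pr_X$ of the relevant relation, and it is far from clear that $(\ba,d)$ (with $d$ the anchor) is as-maximal after adjoining the auxiliary anchor coordinate. The paper instead proceeds by induction on $|I|$, introducing one new coordinate at a time and threading in extra ``copy'' coordinates $\vc zi$ (the relation $\relo$ in its formula~(\ref{equ:formula})), which gives finer control over as-maximality at each step. After the 2-Decomposition Theorem produces a candidate, the paper still has to do additional work with link congruences and the Rectangularity Corollary~\ref{cor:linkage-rectangularity} to bring the tuple back into the as-component containing $\ba$; this step has no counterpart in your sketch. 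The comparison to Theorem~\ref{the:pseudo-majority} is therefore too optimistic: the mechanism here is substantially more involved. Your outline would need to incorporate the Congruence Lemma dichotomy and an inductive decomposition to become a proof.
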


\begin{proof}
For $x,y\in V$ let $\relo^{vxy}$ denote the set of tuples $(d,a,b)$ such that
$(d,a)\in\rel^{\{v,x\},\cR}\fac\al,(d,b)\in\rel^{\{v,y\},\cR}\fac\al,
(a,b)\in\rel^{\{x,y\},\cR}$.

\smallskip

\noindent
{\sc Claim 1.} The set $\relo^x$ is as-closed in $A_{\rel,x}$, and $\relo^{xy}$ 
is as-closed if $\rel^{\{x,y\},\cR}$.

\smallskip

Let $(a,b)\in\relo^{xy}$. By the Maximality Lemma~\ref{lem:to-max}(3) either 
$\relo^{vxy}$ contains a subdirect 
product of $D$ and $\as(a,b)$, or $(D\tm\as(a,b))\cap\relo^{vxy}=\eps$. Since 
$(a,b)\in\relo^{xy}$ the former option holds. For the first part of the claim 
observe that $\pr_x\as(a,b)=\as(a)$.

\smallskip

\noindent
{\sc Claim 2.} For any $x,y\in V$, $\relo^{xy}$ is a subdirect product of 
$\relo^x\tm\relo^y$.

\smallskip

Let $a\in\relo^x$, then there is $d\in D$ with $(d,a)\in\rel^{\{v,x\},\cR}$. 
By (S1) for $\cR$, $\{\umax(\rel^{X,\cR})\}_{X\sse V,|X|=2}$ is (2,3)-strategy, and there 
is $b\in\zA_y$ with $(d,a,b)\in\relo^{vxy}$; then $(a,b)\in\relo^{xy}$.
\smallskip

We prove (2), the proof of (3) is basically identical, and we explain how to 
modify this proof to prove (1).

By induction on $i$ we prove that a Q-compatible tuple 
$\ba\in\amax(\pr_I\rel^\cR)$ can be found for any $I\sse\bs$, $|I|=i$. 
Moreover, for any Q-compatible $\ba'\in\amax(\pr_{I-\{u\}}\rel^\cR)$, a 
Q-compatible $\bb\in\amax(\pr_I\rel^\cR)$ can be found such that 
$\pr_{I-\{u\}}\bb=\ba'$. 

First we consider the case $u\ne v$. 
For $i=2$ the existence of $\ba$ follows from Claim~2. So, 
suppose such a tuple exists for any $I\sse\bs$ with $|I|\le i$.
Let $I\sse\bs$, $|I|=i+1$, $y\in I$, 
$I'=I-\{y\}$, and $\ba\in\amax(\pr_{I'}\rel^\cR)$ is Q-compatible. 
Let also $t\in I'$, $I''=I'-\{t\}$. Without loss of generality assume 
$I=\{\vc x{i+1}\}$, $y=x_{i+1}$, $t=x_i$. Consider the relation given by  
\begin{eqnarray}
\lefteqn{\relo(\vc xi,\vc zi) = \exists y\ \  \pr_{I''}\rel^\cR(\vc x{i-1})}
\label{equ:formula} \\
&& \meet\bigwedge_{j=1}^i \left(\rel^{\{x_j,v\},\cR}\fac\al (x_j,z_j)\meet
\rel^{\{v,x_{i+1}\},\cR}\fac\al (z_j,y)\meet
\rel^{\{x_j,x_{i+1}\},\cR}(x_j,y)\right).\nonumber
\end{eqnarray}
It suffices to prove that $\ba''=(\ba,\be)\in\relo$, where $\be[j]\in D$ for each
$j\in[i]$, since this would mean that there is a $c\in\zA_{\cR,x_{i+1}}$
with the required properties. 

Observe first that $\umax(\pr_{I'}\rel^\cR)\sse\pr_{I'}\relo$. Indeed, any
$\bb\in\pr_{I'}\rel^\cR$ by (S3) can be extended to $\bb'\in\pr_I\rel^\cR$;
then the values of the variables $z_j$ can be chosen by (2,3)-consistency and (S1).
This also implies that $\umax(\pr_{I''}\relo)=\umax(\pr_{I''}\rel^\cR)$.
Since $\al\prec\beta_v$, by (S7) for $\cR$ and the Congruence Lemma~\ref{lem:affine-link} 
for any $w\in V$ 
the relation $\rel^{\{v,w\},\cR}\fac\al$ either contains $D\tm\umax(\relo^w)$,
or $\rel^{\{v,w\},\cR}\fac\al\cap(D\tm\relo^w)$ is the graph of a 
mapping $\kp_w:\relo^w\to D$. Let the
set of variables for which the latter option holds be denoted by
$Z$. By construction, for any $w_1,w_2\in Z$, and any 
$(c,d)\in\relo^{w_1w_2}$, we have $\kp_{w_1}(c)=\kp_{w_2}(d)$. 
For any $w\in V-Z$, the set $\relo^w$ is as-closed, and 
$D\tm\umax(\relo^w)\sse\rel^{\{v,w\},\cR}\fac\al$.

Let $J=I'\cap Z$ and, if $J\ne\eps$, let $d=\kp_w(\ba[w])$ for any $w\in J$. 
If $J=\eps$, but $x_{i+1}\in Z$, then let $(\pr_{I''}\ba,c)$ be an extension of 
$\pr_{I''}\ba$ to a $Q$-compatible tuple from $\pr_{I\cup\{x_{i+1}\}}\rel$ 
and set $d=\kp_{x_{i+1}}(c)$. If $x_{i+1}\not\in Z$, then let $d$ be any 
element of $D$. Then $d$ is such that $(\ba[w],d)\in\rel^{\{w,v\},\cR}\fac\al$ 
for any $w\in I'$. 

Consider the tuple $\bb=(\ba,d\zd d)$; we show that it satisfies the conditions 
of the 2-Decomposition Theorem~\ref{the:quasi-2-decomp} with $X=I''$. Note that we cannot 
replace $I''$ with $I'$ here, because in order to apply the 2-Decomposition 
Theorem~\ref{the:quasi-2-decomp} $\ba$ has to be as-maximal 
$\pr_{I'}\relo$, which may not be true.
By what is observed before, $\pr_{I''}\ba\in\pr_{I''}\relo$, and for any 
$s_1,s_2\in I'$ we have $(\ba[s_1],\ba[s_2])\in\pr_{s_1s_2}\relo$.
We now show that for any of the remaining pairs of variables 
$x,z\in\{\vc xi,\vc zi\}$ $(\bb[x],\bb[z])\in\pr_{xz}\relo$. 
Let $x\in\{\vc xi\}$ and $z\in\{\vc zi\}$. If
$x\ne x_i$, then by the inductive hypothesis $\pr_{I''}\ba$ can be extended 
to some value $c$ of $y$ such that $(\ba[s],c)\in\relo^{sy}$ for any 
$s\in I''$. Then there is $c'\in\zA_{x_i}$ such that $(c',c)\in\relo^{x_iy}$.
Also, $(c,d)\in\relo^{yv}$ and $c'$ can be chosen such that 
$(c',d)\in\relo^{x_iv}$. If $x=x_i$, then find a value 
$c$ for $y$ such that $(\ba[x_i],c)\in\relo^{x_i,x_{i+1}}$, and then 
extend $c$ to a tuple on $I''\cup\{y\}$ by induction hypothesis. The values of
$z_j$ can be set using (2,3)-consistency. Finally, if $x,z\in\{\vc zi\}$, we 
proceed as in one of the previous cases.  

By the 2-Decomposition Theorem~\ref{the:quasi-2-decomp} there is $\bb'\in\relo$ such that
$\pr_{I''}\ba\sqq_{as}\pr_{I''}\bb'$ in $\relo'=\pr_{I''}\relo$,
$(\ba[x],\ba[z])\sqq_{as}(\bb'[x],\bb'[z])$ in $\rel^{\{x,z\},\cR}$ for any 
$x,z\in I'$, and $d\sqq_{as}\bb'[z_j]$ in $A_{\cR,v}\fac\al$ for any $j\in[i]$. 

Let $\lnk_1,\lnk_2$ be the link congruences of $\relo',\zA_{x_i}$ with
respect to $\pr_{I'}\relo$. Since $\umax(\pr_{I'}\rel^\cR)\sse\pr_{I'}\relo$, 
the link congruences of $\umax(\pr_{I''}\rel^\cR),\zA_{x_i}$ with respect to 
$\pr_{I'}\rel^\cR$ are smaller than $\lnk_1,\lnk_2$. Therefore the 
$\lnk_1$- and $\lnk_2$-blocks $A,B$ containing $\pr_{I''}\ba$ and $\ba[x_i]$, 
respectively, are such that $\relo''=\pr_{I'}\relo\cap(A\tm B)\ne\eps$. 
Choose $\ba'\in\pr_{I'}\rel^\cR\cap(A\tm B)$
such that $\ba\sqq_{as}\ba'$ in $\pr_{I'}\rel^\cR$ and $\ba'$ is
as-maximal in $\relo''$. As is easily seen, such a tuple
exists by the Rectangularity Corollary~\ref{cor:linkage-rectangularity}, because, since 
$\relo''$ is linked, any $\ba'\in\relo''$ such that
$\pr_{I''}\ba'$ is as-maximal in $A$ and $\ba'[x_i]$ is as-maximal in $B$ 
is as-maximal in $\relo''$. Now, consider
\begin{eqnarray*}
\lefteqn{\rela(\vc xi,\vc zi)=\exists y A(\vc x{i-1})\meet B(x_i)}\\
&&  \meet\bigwedge_{j=1}^i \rel^{\cR,\{x_j,v\}}\fac\al (x_j,z_j)
\meet\rel^{\cR,\{v,x_{i+1}\}}\fac\al (z_j,y)\meet 
\rel^{\cR,\{x_j,x_{i+1}\}}(x_j,y).
\end{eqnarray*}
By the same argument as before, there is $\bc\in\rela$ such that 
$\ba'\sqq_{as}\pr_{I'}\bc$ in $\pr_{I'}\rela$ and $\bc[z_j]\in D$ for $j\in[i]$. 
Since $\ba'$ is as-maximal
in $\pr_{I'}\rela$, we also have $\pr_{I'}\bc\sqq_{as}\ba'$ in $\pr_{I'}\rela$.
Therefore there is $c\in A_{\cR,x_{i+1}}$ such that 
$(\ba'[x],c)\in\relo^{xx_{i+1}}$ for every $x\in I'$. 
By the 2-Decomposition Theorem~\ref{the:quasi-2-decomp} applied to $\rel^\cR$, there is 
$\ba''\in\pr_I\rel$ such that $\pr_{I'}\ba''=\ba'$ and 
$c\sqq_{as}\ba''[x_{i+1}]$. Finally, since $\ba$ is as-maximal in 
$\pr_{I'}\rel^\cR$, there is an as-path from $\ba'$ to $\ba$ in 
$\pr_{I'}\rel^\cR$ and we complete the proof by the Maximality Lemma~\ref{lem:to-max}(4).  

Next, we consider the case $u=v$. Let $I=\{\vc xi,v\}$, 
$I'=\{\vc xi\}$ and $I''=\{\vc x{i-1}\}$. Note that by reordering the 
variables we may assume that if $x_i\in Z$ then 
$Z\cap\{\vc x{i-1}\}\ne\eps$. By the induction hypothesis 
there is $c\in A_{\cR,v}$ such that $(\pr_{I''}\ba,c)$ belongs to 
$\amax(\pr_{I''\cup\{v\}}\rel^\cR)$ and is Q-compatible, in particular,
$c^\al\in D$. We consider the tuple $(\ba,c^\al)$ and relation 
$\rel'=\pr_I\rel^\cR\fac\al$. We have $\ba\in\pr_{I'}\rel'$. For any $j\in[i-1]$ 
we have $(\ba[j],c)\in\relo^{x_jv}$ by the choice of $c$. If $x_i\not\in Z$
then there is $c'\eqc\al c$ such that $(\ba[x_i],c')\in\rel^{\{x_i,v\},\cR}$.
By (S3) for $\cR$, $(\ba[x_i],c')$ extends to a tuple from $\rel^\cR$, 
therefore $(\ba[x_i],c)\in\pr_{x_iv}\rel'$. If $x_i\in Z$, then there is also
some $j\in I''\cap Z$. Then since $(\ba[j],\ba[i])\in\relo^{x_jx_i}$,
we have $\kp_{x_j}(\ba[x_j])=\kp_{x_i}(\ba[x_i])=c$. Therefore 
there is $c'\eqc\al c$ such that $(\ba[x_i],c')\in\rel^{\{x_i,v\},\cR}$ 
and we continue as before.
By the 2-Decomposition Theorem~\ref{the:quasi-2-decomp} there is $\bb\in\rel'$ such that 
$\pr_{I'}\bb=\ba$ and $c\sqq_{as}\bb[v]$ in $A_{\cR,v}\fac\al$, that is, 
$\bb[v]\in D$. Therefore there is $\bc\in\pr_I\rel$ such that 
$\pr_{I'}\bc=\ba$ and $\bc[v]^\al\in D$, as required.

\smallskip

Item (3) can be proved in the same way, as $\cP$ is block-minimal.
For (1) we need to make two changes. First, we apply the argument above for 
$i=2$ and stop before the last application of the 2-Decomposition Theorem~\ref{the:quasi-2-decomp}.
Second, we need to consider the case when extending a pair from $\relo^{xy}$
by a value of $v$. More precisely, let $(a,b)\in\relo^{xy}$, $x,y\in V$, 
we need to find $c\in A_{\cR,v}$ such that $c^\al\in D$ and 
$(a,c)\in\relo^{xv}$, $(b,c)\in\relo^{yv}$. Let 
$$
\relo(x,y,v)=\rel^{\{x,y\},\cR}(x,y)\meet\rel^{\{x,v\},\cR}(x,v)\meet
\rel^{\{y,v\},\cR}(y,v).
$$ 
By (2,3)-consistency and construction $(a,b)\in\pr_{xy}\relo, 
(a,d_1)\in\pr_{xv}\relo,(b,d_2)\in\pr_{yv}\relo$, where 
$d_1^\al=d_2^\al\in D$. The relation $\relo'=\relo\fac\al$ then
satisfies the conditions: $(a,b)\in\pr_{xy}\relo', (a,d)\in\pr_{xv}\relo',
(b,d)\in\pr_{yv}\relo'$ where $d=d_1^\al=d_2^\al$. By the 2-Decomposition 
Theorem~\ref{the:quasi-2-decomp} there is a tuple $(a,b,d')\in\relo'$
such that $d\sqq_{as} d'$, that is, $d'\in D$. Therefore $\relo$ contains
a tuple $(a,b,c)$ for some $c$ with $c^\al=d'$. This $c$ is as required.
\end{proof}

Let $\cP''$ be the problem obtained from $\cP'$ as follows: establish
(2,3)-minimality and, for any $(w,\gm,\dl)\in\cW(\ov\beta)$, 
establish minimality of $\cP'_{W_{w,\gm\dl}}\fac{\ov\mu^Y}$,
where $Y$ is one of the sets specified in the definition of block-minimality, 
until the instance does not change any longer.
Let $A''_{\cR,w}$ be the domains of $w\in V$ for $\cP''$, for 
$C=\ang{\bs,\rel}\in\cC$ let $\rel''$ denote the corresponding constraint 
relation of $\cP''$; for $(w,\gm,\dl)\in\cW(\ov\beta)$ let 
$\cS''_{W_{w,\gm\dl},Y}$ denote the set of solutions of 
$\cP''_{W_{w,\gm\dl}}\fac{\ov\mu^Y}$; finally 
$\rel''_{C,w,\gm\dl}$ denote the set of tuples from $\rel_{C,w,\gm\dl}$
extending to a solution of $\cP''_{W_{w,\gm\dl}}\fac{\ov\mu^Y}$.

The next corollary follows straightforwardly from Proposition~\ref{pro:3-tight},
because establishing (2,3)-minimality or minimality never eliminates an
as-maximal Q-compatible tuples.

\begin{corollary}\label{cor:23-minimality}
The sets $A''_{\cR,w}$, $\rel''$, $\cS''_{W_{w,\gm\dl},Y}$, and
$\rel''_{C,w,\gm\dl}$ contain all the as-maximal Q-compatible tuples from
$A_{\cR,w}$, $\rel$, $\cS_{W_{w,\gm\dl},Y}$, and
$\rel_{C,w,\gm\dl}$, respectively.
\end{corollary}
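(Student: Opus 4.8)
The plan is to derive Corollary~\ref{cor:23-minimality} directly from Proposition~\ref{pro:3-tight} by tracking what happens to as-maximal Q-compatible tuples under the two tightening operations that define $\cP''$ from $\cP'$: establishing $(2,3)$-minimality, and, for each $(w,\gm,\dl)\in\cW(\ov\beta)$, establishing minimality of $\cP'_{W_{w,\gm\dl}}\fac{\ov\mu^Y}$ for the relevant sets $Y$. The key point is that both of these operations are instances of \emph{tightening}: they only remove tuples from constraint relations (and elements from domains), and the set of surviving tuples is always a subalgebra. So it suffices to show that no tightening step can ever delete an as-maximal Q-compatible tuple from the relevant relation, since then all such tuples persist from $\cP'$ all the way to $\cP''$.

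First I would set up the invariant. Call a tuple \emph{good} if it is as-maximal in the appropriate relation ($A_{\cR,w}$, $\rel^\cR$, $\cS_{W_{w,\gm\dl},Y}^\cR$, or $\rel_{C,w,\gm\dl}^\cR$) and Q-compatible in the sense of Section~\ref{sec:step-1}. Proposition~\ref{pro:3-tight}(2) says that any good tuple on a subset $I\sse\bs$ of a constraint scope extends to a good tuple on all of $\bs$; Proposition~\ref{pro:3-tight}(1) is the corresponding two-variable extension property for the binary relations $\relo^{xy}$; and Proposition~\ref{pro:3-tight}(3) is the corresponding statement for the solution sets $\cS_{W_{w,\gm\dl},Y}$. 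The invariant I would maintain through the tightening process is: at every intermediate instance, every relation still contains all the good tuples that it contained in $\cP'$, and the good tuples still satisfy the extension properties (1)--(3) of Proposition~\ref{pro:3-tight} relative to the current relations. The base case is $\cP'$ itself, where the invariant is exactly Proposition~\ref{pro:3-tight}.

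Next I would run the inductive step. A single step of the $(2,3)$-minimality propagation removes from a binary (or higher) projection a tuple $\ba$ for which some required witness $c$ in a third coordinate fails to exist; but if $\ba$ is good, Proposition~\ref{pro:3-tight}(1) (using the current invariant) supplies such a $c$ that is itself as-maximal and Q-compatible, hence $\ba$ is not removed. Similarly, a step of the minimality propagation for $\cP'_{W_{w,\gm\dl}}\fac{\ov\mu^Y}$ removes a tuple $\ba\in\rel$ (for some constraint $C$) that fails to extend to a solution; but if $\ba$ is good, Proposition~\ref{pro:3-tight}(2) together with (3) produce a good solution of the subinstance extending $\ba$ — concretely, extend $\ba$ inside $C$ to a good tuple on $\bs$, then extend that to a good solution of $\cP'_{W_{w,\gm\dl}}$ by (3), and note that passing to the quotient modulo $\ov\mu^Y$ sends as-maximal tuples to as-maximal tuples (Lemma~\ref{lem:thin-properties}(1) plus the Maximality Lemma~\ref{lem:to-max}(5) for quotients) and preserves Q-compatibility, since $\relo^{xy}$ is defined via as-maximal $\cR$-compatible tuples that behave well under the relevant quotients. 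Hence the removed tuple cannot have been good, so the invariant is preserved. Since the tightening process terminates at $\cP''$, the invariant at $\cP''$ gives exactly the statement of the corollary.

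The main obstacle I anticipate is not any single one of these steps but the bookkeeping of compatibility with the quotients $\fac{\ov\mu^Y}$ appearing in the minimality conditions of block-minimality: one must check that the definitions of ``good'' (as-maximal, Q-compatible) and the extension properties of Proposition~\ref{pro:3-tight} are all stable under applying $\cdot\fac{\ov\mu^Y}$ for the various sets $Y$ allowed there, and that the $\cR$-compatibility bookkeeping (which is defined modulo exactly such quotients) lines up. Once one observes that Proposition~\ref{pro:3-tight}(3) is already stated for the quotiented solution sets $\cS'^\cR_{W_{w,\gm\dl}} = \cS^\cR$ of $(\cP_{W_{w,\gm\dl}})\fac{\ov\mu^Y}$, this obstacle largely dissolves: the proposition has been engineered to be applied after quotienting, so the corollary is essentially a formal consequence, and the write-up should be just a paragraph.
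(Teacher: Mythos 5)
Your proposal is correct and follows essentially the same approach as the paper, which dispatches the corollary in a single sentence preceding the statement: "The next corollary follows straightforwardly from Proposition~\ref{pro:3-tight}, because establishing (2,3)-minimality or minimality never eliminates an as-maximal Q-compatible tuple." Your write-up supplies exactly the invariant argument that sentence is compressing — Proposition~\ref{pro:3-tight}(1)--(3) furnish the extension witnesses that the (2,3)-minimality and block-minimality propagation steps test for, so a good tuple is never eliminated — and your anticipated bookkeeping obstacle about the quotients $\fac{\ov\mu^Y}$ is indeed already absorbed into the way Proposition~\ref{pro:3-tight}(3) is phrased.
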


Now we can show that the collection of relations
$$
\cR''=\{\rel''_{C,w,\gm\dl}\mid C\in\cC, (w,\gm,\dl)\in\cW(\ov\beta)\}
$$
is almost a $\ov\beta$-strategy.

\begin{theorem}\label{the:non-as-restriction}
The collection of relations $\cR''$ constructed as above satisfies all the
conditions of a $\ov\beta$-strategy with respect to $\ov B$,
except (S5) and (S6).
\end{theorem}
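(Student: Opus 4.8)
\textbf{Proof plan for Theorem~\ref{the:non-as-restriction}.}
The plan is to verify the surviving strategy conditions (S1)--(S4) and (S7) for $\cR''$ one at a time, leaning on Proposition~\ref{pro:3-tight} and Corollary~\ref{cor:23-minimality} as the technical engine, exactly in the style used for the affine case in Section~\ref{sec:type-2}. The key observation throughout is that restricting $B_v$ to $\wh D=\Sg{D'}$ and then establishing $(2,3)$-minimality and block-minimality (i.e.\ passing to $\cP''$) never destroys an as-maximal Q-compatible tuple; so every relation in $\cR''$ is nonempty, is obtained from the corresponding relation of $\cR$ by intersecting with the domain restriction, and retains all its as-maximal Q-compatible elements. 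This immediately gives that each $\rel''_{C,w,\gm\dl}$ is a subalgebra of $\pr_{\bs\cap W}\rel''$ and that its as-maximal elements sit inside those of $\pr_{\bs\cap W}\rel''$, which is the first half of (S4); the statements about $\cS^{\cR''}_W$ and $\cS^{\cR''}_{W,Y}$ follow the same way using that $\cP''$ was built by establishing minimality of exactly those subinstances.

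First I would handle (S7), since it is cheapest: by (S7) for $\cR$ the relations $\rel$, $\cS_W$, $\cS_{W,Y}$ are strongly chained with respect to $\ov\beta,\ov B$, and Lemma~\ref{lem:S7} says this property is preserved when we pass to a $\beta'_i$-block $D_i$ that is a member of a nontrivial as-component of $B'_i\fac{\beta'_i}$ --- but here we are not even changing $\ov\beta$ or $\ov B$ (the congruences $\beta_w$ and their blocks $B_w$ stay fixed; only the auxiliary domains $A_{\cR,w}$ shrink), so strong chaining is inherited verbatim. Next I would do (S1): take the $(2,3)$-strategy $\{\umax(\rel^{X,\cR})\}$ guaranteed by (S1) for $\cR$, replace each relation by the set of Q-compatible tuples in it (these are nonempty by Proposition~\ref{pro:3-tight}(1),(2) and Corollary~\ref{cor:23-minimality}), and check the patching axiom for triples $x,y,z$ using exactly Proposition~\ref{pro:3-tight}(1). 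Condition (S3) is Proposition~\ref{pro:3-tight}(2) together with Corollary~\ref{cor:23-minimality}: any $\cR''$-compatible tuple $\ba\in\umax(\pr_I\rel'')$ is in particular as-maximal and Q-compatible, hence extends to an as-maximal Q-compatible $\ba'\in\rel$, which by the corollary lies in $\rel''$ and is $\cR''$-compatible; nonemptiness of $\rel''\cap\ov B_\bs$ follows from the $|I|=1$ case. Condition (S2) is the analogue for the subinstances $\cP_U$, $\cP_U\fac{\ov\mu^{Y}}$ and is Proposition~\ref{pro:3-tight}(3): an as-maximal Q-compatible tuple on $\bs\cap W_1\cap W_2$ extends, inside $\rel^\cR$ or the appropriate solution set, to an as-maximal Q-compatible solution, which survives into $\cR''$ by Corollary~\ref{cor:23-minimality}. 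Finally I finish (S4) by the argument sketched above, using Lemma~\ref{lem:u-max-congruence} (or directly Proposition~\ref{pro:3-tight}(3) and the Maximality Lemma~\ref{lem:to-max}) to see that the u-maximal elements of the shrunk solution sets are u-maximal in the ambient ones.

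The main obstacle I anticipate is bookkeeping rather than a genuinely new idea: one must be careful that "Q-compatible and as-maximal" is the right closure condition at every stage, i.e.\ that establishing $(2,3)$-minimality and the block-minimality subproblems in the passage $\cP'\rightsquigarrow\cP''$ really does preserve this set of tuples. This is precisely the content of Corollary~\ref{cor:23-minimality}, so the work is to quote it at each of the five conditions without conflating the several instances $\cP_U$, $\cP_U\fac{\ov\mu^{Y_1}}$, $\cP_U\fac{\ov\mu^{Y_2}}$ and the corresponding sets $\cS_W$, $\cS_{W,Y_1}$, $\cS_{W,Y_2}$ that appear in (S2),(S4),(S7). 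The statement deliberately excludes (S5) and (S6), which genuinely fail after Step~1 --- the restricted domain $\wh D$ is generated by a \emph{single} as-component and need no longer be as-closed inside $\umax(\pr_w(\rel\cap\ov B))$, nor need the $\cR''$-compatible tuples be polynomially closed --- and these are repaired only in Step~2; so in the present proof I would explicitly note that no claim about (S5),(S6) is being made and that the remaining conditions do not depend on them (indeed (S7), on which the others rest through the Congruence Lemma~\ref{lem:affine-link}, was verified above directly). The whole argument is then assembled as: (S7) by Lemma~\ref{lem:S7}; (S1) by Proposition~\ref{pro:3-tight}(1); (S2) by Proposition~\ref{pro:3-tight}(3); (S3) by Proposition~\ref{pro:3-tight}(2); (S4) from the domain-restriction description plus Corollary~\ref{cor:23-minimality} and Lemma~\ref{lem:u-max-congruence}. $\Box$
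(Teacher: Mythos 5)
Your proposal is correct and follows essentially the same route as the paper's own (quite terse) proof, which likewise disposes of (S1)--(S3) via Corollary~\ref{cor:23-minimality} and Proposition~\ref{pro:3-tight}, of (S4) via the subalgebra-by-construction observation plus the existence of an as-maximal Q-compatible element, and of (S7) by the fact that neither $\ov\beta$, $\ov B$, nor the constraint relations $\rel$ and solution sets $\cS_W,\cS_{W,Y}$ referenced in (S7) change at all in passing from $\cR$ to $\cR''$. One small slip: in the (S4) paragraph you write ``subalgebra of $\pr_{\bs\cap W}\rel''$'' where condition (S4) asks for a subalgebra of $\pr_{\bs\cap W}\rel$, but since $\rel''\sse\rel$ the conclusion you want follows anyway.
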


\begin{proof}
Condition (S1) follows from Corollary~\ref{cor:23-minimality}. 
Conditions (S2),(S3) follow from Corollary~\ref{cor:23-minimality} and 
construction (establishing block-minimality). 
For (S4) every relation $\rel''_{C,w,\gm\dl}$ is a subalgebra in 
$\rel_{C,w,\gm\dl}$ and therefore in 
$\pr_{\bs\cap W_{w,\gm\dl}}\rel$ by construction. Moreover, as 
by Proposition~\ref{pro:3-tight} $\rel''_{C,w,\gm\dl}$ contains an element 
as-maximal in $\rel_{C,w,\gm\dl}$, we have 
$\umax(\rel''_{C,w,\gm\dl})\sse\umax(\rel_{C,w,\gm\dl})$.
Finally, condition (S7) follows from (S7) for $\cR$.
\end{proof}

We will need another property of $\cP''$. Unfortunately, Q-compatible tuples are 
not very helpful in establishing properties (S5),(S6), since those are properties of 
u-maximal fragments of relations. Therefore we need to extend Q-compatibility 
to u-maximal elements. Similar to Q-compatibility we make 
the following definition. For $x,y\in V$, let 
\begin{eqnarray*}
\relp^x&=&\{a\in \umax(A_{\cR,x})\mid \text{ there is 
$d\in D$ such that } (d,a)\in\rel^{\{v,x\},\cR}\fac\al\},\\
\quad \text{and}&&\\
\relp^{xy}&=&\{(a,b)\in\umax(\rel^{\{x,y\},\cR})\mid\text{ there is 
$d\in D$ such that }\\
&& \qquad (d,a)\in\rel^{\{v,x\},\cR}\fac\al, (d,b)\in\rel^{\{v,y\},\cR}\fac\al\}.
\end{eqnarray*}\label{not:Pxy}
In particular $\relp^v=\umax(\wh D)$.
Note that these relations are different from $\relo^x,\relo^{xy}$ in that they
consist of u-maximal elements and pairs, rather than as-maximal as 
$\relo^x,\relo^{xy}$.
We say that a tuple $\ba$ on a set $U\sse V$ (where $U$ can be, e.g.\ a 
subset of $\bs$ for a constraint $C=\ang{\bs,\rel}$, or a subset of 
$W_{w,\gm\dl}$ for some $(w,\gm,\dl)\in\cW(\ov\beta)$) 
is \emph{P-compatible}\index{P-compatible} if $(\ba[x],\ba[y])\in\relp^{xy}$ 
for any $x,y\in U$.

Let $\cP'''$ be the instance obtained as follows: First, restrict the domains
and relations of $\cP$ to the sets of P-compatible tuples they contain;
let the new relation for $C=\ang{\bs,\rel}\in\cC$ be denoted by 
$\rel^\dagger$. Second, establish (2,3)-minimality and block-minimality
of the resulting instance. Let $\rel''', \rel'''_{C,w,\gm\dl},
\cS'''_{W_{w,\gm\dl},Y}$ denote the relations induced by $\cP'''$. 
Note that the domains and relations of $\cP'''$ are not necessarily subalgebras.

\begin{lemma}\label{lem:step-1-as-closeness}
Let $C=\ang{\bs,\rel}\in\cC$, $(w,\gm,\dl)\in\cW(\ov\beta)$.\\[2mm]
(1) Relations $\rel''', \rel'''_{C,w,\gm\dl}, \cS'''_{W_{w,\gm\dl},Y}$
are nonempty, and $\rel'''\sse\rel''$, $\rel'''_{C,w,\gm\dl}\sse\rel''_{C,w,\gm\dl}$,
$\cS'''_{W_{w,\gm\dl},Y}\sse\cS''_{W_{w,\gm\dl},Y}$.
\\[1mm]
(2) Relations $\rel'''_{C,w,\gm\dl}$ and $\cS'''_{W_{w,\gm,\dl,\ov\beta},Y}$ 
are as-closed in $\rel_{C,w,\gm\dl}$, and in
$\cS'^\cR_{W_{w,\gm,\dl}}$, respectively 
(recall that $\cS'^\cR_{W_{w,\gm,\dl,\ov\beta}}$ is the set of $\cR$-compatible 
tuples from $\cS_{W_{w,\gm,\dl,\ov\beta},Y}$).
\end{lemma}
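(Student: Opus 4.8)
Looking at this lemma (Lemma~\ref{lem:step-1-as-closeness}), I need to prove two things about the relations induced by $\cP'''$: nonemptiness and containment (part 1), and as-closedness (part 2).

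\textbf{Overall approach.} The key tool is Proposition~\ref{pro:3-tight} together with Corollary~\ref{cor:23-minimality}, plus the fact that every Q-compatible as-maximal tuple survives the transformation to $\cP''$, and analogously for P-compatible tuples and $\cP'''$. The main idea is that P-compatibility is a ``u-maximal'' refinement of Q-compatibility, and I must show that P-compatible tuples behave well under (2,3)-minimality and block-minimality. Since u-maximal elements sit inside as-components of maximal components, and since the 2-Decomposition Theorem~\ref{the:quasi-2-decomp} and the Maximality Lemma~\ref{lem:to-max} let us lift as-paths and u-maximal elements through projections, the proof of Proposition~\ref{pro:3-tight} adapts to the P-compatible setting.

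\textbf{Step 1 (Part 1 --- nonemptiness and containment).} First I would establish a P-analogue of Proposition~\ref{pro:3-tight}: for any $C=\ang{\bs,\rel}$, any $I\sse\bs$, and any P-compatible $\ba\in\umax(\pr_I\rel^\cR)$, there is a P-compatible $\ba'\in\umax(\rel^\cR)$ with $\pr_I\ba'=\ba$; and similarly for the solution sets $\cS'^\cR_{W_{w,\gm\dl}}$. The argument mirrors the proof of Proposition~\ref{pro:3-tight}: one uses the Congruence Lemma~\ref{lem:affine-link} applied to $\rel^{\{v,w\},\cR}\fac\al$ (valid because $\al\prec\beta_v$ and by (S7) these relations are strongly chained) to split variables into the set $Z$ where the relation is a graph of a mapping $\kp_w$ and the complement where $D\tm\umax(\relp^w)$ is contained; then the values of $v$ are chosen consistently via $\kp_w$ on $Z$, and the 2-Decomposition Theorem is invoked with the u-maximal refinement. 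Once this P-analogue holds, nonemptiness of $\rel'''$, $\rel'''_{C,w,\gm\dl}$, and $\cS'''_{W_{w,\gm\dl},Y}$ follows: the instance $\cP'''$ starts by restricting to P-compatible tuples, which is nonempty by (S1),(S3) for $\cR$ and Claim~2-type arguments, and establishing (2,3)-minimality and block-minimality never removes an as-maximal P-compatible tuple (this is the P-analogue of Corollary~\ref{cor:23-minimality}, which I would state and prove). For the containments $\rel'''\sse\rel''$ etc., I would observe that every P-compatible tuple is Q-compatible (since $\umax\sse\amax$ and $\relp^{xy}\sse\relo^{xy}$ for as-maximal P-compatible pairs), so restricting to P-compatible tuples yields a subinstance of the one defining $\cP''$; propagation (2,3)-minimality and block-minimality are monotone under this restriction, giving $\rel'''\sse\rel''$, $\rel'''_{C,w,\gm\dl}\sse\rel''_{C,w,\gm\dl}$, $\cS'''_{W_{w,\gm\dl},Y}\sse\cS''_{W_{w,\gm\dl},Y}$.

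\textbf{Step 2 (Part 2 --- as-closedness).} To show $\rel'''_{C,w,\gm\dl}$ is as-closed in $\rel_{C,w,\gm\dl}$, I need: it meets $\umax$ of that relation (which follows from the P-analogue of Proposition~\ref{pro:3-tight} since P-compatible u-maximal tuples exist and survive propagation), and it is closed under following as-paths from its u-maximal elements. The latter is the crux: given $\ba\in\rel'''_{C,w,\gm\dl}\cap\umax$ and $\bb$ with $\ba\sqq_{as}\bb$ in $\rel_{C,w,\gm\dl}$, I must show $\bb\in\rel'''_{C,w,\gm\dl}$. Here I would use the as-closedness of $\relp^w,\relp^{xy}$ --- which itself follows from the Maximality Lemma~\ref{lem:to-max}(3),(4) exactly as in Claim~1 of Proposition~\ref{pro:3-tight}'s proof, since $\relp^{vxy}$ contains a subdirect product of $D$ and $\as(a,b)$ once it meets it --- together with the Maximality Lemma~\ref{lem:to-max}(4) to lift the as-path to the auxiliary relation that witnesses P-compatibility and the block-minimality solution. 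The same argument handles $\cS'''_{W_{w,\gm\dl},Y}$ as-closed in $\cS'^\cR_{W_{w,\gm\dl}}$, using block-minimality of $\cP$ in place of the constraint relation and the solution-set version of the P-analogue of Proposition~\ref{pro:3-tight}.

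\textbf{Main obstacle.} The hardest part is establishing the P-compatible analogue of Proposition~\ref{pro:3-tight}, specifically the inductive step that lifts a P-compatible u-maximal tuple on $I'$ to one on $I=I'\cup\{y\}$. Working with u-maximal rather than as-maximal tuples means I cannot directly apply the 2-Decomposition Theorem (which concerns as-components); instead I must interleave the 2-Decomposition Theorem with the Rectangularity Corollary~\ref{cor:linkage-rectangularity} and the Maximality Lemma to move between as-components within a u-maximal component, carefully tracking the congruence $\al$ and the mappings $\kp_w$ on the set $Z$. This requires essentially reproving Proposition~\ref{pro:3-tight} with ``$\amax$'' replaced by ``$\umax$'' throughout and checking that each invocation of the Maximality Lemma and the 2-Decomposition Theorem still applies --- items~(4) and~(5) of Lemma~\ref{lem:to-max} for u-maximal elements, which the excerpt confirms follow from items (1)--(3), are exactly what makes this go through.
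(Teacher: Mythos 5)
There is a genuine gap. Your proposal misidentifies the ``main obstacle'': you set out to reprove a full u-maximal (P-compatible) analogue of Proposition~\ref{pro:3-tight}, but the paper never needs one. For part (1), the containments $\rel'''\sse\rel''$ etc.\ are immediate: every P-compatible tuple has coordinates in $\umax(A_{\cR,w})\sse A_{\cR,w}$ (and $\umax(\wh D)\sse\wh D$), so $\rel^\dagger\sse\rel'$, and propagation is monotone --- no detour through ``$P$ implies $Q$'' is needed. Nonemptyness follows because every Q-compatible tuple from $\rel$ already belongs to $\rel^\dagger$: the relevant inclusion is $\amax\sse\umax$ (there is a single u-maximal component, the top SCC of $\cG'$, and it absorbs the as-components), not $\umax\sse\amax$ as you assert, and then Corollary~\ref{cor:23-minimality}'s argument that as-maximal Q-compatible tuples survive propagation applies unchanged.

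For part (2), your observation that $\relp^{xy}$ and $\relp^x$ are as-closed --- via an auxiliary relation $\relo(x,y,v)$ and the Maximality Lemma, as in Claim~1 of Proposition~\ref{pro:3-tight} --- only yields the base case, namely that $\rel^\dagger_{C,w,\gm\dl}$ is as-closed in $\umax(\rel_{C,w,\gm\dl})$. The heart of the proof, which your phrase ``lift the as-path to the auxiliary relation that witnesses P-compatibility and the block-minimality solution'' collapses into a single gesture, is an explicit induction over the tightening rounds that turn $\rel^\dagger$ into $\rel'''$: one must show that \emph{each} propagation step preserves as-closedness, in two separate cases --- a round enforcing (2,3)-minimality (handled by a ternary auxiliary relation $\relo(x,y,z)$ and Lemma~\ref{lem:to-max}(4)), and a round solving a block-minimality subproblem $\cP^\ddagger_{W_{w,\gm\dl}}\fac{\ov\mu^Y}$ (handled by lifting the as-path inside $\cS^\ddagger_{W_{w,\gm\dl},Y}$). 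Without this induction, as-closedness of the seed relations $\relp^{xy}$ says nothing about what remains after many rounds of tightening. You also handle the solution sets $\cS'''_{W_{w,\gm\dl},Y}$ the hard way: the paper simply reduces them to the $\rel'''_{C,w,\gm\dl}$ case by pp-definability together with Lemma~\ref{lem:poly-closed}(2).
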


\begin{proof}
(1) The inclusions $\rel'''\sse\rel''$, $\rel'''_{C,w,\gm\dl}\sse\rel''_{C,w,\gm\dl}$,
$\cS'''_{W_{w,\gm\dl,\ov\beta},Y}\sse\cS''_{W_{w,\gm\dl,\ov\beta},Y}$
follow from the construction, as $\rel^\dagger\sse\rel'$. On the other 
hand, as every Q-compatible tuple from $\rel$ belongs to $\rel^\dagger$,
the nonemptyness follows by Corollary~\ref{cor:23-minimality}.

(2) First we observe that for any $x,y\in V$ the relation $\relp^{xy}$, and 
therefore $\relp^x$, is as-closed in $\umax(\rel^{\{x,y\},\cR})$.
This can be done in the same way as in the proof of 
Proposition~\ref{pro:3-tight}. Let $(a,b)\in\relp^{xy}$ and 
$(a',b')\in\umax(\rel^{\{x,y\},\cR})$ be such that
$(a,b)\sqq_{as}(a',b')$. We need to find  $d\in D$ such that 
$(a',d)\in\rel^{\{x,v\},\cR}\fac\al$, $(b',d)\in\rel^{\{y,v\},\cR}\fac\al$. 
Let 
$$
\relo(x,y,v)=\rel^{\{x,y\},\cR}(x,y)\meet\rel^{\{x,v\},\cR}\fac\al(x,v)\meet
\rel^{\{y,v\},\cR}\fac\al(y,v),
$$ 
which is a subalgebra of $\zA_x\tm\zA_y\tm\zA_v\fac\al$. Since
$(a,b)\in\relp^{xy}$ there is $c\in D$ with $(a,b,c)\in\relo$.
By (S1) for $\cR$ we have $(a',b')\in\pr_{xy}\relo$, moreover, 
$(a,b)\sqq_{as}(a',b')$ in $\pr_{xy}\relo$. By the Maximality Lemma~\ref{lem:to-max}(4)
there is $d\in A_{\cR,v}$ such that $(a',b',d^\al)\in\relo$ and 
$(a,b,c)\sqq_{as}(a',b',d^\al)$ in $\relo$. Therefore $d^\al\in D$.

It suffices to prove the statement for relations of the form $\rel'''_{C,w,\gm\dl}$,
because relations $\cS'''_{W_{w,\gm,\dl},Y}$ are pp-definable
through $\rel'''_{C,w,\gm\dl}$, and we can use Lemma~\ref{lem:poly-closed}(2). 
Since $\relp^{xy}$ are as-closed, every relation 
$\rel^\dagger_{C,w,\gm\dl}$ is also as-closed in $\umax(\rel_{C,w,\gm\dl})$.
We prove by induction that this property is preserved as 
(2,3)-minimality and block-minimality is being established. The observation
about the relations $\rel^\dagger_{C,w,\gm\dl}$ establishes the base case.
For the induction step, let $\rel^{\ddagger}_{C,w,\gm\dl}$ and 
$\cS^{\ddagger}_{W_{w,\gm\dl},Y}$ denote the current 
state of the corresponding relations, and they are as-closed.
There are two cases for the induction step.

In the first case we make a step to enforce (2,3)-minimality, that is, for some
$x,y,z\in V$ we check whether or not some 
$(a,b)\in\rel^{\ddagger \{x,y\}}$ can be extended by $c\in B_z$ 
such that $(a,c)\in\rel^{\ddagger \{x,z\}}$ and 
$(b,c)\in\rel^{\ddagger \{y,z\}}$. Let $(a,b)\in\rel^{\ddagger \{x,y\}}$
be such that  there are $c\in B_z$ with $(a,c)\in\rel^{\ddagger \{x,z\}}$, 
$(b,c)\in\rel^{\ddagger \{y,z\}}$, and let $(a',b')\in\rel^{\ddagger \{x,y\}}$ 
such that $(a,b)\sqq_{as}(a',b')$ in $\rel^{\ddagger \{x,y\}}$. Then
there is $c'\in B_z$ such that $(a',c')\in\rel^{\{x,z\}}$, $(b',c')\in\rel^{\{y,z\}}$. 
Similar to part (1) let 
$$
\relo(x,y,z)=\Sg{\rel^{\ddagger \{x,y\}}}(x,y)\meet\rel^{\{x,v\},\cR}(x,v)\meet
\rel^{\{y,v\},\cR}(y,v),
$$ 
We have $\Sg{\rel^{\ddagger \{x,y\}}}=\pr_{xy}\relo$ by (S1) for $\cR$,
and $(a,b,c),(a',b',c')\in\relo$. By the Maximality Lemma~\ref{lem:to-max}(4) we may assume 
that $(a,b,c)\sqq_{as}(a',b',c')$ in $\relo$. Since 
$\rel^{\ddagger \{x,z\}}, \rel^{\ddagger \{x,z\}}$ are as-closed,
we have $(a',c')\in\rel^{\ddagger \{x,z\}}, (b',c')\in\rel^{\ddagger \{x,z\}}$

In the second case we solve a subproblem of the form 
$\cP^{\ddagger}_{W_{w,\gm\dl}}\fac{\ov\mu^Y}$.
Let $U=W_{w,\gm\dl}$, let $\cS^{\ddagger}_{W_{w,\gm\dl},Y}$ be the 
corresponding set of solutions. Take $C=\ang{\bs,\rel}\in\cC$ and 
$(u,\eta,\th)\in\cW(\ov\beta)$, let $U'=\bs\cap U\cap W_{u,\eta\th}$.
Suppose $\ba\in\rel^\ddagger_{C,u,\eta\th}$ be such that there is a solution
$\vf\in\cS^{\ddagger}_{W_{w,\gm\dl,\ov\beta},Y}$ extending 
$\pr_{U'}\ba$, and let $\bb\in\rel^\ddagger_{C,u,\eta\th}$
is such that $\ba\sqq_{as}\bb$ in $\rel^\ddagger_{C,u,\eta\th}$. We need to
show that $\pr_{U'}\bb$ is extendible to a solution from 
$\cS^{\ddagger}_{W_{w,\gm\dl},Y}$. Let 
$\psi\in\cS'^\cR_{W_{w,\gm\dl},Y}$ be a solution extending $\pr_{U'}\bb$. 
Since $\pr_{U'}\ba\sqq_{as}\pr_{U'}\bb$ by the Maximality 
Lemma~\ref{lem:to-max}(4) we may assume that $\vf\sqq_{as}\psi$ in 
$\cS'^\cR_{W_{w,\gm\dl,\ov\beta},Y}$. Since by the induction hypothesis 
$\cS^{\ddagger}_{W_{w,\gm\dl,\ov\beta},Y}$ is as-closed,
the result follows.
\end{proof}

\subsubsection{Step 2.}\label{sec:step-2}

In this step we tighten the `near-strategy' $\cR''$ in a way similar 
to that from Section~\ref{sec:type-2}. We start with showing that the domains 
of all variables in $W_{v,\al\beta}$ have to be tightened.

\begin{lemma}\label{lem:binary-connections}
For every $w\in W=W_{v,\al\beta_v}$ there is a congruence 
$\al_w\in\Con(\zA_w)$ with $\al_w\prec\beta_w$, and such that  
$\cS_{vw}\cap(B_v\tm B_w)$ is aligned with respect to $(\al,\al_w)$, 
that is, for any $(a_1,a_2),(b_1,b_2)\in\cS_{vw}\cap(B_v\tm B_w)$, 
$a_1\eqc\al b_1$ if and only if $a_2\eqc{\al_w}b_2$.
\end{lemma}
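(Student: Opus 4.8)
The plan is to mimic the construction of $\al_w$ used in Section~\ref{sec:type-2}, namely to extract it as the kernel of the (partial) graph map produced by the Congruence Lemma~\ref{lem:affine-link}. First I would fix $w\in W=W_{v,\al\beta_v}$. By definition of $W$ there are $\gm_w,\dl_w\in\Con(\zA_w)$ with $\gm_w\prec\dl_w\le\beta_w$ such that $(\al,\beta_v)$ and $(\gm_w,\dl_w)$ cannot be separated in $\rel^{\{v,w\}}$, hence in particular in $\cS_{vw}$ (which contains $\rel^{\{v,w\}}$ up to projecting; since $\cP$ is $(2,3)$-minimal we may simply work with $\rel^{\{v,w\}}=\cS_{vw}$). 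By Corollary~\ref{cor:type-equal}, $\typ(\gm_w,\dl_w)=\typ(\al,\beta_v)=\two$.

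Next I would invoke the strategy conditions on $\cR$ to put the relation $\relo=\cS_{vw}\cap(B_v\tm B_w)$ (or rather the projection onto $v$-relevant $\cR$-compatible tuples) into the hypotheses of the Congruence Lemma. Concretely: since $(v,\al,\beta_v)$ and $(w,\gm_w,\dl_w)$ are both in $\cW(\ov\beta)$, condition (S6) gives that $\pr_v\rel^{\{v,w\},\cR}$ and $\pr_w\rel^{\{v,w\},\cR}$ are polynomially closed in $\pr_v\rel^{\{v,w\}}$ and $\pr_w\rel^{\{v,w\}}$, respectively; condition (S7) gives that $\rel^{\{v,w\}}$ is strongly chained with respect to $\ov\beta,\ov B$, hence chained; and condition (S5) together with $\typ(\al,\beta_v)=\two$ gives $\pr_v\rel^{\{v,w\},\cR}\fac\al=B^*_v\fac\al$ (the whole module), so $C=B^*_v\fac\al$ is an as-component containing $\pr_v(\text{u-maximal})$. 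Applying the Congruence Lemma~\ref{lem:affine-link} to $\rel^{\{v,w\},\cR}\fac\al$ viewed as a subdirect product with $\zA_v\fac\al$ as one factor, exactly as in Section~\ref{sec:transformation1} where the variables $w\in U$ are identified, yields the dichotomy: either $B^*_v\fac\al\tm\umax(\pr_w(\rel^{\{v,w\},\cR})\fac\al)\sse\rel^{\{v,w\},\cR}\fac\al$, or $\rel^{\{v,w\},\cR}\fac\al$ is the graph of a mapping $\nu_w:\pr_w\rel^{\{v,w\},\cR}\to B^*_v\fac\al$, whose kernel extends to a congruence $\al_w$ of $\zA_w$ with $\al_w\prec\beta_w$ and such that $(\al,\beta_v),(\al_w,\beta_w)$ cannot be separated.

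The main obstacle — and the one point where the current statement needs more than a verbatim copy of Section~\ref{sec:type-2} — is that here we are asserting alignment of the \emph{full} relation $\cS_{vw}\cap(B_v\tm B_w)$, not merely of the $\cR$-compatible part, and we have to rule out the first alternative of the Congruence Lemma. Since $(v,\al,\beta_v)$ is in $\cW(\ov\beta)$ and $\typ(\al,\beta_v)=\two$ with $\al$ nontrivial on $B_v$ (the case hypothesis of Section~\ref{sec:type-not-2} / Case~1 of strategy tightening), I would argue that the first alternative would make $(\al,\beta_v)$ separable from $(\gm_w,\dl_w)$: if $B^*_v\fac\al\tm\umax(\cdots)\sse\rel^{\{v,w\},\cR}\fac\al$ then projecting and using Lemma~\ref{lem:34-links} / the structure of link congruences in the type~\two\ case forces $\lnk_2\meet\dl_w$ to collapse into $\gm_w$ on u-maximal elements, and then a separating polynomial collapsing $\dl_w$ but not $\beta_v$ can be built from a minimal set, contradicting inseparability of $(\al,\beta_v)$ and $(\gm_w,\dl_w)$. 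Hence the graph alternative must hold; $\al_w$ is then precisely the extension of $\ker\nu_w$, and alignment of $\cS_{vw}\cap(B_v\tm B_w)$ with respect to $(\al,\al_w)$ is exactly the statement that $\rel^{\{v,w\},\cR}\fac\al$ is a graph over $\pr_w/\al_w$. To upgrade from $\cR$-compatible tuples to all of $\cS_{vw}\cap(B_v\tm B_w)$, I would note that $\cP$ is $(2,3)$-minimal so $\rel^{\{v,w\}}=\cS_{vw}$, and that the link congruences of $\cS_{vw}$ restricted to u-maximal elements are controlled by Lemma~\ref{lem:link-separability} and the chaining condition (Q1); combined with Lemma~\ref{lem:34-links} (applicable since the type is~\two? — here one must instead use the type~\two\ rectangularity machinery, Lemma~\ref{lem:max-nontrivial} and Lemma~\ref{lem:dirprod-propagation}), this forces the same alignment on the whole block-restricted relation. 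The routine verification that $\al_w\prec\beta_w$ (primeness) follows from the second option of the Congruence Lemma as recorded in its statement, and inseparability of $(\al,\beta_v),(\al_w,\beta_w)$ is asserted there as well, so those require no separate work.
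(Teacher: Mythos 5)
Your proposal misidentifies the type of the interval $(\al,\beta_v)$, which derails the whole approach. Lemma~\ref{lem:binary-connections} lives in Section~\ref{sec:type-not-2}, whose standing hypothesis is $\typ(\al,\beta_v)\in\{\three,\four,\five\}$; you assert ``by Corollary~\ref{cor:type-equal}, $\typ(\gm_w,\dl_w)=\typ(\al,\beta_v)=\two$,'' but the corollary only gives equality of types, and the equality in context is to something $\ne\two$. Your parenthetical ``the case hypothesis of Section~\ref{sec:type-not-2} / Case~1 of strategy tightening'' collapses the two mutually exclusive cases into one, which is the root of the error.

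Because of this, you set off to replicate the Congruence-Lemma-based construction of $\nu_w$ and $\al_w$ from Section~\ref{sec:transformation1} (the affine/type-\two\ case). That machinery is not what the paper uses here. The paper's proof is much shorter and exploits precisely the fact that we are \emph{not} in type~\two: it takes $\gm,\dl\in\Con(\zA_w)$ witnessing $w\in W$, applies Lemma~\ref{lem:34-links} --- which \emph{requires} $\typ(\al,\beta_v)\ne\two$ --- to get $\lnk_1\meet\beta_v\le\al$ and $\lnk_2\meet\dl\le\gm$ directly, sets $\al_w=\beta_w\meet\lnk_2$, and reads off $\al_w\prec\beta_w$ from the induced isomorphism $B_w\fac{\al_w}\cong B_v\fac\al$. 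No dichotomy needs to be established and then one branch refuted; the link-congruence bound is immediate. Your own proposal betrays the confusion when, after citing Lemma~\ref{lem:34-links}, you add ``applicable since the type is~\two? --- here one must instead use the type~\two\ rectangularity machinery, Lemma~\ref{lem:max-nontrivial} and Lemma~\ref{lem:dirprod-propagation}.'' Both of those lemmas are stated only under $\typ(\al,\beta)=\typ(\gm,\dl)=\two$, so they cannot patch the argument in this section either. The fix is simply to acknowledge $\typ(\al,\beta_v)\ne\two$ and invoke Lemma~\ref{lem:34-links} as the paper does; the Congruence Lemma and the kernel-of-$\nu_w$ construction are the wrong tools here.
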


\begin{proof}
It suffices to show that the link congruences $\lnk_1,\lnk_2$ of $\relo=\cS_{vw}$ 
viewed as a subdirect product of $\zA_v\tm\zA_w$ are such that 
$\beta_v\meet\lnk_1\le\al$ and $\beta_w\meet\lnk_2<\beta_w$. Since 
$w\in W$ there are $\gm, \dl\in\Con(\zA_w)$ such that 
$\gm\prec\dl\le\beta_w$ and $(\al,\beta_v)$ and $(\gm,\dl)$ cannot be 
separated. By Lemma~\ref{lem:34-links} it follows that  
$\beta_v\meet\lnk_1\le\al$ and $\lnk_2\meet\dl\le\gm$. We set 
$\al_w=\beta_w\meet\lnk_2<\beta_w$. Since $B_w\fac{\al_w}$
is isomorphic to $B_v\fac\al$, $\al_w\prec\beta_w$.
\end{proof}

Let $\beta'_v=\al$, $\beta'_w=\al_w$ for $w\in W=W_{v,\al\beta_v}$, 
and $\beta'_w=\beta_w$ for $w\in V-W$. 
Lemma~\ref{lem:binary-connections} implies that there is an isomorphism
$\nu_w:B_v\fac{\beta'_v}\to B_w\fac{\beta'_w}$. Choose an
as-maximal $\beta'_v$-block $B$, an element of $D$ from Step~1  
and set $B'_v=B$, $B'_w=\nu_w(B)$ for $w\in V-W$, 
and $B'_w=B_w$ for $w\in V-W$. 
Let $\cP^*$ be the problem instance obtained from $\cP''$ as follows: 
first restrict the domain of $w\in W$ in $\cP''$ to $B'_w$, then establish the 
(2,3)-minimality of the resulting problem, and finally, establish the minimality 
of all problems of the form $\cP''_{W_{w,\gm\dl}}\fac{\ov\mu^Y}$ for 
$(w,\gm,\dl)\in\cW(\ov\beta)$, where $Y$ is a set specified in the definition
of block-minimality for $\cP$. 

Let $\cR^*$ be the following collection of relations; 
\begin{itemize}
\item[(T1)]
$\cR^*=\{\rel^*_{C,w,\gm\dl}\mid C=
\ang{\bs,\rel}\in\cC, (w,\gm,\dl)\in\cW(\ov\beta')\}$;
\item[(T2)]
for every $C=\ang{\bs,\rel}\in\cC$, $(u,\gm,\dl)\in\cW(\ov\beta')$,
$\rel^*_{C,u,\gm\dl}=\pr_{\bs\cap W_{u,\gm\dl}}\rel^*$,
where $\rel^*$ is the constraint relation of $\cP^*$ obtained from $\rel$.
\end{itemize}

\begin{lemma}\label{lem:step-2-relations}
(1) For every constraint $C=\ang{\bs,\rel}\in\cC$, and every 
$\ba\in\umax(\rel'')$ such that $\ba[u]\in\nu_u(D)$ for $u\in\bs\cap W$
there is a tuple $\bb\in\umax(\rel'')$ such that 
$\pr_{\bs-W}\bb=\pr_{\bs-W}\ba$ 
and $\bb[u]\in B'_u$ for $u\in\bs\cap W$.\\[1mm]
(2) Let $C=\ang{\bs,\rel}\in\cC$, $(w,\gm,\dl)\in\cW(\ov\beta')$,
$W'=W_{w,\gm\dl}$, and $\ba\in\umax(\rel''_{C,w,\gm\dl})$ such that 
$\ba[u]\in\nu_u(D)$ for $u\in \bs\cap W'\cap W$. Then there is 
$\bb\in\umax(\rel''_{C,w,\gm\dl}))$ such that 
$\pr_{(\bs\cap W')-W}\bb=\pr_{(\bs\cap W')-W}\ba$ and
$\bb[u]\in B'_u$ for $u\in\bs\cap W'\cap W$.\\[1mm]
(3) Let $(w,\gm,\dl)\in\cW(\ov\beta')$ and $W'=W_{w,\gm\dl}$.
Let $\cS''_{W'}$ be the set of solutions of $\cP''_{W'}\fac{\ov\mu^Y}$, where 
$Y=\eps$ if $(w,\gm,\dl)\not\in\cW'$, and is one of the sets specified in
the definition of block-minimality otherwise.
For every solution $\vf\in\umax(\cS''_{W'})$ such that 
$\vf[u]\in\nu_u(D)$ for $u\in W'\cap W$ there 
is a solution $\psi\in\umax(\cS''_{W'})$ such that 
$\psi(u)=\vf(u)$ for $u\in W'-W$ and $\psi(u)\in B'(u)$ for $u\in W'\cap W$.
\end{lemma}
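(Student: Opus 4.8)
The statement to prove, Lemma~\ref{lem:step-2-relations}, has three parts concerning how u-maximal tuples whose coordinates in $W = W_{v,\al\beta_v}$ already lie in the ``$D$-fiber'' $\nu_u(D)$ can be pushed into the single $\beta'$-block $B'$ without disturbing their coordinates outside $W$. The underlying mechanism is exactly the same in all three parts, and the intended proof is a transfer argument: combine the Congruence Lemma~\ref{lem:affine-link} (applied through the separation relationship between $(\al,\beta_v)$ and the corresponding prime intervals $(\gm_w,\dl_w)$ on the other $W$-coordinates, which exists by definition of $W$ and cannot be separated), with the rectangularity machinery (Proposition~\ref{pro:umax-rectangular} and the Rectangularity Corollary~\ref{cor:linkage-rectangularity}), and the Maximality Lemma~\ref{lem:to-max}.

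\medskip
\noindent\textbf{Plan for part (1).} Fix $C = \ang{\bs,\rel}\in\cC$ and $\ba\in\umax(\rel'')$ with $\ba[u]\in\nu_u(D)$ for $u\in\bs\cap W$. First I would record that by Lemma~\ref{lem:binary-connections} the coordinate $v$ does not itself lie in $\bs$ in general, so I introduce the auxiliary relation $\relo = \pr_{\bs\cup\{v\}}\cS'^{\cR}$, or rather work with $\rel''$ adjoined by the $v$-coordinate via the binary constraints $\cS_{vw}$, $w\in\bs\cap W$. Each such $\cS_{vw}$, restricted to $B_v\tm B_w$, is by Lemma~\ref{lem:binary-connections} $(\al,\al_w)$-aligned, so the $v$-coordinate is essentially determined modulo $\al$ by any one $W$-coordinate; since all these determinations are consistent by (2,3)-minimality, adjoining $v$ gives a single relation $\relo$. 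Now I view $\relo$ as a subdirect product with $\zA_v\fac\al$ as one factor (or a quotient thereof) and $\pr_{\bs}\rel''$ as the other. By (S7) for $\cR$ the relevant relations are strongly chained, and the set of $\cR''$-compatible tuples is polynomially closed (here I would invoke Lemma~\ref{lem:poly-closed} together with the as-closedness established in Lemma~\ref{lem:step-1-as-closeness}(2)); moreover by construction $\pr_v\relo\fac\al$ contains the nontrivial as-component $D$. So the Congruence Lemma~\ref{lem:affine-link} applies: either $D\tm\umax(\pr_{\bs\cap W}\relo') \sse \relo'$ (and then $\ba$, whose $W$-coordinates already map into $\nu_u(D)$, can simply be ``turned'' into $B'$ by choosing the $v$-value in $B$ and reading off $B'_u = \nu_u(B)$, with the $\bs-W$ coordinates untouched), or $\relo'\cap(C\tm B''_2)$ is the graph of a mapping with kernel an induced congruence $\eta$; in this latter case $\al_w$ from Lemma~\ref{lem:binary-connections} is exactly this kernel restricted appropriately, the $W$-coordinates are rigidly linked to $v$, and again specifying $v\in B$ forces every $W$-coordinate into $B'_u$ while leaving the rest of the tuple free. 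In both cases the Maximality Lemma~\ref{lem:to-max}(5) supplies a u-maximal representative, so the resulting $\bb$ lies in $\umax(\rel'')$.

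\medskip
\noindent\textbf{Plan for parts (2) and (3).} These are the same argument carried out for the projected relation $\rel''_{C,w,\gm\dl} = \pr_{\bs\cap W_{w,\gm\dl}}\rel''$ and for the solution set $\cS''_{W'}$ of $\cP''_{W'}\fac{\ov\mu^Y}$, respectively. For (2) I would restrict attention to the index set $\bs\cap W_{w,\gm\dl}$, intersect with the $v$-coordinate as above, and apply the Congruence Lemma exactly as in (1); the projection behaves well because by Lemma~\ref{lem:thin-properties}(2) thin edges and hence as-components persist under projection, and u-maximality transfers by the Maximality Lemma~\ref{lem:to-max}(5). For (3) the extra ingredient is that $\cP''$ is block-minimal, which is what lets me adjoin the $v$-coordinate to an entire solution of $\cP''_{W'}\fac{\ov\mu^Y}$ rather than only to a tuple in a single constraint relation; (S7) guarantees $\cS''_{W'}$ (or its $\cR$-compatible subrelation) is strongly chained, and Lemma~\ref{lem:step-1-as-closeness} gives the as-closedness needed for polynomial closedness, so the Congruence Lemma again yields the dichotomy and hence the desired $\psi$.

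\medskip
\noindent\textbf{Main obstacle.} The delicate point is the adjunction of the $v$-coordinate in a way that is simultaneously (i) well-defined (all the binary alignments $\cS_{vw}$ for $w\in\bs\cap W$ agree, which needs (2,3)-minimality of $\cP$ and the fact that $W_{v,\al\beta_v}$ collects precisely the coordinates where $(\al,\beta_v)$ propagates), and (ii) compatible with the polynomial-closedness and strong-chaining hypotheses so that the Congruence Lemma~\ref{lem:affine-link} genuinely applies to the adjoined relation rather than merely to $\rel''$ itself. I expect most of the work to be in verifying that the relations obtained after this adjunction still satisfy the closure properties required by the Congruence Lemma, and in checking that the ``graph of a mapping'' alternative of that lemma reproduces exactly the congruence $\al_w$ produced by Lemma~\ref{lem:binary-connections} (so that restricting $v$ to $B$ forces $W$-coordinates into $\nu_u(B) = B'_u$ and nothing weaker). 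Once that identification is in hand, the transfer of u-maximality and the untouchedness of the $\bs - W$ coordinates are routine applications of the Maximality Lemma~\ref{lem:to-max}.
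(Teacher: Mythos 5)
Your approach is genuinely different from the paper's, and it has a gap. The paper does \emph{not} adjoin the coordinate $v$: it first factors $\rel''$ modulo $\al'_u=\beta'_u$ for $u\in\bs\cap W$ (and $\zz_u$ elsewhere) and notes that the $\nu_u$'s rigidly link all the $W$-coordinates in $\bs$ to one another, so the scope collapses to a single coordinate $u$ standing for $\bs\cap W$. It then views $\rel''$ as a subdirect product of $\zA_u\tm\pr_{\bs-\{u\}}\rel$ and applies the Congruence Lemma with $\zA_u$ (not $\zA_v\fac\al$) as the \emph{first} factor. The decisive step is ruling out alternative~(2) of the Congruence Lemma: by Lemma~\ref{lem:collapsing} there is an $\al_u\beta_u$-collapsing polynomial $f$ of $\rel$ with $f(\ov\beta_{\bs-\{u\}})\sse\zz_{\bs-\{u\}}$, so no prime interval $(\eta,\th)$ in $\Con(\pr_{\bs-\{u\}}\rel)$ below $\ov\beta_{\bs-\{u\}}$ fails to be separated from $(\al_u,\beta_u)$. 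This forces alternative~(1), i.e.\ $C\tm\umax(B''_2)\sse\rel''$, which says precisely that the (collapsed) $W$-coordinate can be moved anywhere in its as-component while the coordinates in $\bs-W$ stay fixed. That is exactly the claim.

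Your version, with $\zA_v\fac\al$ adjoined as the first factor, does not deliver this. With the rigid linkage in place, the adjoined relation \emph{is} a graph of a mapping, so you are unavoidably in alternative~(2) of the Congruence Lemma; but alternative~(2) only tells you that the $v$-coordinate is determined by the others modulo the induced congruence $\eta$ --- it does not tell you that you can vary the $W$-coordinates within an $\eta$-block while keeping the $\bs-W$ coordinates fixed. The sentence ``specifying $v\in B$ forces every $W$-coordinate into $B'_u$ while leaving the rest of the tuple free'' is the entire content of the lemma, and you assert it rather than derive it. (Also, your statement of alternative~(1) as $D\tm\umax(\pr_{\bs\cap W}\relo')\sse\relo'$ does not match the Congruence Lemma, whose first alternative is $C\tm\umax(B''_2)\sse\rel'$ with $B''_2=\rel'[C]$ living in the \emph{second} factor.) The fix is to swap the roles of the factors as the paper does, so that $\pr_{\bs-W}\rel$ becomes the second factor, and then invoke the collapsing polynomial to discharge alternative~(2). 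Parts (2) and (3) then follow by the same argument applied to $\rel''_{C,w,\gm\dl}$ and $\cS''_{W'}$, as you indicate.
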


\begin{proof}
(1) Let $U_1=\bs\cap W$ and $U_2=\bs- W$. If $U_1=\eps$ there is 
nothing to prove; assume $U_1\ne\eps$. It suffices to consider 
$\relo=\rel''\fac{\ov\al'}$ where $\al'_u=\beta'_u$ if $u\in U_1$ and 
$\al'_u=\zz_u$
otherwise. So, we assume $\beta'_u=\zz_u$ for all $u\in U_1$. Then for any 
$u_1,u_2\in U_1$ and any $\bd\in\rel$ such that 
$\bd[u_1]\in B_{u_1},\bd[u_2]\in B_{u_2}$ we have 
$\bd[u_2]=\nu_{u_2}\circ\nu_{u_1}^{-1}(\bd[u_1])$. Therefore we 
may assume that $|U_1|=1$, say, $U_1=\{u\}$. 

Considering $\rel''$ as a subalgebra of $\zA_u\tm\pr_{\bs-\{u\}}\rel$,
the result follows by the Congruence Lemma~\ref{lem:affine-link}. Indeed, since there is a 
$\al_u\beta_u$-collapsing polynomial $f$ of $\rel$, that is, 
$f(\ov\beta_{\bs-\{u\}})\sse\zz_{\bs-\{u\}}$, there are no 
$\eta,\th\in\Con(\pr_{\bs-\{u\}}\rel)$ 
with $\eta\prec\th\le\ov\beta_{\bs-\{u\}}$ such that $(\al_u,\beta_u)$ cannot
be separated from $(\eta,\th)$. 

(2) and (3) are proved in essentially the same way.
\end{proof}

To show that $\cP^*$ has the desirable properties, in particular, 
it is nonempty, we consider a collection of unary and binary relations similar 
to $\relo^x,\relo^{xy}$ from Step~1. For $x,y\in V$ let $\reli^x,\reli^{xy}$
denote the following sets: 
\begin{eqnarray*}
\reli^x &=& \{a\in \amax(A''_{\cR,x})\mid (a,c)\in\amax(\rel''^{\{x,v\},\cR}) 
\text{ for some $c\in B$}\};\\
\reli^{xy} &=& \{(a,b)\in\amax(\rel''^{\{x,y\},\cR})\mid 
(a,c)\in\amax(\rel''^{\{x,v\},\cR}),\\
&& \qquad  (b,c)\in\amax(\rel''^{\{y,v\},\cR}) 
\text{ for some $c\in B$}\}.
\end{eqnarray*}\label{not:Txy}

A tuple $\ba$ over a set of variables $U\sse V$ is said to be 
\emph{T-compatible}\index{T-compatible}
if for any $x,y\in U$, $(\ba[x],\ba[y])\in\reli^{xy}$. The following lemma
provides the main structural result necessary for proving that $\cR^*$ is a
$\ov\beta^*$-strategy.

\begin{lemma}\label{lem:step-2-nonempty}
Let $\rela$ be one of the relations $\rel$, $\rel_{C,w,\gm\dl}$, 
$\cS_{W_{w,\gm,\dl},Y}$, and $\rela'',\rela^*$ the corresponding relations
$\rel''$, $\rel''_{C,w,\gm\dl}$, or $\cS''_{W_{w,\gm,\dl},Y}$, and  
$\rel^*$, $\rel^*_{C,w,\gm\dl}$, or $\cS^*_{W_{w,\gm,\dl},Y}$,
respectively, for 
some $C=\ang{\bs,\rel}\in\cC$ and $(w,\gm,\dl)\in\cW(\ov\beta)$, where 
$Y$ is a set from the definition of block-minimality; and let $U$ be its set of 
coordinate positions.\\[2mm]
(1) Every Q-compatible $\ba\in\amax(\rela)$ such that $\ba[u]\in B'_u$ for
$u\in U\cap W$ is also T-compatible.\\[1mm]
(2) $\rela^*$ contains all the as-maximal T-compatible tuples from 
$\rela''$.\\[1mm]
(3) If $\rela^*=\rel^*_{C,w,\gm\dl}$, then $\umax(\rela^*)$ is as-closed in 
$\umax(\rel_{C,w,\gm\dl}\cap\ov B')$.
\end{lemma}

\begin{proof}
(1) By Corollary~\ref{cor:23-minimality}, if, say $x\in W$ and 
$(a,b)\in\relo^{xy}$ are such that $a\in B'_x$, then $(a,b)\in\reli^{xy}$, which 
can be proved as in the proof of Proposition~\ref{pro:3-tight}(1). Therefore, 
it suffices to prove that if $x,y\not\in W$,
then $T^{xy}=\relo^{xy}$. This can be done in the same way as in the proof of 
Proposition~\ref{pro:3-tight}. Let $(a,b)\in\relo^{xy}$, we need to find 
$c\in B'_v$ such that $(a,c)\in\relo^{xv}$, $(b,c)\in\relo^{yv}$. Let 
$$
\relo(x,y,v)=\rel^{\{x,y\},\cR}(x,y)\meet\rel^{\{x,v\},\cR}(x,v)\meet
\rel^{\{y,v\},\cR}(y,v),
$$ 
which is a subalgebra of $\zA_x\tm\zA_y\tm\zA_v$.
By (2,3)-consistency of $\cP''$ 
$\relo^{xy}\sse\pr_{xy}\relo, \relo^{xv}\sse\pr_{xv}\relo,
\relo^{yv}\sse\pr_{xv}\relo$.  Let $\relo'=\relo\fac\al$. 
Since each of $\rel^{\{x,y\},\cR}$, $\rel^{\{x,v\},\cR}$, $\rel^{\{y,v\},\cR}$ is 
polynomially closed in the corresponding constraint
relation $\rel^{\{x,y\}},\rel^{\{x,v\}}$, or $\rel^{\{y,v\}}$ of $\cP$, 
$\relo$ is polynomially closed in 
$$
\rel^{\{x,y\}}(x,y)\meet\rel^{\{x,v\}}(x,v)\meet\rel^{\{y,v\}}(y,v),
$$
as well, and so is $\relo'$ in 
$$
\rel^{\{x,y\}}(x,y)\meet\rel^{\{x,v\}}\fac\al(x,v)
\meet\rel^{\{y,v\}}\fac\al(y,v).
$$
Let $\relo^\dagger=\pr_{xy}(\relo'\cap(B_x\tm B_y\tm D))$, By the Congruence 
Lemma~\ref{lem:affine-link} either $\umax(\relo^\dagger)\tm D\sse\relo'$ or
there is $\eta\in\Con(\rel^{\{x,y\}})$ such that 
$\umax(\relo'\cap(B_x\tm B_y\tm D))$ is the graph
of a mapping $\tau:\umax(\relo^\dagger)\to D$. In the former case we are done,
because then $(a,b,B'_v)\in\relo'$, and therefore $(a,b,c)\in\relo$ for
some $c\in B'_v$. The latter case is not possible, because by the Congruence 
Lemma~\ref{lem:affine-link} there is $\th\in\Con(\rel^{\{x,y\}})$ such
that $\eta\prec\th\le\beta_x\tm\beta_y$, and $(\al,\beta_v)$ and 
$(\eta,\th)$ cannot be separated. This however is not the case, since 
$x,y\not\in W$.

(2) Similar to Proposition~\ref{pro:3-tight} and Corollary~\ref{cor:23-minimality}
it suffices to prove that for any $X\sse U$, any T-compatible 
$\ba\in\amax(\pr_X\rela'')$ can be extended to a T-compatible 
$\bb\in\amax(\rela'')$. 
In fact, since $\rela''$ contains all the Q-compatible tuples, and therefore all 
the T-compatible tuples from $\rela$, it suffices 
to prove the statement for $\rela$, rather than for $\rela''$. We show that for 
any $w\in U-X$ tuple $\ba$ can be extended to a T-compatible tuple 
$\bc\in\amax(\pr_{X\cup\{w\}}\rela)$. By Proposition~\ref{pro:3-tight} there is a 
Q-compatible $\bb'\in\rela$ with $\ba=\pr_X\bb'$. If $X\cap W\ne\eps$ or 
$(X\cup\{w\})\cap W=\eps$, we can set $\bc=\pr_{X\cup\{w\}}\bb'$. 

Suppose that $X\cap W=\eps$ and $w\in W$. Then we proceed similar to 
part (1). Let $X=\{\vc xk\}$, let 
$$
\relo(\vc xk,w)=\pr_X\rela(\vc xk)
\meet\bigwedge_{i=1}^k \rel^{\{x_i,w\},\cR}(x_i,w),
$$ 
By (2,3)-consistency of $\cP''$ and Proposition~\ref{pro:3-tight}
$\relo^{x_iw}\sse\pr_{x_iw}\relo$, and by Corollary~\ref{cor:23-minimality}
$\rela'\sse\pr_X\relo$, where $\rela'$ is the set of all T-compatible (equivalently, 
Q-compatible) tuples from $\pr_X\rela$. Let $\relo'=\relo\fac\al$. 
Since each of $\rel^{\{x_i,w\},\cR}$ is polynomially closed in $\rel^{\{x_i,w\}}$ 
and $\pr_X\rela(\vc xk)$ is polynomially closed in itself, by 
Lemma~\ref{lem:poly-closed}(2) $\relo$ is polynomially closed in 
$$
\pr_X\rela(\vc xk)\meet\bigwedge_{i=1}^k \rel^{\{x_i,w\}}(x_i,w),
$$ 
as well, and so is $\relo'$ in 
$$
\pr_X\rela(\vc xk)\meet\bigwedge_{i=1}^k \rel^{\{x_i,w\}}\fac{\al_w}(x_i,w).
$$ 
Let 
$\relo^\dagger=
\pr_X\left(\relo'\cap\left(\prod_{i=1}^k B_{x_i}\tm D\right)\right)$.
Now we can finish the proof in the same way as in part (1).

(3) Let $U=\bs\cap W_{w,\dl\gm}$. 
Observe first, that every tuple from $\umax(\rel^*_{C,w,\gm\dl})$ is 
P-compatible (see Section~\ref{sec:step-1}). If we prove that 
$\rel^*_{C,w,\gm\dl}$ contains every P-compatible tuple $\ba$ from 
$\rel''_{C,w,\gm\dl}$ such that $\ba[u]\in B'_u$ for every $u\in U\cap W$, 
by Lemma~\ref{lem:step-1-as-closeness}(2) the result follows. 
As in the proof of 
Lemma~\ref{lem:step-1-as-closeness} we proceed by induction on the
restriction of the problem $\cP''$ being converted to a (2,3)-minimal 
and block-minimal instance. 

Let $\rel''$ and $\rel''_{C,w,\gm\dl}$ denote the relations associated with the 
instance $\cP''$. Let $\rel^\dagger$ and $\rel^\dagger_{C,w,\gm\dl}$ 
denote the relations 
obtained from $\rel''$ and $\rel''_{C,w,\gm\dl}$ in the first step of
converting $\cP''$ to $\cP^*$, that is, restricting the domains. By 
Lemma~\ref{lem:step-2-relations}
relations $\rel^\dagger,\rel^\dagger_{C,w,\gm\dl}$ contain all the 
necessary P-compatible relations. This provides the base case. For the 
induction step we again consider two cases. We denote the current 
constraint relations by $\rel^\ddagger_{C,w,\gm\dl}$ and the ones from 
the (2,3)-strategy by $\rel^{\ddagger X}$. 

In the first case we enforce 
(2,3)-minimality for $x,y,z\in V$. Let $(a,b)\in\rel^{\ddagger \{x,y\}}$
be a P-compatible tuple. Then there is $c_1,c_2\in B_z$ such that 
$(a,c_1)\in\rel''^{\{x,z\}}$, $(b,c_2)\in\rel''^{\{y,z\}}$ are P-compatible.
As in the proof of item (1) of this lemma, we can argue that $c_1=c_2$
can be assumed. If $z\not\in W$, the pairs $(a,c_1),(b,c_2)$ are as required. 
Otherwise, $c_1,c_2$ can be chosen from $B'_z$ by 
Lemma~\ref{lem:step-2-relations}.

In the second case let $(u,\eta,\th)\in\cW(\ov\beta)$ and 
$X=W_{u,\eta\th}$; we solve a problem of the form 
$\cP''_X\fac{\ov\mu^Y}$, let $\cS^\ddagger_X$ be the set of solutions
of this problem. Let also $U'=\bs\cap X$. We need to show that 
for any P-compatible $\ba\in\umax(\rel^\ddagger_{C,w,\gm\dl})$ with
$\ba[u]\in B'_u$ for $u\in U$ the tuple $\pr_{U'}\ba$ can be extended
to a P-compatible solution $\vf\in\cS^\ddagger_X$. Since 
$\ba\in\umax(\rel''_{C,w,\gm\dl})$,
the tuple $\pr_{U'}\ba$ can be extended to a u-maximal solution 
$\vf\in\cS''_X$. If $U'\cap W\ne\eps$ or $X\cap W=\eps$, solution 
$\vf$ is as required. Otherwise by Lemma~\ref{lem:step-2-relations} 
$\vf$ can be chosen P-compatible and such that $\vf(x)\in B'_x$ for 
$x\in X\cap W$; that is $\vf\in\cS^\ddagger_X$ by the induction hypothesis.
\end{proof}
 
Now we are ready to prove that $\cR^*$ is a $\ov\beta'$-strategy.

\begin{theorem}\label{the:step-2-strategy}
$\cR^*$ is a $\ov\beta'$-strategy with respect to $\ov B'$.
\end{theorem}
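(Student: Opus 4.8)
The plan is to verify each of the conditions (S1)--(S7) for $\cR^*$ with respect to $\ov\beta'$, $\ov B'$, following the same division of labor as in Section~\ref{sec:type-2}: a handful of technical conditions require real work, while the rest follow formally from the structural lemmas already proved (Lemmas~\ref{lem:step-2-relations}, \ref{lem:step-2-nonempty}, Corollary~\ref{cor:23-minimality}, and Theorem~\ref{the:non-as-restriction}). The conceptual heart of the argument is already isolated in Lemma~\ref{lem:step-2-nonempty}: part~(2) says $\cR^*$ retains all as-maximal $T$-compatible tuples, part~(1) says $Q$-compatibility (which survives Step~1 by Proposition~\ref{pro:3-tight} and Corollary~\ref{cor:23-minimality}) upgrades to $T$-compatibility once we are inside $\ov B'$, and part~(3) gives the as-closedness needed for (S5). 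So most of the work is bookkeeping to feed these into the strategy axioms.

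First I would dispatch the easy conditions. For (S4): each $\rel^*_{C,w,\gm\dl}=\pr_{\bs\cap W_{w,\gm\dl}}\rel^*$ is a projection of a constraint relation of $\cP^*$, hence a subalgebra of $\pr_{\bs\cap W_{w,\gm\dl}}\rel$, and by Lemma~\ref{lem:step-2-nonempty}(2) it contains an as-maximal tuple of $\rel''_{C,w,\gm\dl}$, which by Theorem~\ref{the:non-as-restriction} is as-maximal in $\rel_{C,w,\gm\dl}$; combined with Lemma~\ref{lem:u-max-congruence} (using that $B'_v$ sits inside the module $B_v\fac{\beta'_v}$, or is a singleton when $\typ(\al,\beta_v)\in\{\four,\five\}$) this yields $\umax(\rel^*_{C,w,\gm\dl})\sse\umax(\rel_{C,w,\gm\dl})$; the analogous statements for $\cS^*_{W,Y}$ follow the same way using block-minimality of $\cP^*$. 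For (S1)--(S3): by Lemma~\ref{lem:step-2-nonempty}(2) together with Proposition~\ref{pro:3-tight} and Lemma~\ref{lem:step-2-relations}, every relation in $\cR^*$ is nonempty and retains enough as-maximal $T$-compatible tuples, and since establishing $(2,3)$-minimality and block-minimality (the operations defining $\cP^*$) never deletes an as-maximal $T$-compatible tuple, the $(2,3)$-strategy property (S1) and the extension properties (S2),(S3) are inherited from $\cR''$ (equivalently from $\cR$ via Theorem~\ref{the:non-as-restriction}) restricted to $\ov B'$. For (S7), strong chaining of $\rel$, $\cS_W$, $\cS_{W,Y}$ with respect to $\ov\beta',\ov D'$ (and then $\ov B'$) follows from Lemma~\ref{lem:S7} exactly as in the end of the proof of Theorem~\ref{the:restricting-strategy}: $\beta'_v\prec\beta_v$ and $B'_v$ lies in a nontrivial as-component of $B_v\fac{\beta'_v}$ (it \emph{is} such an as-component when $\typ(\al,\beta_v)=\two$), so Lemma~\ref{lem:S7} applies coordinate by coordinate.

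The real content is (S5) and (S6), precisely the two conditions that Step~1 (Theorem~\ref{the:non-as-restriction}) did not provide. For (S6): for $(w,\gm,\dl)\in\cW(\ov\beta')$ with $\bs\cap W_{w,\gm\dl}\ne\eps$, the set of $\cR^*$-compatible tuples from $\rel^*_{C,w,\gm\dl}$ equals the set of tuples of $\rel''_{C,w,\gm\dl}$ extending to a solution of the corresponding restricted subproblem over $\ov B'$; since restricting domains from $B_w$ to $B'_w$ amounts to intersecting with blocks of the isomorphism congruences $\ker\nu_w$ and intersecting with $\wh D$ on coordinate $v$, and these are all congruences (or subalgebras generated by polynomially closed pieces), Lemma~\ref{lem:poly-closed}(1)--(3) shows polynomial closedness is preserved; alternatively one argues directly that the $\tau$-block structure from Section~\ref{sec:type-2} specializes here because $\typ(\al,\beta_v)\ne\two$ forces the relevant Congruence-Lemma alternative to be the graph-of-a-mapping case on $W$ and the full-congruence case off $W$. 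For (S5): the existence of $A_{\cR^*,w}$ follows from (S3) for $\cR^*$, and $\umax(A_{\cR^*,w})\sse\umax(A_{\cR,w})$ as in Step~1; the as-closedness of $\umax(A_{\cR^*,w})$ in $\umax(\pr_w(\rel\cap\ov B'))$ is exactly Lemma~\ref{lem:step-2-nonempty}(3). The main obstacle I expect is (S6) in the case $\typ(\al,\beta_v)\in\{\three,\four,\five\}$: here $B_v\fac\al$ need not be a module, so the clean module-theoretic argument of Lemma~\ref{lem:S6-1} is unavailable, and one must instead combine Lemma~\ref{lem:34-links} (which pins down the link-congruence behavior of type-$\three$ intervals that cannot be separated) with the graph-of-a-mapping conclusion of the Congruence Lemma~\ref{lem:affine-link} to show that the $\cR^*$-compatible tuples still form a polynomially closed subset; the potential pitfall is that, away from $W$, the relevant projections might acquire new congruences between $\al_u$ and $\beta_u$, so one has to invoke Lemma~\ref{lem:restricted-congruence} (adapted to the non-type-$\two$ setting via Corollary~\ref{cor:type-equal}) to rule that out.
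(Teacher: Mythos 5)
Your proof is correct and follows essentially the same structure as the paper's: (S1)--(S4) by construction of $\cP^*$ together with Lemma~\ref{lem:step-2-nonempty}(1)--(2), (S5) and (S6) via Lemma~\ref{lem:step-2-nonempty}(3) and (S6) for $\cR$, and (S7) via Lemma~\ref{lem:S7}. The Lemma~\ref{lem:34-links}/Corollary~\ref{cor:type-equal}/Lemma~\ref{lem:restricted-congruence} machinery you anticipate as a likely obstacle for (S6) is not needed --- the paper dispatches (S6) with the same brief citation of Lemma~\ref{lem:step-2-nonempty}(3) together with (S6) for $\cR$, which your primary route through Lemma~\ref{lem:poly-closed}(1)--(3) (intersecting with the $\al$-block $B'_v$ and the $\al_w$-blocks $B'_w$) essentially unpacks.
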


\begin{proof}
(S1) follows directly from the construction, since the relations $\rel^{X,\cR^*}$
result from establishing (2,3)-minimality of $\cP^*$, and they are 
nonempty by Lemma~\ref{lem:step-2-nonempty}(1).
Conditions (S2) and (S3) are also by construction. Condition (S4) also holds by 
construction, as all the relations of the form 
$\rel^*_{C,w,\gm\dl}$ are subalgebras. Also, each of them contains a 
Q-compatible element, which is as-maximal in $\rel_{C,w,\gm\dl}$, 
implying that $\umax(\rel^*_{C,w,\gm\dl})\sse\umax(\rel_{C,w,\gm\dl})$.

For (S5) the existence of $A_{\cR^*,w}$ for $w\in V$ follows from the 
construction, and the as-closeness of $\umax(A_{\cR^*,w})$ follows from 
Lemma~\ref{lem:step-2-nonempty}(3). Condition (S6) follows from 
Lemma~\ref{lem:step-2-nonempty}(3) and (S6) for $\cR$ as well. 
Finally, condition (S7) holds by Lemma~\ref{lem:S7}.
\end{proof}

\bibliographystyle{plain}

\printindex
\twocolumn
\section*{Notation}

$[n]$, \pageref{not:n}\\
$\le$, \pageref{not:le}\\
$\sqq$, \pageref{not:sqq}\\
$\sqq_{as}$, \pageref{not:sqq-as}\\
$\sqq_{asm}$, \pageref{not:sqq-asm}\\
$\ang{\bs,\rel}$, \pageref{not:ang}\\
$\zz_\zA$, $\zo_\zA$, \pageref{not:zz-zo}\\
$\ov\al$, \pageref{not:ov-alpha}\\
$\ov\al_I$, \pageref{not:ov-alpha-I}\\
$\zeta(\al,\beta)$, \pageref{not:zeta}\\
$\tau_U$, \pageref{not:tau-U}\\
$\tau_C$, \pageref{not:tau-C}\\
$\tau'_{C'}(u,\eta\th)$, \pageref{not:tau-prime-u-eta-th}\\
$\ba[i]$, \pageref{not:bai}\\
$\ov A$, \pageref{not:ov-A}\\
$\ov A_I$, \pageref{not:ov-A-I}\\
$\amax(\zA)$, \pageref{not:amax}\\
$\as(a)$, \pageref{not:as}\\
$\Centr(\cP)$, \pageref{not:Centr}\\
$\Cg{\cdot}$, \pageref{not:Cg}\\
$f^\ba(x)$, \pageref{not:f-a}\\
$\Filt_\zA(a)$, \pageref{not:Filt}\\
$\Filt^{as}_\zA(a)$, \pageref{not:Filt-as}\\
$\Filt^{asm}_\zA(a)$, \pageref{not:Filt-asm}\\
$\cG(\zA)$, \pageref{not:cG-A}\\
$h_{ab}$, \pageref{not:h-ab}\\
$\lnk_1,\lnk_2$, \pageref{not:lnk}\\
$\maj$, \pageref{not:maj}\\
$\max(\zA)$, \pageref{not:max}\\
$\Razm(\cP)$, \pageref{not:MAX}\\
$p(\cP)$, \pageref{not:p-cP}\\
$\cP=(V,\cC)$, \pageref{not:cP}\\
$\cS_\cP$, \pageref{not:cS}\\
$\cP_W$, \pageref{not:cP-W}\\
$\cP'_{W'}$, \pageref{not:P-prime-W-prime}\\
$\cP_{(v,\zB)}$, \pageref{not:P-v-B}\\
$\cP_{C\to C'}$, \pageref{not:C-to-C}\\
$\cP\fac{\ov\al}$, \pageref{not:P-fac}\\
$\cP^\cR$, \pageref{not:P-cR}\\
$\relp^x,\relp^{xy}$, \pageref{not:Pxy}\\
$\Pol(\zA)$, $\Polo(\zA)$, \pageref{not:Pol}\\
$\pr_I\rel$, \pageref{not:pr}\\
$\relo^x,\relo^{xy}$, \pageref{not:Qxy}\\
$\relo_{ab}^\zA$, \pageref{not:Qab}\\
$\rel^X$, \pageref{not:R-X}\\
$\cR$, \pageref{not:cR}\\
$\rel[c], \rel^{-1}[c']$, \pageref{not:rel-of-a}\\
$\cS_\cP$, \pageref{not:cS}\\
$\cS_W$, \pageref{not:cS-W}\\
$\Sg B$, \pageref{not:Sg}\\
$\cS^\cR,\cS^{\cR'}$, \pageref{not:S-R-S-R-prime}\\
$\cS'_{W'}$, \pageref{not:S-prime-W-prime}\\
$\razm(\cP)$, \pageref{not:size}\\
$t_{ab}$, \pageref{not:1-ab}\\
$\tol_1,\tol_2$, \pageref{not:tol}\\
$\reli^x,\reli^{xy}$, \pageref{not:Txy}\\
$\umax(\zA)$, \pageref{not:umax}\\
$\cW^\cP(\ov\beta)$, \pageref{not:cW}\\
$W_{v,\al\beta,\ov\beta}$, \pageref{not:W-v-albeta-beta}\\
$\cW'(\ov\beta)$, \pageref{not:cWprime}\\
$\emin_\zA$, \pageref{not:emin}
\end{document}